\newcommand*\bigcdot{\mathpalette\bigcdot@{1.0}}
\newcommand*\bigcdot@[2]{\mathbin{\vcenter{\hbox{\scalebox{#2}{$\m@th#1*$}}}}}
\def\T{{ \mathrm{\scriptscriptstyle T} }}
\DeclareMathOperator{\Tr}{Tr}
\DeclareMathOperator{\cv}{\mathcal{L}}
\DeclareMathOperator{\Prob}{\mathbb{P}}
\DeclareMathOperator{\diag}{diag}
\newcommand{\Koracle}{K^{(\text{o})}}
\newcommand{\Loracle}{\bm{L}^{(\text{o})}}
\newcommand{\Coracle}{\bm{C}^{(\text{o})}}
\newcommand{\lamoracle}{\lambda^{(\text{o})}}
\newcommand{\lamhatoracle}{\hat{\lambda}^{(\text{o})}}
\newcommand{\Koraclehat}{\hat{K}}
\DeclareMathOperator*{\argmin}{arg\,min}
\DeclareMathOperator*{\argmax}{arg\,max}
\newcommand{\indep}{\rotatebox[origin=c]{90}{$\models$}}
\DeclareMathOperator{\E}{\mathbb{E}}
\DeclareMathOperator{\V}{\mathbb{V}}
\DeclareMathOperator{\tdist}{\stackrel{d}{\to}}
\DeclareMathOperator{\C}{Cov}
\DeclareMathOperator{\vecM}{vec}
\DeclareMathOperator{\asim}{\stackrel{\cdot}{\sim}}
\DeclareMathOperator{\edist}{\stackrel{\text{\scriptsize \textit{d}}}{=}}
\DeclareMathOperator{\im}{im}
\DeclarePairedDelimiter\abs{\lvert}{\rvert}%
\DeclarePairedDelimiter\norm{\lVert}{\rVert}%
\let\oldabs\abs
\def\abs{\@ifstar{\oldabs}{\oldabs*}}
\let\oldnorm\norm
\def\norm{\@ifstar{\oldnorm}{\oldnorm*}}
\newtheorem{assumption}{\textit{Assumption}}
\newtheorem{remark}{\textit{Remark}}
\newtheorem{algorithm}{\textit{Algorithm}}
\newtheorem{proposition}{\textit{Proposition}}
\newtheorem{corollary}{\textit{Corollary}}
\newtheorem{theorem}{\textit{Theorem}}
\newtheorem{lemma}{\textit{Lemma}}
\title{Factor analysis in high dimensional biological data with dependent observations}
\author{
  Chris McKennan\\ 
  Department of Statistics\\
  University of Pittsburgh\\
  Pittsburgh, PA 15260 \\
  \texttt{chm195@pitt.edu}
}
\begin{document}
\maketitle

\begin{abstract} 
   Factor analysis is a critical component of high dimensional biological data analysis. However, modern biological data contain two key features that irrevocably corrupt existing methods. First, these data, which include longitudinal, multi-treatment and multi-tissue data, contain samples that break critical independence requirements necessary for the utilization of prevailing methods. Second, biological data contain factors with large, moderate and small signal strengths, and therefore violate the ubiquitous ``pervasive factor'' assumption essential to the performance of many methods. In this work, I develop a novel statistical framework to perform factor analysis and interpret its results in data with dependent observations and factors whose signal strengths span several orders of magnitude. I then prove that my methodology can be used to solve many important and previously unsolved problems that routinely arise when analyzing dependent biological data, including high dimensional covariance estimation, subspace recovery, latent factor interpretation and data denoising. Additionally, I show that my estimator for the number of factors overcomes both the notorious ``eigenvalue shadowing'' problem, as well as the biases due to the pervasive factor assumption that plague existing estimators. Simulated and real data demonstrate the superior performance of my methodology in practice.
\end{abstract} 

\noindent {\bf Keywords:} High dimensional factor analysis, Dependent data, Approximate factor model, Principal component analysis, High dimensional asymptotics

\section{Introduction}
\label{section:Introduction}
Factor analysis is an indispensable component of high throughput biological data analysis. However, existing methods rely on critical assumptions that are not satisfied in modern biological data.\par 
\indent Suppose $\bm{Y} \in \mathbb{R}^{p \times n}$ contains the gene expression or DNA methylation of $p$ genomic units measured in $n$ samples, where $10^4 \lesssim p \lesssim 10^6$ and $n \lesssim 10^2$ in typical genetic and epigenetic data. For latent factors $\bm{C} \in \mathbb{R}^{n \times K}$ and loadings $\bm{L} \in \mathbb{R}^{p \times K}$, I consider the following general factor model:
\begin{align}
\label{equation:ModelIntro}
    \bm{Y} = \bm{L}\bm{C}^{\T} + \bm{E}, \quad \E(\bm{E}_{g \bigcdot}) = \bm{0}, \quad \V(\bm{E}_{g \bigcdot}) = \bm{V}_g, \quad g \in \{1,\ldots,p\},
\end{align}
where $\bm{E}$ is a random matrix with $g$th row and $i$th column $\bm{E}_{g \bigcdot} \in \mathbb{R}^n$ and $\bm{E}_{\bigcdot i} \in \mathbb{R}^p$. As discussed in Section \ref{section:Theory}, the rows of $\bm{E}$ may be dependent, provided the eigenvalues of $\V(\bm{E}_{\bigcdot i})$ remain bounded. The goal is to estimate $K$, $\bm{C}$, $\bm{L}$ and $\bm{V}_g$ so as to accomplish the following common objectives in biological data analysis:
\begin{enumerate}[label=(\alph*)]
    \item Characterize and prioritize the sources of variation contained in $\bm{C}$ \citep{DieselParticles,FactorCorr,eLife,Knowles}.\label{item:goal:Prioritize}
    \item Understand the relationships between biological samples \citep{FactorSamples,TwinDNAmPlos} and genomic units \citep{SVD_genomics}.\label{item:goal:Relation}
    \item Denoise $\bm{Y}$ to empower inference in gene expression and DNA methylation quantitative trait loci (eQTL and meQTL) studies \citep{Knowles,Marcus}.\label{item:goal:Denoise}
\end{enumerate}
\indent Existing methods to perform factor analysis and their theoretical guarantees can, to a large extent, be partitioned into two groups based on their assumptions on $\bm{V}_g$ and the $K$ latent factors' signal strengths, where signal strengths are quantified as the $K$ non-zero eigenvalues $\lambda_1 \geq \cdots \geq \lambda_K > 0$ of $n\{p^{-1}\bm{L}(n^{-1}\bm{C}^{\T}\bm{C})\bm{L}^{\T}\}$. The first group relies on the standard assumption that the columns of $\bm{E}$ are independent and identically distributed \citep{BCV_svd,BaiLi,FanEigen,bcv,PA_Dobriban,ChrisANDDan,DPA,FanCorrMatrix}, implying $\bm{V}_g=v_g I_n$ for some $v_g > 0$ for all $g \in \{1,\ldots,p\}$. However, this critical assumption is violated by the cornucopia of biological data with dependent samples, which include longitudinal data \citep{Martino,SwedishTwins,LongRNAseq,DieselParticles,URECA}, multi-tissue data \citep{GTEX,MultiTissueMeth,MultiTissueDNAm}, multi-treatment data \citep{Knowles,Marcus}, as well as data from related individuals \citep{TwinDanish,KinshipBaboons,TwinDNAmPlos}. Not only does the independence assumption made in by aforementioned articles imply their theoretical guarantees are not applicable to these dependent data, I show their estimators for $K, \bm{C}$ and $\bm{L}$ are irrevocably corrupted by the dependence between the columns of $\bm{E}$ in practice.\par 
\indent The second group of methods allow for dependence between the columns of $\bm{E}$, but rely on the ``pervasive factor'' assumption in which $\lambda_K \asymp n$ \citep{BaiNg,AhnHorenstein,FanJRSSB,Reviewer2_ShrinkageFactors,Reviewer2_TimeOrdered,RandomJASA,Biometrika_K}, or assume $\lambda_K \to \infty$ as $n,p \to \infty$ \citep{Onatski_Corr}. Intuitively speaking, this implies a scree plot of the eigenvalues of $p^{-1}\bm{Y}^{\T}\bm{Y}$ should reveal an unambiguous gap between the $K$th and $[K+1]$th eigenvalues. While such an assumption makes it possible to place general assumptions on the dependence between the entries of $\bm{E}$, it is patently violated in nearly all biological data \citep{bcv,CATE,ChrisANDDan}. For example, $\lambda_1 \asymp n$ and $\lambda_K \lesssim 1$ in the gene expression data example presented in Section \ref{section:RealData}. This assumption is more than a mere technicality. In fact, these methods are so dependent on the assumption that $\lambda_K \to \infty$ that they consistently fail to recover moderate and weak factors \citep{bcv}, which I show biases estimators from and under-powers inference using downstream methods that rely on estimates for $\bm{C}$ and $\bm{L}$.\par 
\indent The purpose of this work is to facilitate objectives \ref{item:goal:Prioritize}, \ref{item:goal:Relation} and \ref{item:goal:Denoise} by providing a novel framework, efficient estimators and the requisite theory to perform factor analysis and interpret the results in dependent biological data with nearly arbitrary eigenvalues $\lambda_1,\ldots,\lambda_K$. First, I characterize the types of dependence typically observed in biological data in Section \ref{section:Model}, and for $K$ known, extend a recently proposed method to estimate $\lambda_1,\ldots,\lambda_K$, $\bm{C}$, $\bm{L}$ and $\bm{V}_g$ in Section \ref{section:CandL}. A critical component of my method is a novel eigenvalue bias correction for dependent data that ensures the estimates for $\lambda_1,\ldots,\lambda_K$ and the left and right singular vectors of $\bm{L}\bm{C}^{\T}$ are as efficient as those derived from data with independent samples. Accurate estimates for these quantities are crucial to analyzing biological data with dependent \citep{DieselParticles,FactorCorr,Knowles,Marcus} samples, and I prove in Section \ref{section:Theory} that my estimates for them enable objectives \ref{item:goal:Prioritize} and \ref{item:goal:Relation}. In addition to proving the estimate for $\bm{C}$ capacitates objective \ref{item:goal:Denoise}, I show the estimates for the left singular vectors and $\lambda_1,\ldots,\lambda_K$ can be leveraged to derive estimates for and asymptotic distributions of the eigenvectors and eigenvalues of $\V\left(\bm{Y}_{\bigcdot i}\right) \in \mathbb{R}^{p \times p}$ when $\bm{Y}_{\bigcdot 1},\ldots,\bm{Y}_{\bigcdot n}$ are dependent. As far as I am aware, the abovementioned methodology provides the first provably accurate set of  estimators that achieve objectives \ref{item:goal:Prioritize}, \ref{item:goal:Relation} and \ref{item:goal:Denoise} in dependent biological data.\par 
\indent Second, I extend the above methodology in Section \ref{section:EstK} when $K$ is unknown by framing the estimation of $K$ as a model selection problem, and introduce the Oracle rank, $\Koracle$, as that which minimizes a weighted generalization error. This has the effect of excluding components of $\bm{L}\bm{C}^{\T}$ whose signal strengths are below the noise level of the data, and are therefore too weak to justify the added estimation uncertainty that results from their inclusion in the model for $\bm{L}\bm{C}^{\T}$. As far as I am aware, my estimate for $\Koracle$ is the first consistent estimator for the number of latent factors in dependent data with $\lambda_{K}\lesssim 1$ and $\lambda_{1}\lesssim n$, and therefore circumvents the ``eigenvalue shadowing'' problem \citep{PA_Dobriban} and the biases that accompany the pervasive factor assumption. I lastly use simulated and real genetic data in Sections \ref{section:Simulation} and \ref{section:RealData} to illustrate the power of my framework and estimators in practice. The proofs of all theoretical statements are given in the Supplementary Material, and an \texttt{R} package implementing my method is available from https://github.com/chrismckennan/CorrConf.

\section{Notation and a model for the data}
\label{section:Model}
\subsection{Notation}
\label{subsection:Notation}
Let $n > 0$ be an integer. I let $\bm{1}_n \in \mathbb{R}^n$ be the vectors of all ones,  $I_n \in \mathbb{R}^{n \times n}$ be the identity matrix, $[n]=\lbrace 1,\ldots,n \rbrace$ and $\bm{x}_i$ be the $i$th element of $\bm{x} \in \mathbb{R}^n$. For $\bm{M} \in \mathbb{R}^{n \times m}$, I let $\bm{M}_{ij} \in\mathbb{R}$, $\bm{M}_{\bigcdot j} \in \mathbb{R}^{n}$ and $\bm{M}_{i \bigcdot} \in \mathbb{R}^{m}$ be the $(i,j)$th element, $j$th column and $i$th row of $\bm{M}$, respectively, and define $P_{M}$ and $P_{M}^{\perp}$ to be the orthogonal projection matrices onto $\im( \bm{M}) = \lbrace \bm{M}\bm{v} : \bm{v} \in \mathbb{R}^m \rbrace$ and $\ker(\bm{M}^{\T}) = \lbrace \bm{u} \in \mathbb{R}^{n} : \bm{M}^{\T} \bm{u}=\bm{0} \rbrace$. If $m=n$, I let $\abs*{\bm{M}}$ and $\abs*{\bm{M}}_+$ be the determinant and pseudo-determinant, respectively, and for $\bm{M}=\bm{M}^T$ and $s \in [n]$, let $\Lambda_s(\bm{M})$ be the $s$th largest eigenvalue of $\bm{M}$. For random vectors $\bm{X},\bm{Y}\in\mathbb{R}^{n}$, I let $\bm{X} \asim (\bm{\mu},\bm{V})$ if $\E(\bm{X})=\bm{\mu}$ and $\V(\bm{X})=\bm{V}$, and $\bm{X}\edist \bm{Y}$ if $\bm{X},\bm{Y}$ have the same distribution.

\subsection{The model for the data}
\label{subsection:model}
Let $\bm{Y} \in \mathbb{R}^{p \times n}$ be the observed data, where $\bm{Y}_{gi}$ is the observation at genomic unit $g \in [p]$ in sample $i \in [n]$. I assume that Model \eqref{equation:ModelIntro} holds for some non-random latent loadings $\bm{L} \in \mathbb{R}^{p \times K}$ and random latent factors $\bm{C} \in \mathbb{R}^{n \times K}$, where
\begin{align}
\label{equation:Vmodel}
    \bm{V}_g = \sum\limits_{j=1}^b v_{g,j}\bm{B}_j, \quad \bar{\bm{V}} = p^{-1}\sum\limits_{g=1}^p \bm{V}_g = \sum\limits_{j=1}^b \bar{v}_j \bm{B}_j, \quad g \in [p]
\end{align}
for some observed matrices $\bm{B}_1,\ldots,\bm{B}_b$ that parametrize the correlation across samples. This is a ubiquitous model for $\bm{V}_g$ in modern high throughput biological data, and can be used to model the correlation structure in multi-tissue data \citep{MultiTissueMeth,MultiTissueDNAm}, longitudinal data \citep{LongRNAseq,URECA}, multi-treatment or multi-condition data \citep{Knowles,Marcus}, data from related individuals \citep{TwinDanish,KinshipBaboons,TwinDNAmPlos}, or a combination of these data types \citep{Martino,SwedishTwins,DieselParticles}. If $\bm{E}_{\bigcdot 1},\ldots,\bm{E}_{\bigcdot n}$ are independent and identically distributed, then $b=1$ and $\bm{B}=I_n$. I assume the unknown variance multipliers $\bm{v}_g = (v_{g,1},\ldots,v_{g,b})^{\T}$ lie in the convex set $\Theta = \lbrace \bm{x} \in \mathbb{R}^b : \bm{A}_v\bm{x} \geq \bm{0} \rbrace$ for some known $\bm{A}_v \in \mathbb{R}^{q \times b}$. The matrix $\bm{A}_v$ will typically be $I_b$, but can take other values depending on the parametrization of $\bm{B}_1,\ldots,\bm{B}_b$.\par
\indent I assume throughout that $\bm{C}$ is independent of $\bm{E}$. Similar to previous work that assumes $b=1$, $\bm{B}_1=I_n$ and allows $\lambda_K \lesssim 1$, the assumptions I place on the dependence between the rows of $\bm{E}$ will depend on whether or not an estimate for $K$ is available \citep{FanEigen,bcv,DPA,PA_Dobriban}. To avoid confusing technicalities, I save the details for Section \ref{section:Theory}.\par
\indent The dependence between the entries of $\bm{C}_{\bigcdot r}$ may depend on $r$. For example, columns corresponding to technical variables like batch number may have independent entries, and others representing biological factors like cell composition may have dependent entries. Therefore, I only assume $\E(n^{-1}\bm{C}^{\T}\bm{C})$ exists and is full rank, and, unless otherwise stated, place no assumptions on the dependence between the elements of $\bm{C}$. Therefore, $\E(\bm{Y}) = \bm{L}\lbrace \E(\bm{C}) \rbrace^{\T}$ and
\begin{align*}
    \C(\bm{Y}_{gi},\bm{Y}_{hj}) &= \bm{\ell}_g^{\T}\C(\bm{C}_{i \bigcdot},\bm{C}_{j \bigcdot})\bm{\ell}_h + \C(\bm{E}_{gi},\bm{E}_{hj}), \quad g,h\in[p]; i,j\in[n].   
\end{align*}
Evidently, this dependence structure is far more general than that considered by previous authors, who typically only consider data where $\C(\bm{E}_{gi},\bm{E}_{hj})$ does not depend on $i$ or $j$ and $\C(\bm{C}_{i \bigcdot},\bm{C}_{j \bigcdot}) = \bm{\Psi}I(i=j)$ for some non-singular $\bm{\Psi} \in \mathbb{R}^{K \times K}$ \cite{FanJRSSB,FanEigen,PA_Dobriban,bcv,CATE,DPA,FanCorrMatrix}.\par 
\indent A more general model would be $\bm{Y}=\bm{\Gamma}\bm{Z}^{\T}+\bm{L}\bm{C}^{\T}+\bm{E}$, where $\bm{Z} \in \mathbb{R}^{n \times r}$ are observed nuisance covariates, like the intercept or treatment condition, that may not be of immediate interest. One can get back to Models \eqref{equation:ModelIntro} and \eqref{equation:Vmodel} by multiplying $\bm{Y}$ on the right by a matrix $\bm{Q}_Z \in \mathbb{R}^{n \times (n-r)}$ whose columns form an orthonormal basis for $\ker(\bm{Z}^{\T})$, where
\begin{align}
\label{equation:RorateOutZ}
    \bm{Y}\bm{Q}_Z = \bm{L}(\bm{Q}_Z^{\T}\bm{C})^{\T} + \tilde{\bm{E}}, \quad \tilde{\bm{E}}_{g \bigcdot}=\bm{Q}_Z^T \bm{E}_{g \bigcdot} \asim (\bm{0},\sum_{j=1}^b v_{g,j}\bm{Q}_Z^{\T}\bm{B}_j \bm{Q}_Z), \quad g \in [p].
\end{align}
I therefore work exclusively with Models \eqref{equation:ModelIntro} and \eqref{equation:Vmodel} and assume any nuisance covariates have already been rotated out.\par
\indent It is easy to see that conditional on $\bm{C}$ and provided $\dim\{\im(\bm{L})\}=\dim\{\im(\bm{C})\}=K$, $\bm{L}\bm{C}^{\T}$, and therefore $\im(\bm{L})$ and $\im(\bm{C})$, are identifiable in Model \eqref{equation:ModelIntro}. However, $\bm{L}$ and $\bm{C}$ are themselves not identifiable. To facilitate interpretation and make my methodology useful to biological practitioners, I use the IC3 identification conditions in \citet{BaiLi} and define $\Coracle$ and $\Loracle$ to be
\begin{align}
    \label{equation:LCoracleInitial}
    ( \Coracle,\Loracle ) \in  \{(\bar{\bm{C}},\bar{\bm{L}})\in\mathbb{R}^{n \times K}\times \mathbb{R}^{p \times K} : \bar{\bm{L}}\bar{\bm{C}}^{\T}=\bm{L}\bm{C}^{\T}, n^{-1}\bar{\bm{C}}^{\T}\bar{\bm{C}}=I_K, np^{-1}\bar{\bm{L}}^{\T}\bar{\bm{L}}=\diag(\lambda_1,\ldots,\lambda_K)\}.
\end{align}
Provided $\lambda_1,\ldots,\lambda_K$ are non-degenerate, $\Coracle_{\bigcdot r}$ and $\Loracle_{\bigcdot r}$ are identifiable up to sign parity and are proportional to the $r$th right and left singular vectors of $\bm{L}\bm{C}^{\T}$ for all $r \in [K]$. As defined, $\Coracle_{\bigcdot 1},\ldots,\Coracle_{\bigcdot K}$ and $\Loracle_{\bigcdot 1},\ldots,\Loracle_{\bigcdot K}$ are empirically uncorrelated factors and loadings, where $\Coracle_{\bigcdot r}$ has the natural and intelligible interpretation as being the factor with the $r$th largest effect on expression or methylation. This identification condition is ubiquitous in the biological literature, and has proven to be quite efficacious when analyzing data with independent \citep{SVD_genomics,eLife} and dependent \citep{DieselParticles,FactorCorr,Knowles,Marcus} samples.

\section{Estimation when $K$ is known}
\label{section:CandL}
\subsection{An algorithm to estimate $\Loracle$, $\lambda_1,\ldots,\lambda_K$ and $\Coracle$}
\label{subsection:FALCO}
Here I describe my method to estimate $\Loracle$, $\lambda_1,\ldots,\lambda_K$, $\Coracle$ and $\bm{V}_1,\ldots,\bm{V}_p$ assuming $K$ is known, which extends the method to recover $\im(\bm{C})=\im\{\Coracle\}$ and $\bm{V}_1,\ldots,\bm{V}_p$ proposed in \citet{CorrConf}. Unlike standard Principal Components Analysis (PCA) in data where $\bm{E}_{\bigcdot 1},\ldots,\bm{E}_{\bigcdot n}$ are independent and identically distributed, one must be careful to avoid including variation from $\bm{E}$ that is shared across samples in the estimate for $\im(\bm{C})$. Further, even if $\im(\bm{C})$ were known, estimating $\Loracle$, $\lambda_1,\ldots,\lambda_K$ and $\Coracle$ is challenging because they can no longer be estimated using the singular value decomposition of $\bm{Y}$.\par
\indent To elaborate on both of these points, let $\bm{S}=p^{-1}\bm{Y}^{\T}\bm{Y}$ and note that PCA's estimate for $\im(\bm{C})$, which is simply the span of the first $K$ eigenvectors of $\bm{S}$, can be expressed as
\begin{align}
\label{equation:LossPCA}
   P_{\hat{C}^{(PCA)}}= \argmax_{\substack{  \bm{H}\in \mathbb{R}^{n \times n},\, \bm{H}^{\T}=\bm{H}\\ \bm{H}^2=\bm{H},\,\Tr(\bm{H})=K  }} \Tr(\bm{S} \bm{H}),
\end{align}
where there is a one-to-one correspondence between the estimators $P_{\hat{C}^{(PCA)}}$ and $\im\lbrace \hat{\bm{C}}^{(PCA)} \rbrace$. Consider the simple case when $\bm{E}_{\bigcdot 1},\ldots,\bm{E}_{\bigcdot n}$ are independent and identically distributed. Then $b=1$, $\bm{B}_1=I_n$, $\E(p^{-1}\bm{E}^{\T}\bm{E}) = \bar{v}_1 I_n$ and $\bm{S}$, in expectation, can be expressed as
\begin{align*}
    \E(\bm{S} \mid \bm{C}) = \bm{C}(p^{-1}\bm{L}^{\T}\bm{L})\bm{C}^{\T} + \E(p^{-1}\bm{E}^{\T}\bm{E}) = n^{-1}\Coracle\diag(\lambda_1,\ldots,\lambda_K)\{\Coracle\}^{\T} + \bar{v}_1 I_n.
\end{align*}
Since $\Tr\{\E(p^{-1}\bm{E}^{\T}\bm{E})\bm{H}\}=\Tr\{(\bar{v}_1 I_n)\bm{H}\}=K\bar{v}_1$ does not depend on $\bm{H}$, this implies variation in $\bm{E}$ has little influence on the objective in \eqref{equation:LossPCA}, and therefore $\im\lbrace \hat{\bm{C}}^{(PCA)} \rbrace$. Further, since adding a multiple of the identity does not change a matrix's eigenvectors, the orthonormal columns of $n^{-1/2}\Coracle$ are the first $K$ eigenvectors of $\E(\bm{S} \mid \bm{C})$. This suggests $n^{-1/2}\Coracle$ can be accurately estimated as the first $K$ eigenvectors of $\bm{S}$, which form an ordered orthonormal basis for $\im\lbrace \hat{\bm{C}}^{(PCA)} \rbrace$. However, both of these lines of reasoning break down when $\bm{E}_{\bigcdot 1},\ldots,\bm{E}_{\bigcdot n}$ are dependent. In such cases, $\Tr\{\E(p^{-1}\bm{E}^{\T}\bm{E})\bm{H}\}$ will depend on $\bm{H}$ because $\bar{\bm{V}}=\E(p^{-1}\bm{E}^{\T}\bm{E})$ will no longer be a multiple of the identity. Therefore, the solution to \eqref{equation:LossPCA} will be driven by variation in $\bm{E}$, thereby corrupting PCA's estimate for $\im(\bm{C})$. Further, since the eigenvectors of $\E(\bm{S} \mid \bm{C})=n^{-1}\Coracle\diag(\lambda_1,\ldots,\lambda_K)\{\Coracle\}^{\T} + \bar{\bm{V}}$ are no longer $n^{-1/2}\Coracle$, the eigenvectors of $\bm{S}$ should not be used to estimate $\Coracle$.\par 
\indent Besides illuminating issues when applying standard factor analysis techniques in data with dependent samples, the above discussion also suggests that accounting for $\bar{\bm{V}}$ may circumvent these issues. Suppose that $\bar{\bm{V}} \succ \bm{0}$ were known, and define
\begin{align}
\label{equation:IcaseLoss}
    P_{\hat{C}} = \argmax_{\substack{  \bm{H}\in \mathbb{R}^{n \times n},\, \bm{H}^{\T}=\bm{H}\\ \bm{H}^2=\bm{H},\,\Tr(\bm{H})=K  }} \Tr\lbrace ( \bar{\bm{V}}^{-1}\bm{S}\bar{\bm{V}}^{-1} ) (\bm{H}\bar{\bm{V}}^{-1}\bm{H})^{\dagger}\rbrace.
\end{align}
Because $\Tr\{\bar{\bm{V}}^{-1}(\bm{H}\bar{\bm{V}}^{-1}\bm{H})^{\dagger}\}=K$ for all $\bm{H}$, the objective function in \eqref{equation:IcaseLoss} satisfies
\begin{align*}
    \E[ \Tr\lbrace ( \bar{\bm{V}}^{-1}\bm{S}\bar{\bm{V}}^{-1}) (\bm{H}\bar{\bm{V}}^{-1}\bm{H})^{\dagger}\rbrace \mid\bm{C} ]= \Tr\lbrace  \bar{\bm{V}}^{-1}\bm{C}(p^{-1}\bm{L}^{\T}\bm{L})\bm{C}^{\T}\bar{\bm{V}}^{-1} (\bm{H}\bar{\bm{V}}^{-1}\bm{H})^{\dagger}\rbrace + K.
\end{align*}
Since the only term involving $\bm{H}$ in the above expression takes its maximum when $\bm{H}=P_{C}$, this simple analysis argues that \eqref{equation:IcaseLoss} properly accounts for $\bar{\bm{V}}$ when estimating $\im(\bm{C})$. One can then use $\im(\hat{\bm{C}})$ to estimate $\lambda_1,\ldots,\lambda_K$, and subsequently choose an appropriate ordered basis for $\im(\hat{\bm{C}})$ to estimate $\Loracle$ and $\Coracle$. These steps are presented below in Algorithm \ref{algorithm:EstC}, which I call FALCO (\textbf{f}actor \textbf{a}na\textbf{l}ysis in \textbf{co}rrelated data), which estimates $\im(\bm{C})$, $\lambda_1,\ldots,\lambda_K$, $\Coracle$ and $\Loracle$, and uses the warm start technique detailed in \cite{CorrConf} to estimate $\bar{\bm{V}}$.

\begin{algorithm}[FALCO]
\label{algorithm:EstC}
Let $\bm{Y} \in \mathbb{R}^{p \times n}$ and $\bm{M} \in \mathbb{R}^{b \times b}$, where $\bm{M}_{rs}=n^{-1}\Tr(\bm{B}_r\bm{B}_s)$ for $r,s \in [b]$. Fix some $\alpha \in (0,1)$ and integer $K_{\max} \in (0,n \wedge p]$ and let $\bm{V}(\bm{\theta})=\sum_{j=1}^b \bm{\theta}_j \bm{B}_j$ for any $\bm{\theta}\in\mathbb{R}^b$.
\begin{enumerate}[label=(\alph*)]
\item Initialize $\hat{\bar{\bm{v}}}=(\hat{\bar{v}}_1,\ldots,\hat{\bar{v}}_b)^{\T}$ as $\hat{\bar{\bm{v}}} = \argmax_{\bm{\theta} \in \Theta}(- \log\lbrace\abs*{\bm{V}(\bm{\theta})} \rbrace - \Tr[ (p^{-1}\bm{Y}^{\T}\bm{Y}) \lbrace \bm{V}(\bm{\theta}) \rbrace^{-1} ])$. Set $k=1$ and $\hat{\bar{\bm{V}}} = \bm{V}(\hat{\bar{\bm{v}}})$.\label{item:EstC:0}
\item \begin{enumerate}[label=(\roman*)]
    \item Define $P_{\hat{C}}$, and therefore $\im(\hat{\bm{C}})$, to be
    \begin{align}
    \label{equation:Alg1:C}
        P_{\hat{C}} = \argmax_{\substack{  \bm{H}\in \mathbb{R}^{n \times n},\, \bm{H}^{\T}=\bm{H}\\ \bm{H}^2=\bm{H},\,\Tr(\bm{H})=k  }} \Tr[\{ \hat{\bar{\bm{V}}}^{-1}(p^{-1}\bm{Y}^{\T}\bm{Y})\hat{\bar{\bm{V}}}^{-1} \}(\bm{H}\hat{\bar{\bm{V}}}^{-1}\bm{H})^{\dagger}].
    \end{align}\label{item:EstC:C}
    \item Let $\hat{\bm{M}}\in\mathbb{R}^{b\times b}$ be $\hat{\bm{M}}_{rs}=n^{-1}\Tr( P_{\hat{C}}^{\perp}\bm{B}_r P_{\hat{C}}^{\perp}\bm{B}_s )$. If $\Lambda_{b}(\hat{\bm{M}} ) \leq \alpha \Lambda_{b}( \bm{M} )$, go to Step \ref{item:EstC:Ctilde}.\label{item:EstC:U}
    \item Set $\hat{\bar{\bm{v}}} = \argmax_{\bm{\theta} \in \Theta}( -\log( \abs*{P_{\hat{C}}^{\perp}\bm{V}(\bm{\theta})P_{\hat{C}}^{\perp}}_{+} ) - \Tr[ (p^{-1}\bm{Y}^{\T}\bm{Y}) \lbrace P_{\hat{C}}^{\perp}\bm{V}(\bm{\theta})P_{\hat{C}}^{\perp}\rbrace^{\dagger} ])$ and $\hat{\bar{\bm{V}}} = \bm{V}(\hat{\bar{\bm{v}}})$.\label{item:EstC:V}
    
    \item Repeat Steps \ref{item:EstC:C}, \ref{item:EstC:U} and \ref{item:EstC:V} two times, and stop on Step \ref{item:EstC:C} of the third iteration.\label{item:EstC:iter}
\end{enumerate}\label{item:EstC:ImageC}
\item Let $\tilde{\bm{C}} \in \mathbb{R}^{n \times k}$ be any matrix such that $\im(\tilde{\bm{C}})=\im(\hat{\bm{C}})$. Define $\tilde{\bm{L}} = \bm{Y}\hat{\bar{\bm{V}}}^{-1}\tilde{\bm{C}}( \tilde{\bm{C}}^{\T}\hat{\bar{\bm{V}}}^{-1}\tilde{\bm{C}} )^{-1}$.\label{item:EstC:Ctilde}
\item Define $\hat{\lambda}_r$ to be the $r$th largest eigenvalue of $n^{-1}\tilde{\bm{C}}\lbrace np^{-1}\tilde{\bm{L}}^{\T}\tilde{\bm{L}} - (n^{-1} \tilde{\bm{C}}^{\T}\hat{\bar{\bm{V}}}^{-1}\tilde{\bm{C}} )^{-1} \rbrace \tilde{\bm{C}}^{\T}$ for $r\in[K]$. If $k=K$, let $\hat{\lambda}_r$ be the estimate for $\lambda_r$ for all $r\in[K]$. \label{item:EstC:eigen}
\item Let $\tilde{\bm{U}} \in \mathbb{R}^{k \times k}$ be a unitary matrix that satisfies
\begin{align*}
    \tilde{\bm{U}}^{\T}(n^{-1} \tilde{\bm{C}}^{\T}\tilde{\bm{C}} )^{1/2} \lbrace np^{-1}\tilde{\bm{L}}^{\T}\tilde{\bm{L}} - (n^{-1} \tilde{\bm{C}}^{\T}\hat{\bar{\bm{V}}}^{-1}\tilde{\bm{C}} )^{-1} \rbrace ( n^{-1}\tilde{\bm{C}}^{\T}\tilde{\bm{C}} )^{1/2} \tilde{\bm{U}} = \diag(\hat{\lambda}_1,\ldots,\hat{\lambda}_k)
\end{align*}
and define $\hat{\bm{L}} = \tilde{\bm{L}}(n^{-1} \tilde{\bm{C}}^{\T}\tilde{\bm{C}} )^{1/2} \tilde{\bm{U}}$ and $\hat{\bm{C}} = \tilde{\bm{C}}(n^{-1} \tilde{\bm{C}}^{\T}\tilde{\bm{C}} )^{-1/2}\tilde{\bm{U}}$. If $k=K$, let $\hat{\bm{L}}$ and $\hat{\bm{C}}$ be the estimates for $\Loracle$ and $\Coracle$.\label{item:Estc:LandBasis}
\item If $k < K_{\max}$, update $k \leftarrow k + 1$ and return to Step \ref{item:EstC:ImageC}.
\end{enumerate}
\end{algorithm}

\begin{remark}
\label{remark:Alg1:Vg}
The estimator $\hat{\bar{\bm{v}}}$ in Step \ref{item:EstC:ImageC}\ref{item:EstC:V} is exactly the restricted maximum likelihood (REML) estimator for $\bm{\theta}$ under the model $\bm{Y}\sim MN_{p \times n}( \bm{L}\hat{\bm{C}}^{\T},I_p,\bm{V}(\bm{\theta}) )$. We can estimate $\bm{V}_g = \bm{V}(\bm{v}_g)$ for any $k$ with REML using the model $\bm{Y}_{g \bigcdot} \sim N(\hat{\bm{C}}\bm{L}_{g \bigcdot},\bm{V}(\bm{v}_g))$.
\end{remark}

\begin{remark}
\label{remark:Alg1:Ind}
If $b=1$, $\bm{B}_1=I_n$, then $\hat{\bm{L}}_{\bigcdot r}$ and $\hat{\bm{C}}_{\bigcdot r}$ are proportional to the $r$th left and right singular vectors of $\bm{Y}$, and $\hat{\lambda}_r$ is the bias-corrected estimator proposed in \citet{ChrisANDDan}.
\end{remark}

With the exception of \eqref{equation:Alg1:C} and Step \ref{item:EstC:U} of \ref{item:EstC:ImageC}, Steps \ref{item:EstC:0} and \ref{item:EstC:ImageC} of Algorithm \ref{algorithm:EstC} resemble the iterative method proposed in \citet{CorrConf} to simultaneously estimate $\im(\bm{C})$ and $\bar{\bm{V}}$, where the estimate for $\bar{\bm{V}}$ when $\dim\lbrace \im(\bm{C}) \rbrace$ is assumed to be $k-1$ is used as a starting point when $\dim\lbrace \im(\bm{C}) \rbrace = k$. This ``warm start'' technique helps ensure that variation attributable to $\bar{\bm{V}}$ is not mistakenly assigned to $\bm{C}$. Step \ref{item:EstC:ImageC}\ref{item:EstC:U} flags estimates for $P_{C}$ where the subsequent restricted log-likelihood function in \ref{item:EstC:V} may not identify $\bar{\bm{v}}$. This step is necessary when $k \asymp n \wedge p$, and allows us to circumvent the common, but problematic, restriction that the maximum possible value of $K$, $K_{\max}$, be at most finite when estimating $K$ in Section~\ref{section:EstK}. I set $\alpha=0.1$ in practice. The loss and estimator in \eqref{equation:Alg1:C} is unique to the above Algorithm, and helps generalize the problem of subspace estimation to data with correlated samples. And while \eqref{equation:Alg1:C} is ostensibly a challenging problem, I provide a simple and exact solution in Proposition \ref{proposition:ImageC} below.

\begin{proposition}
\label{proposition:ImageC}
If $\hat{\bar{\bm{V}}} \succ \bm{0}$, $P_{\hat{C}}$ in \eqref{equation:Alg1:C} is exactly $P_{\hat{\bar{V}}^{1/2}W}$, where the columns of $\bm{W} \in \mathbb{R}^{n \times k}$ are the first $k$ right singular vectors of $\bm{Y}\hat{\bar{\bm{V}}}^{-1/2}$.
\end{proposition}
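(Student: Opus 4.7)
Write $V = \hat{\bar{\bm{V}}}$ and $S = p^{-1}\bm{Y}^{\T}\bm{Y}$. Any symmetric idempotent $\bm{H}\in\mathbb{R}^{n\times n}$ with $\Tr(\bm{H})=k$ can be written as $\bm{H}=\bm{U}\bm{U}^{\T}$ for some $\bm{U}\in\mathbb{R}^{n\times k}$ with $\bm{U}^{\T}\bm{U}=I_k$, and conversely. I will rewrite the objective in \eqref{equation:Alg1:C} as an explicit function of $\im(\bm{U})$, and show that the maximization collapses to the standard Ky Fan trace maximization after the linear change of variables $\bm{R}=V^{-1/2}\bm{U}$.

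First I compute the pseudoinverse in the objective. Since $V\succ\bm{0}$, the matrix $A:=\bm{U}^{\T}V^{-1}\bm{U}\in\mathbb{R}^{k\times k}$ is positive definite, and $\bm{H}V^{-1}\bm{H}=\bm{U}A\bm{U}^{\T}$ is of rank $k$ with Moore--Penrose pseudoinverse $(\bm{H}V^{-1}\bm{H})^{\dagger}=\bm{U}A^{-1}\bm{U}^{\T}$ (verify the four Penrose conditions using $\bm{U}^{\T}\bm{U}=I_k$). Hence, cyclicity of the trace gives
\begin{align*}
    \Tr\!\bigl[(V^{-1}SV^{-1})(\bm{H}V^{-1}\bm{H})^{\dagger}\bigr]
    = \Tr\!\bigl[(\bm{U}^{\T}V^{-1}\bm{U})^{-1}\bm{U}^{\T}V^{-1}SV^{-1}\bm{U}\bigr].
\end{align*}

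Next I substitute $\bm{R}=V^{-1/2}\bm{U}$, i.e., $\bm{U}=V^{1/2}\bm{R}$, and set $T=V^{-1/2}SV^{-1/2}$. A direct calculation gives $\bm{U}^{\T}V^{-1}\bm{U}=\bm{R}^{\T}\bm{R}$ and $\bm{U}^{\T}V^{-1}SV^{-1}\bm{U}=\bm{R}^{\T}T\bm{R}$, so the objective equals $\Tr\!\bigl[(\bm{R}^{\T}\bm{R})^{-1}\bm{R}^{\T}T\bm{R}\bigr]$. Taking a thin QR decomposition $\bm{R}=\bm{Q}\bm{M}$ with $\bm{Q}^{\T}\bm{Q}=I_k$ and $\bm{M}\in\mathbb{R}^{k\times k}$ invertible, this trace simplifies to $\Tr(\bm{Q}^{\T}T\bm{Q})$, which depends on $\bm{R}$ only through $\im(\bm{R})=\im(\bm{Q})$.

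Finally, because $V^{-1/2}$ is a bijection of $\mathbb{R}^{n}$, as $\bm{U}$ ranges over all orthonormal $n\times k$ matrices, $\im(\bm{R})$ ranges over every $k$-dimensional subspace of $\mathbb{R}^{n}$. The Ky Fan trace maximum principle then shows that $\Tr(\bm{Q}^{\T}T\bm{Q})$ is maximized when $\im(\bm{Q})$ is the span of the top $k$ eigenvectors of $T$. Now
\begin{align*}
    T = V^{-1/2}SV^{-1/2} = p^{-1}(\bm{Y}V^{-1/2})^{\T}(\bm{Y}V^{-1/2}),
\end{align*}
so those eigenvectors are precisely the first $k$ right singular vectors of $\bm{Y}V^{-1/2}$, namely the columns of $\bm{W}$. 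Therefore $\im(\bm{U})=V^{1/2}\im(\bm{W})=\im(V^{1/2}\bm{W})$, which yields $P_{\hat{C}}=P_{\hat{\bar{V}}^{1/2}W}$ as claimed.

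\textbf{Main obstacle.} The only non-mechanical step is recognizing that the apparently complicated pseudoinverse objective is, after the $V^{-1/2}$ change of variables, invariant under right-multiplication of $\bm{R}$ by any invertible $k\times k$ matrix and therefore depends only on the subspace $\im(\bm{R})$; once this invariance is identified, the problem reduces to the standard Ky Fan maximization and the singular-vector characterization follows immediately.
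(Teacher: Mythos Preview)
Your proof is correct and follows essentially the same route as the paper: rewrite the objective in terms of a basis $\bm{U}$ for $\im(\bm{H})$, change variables via $V^{-1/2}$, and reduce to the standard Ky Fan trace maximization over orthonormal $n\times k$ matrices. The paper compresses your pseudoinverse computation and QR step into the single definition $\tilde{\bm{U}}=V^{-1/2}\bm{U}(\bm{U}^{\T}V^{-1}\bm{U})^{-1/2}$, but the argument is otherwise identical.
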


\subsection{Intuition regarding the estimators in Algorithm \ref{algorithm:EstC}}
\label{subsection:FALCOIntuition}
\indent As far as I am aware, the estimates for the eigenvalues in Step \ref{item:EstC:eigen} and the estimates for $\Loracle$ and $\Coracle$ in Step \ref{item:Estc:LandBasis} are the first estimators for these quantities that account for the correlation between samples, and therefore warrant some discussion. Since many of the eigenvalues $\lambda_r$ will be moderate or small in biological data \cite{ChrisANDDan}, one must account for eigenvalue inflation. This is a well-studied phenomenon in data with independent samples \cite{FanEigen,ChrisANDDan}, and occurs because small errors in the estimates $\tilde{\bm{L}}_{1 \bigcdot},\ldots,\tilde{\bm{L}}_{p \bigcdot}$ accumulate and inflate the estimator $np^{-1}\tilde{\bm{L}}^{\T}\tilde{\bm{L}}=np^{-1}\sum_{g=1}^p \tilde{\bm{L}}_{g \bigcdot} \tilde{\bm{L}}_{g \bigcdot}^{\T}$. The term $(n^{-1} \tilde{\bm{C}}^{\T}\hat{\bar{\bm{V}}}^{-1}\tilde{\bm{C}} )^{-1}$ in Steps \ref{item:EstC:eigen} and \ref{item:Estc:LandBasis} corrects that bias, which as hinted in Remark \ref{remark:Alg1:Ind}, reduces to the usual bias correction used to deflate estimates for $\lambda_r$ when $b=1$ and $\bm{B}_1=I_n$ \cite{FanEigen,ChrisANDDan}.\par 
\indent Perhaps the most unnatural element of Algorithm \ref{algorithm:EstC} is Step \ref{item:Estc:LandBasis}. To justify this step, suppose $k=K$. Since $\tilde{\bm{L}}\tilde{\bm{C}}^{\T}$ only depends on $\im(\hat{\bm{C}})$ and not the choice of parametrization of $\tilde{\bm{C}}$, I require $\hat{\bm{L}}=\tilde{\bm{L}}\bm{R}$ and $\hat{\bm{C}} = \tilde{\bm{C}}\bm{R}^{-\T}$ for any non-singular $\bm{R}\in\mathbb{R}^{K\times K}$ to ensure $\hat{\bm{L}}\hat{\bm{C}}^{\T}=\tilde{\bm{L}}\tilde{\bm{C}}^{\T}$. Since $n^{-1}\{\Coracle\}^{\T}\Coracle=I_K$, I set $\bm{R}=(n^{-1}\tilde{\bm{C}}^{\T}\tilde{\bm{C}})^{1/2}\tilde{\bm{U}}$ for some unitary matrix $\tilde{\bm{U}} \in \mathbb{R}^{K \times K}$, which guarantees $n^{-1}\hat{\bm{C}}^{\T}\hat{\bm{C}}=I_K$. I choose $\tilde{\bm{U}}$ so that the inflation-corrected estimator for $np^{-1}\{\Loracle\}^{\T}\Loracle=\diag(\lambda_1,\ldots,\lambda_K)$,
\begin{align*}
    np^{-1}\hat{\bm{L}}^{\T}\hat{\bm{L}} - (n^{-1}\hat{\bm{C}}^{\T}\hat{\bar{\bm{V}}}^{-1}\hat{\bm{C}})^{-1} = \tilde{\bm{U}}^{\T}(n^{-1} \tilde{\bm{C}}^{\T}\tilde{\bm{C}} )^{1/2} \lbrace np^{-1}\tilde{\bm{L}}^{\T}\tilde{\bm{L}} - (n^{-1} \tilde{\bm{C}}^{\T}\hat{\bar{\bm{V}}}^{-1}\tilde{\bm{C}} )^{-1} \rbrace ( n^{-1}\tilde{\bm{C}}^{\T}\tilde{\bm{C}} )^{1/2} \tilde{\bm{U}},
\end{align*}
is exactly $\diag(\hat{\lambda}_1,\ldots,\hat{\lambda}_K)$, where $\hat{\lambda}_r$ is the inflation-corrected estimator of $\lambda_r$. Choosing such a $\tilde{\bm{U}}$ when $b=1,\bm{B}_1=I_n$ is trivial, since one can easily find a $\tilde{\bm{C}}$ such that $n^{-1}\tilde{\bm{C}}^{\T}\hat{\bar{\bm{V}}}^{-1}\tilde{\bm{C}} \propto n^{-1}\tilde{\bm{C}}^{\T}\tilde{\bm{C}} = I_K$ and $np^{-1}\tilde{\bm{L}}^{\T}\tilde{\bm{L}}$ is diagonal. This is certainly not the case in data with more complex correlation structures, since $n^{-1}\tilde{\bm{C}}^{\T}\tilde{\bm{C}}$ and $n^{-1}\tilde{\bm{C}}^{\T}\hat{\bar{\bm{V}}}^{-1}\tilde{\bm{C}}$ cannot both be multiples $I_K$ in general.\par 
\indent I lastly remark that $\hat{\bm{L}}$ will generally not have orthogonal columns. However, I show in Section \ref{subsection:AccuracyAlg1} that, quite remarkably, all of the estimators from Algorithm \ref{algorithm:EstC} for arbitrary $\bm{B}_1,\ldots,\bm{B}_b$ are at least as efficient as those derived from standard PCA when $b=1,\bm{B}_1=I_n$. And while it is not my primary goal, my proof techniques allow me to derive a central limit theorem for $\hat{\lambda}_r$ under far more general assumptions than those considered by other authors.

\section{Defining and estimating the Oracle rank, factors and loadings}
\label{section:EstK}
\subsection{Defining the Oracle rank}
\label{subsection:OracleK}
While Section~\ref{section:CandL} considers the case when $K$ is known, $K$ is typically unknown in real data. However, determining $K$, which is a notoriously challenging problem in data with independent samples, is particularly difficult in data with correlated samples. First, the true $K$ may not be the most appropriate choice for $K$, since the added benefit of estimating factors with negligibly small effects is offset by the cost of additional statistical uncertainty. Second, given that the goal is to analyze real biological data, any estimator must be amenable to data with both large and small eigenvalues $\lambda_r$. Lastly, the estimator must avoid mistaking latent structure due to the dependence between the columns of $\bm{E}$ as arising from $\bm{L}\bm{C}^{\T}$, which can lead to severe overestimates for $K$ \cite{DPA,CorrConf}.\par 
\indent To address these issues, I follow \citet{bcv} and treat the estimation of $K$ as a model selection problem. To define the optimal model, I set the Oracle rank, $\Koracle$, to be that which minimizes the following inverse-variance weighted generalization error:
\begin{align}
\label{equation:OracleK}
    \Koracle =  \argmin_{k \in \lbrace 0,1,\ldots,n \wedge p \rbrace}\lbrace \abs*{\hat{\bm{V}}^{(k)}}^{1/n} \E (\norm*{ [ \bm{L}\bm{C}^{\T} + \tilde{\bm{E}} - \hat{\bm{L}}^{(k)}\lbrace\hat{\bm{C}}^{(k)}\rbrace^{\T} ] \lbrace\hat{\bm{V}}^{(k)}\rbrace^{-1/2} }_F^2 \mid \bm{C},\bm{E} )\rbrace.
\end{align}
Here, $\hat{\bm{L}}^{(k)}\in\mathbb{R}^{p \times k}, \hat{\bm{C}}^{(k)}\in\mathbb{R}^{n \times k}$ and $\hat{\bm{V}}^{(k)}$ are the estimates $\hat{\bm{L}}$, $\hat{\bm{C}}$ and $\hat{\bar{\bm{V}}}$ defined in Steps \ref{item:EstC:ImageC}\ref{item:EstC:V} and \ref{item:Estc:LandBasis} at iteration $k$ of Algorithm \ref{algorithm:EstC} when $K_{\max} = n \wedge p$, $\tilde{\bm{E}}$ is independent of $(\bm{C},\bm{E})$ and $\tilde{\bm{E}} \edist \bm{E}$. The term $\abs*{\hat{\bm{V}}^{(k)}}^{1/n}$ is identical to re-scaling $\hat{\bm{V}}^{(k)}$ such that $\abs*{\hat{\bm{V}}^{(k)}}=1$ for all $k$, and makes \eqref{equation:OracleK} scale-invariant. If $b=1$ and $\bm{B}_1=I_n$, $\Koracle$ reduces to the Oracle rank defined in \citet{bcv}.\par
\indent Assuming for simplicity that $\abs*{\hat{\bm{V}}^{(k)}}=1$, one can rewrite the generalization error in \eqref{equation:OracleK} as
\begin{align}
\label{equation:OracleK:expanded}
     p\Tr[ \E(p^{-1}\tilde{\bm{E}}^{\T}\tilde{\bm{E}})\lbrace\hat{\bm{V}}^{(k)}\rbrace^{-1} ] + \norm*{[\bm{L}\bm{C}^{\T} - \hat{\bm{L}}^{(k)}\lbrace\hat{\bm{C}}^{(k)}\rbrace^{\T}] \lbrace\hat{\bm{V}}^{(k)}\rbrace^{-1/2} }_F^2.
\end{align}
The first term evaluates the accuracy of the estimate for $\bar{\bm{V}}=\E(p^{-1}\tilde{\bm{E}}^{\T}\tilde{\bm{E}})$, which, by Jensen's inequality, is minimized when $\hat{\bm{V}}^{(k)}$ is a scalar multiple of $\bar{\bm{V}}$. Therefore, weighting \eqref{equation:OracleK} by $\lbrace\hat{\bm{V}}^{(k)}\rbrace^{-1/2}$ ensures that $\hat{\bm{V}}^{(\Koracle)}$ captures the variation across the columns of $\bm{E}$. The second term in \eqref{equation:OracleK:expanded} measures the accuracy of $\hat{\bm{L}}^{(k)}\lbrace\hat{\bm{C}}^{(k)}\rbrace^{\T}$ as an estimator for $\bm{L}\bm{C}^{\T}$, where weighting by $\lbrace\hat{\bm{V}}^{(k)}\rbrace^{-1/2}$ prioritizes components of $\bm{L}\bm{C}^{\T}$ not already explained by the estimated model for $\bm{E}$. Note also that this term is not necessarily minimized at $k=K$. Instead, a factor is only included if its capacity to estimate $\bm{L}\bm{C}^{\T}$ outweighs its statistical uncertainty. I describe this precisely in Section \ref{subsection:Ktheory}. 

\subsection{Defining the Oracle factors and loadings}
\label{subsection:OracleFactors}
Given $\Koracle$, the Oracle then must choose the best rank-$\Koracle$ approximation to the rank-$K$ latent signal matrix $\bm{L}\bm{C}^{\T}$. For
\begin{align*}
    \mathcal{S}_{k}= \lbrace (\bar{\bm{C}}, \bar{\bm{L}}) \in \mathbb{R}^{n \times k} \times \mathbb{R}^{p \times k}: n^{-1}\bar{\bm{C}}^{\T}\bar{\bm{C}}=I_{k}, \text{$\bar{\bm{L}}^{\T}\bar{\bm{L}}$ is diagonal with non-increasing elements} \rbrace,
\end{align*}
we take inspiration from the generalized PCA loss considered in \citet{gPCA} and define the Oracle factors and loadings, $\Coracle$ and $\Loracle$, to be
\begin{align}
\label{equation:OracleCL}
( \Coracle,\Loracle ) = \argmin_{(\bar{\bm{C}}, \bar{\bm{L}}) \in \mathcal{S}_{\Koracle}} \norm*{ ( \bm{L}\bm{C}^{\T} - \bar{\bm{L}}\bar{\bm{C}}^{\T} )\bar{\bm{V}}^{-1/2} }_F^2.
\end{align}
If $\Koracle=K$, then $\Coracle$ and $\Loracle$ are exactly as defined in \eqref{equation:LCoracleInitial}. Otherwise, like \eqref{equation:OracleK}, weighting by $\bar{\bm{V}}^{-1/2}$ prioritizes variation in $\bm{L}\bm{C}^{\T}$ not captured by the true model for $\bm{E}$. Note also that $\Loracle\{ \Coracle \}^{\T}$ is the minimizer of \eqref{equation:OracleK:expanded} when $\hat{\bm{V}}^{(\Koracle)} = \bar{\bm{V}}$ is known and \eqref{equation:OracleK:expanded} is treated as a function of $\hat{\bm{L}}^{(\Koracle)}\lbrace\hat{\bm{C}}^{(\Koracle)}\rbrace^{\T}$. Therefore, taken together with the definition of $\Koracle$, $\Loracle\{ \Coracle \}^{\T}$ can be interpreted as the best approximation to $\bm{L}\bm{C}^{\T}$ whose components' signal strengths justify their estimation uncertainty. I show in Section~\ref{section:Theory} that by replacing $K$ with $\Koracle$ in Steps \ref{item:EstC:eigen} and \ref{item:Estc:LandBasis} of Algorithm \ref{algorithm:EstC}, Algorithm \ref{algorithm:EstC} recovers both $\Coracle$ and $\Loracle$.






\subsection{Estimating the Oracle rank}
\label{subsection:EstOracleK}
We extend the procedure developed in \cite{CorrConf} to estimate the Oracle rank in Algorithm \ref{algorithm:CBCV} below. Implicit in Algorithm \ref{algorithm:CBCV} is the assumption that the rows of $\bm{E}$ are independent, which is the standard assumption when estimating $K$ in biological data \cite{BCV_svd,bcv,DPA,PA_Dobriban,CorrConf}. Simulations in Section \ref{section:Simulation} show that Algorithm \ref{algorithm:CBCV} is robust to dependencies commonly observed in biological data.

\begin{algorithm}[CBCV+]
\label{algorithm:CBCV}
Let $K_{\max}=\lceil \eta (n \wedge p) \rceil$ for $\eta\in (0,1)$ and $\bm{Q} \in \mathbb{R}^{n \times n}$ be sampled uniformly from the set of all $n \times n$ unitary matrices. Partition the rows of $\bm{Y}$ uniformly at random into $F \geq 2$ folds. 
\begin{enumerate}[label=\textit{(\alph*)}]
\item For $f \in [F]$, arrange the rows of $\bm{Y}$ such that $\bm{Y} = \begin{bmatrix}
\bm{Y}_{(-f)}\\
\bm{Y}_{f}
\end{bmatrix} = \begin{bmatrix}
\bm{L}_{(-f)}\bm{C}^T\\
\bm{L}_{f}\bm{C}^T
\end{bmatrix} + \begin{bmatrix}
\bm{E}_{(-f)}\\
\bm{E}_{f}
\end{bmatrix}.$
Define $\bm{Y}_{(-f)} \in \mathbb{R}^{p_{(-f)} \times n}$ and $\bm{Y}_{f} \in \mathbb{R}^{p_{f} \times n}$ to be the training and test sets, respectively.\label{item:CBCV:arrange}
\item For all $k \in \lbrace 0,1,\ldots, K_{\max}\rbrace$, obtain $\hat{\bm{C}} \in \mathbb{R}^{n \times k}$ and $\hat{\bm{V}}_{(-f)}$ from $\bm{Y}_{(-f)}$ using Algorithm \ref{algorithm:EstC}.\label{item:CBCV:LogDet}
\item For each $k\in \lbrace 0,1,\ldots, K_{\max}\rbrace$, let $\bar{\bm{Y}}_{f} = \bm{Y}_{f} \hat{\bm{V}}_{(-f)}^{-1/2}\bm{Q}$ and $\hat{\bar{\bm{C}}} = \bm{Q}^{\T}\hat{\bm{V}}_{(-f)}^{-1/2}\hat{\bm{C}}$. Define the loss for this fold, dimension pair as the leave-one-out cross validation loss:
\begin{align}
\label{equation:LOOLoss}
\cv_{f}( k ) = \abs*{\hat{\bm{V}}_{(-f)}}^{1/n}\sum\limits_{i=1}^n \norm*{\bar{\bm{Y}}_{f_{\bigcdot i}} - \hat{\bm{L}}_{f,(-i)} \hat{\bar{\bm{C}}}_{i \bigcdot} }_{2}^2.
\end{align}  
Here, $\hat{\bm{L}}_{f,(-i)}$ is the ordinary least squares regression coefficient from the regression of $\bar{\bm{Y}}_{f,(-i)}$ onto $\hat{\bar{\bm{C}}}_{(-i)}$, where $\bar{\bm{Y}}_{f,(-i)}$ and $\hat{\bar{\bm{C}}}_{(-i)}^T$ are submatrices of $\bar{\bm{Y}}_{f}$ and $\hat{\bar{\bm{C}}}^T$ with the $i$th columns removed.\label{item:CBCV:Loss}
\item Repeat steps \ref{item:CBCV:arrange}--\ref{item:CBCV:Loss} for folds $f=1,\ldots,F$ and define $\Koraclehat = \mathop{\argmin}\limits_{k \in \lbrace 0,1,\ldots,K_{\max} \rbrace} \lbrace \sum_{f=1}^F \cv_{f}( k )\rbrace$.
\end{enumerate}\label{item:CBCV:ChooseK}
\end{algorithm}

Provided the rows of $\bm{E}$ are independent, step \ref{item:CBCV:arrange} partitions $\bm{Y}$ into independent training and test sets, which are used to determine $\hat{\bm{C}},\hat{\bm{V}}_{(-f)}$ and estimate the out-of-sample expected loss defined in \eqref{equation:OracleK}, respectively. Besides ensuring that \eqref{equation:LOOLoss} approximates the expected loss in \eqref{equation:OracleK}, re-scaling $\bm{Y}_f$ by $\hat{\bm{V}}_{(-f)}^{-1/2}$ in step \ref{item:CBCV:Loss} helps alleviate the deleterious effects of correlated data points in leave-one-out cross validation \cite{GeneralizedCV}. Further rotating the test data by $\bm{Q}$ uniformizes the leverage scores of both $\hat{\bm{V}}_{(-f)}^{-1/2}\bm{C}$ and $\hat{\bm{V}}_{(-f)}^{-1/2}\hat{\bm{C}} \in \mathbb{R}^{n \times k}$, which helps guarantee \eqref{equation:LOOLoss} is well behaved for $k \asymp n$. This latter point allows us to avoid the common requirement among existing estimators that $K_{\max}$ be at most finite \cite{BaiNg,Reviewer2_TimeOrdered,Reviewer2_ShrinkageFactors}. While subtle, this is quite important, as such estimators are typically sensitive to $K_{\max}$ \cite{AhnHorenstein,RandomJASA}.


\section{Theoretical guarantees}
\label{section:Theory}
\subsection{Assumptions}
\label{subsection:Assumptions}
In all assumptions and theoretical results, I assume that Models \eqref{equation:ModelIntro} and \eqref{equation:Vmodel} hold, where the symmetric matrices $\bm{B}_1,\ldots,\bm{B}_b \in\mathbb{R}^{n \times n}$ are observed and $b,K = O(1)$ as $n,p \to \infty$. I define $\bm{A}=p^{-1}\E(\bm{L}\bm{C}^{\T}\bm{C}\bm{L}^{\T})$, $\bar{\bm{V}}=p^{-1}\sum_{g=1}^p\bm{V}_g$ and $\delta^2=\abs*{\bar{\bm{V}}}^{1/n}$ throughout, where $\delta^2=\bar{v}_1$ if $b=1,\bm{B}_1=I_n$ and $\delta^2 \asymp 1$ in general under Assumption \ref{assumption:CandL}\ref{item:assumErrors:V}. Lastly, I let $c > 2$ be an arbitrarily large universal constant that does not depend on $n$ or $p$.

\begin{assumption}
\label{assumption:CandL}
Define $\bm{M} \in \mathbb{R}^{b \times b}$ to be $\bm{M}_{ij}=n^{-1}\Tr(\bm{B}_i\bm{B}_j)$. Then:
\begin{enumerate}[label=(\alph*)]
    \item $c^{-1}I_n \preceq \bm{V}_g$, $\abs{v_{g,j}}\leq c$, $c^{-1}I_b \preceq \bm{M}$, $\norm*{\bm{B}_j}_2\leq c$ for all $j \in [b]$ and $\E\lbrace \exp(\bm{t}^{\T}\bm{E}_{g \bigcdot}) \rbrace \leq \exp(c \norm*{\bm{t}}_2^2)$ for all $g \in [p]$ and $\bm{t} \in \mathbb{R}^n$.\label{item:assumErrors:V}
    \item $\bm{C} \in \mathbb{R}^{n \times K}$ is a random matrix that is independent of $\bm{E}$, where $\bm{\Psi}_n=\E\lbrace n^{-1}\bm{C}^{\T}(\delta^{-2}\bar{\bm{V}})^{-1}\bm{C}\rbrace$, $c^{-1}I_K\preceq \bm{\Psi}_n \preceq c I_K$ and $\Delta_{n,p}=\norm*{n^{-1}\bm{C}^{\T} (\delta^{-2}\bar{\bm{V}})^{-1} \bm{C} - \bm{\Psi}_n}_2 = o_P(1)$ as $n,p \to \infty$.\label{item:assumCandL:C}
    \item $\bm{L}$ is a non-random matrix. The $K$ non-zero eigenvalues of $np^{-1}\bm{L}\bm{\Psi}_n \bm{L}^{\T}$ satisfy $0<\gamma_K \leq \cdots \leq \gamma_1 \leq cn$, where for each $r \in [K]$, either $\limsup_{n,p \to \infty}\gamma_r < \infty$ or $\lim_{n,p \to \infty}\gamma_r = \infty$. Further, $\bm{L}_{g \bigcdot}^{\T}\bm{\Psi}_n \bm{L}_{g \bigcdot} \leq c$ for all $g \in [p]$.\label{item:assumCandL:Eigen}
\end{enumerate}
\end{assumption}

The assumptions on $\bm{B}_j$ and $\bm{M}$ imply no one direction dominates the variation in $\bm{E}_{g \bigcdot}$ and that $\bm{v}_g$ is identifiable, respectively, for all $g \in [p]$. With the exception of Section \ref{subsection:InferenceC}, I place no assumptions on $\bm{C}$ besides what are stated in \ref{item:assumCandL:C}, where it can be shown that $\Delta_{n,p}=o_P(1)$ under general assumptions \citep{CorrConf}. I place assumptions on $\gamma_1,\ldots,\gamma_K$ and not on the eigenvalues of $\bm{A}$ to facilitate statements regarding $\Koracle$ in Section~\ref{subsection:Ktheory}, but note that $\Lambda_r(\bm{A}) \asymp \gamma_r$ for all $r \in [K]$. Unlike previous work \cite{BaiLi,BiometrikaConfounding,CATE,DPA,CorrConf,Biometrika_K}, I only require $\gamma_1 \lesssim n$ and do not assume $\gamma_1/\gamma_K$ is bounded as $n,p \to \infty$. This allows one to analyze genetic data, where it is the norm rather than the exception for the data to contain both strong and weak factors \cite{CATE,ChrisANDDan}. This assumption is more than a mere technical condition, since as I show in Section \ref{section:Simulation}, many methods fail in practice when $\gamma_1/\gamma_K$ is too large. The assumption that $\bm{E}_{g \bigcdot}$ is sub-Gaussian in \ref{item:assumErrors:V} is standard among authors who assume the entries of $\bm{E}_{g \bigcdot}$ are independent and identically distributed and consider both strong and weak factors \cite{FanEigen,ChrisANDDan}. The assumed dependence between the rows of $\bm{E}$ and the relationship between $n$ and $p$ will depend on whether or not $\Koracle$ is known, which helps make my results as general as possible, and, as I show in Section~\ref{subsection:AccuracyAlg1}, allows me to extend existing results that assume $b=1,\bm{B}_1=I_n$ and $K$ is known. I lastly place an assumption on the estimates from Algorithm~\ref{algorithm:EstC}.

\begin{assumption}
\label{assumption:FALCO}
Let $\alpha$ be as defined in the initialization of Algorithm \ref{algorithm:EstC}. Then $\alpha \in [c^{-1},1-c^{-1}]$ and the estimators for $\hat{\bar{\bm{v}}}$ from Steps \ref{item:EstC:0} and \ref{item:EstC:ImageC}\ref{item:EstC:V} in Algorithm \ref{algorithm:EstC} are such that
\begin{align*}
    \hat{\bar{\bm{v}}} \in \Theta_* = \Theta \cap \lbrace \bm{x} \in \bm{R}^b : \norm*{\bm{x}}_2 \leq 2bc, \sum\limits_{j=1}^b \bm{x}_j \bm{B}_j - (2c)^{-1} I_n \succ \bm{0} \rbrace.
\end{align*}
\end{assumption}
This technical condition makes the parameter space for $\bar{\bm{v}}$ compact, and is analogous to Assumption D in \cite{BaiLi} and Assumption 2 in \cite{CATE}.

\subsection{Properties of $\Koracle$ and $\Koraclehat$}
\label{subsection:Ktheory}
I first demonstrate the properties of $\Koracle$, as well as its estimator from Algorithm \ref{algorithm:CBCV}, $\Koraclehat$, in Theorem \ref{theorem:Khat} below, where I let $\gamma_0 = \infty$ and $\gamma_{K+1}=0$ for the remainder of Section \ref{section:Theory}.

\begin{theorem}
\label{theorem:Khat}
Suppose Assumptions \ref{assumption:CandL} and \ref{assumption:FALCO} hold such that $\Delta_{n,p} = O_P(d_{n,p})$ for some non-random sequence $d_{n,p} \to 0$ as $n,p \to \infty$. Assume the following hold for $F,\eta$ defined in Algorithm \ref{algorithm:CBCV}:
\begin{enumerate}[label=(\roman*)]
    \item The rows of $\bm{E}$ are independent, $n/p \to 0$ as $n,p \to \infty$ and $F \in [2,c], \eta\in[c^{-1},1-c^{-1}]$.\label{item:Khat:Enp}
    \item There exists an $s \in \{0\}\cup [K]$ such that $\gamma_{s+1} + c^{-1}\leq \delta^2 < \gamma_{s}$.\label{item:Khat:Crand}
\end{enumerate}
Fix any $\epsilon > 0$ and let $a_{n,p}=\max( n^{1/2}p^{-1/2},n^{-1/2},d_{n,p} )$. Then there exists a constant $m_{\epsilon} > 0$ that depends on $\epsilon$, but not $n$ or $p$, such that if $\delta^2 + a_{n,p} m_{\epsilon}\leq \gamma_s$ for all $n,p$ suitably large, $\liminf_{n,p \to \infty}\Prob\lbrace\Koraclehat=\Koracle = s\rbrace \geq 1-\epsilon$.
\end{theorem}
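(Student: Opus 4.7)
The plan is to decompose the argument into (i) identifying $\Koracle = s$ directly from its definition, and (ii) showing the cross-validated estimator $\Koraclehat$ from Algorithm \ref{algorithm:CBCV} selects the same value.

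For part (i), I expand the expected loss in \eqref{equation:OracleK} using $\tilde{\bm{E}} \indep (\bm{C},\bm{E})$ and $\tilde{\bm{E}} \edist \bm{E}$ to obtain the decomposition \eqref{equation:OracleK:expanded}, and call the resulting random objective $G(k)$. Jensen's inequality gives $|\hat{\bm{V}}^{(k)}|^{1/n}\Tr[\bar{\bm{V}}\{\hat{\bm{V}}^{(k)}\}^{-1}] \geq n\delta^2$, with equality iff $\hat{\bm{V}}^{(k)} \propto \bar{\bm{V}}$. Invoking the accuracy of Algorithm \ref{algorithm:EstC} proved in Section \ref{section:CandL}, for $k \geq s$ the REML estimator $\hat{\bm{V}}^{(k)}$ is approximately proportional to $\bar{\bm{V}}$ with $|\hat{\bm{V}}^{(k)}|^{1/n} \asymp \delta^2$, and the weighted approximation error admits the asymptotic balance that each included factor $r$ reduces the squared error by $\asymp p\gamma_r$ while costing $\asymp p\delta^2$ in estimation variance. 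Comparing $G(k)$ with $G(s)$ yields, up to lower-order terms,
\[
G(k) - G(s) \;\gtrsim\; p\min\{\gamma_s - \delta^2,\ \delta^2 - \gamma_{s+1}\},\qquad k \neq s,
\]
which under condition (ii) and the separation $\delta^2 + a_{n,p} m_\epsilon \leq \gamma_s$ is strictly positive and of order $p\, a_{n,p} m_\epsilon$. Hence $\Koracle = s$ with high probability.

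For part (ii), I show that $\sum_f \cv_f(k)$ approximates $G(k)$ uniformly in $k \in \{0,\ldots,K_{\max}\}$. Because the rows of $\bm{E}$ are independent (condition (i)), $\hat{\bm{V}}_{(-f)}$ and $\hat{\bm{C}}$ are independent of the test-fold noise; conditional on the training fold, the classical leave-one-out identity gives
\[
\bar{\bm{Y}}_{f_{\bigcdot i}} - \hat{\bm{L}}_{f,(-i)} \hat{\bar{\bm{C}}}_{i\bigcdot} \;=\; \bigl[\bar{\bm{Y}}_f P_{\hat{\bar{\bm{C}}}}^{\perp}\bigr]_{\bigcdot i} \big/ (1 - h_i),
\]
where $h_i$ is the $i$th leverage score of $\hat{\bar{\bm{C}}}$. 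Because $\bm{Q}$ is uniformly random orthogonal, the rows of $\hat{\bar{\bm{C}}} = \bm{Q}^{\T} \hat{\bm{V}}_{(-f)}^{-1/2}\hat{\bm{C}}$ have leverage scores concentrating around $k/n$ uniformly in $i$, keeping $(1-h_i)^{-2}$ bounded even for $k \asymp n$. Applying a Hanson-Wright concentration to the sub-Gaussian rows of $\bm{E}_f$ then yields $\sum_f \cv_f(k) = G(k) + O_P(p\, a_{n,p})$ uniformly in $k$, and combining this with the lower bound from part (i) gives $\Koraclehat = s$ with probability at least $1 - \epsilon$ once $m_\epsilon$ is taken large enough.

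The main obstacle is the uniform control of $\cv_f(k)$ and $G(k)$ for $k$ substantially larger than $s$, where Algorithm \ref{algorithm:EstC} fits a mis-specified model with many spurious factors: in this regime $\hat{\bm{V}}^{(k)}$ no longer converges to $\bar{\bm{V}}$, and while Assumption \ref{assumption:FALCO} confines $\hat{\bm{V}}^{(k)}$ to a compact set, one must still verify that the penalty $|\hat{\bm{V}}^{(k)}|^{1/n}$ together with the monotonicity of the weighted residual sum of squares in $k$ makes both $G(k)$ and its empirical analogue strictly increase past $k = s$. A second delicate point is the gap at $k = s - 1$: it is only of order $p(\gamma_s - \delta^2) \geq p\, a_{n,p} m_\epsilon$, so the Hanson-Wright constants and the contribution of the random rotation $\bm{Q}$ must be tracked carefully enough that the concentration error remains smaller than this gap for an appropriately chosen $m_\epsilon$.
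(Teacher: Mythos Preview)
Your plan uses the same core tools as the paper: the leave-one-out residual identity, leverage-score concentration for the Haar-rotated design (Lemma~\ref{lemma:GaussLeverageScores}), and Hanson--Wright-type bounds on the resulting quadratic forms. The signal/noise balance $G(k)-G(s)\gtrsim p\min\{\gamma_s-\delta^2,\,\delta^2-\gamma_{s+1}\}$ you identify is exactly what drives the result.

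Two points of divergence are worth flagging. First, the paper does not compare the CV loss to $G(k)$; it analyzes the scaled fold loss $g(k)=(\delta_f^2 p_f)^{-1}\cv_f(k)$ directly via a three-term decomposition $g=g_1+g_2+g_3$ (signal approximation, noise trace, cross term; Lemmas~\ref{lemma:supp:CBCV:g1}--\ref{lemma:supp:CBCV:g3}), obtains the shape $g(k)-n\approx k+\sum_{r>k}\delta^{-2}\gamma_r$ for moderate $k$, and only at the end remarks that the identical computation yields $\Koracle=s$. Your route---analyze $G$, then show $\cv_f$ tracks it---is viable but not shorter: since $G(k)$ uses full-data estimators while $\cv_f(k)$ uses training-fold ones, establishing the uniform approximation requires essentially the same per-$k$ analysis done twice.

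Second, your diagnosis of the large-$k$ obstacle is off. Lemma~\ref{lemma:supp:VLargek} shows that $\hat{\bm V}^{(k)}$ \emph{does} converge to $\bar{\bm V}$ uniformly over $k\in\{c_{\max}+1,\ldots,K_{\max}\}$, at rate $O_P(n^{1/2}p^{-1/2}+n^{-1})$. The actual difficulty for $k\asymp n$ is (a) to extract a deterministic lower bound $g(k)-n\geq\sigma k\{1+o_P(1)\}$ from the expectation of the leave-one-out noise term---the paper obtains this from the trace identity that forces the relevant hat-matrix quadratic form to have trace $\sum_i\hat{\bar h}_i=k$---and (b) to make the concentration uniform over $O(n)$ values of $k$ via exponential tails and a union bound. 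Your determinant-penalty-plus-monotonicity heuristic does not by itself deliver (a).
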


\begin{remark}
\label{remark:delta2}
The sequence $a_{n,p} \to 0$ provides insight into how much larger $\gamma_r$ must be than the noise level $\delta^2 \asymp 1$ to ensure both the Oracle and Algorithm \ref{algorithm:CBCV} select the $r$th factor. If $\Delta_{n,p}= o_P(1)$ and $\delta^2+ c^{-1} \leq \gamma_s$, then $\lim_{n,p \to \infty}\Prob\lbrace\Koraclehat=\Koracle = s\rbrace = 1$.  
\end{remark}


Theorem \ref{theorem:Khat} shows Algorithm \ref{algorithm:CBCV} tends to select the same number of factors as the Oracle, where both only include the factor $r \in [K]$ if its signal strength $\gamma_r$ is greater than the noise level $\delta^2$. This is congruent with the goals of the Oracle estimator established in Section \ref{subsection:OracleK}, which is designed to only return factors whose signal strengths are large enough to outweigh their estimation uncertainty. This is contrary to parallel and analysis \citep{PA_Dobriban} and estimators proposed in \citet{DPA}, which, besides only being applicable when $b=1$ and $\bm{B}_1=I_n$, ignore a factor's estimation uncertainty when selecting $K$. This could be why the latter's estimates for $K$ were exceedingly large in their data application.\par
\indent The condition that $n/p \to 0$ in \ref{item:Khat:Enp} is appropriate in genetic and epigenetic data, where $n \lesssim 10^2$ and $10^4\lesssim p \lesssim 10^6$. Independence between the rows of $\bm{E}$ is a standard assumption among methods with $b=1$, $\bm{B}_1=I_n$ and $\lambda_K \lesssim 1$ \cite{bcv,BCV_svd,PA_Dobriban,DPA}, and more generally, when $\gamma_K = o(n)$ \cite{Onatski_Corr}.\par 
\indent Theorem~\ref{theorem:Khat} is, as far as I am aware, the first result to establish the consistency of an estimate for the number of latent factors in dependent data with $\gamma_1 \lesssim n$ and $\gamma_K \asymp 1$. This is more than a mere technical triumph. For example, several popular estimators, like parallel analysis \cite{PA_Dobriban}, suffer from the problem of eigenvalue shadowing, in which factors with large eigenvalues prohibit the recovery of factors with moderate or small eigenvalues. Other methods, which do allow correlation between the entries of $\bm{E}$ \citep{BaiNg,Reviewer2_TimeOrdered,Reviewer2_ShrinkageFactors,AhnHorenstein,RandomJASA}, are so dependent on the assumption that $\gamma_K \asymp n$ that they too consistently fail to recover factors with moderate to small eigenvalues.

\subsection{The accuracy of the estimators from Algorithm \ref{algorithm:EstC}}
\label{subsection:AccuracyAlg1}
Here I give theoretical results regarding the accuracy of the estimators from Algorithm \ref{algorithm:EstC} assuming $\Koracle$ is known, along with theory that facilitates interpreting the latent factors $\bm{C}$. For remainder of Section \ref{section:Theory}, I let $K_{\max}$ and the iteration number $k \in \{0\}\cup [K_{\max}]$ be as defined in Algorithm \ref{algorithm:EstC}, and let $\lamoracle_r = \Lambda_r[ p^{-1}\Loracle \lbrace \Coracle \rbrace^{\T}\Coracle \lbrace \Loracle \rbrace^{\T} ]$ for all $r \in [\Koracle]$. I first state an assumption that I will utilize for the remainder of Section \ref{section:Theory}.

\begin{assumption}
\label{assumption:DependenceE}
\begin{enumerate}[label=(\alph*)]
    \item $p \geq c^{-1}n$, $\Koracle \geq 1$ is known,  $\gamma_{\Koracle}, (\gamma_{\Koracle}/\gamma_{\Koracle+1}-1) \geq c^{-1}$, $\gamma_{\Koracle+1} \leq c$ and $n/(p\gamma_{\Koracle}) \to 0$ as $n,p \to \infty$.\label{item:DependenceE:Gamma}
    \item One of the following holds:
    \begin{enumerate}[label=(\roman*)]
        \item There exists a non-random $\bm{A} \in \mathbb{R}^{pn \times pn}$ with $\norm*{\bm{A}}_2 \leq c$ such that $\vecM(\bm{E})\edist\bm{A}\vecM(\tilde{\bm{E}})$, where the entries of $\tilde{\bm{E}}$ are independent with $\E\lbrace \exp(t\tilde{\bm{E}}_{gi}) \rbrace \leq \exp(c t^2)$ for all $t \in \mathbb{R}$, $g \in [p]$ and $i \in [n]$.\label{item:assumErrors:Corr:U}
        \item The rows of $\bm{E}$ can be partitioned into sets with at most $c$ elements, such that $\bm{E}_{g \bigcdot}$ and $\bm{E}_{h \bigcdot}$ are independent if rows $g$ and $h$ are in different sets.\label{item:assumErrors:Corr:Networks}
    \end{enumerate}\label{item:DependenceE:DependenceE}
\end{enumerate}
\end{assumption}

Assumption \ref{assumption:DependenceE} is more general than the assumptions used to prove Theorem \ref{theorem:Khat}, where the assumptions on $\gamma_{\Koracle}$ in \ref{item:DependenceE:Gamma} mirror those placed on $\gamma_s$ in \ref{item:Khat:Crand} of Theorem \ref{theorem:Khat}. The typical dependence assumption $\bm{E}=\bm{R}_1\tilde{\bm{E}}\bm{R}_2^{\T}$ for $\bm{R}_1 \in \mathbb{R}^{p \times p}$ and $\bm{R}_2 \in \mathbb{R}^{n \times n}$ corresponds to $\bm{A} = \bm{R}_2 \otimes \bm{R}_1$ \cite{Limma,SimpleCorr}. Condition \ref{item:DependenceE:DependenceE}\ref{item:assumErrors:Corr:U} is more general than that considered in \citet{FanEigen}, which besides assuming $b=1,\bm{B}_1=I_n$ and $K$ was known, required $\bm{Y}=\bm{U}\bm{D}\tilde{\bm{E}}$ for some unitary matrix $\bm{U} \in \mathbb{R}^{p \times p}$ and diagonal matrix $\bm{D} \in \mathbb{R}^{p \times p}$. Condition \ref{item:DependenceE:DependenceE}\ref{item:assumErrors:Corr:Networks} assumes genomic units can be partitioned into non-overlapping networks, and is common in DNA methylation data \citep{DNAMethyRegions}.\par 
\indent I first show that the bias-corrected estimates $\hat{\lambda}_r$, defined in Algorithm \ref{algorithm:EstC}, accurately estimate $\lamoracle_r$.

\begin{theorem}
\label{theorem:Lambda}
Suppose Assumptions \ref{assumption:CandL}, \ref{assumption:FALCO} and \ref{assumption:DependenceE} hold and $K_{\max} \geq \Koracle$. Then for $k = \Koracle$,
\begin{align}
\label{equation:LambdaRate}
    \hat{\lambda}_r/\lamoracle_r = 1 + O_P\lbrace (\gamma_r p)^{-1/2} + n/(\gamma_r p) + (\gamma_r n)^{-1} \rbrace, \quad r \in [\Koracle].
\end{align}
\end{theorem}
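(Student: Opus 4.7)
My plan is to prove Theorem \ref{theorem:Lambda} by exhibiting $\hat{\lambda}_r$ as the $r$th eigenvalue of
\begin{align*}
\hat{\bm{M}} = n^{-1}\tilde{\bm{C}}\{ np^{-1}\tilde{\bm{L}}^{\T}\tilde{\bm{L}} - (n^{-1}\tilde{\bm{C}}^{\T}\hat{\bar{\bm{V}}}^{-1}\tilde{\bm{C}})^{-1} \}\tilde{\bm{C}}^{\T},
\end{align*}
and $\lamoracle_r$ as the $r$th eigenvalue of $\bm{M}^{(o)} = p^{-1}\Loracle\{\Coracle\}^{\T}\Coracle\{\Loracle\}^{\T}$, so that Weyl's inequality reduces the theorem to bounding $\|\hat{\bm{M}} - \bm{M}^{(o)}\|_2$ and dividing by $\lamoracle_r \asymp \gamma_r$. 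To expand $\hat{\bm{M}}$, I substitute $\bm{Y} = \bm{L}\bm{C}^{\T} + \bm{E}$ into the formula for $\tilde{\bm{L}}$ in Step \ref{item:EstC:Ctilde} of Algorithm \ref{algorithm:EstC} to obtain $\tilde{\bm{L}} = \bm{L}\bm{C}^{\T}\bm{H} + \bm{E}\bm{H}$ with $\bm{H} := \hat{\bar{\bm{V}}}^{-1}\tilde{\bm{C}}(\tilde{\bm{C}}^{\T}\hat{\bar{\bm{V}}}^{-1}\tilde{\bm{C}})^{-1}$, so that $np^{-1}\tilde{\bm{L}}^{\T}\tilde{\bm{L}}$ decomposes into a signal-signal term $np^{-1}\bm{H}^{\T}\bm{C}\bm{L}^{\T}\bm{L}\bm{C}^{\T}\bm{H}$, two equal cross terms of the form $np^{-1}\bm{H}^{\T}\bm{C}\bm{L}^{\T}\bm{E}\bm{H}$, and a noise-noise term $np^{-1}\bm{H}^{\T}\bm{E}^{\T}\bm{E}\bm{H}$. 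The key observation driving Algorithm \ref{algorithm:EstC} is that after decoupling $\bm{H}$ from $\bm{E}$, $\E(np^{-1}\bm{H}^{\T}\bm{E}^{\T}\bm{E}\bm{H} \mid \bm{C}) \approx n\bm{H}^{\T}\bar{\bm{V}}\bm{H}$, which, whenever $\hat{\bar{\bm{V}}}$ is close to $\bar{\bm{V}}$, collapses to $(n^{-1}\tilde{\bm{C}}^{\T}\hat{\bar{\bm{V}}}^{-1}\tilde{\bm{C}})^{-1}$; the explicit bias correction in $\hat{\bm{M}}$ is designed to cancel precisely this leading-order conditional mean.

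Once the leading noise contribution is cancelled, I would write $\hat{\bm{M}} - \bm{M}^{(o)}$ as a sum of four remainders and control each separately. The \emph{subspace bias} $p^{-1}\tilde{\bm{C}}\bm{H}^{\T}\bm{C}\bm{L}^{\T}\bm{L}\bm{C}^{\T}\bm{H}\tilde{\bm{C}}^{\T} - \bm{M}^{(o)}$ is handled by combining Proposition \ref{proposition:ImageC}, the consistency of $\hat{\bar{\bm{v}}}$ from the CorrConf warm-start argument, and a Davis-Kahan-style bound on $\|\tilde{\bm{C}} - \Coracle \bm{R}\|_2$ for some orthogonal $\bm{R}$; since $\tilde{\bm{C}}\bm{H}^{\T}$ is the oblique projection onto $\im(\tilde{\bm{C}})$ along the $\hat{\bar{\bm{V}}}^{-1}$-inner product, this contributes the $(\gamma_r n)^{-1}$ relative rate. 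The \emph{cross term} $2p^{-1}\tilde{\bm{C}}\bm{H}^{\T}\bm{C}\bm{L}^{\T}\bm{E}\bm{H}\tilde{\bm{C}}^{\T}$ has mean zero after suitable decoupling, and a sub-Gaussian quadratic-form bound using $\V(\bm{L}^{\T}\bm{E}) \lesssim p\gamma_r$ in the directions picking out $\lamoracle_r$ yields the central-limit rate $(\gamma_r p)^{-1/2}$. The \emph{noise residual} $p^{-1}\tilde{\bm{C}}\bm{H}^{\T}(\bm{E}^{\T}\bm{E} - p\bar{\bm{V}})\bm{H}\tilde{\bm{C}}^{\T}$ is controlled via $\|\tilde{\bm{C}}\bm{H}^{\T}\|_2 = O(1)$ together with either the random-matrix concentration bound $\|p^{-1}\bm{E}^{\T}\bm{E} - \bar{\bm{V}}\|_2 = O_P(\sqrt{n/p})$ under Assumption \ref{assumption:DependenceE}\ref{item:DependenceE:DependenceE}\ref{item:assumErrors:Corr:U}, or a block-sum analog under \ref{item:assumErrors:Corr:Networks}, producing the $n/(\gamma_r p)$ relative rate. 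Finally, the \emph{plug-in} error from replacing $\bar{\bm{V}}$ by $\hat{\bar{\bm{V}}}$ is absorbed into these bounds using Assumption \ref{assumption:FALCO}'s compactness of the parameter space and the Lipschitz dependence of $\bm{H}$ on $\hat{\bar{\bm{v}}}$.

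The principal obstacle is the statistical dependence between the estimated basis $\tilde{\bm{C}}$ and the noise $\bm{E}$, since $\tilde{\bm{C}}$ is built from $\bm{Y}\hat{\bar{\bm{V}}}^{-1/2}$ via Proposition \ref{proposition:ImageC}. I would decouple by writing $\tilde{\bm{C}} = \Coracle \bm{R} + \bm{\Delta}$, with $\|\bm{\Delta}\|_2$ governed by the subspace-recovery rate, replace $\tilde{\bm{C}}$ by $\Coracle \bm{R}$ in the cross and noise-residual terms so as to evaluate their conditional moments cleanly, and then absorb the $\bm{\Delta}$-induced remainders into the already-handled subspace bias. A secondary difficulty is that Theorem \ref{theorem:Lambda} must cover weak factors $\gamma_r \asymp 1$ and strong factors $\gamma_r \asymp n$ simultaneously, so that each of the three rates in \eqref{equation:LambdaRate} is needed and none can be absorbed into the others; keeping them tight in the weak-signal regime requires working with signal-adapted directions (projecting onto the $r$-dimensional leading eigenspace of $\bm{M}^{(o)}$) rather than crude spectral-norm bounds alone. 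When $\Koracle < K$, the factors with $\gamma_{\Koracle+1} \leq c$ dropped from $\Loracle\{\Coracle\}^{\T}$ act as bounded additional noise; their contribution is $O(1)$ in spectral norm and is dominated by the $(\gamma_r n)^{-1}$ subspace bias already accounted for.
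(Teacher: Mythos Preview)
Your plan has a genuine gap: the opening reduction via Weyl's inequality, $|\hat{\lambda}_r - \lamoracle_r| \leq \|\hat{\bm{M}} - \bm{M}^{(o)}\|_2$, cannot deliver the eigenvalue-specific relative rate in \eqref{equation:LambdaRate} when $\gamma_1/\gamma_{\Koracle}$ is unbounded, which is exactly the regime the theorem is designed to handle. Concretely, in a signal-diagonal basis the cross term you identify has $(r,s)$ entry of order $p^{-1/2}(\lambda_r^{1/2}+\lambda_s^{1/2})$ (see Lemma~\ref{lemma:supp:UpperBlock}), so its spectral norm is $O_P(\lambda_1^{1/2}p^{-1/2})$, and dividing by $\lamoracle_r\asymp\gamma_r$ yields $(\gamma_1/\gamma_r)^{1/2}(\gamma_r p)^{-1/2}$, not $(\gamma_r p)^{-1/2}$. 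Your acknowledgement that one must ``work with signal-adapted directions'' is correct in spirit but is left undeveloped, and the missing ingredient is nontrivial: the paper's proof (Lemma~\ref{lemma:supp:TrueEigs}) achieves the $r$-specific rate by (i) passing to the canonical parametrization $\tilde{\bm{L}},\tilde{\bm{C}}$ of \eqref{equation:supp:Params} in which the signal is diagonal, (ii) performing a block-by-block eigenvector analysis of the rotated sample matrix (Lemmas~\ref{lemma:supp:UpperBlock}--\ref{lemma:supp:vandzhat}), and (iii) crucially invoking Lemma~\ref{lemma:supp:U}, which shows that the eigenvectors $\bm{U}$ of $\bm{\Gamma}_1^{1/2}\bm{F}_1\bm{\Gamma}_1^{1/2}$ satisfy $|\bm{U}_{rs}|\lesssim (\lambda_{r\vee s}/\lambda_{r\wedge s})^{1/2}$. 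This entrywise structure, together with a rank-one ``peeling'' expansion of the bias-corrected matrix $\hat{\bm{G}}=\bm{U}^{\T}(\hat{\bm{v}}_1\hat{\bm{\Gamma}}_1\hat{\bm{v}}_1^{\T})^{1/2}\hat{\tilde{\bm{F}}}_1(\hat{\bm{v}}_1\hat{\bm{\Gamma}}_1\hat{\bm{v}}_1^{\T})^{1/2}\bm{U}$, is what converts the entrywise bounds $O_P\{\phi_1(\lambda_r^{1/2}+\lambda_s^{1/2})+\phi_3(\lambda_r^{1/2}\lambda_s^{-1/2}+\lambda_s^{1/2}\lambda_r^{-1/2})\}$ into the correct relative rate for each $r$ without any eigengap assumption between consecutive $\lamoracle_r$.

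There is also an arithmetic inconsistency in your treatment of the noise residual: $\|p^{-1}\bm{E}^{\T}\bm{E}-\bar{\bm{V}}\|_2=O_P(n^{1/2}p^{-1/2})$ divided by $\gamma_r$ gives $n^{1/2}p^{-1/2}\gamma_r^{-1}$, which is a factor $(p/n)^{1/2}$ worse than the required $n/(\gamma_r p)$; the paper obtains $n/(\gamma_r p)$ not from this crude operator-norm bound but from the off-diagonal block interaction in \eqref{equation:supp:muhat}, where $\hat{\mu}_s=\mu_s+O_P(np^{-1}+\phi_2^2\lambda_s^{-1})$. Finally, your attribution of $(\gamma_r n)^{-1}$ to the ``subspace bias'' is off: in the paper this rate arises from the plug-in error $\phi_2=\|\hat{\bar{\bm{V}}}-\bar{\bm{V}}\|_2=O_P(n^{-1})$ (Corollary~\ref{corollary:supp:VAlgorithm}) propagated through the entrywise analysis, not from subspace recovery per se.
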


\begin{remark}
\label{remark:FanRate}
As far as I am aware, with the exception of the $(\gamma_rn)^{-1}$ term, the rate of convergence in Theorem \ref{theorem:Lambda} is as fast as the best known rate for PCA when $\bm{Y}_{\bigcdot 1},\ldots,\bm{Y}_{\bigcdot n}$ are independent and identically distributed \citep{ChrisANDDan}.
\end{remark}

\begin{remark}
\label{remark:NaiveLambda}
When $k=\Koracle$, the estimator $\hat{\lambda}_r^{(\text{naive})} = \Lambda_r(p^{-1}\tilde{\bm{C}}\tilde{\bm{L}}^{\T}\tilde{\bm{L}}\tilde{\bm{C}}^{\T})$ that ignores the bias term $(n^{-1}\tilde{\bm{C}}^{\T}\hat{\bar{\bm{V}}}^{-1}\tilde{\bm{C}})^{-1}$ in Step \ref{item:EstC:eigen} of Algorithm \ref{algorithm:EstC} is inflated and behaves as
\begin{align*}
    \hat{\lambda}_r^{(\text{naive})}/\lamoracle_r \geq 1 + \tilde{c}/\lamoracle_r + O_P\lbrace (\gamma_r p)^{-1/2} + n/(\gamma_r p) + (\gamma_r n)^{-1} \rbrace, \quad r \in [\Koracle]
\end{align*}
for some constant $\tilde{c} > 0$. If $b=1,\bm{B}_1=I_n$, the inequality becomes an equality with $\tilde{c}=\bar{v}_1$ \citep{ChrisANDDan}.
\end{remark}

Theorem \ref{theorem:Lambda} and Remark \ref{remark:NaiveLambda} show that my bias-corrected estimator for $\lamoracle_r$ corrects eigenvalue inflation. This is relevant whenever $p >> n$ and $\lamoracle_r$ is moderate or small, which is typically the case in genetic and epigenetic data. I next demonstrate the properties of $\hat{\bm{L}}$. 


\begin{theorem}
\label{theorem:L}
Suppose the assumptions of Theorem \ref{theorem:Lambda} hold, fix any $\epsilon > 0$, let $r \in [\Koracle]$ and let $F_r^{(\epsilon)}$ be the event $\{\lamoracle_{r-1}/\lamoracle_r,\lamoracle_{r}/\lamoracle_{r+1} \geq 1+\epsilon\}$. Then for $k=\Koracle$, $a \in \{-1,1\}$ and if $\{\log(p)\}^2/n \to 0$,
\begin{align}
\label{equation:Linfty}
    \norm*{\hat{\bm{L}}_{\bigcdot r} - a\Loracle_{\bigcdot r}}_{\infty} = O_P\{ \log(p)n^{-1/2} + n^{1/2}(\gamma_{\Koracle}p)^{-1/2} \} \text{ on $F_r^{(\epsilon)}$}.
\end{align}
Further, if $\Koracle=K$, $n^{3/2}/\{p\gamma_{\Koracle}\} \to 0$ and the technical conditions in Section \ref{subsection:supp:TechL} in the Supplement hold,
\begin{align}
\label{equation:Lgls}
    [\lbrace(\hat{\bm{C}}^{\T}\hat{\bm{V}}_g^{-1}\hat{\bm{C}})^{-1}\rbrace_{rr}]^{-1/2}\lbrace \hat{\bm{L}}_{g r}^{(GLS)} - a \Loracle_{g r} \rbrace \edist Z + o_P(1), \quad g \in [p]
\end{align}
as $n,p \to \infty$, where $\hat{\bm{V}}_g$ is the restricted maximum likelihood estimate for $\bm{V}_g$ described in Remark \ref{remark:Alg1:Vg}, $\hat{\bm{L}}_{g \bigcdot}^{(GLS)}$ is the corresponding generalized least squares estimate for $\Loracle_{g \bigcdot}$ using the design matrix $\hat{\bm{C}}$ and $Z \sim N(0,1)$.
\end{theorem}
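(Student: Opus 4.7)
\textbf{Approach and proof of the $\ell_\infty$ bound \eqref{equation:Linfty}.} The plan is to expand $\hat{\bm{L}}_{\bigcdot r}$ using Step \ref{item:Estc:LandBasis}, namely $\hat{\bm{L}}=\tilde{\bm{L}}(n^{-1}\tilde{\bm{C}}^T\tilde{\bm{C}})^{1/2}\tilde{\bm{U}}$ with $\tilde{\bm{L}}=\bm{Y}\hat{\bar{\bm{V}}}^{-1}\tilde{\bm{C}}(\tilde{\bm{C}}^T\hat{\bar{\bm{V}}}^{-1}\tilde{\bm{C}})^{-1}$, and to substitute $\bm{Y}=\Loracle\{\Coracle\}^T+\bm{E}$ to obtain an identity of the form
\[
\hat{\bm{L}}_{\bigcdot r}=\Loracle\bm{\alpha}_r+\bm{E}\bm{w}_r,
\]
where $\bm{\alpha}_r\in\mathbb{R}^{\Koracle}$ and $\bm{w}_r\in\mathbb{R}^n$ are measurable in $(\hat{\bar{\bm{V}}},\tilde{\bm{C}},\tilde{\bm{U}})$. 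By Proposition \ref{proposition:ImageC}, $\im(\hat{\bm{C}})$ is the leading right singular subspace of $\bm{Y}\hat{\bar{\bm{V}}}^{-1/2}$; combining this with $\|\hat{\bar{\bm{V}}}-\bar{\bm{V}}\|_2=O_P((n/p)^{1/2})$ (a byproduct of the perturbation analysis used in Theorem \ref{theorem:Lambda}) and a Davis--Kahan argument on the $\bar{\bm{V}}^{-1/2}$-whitened problem yields $\|P_{\hat{C}}-P_{a\Coracle}\|_2=O_P((n/(\gamma_{\Koracle}p))^{1/2})$ on $F_r^{(\epsilon)}$, and hence $\|\bm{\alpha}_r-a\bm{e}_r\|_2=O_P((n/(\gamma_{\Koracle}p))^{1/2})$ for the $r$th standard basis vector $\bm{e}_r$. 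The row-wise bound $\|\Loracle_{g\bigcdot}\|_2\leq c$ implied by Assumption \ref{assumption:CandL}\ref{item:assumCandL:Eigen} then produces the $n^{1/2}(\gamma_{\Koracle}p)^{-1/2}$ term of \eqref{equation:Linfty} entrywise. For $(\bm{E}\bm{w}_r)_g=\bm{w}_r^T\bm{E}_{g\bigcdot}$, conditioning on $(\hat{\bar{\bm{V}}},\hat{\bm{C}})$ gives $\|\bm{w}_r\|_2=O_P(n^{-1/2})$, so the sub-Gaussian tail in Assumption \ref{assumption:CandL}\ref{item:assumErrors:V} together with a union bound over $g\in[p]$ delivers the $\log(p)n^{-1/2}$ term, with the hypothesis $\{\log(p)\}^2/n\to 0$ dominating the remaining perturbation contributions.

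\textbf{Proof of the GLS CLT \eqref{equation:Lgls}.} Linearity of the GLS map gives
\[
\hat{\bm{L}}_{g\bigcdot}^{(GLS)}-a\Loracle_{g\bigcdot}=(\hat{\bm{C}}^T\hat{\bm{V}}_g^{-1}\hat{\bm{C}})^{-1}\hat{\bm{C}}^T\hat{\bm{V}}_g^{-1}(\Coracle-a\hat{\bm{C}})\Loracle_{g\bigcdot}+(\hat{\bm{C}}^T\hat{\bm{V}}_g^{-1}\hat{\bm{C}})^{-1}\hat{\bm{C}}^T\hat{\bm{V}}_g^{-1}\bm{E}_{g\bigcdot}^T.
\]
For the stochastic term, after replacing $(\hat{\bm{C}},\hat{\bm{V}}_g)$ with $(a\Coracle,\bm{V}_g)$ at negligible cost (using the REML consistency promised by the technical conditions of Section \ref{subsection:supp:TechL}), it is a linear combination of the $n$ sub-Gaussian entries of $\bm{E}_{g\bigcdot}$ whose conditional variance matches the claimed normalizer $[(\hat{\bm{C}}^T\hat{\bm{V}}_g^{-1}\hat{\bm{C}})^{-1}]_{rr}$ up to $o_P(1)$, so a Lindeberg CLT furnishes asymptotic standard normality. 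For the bias term, the hypothesis $n^{3/2}/(p\gamma_{\Koracle})\to 0$ is tight only after invoking a refined inner-product bound $\|\hat{\bm{C}}^T\hat{\bm{V}}_g^{-1}(\hat{\bm{C}}-a\Coracle)\|_F=O_P(n^2/(p\gamma_{\Koracle}))$; a naive product bound via $\|\hat{\bm{C}}\|_2\leq \sqrt{n}$ and $\|\hat{\bm{C}}-a\Coracle\|_F=O_P(n/\sqrt{p\gamma_{\Koracle}})$ would only yield $O_P(n^{3/2}/\sqrt{p\gamma_{\Koracle}})$ and would require the strictly stronger $n^2/(p\gamma_{\Koracle})\to 0$.

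\textbf{Main obstacle.} The crux of the proof is the refined inner-product bound just described, which reflects the fact that the $\hat{\bm{C}}$-parallel component of the PCA perturbation is second-order in the spectral Davis--Kahan rate---the same phenomenon that underpins the bias correction in Theorem \ref{theorem:Lambda}. Establishing it requires going beyond spectral Davis--Kahan and exploiting the dependence structure in Assumption \ref{assumption:DependenceE}\ref{item:DependenceE:DependenceE}, via either the Kronecker-type representation in \ref{item:assumErrors:Corr:U} or the network partition in \ref{item:assumErrors:Corr:Networks}, to decouple the subspace estimate from the noise when forming inner products. A secondary but nontrivial step is upgrading the spectral subspace bound to an $\ell_\infty$ statement about $\hat{\bm{L}}_{\bigcdot r}-a\Loracle_{\bigcdot r}$: rather than multiplying by $\|\Loracle\|_{2,\infty}$ at the spectral level, one must directly control row-wise linear functionals of $\bm{E}$, which is precisely where the hypothesis $\{\log(p)\}^2/n\to 0$ and the sub-Gaussian tails of Assumption \ref{assumption:CandL}\ref{item:assumErrors:V} enter. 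Handling the REML-estimated $\hat{\bm{V}}_g$ (as opposed to the population $\bm{V}_g$) in the CLT is the remaining piece requiring the unspecified technical conditions of Section \ref{subsection:supp:TechL}.
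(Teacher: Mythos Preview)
Your outline captures the right decomposition and correctly identifies the refined inner-product bound as the crux of \eqref{equation:Lgls}; that matches the paper's route (the paper establishes it via corollaries that track $\hat{\tilde{\bm{C}}}^{\T}\hat{\bar{\bm{V}}}^{1/2}\bm{M}\hat{\bar{\bm{V}}}^{1/2}\tilde{\bm{C}}$ and $\bm{X}^{\T}\hat{\bar{\bm{V}}}^{1/2}\hat{\tilde{\bm{C}}}$ to second order). There are, however, two genuine gaps in your argument for \eqref{equation:Linfty}.

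\textbf{The conditioning step for the noise term is invalid.} You propose to bound $\max_g|\bm{w}_r^{\T}\bm{E}_{g\bigcdot}|$ by conditioning on $(\hat{\bar{\bm{V}}},\hat{\bm{C}})$, noting $\|\bm{w}_r\|_2=O_P(n^{-1/2})$, and then applying the sub-Gaussian tail of Assumption~\ref{assumption:CandL}\ref{item:assumErrors:V} with a union bound over $g$. But $\bm{w}_r$ is a function of $\hat{\bm{C}}$ and hence of $\bm{E}$, so $\bm{E}_{g\bigcdot}$ is not sub-Gaussian conditional on $\bm{w}_r$. The paper avoids this by splitting the noise contribution into (i) a piece of the form $n^{-1/2}\bm{C}^{\T}\bar{\bm{V}}^{-1}\bm{E}_{g\bigcdot}$, which \emph{is} conditionally sub-Gaussian because $\bm{C}\indep\bm{E}$ and yields the $\log(p)n^{-1/2}$ rate via a clean union bound, and (ii) remainders bounded by small random factors times $\sup_g\|\bm{E}_{g\bigcdot}\|_2$. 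Controlling the latter uniformly---via a Hanson--Wright-type bound giving $\sup_g\|\bm{E}_{g\bigcdot}\|_2^2\leq cn\{1+O_P(\log(p)n^{-1/2})\}$---is precisely where the hypothesis $\{\log(p)\}^2/n\to 0$ enters; your argument as written would only need $\log(p)/n\to 0$.

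\textbf{Davis--Kahan on the subspace does not deliver $\|\bm{\alpha}_r-a\bm{e}_r\|_2$.} A bound on $\|P_{\hat{C}}-P_{\Coracle}\|_2$ controls only $\im(\hat{\bm{C}})$, whereas $\bm{\alpha}_r$ depends on the particular basis selected by the bias-corrected rotation $\tilde{\bm{U}}$ in Step~\ref{item:Estc:LandBasis}. Even if $\im(\hat{\bm{C}})=\im(\Coracle)$ exactly, there is no a~priori reason the $r$th column of $\hat{\bm{C}}$ aligns with $\Coracle_{\bigcdot r}$; the eigengap $F_r^{(\epsilon)}$ must be used \emph{inside} an analysis of $\tilde{\bm{U}}$, not as an add-on to a subspace bound. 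The paper carries this out by expressing both $\hat{\bm{L}}_{\bigcdot r}$ and $\Loracle_{\bigcdot r}$ through the eigenvector expansions of its preliminary lemmas and proving the column-level alignment
\[
\bigl\|\{\lamoracle_r\}^{1/2}\bm{\Gamma}_1^{-1/2}\bm{U}_{\bigcdot r}-\{\lamhatoracle_r\}^{1/2}\hat{\bm{v}}_1\hat{\bm{\Gamma}}_1^{-1/2}\hat{\bm{U}}_{\bigcdot r}\bigr\|_2=O_P\bigl\{p^{-1/2}\gamma_{\Koracle}^{-1/2}+(n/p+n^{-1})\gamma_{\Koracle}^{-1}\bigr\},
\]
which is also the fact that is reused verbatim in the proof of \eqref{equation:Lgls}. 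A secondary omission: when $\Koracle<K$, the substitution $\bm{Y}=\Loracle\{\Coracle\}^{\T}+\bm{E}$ is not exact, and the residual $\bm{L}\bm{C}^{\T}-\Loracle\{\Coracle\}^{\T}$ (handled in the paper via the $\tilde{\bm{L}}_2\hat{\bm{v}}_2$ term) must be controlled separately.

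\textbf{A minor correction.} The paper obtains $\|\hat{\bar{\bm{V}}}-\bar{\bm{V}}\|_2=O_P(n^{-1})$ from the warm-start REML analysis (stated in \eqref{equation:Subspace}), not $O_P((n/p)^{1/2})$; this is the $\phi_2$ appearing throughout the supplementary error bounds.
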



\begin{remark}
\label{remark:LInnerProduct}
I show in Section \ref{subsection:supp:techF} of the Supplement that $\Prob\{F_r^{(\epsilon)}\} \to 1$ as $n,p \to \infty$ under standard eigengap assumptions. The conditions that $\{\log(p)\}^2/n \to 0$ and $n^{3/2}/\{p\gamma_{\Koracle}\} \to 0$ are standard in genetic and epigenetic data \cite{CATE,ChrisANDDan}.
\end{remark}

\begin{remark}
\label{remark:PCA_CLT}
I show in Section \ref{section:supp:PopCov} of the Supplement that Theorem \ref{theorem:Lambda} can be leveraged to derive a central limit theorem for the eigenvalues of $\V(\bm{Y}_{\bigcdot i}) \in \mathbb{R}^{p \times p}$ if $a_{n,p}=n^{3/2}/(p\gamma_r)\to 0$, and that \eqref{equation:Linfty} holds with $\Loracle_{\bigcdot r}$ replaced with a scalar multiple of the $r$th eigenvector of $\V(\bm{Y}_{\bigcdot i})$. This significantly extends the eigenvalue and eigenvector convergence results in \citet{FanEigen}, which required $a_{n,p}(p^{1/2}n^{-1/2})\to 0$ and $\bm{U}\bm{Y}$ have independent sub-Gaussian entries for some unitary matrix $\bm{U}\in\mathbb{R}^{p \times p}$. As far as I am aware, this is the first result proving the asymptotic normality of eigenvalue estimates in high dimensional data with dependent observations.
\end{remark}

Both \eqref{equation:Linfty} and \eqref{equation:Lgls} are quite useful in practice and facilitate objective \ref{item:goal:Relation} from Section~\ref{section:Introduction}. The former implies a standard principal component plot of $\hat{\bm{L}}_{\bigcdot r_1}$ versus $\hat{\bm{L}}_{\bigcdot r_2}$ mirrors the information contained in a plot of $\Loracle_{\bigcdot r_1}$ versus $\Loracle_{\bigcdot r_2}$, and the latter justifies inference on the components of $\Loracle$. This is quite important, as practitioners are often interested in determining the genomic units whose expression or methylation depends on $\Coracle$ \cite{DieselParticles}. I lastly demonstrate the accuracy of my estimator for $\Coracle$.


\begin{theorem}
\label{theorem:AngleC}
Suppose the assumptions of Theorem \ref{theorem:Lambda} hold and let $k=\Koracle$. Then
\begin{align}
\label{equation:Subspace}
    \norm*{ P_{\Coracle} - P_{\hat{C}} }_F^2 = O_P\lbrace (\gamma_{\Koracle} p)^{-1/2} + n/(\gamma_{\Koracle} p) + (\gamma_{\Koracle} n)^{-1} \rbrace, \quad \abs{\hat{\bar{\bm{v}}}_j - \bar{v}_j} = O_P(n^{-1})
\end{align}
for all $j \in [b]$. Further, if $r,\epsilon$ and $F_r^{(\epsilon)}$ are as defined in Theorem \ref{theorem:L},
\begin{align}
\label{equation:CinnerChat}
    \abs*{ \hat{\bm{C}}_{ \bigcdot r}^{\T}\Coracle_{\bigcdot r} }/( \norm*{\hat{\bm{C}}_{ \bigcdot r}}_2 \norm*{\Coracle_{\bigcdot r}}_2 ) = 1- O_P\lbrace (\gamma_r p)^{-1/2} + n/(\gamma_r p) + (\gamma_r n)^{-1} \rbrace \text{ on $F_r^{(\epsilon)}$}.
\end{align}
\end{theorem}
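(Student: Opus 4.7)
The plan is to establish the three displays in the order they appear, since each builds on the previous. All three are coupled through Algorithm~\ref{algorithm:EstC}: $P_{\hat{C}}$ depends on $\hat{\bar{\bm{V}}}$ via Proposition~\ref{proposition:ImageC}, while the REML update for $\hat{\bar{\bm{v}}}$ in Step~\ref{item:EstC:ImageC}\ref{item:EstC:V} depends on $P_{\hat{C}}^{\perp}$, so I would decouple them with an iteration argument. The initial $\hat{\bar{\bm{v}}}$ from Step~\ref{item:EstC:0} is forced into $\Theta_*$ by Assumption~\ref{assumption:FALCO}, giving $\hat{\bar{\bm{V}}}$ with operator norm and smallest eigenvalue both of order one. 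Plugging this into Step~\ref{item:EstC:C} yields a crude subspace estimate, which refines the REML estimate, which refines the subspace; Step~\ref{item:EstC:iter} guarantees enough passes for this cycle to stabilize at the claimed rates.

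For \eqref{equation:Subspace}, Proposition~\ref{proposition:ImageC} lets me write $P_{\hat{C}} = P_{\hat{\bar{\bm{V}}}^{1/2}\bm{W}}$ where $\bm{W}$ holds the top $\Koracle$ right singular vectors of $\bm{Y}\hat{\bar{\bm{V}}}^{-1/2}$. Decomposing $\bm{Y}\hat{\bar{\bm{V}}}^{-1/2} = \bm{L}\bm{C}^{\T}\hat{\bar{\bm{V}}}^{-1/2} + \bm{E}\hat{\bar{\bm{V}}}^{-1/2}$, Assumption~\ref{assumption:CandL}\ref{item:assumCandL:Eigen} gives the signal's top $\Koracle$ singular values of order $\sqrt{p\gamma_r}$, while a sub-Gaussian operator-norm bound under Assumption~\ref{assumption:DependenceE}\ref{item:DependenceE:DependenceE}--via Hanson--Wright with decoupling in case \ref{item:assumErrors:Corr:U} and a block-union bound in case \ref{item:assumErrors:Corr:Networks}--gives $\norm*{\bm{E}\hat{\bar{\bm{V}}}^{-1/2}}_2 = O_P(\sqrt{p}+\sqrt{n})$. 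A Wedin sine-$\Theta$ bound on $\norm*{P_{\Coracle} - P_{\hat{C}}}_F^2$ yields the leading $n/(p\gamma_{\Koracle})$ term; the signal-noise cross term $p^{-1}\bm{L}\bm{C}^{\T}\bar{\bm{V}}^{-1}\bm{E}^{\T}$, handled by a trace Hanson--Wright inequality, produces $(\gamma_{\Koracle}p)^{-1/2}$; and the substitution error from replacing $\bar{\bm{V}}$ by $\hat{\bar{\bm{V}}}$ inside the whitening contributes $(\gamma_{\Koracle}n)^{-1}$ once the rate on $\hat{\bar{\bm{v}}}$ is in hand.

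For the $\hat{\bar{\bm{v}}}_j$ rate, I would Taylor-expand the REML first-order condition in Step~\ref{item:EstC:ImageC}\ref{item:EstC:V} at $\bar{\bm{v}}$. The $b\times b$ information matrix is $\hat{\bm{M}} + o_P(1)$ and is uniformly invertible by Assumption~\ref{assumption:CandL}\ref{item:assumErrors:V} combined with the safeguard in Step~\ref{item:EstC:U}. The score's $j$th component is $n^{-1}\Tr\{P_{\hat{C}}^{\perp}\bm{B}_j P_{\hat{C}}^{\perp}(p^{-1}\bm{Y}^{\T}\bm{Y} - \bar{\bm{V}})\}$. Expanding $p^{-1}\bm{Y}^{\T}\bm{Y}-\bar{\bm{V}}$, the signal piece $p^{-1}\bm{C}\bm{L}^{\T}\bm{L}\bm{C}^{\T}$ is killed to leading order by $P_{\hat{C}}^{\perp}$ using the subspace bound just established, and the three remaining mean-zero sums each contribute $O_P(n^{-1})$ by trace Hanson--Wright against the bounded-operator-norm matrix $P_{\hat{C}}^{\perp}\bm{B}_j P_{\hat{C}}^{\perp}$, using $n\lesssim p$. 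Inverting the information matrix then yields the claimed $\abs{\hat{\bar{\bm{v}}}_j - \bar{v}_j} = O_P(n^{-1})$.

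For \eqref{equation:CinnerChat}, Step~\ref{item:Estc:LandBasis} constructs $\hat{\bm{C}}_{\bigcdot r}$ as (a positive scalar multiple of) the $r$th eigenvector of $\hat{\bm{K}} = n^{-1}\tilde{\bm{C}}\{np^{-1}\tilde{\bm{L}}^{\T}\tilde{\bm{L}} - (n^{-1}\tilde{\bm{C}}^{\T}\hat{\bar{\bm{V}}}^{-1}\tilde{\bm{C}})^{-1}\}\tilde{\bm{C}}^{\T}$, whose nonzero eigenvalues are exactly $\hat{\lambda}_1,\ldots,\hat{\lambda}_{\Koracle}$. The natural target is $\bm{K}_o = n^{-1}\Coracle\diag(\lamoracle_1,\ldots,\lamoracle_{\Koracle})\{\Coracle\}^{\T}$; combining Theorem~\ref{theorem:Lambda} with the subspace bound \eqref{equation:Subspace} gives $\norm*{\hat{\bm{K}}-\bm{K}_o}_2 = O_P\{\lamoracle_r[(\gamma_r p)^{-1/2} + n/(\gamma_r p) + (\gamma_r n)^{-1}]\}$. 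On $F_r^{(\epsilon)}$ the matrix $\bm{K}_o$ has an eigengap of order $\lamoracle_r$ around its $r$th eigenvector, so Davis--Kahan applied to that single eigenvector gives $\sin^2\angle(\hat{\bm{C}}_{\bigcdot r},\Coracle_{\bigcdot r}) = O_P\{(\gamma_r p)^{-1/2} + n/(\gamma_r p) + (\gamma_r n)^{-1}\}$, which is \eqref{equation:CinnerChat} after $1 - \cos^2 = \sin^2$.

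The main obstacle is executing the iteration so that all three quantities sharpen simultaneously. In particular, the ``signal-noise'' cross term in the subspace bound requires Hanson--Wright for quantities of the form $\bm{C}^{\T}\hat{\bar{\bm{V}}}^{-1}\bm{E}^{\T}\bm{L}$ with $\hat{\bar{\bm{V}}}$ itself random and data-dependent. I would handle this by first freezing $\hat{\bar{\bm{V}}}$ at a deterministic surrogate (say, the population $\bar{\bm{V}}$) to extract the leading $(\gamma_{\Koracle}p)^{-1/2}$ bound, then folding in the error $\hat{\bar{\bm{V}}}-\bar{\bm{V}}$ via the $\hat{\bar{\bm{v}}}$-bound to produce the $(\gamma_{\Koracle}n)^{-1}$ remainder. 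The delicate part is verifying that this substitution leaves the Hanson--Wright constants uniform over the compact set $\Theta_*$ from Assumption~\ref{assumption:FALCO}.
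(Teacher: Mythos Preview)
Your iteration scheme and the REML score expansion for $\hat{\bar{\bm v}}$ are essentially what the paper does, and that part is fine. The gap is in the perturbation arguments for \eqref{equation:Subspace} and \eqref{equation:CinnerChat}.

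For \eqref{equation:Subspace}, an off-the-shelf Wedin or Davis--Kahan bound cannot deliver the stated rate in the regime the paper cares about, namely $\gamma_1/\gamma_{\Koracle}$ unbounded. Writing $p^{-1}\hat{\bar{\bm V}}^{-1/2}\bm Y^{\T}\bm Y\hat{\bar{\bm V}}^{-1/2}$ as signal $+$ cross $+$ noise-deviation, the noise-deviation has operator norm $O_P(\sqrt{n/p})$, but the cross term $p^{-1}\hat{\bar{\bm V}}^{-1/2}\bm C\bm L^{\T}\bm E\hat{\bar{\bm V}}^{-1/2}$ has operator norm $O_P(\sqrt{n\gamma_1/p})$, which dominates. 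Dividing by the gap $\gamma_{\Koracle}$ and squaring gives $n\gamma_1/(p\gamma_{\Koracle}^2)$; with $\gamma_1\asymp n$ and $\gamma_{\Koracle}\asymp 1$ this is $n^2/p$, not the claimed $n/(p\gamma_{\Koracle})$. A ``trace Hanson--Wright on the cross term'' does not repair this, because the subspace error is not a trace functional of the cross term. The paper avoids the problem by never bounding the perturbation in operator norm: it writes each estimated eigenvector as $\tilde{\bm C}\hat{\bm v}_s+\bm Q_{\tilde C}\hat{\bm z}_s$, derives $\hat{\bm z}_s=(\hat\mu_s-1)^{-1}p^{-1/2}\bm E_2^{\T}\tilde{\bm N}(\cdots)\hat{\bm v}_s+\cdots$ explicitly, and then shows that the inner products $\tilde{\bm C}^{\T}\hat{\bar{\bm V}}^{1/2}\bm M\bm Q_{\tilde C}\hat{\bm z}_s$ are $O_P\{(\gamma_s p)^{-1/2}+n/(\gamma_s p)+\phi_2\gamma_s^{-1}\}$---a full factor of $n^{1/2}$ better than the naive $\|\hat{\bm z}_s\|_2$. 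This cancellation, which comes from $\bm Q_{\tilde C}$ living in $\ker(\bm C^{\T})$ and $\bm E$ being mean zero there, is the engine of the proof and has no analogue in a black-box sine-$\Theta$ bound.

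For \eqref{equation:CinnerChat}, your plan bounds $\|\hat{\bm K}-\bm K_o\|_2$ and then applies Davis--Kahan at the $r$th eigenvector. But $\|\hat{\bm K}-\bm K_o\|_2$ is a single number; it is at least $|\hat\lambda_1-\lamoracle_1|$ and is polluted by the subspace error multiplied by $\lamoracle_1$. Dividing by the gap $\lamoracle_r$ reintroduces $\gamma_1/\gamma_r$, and you run into the same obstruction as above. Moreover, bounding $\|\hat{\bm K}-\bm K_o\|_2$ sharply already requires column-wise control of $\hat{\bm C}$ versus $\Coracle$, which is what you are trying to prove. The paper instead writes out $n^{-1}\{\Coracle\}^{\T}\hat{\bm C}$ explicitly in terms of the rotations $\bm U,\hat{\bm U},\bm H,\tilde{\bm w}_r$ appearing in the eigenvalue analysis, and uses the entrywise structure $\bm U_{rs}=O(\gamma_{r\vee s}^{1/2}\gamma_{r\wedge s}^{-1/2})$ together with $\|\tilde{\bm w}_r-\bm a_r\|_2=O_P\{\phi_1\gamma_r^{-1/2}+\phi_3(\gamma_r\gamma_{\Koracle})^{-1/2}\}$ on $F_r^{(\epsilon)}$ to extract the $r$-specific rate directly. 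Without this entrywise bookkeeping you cannot decouple the $r$th column from the large leading eigenvalues.
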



Theorem \ref{theorem:AngleC} shows that Algorithm \ref{algorithm:EstC} effectively recovers $\im\{ \Coracle \}$, and my novel bias-corrected estimator for $\Coracle$ is just as accurate as the standard principal components estimator when $b=1$ and $\bm{B}=I_n$ \cite{ChrisANDDan}. Like Theorem \ref{theorem:L}, this implies that a plot of $\hat{\bm{C}}_{ \bigcdot r_1}$ versus $\hat{\bm{C}}_{ \bigcdot r_2}$ mirrors the information contained in the plot of $\Coracle_{ \bigcdot r_1}$ versus $\Coracle_{ \bigcdot r_2}$.

\subsection{Data denoising}
\label{subsection:Denoising}
In this section, I provide the requisite theory to guarantee that one can perform accurate inference conditional on my estimate for $\bm{C}$, which is often referred to as denoising the data matrix $\bm{Y}$ \cite{DPA}. This is critical when inferring eQTLs and meQTLs, where accounting for $\bm{C}$ has been shown to reduce potential confounding and empower inference \citep{ConfoundingeQTL,ConfoundingmeQTL}. It also has application in DNA methylation twin studies, in which one goal is to recover $\bm{V}_1,\ldots,\bm{V}_p$ to determine the latent cell type-independent heritability of DNA methylation \citep{TwinDNAmPlos,SwedishTwins,TwinDanish}. Theorem~\ref{theorem:eQTL} below, as far as I am aware, is the first result showing that denoising is possible in data with correlated samples. 

\begin{theorem}
\label{theorem:eQTL}
Suppose Assumptions \ref{assumption:CandL}, \ref{assumption:FALCO} and \ref{assumption:DependenceE} hold with $K=\Koracle$ and $n^{3/2}/(p\gamma_K) \to 0$ as $n,p\to\infty$. Fix a $g \in [p]$ and suppose for some non-random vector $\bm{s}_g \in \mathbb{R}^d$, $\bm{E}_{g\bigcdot} = \bm{X}_g \bm{s}_g + \bm{R}_{g}$, where $\bm{X}_g$ and $\bm{R}_g$ satisfy the following:
\begin{enumerate}[label=(\roman*)]
    \item $\bm{X}_g$ and $\bm{R}_g$ are independent, mean $\bm{0}$ and independent of $\bm{C}$, where $\bm{X}_g$ is observed and independent of all but at most $c$ rows of $\bm{E}$. Further, $d=O(1)$ and
    $\norm*{n^{-1}\bm{X}_g^{\T} \bm{X}_g-\bm{\Sigma}_g}_2 = o_P(1)$ for some non-random $\bm{\Sigma}_g \succ \bm{0}$ as $n \to \infty$.
    \item $\E\lbrace \exp(\bm{t}^{\T}\bm{e}) \rbrace \leq \exp(\norm*{\bm{t}}_2^2 c)$ for $\bm{e}\in\{\bm{X}_g\bm{s}_g,\bm{R}_g\}$, $\V(\bm{X}_g\bm{s}_g) =\bm{V}(\bm{\tau}_g)$ and $\V(\bm{R}_g) =\bm{V}(\bm{\alpha}_g)$ for some $\bm{\tau}_g,\bm{\alpha}_g \in \mathbb{R}^b$, where $\bm{\alpha}_g \in \Theta_*$.
\end{enumerate}
Define
\begin{align}
    \label{equation:Denoise:REML}
    \hat{\bm{\alpha}}_g &= \argmax_{\bm{\theta} \in \Theta_*}[ -\log\lbrace \abs*{P_{(\hat{\bm{C}}, \bm{X}_g)}^{\perp}\bm{V}(\bm{\theta})P_{(\hat{\bm{C}}, \bm{X}_g)}^{\perp}}_{+} \rbrace - \bm{Y}_{g \bigcdot}^{\T}\lbrace P_{(\hat{\bm{C}}, \bm{X}_g)}^{\perp}\bm{V}(\bm{\theta})P_{(\hat{\bm{C}}, \bm{X}_g)}^{\perp}\rbrace^{\dagger}\bm{Y}_{g \bigcdot} ]\\
    \label{equation:Denoise:sg}
    \hat{\bm{s}}_g &= [ \bm{X}_g^{\T}\lbrace P_{\hat{C}}^{\perp}\bm{V}(\hat{\bm{\alpha}}_g)P_{\hat{C}}^{\perp} \rbrace^{\dagger}\bm{X}_g ]^{-1} \bm{X}_g^{\T}\lbrace P_{\hat{C}}^{\perp}\bm{V}(\hat{\bm{\alpha}}_g)P_{\hat{C}}^{\perp} \rbrace^{\dagger} \bm{Y}_{g \bigcdot}
\end{align}
to be the restricted maximum likelihood estimator for $\bm{\alpha}_g$ and denoised estimate for $\bm{s}_g$. Then for $\hat{\bm{s}}_g^{(known)}$ the generalized least squares estimate for $\bm{s}_g$ from the regression of $\bm{X}_g$ onto $\bm{E}_{g \bigcdot}$ assuming $\bm{\alpha}_g$ is known, $\norm*{\hat{\bm{\alpha}}_g - \bm{\alpha}_g}_2=o_P(1)$ and
\begin{align}
\label{equation:Denoise}
    n^{1/2}\norm*{ \hat{\bm{s}}_g - \hat{\bm{s}}_g^{(known)} }_2 = o_P(1)
\end{align}
as $n,p \to \infty$.
\end{theorem}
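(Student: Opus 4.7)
The plan is to reduce $\hat{\bm{s}}_g$ to the oracle $\hat{\bm{s}}_g^{(known)}$ through a chain of intermediate estimators, exploiting the identification $\bm{L}\bm{C}^{\T}=\Loracle\{\Coracle\}^{\T}$ so that $\bm{Y}_{g\bigcdot} = \Coracle\Loracle_{g\bigcdot} + \bm{X}_g\bm{s}_g + \bm{R}_g$. Writing $\tilde{W}(\bm{\theta})=\{P_{\hat{C}}^{\perp}\bm{V}(\bm{\theta})P_{\hat{C}}^{\perp}\}^{\dagger}$, whose image lies in $\im(P_{\hat{C}}^{\perp})$, substitution into \eqref{equation:Denoise:sg} gives
\begin{align*}
    \hat{\bm{s}}_g - \bm{s}_g = A^{-1}\bm{X}_g^{\T}\tilde{W}(\hat{\bm{\alpha}}_g)\{\Coracle\Loracle_{g\bigcdot}+\bm{R}_g\}, \quad A = \bm{X}_g^{\T}\tilde{W}(\hat{\bm{\alpha}}_g)\bm{X}_g,
\end{align*}
and from the identity $P_{\hat{C}}^{\perp}\Coracle=(P_{\Coracle}-P_{\hat{C}})\Coracle$, the bias from $\Coracle\Loracle_{g\bigcdot}$ is governed by the angle between $\im(\hat{\bm{C}})$ and $\im(\Coracle)$. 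As a preliminary, I would establish $\|\hat{\bm{\alpha}}_g-\bm{\alpha}_g\|_2=o_P(1)$ by showing the REML profile in \eqref{equation:Denoise:REML} differs uniformly over $\bm{\theta}\in\Theta_*$ from its version with $\hat{\bm{C}}$ replaced by $\Coracle$ by $o_P(1)$; the perturbation is controlled by $\|P_{\hat{C}}-P_{\Coracle}\|_F\to 0$ from Theorem~\ref{theorem:AngleC}, $\|\bm{B}_j\|_2\leq c$, and the compactness of $\Theta_*$ from Assumption~\ref{assumption:FALCO}. A standard M-estimator argument, of the same flavor as the REML analyses in \citet{CorrConf} and \citet{CATE}, then applies to the oracle criterion via the identifiability afforded by $\Lambda_b(\bm{M})\geq c^{-1}$.

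The central technical challenge is to bound the bias contribution
\begin{align*}
    \bm{X}_g^{\T}\tilde{W}(\hat{\bm{\alpha}}_g)\Coracle\Loracle_{g\bigcdot} = \bm{X}_g^{\T}\tilde{W}(\hat{\bm{\alpha}}_g)(P_{\Coracle}-P_{\hat{C}})\Coracle\Loracle_{g\bigcdot}
\end{align*}
by $o_P(n^{1/2})$, so that after dividing by $A\asymp n$ (from $\|n^{-1}\bm{X}_g^{\T}\bm{X}_g-\bm{\Sigma}_g\|_2=o_P(1)$ together with the lower bound on $\tilde{W}$ enforced by Assumption~\ref{assumption:FALCO}), the contribution to $\hat{\bm{s}}_g$ is $o_P(n^{-1/2})$. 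The crude deterministic bound, using $\|(P_{\Coracle}-P_{\hat{C}})\Coracle\|_F=O_P(\sqrt{na_{n,p}})$ from Theorem~\ref{theorem:AngleC} with $a_{n,p}=(\gamma_{\Koracle}p)^{-1/2}+n/(\gamma_{\Koracle}p)+(\gamma_{\Koracle}n)^{-1}$, yields $O_P(n\sqrt{a_{n,p}})$, short by a factor of $\sqrt{n}$ under the stated rate $n^{3/2}/(p\gamma_{\Koracle})\to 0$. The missing $\sqrt{n}$ is recovered by a leave-out decoupling: letting $\hat{\bm{C}}^{(-g)}$ and $\hat{\bm{\alpha}}_g^{(-g)}$ denote the outputs of Algorithm~\ref{algorithm:EstC} and \eqref{equation:Denoise:REML} run on $\bm{Y}$ with the $O(1)$ rows on which $\bm{X}_g$ depends removed, a perturbation analysis of \eqref{equation:Alg1:C} shows that removing $O(1)$ of $p$ rows changes $P_{\hat{C}}$ by $O_P(1/p)$ in Frobenius norm, which is negligible against the target rate. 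Conditional on $\hat{\bm{C}}^{(-g)}$ and $\hat{\bm{\alpha}}_g^{(-g)}$, $\bm{X}_g$ is mean-zero and sub-Gaussian, so a Hanson--Wright/Hoeffding bound yields
\begin{align*}
    \bigl|\bm{X}_g^{\T}\tilde{W}^{(-g)}(P_{\Coracle}-P_{\hat{C}^{(-g)}})\Coracle\Loracle_{g\bigcdot}\bigr| = O_P\bigl(\|(P_{\Coracle}-P_{\hat{C}^{(-g)}})\Coracle\Loracle_{g\bigcdot}\|_2\bigr) = O_P(\sqrt{na_{n,p}}) = o_P(\sqrt{n}),
\end{align*}
as required.

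Having dispatched the bias, what is left is to compare the GLS-in-$\bm{R}_g$ pieces of $\hat{\bm{s}}_g$ and $\hat{\bm{s}}_g^{(known)}$. I would first replace $\hat{\bm{\alpha}}_g$ by $\bm{\alpha}_g$ in $\tilde{W}$, using the preliminary step and $|\hat{\bar{\bm{v}}}_j-\bar{v}_j|=O_P(n^{-1})$ from Theorem~\ref{theorem:AngleC} together with a first-order Taylor expansion of $\tilde{W}(\cdot)$ in $\bm{\theta}$; then replace $\hat{\bm{C}}$ by $\Coracle$, again via Theorem~\ref{theorem:AngleC} and the same leave-out trick. These substitutions each perturb the GLS-in-$\bm{R}_g$ piece by $o_P(n^{-1/2})$. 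The problem reduces to comparing $(\bm{X}_g^{\T}\tilde{W}^{\star}\bm{X}_g)^{-1}\bm{X}_g^{\T}\tilde{W}^{\star}\bm{R}_g$ with $\tilde{W}^{\star}=\{P_{\Coracle}^{\perp}\bm{V}(\bm{\alpha}_g)P_{\Coracle}^{\perp}\}^{\dagger}$ to the unrestricted oracle $(\bm{X}_g^{\T}\bm{V}(\bm{\alpha}_g)^{-1}\bm{X}_g)^{-1}\bm{X}_g^{\T}\bm{V}(\bm{\alpha}_g)^{-1}\bm{R}_g$. Using the identity $\tilde{W}^{\star}=\bm{V}(\bm{\alpha}_g)^{-1}-\bm{V}(\bm{\alpha}_g)^{-1}\Coracle(\Coracle^{\T}\bm{V}(\bm{\alpha}_g)^{-1}\Coracle)^{-1}\Coracle^{\T}\bm{V}(\bm{\alpha}_g)^{-1}$, together with the fact that $\bm{X}_g$ and $\bm{R}_g$ are each independent of $\bm{C}$ and hence of $\Coracle$, the cross-terms $\bm{X}_g^{\T}\bm{V}(\bm{\alpha}_g)^{-1}\Coracle$ and $\Coracle^{\T}\bm{V}(\bm{\alpha}_g)^{-1}\bm{R}_g$ are both $O_P(n^{1/2})$ by sub-Gaussian concentration, so the correction contributes only $O_P(n^{-1})=o_P(n^{-1/2})$. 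Multiplying through by $n^{1/2}$ establishes \eqref{equation:Denoise}; the hardest part throughout is the sharp decoupling in the bias step, which relies essentially on the hypothesis that $\bm{X}_g$ is independent of all but $O(1)$ rows of $\bm{E}$.
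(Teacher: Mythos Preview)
Your approach is sound and reaches the right conclusion, but it differs from the paper's in how the key decoupling is carried out. The paper never performs a leave-one-out on $P_{\hat{C}}$. Instead it works in rotated coordinates, writing the first $K$ right singular vectors of $\bm{Y}\hat{\bar{\bm{V}}}^{-1/2}$ as $\hat{\tilde{\bm{C}}}=\tilde{\bm{C}}\hat{\bm{v}}+\bm{Q}_{\tilde{C}}\hat{\bm{z}}$, where $\tilde{\bm{C}}$ spans $\im(\hat{\bar{\bm{V}}}^{-1/2}\bm{C})$ and $\hat{\bm{z}}$ is the orthogonal error supplied by the spectral analysis of Algorithm~\ref{algorithm:EstC}. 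It then expresses $\hat{\bm{s}}_g$ as the first $d$ coordinates of a block GLS with design $(\hat{\bar{\bm{V}}}^{-1/2}\bm{X}_g,\,n^{1/2}\hat{\tilde{\bm{C}}})$ and shows, via a corollary of that spectral lemma, that $\bm{X}_g^{\T}$ applied to the $\hat{\bm{z}}$-piece is $o_P(n^{-1/2})$; the extra $\sqrt{n}$ is obtained there exactly as you propose---by splitting $\bm{E}$ into the finitely many rows coupled with $\bm{X}_g$ and the independent remainder---but inside the explicit formula for $\hat{\bm{z}}$ rather than by perturbing the algorithm's output. The paper's final step, using $n^{-1}\bm{X}_g^{\T}\hat{\bm{A}}_g^{-1}\bm{C}=O_P(n^{-1/2})$ from $\bm{X}_g\perp\bm{C}$ to collapse the block GLS to the unrestricted oracle, is your Woodbury argument in different notation. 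Your leave-out route is more self-contained (it treats Theorem~\ref{theorem:AngleC} as a black box rather than reopening the eigenvector lemmas), whereas the paper's route also delivers the sharper rate $O_P\{n^{-1/2}+n^{1/2}(p\gamma_K)^{-1/2}\}$ for $\hat{\bm{\alpha}}_g$ en route. Two small repairs are needed in your write-up: the leave-out perturbation to $P_{\hat{C}}$ is $O_P\{n/(p\gamma_K)\}$, not $O_P(1/p)$ (one deleted row shifts $p^{-1}\bm{Y}^{\T}\bm{Y}$ by $O_P(n/p)$, and the eigengap is $\asymp\gamma_K$), though this is still $o_P(n^{-1/2})$ under $n^{3/2}/(p\gamma_K)\to 0$; and your $\hat{\bm{\alpha}}_g^{(-g)}$ is ill-defined because the criterion \eqref{equation:Denoise:REML} uses $\bm{Y}_{g\bigcdot}$ and $\bm{X}_g$ explicitly and cannot be decoupled from $\bm{X}_g$ by deleting other rows---simply replace $\hat{\bm{\alpha}}_g$ by the non-random $\bm{\alpha}_g$ (at cost $o_P(1)$ in $\tilde{W}$) before performing the leave-out on $\hat{\bm{C}}$.
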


\begin{remark}
\label{remark:Denoise:Vg}
By replacing $(\hat{\bm{C}}, \bm{X}_g)$ in \eqref{equation:Denoise:REML} with $\hat{\bm{C}}$, the proof of Theorem~\ref{theorem:eQTL} shows that $\norm{ \bm{V}(\hat{\bm{\alpha}}_g) - \bm{V}_g }_2 = o_P(1)$. This is useful in DNA methylation twin studies, where the goal is often to estimate the latent factor-adjusted heritability of DNA methylation \citep{TwinDNAmPlos,SwedishTwins,TwinDanish}.
\end{remark}

\begin{remark}
\label{remark:Denose:Xg}
In eQTL and meQTL studies, $\bm{X}_g$ is a function of the genotypes of the samples. I provide examples of how $\bm{X}_g$ is constructed in practice in Sections~\ref{section:Simulation}~and~\ref{section:RealData}.
\end{remark}

Equation \eqref{equation:Denoise} shows that inference with the denoised estimate for $\bm{s}_g$ is asymptotically equivalent to that when $\bm{L}\bm{C}^{\T}=\bm{0}$, which is critically important in eQTL and meQTL studies. For example, I show in Section~\ref{section:RealData} that Algorithm \ref{algorithm:EstC} and the results of Theorem~\ref{theorem:eQTL} can be used to perform inference to identify eQTLs that is far more powerful than existing methods.

\subsection{Characterizing the variation in $\bm{C}$}
\label{subsection:InferenceC}
Biologists routinely regress estimated latent factors onto observed technical and biological covariates to identify and characterize the most important sources of variation in $\bm{Y}$. Such inference is used to perform quality control \citep{Jessie,Marcus}, empower eQTL an meQTL detection algorithms \citep{Panama} and make biological conclusions \citep{DieselParticles,FactorCorr}. Theorem~\ref{theorem:XandC} below provides the first model-based framework and set of statistical guarantees aimed at characterizing the variation in $\bm{C}$ in dependent data.

\begin{theorem}
\label{theorem:XandC}
Let $\bm{X} \in \mathbb{R}^{n}$ be a random vector such that $n^{-1}\bm{X}^{\T}\bm{X} = \sigma_x^2 + O_P(n^{-1/2})$ for $\sigma_x^2=\E(n^{-1}\bm{X}^{\T}\bm{X})$. Suppose Assumptions \ref{assumption:CandL}, \ref{assumption:FALCO} and \ref{assumption:DependenceE} hold, $\Koracle=K$, $\Lambda_r(\bm{A})/\Lambda_{r+1}(\bm{A})\geq 1+c^{-1}$ for all $r \in [K]$, $\E(n^{-1}\bm{C}^{\T}\bm{C})=I_K$, $np^{-1}\bm{L}^{\T}\bm{L}$ is diagonal with decreasing diagonal elements and the following assumptions on $\bm{C}$ hold:
\begin{enumerate}[label=(\roman*)]
    \item $\bm{C} = \bm{X}\bm{\omega}^{\T} + \bm{R}$, where $\bm{\omega} \in \mathbb{R}^K$ is non-random, $\bm{R}$ is independent of $\bm{X}$ and $\E(\bm{R})=\bm{0}$.
    \item For $j\in[b]$, let $\bm{\Psi}_j \in \mathbb{R}^{K \times K}$ be a non-random, symmetric matrix such that $\norm*{\bm{\Psi}_j}_2 \leq c$. Then $\V\{ \vecM(\bm{R}) \} = \sum_{j=1}^b \bm{\Psi}_j \otimes \bm{B}_j \succeq c^{-1}I_n$.\label{item:Cinference:Var}
\end{enumerate}
Let $r \in [K]$ and $\hat{\bm{\omega}}_r$ be the generalized least squares estimate for $\bm{\omega}_r$ assuming the incorrect model $\hat{\bm{C}}_{\bigcdot r} \asim (\bm{X}\bm{\omega}_r,\bm{V}(\bm{\theta}))$ for some $\bm{\theta} \in \mathbb{R}^b$, where $\bm{\theta}$ is estimated via restricted maximum likelihood (REML). If $n^{3/2}/(p\gamma_r)\to 0$ as $n,p \to \infty$ and the regularity conditions in Section \ref{subsection:supp:TechC} of the Supplement hold, the following are true:
\begin{enumerate}[label=(\alph*)]
    \item If $\bm{X}$ is dependent on at most $c$ rows $\bm{E}$ and the null hypothesis $\bm{\omega}=\bm{0}$ holds, then for $\hat{\bm{\theta}}$ the REML estimate for $\bm{\theta}$ and $Z\sim N(0,1)$, $[\bm{X}^{\T}\{ \bm{V}(\hat{\bm{\theta}}) \}^{-1}\bm{X}]^{1/2}\hat{\bm{\omega}}_r \edist Z + o_P(1)$ as $n,p \to \infty$.\label{item:XandC:H0}
    \item If $\bm{X}$ is independent of $\bm{E}$, then $\hat{\bm{\omega}}_r = a\bm{\omega}_r + O_P(n^{-1/2})$ for $a \in \{-1,1\}$.\label{item:XandC:HA}
\end{enumerate}
\end{theorem}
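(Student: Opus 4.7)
The plan is to reduce Theorem \ref{theorem:XandC} to a standard generalized least squares central limit theorem by establishing a linear expansion
\begin{equation*}
    \hat{\bm{C}}_{\bigcdot r} = a\,\bm{C}_{\bigcdot r} + \bm{\epsilon}_r, \qquad a \in \{-1,+1\},
\end{equation*}
in which $\bm{\epsilon}_r$ is a functional of $\bm{E}$ whose contribution to the test statistic is negligible. The expansion is built in two stages. First, the proofs underlying Theorems \ref{theorem:L} and \ref{theorem:AngleC} yield a linear representation $\hat{\bm{C}}_{\bigcdot r} = a_1 \Coracle_{\bigcdot r} + \bm{\epsilon}_r^{(1)}$ with $\bm{\epsilon}_r^{(1)}$ a linear functional of $\bm{E}$ of order $O_P\{n^{1/2}(\gamma_r p)^{-1/2}+(\gamma_r n)^{-1/2}\}$ in $\ell_2$. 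Second, because $\E(n^{-1}\bm{C}^{\T}\bm{C})=I_K$, $np^{-1}\bm{L}^{\T}\bm{L}$ is diagonal with strictly decreasing diagonal, and the eigengap of $\bm{A}$ is bounded below, a Davis--Kahan perturbation argument applied to the diagonalizing rotation in \eqref{equation:LCoracleInitial} produces $\Coracle_{\bigcdot r} = a_2 \bm{C}_{\bigcdot r} + \bm{\epsilon}_r^{(2)}$ with $a_2 \in \{-1,+1\}$ and an error that is $O_P(n^{-1/2})$ after being tested against any fixed $\ell_2$-bounded vector, in particular against $\bm{V}(\bm{\theta})^{-1}\bm{X}$.

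For part \ref{item:XandC:H0}, the next step is REML consistency of the working-model estimator $\hat{\bm{\theta}}$ for $\bm{\psi}_r = ((\bm{\Psi}_1)_{rr},\ldots,(\bm{\Psi}_b)_{rr})^{\T}$. By assumption \ref{item:Cinference:Var}, $\V(\bm{R}_{\bigcdot r})=\bm{V}(\bm{\psi}_r)$; the REML profile is uniformly continuous in its data argument on the compact set $\Theta_*$ of Assumption \ref{assumption:FALCO}; and $\bm{\psi}_r$ is the unique population maximizer by identifiability of $\bm{v}_g$ via $c^{-1}I_b \preceq \bm{M}$. Under $H_0$, $\hat{\bm{C}}_{\bigcdot r} = a\bm{R}_{\bigcdot r}+\bm{\epsilon}_r$, so the REML evaluated at $\hat{\bm{C}}_{\bigcdot r}$ differs from the REML at $a\bm{R}_{\bigcdot r}$ by $o_P(1)$ uniformly on $\Theta_*$, whence $\hat{\bm{\theta}}\to\bm{\psi}_r$. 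Plugging this in gives
\begin{equation*}
    [\bm{X}^{\T}\bm{V}(\hat{\bm{\theta}})^{-1}\bm{X}]^{1/2}\hat{\bm{\omega}}_r = \frac{a\,\bm{X}^{\T}\bm{V}(\bm{\psi}_r)^{-1}\bm{R}_{\bigcdot r}}{[\bm{X}^{\T}\bm{V}(\bm{\psi}_r)^{-1}\bm{X}]^{1/2}} + o_P(1),
\end{equation*}
where the $o_P(1)$ absorbs both the plug-in error for $\hat{\bm{\theta}}$ (handled by continuity of $\bm{V}(\cdot)^{-1}$) and the $\bm{\epsilon}_r$-contribution. Since $\bm{R}$ is a function of $\bm{C}$, hence independent of $\bm{E}$, and since $\bm{X}$ depends on at most $c$ rows of $\bm{E}$, conditional on $\bm{X}$ the leading term is a mean-zero linear combination of the entries of $\bm{R}_{\bigcdot r}$ with conditional variance exactly one. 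A Lindeberg-type CLT, using the sub-Gaussian tail bounds supplied by the regularity conditions in Section \ref{subsection:supp:TechC} and the weak-weights control on $\bm{V}(\bm{\psi}_r)^{-1}\bm{X}$ implied by the spectral bounds on $\bm{V}$, then delivers the $N(0,1)$ limit.

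Part \ref{item:XandC:HA} follows from the decomposition
\begin{equation*}
    \hat{\bm{\omega}}_r - a\bm{\omega}_r = \frac{a\,\bm{X}^{\T}\bm{V}(\hat{\bm{\theta}})^{-1}\bm{R}_{\bigcdot r}+\bm{X}^{\T}\bm{V}(\hat{\bm{\theta}})^{-1}\bm{\epsilon}_r}{\bm{X}^{\T}\bm{V}(\hat{\bm{\theta}})^{-1}\bm{X}}.
\end{equation*}
The denominator is $\Theta_P(n)$ by the assumed rate on $n^{-1}\bm{X}^{\T}\bm{X}$ and the spectral bounds on $\bm{V}$; under $\bm{X}\indep\bm{E}$, and hence $\bm{X}\indep\bm{R}$, the first numerator has conditional variance of order $n$ and so is $O_P(n^{1/2})$, making the first ratio $O_P(n^{-1/2})$. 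The \textbf{main obstacle} in both parts is controlling $\bm{X}^{\T}\bm{V}(\hat{\bm{\theta}})^{-1}\bm{\epsilon}_r$: the error $\bm{\epsilon}_r$ couples to $\bm{E}$ through the FALCO iterates for both $\hat{\bm{C}}$ and $\hat{\bar{\bm{V}}}$, so the proof must separate its principal projection component against $\im(\Loracle)$ from the lower-order iterative REML refinement, then exploit the near-independence between $\bm{X}$ and the bulk rows of $\bm{E}$ to annihilate the cross terms. The scaling assumption $n^{3/2}/(p\gamma_r)\to 0$ is precisely what drives the resulting $O_P\{n(\gamma_r p)^{-1/2}\}$ bound on $\bm{X}^{\T}\bm{V}^{-1}\bm{\epsilon}_r$ down to $o_P(n^{1/2})$, producing the $o_P(1)$ remainder once one normalizes by $n^{1/2}$ in part \ref{item:XandC:H0} and by the $n$-scale denominator in part \ref{item:XandC:HA}.
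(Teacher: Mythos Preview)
Your proposal is correct and follows essentially the same two–step reduction as the paper: first replace $\hat{\bm{C}}_{\bigcdot r}$ by $\Coracle_{\bigcdot r}$ using the eigenvector expansions behind Theorems~\ref{theorem:L} and~\ref{theorem:AngleC} (the paper packages this as Corollary~\ref{corollary:supp:Cte} and Lemma~\ref{lemma:supp:XtC}), then replace $\Coracle_{\bigcdot r}$ by $\pm\bm{C}_{\bigcdot r}$ via a perturbation of the diagonalizing rotation (the paper uses Lemma~\ref{lemma:supp:PrelimCinference}, which collapses to your Davis--Kahan step when $\Koracle=K$); REML consistency (Lemma~\ref{lemma:supp:REMLC}) and the assumed CLT in Assumption~\ref{assumption:supp:techC}\ref{item:supp:C:Normal} then finish the argument exactly as you describe.

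Two bookkeeping points to tighten. First, your statement that $\bm{\epsilon}_r^{(2)}$ is ``$O_P(n^{-1/2})$ after being tested against any fixed $\ell_2$-bounded vector, in particular against $\bm{V}(\bm{\theta})^{-1}\bm{X}$'' conflates scales: $\bm{V}(\bm{\theta})^{-1}\bm{X}$ has norm $\asymp n^{1/2}$, not $O(1)$, so the correct conclusion is $\bm{X}^{\T}\bm{V}(\bm{\theta})^{-1}\bm{\epsilon}_r^{(2)}=O_P(1)=o_P(n^{1/2})$, which is still what you need. Second, the condition $n^{3/2}/(p\gamma_r)\to 0$ does not come from the $O_P\{n(\gamma_r p)^{-1/2}\}$ term you cite (that one only needs $n/(p\gamma_r)\to 0$); it is needed for the secondary $n/(p\gamma_r)$ contribution in the eigenvector error $\phi_3\gamma_r^{-1}$ (in the paper's notation), which after multiplying by $\|\bm{X}\|_2\asymp n^{1/2}$ produces exactly $n^{3/2}/(p\gamma_r)$.
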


\begin{remark}
\label{remark:CInference:Koracle}
The assumption $\Koracle=K$ is for simplicity of presentation. I state an equivalent version of Theorem~\ref{theorem:XandC} when $\Koracle \neq K$ in Section \ref{subsection:supp:RestateXC} of the Supplement. The assumptions on $\E(n^{-1}\bm{C}^{\T}\bm{C})=I_K$ and $np^{-1}\bm{L}^{\T}\bm{L}$ are without loss of generality in \ref{item:XandC:H0}, and are used to identify $\bm{\omega}$ in \ref{item:XandC:HA}. Note that under these assumptions, $np^{-1}\bm{L}_{\bigcdot r}^{\T}\bm{L}_{\bigcdot s}=\Lambda_r(\bm{A})I(r=s)$ for all $r,s \in [K]$.
\end{remark}

\begin{remark}
\label{remark:CInference:Var}
The model for $\V\{\vecM(\bm{R})\}$ assumes $\bm{B}_1,\ldots,\bm{B}_b$ parametrize the variance of linear combinations of the columns of $\bm{C}$. This is natural, since $\bm{B}_1,\ldots,\bm{B}_b$ are constructed to parametrize the dependence between samples.
\end{remark}


Item~\ref{item:XandC:HA} shows Algorithm~\ref{algorithm:EstC}'s estimators can be used to estimate the linear dependence between $\bm{C}$ and $\bm{X}$, where the conditions on $\E(n^{-1}\bm{C}^{\T}\bm{C})$ and $np^{-1}\bm{L}^{\T}\bm{L}$ help identify the columns of $\bm{C}$ and order them from most important to least important. Item \ref{item:XandC:H0} has many applications, but is particularly useful in eQTL studies. There, practitioners often attempt to account for the genetic relatedness between individuals when estimating $\bm{C}$, and subsequently test for associations between genotype $\bm{X}$ and latent factors $\bm{C}$ \cite{Panama,Knowles}. Loci whose genotypes are correlated with $\bm{C}$ might be indicative of systematic trans-eQTLs, and modifying the genetic relatedness matrix to account for the genotypes of such SNPs has been shown to increase the power to detect eQTLs \cite{Panama}.

\section{A simulation study}
\label{section:Simulation}

\subsection{Simulation setup}
\label{subsection:SimSetup} 
\indent I simulated the eQTL-dependent expression of $p=15000$ genes across three treatment conditions in $n/3=60$ unrelated individuals to compare Algorithms \ref{algorithm:EstC} and \ref{algorithm:CBCV} with other factor analysis procedures. To mirror the complexity of real data, I set $K = 35$ and generated 100 gene expression datasets according to Model \eqref{equation:ModelIntro}, where $\bm{E}$ was simulated according to Theorem~\ref{theorem:eQTL}:
\begin{equation}
\label{equation:SimParams}
\begin{aligned}
    \bm{L}_{gk} &\sim (1-\pi_k)\delta_0 + \pi_k N_1(0,\tau_k^2), \quad g \in [p];k \in [K]\\
    \bm{C} &\sim MN_{n \times K}(\bm{0},I_n,I_K)\\
    \bm{E}_{g\bigcdot} &= \bm{X}_g s_g + \bm{R}_g,\quad \bm{X}_g=\bm{G}_g\otimes \bm{1}_3 , \quad \bm{R}_g \sim N_{n}(\bm{0},I_{n/3} \otimes \bm{M}_g), \quad g \in [p]\\
    s_g &\sim 0.8\delta_0 + 0.2N_1(0,0.4^2), \quad g \in [p],
\end{aligned}
\end{equation}
where $\delta_0$ is the point mass at 0. The vector $\bm{G}_g \in \{0,1,2\}^{n/3}$ contains the genotypes at a single nucleotide polymorphism (SNP) that acts as an eQTL for gene $g$ if $s_g \neq 0$. The condition-specific intercepts $\bm{Z}=\bm{1}_{n/3}\otimes I_3$ were treated as observed nuisance covariates, and $\bm{M}_g \in \mathbb{R}^{3 \times 3}$ is the covariance, conditional on $\bm{C}$ and $\bm{X}_g$, of the expression of gene $g$ across treatment conditions, where $\abs*{P_Z^{\perp}\{I_{n/3}\otimes (p^{-1}\sum_{g=1}^p \bm{M}_g)\}P_Z^{\perp} }_{+} = 1$. As described in \eqref{equation:RorateOutZ}, I redefined $\bm{Y},\bm{C},\bm{X}_g,\bm{R}_g$ and $\bm{E}$ to be $\bm{Y}\bm{Q}_Z,\bm{Q}_Z^{\T}\bm{C},\bm{Q}_Z^{\T}\bm{X}_g,\bm{Q}_Z^{\T}\bm{R}_g$, and $\bm{E}\bm{Q}_Z$, respectively, prior to estimation and inference.\par 
\indent I set $\tau_k \in [0.12,1]$ and $\pi_k \in (0,1]$ so as to simulate data with strong, moderate and weak factors, where $\Koracle = 30$ in all simulations (Figure~\ref{Fig:Mg}). I then used genome-wide SNP data and gene annotations from 15000 randomly selected genes from the data example in Section~\ref{section:RealData} to simulate $\bm{G}_g$. In brief, I pruned SNPs for linkage disequilibrium, mapped SNPs to each gene's \textit{cis} region, defined as $\pm 10^6$ base pairs around its transcription start site \citep{GTEX}, and randomly chose one SNP within each \textit{cis} region to act as a potential eQTL for the corresponding gene. Genotypes $\bm{G}_g$ had independent entries, were independent of $\bm{R}_g$ and were simulated assuming Hardy-Weinberg Equilibrium with minor allele frequencies as estimated in Section~\ref{section:RealData}, where $\bm{G}_{g}=\bm{G}_h$ if genes $g\neq h$ had the same potential eQTL and $\bm{G}_{g} \indep \bm{G}_h$ otherwise. This implied that, on average, the expressions of 25\% of all genes with eQTLs were correlated with the expression of at least one other gene. Further, since $\V(\bm{X}_g) \propto \bm{Q}_Z^{\T}\{I_{n/3}\otimes (\bm{1}_3\bm{1}_3^{\T})\}\bm{Q}_Z$, $\V(\bm{E}_{g \bigcdot}) = \sum_{j=i}^6 v_{g,j}\bm{Q}_Z^{\T}(I_{n/3}\otimes \bm{A}_j)\bm{Q}_Z = \sum_{j=i}^6  v_{g,j}\bm{B}_{j}$ for some variance multipliers $v_{g,j}$ for all $g \in [p]$, where $\{\bm{A}_1,\ldots,\bm{A}_6\} \subset \{0,1\}^{3 \times 3}$ is a basis for the space of $3 \times 3$ symmetric matrices. Therefore, $\bm{V}_g=\V(\bm{E}_{g \bigcdot})$ follows Model \eqref{equation:Vmodel} with $b=6$.\par
\indent To reflect the complex gene-specific correlation structures observed in practice, $\bm{M}_g$ was simulated such that each condition had a different marginal variance and, as shown in Figure~\ref{Fig:Mg}, the three pairs of conditions had different correlation coefficients. Given only the expression matrix $\bm{Y}$, the first goal was to estimate $\Coracle$ and $\lamoracle_k$, which facilitate the characterization and prioritization of latent sources of variation and is a critical step in multi-condition studies \cite{FactorCorr,DieselParticles,Knowles,Marcus}. The second goal was to leverage these estimates to identify eQTLs by performing inference on $s_g$. Section~\ref{section:supp:Simulation} of the Supplement contains additional simulation details.

\begin{figure}
    \centering
    \includegraphics[height=2.75in]{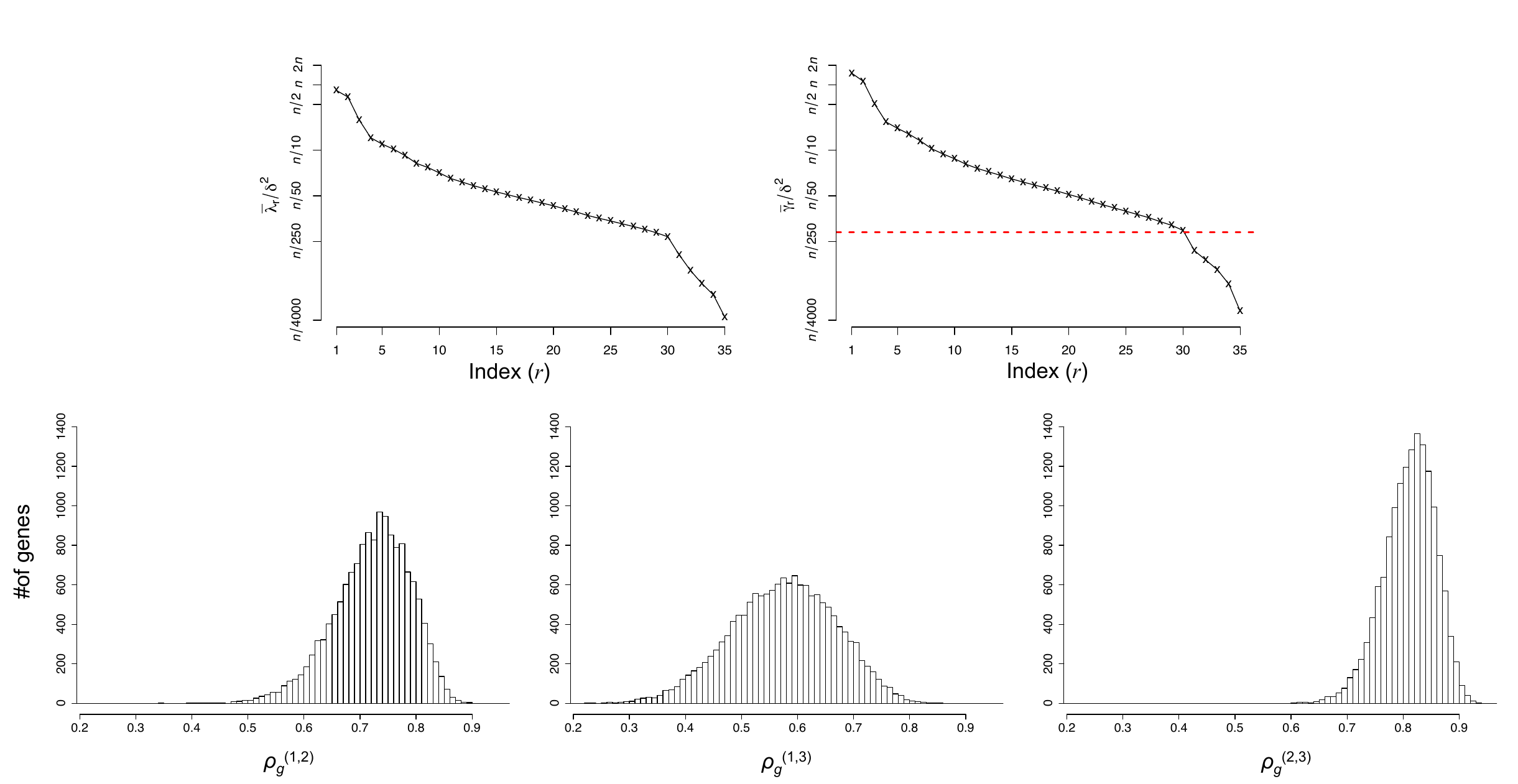}
    \caption{The average simulated $\lambda_r/\delta^2$ (top left) and $\gamma_r/\delta^2$ (top right) and the off-diagonal elements of the correlation matrices corresponding to $\bm{M}_1,\ldots,\bm{M}_p$ in one simulated dataset (bottom), where $\rho_g^{(i,j)} = \text{Corr}(\bm{R}_{g_i},\bm{R}_{g_j})$ for $i\neq j\in[3]$ and $g \in [p]$. The dashed red line is the line $y=1$.}
    \label{Fig:Mg}
\end{figure}

\subsection{Simulation results}
\label{subsection:SimResults}
\indent I first evaluated Algorithm \ref{algorithm:EstC}'s ability to recover $\lamoracle_1,\ldots,\lamoracle_{\Koracle}$ and $\Coracle$ assuming $\Koracle$ was known by comparing it to the most commonly used method to perform factor analysis in dependent biological data, PCA \citep{FactorCorr,DieselParticles,Knowles,Marcus}. The results are given in Figure~\ref{Fig:FALCO}, where the empirical factor and subspace correlations are $\vert \hat{\bm{A}}_{\bigcdot r}^{\T}\bm{A}_{\bigcdot r} \vert/(\Vert \hat{\bm{A}}_{\bigcdot r}\Vert_2 \Vert \bm{A}_{\bigcdot r}\Vert_2 )$ and $\mathop{\min}_{\bm{v}\in \im(\hat{\bm{A}})\setminus \{\bm{0}\}} \mathop{\max}_{\bm{u}\in \im(\bm{A})\setminus \{\bm{0}\}} \vert\bm{v}^{\T}\bm{u}\vert/(\Vert\bm{v}\Vert_2\Vert\bm{u}\Vert_2 )$, where $\bm{A}=\Coracle,\hat{\bm{A}}=\hat{\bm{C}}$ for FALCO and $\bm{A},\hat{\bm{A}}$ are the first $30$ right singular vectors of $\bm{Y},\bm{L}\bm{C}^{\T}$ for PCA. These demonstrate the fidelity of Algorithm \ref{algorithm:EstC}'s bias-corrected estimates for $\lamoracle_1,\ldots,\lamoracle_{\Koracle}$ and $\Coracle$ and clearly indicate that Algorithm \ref{algorithm:EstC} outperforms standard PCA. As discussed in Section \ref{subsection:FALCO}, PCA's poor performance can be attributed to the fact that the dependence between the columns of $\bm{E}$ precludes it from recovering factors with moderate to small eigenvalues.\par 

\begin{figure}[t!]
    \centering
    \includegraphics[height=1.8in]{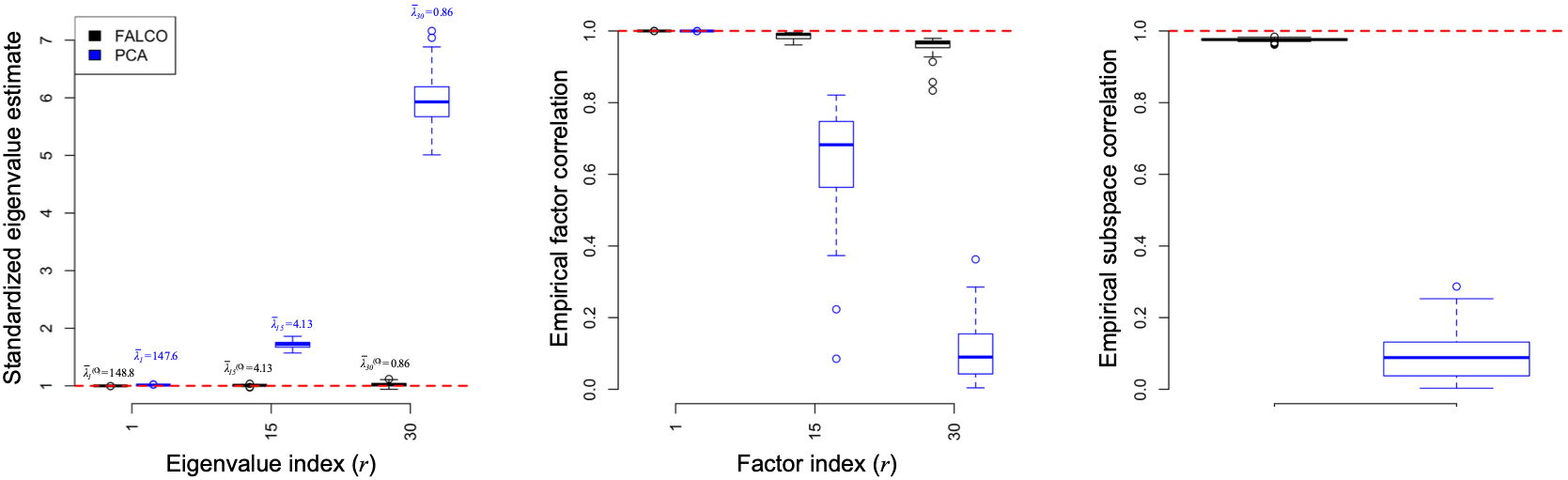}
    \caption{A comparison of Algorithm~\ref{algorithm:EstC} (FALCO) and PCA assuming $\Koracle=30$ was known, where $\bar{\lambda}_r^{\text{(O)}}$ and $\bar{\lambda}_r$ are the average simulated $\protect\lamoracle_r$ and $\lambda_r$. The standardized eigenvalue estimates for FALCO and PCA are $\hat{\lambda}_r/\protect\lamoracle_r$ and $\Lambda_r(p^{-1}\bm{Y}^{\T}\bm{Y})/\lambda_r$. Eigenvalues for points in the middle figure satisfy $\protect\lamoracle_{r-1}/\protect\lamoracle_r,\protect\lamoracle_{r}/\protect\lamoracle_{r+1}\geq 1.1$ or $\lambda_{r-1}/\lambda_r,\lambda_{r}/\lambda_{r+1}\geq 1.1$ for $\protect\lamoracle_0=\lambda_0=\infty$ and $\protect\lamoracle_{31}=0$.}
    \label{Fig:FALCO}
\end{figure}

\indent Next, I assessed my method's capacity to denoise $\bm{Y}$ and discover eQTLs by evaluating its power to identify genes $g$ with $s_g \neq 0$ when $K$ and $\Koracle$ were unknown. I compared my method to that routinely used to denoise data in dependent biological data, namely using one of the methods proposed in \citet{BaiNg} (BN), \citet{AhnHorenstein} (AH), \citet{Onatski_Corr} (ED), \citet{bcv} (BCV) or \citet{PA_Dobriban} (PA) to estimate $K$, and subsequently estimating $\bm{C}$ with PCA. Results were nearly identical when I replaced PCA with methods that attempt to account for heterogeneity across genes, like maximum quasi-likelihood \citep{BaiLi} or the algorithm proposed in \citet{bcv}. To make computation tractable and to be consistent with current practice, I estimated $\V(\bm{R}_g)$ via restricted maximum likelihood with each method's estimate for $\bm{C}$, $\hat{\bm{C}}$, by assuming $\V(\bm{R}_g)=\sigma_g^2\sum_{j=1}^6 \phi_j \bm{B}_j$, $\bm{Y}_{g \bigcdot} \sim N_n(\hat{\bm{C}}\bm{L}_{g \bigcdot},\V(\bm{R}_g))$ and $\bm{Y}_{g \bigcdot} \indep \bm{Y}_{h \bigcdot}$ for $g \neq h \in [p]$. I then estimated $s_g$ with each method via generalized least squares using the design matrix $[\bm{X}_g\, \hat{\bm{C}}]$, computed \textit{P} values with the normal approximation and used the Benjamini-Hochberg procedure \citep{BH} to control the false discovery rate.\par
\indent Figure~\ref{Fig:eQTLSimulations} contains the results. The fact that Algorithm~\ref{algorithm:CBCV} consistently estimates $\Koracle$ suggests Algorithm~\ref{algorithm:CBCV} is robust to dependencies across genomic units commonly observed in genetic and epigenetic data. The gain in power using my proposed denoised estimate for $s_g$ illustrates the importance of accounting for dependencies between $\bm{E}_{\bigcdot 1},\ldots,\bm{E}_{\bigcdot n}$ when estimating $\bm{C}$. A brief discussion of each competing method is given below.

\begin{itemize}[leftmargin=*]
    \item \textit{BN, AH, ED}: The theoretical arguments used in \citet{BaiNg,AhnHorenstein,Onatski_Corr} to prove the consistency of their estimates for $K$ and subsequent fidelity of PCA's estimate for $\bm{C}$ allow for general dependence between the entries of $\bm{E}$. However, they consistently underestimate $K$ because their theoretical arguments and estimators rely on the assumption that $\lambda_K \asymp n$ \citep{BaiNg,AhnHorenstein} or $\bm{V}_1=\cdots=\bm{V}_p$ and $\lambda_K \to \infty$ \citep{Onatski_Corr}. $\text{BN}_{IC}$ and $\text{BN}_{PC}$ in Figure~\ref{Fig:eQTLSimulations} refer to the $IC$ and $PC$ estimators defined \citet{BaiNg}.
    \item \textit{BCV}: This allows $\lambda_1\asymp n$ and $\lambda_K \lesssim 1$, but requires the entries of $\bm{E}$ be independent. When applied to the full data matrix $\bm{Y}$, denoted as $\text{BCV}_{\text{full}}$ in Figure~\ref{Fig:eQTLSimulations}, it severely overestimates $K$ because it attributes dependencies between $\bm{E}_{\bigcdot 1},\ldots,\bm{E}_{\bigcdot n}$ as arising from $\bm{L}\bm{C}^{\T}$. To circumvent this problem, I adopted a common strategy and let $\text{BCV}_{\text{ind}}$ be the estimator that applies BCV to each of the three conditions separately \citep{GTEX,StephensMultiTissue}, where an accurate estimate for $K$ would now be $\approx 3\times 35=105$. However, this effectively reduces the sample size by 67\%, which causes $\text{BCV}_{\text{ind}}$ to underestimate $K$.
    \item \textit{PA (Parallel Analysis)}: This method and BCV rely on similar assumptions, except it requires $\lambda_1 = o(n)$ and suffers from the ``eigenvalue shadowing'' problem in which factors with large eigenvalues preclude it from recovering those with moderate to small eigenvalues \citep{PA_Dobriban}. This explains why $\text{PA}_{\text{full}}$'s and $\text{PA}_{\text{ind}}$'s, the analogues of $\text{BCV}_{\text{full}}$ and $\text{BCV}_{\text{ind}}$, estimates for $K$ in Figure~\ref{Fig:eQTLSimulations} are smaller than $\text{BCV}_{\text{full}}$'s and $\text{BCV}_{\text{ind}}$'s.
\end{itemize}



\begin{figure}[t!]
\centering
\includegraphics[height=2.25in]{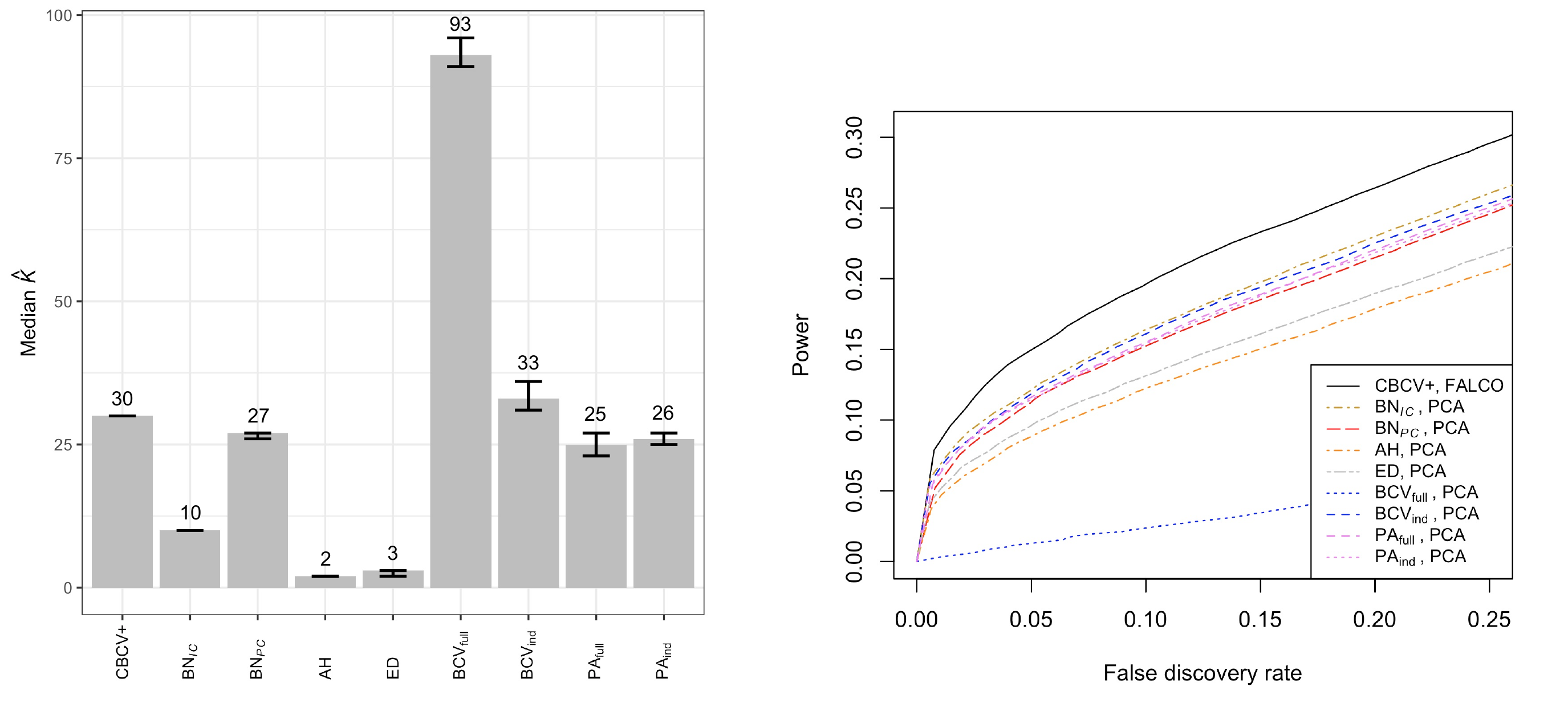}
\caption{Estimates for $K$ (left) and each method's power to identify non-zero $s_g$ (right), where error bars give the first and third quartile and the numbers above each bar denote each method's median estimate. ``CBCV+'' refers to Algorithm~\ref{algorithm:CBCV}, and my method, ``CBCV+, FALCO'', uses Algorithm~\ref{algorithm:CBCV} to estimate $\Koracle=30$ and  Algorithm~\ref{algorithm:EstC} to estimate $\Coracle$. At a 5\% false discovery rate, my method identified 25\% more eQTLs than the next most powerful method.} \label{Fig:eQTLSimulations}
\end{figure}

\section{Data application}
\label{section:RealData}
\indent I analyzed data from \citet{Knowles} to illustrate the power of Algorithm's~\ref{algorithm:EstC} and \ref{algorithm:CBCV} when applied to modern genetic data with dependent samples and large, moderate and small eigenvalues $\lambda_1,\ldots,\lambda_K$. As shown in Figure~\ref{Fig:Knowles:Experiment}, \citeauthor{Knowles} measured the expression of $p=12317$ genes in cardiomyocytes procured from 45 individuals, where each individual's cardiomyocytes were treated \textit{in vitro} with five dosages of the chemotherapeutic agent doxorubicin ($n = 45 \times 5$). The genotypes at $\approx 3\times 10^6$ SNPs were also collected for each individual. This non-trivial experimental design, coupled with the fact that, as shown in Figure~\ref{Fig:Knowles:Eig}, $\lambda_1,\ldots,\lambda_K$ appear to span several orders of magnitude, suggests existing methods are not equipped to perform factor analysis on these data.\par 
\indent One of the goals of this experiment was to identify eGenes, defined as genes whose expression under these conditions was regulated by at least one eQTL in the gene's \textit{cis} region. To do so, Theorem~\ref{theorem:eQTL} and the simulations in Section~\ref{section:Simulation} suggest one can empower such inference by estimating $\bm{C}$ and denoising the expression matrix. I therefore modeled $\bm{Y}$ as
\begin{align}
\label{equation:KnowleseQTL}
    \bm{Y}_{gi} = \bm{\Gamma}_{g \bigcdot}^{\T}\bm{Z}_{i \bigcdot} + \bm{L}_{g \bigcdot}^{\T}\bm{C}_{i \bigcdot}+\bm{E}_{gi}, \quad \bm{E}_{gi} = x_{g,m(i)} s_{g,d(i)} + \bm{R}_{gi}, \quad g\in[p]; i\in[n],
\end{align}
where $\bm{Z} \in \mathbb{R}^{n \times 5}$ contains the dose level-specific intercepts, $x_{g,m} \in \{0,1,2\}$ is individual $m$'s genotype at gene $g$'s potential eQTL, $s_{g,d} \in \mathbb{R}$ is the potential eQTL's effect for dose level $d \in [5]$ and $m(i)$ and $d(i)$ are the individual and dose level for sample $i$. Since $\bm{\Gamma}$ was of little interest in \citet{Knowles}, $\bm{Z}$ was treated as a nuisance covariate. While individuals were sampled from a founder population, I found no relationship between $\bm{Y}$ and the known kinship matrix. Therefore, I assumed $\bm{E}_{gi} \edist \bm{E}_{gi'}$ for $d(i)=d(i')$ and $\bm{E}_{gi} \indep \bm{E}_{gi'}$ for $m(i) \neq m(i')$, meaning the covariance of $\bm{e}_{g,m'} = (\bm{E}_{gi})_{\{i\in[n]:m(i) = m'\}} \in \mathbb{R}^5$ completely described $\bm{V}_g=\V(\bm{E}_{g \bigcdot})$. Initial data exploration then revealed that a suitable model for $\V(\bm{e}_{g,m'})$ was $\V(\bm{e}_{g,m'}) = \alpha_g^2 \bm{1}_5 \bm{1}_5^{\T} + \sum_{d=1}^5 \phi_{g,d}^2 \bm{a}_d \bm{a}_d^{\T}$ for all $g \in [p]$ and $m'\in[45]$, where $\bm{a}_d \in \{0,1\}^5$ is 1 in the $d$th coordinate and 0 everywhere else, meaning $\bm{V}_g$ followed \eqref{equation:Vmodel} with $b=6$.\par 
\indent I first used my method and each competing method described in Section~\ref{section:Simulation} to estimate $\bm{C}$ and, to investigate the latent variation explained by each method, the resulting mean marginal variance $\bar{\sigma}_d^2=p^{-1}\sum_{g=1}^p (\alpha_g^2 + \phi_{g,d}^2)$ for each dose level $d \in [5]$. Figure~\ref{Fig:Knowles:Var} contains the results, where the methods AH, $\text{BCV}_{\text{full}}$ and $\text{PA}_{\text{full}}$ estimated $K$ to be 2, 89 and 21, and were excluded because they were outperformed by ED, $\text{BCV}_{\text{ind}}$ and $\text{PA}_{\text{ind}}$, respectively, in all comparisons. First, while CBCV+ is nominally a stochastic algorithm, there was no variation in its estimate. This is contrary to BCV, whose stochasticity gives rise to a highly variable estimator. Second, and perhaps most interestingly, is that my method's estimates for $\bar{\sigma}_d^2$ are the only estimates that are strictly increasing in administered doxorubicin dose. While not explored in \citet{Knowles}, this is consistent with the observation that doxorubicin disrupts cardiomyocyte homeostasis in an individual- and dose-specific manner \citep{Doxtox}.\par 
\indent I next evaluated each method's ability to denoise $\bm{Y}$ and identify eGenes. I first pruned SNPs for linkage disequilibrium and mapped SNPs with minor allele frequencies $\geq 5\%$ to each gene's \textit{cis} region. I used \eqref{equation:KnowleseQTL} to model expression for each gene-SNP pair, where like Section~\ref{subsection:SimResults}, I estimated $\V(\bm{R}_{g \bigcdot})$ using restricted maximum likelihood with each method's estimate for $\bm{C}$, $\hat{\bm{C}}$, assuming $\V(\bm{R}_{g \bigcdot}) = \sigma_g^2(\bar{\alpha}^2\bm{B}_{\alpha} + \sum_{d=1}^5 \bar{\phi}_d^2 \bm{B}_{d})$, $\bm{Y}_{g\bigcdot}\sim N_n(\bm{Z}\bm{\Gamma}_{g \bigcdot}+\hat{\bm{C}}\bm{L}_{g\bigcdot},\V(\bm{R}_{g \bigcdot}))$ and $\bm{Y}_{g\bigcdot} \indep \bm{Y}_{h\bigcdot}$ for $g \neq h \in [p]$. I computed \textit{P} values for the null hypotheses $H_0: s_{g,1}=\cdots=s_{g,5}=0$ using the normal approximation and used TreeQTL \citep{TreeQTL} to identify eGenes at a 5\% false discovery rate. Lastly, I examined the overlap between each method's reported eGenes and those identified in heart tissues in GTEx, a comprehensive public resource containing tissue-specific eQTLs \citep{GTEX}, to assess the veracity of each method's findings. The results are presented in Figure~\ref{Fig:Knowles:eQTL}.\par
\indent While the fraction of eGenes identified by each method that overlap with GTEx-identified eGenes is relatively consistent across methods, my method identifies 48\% more eGenes than the next most powerful method. Further, over 70\% of the eGenes identified by the next three most powerful methods were also identified using my method. Like the simulation results from Section~\ref{subsection:SimResults}, this suggests my method's denoised estimates are far more powerful than those from existing methods, and highlights the importance of recovering factors with ostensibly moderate or weak signal strengths. While my and \citeauthor{Knowles}'s results are not directly comparable because the latter ignored the heterogeneity in dose-specific variances, it is worth noting that I identify over 20\% more eGenes than \citeauthor{Knowles}, who chose $K$ to maximize the number of detected eGenes.

\begin{figure}[t!]
\centering
\subfigure[][]{
\label{Fig:Knowles:Experiment}
\includegraphics[height=1.5in]{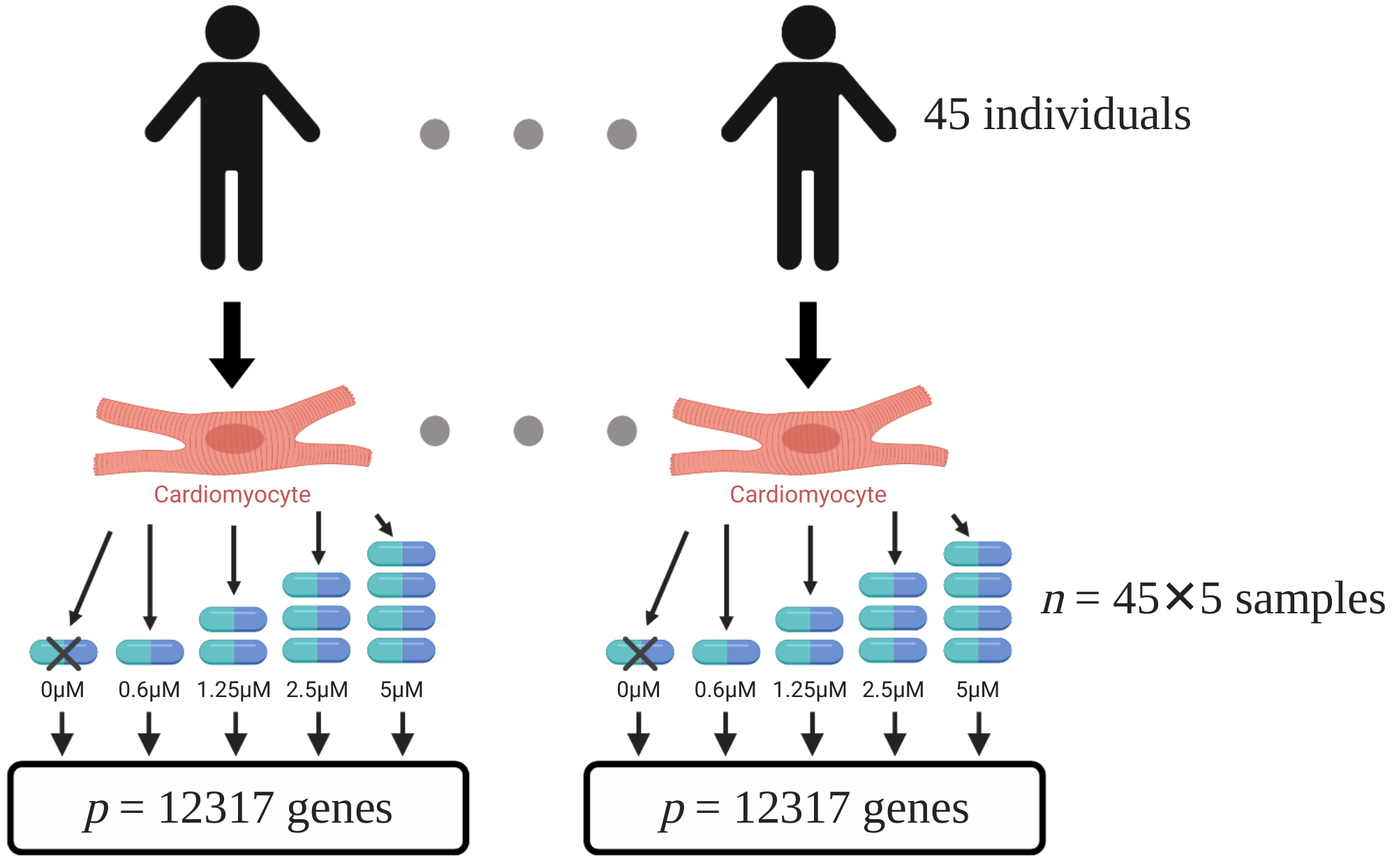}}
\hspace{8mm}
\subfigure[][]{
\label{Fig:Knowles:Eig}
\includegraphics[height=1.5in]{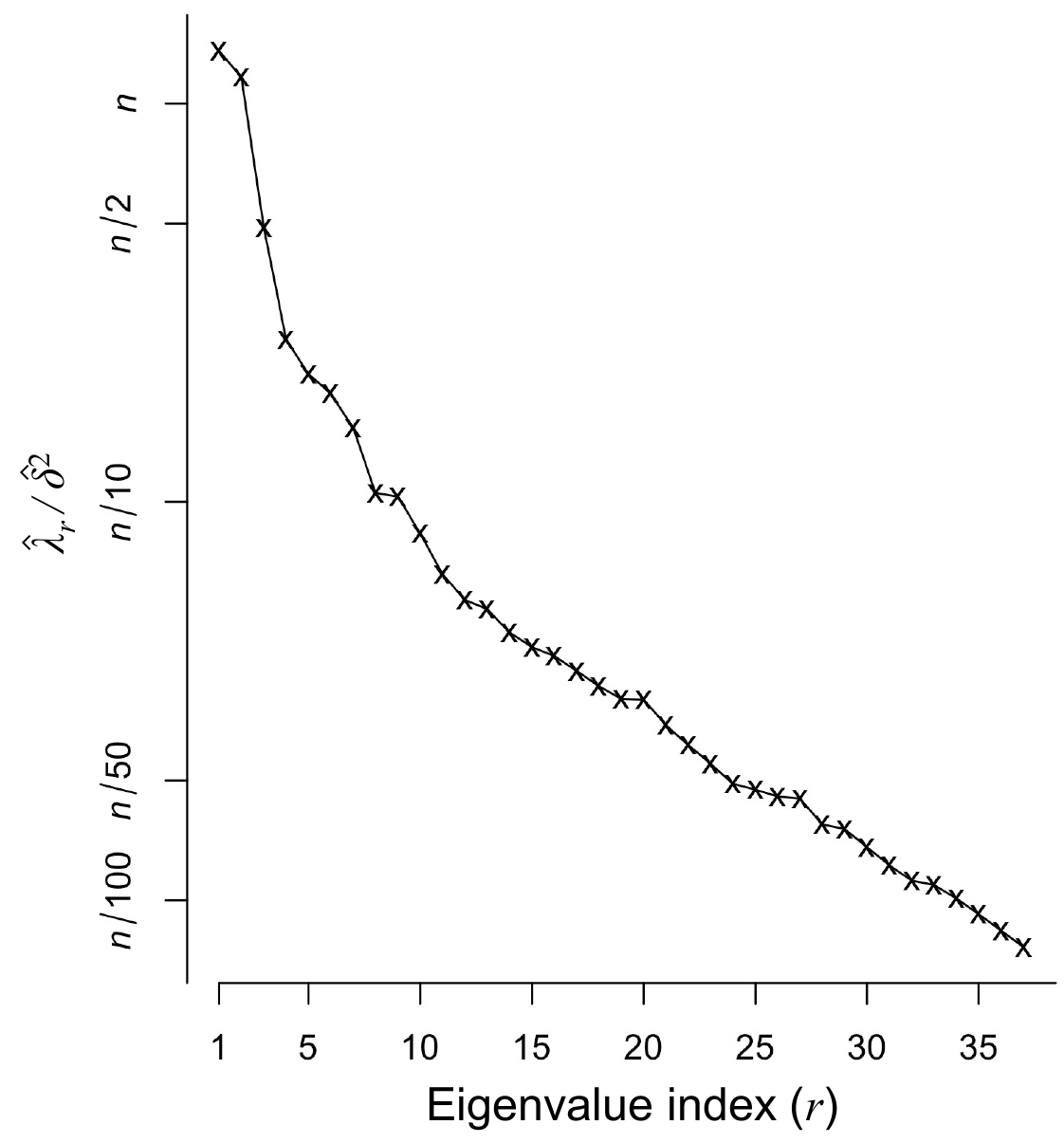}}\\
\subfigure[][]{
\label{Fig:Knowles:Var}
\includegraphics[height=1.5in]{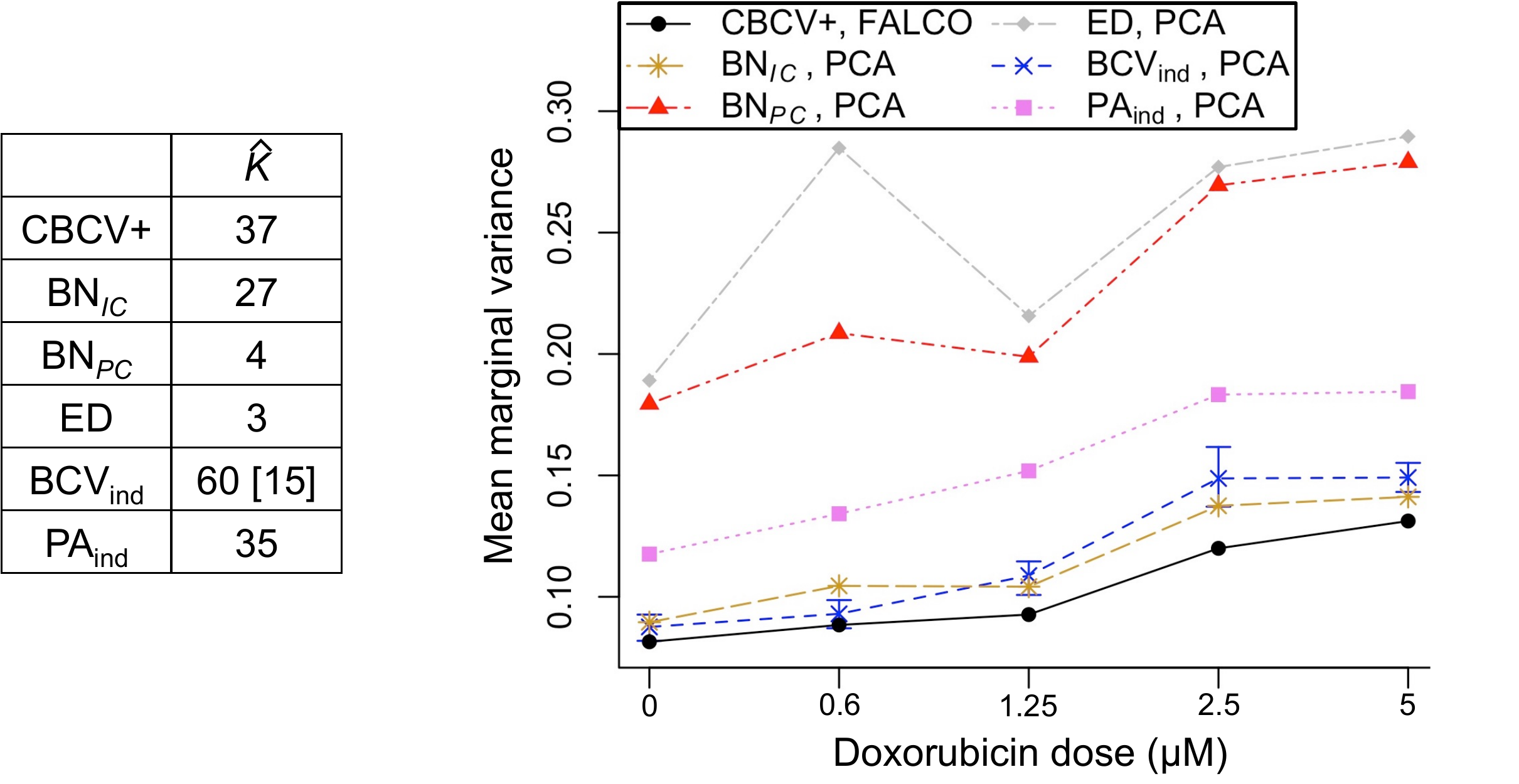}}
\subfigure[][]{
\label{Fig:Knowles:eQTL}
\includegraphics[height=1.5in]{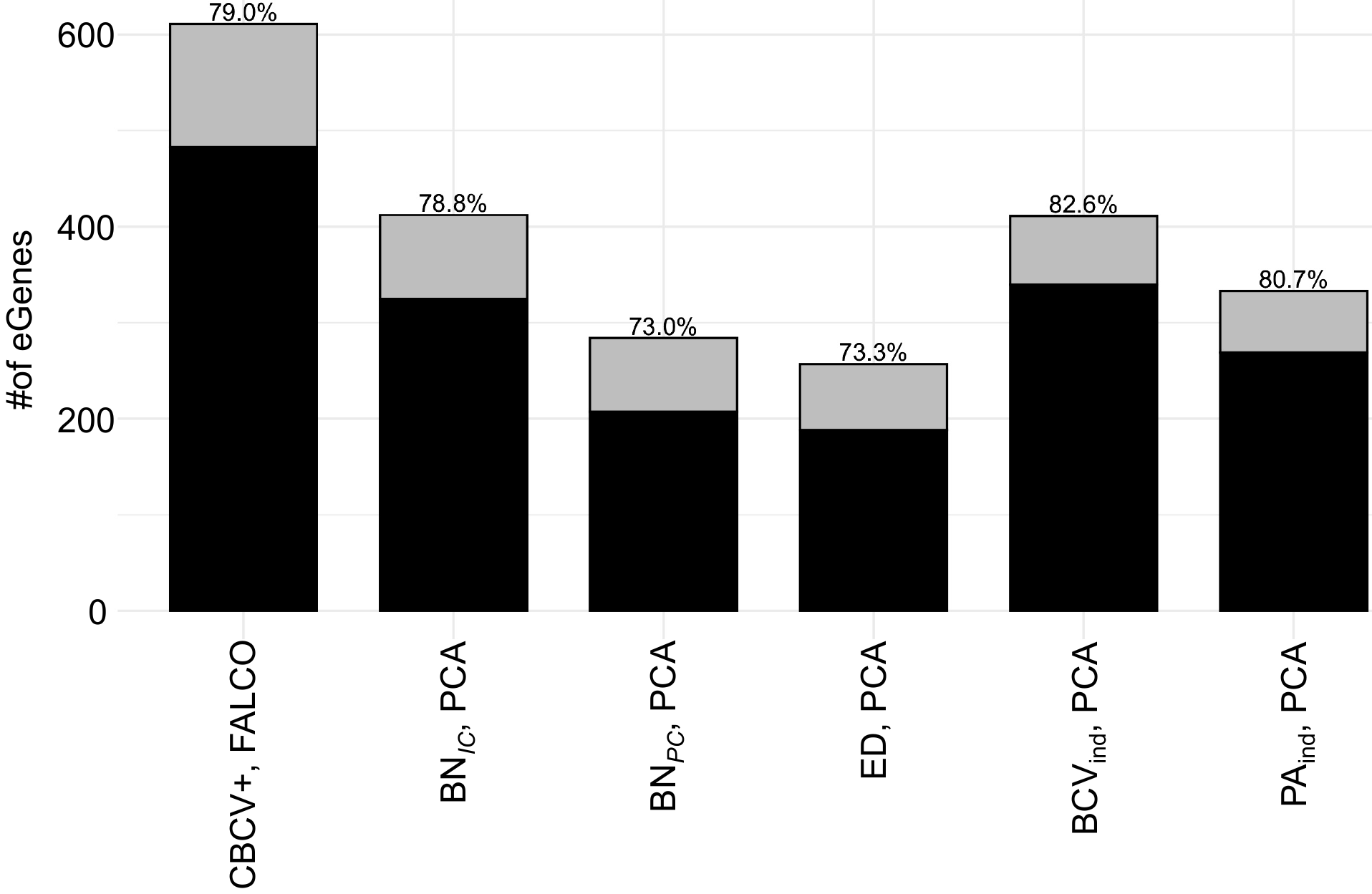}}
\caption{(a): Experimental design from \citet{Knowles}. (b): Algorithm~\ref{algorithm:EstC}-derived estimates for $\lamoracle_r$. (c): Estimates for $K$ or $\Koracle$ and the resulting estimated mean marginal variances. Numbers in square brackets and the error bars give the interquartile range and first or third quartiles, respectively, for variable stochastic algorithms. (d): Number of significant eGenes identified using each method that are also (black) and are not (grey) eGenes in heart tissues in GTEx, where the number above each bar gives the fraction that are also eGenes in heart tissues in GTEx; enrichment \textit{P} values for each method were $1.5\times 10^{-24}, 1.3\times 10^{-16},3.8\times 10^{-8},1.2\times 10^{-7},3.3\times 10^{-20},3.2\times 10^{-15}$, respectively. $\text{BCV}_{\text{ind}}$ was applied with $\hat{K}=60$, its median estimate.} 
\end{figure}

\section{Discussion}
\label{section:Discussion}
In this work, I developed a novel framework and new, provably accurate methodology to perform factor analysis, interpret its results and utilize its estimates in modern high throughput biological data with non-trivial dependence structures and factors whose signal strengths span several orders of magnitude. I also showed that my estimate for $K$ circumvents the ill-reputed ``eigenvalue shadowing'' problem, as well the biases that accompany the ``pervasive factor'' assumption. I lastly used simulated and real genetic data to illustrate the power of my methodology in application.\par 
\indent My results and those from the existing literature suggest there is a trade-off between two critical assumptions in high dimensional factor analysis: either allow the columns of $\bm{E}$ to have unknown dependence structure but require $\lambda_K \to \infty$, or allow $\lambda_K \lesssim 1$ but assume the practitioner knows $\bm{B}_1,\ldots,\bm{B}_b$. While it can be argued how relevant such prior knowledge is in other disciplines, biological practitioners have intimate knowledge of the experimental design and data collection process, and therefore will likely know $\bm{B}_1,\ldots,\bm{B}_b$. Given the results from Section~\ref{section:RealData}, this suggests biological statisticians should worry less about developing methodology that satisfies the aesthetically pleasing assumption that the entries of $\bm{E}$ have arbitrary dependence, and focus more on methodology that can accommodate data with strong, moderate and weak factors.

\section*{Acknowledgements}
I thank Carole Ober for providing the genetic data from Section~\ref{section:RealData}, which motivated this research. I also thank Dan Nicolae for his useful comments and suggestions that have substantially improved this manuscript. This work is supported in part by NIH grant R01 HL129735.







\newpage
\printbibliography 



\setcounter{equation}{0}
\setcounter{theorem}{0}
\setcounter{assumption}{0}
\setcounter{lemma}{0}
\setcounter{corollary}{0}
\setcounter{table}{0}
\setcounter{section}{0}
\setcounter{remark}{0}
\renewcommand{\theequation}{S\arabic{equation}}
\renewcommand{\thesection}{S\arabic{section}}
\renewcommand{\thetheorem}{S\arabic{theorem}}
\renewcommand{\theassumption}{S\arabic{assumption}}
\renewcommand{\thelemma}{S\arabic{lemma}}
\renewcommand{\thecorollary}{S\arabic{corollary}}
\renewcommand{\theproposition}{S\arabic{proposition}}
\renewcommand{\theremark}{S\arabic{remark}}

\newpage

\begin{center}
    {\Large Supplementary material for ``Factor analysis in high dimensional biological data with dependent observations''}
\end{center}
\vspace{4mm}

\section{Additional simulation details}
\label{section:supp:Simulation}
Here I provide the values for $\pi_k$ and $\tau_k^2$, defined in \eqref{equation:SimParams}, used to simulate $\bm{L}$.

\begin{minipage}{1\linewidth}
\small
\begin{tabular}{c|c|c|c|c|c|c|c|c|c|c|c|c}
$k$ & 1 & 2 & 3 & 4 & 5 & 6 & 7 & 8 & 9 & 10 & 11 & 12\\\hline
$\pi_k$ & 1 & 1 & 1 & 1 & 1 & 1 & 1 & 1 & 1 & 1 & 1 & 1\\\hline
$\tau_k$ & 0.89 & 0.83 & 0.53 & 0.39 & 0.35 & 0.33 & 0.29 & 0.24 & 0.24 & 0.22 & 0.19 & 0.18
\end{tabular}
\end{minipage}

\vspace{4mm}

\begin{minipage}{1\linewidth}
\small
\begin{tabular}{c|c|c|c|c|c|c|c|c|c|c|c|c}
$k$ & 13 & 14 & 15 & 16 & 17 & 18 & 19 & 20 & 21 & 22 & 23 & 24\\\hline
$\pi_k$ & 1 & 1 & 1 & 1 & 1 & 1 & 1 & 1 & 1 & 0.95 & 0.78 & 0.72\\\hline
$\tau_k$ & 0.18 & 0.17 & 0.16 & 0.15 & 0.14 & 0.14 & 0.13 & 0.13 & 0.12 & 0.12 & 0.12 & 0.12
\end{tabular}
\end{minipage}

\vspace{4mm}

\begin{minipage}{1\linewidth}
\small
\begin{tabular}{c|c|c|c|c|c|c|c|c|c|c|c}
$k$ & 25 & 26 & 27 & 28 & 29 & 30 & 31 & 32 & 33 & 34 & 35\\\hline
$\pi_k$ & 0.69 & 0.62 & 0.59 & 0.52 & 0.45 & 0.44 & 0.19 & 0.15 & 0.11 & 0.07 & 0.03\\\hline
$\tau_k$ & 0.12 & 0.12 & 0.12 & 0.12 & 0.12 & 0.12 & 0.12 & 0.12 & 0.12 & 0.12 & 0.12
\end{tabular}
\end{minipage}

\section{Notation used for the remainder the Supplementary Material}
\label{section:supp:Notation}
In addition to the notation used throughout the main text, I use the following notation throughout the remainder of the supplement. For any matrix $\bm{M} \in \mathbin{R}^{n \times m}$, define $\bm{Q}_M \in \mathbb{R}^{m \times \dim\left\lbrace \ker\left(\bm{M}^{\T}\right) \right\rbrace}$ to be a matrix whose columns form an orthonormal basis for $\ker\left(\bm{M}^{\T}\right)$. For any $\bm{x} \in \mathbb{R}^b$, define $\bm{V}\left(\bm{x}\right) = \sum_{j=1}^b \bm{x}_j\bm{B}_j$. Unless otherwise stated, for any sequence $\bm{X}_n \in \mathbb{R}^{r \times s}$, $n \geq 1$, I use the notation $\bm{X}_n = O_P\left(a_n\right)$ and $\bm{X}_n=o_P\left(a_n\right)$ if $\norm{\bm{X}_n}_2/a_n = O_P(1)$ and $\norm{\bm{X}_n}_2/a_n = o_P(1)$ as $n \to \infty$, respectively, where $\norm{\bm{X}_n}_2$ is the usual operator norm. We treat vectors $\bm{X}_n \in \mathbb{R}^r$ as matrices with one column.

\section{Technical conditions for theory presented in the main text}
\label{section:supp:TechCondition}
\subsection{Theorem \ref{theorem:L}}
\label{subsection:supp:TechL}
Recall \eqref{equation:Lgls} in Theorem \ref{theorem:L} required additional technical assumptions. Assumption \ref{assumption:supp:techL} below lists said conditions.
\begin{assumption}
\label{assumption:supp:techL}
Let $c > 1$ be a constant not dependent on $n$ or $p$, $r \in [\Koracle]$ be as defined in the statement of Theorem \ref{theorem:L} and $g \in [p]$.
\begin{enumerate}[label=(\alph*)]
    \item $\bm{E}_{g \bigcdot}$ is dependent on at most $c$ rows of $\bm{E}$, and is independent of all others.
    \item $\Prob\{F_r^{(\epsilon)}\} \to 1$ as $n,p \to \infty$.\label{item:supp:L:F}
    \item Let $\hat{\bm{v}}_g$ be the restricted maximum likelihood estimate (REML) for $\bm{v}_g$ defined in Remark \ref{remark:Alg1:Vg}, where $\hat{\bm{V}}_g = \bm{V}(\hat{\bm{v}}_g)$. Then the optimization to determine $\hat{\bm{v}}_g$ is restricted to the parameter space $\Theta_*$.
    \item For $\bm{a}_r \in \mathbb{R}^K$ the $r$th standard basis vector, the quantity
    \begin{align*}
        [\{ (\bm{C}^{\T}\bm{V}_g^{-1}\bm{C})^{-1} \}_{rr}]^{-1/2}\bm{a}_{r}^{\T}\left(\bm{C}^{\T}\bm{V}_g^{-1}\bm{C}\right)^{-1}\bm{C}^{\T}\bm{V}_g^{-1} \bm{E}_{g \bigcdot}
    \end{align*}
    is asymptotically $N(0,1)$.\label{item:supp:L:Normal}
\end{enumerate}
\end{assumption}
Asymptotic normality in \ref{item:supp:L:Normal} is satisfied in the following general scenario:
\begin{enumerate}[label=(\arabic*)]
    \item $\bm{E}_{g \bigcdot} \edist \bm{A}_g \tilde{\bm{e}}_{g}$, where $\bm{A}_g\bm{A}_g^{\T}=\bm{V}_g$ and $\tilde{\bm{e}}_{g} \in \mathbb{R}^n$ has independent entries with uniformly bounded sub-Gaussian norm.
    \item The entries of $\bm{A}_g^{-1/2}\bm{C}$ have uniformly bounded fourth moments.
\end{enumerate}
If \ref{item:supp:L:Normal} does not hold but all other conditions do hold, \eqref{equation:Lgls} can be replaced with
\begin{align*}
    [\lbrace(\hat{\bm{C}}^{\T}\hat{\bm{V}}_g^{-1}\hat{\bm{C}})^{-1}\rbrace_{rr}]^{-1/2}\lbrace \hat{\bm{L}}_{g r}^{(GLS)} - a \Loracle_{g r} \rbrace \edist W + o_P(1),
\end{align*}
where $W\asim (0,1)$. I give sufficient conditions to guarantee that \ref{item:supp:L:F} holds in Section \ref{subsection:supp:techF} below.

\subsection{Theorem \ref{theorem:XandC}}
\label{subsection:supp:TechC}
The conditions referenced in the statement of Theorem \ref{theorem:XandC} are given below. 
\begin{assumption}
\label{assumption:supp:techC}
Let $r \in [K],\bm{X},\bm{R}$ be as defined in Theorem \ref{theorem:XandC} and $\tilde{c} > 1$ be a constant not dependent on $n$ or $p$.
\begin{enumerate}[label=(\alph*)]
    \item $\hat{\bm{\theta}}$ is restricted to the convex set $\{ \bm{\theta} \in \mathbb{R}^b : (2c)^{-1}I_n \preceq \bm{V}(\bm{\theta}) \preceq 2bc^2 I_n \}$, where $c$ is as defined in the first paragraph of Section~\ref{subsection:Assumptions}.\label{item:supp:C:theta}
    \item $\norm{ n^{-1}\bm{C}^{\T}\bm{C} - \E(n^{-1}\bm{C}^{\T}\bm{C}) }_2 = O_P(n^{-1/2})$.\label{item:supp:C:EC}
    \item Let $\bm{a}_r \in \mathbb{R}^{K}$ be the $r$th standard basis vector. Then for $\tilde{\bm{\theta}} \in \mathbb{R}^b$ such that $\tilde{\bm{\theta}}_j = \bm{a}_r^{\T} \bm{\Psi}_j \bm{a}_r$ for all $j \in [b]$, $[\bm{X}^{\T}\{\bm{V}(\tilde{\bm{\theta}})\}^{-1}\bm{X}]^{-1/2}\bm{X}^{\T}\{\bm{V}(\tilde{\bm{\theta}})\}^{-1}\bm{R}_{\bigcdot r}$ is asymptotically $N(0,1)$.\label{item:supp:C:Normal}
\end{enumerate}
\end{assumption}
It is straightforward to find general conditions when item \ref{item:supp:C:EC} is satisfied (see Remark \ref{remark:supp:CrandomCond}, for example). Asymptotic normality holds under the following conditions:
\begin{enumerate}[label=(\arabic*)]
    \item $\vecM(\bm{R}) \edist \bm{D}\bm{\Xi}$, where $\bm{D}$ is a non-random matrix that satisfies $\bm{D}\bm{D}^{\T} = \sum_{j=1}^b \bm{\Psi}_j \otimes \bm{B}_j$ and $\bm{\Xi} \in \mathbb{R}^{nK}$ is a random matrix with independent entries such that $\E(\bm{\Xi})=\bm{0}$, $\E(\bm{\Xi}_i^2)=1$ and $\E(\bm{\Xi}_i^4)<\tilde{c}$ for all $i \in [nK]$.
    \item Let $\bm{D}=(\bm{D}_{rs})_{r,s \in [K]}$, where $\bm{D}_{rs} \in \mathbb{R}^{n \times n}$. Then the entries of $\bm{D}_{rs}\bm{X}$ have uniformly bounded fourth moments. Sufficient conditions for this to hold are:
    \begin{enumerate}[label=(\alph*)]
        \item $\bm{X}$ has mean $\bm{0}$ and is sub-exponential with uniformly sub-exponential norm (see \citet{Vershynin} for a definition of sub-exponential random vectors). This follows from the fact that the columns of $\bm{D}$ have uniformly bounded 2-norm.
        \item $\bm{X} \edist \bm{H}\tilde{\bm{X}}$, where $\bm{H} \in \mathbb{R}^{n \times n}$ is a non-random matrix with $\norm{\bm{H}}_2 \leq c$ and $\tilde{\bm{X}}$ is mean $\bm{0}$, has independent entries and $\E(\tilde{\bm{X}}_i^4) < c$ for all $i \in [n]$.
    \end{enumerate}
\end{enumerate}
If item \ref{item:supp:C:Normal} in Assumption \ref{assumption:supp:techC} does not hold, $Z$ in \ref{item:XandC:H0} of Theorem \ref{theorem:XandC} can be replaced with $W$ for $W \asim (0,1)$.

\subsection{Conditions that guarantee $\Prob\{F_r^{(\epsilon)}\} \to 1$}
\label{subsection:supp:techF}
Here I give the conditions necessary to ensure $\Prob\{F_r^{(\epsilon)}\} \to 1$ as $n,p \to \infty$, where $F_r^{(\epsilon)}$ was defined in Theorem \ref{theorem:L}. I study this by considering two scenarios: $\Koracle=K$ and $\Koracle < K$.

\begin{proposition}
\label{proposition:supp:PF1}
Suppose Assumptions \ref{assumption:CandL} and \ref{assumption:DependenceE} hold with $K=\Koracle$. Then for $\tau_r=\Lambda_{r}\{\E(p^{-1}\bm{L}\bm{C}^{\T}\bm{C}\bm{L}^{\T})\}$ and $\tau_{K+1}=0 < \tau_K \leq \cdots \leq \tau_1 < \tau_0=\infty$, $\lim_{n,p \to \infty}\Prob\{F_r^{(\epsilon)}\} = 1$ if $\tau_{r-1}/\tau_r,\tau_{r}/\tau_{r+1} \geq 1+\beta$ and $\norm{ \E(n^{-1}\bm{C}^{\T}\bm{C}) - n^{-1}\bm{C}^{\T}\bm{C} }_2 = o_P(1)$ as $n,p \to \infty$, where $\beta > 0$ is a constant that does not depend on $n$ or $p$.
\end{proposition}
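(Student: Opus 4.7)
The plan is to reduce $\lamoracle_r$ and $\tau_r$ to eigenvalues of symmetric $K\times K$ matrices and then apply a multiplicative perturbation bound to show $\lamoracle_r/\tau_r = 1+o_P(1)$, from which the eigengap event $F_r^{(\epsilon)}$ follows for any $\epsilon\in(0,\beta)$.

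First, because the IC3 conditions give $\Loracle\{\Coracle\}^{\T}=\bm{L}\bm{C}^{\T}$ and $n^{-1}\{\Coracle\}^{\T}\Coracle=I_K$, one has $\lamoracle_r=\Lambda_r(p^{-1}\bm{L}\bm{C}^{\T}\bm{C}\bm{L}^{\T})$. Setting $\bm{\Phi}=np^{-1}\bm{L}^{\T}\bm{L}$ and $\bm{C}_n=n^{-1}\bm{C}^{\T}\bm{C}$ and using the fact that the non-zero eigenvalues of $XY$ and $YX$ coincide, one obtains $\lamoracle_r=\Lambda_r(\bm{\Phi}^{1/2}\bm{C}_n\bm{\Phi}^{1/2})$ and, by the same argument applied to the expectation, $\tau_r=\Lambda_r(\bm{\Phi}^{1/2}\E(\bm{C}_n)\bm{\Phi}^{1/2})$. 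Both $\bm{\Phi}$ and $\E(\bm{C}_n)$ are symmetric and positive definite: $\bm{\Phi}\succ \bm{0}$ since $\gamma_K>0$ in Assumption \ref{assumption:CandL}\ref{item:assumCandL:Eigen} forces $\bm{L}$ to have column rank $K$, and a sandwich argument comparing $\bm{\Psi}_n=\E\{n^{-1}\bm{C}^{\T}(\delta^{-2}\bar{\bm{V}})^{-1}\bm{C}\}$ with $\E(\bm{C}_n)$, together with the bounded spectrum of $\bar{\bm{V}}$ from Assumption \ref{assumption:CandL}\ref{item:assumErrors:V}, shows $c_\ast^{-1}I_K\preceq \E(\bm{C}_n)\preceq c_\ast I_K$ for a constant $c_\ast>0$ independent of $n,p$.

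Next, write $A=\bm{\Phi}^{1/2}\E(\bm{C}_n)\bm{\Phi}^{1/2}$ and $B=\bm{\Phi}^{1/2}\bm{C}_n\bm{\Phi}^{1/2}$, and factor $B = A^{1/2}(I_K+E)A^{1/2}$ with $E = A^{-1/2}\bm{\Phi}^{1/2}\{\bm{C}_n-\E(\bm{C}_n)\}\bm{\Phi}^{1/2}A^{-1/2}$. A direct computation yields $\bm{\Phi}^{1/2}A^{-1}\bm{\Phi}^{1/2} = \E(\bm{C}_n)^{-1}$, so
\[
\|E\|_2 \;\leq\; \|\E(\bm{C}_n)^{-1}\|_2\,\|\bm{C}_n - \E(\bm{C}_n)\|_2 \;=\; O(1)\cdot o_P(1) \;=\; o_P(1).
\]
On the event $\{\|E\|_2<1\}$, $(1-\|E\|_2)A \preceq B \preceq (1+\|E\|_2)A$, and Weyl's monotonicity applied to each eigenvalue gives $\lamoracle_r/\tau_r = 1+o_P(1)$ uniformly in $r\in[K]$. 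Combined with $\tau_{r-1}/\tau_r,\tau_r/\tau_{r+1}\geq 1+\beta$ (the boundary cases $\tau_0=\infty$, $\tau_{K+1}=0$ are handled automatically since $\Loracle\{\Coracle\}^{\T}$ has rank at most $K$, so that $\lamoracle_{K+1}=0$), this yields $\lamoracle_{r-1}/\lamoracle_r \geq (1+\beta)(1-o_P(1))/(1+o_P(1)) \geq 1+\epsilon$ with probability tending to one whenever $\epsilon<\beta$, and similarly for $\lamoracle_r/\lamoracle_{r+1}$.

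The main obstacle is that the naive additive Weyl bound $|\lamoracle_r-\tau_r|\leq \|\bm{\Phi}\|_2\|\bm{C}_n-\E(\bm{C}_n)\|_2 = O_P(n)\cdot o_P(1)$ is uninformative for weak factors with $\tau_r\asymp 1$. The multiplicative factorization through $A$ is what converts the $\|\bm{\Phi}\|_2=O(n)$ factor into $\|\E(\bm{C}_n)^{-1}\|_2=O(1)$ and delivers a relative error that is uniformly small across the full range of signal strengths allowed by Assumption \ref{assumption:CandL}\ref{item:assumCandL:Eigen}.
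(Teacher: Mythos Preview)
Your proof is correct and follows essentially the same route as the paper: the paper's one-line proof invokes Lemma~\ref{lemma:supp:EigBound}, which (after the harmless normalization $\E(\bm{C}_n)=I_K$, $np^{-1}\bm{L}^{\T}\bm{L}$ diagonal) gives exactly the multiplicative eigenvalue bound $\lamoracle_r\in[\tau_r(1-o_P(1)),\tau_r(1+o_P(1))]$ that you obtain via the Loewner sandwich $(1-\|E\|_2)A\preceq B\preceq(1+\|E\|_2)A$. Your factorization $B=A^{1/2}(I_K+E)A^{1/2}$ together with $\|A^{-1/2}\bm{\Phi}^{1/2}\|_2^2=\|\E(\bm{C}_n)^{-1}\|_2$ is just an in-line proof of the same Ostrowski-type inequality without first diagonalizing, so the two arguments are equivalent.
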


\begin{proof}
This follows directly from Lemma \ref{lemma:supp:EigBound}.
\end{proof}

I next state and prove an analogous Proposition when $\Koracle<K$.

\begin{proposition}
\label{proposition:supp:PF2}
Suppose Assumptions \ref{assumption:CandL} and \ref{assumption:DependenceE} hold with $\Koracle < K$, and without loss of generality, assume $\E(n^{-1}\bm{C}^{\T}\bm{C})=I_K$ and $\bm{D} = np^{-1}\bm{L}^{\T}\bm{L} = \diag\left(\tau_1,\ldots,\tau_K\right)$, where $\tau_1 \geq \cdots \geq \tau_K > 0$ are defined in Proposition \ref{proposition:supp:PF1}. Define the non-random unitary matrix $\bm{W} \in \mathbb{R}^{K \times K}$ to be such that
\begin{align*}
    \bm{W}^{\T}\bm{D}^{1/2}\E\{ n^{-1}\bm{C}^{\T}\left(\delta^{-2} \bar{\bm{V}}\right)^{-1}\bm{C}\}\bm{D}^{1/2}\bm{W} = \diag\left( \gamma_1,\ldots,\gamma_K \right),
\end{align*}
and for $\bm{W}^{(\Koracle)} \in \mathbb{R}^{K \times \Koracle}$ the first $\Koracle$ columns of $\bm{W}$, let
\begin{align*}
    d_r = \Lambda_r[ \bm{D}^{1/2}\bm{W}^{(\Koracle)}\{ \bm{W}^{(\Koracle)} \}^{\T}\bm{D}^{1/2} ] = \Lambda_r[ \{ \bm{W}^{(\Koracle)} \}^{\T} \bm{D} \bm{W}^{(\Koracle)}  ], \quad r \in [\Koracle].
\end{align*}
Assume the following hold for some constant $c>1$ that does not depend on $n$ or $p$:
\begin{enumerate}[label=(\roman*)]
    \item $\bm{C}$ satisfies $\norm{ n^{-1}\bm{C}^{\T}\bm{\Delta}\bm{C} - \E(n^{-1}\bm{C}^{\T}\bm{\Delta}\bm{C}) }_2 = O_P(n^{-1/2})$ for any symmetric, positive definite $\bm{\Delta} \in \mathbb{R}^{n \times n}$ such that $\norm{\bm{\Delta}}_2 \leq c$.\label{item:supp:PropF2:C}
    \item $\gamma_k/\gamma_{k+1}\geq 1+c^{-1}$ for all $k \in [K]$ and $d_r/d_{r+1} \geq 1+c^{-1}$ for all $r \in [\Koracle]$, where $\gamma_{K+1}=d_{\Koracle+1}=0$.\label{item:supp:PropF2:EigenGap}
\end{enumerate}
Then $\lim_{n,p\to\infty}\Prob\{F_r^{\epsilon}\} = 1$ for all $r \in [\Koracle]$.
\end{proposition}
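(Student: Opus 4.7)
The plan is to establish $\lamoracle_r=d_r(1+o_P(1))$ for every $r\in[\Koracle]$, after which the gap assumption \ref{item:supp:PropF2:EigenGap} on the $d_r$'s will immediately deliver the event $F_r^{(\epsilon)}$. First I would recast $\lamoracle_r$ in terms of the SVD of $\tilde{\bm{M}}:=\bm{L}\bm{C}^{\T}\bar{\bm{V}}^{-1/2}$. By \eqref{equation:OracleCL}, $\Loracle\{\Coracle\}^{\T}\bar{\bm{V}}^{-1/2}$ is the best rank-$\Koracle$ Frobenius approximation to $\tilde{\bm{M}}$, so it equals $\tilde{\bm{U}}_{\Koracle}\tilde{\bm{\Sigma}}_{\Koracle}\tilde{\bm{V}}_{\Koracle}^{\T}$, where $\tilde{\bm{U}}_{\Koracle},\tilde{\bm{\Sigma}}_{\Koracle},\tilde{\bm{V}}_{\Koracle}$ are the top-$\Koracle$ blocks of that SVD. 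Combining this with $\tilde{\bm{M}}\bar{\bm{V}}\tilde{\bm{M}}^{\T}=\bm{L}\bm{C}^{\T}\bm{C}\bm{L}^{\T}$ yields the key identity
\begin{align*}
\lamoracle_r = \Lambda_r\!\bigl[\tilde{\bm{U}}_{\Koracle}^{\T}(p^{-1}\bm{L}\bm{C}^{\T}\bm{C}\bm{L}^{\T})\tilde{\bm{U}}_{\Koracle}\bigr], \qquad r\in[\Koracle].
\end{align*}

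Next I would perturb $\tilde{\bm{U}}_{\Koracle}$ to its population analogue. Applying \ref{item:supp:PropF2:C} with $\bm{\Delta}=\bar{\bm{V}}^{-1}$ (admissible because $\bm{V}_g\succeq c^{-1}I_n$ gives $\|\bar{\bm{V}}^{-1}\|_2\leq c$) and with $\bm{\Delta}=I_n$ gives $n^{-1}\bm{C}^{\T}\bar{\bm{V}}^{-1}\bm{C}=\delta^{-2}\bm{\Psi}_n+O_P(n^{-1/2})$ and $n^{-1}\bm{C}^{\T}\bm{C}=I_K+O_P(n^{-1/2})$. Hence $p^{-1}\tilde{\bm{M}}\tilde{\bm{M}}^{\T}$ is an operator-norm $O_P(\tau_1 n^{-1/2})$ perturbation of $\bm{M}^{*}:=(n/p)\delta^{-2}\bm{L}\bm{\Psi}_n\bm{L}^{\T}$, whose top-$\Koracle$ eigenvalues are $\delta^{-2}\gamma_1,\ldots,\delta^{-2}\gamma_{\Koracle}$ with orthonormal eigenvectors $\bm{U}^{*}_{\Koracle}:=(n/p)^{1/2}\bm{L}\bm{D}^{-1/2}\bm{W}^{(\Koracle)}$. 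The gap \ref{item:supp:PropF2:EigenGap} on the $\gamma_r$'s together with $K=O(1)$ force $\gamma_1\asymp\gamma_{\Koracle}\asymp\tau_1$, so the eigengap of $\bm{M}^{*}$ at position $\Koracle$ is of order $\tau_1$, and Davis--Kahan's $\sin\Theta$ theorem yields a random orthogonal $\bm{O}\in\mathbb{R}^{\Koracle\times\Koracle}$ with $\|\tilde{\bm{U}}_{\Koracle}-\bm{U}^{*}_{\Koracle}\bm{O}\|_2=O_P(n^{-1/2})$.

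Substituting this approximation into the identity, using $\{\bm{U}^{*}_{\Koracle}\}^{\T}\bm{L}=(p/n)^{1/2}\{\bm{W}^{(\Koracle)}\}^{\T}\bm{D}^{1/2}$, and plugging in the concentration of $n^{-1}\bm{C}^{\T}\bm{C}$, a direct expansion gives
\begin{align*}
\tilde{\bm{U}}_{\Koracle}^{\T}(p^{-1}\bm{L}\bm{C}^{\T}\bm{C}\bm{L}^{\T})\tilde{\bm{U}}_{\Koracle} = \bm{O}^{\T}\{\bm{W}^{(\Koracle)}\}^{\T}\bm{D}\bm{W}^{(\Koracle)}\bm{O} + O_P(\tau_1 n^{-1/2}).
\end{align*}
By Weyl's inequality on this $\Koracle\times\Koracle$ matrix, $\lamoracle_r=d_r+O_P(\tau_1 n^{-1/2})$. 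A short trace/min--max argument for $\{\bm{W}^{(\Koracle)}\}^{\T}\bm{D}\bm{W}^{(\Koracle)}$ together with the constant multiplicative gaps on both $\tau_r$ and $d_r$ (and $\Koracle\leq K=O(1)$) shows every $d_r$ is of the same order as $\tau_1$. Therefore $\lamoracle_r=d_r(1+o_P(1))$, so $\lamoracle_{r-1}/\lamoracle_r\to d_{r-1}/d_r\geq 1+c^{-1}$ in probability, which gives $\Prob\{F_r^{(\epsilon)}\}\to 1$ for any fixed $\epsilon<c^{-1}$; the boundary indices are handled by the conventions $\lamoracle_0=\infty$ and $\lamoracle_{\Koracle+1}=0$.

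The hardest step will be making the Davis--Kahan perturbation robust across the full range of signal strengths the theorem allows: $\gamma_1$ may be as large as $n$ while $\gamma_{\Koracle}$ may be only of constant order, so a priori the perturbation-to-gap ratio $\tau_1 n^{-1/2}/\gamma_{\Koracle}$ need not vanish. The argument succeeds only because the multiplicative gap $\gamma_r/\gamma_{r+1}\geq 1+c^{-1}$ and $K=O(1)$ together force $\tau_1/\gamma_{\Koracle}$ to be bounded by a universal constant, making the eigenvector perturbation and hence the eigenvalue error genuinely $O_P(n^{-1/2})$ relative to $d_r$.
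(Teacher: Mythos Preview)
Your identity $\lamoracle_r = \Lambda_r\bigl[\tilde{\bm{U}}_{\Koracle}^{\T}(p^{-1}\bm{L}\bm{C}^{\T}\bm{C}\bm{L}^{\T})\tilde{\bm{U}}_{\Koracle}\bigr]$ is correct, and the perturbation setup is fine. The argument breaks at the sentence ``the multiplicative gap $\gamma_r/\gamma_{r+1}\geq 1+c^{-1}$ and $K=O(1)$ together force $\tau_1/\gamma_{\Koracle}$ to be bounded by a universal constant.'' This is false: the condition $\gamma_r/\gamma_{r+1}\geq 1+c^{-1}$ is a \emph{lower} bound on successive ratios and gives no upper bound on $\gamma_1/\gamma_{\Koracle}$. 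The paper's framework explicitly allows $\gamma_1\asymp n$ with $\gamma_{\Koracle}\asymp 1$ (Assumption~\ref{assumption:CandL}\ref{item:assumCandL:Eigen} and the discussion around it), so the Davis--Kahan ratio $\tau_1 n^{-1/2}/\gamma_{\Koracle}$ can diverge like $n^{1/2}$. The same problem infects the Weyl step: an $O_P(\tau_1 n^{-1/2})$ additive error is not $o_P(d_r)$ when $d_r\asymp\tau_r$ is of constant order. (The side claim about ``constant multiplicative gaps on $\tau_r$'' is also unsupported; the proposition assumes gaps only on $\gamma_r$ and $d_r$.)

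The paper's route (Lemma~\ref{lemma:supp:PrelimCinference}) avoids a global operator-norm perturbation entirely. It works in the $K$-dimensional coordinate space and exploits the structure of the perturbation: writing $\bm{G}=\bm{W}^{\T}\bm{D}^{1/2}(n^{-1}\bm{C}^{\T}\bm{M}\bm{C})\bm{D}^{1/2}\bm{W}$, one has $\bm{G}_{kt}=\gamma_k I(k=t)+n^{-1/2}\lambda_k^{1/2}\lambda_t^{1/2}\hat{\bm{H}}_{kt}$ with $\norm{\hat{\bm{H}}}_2=O_P(1)$. That is, the perturbation is of the form $\bm{D}^{1/2}(\text{small})\bm{D}^{1/2}$, not a uniform $O_P(\tau_1 n^{-1/2})$ blob. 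The co-factor expansion argument of \citet{LowDimPCA} (see also Appendix~A of \citet{FanEigen}) then yields entrywise control $\hat{\bm{W}}_{kt}=O_P(n^{-1/2}\lambda_{k\vee t}^{1/2}\lambda_{k\wedge t}^{-1/2})$ on the eigenvector perturbation, and pushing this through the expression for $\hat{\bm{J}}=\bm{D}^{1/2}\bm{W}\hat{\bm{W}}^{(s)}\{\hat{\bm{W}}^{(s)}\}^{\T}\bm{W}^{\T}\bm{D}^{1/2}$ gives the relative error $\lamoracle_r=d_r\{1+O_P(n^{-1/2})\}$ uniformly in $r$, regardless of how spread out the $\gamma_r$'s are. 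Your global Davis--Kahan step must be replaced by this entrywise/structured analysis; the rest of your outline then goes through.
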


\begin{remark}
\label{remark:supp:PF2:d}
I show in Lemma \ref{lemma:supp:PrelimCinference} that under these assumptions $\lamoracle_r = d_r\{1+O_P(n^{-1/2})\}$ and $d_r \in [\tau_r-\tilde{c}\tau_{\Koracle+1},\tau_r+\tilde{c}\tau_{\Koracle+1}]$ for some constant $\tilde{c}>0$ that does not depend on $n$ or $p$. The latter implies that $d_r \approx \tau_r$ if $\tau_{\Koracle+1}$ is small (i.e. $\tau_{\Koracle+1}/\tau_r<<1$).
\end{remark}

\begin{remark}
Remark \ref{remark:supp:CrandomCond} in Section \ref{section:supp:lemmas} discusses general scenarios when \ref{item:supp:PropF2:C} holds.
\end{remark}

\begin{proof}
This follows directly from Lemma \ref{lemma:supp:PrelimCinference} stated in Section \ref{section:supp:lemmas}. 
\end{proof}

\subsection{Extending Theorem \ref{theorem:XandC} when $\Koracle < K$}
\label{subsection:supp:RestateXC}
Theorem \ref{theorem:XandC} can be extended to accommodate the case when $\Koracle < K$. A restatement of the Theorem to accommodate this scenario is given below.

\begin{theorem}[Restatement of Theorem \ref{theorem:XandC} when $\Koracle < K$]
\label{theorem:supp:RestateXC}
Let $\bm{X}$ be as defined in the statement of Theorem \ref{theorem:XandC}. Suppose Assumptions \ref{assumption:CandL}, \ref{assumption:FALCO} and \ref{assumption:DependenceE} hold, $\Koracle < K$ and let $\bm{W}^{(\Koracle)},d_1,\ldots,d_{\Koracle}$ be as defined in Proposition \ref{proposition:supp:PF2}. Assume the following conditions hold for some constant $c>0$ that does not depend on $n$ or $p$:
\begin{enumerate}[label=(\roman*)]
     \item $\bm{C}$ and $\bm{L}$ satisfy the identifiability conditions from the statement of Proposition \ref{proposition:supp:PF2}.
     \item $\bm{C} = \bm{X}\bm{\omega}^{\T}+\bm{R}$, where $\bm{X}$ is independent of $\bm{R}$, $\bm{\omega} \in \mathbb{R}^K$ is a constant and $\E(\bm{R})=\bm{0}$. For $j\in[b]$, let $\bm{\Psi}_j \in \mathbb{R}^{K \times K}$ be a non-random, symmetric matrix such that $\norm*{\bm{\Psi}_j}_2 \leq c$. Then $\V\{ \vecM(\bm{R}) \} = \sum_{j=1}^b \bm{\Psi}_j \otimes \bm{B}_j \succeq c^{-1}I_n$. Further, $\norm{\E(\bm{X})}_2 \leq cn^{1/2}$ and $\bm{\Xi}$ satisfies one of the following for all $\bm{\Xi}\in\{\bm{X}-\E(\bm{X}),\bm{R}\}$:
     \begin{enumerate}[label=(\arabic*)]
         \item $\vecM(\bm{\Xi})=\bm{G}\bm{\Delta}$, where $\bm{G}$ is a non-random square matrix that satisfies $\norm{\bm{G}}_2\leq c$, and $\bm{\Delta}$ is mean $\bm{0}$ and $\E(\bm{\Delta}_i^4) < c$ for every entry $i$ of $\bm{\Delta}$.
         \item $\E[\exp\{\vecM(\bm{\Xi})^{\T}\bm{t}\}] \leq \exp(c \norm{\bm{t}}_2^2)$ for all $\bm{t}$.
     \end{enumerate}
    \item Condition \ref{item:supp:C:theta} from Assumption \ref{assumption:supp:techC} holds.
    \item $\gamma_1,\ldots,\gamma_K$ and $d_1,\ldots,d_{\Koracle}$ satisfy Condition \ref{item:supp:PropF2:EigenGap} from the Statement of Proposition \ref{proposition:supp:PF2}.\label{item:supp:RestatC:d}
\end{enumerate}
Let $\bm{U} \in \mathbb{R}^{K \times \Koracle}$ be a non-random matrix with orthonormal columns whose columns are the eigenvectors of $\bm{D}^{1/2}\bm{W}^{(\Koracle)}\{\bm{W}^{(\Koracle)}\}^{\T}\bm{D}^{1/2}$. Then for some constant $c>0$ that does not depend on $n$ or $p$, the following hold for all $r \in [\Koracle]$:
\begin{enumerate}[label=(\alph*)]
    \item Suppose $\bm{X}$ is dependent on at most $c$ rows of $\bm{E}$ and $\bm{\omega}=\bm{0}$. Then if $n^{3/2}/(p\gamma_r) \to 0$ and for $\bm{\theta} \in \mathbb{R}^b$ such that $\bm{\theta}_j=\bm{U}_{\bigcdot r}^{\T}\bm{\Psi}_j\bm{U}_{\bigcdot r}$,
    \begin{align*}
        &n^{1/2}\abs{ [ \bm{X}^{\T}\{\bm{V}(\hat{\bm{\theta}})\}^{-1}\bm{X}]^{-1}\bm{X}^{\T}\{\bm{V}(\hat{\bm{\theta}})\}^{-1}\hat{\bm{C}}_{\bigcdot r} - [\bm{X}^{\T}\{\bm{V}(\bm{\theta})\}^{-1}\bm{X}]^{-1}\bm{X}^{\T}\{\bm{V}(\bm{\theta})\}^{-1}\Coracle_{\bigcdot r}  } = o_P(1)\\
        & [ \bm{X}^{\T}\{\bm{V}(\hat{\bm{\theta}})\}^{-1}\bm{X}]^{-1/2}\bm{X}^{\T}\{\bm{V}(\hat{\bm{\theta}})\}^{-1}\hat{\bm{C}}_{\bigcdot r} \edist Z + o_P(1),
    \end{align*}
    where $Z \asim (0,1)$.\label{result:supp:RestatC:null}
    \item Suppose $\bm{X}$ is independent of $\bm{E}$. Then for $a \in \{-1,1\}$ and if $n^{3/2}/(p\gamma_r) \to 0$,
    \begin{align*}
        [ \bm{X}^{\T}\{\bm{V}(\hat{\bm{\theta}})\}^{-1}\bm{X}]^{-1}\bm{X}^{\T}\{\bm{V}(\hat{\bm{\theta}})\}^{-1}\hat{\bm{C}}_{\bigcdot r} = a\bm{\omega}^{\T}\bm{U}_{\bigcdot r} + O_P(n^{-1/2}).
    \end{align*} 
\end{enumerate}

\end{theorem}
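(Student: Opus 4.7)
The plan is to reduce this theorem to the $\Koracle = K$ case of Theorem~\ref{theorem:XandC} by identifying the Oracle factor $\Coracle_{\bigcdot r}$ with the linear combination $\bm{C}\bm{U}_{\bigcdot r}$ up to a vanishing error, after which the remaining steps essentially mirror the $\Koracle=K$ argument. The first and main step is to characterize $\Coracle_{\bigcdot r}$ explicitly in terms of $\bm{C}$. Using the identifiability conditions $\E(n^{-1}\bm{C}^{\T}\bm{C})=I_K$ and $np^{-1}\bm{L}^{\T}\bm{L}=\bm{D}$, the definition in \eqref{equation:OracleCL} becomes a constrained weighted low-rank approximation problem whose population limit is governed by the eigendecomposition of $\bm{D}^{1/2}\bm{W}^{(\Koracle)}\{\bm{W}^{(\Koracle)}\}^{\T}\bm{D}^{1/2}$, i.e., by $\bm{U}$. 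Using the eigengap $d_r/d_{r+1}\geq 1+c^{-1}$ from assumption~\ref{item:supp:RestatC:d}, the moment control $\norm{n^{-1}\bm{C}^{\T}\bm{C}-I_K}_2=O_P(n^{-1/2})$ (which follows from condition~(ii) and the moment bounds on $\bm{R}$), and a Davis--Kahan perturbation, I would show that for some sign $a\in\{-1,1\}$ absorbed into $\bm{U}_{\bigcdot r}$,
\begin{equation*}
n^{-1/2}\norm{\Coracle_{\bigcdot r}-\bm{C}\bm{U}_{\bigcdot r}}_2=O_P(n^{-1/2}).
\end{equation*}

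With this representation in hand, Theorem~\ref{theorem:AngleC} applied at $k=\Koracle$ (whose prerequisites include $n^{3/2}/(p\gamma_{\Koracle})\to 0$ and the eigengap on the $\gamma_r$, and for which $\Prob\{F_r^{(\epsilon)}\}\to 1$ is guaranteed by Proposition~\ref{proposition:supp:PF2}) gives $n^{-1/2}\norm{\hat{\bm{C}}_{\bigcdot r}-a\Coracle_{\bigcdot r}}_2=o_P(1)$ together with $\abs{\hat{\bar{\bm{v}}}_j-\bar{v}_j}=O_P(n^{-1})$. Combining these two steps yields
\begin{equation*}
\hat{\bm{C}}_{\bigcdot r}=\bm{C}\bm{U}_{\bigcdot r}+\bm{\zeta}_r,\qquad n^{-1/2}\norm{\bm{\zeta}_r}_2=o_P(1).
\end{equation*}
Decomposing $\bm{C}\bm{U}_{\bigcdot r}=\bm{X}\bm{\omega}^{\T}\bm{U}_{\bigcdot r}+\bm{R}\bm{U}_{\bigcdot r}$, a direct Kronecker computation shows that the covariance of $\bm{R}\bm{U}_{\bigcdot r}$ is exactly $\bm{V}(\bm{\theta})$ with $\bm{\theta}_j=\bm{U}_{\bigcdot r}^{\T}\bm{\Psi}_j\bm{U}_{\bigcdot r}$. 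Compactness of $\Theta_{\ast}$ from Assumption~\ref{assumption:FALCO}, invertibility of $\bm{M}$ from Assumption~\ref{assumption:CandL}, and the same REML perturbation argument used in the $\Koracle=K$ case of Theorem~\ref{theorem:XandC} then deliver $\norm{\hat{\bm{\theta}}-\bm{\theta}}_2=o_P(1)$.

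For part~\ref{result:supp:RestatC:null}, under $\bm{\omega}=\bm{0}$ the test statistic reduces, after replacing $\bm{V}(\hat{\bm{\theta}})$ by $\bm{V}(\bm{\theta})$ at cost $o_P(1)$, to $[\bm{X}^{\T}\{\bm{V}(\bm{\theta})\}^{-1}\bm{X}]^{-1/2}\bm{X}^{\T}\{\bm{V}(\bm{\theta})\}^{-1}\bm{R}\bm{U}_{\bigcdot r}+o_P(1)$; since $\bm{X}$ is independent of all but $O(1)$ rows of $\bm{E}$ and therefore of all but $O(1)$ rows of $\bm{R}$, a conditional-on-$\bm{X}$ Lindeberg CLT using the fourth-moment/sub-Gaussian control on $\vecM(\bm{R})$ in condition~(ii) yields the $N(0,1)$ limit together with the first displayed equation of part~\ref{result:supp:RestatC:null}. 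For part~(b), the expansion above combined with $\norm{n^{-1}\bm{X}^{\T}\bm{X}-\sigma_x^2}_2=O_P(n^{-1/2})$, the $O_P(n^{-1/2})$ rate for $n^{-1}\bm{X}^{\T}\bm{R}\bm{U}_{\bigcdot r}$ (from independence of $\bm{X}$ from $\bm{R}$), and the consistency of $\hat{\bm{\theta}}$ deliver the claimed $a\bm{\omega}^{\T}\bm{U}_{\bigcdot r}+O_P(n^{-1/2})$ rate. The main obstacle is the first step: establishing that the constrained weighted low-rank factorization collapses to $\bm{C}\bm{U}/\sqrt{n}$, because the joint constraints $n^{-1}\bar{\bm{C}}^{\T}\bar{\bm{C}}=I_{\Koracle}$ and diagonal $\bar{\bm{L}}^{\T}\bar{\bm{L}}$ couple left and right factors and prevent a direct SVD reduction; careful Davis--Kahan bookkeeping is required to show that the random Gram matrix $n^{-1}\bm{C}^{\T}\bm{C}$ may be replaced by $I_K$ at perturbation cost $O_P(n^{-1/2})$.
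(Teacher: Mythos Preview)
Your overall architecture---reduce to $\Coracle_{\bigcdot r}\approx \bm{C}\bm{U}_{\bigcdot r}$, then pass from $\hat{\bm{C}}_{\bigcdot r}$ to $\Coracle_{\bigcdot r}$, then run the $\Koracle=K$ argument---matches the paper's, and your first step (controlling $\Coracle_{\bigcdot r}-\bm{C}\bm{U}_{\bigcdot r}$ via the eigengap on the $d_r$ and $\norm{n^{-1}\bm{C}^{\T}\bm{C}-I_K}_2=O_P(n^{-1/2})$) is exactly what Lemma~\ref{lemma:supp:PrelimCinference} does. The gap is in your second step, where you invoke only the $L^2$ bound $n^{-1/2}\norm{\hat{\bm{C}}_{\bigcdot r}-a\Coracle_{\bigcdot r}}_2=o_P(1)$ from Theorem~\ref{theorem:AngleC} and then treat $\bm{\zeta}_r$ as a black-box residual.

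For part~\ref{result:supp:RestatC:null} you need
\[
n^{-1/2}\bigl\lvert \bm{X}^{\T}\{\bm{V}(\bm{\theta})\}^{-1}(\hat{\bm{C}}_{\bigcdot r}-\Coracle_{\bigcdot r})\bigr\rvert=o_P(1),
\]
but Cauchy--Schwarz with $\norm{\bm{X}}_2=O_P(n^{1/2})$ gives only the bound $\norm{\hat{\bm{C}}_{\bigcdot r}-\Coracle_{\bigcdot r}}_2$. From \eqref{equation:CinnerChat} this is $O_P\bigl((n\delta_r)^{1/2}\bigr)$ with $\delta_r=(\gamma_r p)^{-1/2}+n/(\gamma_r p)+(\gamma_r n)^{-1}$; since $\delta_r\gtrsim(\gamma_r n)^{-1}$, the bound is $\gtrsim\gamma_r^{-1/2}$, which does \emph{not} vanish when $\gamma_r$ is bounded. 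Hence your claimed ``$+\,o_P(1)$'' in the display for part~\ref{result:supp:RestatC:null} is unjustified, and the same obstruction prevents the $O_P(n^{-1/2})$ rate in part~(b) from following by Cauchy--Schwarz alone. The paper closes this gap not by sharpening the $L^2$ bound but by bounding the \emph{inner product} directly (Lemma~\ref{lemma:supp:XtC}, built on Corollary~\ref{corollary:supp:Cte}): one writes $\hat{\bm{C}}_{\bigcdot r}$ explicitly as its $\im(\tilde{\bm{C}})$-component plus a $\bm{Q}_{\tilde{C}}\hat{\bm{z}}$-component, and uses that $\hat{\bm{z}}_{\bigcdot r}$ is essentially $p^{-1/2}\bm{E}_2^{\T}\tilde{\bm{N}}(\cdots)$ together with the (near-)independence of $\bm{X}$ from $\bm{E}$ to obtain $\bm{X}^{\T}\bm{Q}_{\tilde{C}}\hat{\bm{z}}_{\bigcdot r}=\norm{\bm{X}}_2\,O_P\bigl((p\gamma_r)^{-1/2}+(n/p+n^{-1})\gamma_r^{-1}\bigr)$, which \emph{is} $o_P(1)$ under $n^{3/2}/(p\gamma_r)\to 0$. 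This cancellation is invisible at the level of $\norm{\bm{\zeta}_r}_2$ and is the missing ingredient in your plan.
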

The proof of this theorem and Theorem \ref{theorem:XandC} are given in Section \ref{section:supp:InferenceC}.

\begin{remark}
\label{remark:supp:RetateXC:K}
If $\Koracle=K$, $\bm{U}=I_K$ and we can replace Condition \ref{item:supp:RestatC:d} with $\tau_{r-1}/\tau_{r},\tau_r/\tau_{r+1} \geq 1+c^{-1}$ for all $r \in [K]$, where $\tau_r$ is as defined in Proposition \ref{proposition:supp:PF1}. When $\Koracle < K$, I show in Lemma \ref{lemma:supp:PrelimCinference} that
\begin{align*}
    \bm{U}_{tr} = \begin{cases}
    O\left( \gamma_{\Koracle+1}/\gamma_t \right) & \text{if $t < r$}\\
    O\left\{ \gamma_{(\Koracle+1)\vee t}/\gamma_r \right\} & \text{if $t > r$}
    \end{cases}, \quad t \in [K]\setminus \{r\}
\end{align*}
\end{remark}

\begin{remark}
Asymptotic normality in result \ref{result:supp:RestatC:null} holds if $[\bm{X}^{\T}\{\bm{V}(\bm{\theta})\}^{-1}\bm{X}]^{-1/2}\bm{X}^{\T}\{\bm{V}(\bm{\theta})\}^{-1}\bm{C}\bm{U}_{\bigcdot r}$ is asymptotically normal.
\end{remark}

\section{Eigenvalues and eigenvectors of the population covariance matrix}
\label{section:supp:PopCov}
Here I derive properties of the eigenvalues and eigenvectors of the population covariance matrices $\V(\bm{Y}_{\bigcdot i})$. To do so, we need the following assumption:

\begin{assumption}
\label{assumption:supp:Covariance}
Let $c > 1$ be a constant. Then $K \geq 1$ is known, $\gamma_K \geq c^{-1}$ and the following hold:
\begin{enumerate}[label=(\alph*)]
    \item $\E(\bm{C}) = \bm{0}$,  $\bm{C}_{1 \bigcdot},\ldots,\bm{C}_{n \bigcdot}$ are identically distributed, $\V(\bm{C}_{1\bigcdot})=I_K$ and $\Lambda_r(\bm{L}\bm{L}^{\T})/\Lambda_{r+1}(\bm{L}\bm{L}^{\T}) \geq 1+c^{-1}$ for all $r \in [K]$.\label{item:supp:Cov:C}
    \item $n^{1/2}\vecM( n^{-1}\sum_{i=1}^n \bm{C}_{i\bigcdot}\bm{C}_{i\bigcdot}^{\T} - I_K ) \edist \bm{W} + o_P(1)$ as $n \to \infty$, where $\bm{W} \sim N(\bm{0},\bm{G})$ for some non-singular $\bm{G} \in \mathbb{R}^{K^2 \times K^2}$.\label{item:supp:Cov:limit}
\end{enumerate}
\end{assumption}
The assumption that the rows of $\bm{C}$ will likely hold when samples $i$ are identically distributed. Some examples include samples collected on related individuals (e.g. twin studies, samples related through a kinship matrix, etc.), data with repeated measurements and multi-tissue data collected from similar tissues, among others. Theorem~\ref{theorem:supp:CovEigs} gives the asymptotic properties of the eigenvalues and eigenvectors of $\V(\bm{Y}_{\bigcdot i})$.

\begin{theorem}
\label{theorem:supp:CovEigs}
Suppose Assumptions \ref{assumption:CandL}, \ref{assumption:FALCO}, \ref{assumption:DependenceE} and \ref{assumption:supp:Covariance} hold, and $\Koracle=K$. Then for $\eta_r^{(i)} = np^{-1}\Lambda_r\lbrace \V(\bm{Y}_{\bigcdot i}) \rbrace$ and $\bm{u}_r^{(i)}$ the $r$th eigenvector of $\V(\bm{Y}_{\bigcdot i})$,
\begin{align}
\label{equation:supp:PopEig}
    &n^{1/2}(\lamhatoracle_r/\eta_r^{(i)} - [1 + O_P\lbrace n/(\gamma_r p) \rbrace]) \edist z_r + o_P(1), \quad r \in [K]\\
\label{equation:supp:PopL}
    &\max_{i \in [n]}\norm*{ a\hat{\bm{L}}_{\bigcdot r} - \{p/n \eta_r^{(i)}\}^{1/2}\bm{u}_r^{(i)} }_{\infty} = O_P\{ \log(p)n^{-1/2} + n^{1/2}(\gamma_{\Koracle} p)^{-1/2} \}, \quad r \in [K]
\end{align}
as $n,p \to \infty$, where $(z_1,\ldots,z_K)^{\T} \sim N(\bm{0},\tilde{\bm{G}})$ does not depend on $i$,  $\tilde{\bm{G}}_{st}=\bm{G}_{(s-1)K+s,(t-1)K+t}$ for all $s,t \in [K]$, $a \in \{-1,1\}$ and the error term $o_P(1)$ is uniform across $i=1,\ldots,n$.
\end{theorem}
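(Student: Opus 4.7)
The plan for both \eqref{equation:supp:PopEig} and \eqref{equation:supp:PopL} is to introduce the deterministic intermediary $\nu_r := np^{-1}\Lambda_r(\bm{L}\bm{L}^{\T})$, thereby separating three distinct sources of error: (i) the FALCO estimation error controlled by Theorems~\ref{theorem:Lambda} and~\ref{theorem:L}; (ii) the random fluctuation of the oracle quantities $(\lamoracle_r,\Loracle_{\bigcdot r})$ around their centers $(\nu_r,\bm{L}_{\bigcdot r})$, driven by $n^{-1}\bm{C}^{\T}\bm{C} - I_K$; and (iii) the deterministic but $i$-dependent noise perturbation caused by $\bm{T}_i := \V(\bm{E}_{\bigcdot i})$ in $\V(\bm{Y}_{\bigcdot i}) = \bm{L}\bm{L}^{\T} + \bm{T}_i$. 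Following the identification pattern of Proposition~\ref{proposition:supp:PF2}, I absorb an orthogonal rotation into $\bm{C}$ so that $np^{-1}\bm{L}^{\T}\bm{L} = \diag(\nu_1,\ldots,\nu_K)$ with $\nu_1 > \cdots > \nu_K$ having constant multiplicative gaps (Assumption~\ref{assumption:supp:Covariance}(a)); note $\gamma_r \asymp \nu_r$ and $\gamma_r \geq c^{-1}$.

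For \eqref{equation:supp:PopEig}, factor $\lamhatoracle_r/\eta_r^{(i)} = (\lamhatoracle_r/\lamoracle_r)(\lamoracle_r/\nu_r)(\nu_r/\eta_r^{(i)})$ and expand each ratio. Theorem~\ref{theorem:Lambda} gives $\lamhatoracle_r/\lamoracle_r = 1 + O_P\{n/(p\gamma_r)\} + o_P(n^{-1/2})$, since $\gamma_r \geq c^{-1}$ makes $(\gamma_r n)^{-1} = O(n^{-1}) = o(n^{-1/2})$, and $n/(p\gamma_r) \to 0$ makes $(\gamma_r p)^{-1/2} = \sqrt{n/(p\gamma_r)}\cdot n^{-1/2} = o(n^{-1/2})$. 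By the oracle identification \eqref{equation:LCoracleInitial}, $\Loracle = \bm{L}(n^{-1}\bm{C}^{\T}\bm{C})^{1/2}\bm{O}$ for an orthogonal $\bm{O}$ diagonalizing $(n^{-1}\bm{C}^{\T}\bm{C})^{1/2}\diag(\nu_r)(n^{-1}\bm{C}^{\T}\bm{C})^{1/2}$; writing $n^{-1}\bm{C}^{\T}\bm{C} = I_K + n^{-1/2}\bm{\Delta}^{(n)}$ with $\vecM(\bm{\Delta}^{(n)}) \edist \bm{W} + o_P(1)$ (Assumption~\ref{assumption:supp:Covariance}(b)) and applying first-order eigenvalue perturbation to $\diag(\nu_r) + \tfrac{1}{2}n^{-1/2}\{\bm{\Delta}^{(n)}\diag(\nu_r) + \diag(\nu_r)\bm{\Delta}^{(n)}\} + O_P(n^{-1})$ yields $\lamoracle_r/\nu_r = 1 + n^{-1/2}\bm{\Delta}^{(n)}_{rr} + O_P(n^{-1})$. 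Finally, Weyl's inequality applied to $\bm{L}\bm{L}^{\T} + \bm{T}_i$, with $\|\bm{T}_i\|_2 = O(1)$ uniformly in $i$ under either subcase of Assumption~\ref{assumption:DependenceE}(b), gives $\eta_r^{(i)}/\nu_r = 1 + O(n/(p\gamma_r))$ uniformly. Multiplying the three ratios, verifying that every cross-term is either $o_P(n^{-1/2})$ or absorbed into $O_P\{n/(p\gamma_r)\}$, and identifying the joint distributional limit of $(\bm{\Delta}^{(n)}_{rr})_{r \in [K]}$ with $(z_1,\ldots,z_K)$ yields \eqref{equation:supp:PopEig}.

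For \eqref{equation:supp:PopL}, bound the left-hand side by the triangle inequality
\[
\|a\hat{\bm{L}}_{\bigcdot r} - \Loracle_{\bigcdot r}\|_\infty + \|\Loracle_{\bigcdot r} - \bm{L}_{\bigcdot r}\|_\infty + \|\bm{L}_{\bigcdot r} - \{(p/n)\eta_r^{(i)}\}^{1/2}\bm{u}_r^{(i)}\|_\infty
\]
for appropriate sign choices. Theorem~\ref{theorem:L} controls the first term by $O_P\{\log(p)n^{-1/2} + n^{1/2}(\gamma_K p)^{-1/2}\}$, where the required event $\Prob\{F_r^{(\epsilon)}\}\to 1$ is supplied by Proposition~\ref{proposition:supp:PF1}. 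The second satisfies $\Loracle_{\bigcdot r} - \bm{L}_{\bigcdot r} = \bm{L}\{(n^{-1}\bm{C}^{\T}\bm{C})^{1/2}\bm{O} - I_K\}_{\bigcdot r}$; the bracketed column has $\ell_2$ norm $O_P(n^{-1/2})$ (both the matrix square root and the diagonalizing rotation converge to $I_K$ at rate $n^{-1/2}$), and each row of $\bm{L}$ has $\ell_2$ norm $\leq c$ by Assumption~\ref{assumption:CandL}(c) combined with $\bm{\Psi}_n \succeq c^{-1}I_K$, yielding $O_P(n^{-1/2})$ in $\ell_\infty$. The third term is handled by Davis--Kahan applied to $\bm{L}\bm{L}^{\T} + \bm{T}_i$: with eigengap $\asymp p\gamma_r/n$ and perturbation norm $O(1)$, $\bm{u}_r^{(i)} = a_i \bm{L}_{\bigcdot r}/\|\bm{L}_{\bigcdot r}\|_2 + \bm{w}_{i,r}$ with $\|\bm{w}_{i,r}\|_2 = O(n/(p\gamma_r))$; combining with $\{(p/n)\eta_r^{(i)}\}^{1/2} = \|\bm{L}_{\bigcdot r}\|_2 + O(n^{1/2}(p\gamma_r)^{-1/2})$ gives $O(n^{1/2}(p\gamma_r)^{-1/2})$ in $\ell_\infty$ uniformly in $i$.

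\textbf{Main obstacle.} The delicate step is CLT extraction: after multiplying the three expansions, every cross-term must be shown to be either $o_P(n^{-1/2})$ or absorbable into the declared $O_P\{n/(p\gamma_r)\}$ bias, leaving $n^{-1/2}\bm{\Delta}^{(n)}_{rr}$ as the sole leading stochastic term. This relies on the simultaneous smallness of $n/(p\gamma_r)$ and the uniform lower bound $\gamma_r \geq c^{-1}$. A secondary obstacle is establishing $\|\bm{T}_i\|_2 = O(1)$ uniformly in $i$ under both subcases of Assumption~\ref{assumption:DependenceE}(b)---in case (i) via $\|\bm{A}\|_2 \leq c$ together with the extraction of the $i$th diagonal $p \times p$ block of $\V\{\vecM(\bm{E})\}$, and in case (ii) via the block-diagonal structure with block size $\leq c$---which underpins the uniform-in-$i$ claim in \eqref{equation:supp:PopL}.
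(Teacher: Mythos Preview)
Your three-way decomposition via the intermediary $\nu_r = np^{-1}\Lambda_r(\bm{L}\bm{L}^{\T})$ is exactly the route the paper takes: it bounds $\|\V(\bm{E}_{\bigcdot i})\|_2 \leq c_1$, applies Weyl and Lemma~\ref{lemma:supp:EigVector} to pass from $\eta_r^{(i)}$ and $\bm{u}_r^{(i)}$ to the eigenstructure of $np^{-1}\bm{L}\bm{L}^{\T}$, and then invokes Theorems~\ref{theorem:Lambda} and~\ref{theorem:L}. Your handling of the noise perturbation $\bm{T}_i$ and the third term in the $\ell_\infty$ triangle inequality is sound.

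The gap is at the step linking $\lamoracle_r$ to $\nu_r$ (and correspondingly $\Loracle_{\bigcdot r}$ to $\bm{L}_{\bigcdot r}$). Your justification ``first-order eigenvalue perturbation'' and ``the diagonalizing rotation converges to $I_K$ at rate $n^{-1/2}$'' does not go through via standard Weyl or Davis--Kahan, because the eigenvalues $\nu_1,\ldots,\nu_K$ may span several orders of magnitude (the paper only assumes $\gamma_K \geq c^{-1}$ and $\gamma_1 \leq cn$, so $\nu_1/\nu_K$ can be as large as $n$). The perturbation $\bm{S}^{1/2}\diag(\nu_r)\bm{S}^{1/2} - \diag(\nu_r)$ has operator norm $O_P(n^{-1/2}\nu_1)$, while the relevant eigengap is $\asymp \nu_r$; naive Davis--Kahan therefore gives $\|\bm{O}_{\bigcdot r} - \bm{a}_r\|_2 = O_P(n^{-1/2}\nu_1/\nu_r)$, which can diverge. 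Relatedly, the remainder in your square-root expansion $\diag(\nu_r) + \tfrac{1}{2}n^{-1/2}\{\bm{\Delta}^{(n)}\diag(\nu_r) + \diag(\nu_r)\bm{\Delta}^{(n)}\} + O_P(n^{-1})$ is actually $O_P(n^{-1}\nu_1)$ in operator norm, not $O_P(n^{-1})$.

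Your claimed identities $\lamoracle_r/\nu_r = 1 + n^{-1/2}\bm{\Delta}^{(n)}_{rr} + O_P(n^{-1})$ and $\|\{\bm{S}^{1/2}\bm{O} - I_K\}_{\bigcdot r}\|_2 = O_P(n^{-1/2})$ are nonetheless \emph{correct}; what is missing is the right tool. One should work with $\bm{D}^{1/2}\bm{S}\bm{D}^{1/2} = \bm{D} + n^{-1/2}\bm{D}^{1/2}\bm{\Delta}^{(n)}\bm{D}^{1/2}$ (whose $(s,t)$ off-diagonal is $O_P(n^{-1/2}\nu_s^{1/2}\nu_t^{1/2})$ rather than $O_P(n^{-1/2}\nu_1)$) and then apply the co-factor expansion argument of Appendix~A in \citet{FanEigen} (this is precisely what the paper invokes), or equivalently the structured eigenvector analysis underlying Lemma~\ref{lemma:supp:U} and the proof of Lemma~\ref{lemma:supp:TrueEigs}. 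That argument exploits the entrywise scaling of the perturbation to obtain $\bm{V}_{sr} = O_P\{n^{-1/2}(\nu_{r\vee s}/\nu_{r\wedge s})^{1/2}\} = O_P(n^{-1/2})$ and the eigenvalue expansion with $O_P(n^{-1}\nu_r)$ remainder, which is what you need.
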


\begin{proof}
Assumptions \ref{assumption:CandL} and \ref{assumption:DependenceE} imply $\norm{ \V(\bm{E}_{\bigcdot i}) }_2 \leq c_1$ for some constant $c_1 > 0$ that does not depend on $n$ or $p$. Weyl's Theorem therefore implies for $\bm{A} = \E(p^{-1}\bm{L}\bm{C}^{\T}\bm{C}\bm{L}^{\T}) = np^{-1}\bm{L}\bm{L}^{\T}$, $\abs{\Lambda_r(\bm{A}) - \eta_r^{(i)}} \leq n/p c_1$, and by the eigengap assumption on $\bm{A}$ in Assumption \ref{assumption:supp:Covariance}, $\norm{ \tilde{\bm{u}}_r^{(i)} - \bm{u}_r^{(i)} }_2 \leq c_2 n/(p\gamma_r)$ by Lemma \ref{lemma:supp:EigVector} for $\tilde{\bm{u}}_r^{(i)}$ the $r$th eigenvector of $\bm{A}$ and $c_2 >0$ a constant that does not depend on $n$ or $p$. The rest of the proof follows from Theorems \ref{theorem:Lambda} and \ref{theorem:L}, as well as the co-factor expansion argument utilized in Appendix A of \citet{FanEigen}.
\end{proof}

Theorem \ref{theorem:supp:CovEigs} shows that Algorithm \ref{algorithm:EstC} recovers both the eigenvalues and eigenvectors of the gene-by-gene covariance matrix, where like \eqref{equation:Linfty}, \eqref{equation:supp:PopL} implies principal components plots of $\hat{\bm{L}}$ mirror the information contained in the population eigenvectors. The result in \eqref{equation:supp:PopEig} shows $\lamhatoracle_r$ is consistent and asymptotically normal if $a_n=n^{3/2}/(\gamma_r p) \to 0$, which is the first result proving the existence of consistent and asymptotically normal estimators for population eigenvalues of nearly arbitrary size in data with correlated samples. Theorem \ref{theorem:supp:CovEigs} also significantly extends the results of \citet{FanEigen}, which in order to show the asymptotic normality of sample eigenvalues, required (1) $\bm{U}\bm{Y} \in \mathbb{R}^{p \times n}$ have independent entries for some unitary matrix $\bm{U}\in\mathbb{R}^{p \times p}$ and (2) $(p^{1/2}n^{-1/2})a_n \to 0$.

\section{Proof of Proposition \ref{proposition:ImageC}}
\label{section:supp:Prop}
Here I prove Proposition \ref{proposition:ImageC}.
\begin{proof}[Proof of Proposition \ref{proposition:ImageC}]
Let $\bm{S}=p^{-1}\bm{Y}^{\T}\bm{Y}$. We can re-write the objective function in \eqref{equation:Alg1:C} to be
\begin{align*}
    \Tr\{ (\bm{U}^{\T}\bar{\bm{V}}^{-1}\bm{S}\bar{\bm{V}}^{-1}\bm{U})(\bm{U}\bar{\bm{V}}^{-1} \bm{U})^{-1} \} = \Tr(\tilde{\bm{U}}^{\T}\bar{\bm{V}}^{-1/2}\bm{S}\bar{\bm{V}}^{-1/2}\tilde{\bm{U}})
\end{align*}
where $\bm{U} \in \mathbb{R}^{n \times k}$ is any matrix such that $\im(\bm{U})=\im(\bm{H})$ and $\tilde{\bm{U}}=\bar{\bm{V}}^{-1/2}\bm{U}\left(\bm{U}^{\T}\bar{\bm{V}}^{-1}\bm{U}\right)^{-1/2}$. Since the latter has orthonormal columns, the objective achieves its maximum when the columns of $\tilde{\bm{U}}$ are the first $k$ eigenvectors of $\bar{\bm{V}}^{-1/2}\bm{S}\bar{\bm{V}}^{-1/2}$, which completes the proof.
\end{proof}

\section{Estimating the eigenvalues and eigenvectors of $\bm{C}\left(p^{-1}\bm{L}^{\T}\bm{L}\right)\bm{C}^{\T}$}
\label{section:supp:Eigen}

\subsection{Preliminaries}
\label{subsection:supp:EigenPreliminaries}
Without loss of generality, we may assume $np^{-1}\bm{L}^{\T}\bm{L}=\diag\left(\lambda_1,\ldots,\lambda_K\right)$ and $n^{-1}\bm{C}^{\T}\bm{C}=I_K$. We utilize similar techniques to those developed in \citet{ChrisANDDan}. For any estimate $\hat{\bm{V}} = \sum_{j=1}^b \left(\bar{v}_j + \epsilon_j\right)\bm{B}_j$ of $\bar{\bm{V}}=\E\left(p^{-1}\bm{E}^{\T}\bm{E}\right)$, define
\begin{subequations}
\label{equation:supp:Params}
\begin{align}
\label{equation:supp:epsilon}
\epsilon_V &= \norm{\bar{\bm{V}} - \hat{\bm{V}}}_2\\
\label{equation:supp:Ctilde}
\tilde{\bm{C}} &= \hat{\bm{V}}^{-1/2}\bm{C}\left(\bm{C}^T \hat{\bm{V}}^{-1}\bm{C}\right)^{-1/2}\bm{U}\\
\label{equation:supp:Ltilde}
\tilde{\bm{L}} &= p^{-1/2}\bm{L}\left(\bm{C}^T \hat{\bm{V}}^{-1}\bm{C}\right)^{1/2}\bm{U}\\
\label{equation:supp:QCtilde}
\bm{Q} &= \bm{Q}_{\hat{V}^{-1/2} C}
\end{align}
\end{subequations}
where $\bm{U} \in \mathbb{R}^{K \times K}$ is a rotation matrix such that
\begin{align}
\label{equation:supp:tau}
    \tilde{\bm{L}}^T \tilde{\bm{L}} = \diag\left(\tau_1,\ldots,\tau_K\right), \quad 0=\tau_{K+1} < \tau_K \leq \cdots \leq \tau_1 < \tau_0 = \infty.
\end{align}
By Lemma \ref{lemma:supp:EigBound} in Section \ref{section:supp:lemmas} and Assumptions \ref{assumption:CandL} and \ref{assumption:FALCO}, this implies that $c^{-1}\lambda_k \leq\tau_k \leq c\lambda_k$ for some constant $c > 1$ that does not depend on $n$ or $p$. Therefore, for any constant $c_2 > 1$, there exists a $c_1 > 1$ large enough that does not depend on $n$ or $p$ such that $\lambda_r/\lambda_{r+1} > c_1$ implies $\tau_{r}/\tau_{r+1} > c_2$ regardless of the choice of $\hat{\bm{V}}$. We use this to inductively define the indices $k_1,\ldots,k_J \in [K]$ in which an eigengap occurs. First, for some arbitrary $c_2 > 1$ and suitably large $c_1 > 1$, define
\begin{align*}
    k_1 = \min\left( \left\lbrace r \in [K]: \lambda_{r}/\lambda_{r+1} > c_1 \right\rbrace \right).
\end{align*}
If $k_1 = K$, we are done. Otherwise, define $k_j$ inductively as
\begin{align*}
    k_j =  \min\left( \left\lbrace r \in \left\lbrace k_{j-1}+1,\ldots,K \right\rbrace : \lambda_{r}/\lambda_{r+1} > c_1 \right\rbrace \right).
\end{align*}
We let $k_0=0$ and $J \in [K]$ be such that $k_J = K$. We refer to these indices $k_0,k_1,\ldots,k_J$ throughout the supplement.\par
\indent Following \citet{ChrisANDDan}, we define
\begin{align}
\label{equation:supp:E1E2}
    \bm{E}_1 &= \bm{E}\hat{\bm{V}}^{-1/2}\tilde{\bm{C}} \in \mathbb{R}^{p \times K}, \quad \bm{E}_2 = \bm{E}\hat{\bm{V}}^{-1/2}\bm{Q} \in \mathbb{R}^{p \times (n-K)}\\
    \bm{S} &= \begin{pmatrix}
    \tilde{\bm{C}}^T\\
    \bm{Q}^T
    \end{pmatrix}\left(p^{-1}\hat{\bm{V}}^{-1/2}\bm{Y}^T \bm{Y}\hat{\bm{V}}^{-1/2}\right)\begin{pmatrix}
    \tilde{\bm{C}}&\bm{Q}
    \end{pmatrix}\nonumber\\
    \label{equation:supp:S}
    & = \begin{pmatrix}
    \left(\tilde{\bm{L}} + p^{-1/2}\bm{E}_1 \right)^T \left(\tilde{\bm{L}} + p^{-1/2}\bm{E}_1 \right) & \left(\tilde{\bm{L}} + p^{-1/2}\bm{E}_1 \right)^T \left(p^{-1/2}\bm{E}_2 \right)\\
    \left(p^{-1/2}\bm{E}_2 \right)^T \left(\tilde{\bm{L}} + p^{-1/2}\bm{E}_1 \right) & p^{-1}\bm{E}_2^T \bm{E}_2
    \end{pmatrix}.
\end{align}
If $\begin{pmatrix}\hat{\bm{v}}^{\T} \, \hat{\bm{z}}^{\T} \end{pmatrix}^{\T} \in \mathbb{R}^{n \times K}$ for $\hat{\bm{v}} \in \mathbb{R}^{K \times K}, \hat{\bm{z}} \in \mathbb{R}^{n \times (n-K)}$ are the eigenvectors of $\bm{S}$, then
\begin{align*}
    \tilde{\bm{C}}\hat{\bm{v}} + \bm{Q}\hat{\bm{z}} \in \mathbb{R}^{n \times K}
\end{align*}
are the first $K$ right singular vectors of $\bm{Y}$. Our first goal is to understand $\hat{\bm{v}}$ and $\hat{\bm{z}}$.

\subsection{The top-left $K \times K$ block of $\bm{S}$}
\label{subsection:supp:UpperLeft}
We first develop theory to understand the behavior of the upper left block of $\bm{S}$, defined as
\begin{align*}
    \left(\tilde{\bm{L}} + p^{-1/2}\bm{E}_1 \right)^T \left(\tilde{\bm{L}} + p^{-1/2}\bm{E}_1 \right) \in \mathbb{R}^{K \times K}.
\end{align*}

\begin{lemma}
\label{lemma:supp:UpperBlock}
Suppose Assumptions \ref{assumption:CandL}, \ref{assumption:FALCO} and \ref{assumption:DependenceE} hold, and define 
\begin{align*}
    &\tilde{\bm{N}} = \tilde{\bm{L}} + p^{-1/2}\bm{E}_1, \quad \phi_1 = p^{-1/2}\left(1 + n^{1/2}\epsilon_V\right), \quad \phi_2 = \epsilon_V.
\end{align*}
Then
\begin{align}
\label{equation:supp:mus}
    \mu_s = \Lambda_s\left( \tilde{\bm{N}}^{\T}\tilde{\bm{N}} \right) = \tau_s + 1 + O_P\left( \phi_1 \lambda_s^{1/2} + \phi_2 \right), \quad s \in [K].
\end{align}
Additionally, let $\begin{pmatrix} 
\bm{V}_{11} & \bm{V}_{12} & \cdots & \bm{V}_{1J}\\
\bm{V}_{21} & \bm{V}_{22} & \cdots & \bm{V}_{2J}\\
\vdots & \vdots & \ddots & \vdots\\
\bm{V}_{J1} & \bm{V}_{J2} & \cdots & \bm{V}_{JJ}
\end{pmatrix} \in \mathbb{R}^{K \times K}$ be the right singular values of $\tilde{\bm{N}}$, where $\bm{V}_{rs} \in \mathbb{R}^{\left(k_r - k_{r-1}\right) \times \left(k_s - k_{s-1}\right)}$ for $r,s \in [J]$ and $k_r,k_s$ defined in Section \ref{subsection:supp:EigenPreliminaries}. Then
\begin{subequations}
\label{equation:supp:UpperEigVectors}
\begin{align}
\label{equation:supp:UpperEigVectors:rr}
    &\norm{ I_{\left(k_{r}-k_{r-1}\right)} - \bm{V}_{rr}^{\T}\bm{V}_{rr} }_2, \, \norm{ I_{\left(k_{r}-k_{r-1}\right)} - \bm{V}_{rr}\bm{V}_{rr}^{\T} }_2 = O_P\left\lbrace\left(\phi_1 \lambda_{k_r}^{-1/2} + \phi_2 \lambda_{k_r}^{-1}\right)^2\right\rbrace, \quad r \in [J]\\
\label{equation:supp:UpperEigVectors:rs}
    &\norm{\bm{V}_{rs}}_2 = O_P\left( \phi_1\lambda_{k_{\min(r,s)}}^{-1/2} + \phi_2\lambda_{k_{\min(r,s)}}^{-1} \right), \quad r\neq s \in [J].
\end{align}
\end{subequations}
\end{lemma}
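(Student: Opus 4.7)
The plan is to reduce the lemma to a scale-aware perturbation analysis of
\begin{align*}
    \tilde{\bm{N}}^{\T}\tilde{\bm{N}} = \tilde{\bm{L}}^{\T}\tilde{\bm{L}} + p^{-1/2}\bigl(\tilde{\bm{L}}^{\T}\bm{E}_1 + \bm{E}_1^{\T}\tilde{\bm{L}}\bigr) + p^{-1}\bm{E}_1^{\T}\bm{E}_1 = \diag(\tau_1+1,\ldots,\tau_K+1) + \bm{W},
\end{align*}
where I will prove that the entries of $\bm{W}$ obey an entrywise bound of the form $\bm{W}_{rs} = O_P(\phi_1 \lambda_{r\vee s}^{1/2} + \phi_2)$. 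With such a refined, block-graded bound in hand, the two conclusions of the lemma follow from standard perturbation inequalities applied block-by-block along the eigengap indices $k_0 < k_1 < \cdots < k_J$.

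To control the noise-only block $p^{-1}\bm{E}_1^{\T}\bm{E}_1$, I would compute its conditional mean
\begin{align*}
    \E(p^{-1}\bm{E}_1^{\T}\bm{E}_1 \mid \bm{C}) = I_K + \tilde{\bm{C}}^{\T}\hat{\bm{V}}^{-1/2}(\bar{\bm{V}}-\hat{\bm{V}})\hat{\bm{V}}^{-1/2}\tilde{\bm{C}} = I_K + O(\epsilon_V),
\end{align*}
using only the definition of $\tilde{\bm{C}}$ and Assumption \ref{assumption:FALCO}, and then bound the centered quadratic form using a Hanson--Wright-type inequality under \ref{item:assumErrors:Corr:U} or a row-partition argument under \ref{item:assumErrors:Corr:Networks}; either yields an $O_P(p^{-1/2})$ fluctuation. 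For the cross-term $p^{-1/2}\tilde{\bm{L}}^{\T}\bm{E}_1$, the $(r,s)$ entry is a sum of $p$ centered sub-Gaussian summands with variance proxy proportional to $\sum_g \tilde{\bm{L}}_{gr}^2 = \tau_r \asymp \lambda_r$ (up to an $\epsilon_V$ correction coming from swapping $\bar{\bm{V}}$ with $\hat{\bm{V}}$), giving an entrywise bound of order $p^{-1/2}\tau_r^{1/2}(1+n^{1/2}\epsilon_V) \asymp \phi_1 \lambda_r^{1/2}$. Symmetrizing in $r,s$ yields the stated entrywise rate on $\bm{W}$.

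The eigenvalue conclusion \eqref{equation:supp:mus} then follows from a block Weyl inequality: because $\lambda_{k_{j-1}}/\lambda_{k_j}$ can be made as large as desired by choosing $c_1$ in Section \ref{subsection:supp:EigenPreliminaries}, within each group $\{k_{j-1}+1,\ldots,k_j\}$ the off-block entries of $\bm{W}$ produce only second-order corrections (via a Schur complement reduction), leaving $\mu_s - (\tau_s+1)$ governed by the diagonal perturbation $\bm{W}_{ss} = O_P(\phi_1\lambda_s^{1/2}+\phi_2)$. For the singular-vector statements \eqref{equation:supp:UpperEigVectors}, I apply Davis--Kahan's $\sin\Theta$ theorem to the pair of blocks $(r,s)$ with $r \neq s$: the spectral gap between the $r$-th and $s$-th groups of $\diag(\tau_1+1,\ldots,\tau_K+1)$ is $\asymp \lambda_{k_{\min(r,s)}}$, while the relevant perturbation entries have magnitude $\phi_1 \lambda_{k_{\min(r,s)}}^{1/2} + \phi_2$, so the $\sin\Theta$ bound gives exactly \eqref{equation:supp:UpperEigVectors:rs}. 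Finally \eqref{equation:supp:UpperEigVectors:rr} follows from the orthogonality identity $I_{k_r - k_{r-1}} = \sum_s \bm{V}_{rs}\bm{V}_{rs}^{\T}$ applied to the row-block $r$, squaring the off-block bounds.

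The main obstacle is avoiding a crude operator-norm bound on $\bm{W}$, which would replace every $\lambda_{r\vee s}^{1/2}$ factor by $\lambda_1^{1/2}$ and thereby destroy the block-wise sharpness required for \eqref{equation:supp:UpperEigVectors:rs}: the different eigenvalue scales $\tau_1,\ldots,\tau_K$ must be tracked entrywise throughout, and both $\phi_1$ (pure stochastic) and $\phi_2$ (bias from $\hat{\bm{V}}\neq\bar{\bm{V}}$) contributions must be propagated separately. The delicate step is the concentration argument for $p^{-1}\bm{E}_1^{\T}\bm{E}_1$ under the two dependence regimes of Assumption \ref{assumption:DependenceE}, where one must also verify that the (random) weighting by $\hat{\bm{V}}^{-1/2}\tilde{\bm{C}}$ does not inflate the sub-Gaussian constants beyond the absolute bound imposed by $\hat{\bar{\bm{v}}} \in \Theta_*$.
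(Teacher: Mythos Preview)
Your proposal is correct and follows essentially the same route as the paper: decompose $\tilde{\bm{N}}^{\T}\tilde{\bm{N}}$ into $\diag(\tau_s)$ plus cross and pure-noise terms, obtain graded entrywise bounds by splitting $\hat{\bm{V}}^{-1}=\bar{\bm{V}}^{-1}+O(\epsilon_V)$ and invoking Hanson--Wright--type concentration (the paper packages this as Lemma~\ref{lemma:supp:Preliminaries}), and then deduce the eigenvalue and eigenvector statements by block-wise perturbation along the eigengap partition (the paper does this iteratively via Lemma~\ref{lemma:supp:EigApprox} and Corollary~\ref{corollary:supp:NormSpaceVhat}, which is exactly the pairwise Davis--Kahan you describe). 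One slip: since symmetrizing $\phi_1\lambda_r^{1/2}$ with $\phi_1\lambda_s^{1/2}$ gives the \emph{larger} eigenvalue, your entrywise bound should read $\bm{W}_{rs}=O_P(\phi_1\lambda_{r\wedge s}^{1/2}+\phi_2)$ rather than $\lambda_{r\vee s}^{1/2}$; your subsequent block-level rate $\phi_1\lambda_{k_{\min(r,s)}}^{1/2}+\phi_2$ is nonetheless correct and matches the paper.
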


\begin{proof}
First,
\begin{align*}
    \tilde{\bm{N}}^{\T}\tilde{\bm{N}} = \diag\left(\tau_1,\ldots,\tau_K\right) + \underbrace{p^{-1/2}\tilde{\bm{L}}^{\T}\bm{E}_1 + \left(p^{-1/2}\tilde{\bm{L}}^{\T}\bm{E}_1\right)^{\T}}_{\bm{A}^{(1)}} + \underbrace{p^{-1}\bm{E}_1^{\T}\bm{E}_1}_{\bm{A}^{(2)}}.
\end{align*}
We derive the properties of $\bm{A}^{(1)}$ and $\bm{A}^{(2)}$ below.
\begin{enumerate}[label=(\arabic*)]
    \item Define $\bar{\bm{L}} = n^{1/2}p^{-1/2} \bm{L}\left(np^{-1}\bm{L}^T \bm{L}\right)^{-1/2}$, $\hat{\bm{M}} = \left(n^{-1}\bm{C}^T\hat{\bm{V}}^{-1/2}\bm{C}\right)^{1/2}\bm{U}$ and $\bm{\Delta} = \hat{\bm{V}}^{-1} - \bm{V}^{-1}$. By definition,
    \begin{align*}
        \left(np^{-1}\bm{L}^T \bm{L}\right)^{1/2}\hat{\bm{M}}^{-1} = \hat{\bm{W}}\diag\left(\tau_1^{1/2},\ldots,\tau_K^{1/2}\right), \quad \tilde{\bm{L}} = \bar{\bm{L}}\hat{\bm{W}}\diag\left(\tau_1^{1/2},\ldots,\tau_K^{1/2}\right)
    \end{align*}
    where $\hat{\bm{W}} \in \mathbb{R}^{K \times K}$ is a random unitary matrix. Then
    \begin{align*}
        p^{-1/2}\tilde{\bm{L}}^T \bm{E}_1 =& p^{-1/2}\diag\left(\tau_1^{1/2},\ldots,\tau_K^{1/2}\right)\hat{\bm{W}}^T \bar{\bm{L}}^T \bm{E}\bm{V}^{-1}\left(n^{-1/2}\bm{C}\right)\left(n^{-1}\bm{C}^T\hat{\bm{V}}^{-1/2}\bm{C}\right)^{-1/2}\bm{U}\\
        &+ p^{-1/2}\diag\left(\tau_1^{1/2},\ldots,\tau_K^{1/2}\right)\hat{\bm{W}}^T \bar{\bm{L}}^T \bm{E}\bm{\Delta}\left(n^{-1/2}\bm{C}\right)\left(n^{-1}\bm{C}^T\hat{\bm{V}}^{-1/2}\bm{C}\right)^{-1/2}\bm{U}.
    \end{align*}
    For some large constant $c>0$ that does not depend on $n$ or $p$, we have
    \begin{align*}
        \norm{\bm{\Delta}\left(n^{-1/2}\bm{C}\right)\left(n^{-1}\bm{C}^T\hat{\bm{V}}^{-1/2}\bm{C}\right)^{-1/2}\bm{U}}_2 \leq c\epsilon_V
    \end{align*}
    and
    \begin{align*}
        \norm{\hat{\bm{W}}^T \bar{\bm{L}}^T \bm{E}\bm{V}^{-1}\left(n^{-1/2}\bm{C}\right)\left(n^{-1}\bm{C}^T\hat{\bm{V}}^{-1/2}\bm{C}\right)^{-1/2}\bm{U}}_2 \leq c\norm{\bar{\bm{L}}^T \bm{E}\bm{V}^{-1}\left(n^{-1/2}\bm{C}\right)}_2 = O_P\left(1\right),
    \end{align*}
    where the last equality follows by Lemma \ref{lemma:supp:Preliminaries}. Therefore,
    \begin{align*}
        \bm{A}^{(1)}_{rs} = O_P\left\lbrace \phi_1\left(\lambda_r^{1/2} + \lambda_s^{1/2}\right) \right\rbrace, \quad r,s \in [K].
    \end{align*}
    \item Define $\hat{\bm{M}}_C = \left(n^{-1}\bm{C}^T\hat{\bm{V}}^{-1/2}\bm{C}\right)^{-1/2}\bm{U}$. We see that
    \begin{align*}
        p^{-1}\bm{E}_1^T\bm{E}_1 =& \hat{\bm{M}}_C^T\left(n^{-1/2}\bm{C}\right)^T \hat{\bm{V}}^{-1}\left(p^{-1}\bm{E}^T\bm{E}\right)\hat{\bm{V}}^{-1}\left(n^{-1/2}\bm{C}\right) \hat{\bm{M}}_C\\
        =& \hat{\bm{M}}_C^T\left(n^{-1/2}\bm{C}\right)^T \bm{V}^{-1}\left(p^{-1}\bm{E}^T\bm{E}\right)\bm{V}^{-1}\left(n^{-1/2}\bm{C}\right) \hat{\bm{M}}_C\\
        &+ \hat{\bm{M}}_C^T\left(n^{-1/2}\bm{C}\right)^T \bm{\Delta}\left(p^{-1}\bm{E}^T\bm{E}\right)\bm{V}^{-1}\left(n^{-1/2}\bm{C}\right) \hat{\bm{M}}_C\\
        &+ \hat{\bm{M}}_C^T\left(n^{-1/2}\bm{C}\right)^T \bm{V}^{-1}\left(p^{-1}\bm{E}^T\bm{E}\right)\bm{\Delta}\left(n^{-1/2}\bm{C}\right) \hat{\bm{M}}_C\\
        &+ \hat{\bm{M}}_C^T\left(n^{-1/2}\bm{C}\right)^T \bm{\Delta}\left(p^{-1}\bm{E}^T\bm{E}\right)\bm{\Delta}\left(n^{-1/2}\bm{C}\right) \hat{\bm{M}}_C.
    \end{align*}
    By Lemma \ref{lemma:supp:Preliminaries}, we then get that
    \begin{align*}
        \norm{I_K - \bm{A}^{(2)}}_2 = O_P\left( p^{-1/2} + \phi_2 \right).
    \end{align*}
\end{enumerate}
Therefore, for $\bm{M}=\tilde{\bm{N}}^{\T}\tilde{\bm{N}}$
\begin{align*}
    \bm{M}_{rs} = \begin{cases}
    \tau_r + 1 + O_P\left( \phi_1\lambda_r^{1/2} + \phi_2 \right) & \text{if $r=s$}\\
    O_P\left( \phi_1\lambda_r^{1/2} + \phi_2 \right) & \text{if $r \neq s$}
    \end{cases}, \quad r,s \in [K]
\end{align*}
Let $\tilde{\bm{M}}_j \in \mathbb{R}^{\left(k_{j}-k_{j-1}\right) \times \left(k_{j}-k_{j-1}\right)}$ be a diagonal matrix containing the eigenvalues of the $k_j$th diagonal block of $\bm{M}$. By Lemma \ref{lemma:supp:EigApprox}, 
\begin{align*}
    \tilde{\bm{M}}_{j_{ss}} = \tau_{k_{j-1}+s} + \delta^2 + O_P\left(\phi_1\lambda_{k_{j-1}+s}^{1/2} + \phi_2\right).
\end{align*}
Then for $\bm{v}_{rs} \in \mathbb{R}^{\left( k_r-k_{r-1}\right)\times \left( k_s-k_{s-1}\right)}$, we can write $\bm{M}$ as
\begin{align*}
    \bm{M} =& \begin{pmatrix} \bm{v}_{11}\\\vdots\\\bm{v}_{J 1} \end{pmatrix}\tilde{\bm{M}}_1 \begin{pmatrix} \bm{v}_{11}^T&\cdots&\bm{v}_{J 1}^T \end{pmatrix} + \begin{pmatrix} \bm{0}_{k_1 \times \left(k_2-k_1\right)}\\\bm{v}_{2 2}\\\vdots\\\bm{v}_{J 2} \end{pmatrix}\tilde{\bm{M}}_2 \begin{pmatrix} \bm{0}_{k_1 \times \left(k_2-k_1\right)}^T&\bm{v}_{2 2}^T&\cdots&\bm{v}_{J 2}^T \end{pmatrix} + \cdots\\
    &+ \begin{pmatrix} \bm{0}_{k_1 \times \left(k_J-k_{J-1}\right)}\\\vdots\\\bm{0}_{\left(k_{J-1}-k_{J-2}\right) \times \left(k_J-k_{J-1}\right)}\\\bm{v}_{J J} \end{pmatrix}\tilde{\bm{M}}_J \begin{pmatrix} \bm{0}_{k_1 \times \left(k_J-k_{J-1}\right)}^T&\cdots&\bm{0}_{\left(k_{J-1}-k_{J-2}\right) \times \left(k_J-k_{J-1}\right)}^T&\bm{v}_{J J}^T \end{pmatrix} + \bm{\Delta},
\end{align*}
where $\bm{v}_{jj}$ is a unitary matrix for all $j=1,\ldots,J$ and $\norm{\bm{v}_{rs}}_2 = O_P\left(\phi_1\lambda_{k_s}^{-1/2} + \phi_2\lambda_{k_s}^{-1}\right)$ for $r \neq s$. The matrix $\bm{\Delta} = \begin{pmatrix}\bm{\Delta}_{11} & \bm{\Delta}_{21}^T & \cdots & \bm{\Delta}_{J1}^T\\ \bm{\Delta}_{21} & \bm{\Delta}_{22} & \cdots & \bm{\Delta}_{J2}^T\\
\vdots & \vdots & \ddots & \vdots\\
\bm{\Delta}_{J1} & \bm{\Delta}_{J2} & \cdots &\bm{\Delta}_{JJ}
\end{pmatrix}$, where $\bm{\Delta}_{rs} \in \mathbb{R}^{\left(k_{r}-k_{r-1}\right) \times \left(k_{s}-k_{s-1}\right)}$, is such that
\begin{align*}
    \norm{\bm{\Delta}_{rs}}_2 = O_P\left(K^2 \phi_1^2\right) + O_P\left\lbrace \phi_2^2 \sum\limits_{j=1}^{s-1}\lambda_j^{-1}\right\rbrace 1\left(s \geq 2\right), \quad s \leq r.
\end{align*}
Let $\bm{V}_j \in \mathbb{R}^{K \times \left(k_j-k_{j-1}\right)}$ be the eigenvectors corresponding to eigenvalues $\mu_{k_{j-1}+1},\ldots,\mu_{k_{j}}$. By Lemma \ref{lemma:supp:EigApprox} and Corollary \ref{corollary:supp:NormSpaceVhat},
\begin{align*}
&\mu_{s} = \tau_{s} + \delta^2 + O_P\left(\phi_1\lambda_{k_1}^{1/2} + \phi_2\right), \quad s \in \left[k_1\right]\\
&\norm{\begin{pmatrix} I_{k_1\times k_1} & \bm{0}_{k_1\times\left(K-k_1\right)}\\
\bm{0}_{\left(K-k_1\right)\times k_1} & \bm{0}_{\left(K-k_1\right)\times\left(K-k_1\right)}\end{pmatrix} -P_{V_1}}_F = O_P\left( \phi_1^2\lambda_1^{-1/2} + \phi_2\lambda_1^{-1} \right).
\end{align*}
To understand $\bm{V}_j$ for $j > 1$, define $\tilde{\bm{V}}_j = \begin{pmatrix} \bm{0}_{k_1 \times \left(k_{j}-k_{j-1}\right)}\\ \vdots\\ \bm{0}_{\left(k_{j-1}-k_{j-2}\right) \times \left(k_{j}-k_{j-1}\right)}\\ \bm{v}_{jj} \\ \vdots\\ \bm{v}_{Jj} \end{pmatrix}$. Then
\begin{align*}
    &P_{\left(\tilde{V}_1 \cdots \tilde{V}_{j-1}\right)}^{\perp}\tilde{\bm{V}}_j = \tilde{\bm{V}}_j - \left(\tilde{\bm{V}}_1 \cdots \tilde{\bm{V}}_{j-1}\right) \begin{pmatrix} \tilde{\bm{V}}_1^T \tilde{\bm{V}}_1 & \cdots &  \tilde{\bm{V}}_1^T \tilde{\bm{V}}_{j-1}\\
    \vdots & \ddots & \vdots\\
    \tilde{\bm{V}}_{j-1}^T \tilde{\bm{V}}_1 & \cdots &  \tilde{\bm{V}}_{j-1}^T \tilde{\bm{V}}_{j-1}\end{pmatrix}^{-1}\begin{pmatrix} \tilde{\bm{V}}_1^T \tilde{\bm{V}}_j\\ \vdots \\ \tilde{\bm{V}}_{j-1}^T \tilde{\bm{V}}_j \end{pmatrix} = \tilde{\bm{V}}_j - \bm{\Delta}_j\\
    &\norm{\bm{\Delta}_j}_2 = O_P\left( \phi_1\lambda_{k_{j-1}}^{-1/2} + \phi_2\lambda_{k_{j-1}}^{-1} \right).
\end{align*}
Let $\bm{R}_j$ be a symmetric matrix such that $\left(\tilde{\bm{V}}_j - \bm{\Delta}_j\right)\bm{R}_j$ has orthogonal columns. Therefore,
\begin{align*}
    \lambda_{k_j}^{-1}\bm{M}\left(\tilde{\bm{V}}_j - \bm{\Delta}_j\right)\bm{R}_j = \lambda_{k_j}^{-1}\left(\tilde{\bm{V}}_j - \bm{\Delta}_j\right)\bm{R}_j\tilde{\bm{M}}_j + O_P\left( \phi_1\lambda_{k_{j}}^{-1/2} + \phi_2\lambda_{k_{j}}^{-1} \right),
\end{align*}
which by Lemma \ref{lemma:supp:EigApprox} and Corollary \ref{corollary:supp:NormSpaceVhat},
\begin{align}
\label{equation:supp:mus_lemma}
    &\mu_{s} = \tau_s + \delta^2 + O_P\left(\phi_1\lambda_{k_j}^{1/2} + \phi_2\lambda_{k_j}\right), \quad s\in\left\lbrace k_{j-1}+1,\ldots,k_j  \right\rbrace\\
    &\norm{ \bm{0}_{k_1 \times k_1} \oplus \cdots \oplus \bm{0}_{\left(k_{j-1}-k_{j-2}\right) \times \left(k_{j-1}-k_{j-2}\right)} \oplus I_{\left(k_j-k_{j-1}\right)\times \left(k_j-k_{j-1}\right)} \oplus \bm{0}_{\left(k_{j+1}-k_{j}\right) \times \left(k_{j+1}-k_{j}\right)} \oplus \cdots \oplus \bm{0}_{\left(k_{J}-k_{J-1}\right) \times \left(k_{J}-k_{J-1}\right)} - P_{V_j} }_F\nonumber\\
\label{equation:supp:Vs_lemma}
    & = O_P\left( \phi_1\lambda_{k_{j}}^{-1/2} + \phi_2\lambda_{k_{j}}^{-1} \right).
\end{align}
Equation \eqref{equation:supp:mus_lemma} proves \eqref{equation:supp:mus}.\par
\indent To prove the remainder of the lemma, let $\bm{V}_j = \begin{pmatrix} \bm{V}_{1j} \\ \vdots \\ \bm{V}_{Jj}\end{pmatrix}$. Then by \eqref{equation:supp:Vs_lemma} and for $r > j$,
\begin{align*}
    \norm{I_{\left(k_j-k_{j-1}\right) \times \left(k_j-k_{j-1}\right)} - \bm{V}_{jj}^T \bm{V}_{jj}}_2 = O_P\left(\phi_1^2 \lambda_{k_j}^{-1} + \phi_2^2 \lambda_{k_j}^{-2}\right), \quad \norm{\bm{V}_{rj}}_2 = O_P\left(\phi_1\lambda_{k_j}^{-1/2} + \phi_2 \lambda_{k_j}^{-1}\right).
\end{align*}
Lastly, for any $s < j$, we have
\begin{align*}
    \bm{0} = \bm{V}_{ss}^T\left(\bm{V}_s^T \bm{V}_j\right) = \bm{V}_{ss}^T\bm{V}_{ss}\bm{V}_{sj} + O_P\left(\phi_1\lambda_{k_s}^{-1/2} + \phi_2 \lambda_{k_s}^{-1}\right) = \bm{V}_{sj} + O_P\left(\phi_1\lambda_{k_s}^{-1/2} + \phi_2 \lambda_{k_s}^{-1}\right).
\end{align*}
Therefore,
\begin{align*}
    \norm{\bm{V}_{sj}}_2 = O_P\left(\phi_1\lambda_{k_s}^{-1/2} + \phi_2 \lambda_{k_s}^{-1}\right), \quad s < j.
\end{align*}
This proves \eqref{equation:supp:UpperEigVectors} and completes the proof.
\end{proof}

\begin{remark}
\label{remark:supp:J}
Note that \eqref{equation:supp:UpperEigVectors} holds if we define $k_j$ in terms of $\tau_1,\ldots,\tau_K$, defined in \eqref{equation:supp:tau}, as follows: Let $k_0 = 0$ and define $k_j$ inductively as
\begin{align*}
    k_j = \min\left( \left\lbrace r \in \left\lbrace k_{j-1}+1,\ldots,K \right\rbrace: \tau_r/\tau_{r+1} \geq 1+\epsilon \right\rbrace \right), \quad j \in [J],
\end{align*}
where $k_J = K$ and $\epsilon > 0$ is an arbitrarily small constant.
\end{remark}

\subsection{Understanding $\hat{\bm{v}}$ and $\hat{\bm{z}}$ given an estimate for $\bm{V}$}
\label{subsection:supp:vandzhat}
We use the results of Lemma \ref{lemma:supp:UpperBlock} to study the properties of $\hat{\bm{v}}= \left(\hat{\bm{v}}_1 \cdots \hat{\bm{v}}_K\right) \in \mathbb{R}^{K \times K}$ and $\hat{\bm{z}} = \left(\hat{\bm{Z}}_1 \cdots \hat{\bm{v}}_K\right) \in \mathbb{R}^{(n-K) \times K}$, which were defined in Section \ref{subsection:supp:EigenPreliminaries}.

\begin{lemma}
\label{lemma:supp:vandzhat}
Suppose Assumptions \ref{assumption:CandL}, \ref{assumption:FALCO} and \ref{assumption:DependenceE} hold, and let $\hat{\bm{v}},\hat{\bm{z}}$ be as defined above, $k_0,k_1,\ldots,k_J$ be as defined in Section \ref{subsection:supp:EigenPreliminaries}, $\phi_1,\phi_2, \tilde{\bm{N}},\mu_1,\ldots,\mu_K$ be as defined in Lemma \ref{lemma:supp:UpperBlock} and $\bm{S}$ be as defined in \eqref{equation:supp:S}. Define $\hat{\mu}_s$ to be the $s$th eigenvalue of $\bm{S}$, and for $\bm{V}_{rs}$, $r,s \in [J]$, defined in the statement of Lemma \ref{lemma:supp:UpperBlock}, let $\bm{V}_j = \left(\bm{V}_{1j}^{\T}\cdots \bm{V}_{Jj}^{\T}\right)^{\T}$ for all $j \in [J]$. Set $\hat{\bm{V}}_j = \left(\hat{\bm{v}}_{k_{j-1}+1} \cdots \hat{\bm{v}}_{k_{j}}\right) = \left(\hat{\bm{V}}_{1j}^{\T}\cdots \hat{\bm{V}}_{Jj}^{\T}\right)^{\T}, \hat{\bm{Z}}_j = \left(\hat{\bm{z}}_{k_{j-1}+1} \cdots \hat{\bm{z}}_{k_{j}}\right)$ for all $j \in [J]$, where $\hat{\bm{V}}_{rj} \in \mathbb{R}^{(k_{r}-k_{r-1}) \times (k_{j}-k_{j-1})}$ for all $r \in [J]$, and let $f:[K] \to [J]$ be such that $s \in \left\lbrace k_{f(s)-1}+1,\ldots,k_{f(s)} \right\rbrace$. Then if $\phi_2/\lambda_{k_{t}} = o_P(1)$ for some $t \in [J]$, the following hold as $n,p \to \infty$:
\begin{subequations}
\label{equation:supp:vandzResults}
\begin{align}
    \label{equation:supp:muhat}
    &\hat{\mu}_s = \mu_s + O_P\left(np^{-1} + \phi_2^2 \lambda_s^{-1}\right), \quad s \in \left[k_{t}\right]\\
    \label{equation:supp:Vhatrr}
    &\norm{\hat{\bm{V}}_{rr}\hat{\bm{V}}_{rr}^{\T}-I_{(k_r-k_{r-1})}}_F = O_P\left(np^{-1}\lambda_{k_r}^{-1}+\phi_2^2\lambda_{k_r}^{-2} \right), \quad r \in [t]\\
    \label{equation:supp:Vhatrj}
    &\norm{\hat{\bm{V}}_{rj}}_F = \begin{cases}
    O_P\left(\frac{n}{p\lambda_{k_r}^{1/2}\lambda_{k_j}^{1/2}} + \phi_1 \lambda_{k_r}^{-1/2} + \phi_2\lambda_{k_r}^{-1} \right) & \text{if $r < j$ and $j \in [t]$}\\
    O_P\left(\frac{n}{p\lambda_{k_j}} + \phi_1 \lambda_{k_j}^{-1/2} + \phi_2\lambda_{k_j}^{-1} \right) & \text{if $r > j$ and $j \in [t]$}
    \end{cases}\\
    &\hat{\bm{z}}_s = \left(\hat{\mu}_s-1\right)^{-1}p^{-1/2}\bm{E}_2^{\T}\tilde{\bm{N}}\left(\bm{V}_{f(s)} \cdots \bm{V}_{J}\right)\left(\bm{V}_{f(s)} \cdots \bm{V}_{J}\right)^{\T}\hat{\bm{v}}_s\nonumber\\
    &+ \left(\hat{\mu}_s-1\right)^{-1}p^{-1/2}\bm{R}_s\bm{E}_2^{\T}\tilde{\bm{N}}\left(\bm{V}_{f(s)} \cdots \bm{V}_{J}\right)\left(\bm{V}_{f(s)} \cdots \bm{V}_{J}\right)^{\T}\hat{\bm{v}}_s\nonumber\\
    \label{equation:supp:zhatLemma}
    &+ O_P\left[ \lambda_s^{-1}\left\lbrace \left(\frac{n}{\lambda_s p}\right)^{1/2} + \phi_2 \lambda_{s}^{-2} \right\rbrace\left(np^{-1} + \phi_2^{2}\lambda_s^{-1}\right) \right], \quad s \in \left[k_{t}\right]\\
    \label{equation:supp:zhatnorm}
    &\norm{\hat{\bm{z}}_s}_2 = O_P\left( n^{1/2}p^{-1/2}\lambda_s^{-1/2} + \phi_2 \lambda_s^{-1} \right), \quad s \in \left[k_{t}\right],
\end{align}
\end{subequations}
where
\begin{align*}
    \bm{R}_s &= \left[ I_{n-K} + \left(\hat{\mu}_s - 1\right)^{-1}\left\lbrace I_{n-K}-p^{-1}\bm{E}_2^T \bm{E}_2 + O_P\left(np^{-1}+\phi_2^2\lambda_{k_{f(s)-1}}^{-1}\right)\right\rbrace I\left\lbrace f(s) > 1 \right\rbrace \right]^{-1} - I_{n-K}\\
    &= O_P\left\lbrace \lambda_s^{-1}\left(n^{1/2}p^{-1/2} + \phi_2\right) \right\rbrace, \quad s \in \left[k_t\right].
\end{align*}
\end{lemma}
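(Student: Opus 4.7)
The plan is to exploit the $2\times 2$ block structure of $\bm{S}$ in \eqref{equation:supp:S} and eliminate $\hat{\bm{z}}_s$ by Schur complement. The eigenvalue equation $\bm{S}(\hat{\bm{v}}_s^{\T},\,\hat{\bm{z}}_s^{\T})^{\T}=\hat{\mu}_s(\hat{\bm{v}}_s^{\T},\,\hat{\bm{z}}_s^{\T})^{\T}$ has second block row $p^{-1/2}\bm{E}_2^{\T}\tilde{\bm{N}}\hat{\bm{v}}_s+p^{-1}\bm{E}_2^{\T}\bm{E}_2\hat{\bm{z}}_s=\hat{\mu}_s\hat{\bm{z}}_s$, so $\hat{\bm{z}}_s=(\hat{\mu}_sI_{n-K}-p^{-1}\bm{E}_2^{\T}\bm{E}_2)^{-1}p^{-1/2}\bm{E}_2^{\T}\tilde{\bm{N}}\hat{\bm{v}}_s$. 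Substituting into the first block row gives the nonlinear eigenvalue problem
\[
\bigl[\tilde{\bm{N}}^{\T}\tilde{\bm{N}}+p^{-1}\tilde{\bm{N}}^{\T}\bm{E}_2(\hat{\mu}_sI_{n-K}-p^{-1}\bm{E}_2^{\T}\bm{E}_2)^{-1}\bm{E}_2^{\T}\tilde{\bm{N}}\bigr]\hat{\bm{v}}_s=\hat{\mu}_s\hat{\bm{v}}_s,
\]
so all results for $\hat{\mu}_s$ and $\hat{\bm{v}}_s$ reduce to perturbing the spectrum of $\tilde{\bm{N}}^{\T}\tilde{\bm{N}}$ analyzed in Lemma \ref{lemma:supp:UpperBlock}, while $\hat{\bm{z}}_s$ is recovered by resubstitution.

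Next I would rewrite the resolvent as $(\hat{\mu}_sI-p^{-1}\bm{E}_2^{\T}\bm{E}_2)^{-1}=(\hat{\mu}_s-1)^{-1}[I_{n-K}+(\hat{\mu}_s-1)^{-1}(I_{n-K}-p^{-1}\bm{E}_2^{\T}\bm{E}_2)]^{-1}$ and expand via Neumann series. Using Lemma \ref{lemma:supp:Preliminaries} to get $\|I_{n-K}-p^{-1}\bm{E}_2^{\T}\bm{E}_2\|_2=O_P(n^{1/2}p^{-1/2}+\phi_2^2)$ together with $\hat{\mu}_s\asymp\mu_s\asymp\lambda_s$ (from a crude Weyl bound, valid because $s\le k_t$ and $\phi_2/\lambda_{k_t}=o_P(1)$) justifies the expansion and yields the stated form of $\bm{R}_s$. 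The dominant correction to $\tilde{\bm{N}}^{\T}\tilde{\bm{N}}$ is $(\hat{\mu}_s-1)^{-1}p^{-1}\tilde{\bm{N}}^{\T}\bm{E}_2\bm{E}_2^{\T}\tilde{\bm{N}}$, whose operator norm is $O_P(np^{-1})$ via sub-Gaussian concentration under Assumption \ref{assumption:DependenceE}, while the next order contributes the $O_P(\phi_2^2\lambda_s^{-1})$ term. Applying Weyl's inequality to the perturbed matrix delivers \eqref{equation:supp:muhat}, and a Davis--Kahan argument on the block eigenspaces $\bm{V}_1,\ldots,\bm{V}_J$ of Lemma \ref{lemma:supp:UpperBlock}, using the eigengap $\mu_{k_{r}}/\mu_{k_r+1}\asymp\lambda_{k_r}/\lambda_{k_r+1}\ge c_2$, gives \eqref{equation:supp:Vhatrr}. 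The asymmetry in \eqref{equation:supp:Vhatrj} between $r<j$ and $r>j$ arises because $\|\bm{V}_r^{\T}\hat{\bm{V}}_j\|_F$ inherits the $\bm{V}_r$-coordinate magnitude of the perturbation applied to the span of $\bm{V}_j$: the dominant Schur term acts with effective strength $\lambda_{k_r}^{1/2}\lambda_{k_j}^{1/2}\cdot n/p$ for the off-diagonal entries while the gap in the denominator is $|\mu_{k_r}-\mu_{k_j}|\asymp\lambda_{k_{\min(r,j)}}$.

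Finally, for \eqref{equation:supp:zhatLemma} and \eqref{equation:supp:zhatnorm}, I would substitute the Neumann expansion into $\hat{\bm{z}}_s=(\hat{\mu}_sI-p^{-1}\bm{E}_2^{\T}\bm{E}_2)^{-1}p^{-1/2}\bm{E}_2^{\T}\tilde{\bm{N}}\hat{\bm{v}}_s$: the leading term is $(\hat{\mu}_s-1)^{-1}p^{-1/2}\bm{E}_2^{\T}\tilde{\bm{N}}\hat{\bm{v}}_s$ and the next term produces $(\hat{\mu}_s-1)^{-1}\bm{R}_sp^{-1/2}\bm{E}_2^{\T}\tilde{\bm{N}}\hat{\bm{v}}_s$. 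Inserting $(\bm{V}_1\cdots\bm{V}_J)(\bm{V}_1\cdots\bm{V}_J)^{\T}=I_K$ after $\tilde{\bm{N}}$ and using \eqref{equation:supp:Vhatrj} to show $\bm{V}_j^{\T}\hat{\bm{v}}_s$ is negligible for $j<f(s)$ replaces the identity by the projector $(\bm{V}_{f(s)}\cdots\bm{V}_J)(\bm{V}_{f(s)}\cdots\bm{V}_J)^{\T}$, producing exactly the stated display. The norm bound \eqref{equation:supp:zhatnorm} then follows from $(\hat{\mu}_s-1)^{-1}=O_P(\lambda_s^{-1})$ and $\|p^{-1/2}\bm{E}_2^{\T}\tilde{\bm{N}}\bm{V}_{f(s)}\|_2=O_P(n^{1/2}p^{-1/2}\lambda_s^{1/2})$.

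The main obstacle will be the third paragraph's projection identification: tracking the block-asymmetric rates in \eqref{equation:supp:Vhatrj} through the Schur complement perturbation to certify that the ``cross'' blocks $\bm{V}_r^{\T}\hat{\bm{v}}_s$ for $r<f(s)$ are small enough at the right rate so that replacing $I_K$ by $(\bm{V}_{f(s)}\cdots\bm{V}_J)(\bm{V}_{f(s)}\cdots\bm{V}_J)^{\T}$ only costs a remainder of order $\lambda_s^{-1}\{(n/(\lambda_sp))^{1/2}+\phi_2\lambda_s^{-2}\}(np^{-1}+\phi_2^2\lambda_s^{-1})$. This demands coordinating the eigengap rates across multiple blocks simultaneously, rather than invoking a single Davis--Kahan bound.
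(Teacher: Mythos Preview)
Your Schur-complement reduction, the Neumann expansion of the resolvent, and the identification of $\bm{R}_s$ are exactly what the paper does. The disagreement is in how you plan to pass from block $1$ to blocks $j\ge 2$.

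Your claim that the Schur correction $(\hat{\mu}_s-1)^{-1}p^{-1}\tilde{\bm{N}}^{\T}\bm{E}_2\bm{E}_2^{\T}\tilde{\bm{N}}$ has operator norm $O_P(np^{-1})$ is only true for $s\in[k_1]$. In the eigenbasis $(\bm{V}_1\cdots\bm{V}_J)$ of $\tilde{\bm{N}}^{\T}\tilde{\bm{N}}$ the $(r,r')$ block of this correction has size $O_P\{\lambda_{k_r}^{1/2}\lambda_{k_{r'}}^{1/2}\,n/(p\lambda_s)\}$ by \eqref{equation:supp:E2NV}, so for $s>k_1$ the $(1,1)$ block alone is of order $(n/p)\,\lambda_{k_1}/\lambda_s$, which can diverge. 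Neither your ``crude Weyl bound'' $\hat{\mu}_s\asymp\mu_s$ nor a single Davis--Kahan step is then justified, and this is precisely the circularity you flag in your last paragraph: you need \eqref{equation:supp:Vhatrj} to insert the projector, but you need the projector (or something equivalent) to get \eqref{equation:supp:Vhatrj}.

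The paper breaks this circularity by an \emph{iterative deflation on the full $n\times n$ matrix $\bm{S}$}, not on the $K\times K$ Schur complement. After establishing the block-$1$ results exactly as you outline, it subtracts the rank-$k_1$ piece and studies
\[
\bm{S}^{(1)}=\bm{S}-\begin{pmatrix}\hat{\bm{V}}_1\\ \hat{\bm{Z}}_1\end{pmatrix}\diag(\hat{\mu}_1,\ldots,\hat{\mu}_{k_1})\begin{pmatrix}\hat{\bm{V}}_1\\ \hat{\bm{Z}}_1\end{pmatrix}^{\T}=\begin{pmatrix}\bm{A}^{(1)}&\{\bm{B}^{(1)}\}^{\T}\\ \bm{B}^{(1)}&\bm{D}^{(1)}\end{pmatrix}.
\]
The key computation is that the new cross-block becomes $\bm{B}^{(1)}=p^{-1/2}\bm{E}_2^{\T}\tilde{\bm{N}}(\bm{V}_2\cdots\bm{V}_J)(\bm{V}_2\cdots\bm{V}_J)^{\T}+O_P\{(n/(\lambda_{k_1}p))^{1/2}+\phi_2\lambda_{k_1}^{-1}\}$, so $\|\bm{B}^{(1)}\|_2=O_P(\lambda_{k_2}^{1/2}n^{1/2}p^{-1/2}+\phi_2)$, small relative to $\lambda_{k_2}$. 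One then reruns the Schur/Weyl/eigenvector argument on $\bm{S}^{(1)}$ for $s\in\{k_1+1,\ldots,k_2\}$, and iterates. Two features of the statement that look mysterious from your direct approach fall out immediately from this deflation: the projector $(\bm{V}_{f(s)}\cdots\bm{V}_J)(\bm{V}_{f(s)}\cdots\bm{V}_J)^{\T}$ in \eqref{equation:supp:zhatLemma} is exactly the surviving piece of $\bm{B}^{(f(s)-1)}$, and the extra $O_P(np^{-1}+\phi_2^2\lambda_{k_{f(s)-1}}^{-1})$ inside $\bm{R}_s$ when $f(s)>1$ comes from $\bm{D}^{(f(s)-1)}=p^{-1}\bm{E}_2^{\T}\bm{E}_2-\sum_{k\le k_{f(s)-1}}\hat{\mu}_k\hat{\bm{z}}_k\hat{\bm{z}}_k^{\T}$ rather than from $p^{-1}\bm{E}_2^{\T}\bm{E}_2$ itself.
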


\begin{proof}
Let $\bm{M}$ be as defined in Lemma \ref{lemma:supp:UpperBlock}. We first attempt to understand the components of $\bm{S}$. First, for some constant $c > 0$ not dependent on $n$ or $p$,
\begin{align*}
    \norm{p^{-1/2}\tilde{\bm{L}}_{\bigcdot k}^{\T}\bm{E}_2}_2 \leq c \norm{p^{-1/2}\tilde{\bm{L}}_{\bigcdot k}^{\T}\bm{E}}_2 = O_P\left(n^{1/2}p^{-1/2}\lambda_k^{1/2}\right), \quad k \in [K],
\end{align*}
where the equality follows by the proof of Lemma \ref{lemma:supp:UpperBlock}. Next, for $\bm{\Delta} = \hat{\bm{V}}^{-1} - \bm{V}^{-1}$,
\begin{align*}
    p^{-1}\bm{E}_2^T \bm{E}_{1} =& p^{-1}\left(\bm{Q}_{C}^T \hat{\bm{V}} \bm{Q}_C\right)^{-1/2}\bm{Q}_C^T\bm{E}^T \bm{E}\hat{\bm{V}}^{-1}\bm{C}\left(\bm{C}^T \hat{\bm{V}}^{-1}\bm{C}\right)^{-1/2}\bm{U}\\
    =& p^{-1}\left(\bm{Q}_{C}^T \hat{\bm{V}} \bm{Q}_C\right)^{-1/2}\bm{Q}_C^T\bm{E}^T \bm{E}\bm{V}^{-1}\bm{C}\left(\bm{C}^T \hat{\bm{V}}^{-1}\bm{C}\right)^{-1/2}\bm{U}\\
    & + p^{-1}\left(\bm{Q}_{C}^T \hat{\bm{V}} \bm{Q}_C\right)^{-1/2}\bm{Q}_C^T\bm{E}^T \bm{E}\bm{\Delta}\bm{C}\left(\bm{C}^T \hat{\bm{V}}^{-1}\bm{C}\right)^{-1/2}\bm{U},
\end{align*}
meaning
\begin{align*}
    \norm{p^{-1}\bm{E}_2^T \bm{E}_{1}}_2 = O_P\left(n^{1/2}p^{-1} + \phi_2\right)
\end{align*}
by Lemma \ref{lemma:supp:Preliminaries}. Lastly, for some constant $c > 0$ that does not depend on $n$ or $p$,
\begin{align*}
    \norm{p^{-1}\bm{E}_2^{\T}\bm{E}_2 - I_{n-K}}_2 \leq c\left( \norm{p^{-1}\bm{E}^{\T}\bm{E} - \bm{V}}_2 + \norm{ \bm{V}-\hat{\bm{V}} }_2 \right) = O_P\left(n^{1/2}p^{-1/2} + \phi_2\right).
\end{align*}
Weyl's Theorem then implies $\abs{\hat{\mu}_k/\mu_k - 1} = o_P(1)$ for all $k \in \left[k_1\right]$, meaning by the definition of $\hat{\bm{v}}_k$ and $\hat{\bm{z}}_k$,
\begin{align*}
\hat{\mu}_k\hat{\bm{v}}_{k} &= \bm{M}\hat{\bm{v}}_k + p^{-1}\tilde{\bm{N}}^T\bm{E}_2 \left(\hat{\mu}_k - p^{-1}\bm{E}_2^T  \bm{E}_2\right)^{-1}\bm{E}_2^T \tilde{\bm{N}} \hat{\bm{v}}_k, \quad k \in \left[k_1\right]\\
    \hat{\bm{z}}_k &= p^{-1/2}\left(\hat{\mu}_k - p^{-1}\bm{E}_2^T  \bm{E}_2\right)^{-1}\bm{E}_2^T\tilde{\bm{N}} \hat{\bm{v}}_k, \quad k \in \left[k_1\right]\\
    & = p^{-1/2}\left(\hat{\mu}_k - 1\right)^{-1}\bm{E}_2^T  \tilde{\bm{N}}\hat{\bm{v}}_k + p^{-1/2}\left(\hat{\mu}_k - 1\right)^{-1}\bm{R}_k\bm{E}_2^T  \tilde{\bm{N}}\hat{\bm{v}}_k, \quad k \in \left[k_1\right],
\end{align*}
where
\begin{align*}
    \bm{R}_k = \left\lbrace I_{n-K} + \left(\hat{\mu}_k - 1\right)^{-1}\left(I_{n-K}-p^{-1}\bm{E}_2^T \bm{E}_2\right) \right\rbrace^{-1} - I_{n-K} = O_P\left\lbrace \lambda_k^{-1}\left(n^{1/2}p^{-1/2} + \epsilon_V\right) \right\rbrace, \quad k \in \left[k_1\right]
\end{align*}
and
\begin{align*}
    \norm{\hat{\bm{z}}_k}_2 = O_P\left(n^{1/2}p^{-1/2}\lambda_k^{-1/2} + \phi_2 \lambda_k^{-1}\right), \quad k \in \left[k_1\right].
\end{align*}
By Weyl's theorem,
\begin{align*}
    \hat{\mu}_k = \mu_k + O_P\left( np^{-1} + \phi_2^2\lambda_k^{-1} \right) \quad , k \in \left[k_1\right].
\end{align*}
Define $\hat{\bm{V}}_1 = \left(\hat{\bm{v}}_1 \cdots \hat{\bm{v}}_{k_1}\right)$ and $\hat{\bm{Z}}_1 = \left(\hat{\bm{z}}_1 \cdots \hat{\bm{z}}_{k_1}\right)$. By definition, $\hat{\bm{V}}_1^T \hat{\bm{V}}_1 = I_{k_1\times k_1} - \hat{\bm{Z}}_1^{\T}\hat{\bm{Z}}_1$. Define
\begin{align*}
    \bm{M}^{(k)} = \bm{M} + p^{-1}\tilde{\bm{N}}^T \bm{E}_2 \left(\hat{\mu}_k - p^{-1}\bm{E}_2^T  \bm{E}_2\right)^{-1}\bm{E}_2^T \tilde{\bm{N}}, \quad k \in [k_1].
\end{align*}
Then
\begin{align*}
    \hat{\bm{v}}_k\hat{\mu}_k = \bm{M}^{(k)}\hat{\bm{v}}_k = \bm{M}\hat{\bm{v}}_k + \bm{\epsilon}_k, \quad k \in [k_1],
\end{align*}
meaning
\begin{align*}
    \bm{M}\hat{\bm{V}}_1\left(\hat{\bm{V}}_1^T \hat{\bm{V}}_1\right)^{-1/2} &= \hat{\bm{V}}_1\left(\hat{\bm{V}}_1^T \hat{\bm{V}}_1\right)^{-1/2}\diag\left(\hat{\mu}_1,\ldots,\hat{\mu}_{k_1}\right)\left(\hat{\bm{V}}_1^T \hat{\bm{V}}_1\right)^{-1/2} - \left(\bm{\epsilon}_1 \cdots \bm{\epsilon}_{k_1}\right)\left(\hat{\bm{V}}_1^T \hat{\bm{V}}_1\right)^{-1/2}\\
    +& \underbrace{\hat{\bm{V}}_1\left\lbrace I_{k_1}-\left(\hat{\bm{V}}_1^T \hat{\bm{V}}_1\right)^{-1/2} \right\rbrace\diag\left(\hat{\mu}_1,\ldots,\hat{\mu}_{k_1}\right)\left(\hat{\bm{V}}_1^T \hat{\bm{V}}_1\right)^{-1/2}}_{=O_P\left( np^{-1} + \phi_2^2 \lambda_{k_1}^{-1} \right)}.
\end{align*}
By Lemma \ref{lemma:supp:EigVector} and Corollary \ref{corollary:supp:NormSpaceVhat}, this shows that
\begin{align*}
    \norm{\hat{\bm{V}}_1 \hat{\bm{V}}_1^T - \bm{V}_1 \bm{V}_1^T}_F = O_P\left( np^{-1}\lambda_{k_1}^{-1} + \phi_2^2 \lambda_{k_1}^{-2} \right)
\end{align*}
and that
\begin{align*}
    \norm{P_{V_1}^{\perp}\hat{\bm{V}}_1 \hat{\bm{V}}_1^T}_F = O_P\left( np^{-1}\lambda_{k_1}^{-1} + \phi_2^2 \lambda_{k_1}^{-2} \right).
\end{align*}
Therefore, if we express the eigenvectors of $\bm{M}$ as $\left(\bm{V}_1 \bm{V}_1^{\perp}\right)$,
\begin{align*}
    &\bm{V}_1\diag\left(\mu_1,\ldots,\mu_{k_1}\right)\bm{V}_1^T + O_P\left(np^{-1} + \phi_2^2 \lambda_{k_1}^{-1}\right) = \bm{V}_1\diag\left(\mu_1,\ldots,\mu_{k_1}\right)\bm{V}_1^T\hat{\bm{V}}_1 \hat{\bm{V}}_1^T\\
    & + \bm{V}_1^{
    \perp}\diag\left(\mu_{k_1+1},\ldots,\mu_{K}\right)\left(\bm{V}_1^{\perp}\right)^T\hat{\bm{V}}_1 \hat{\bm{V}}_1^T = \bm{M}\hat{\bm{V}}_1 \hat{\bm{V}}_1^T=\hat{\bm{V}}_1 \diag\left(\hat{\mu}_1,\ldots,\hat{\mu}_{k_1}\right)\hat{\bm{V}}_1^T + O_P\left(np^{-1} + \phi_2^2 \lambda_{k_1}^{-1}\right),
\end{align*}
meaning
\begin{align*}
    \hat{\bm{V}}_1 \diag\left(\hat{\mu}_1,\ldots,\hat{\mu}_{k_1}\right)\hat{\bm{V}}_1^T = \bm{V}_1\diag\left(\mu_1,\ldots,\mu_{k_1}\right)\bm{V}_1^T + O_P\left(np^{-1} + \phi_2^2 \lambda_{k_1}^{-1}\right).
\end{align*}
Define $\bm{S}^{(1)} = \bm{S} - \begin{pmatrix}\hat{\bm{V}}_1\\ \hat{\bm{Z}}_1 \end{pmatrix}\diag\left(\hat{\mu}_1,\ldots,\hat{\mu}_{k_1}\right) \begin{pmatrix}\hat{\bm{V}}_1\\ \hat{\bm{Z}}_1 \end{pmatrix}^T = \begin{pmatrix} \bm{A}^{(1)} & \left(\bm{B}^{(1)}\right)^T \\ \bm{B}^{(1)} & \bm{D}^{(1)} \end{pmatrix}$, where
\begin{align*}
    \bm{A}^{(1)} =& \bm{M} - \hat{\bm{V}}_1 \diag\left(\hat{\mu}_1,\ldots,\hat{\mu}_{k_1}\right)\hat{\bm{V}}_1^T = \sum\limits_{k=k_1+1}^{K}\mu_k \bm{v}_k \bm{v}_k^T + O_P\left(np^{-1} + \phi_2^2 \lambda_{k_1}^{-1}\right)\\
    \bm{B}^{(1)} &= p^{-1/2}\bm{E}_2^T \tilde{\bm{N}} - p^{-1/2}\sum\limits_{k=1}^{k_1} \frac{\hat{\mu}_k}{\hat{\mu}_k - 1}\bm{E}_2^T \tilde{\bm{N}}\hat{\bm{v}}_k\hat{\bm{v}}_k^T - p^{-1/2}\sum\limits_{k=1}^{k_1}\frac{\hat{\mu}_k}{\hat{\mu}_k - 1} \bm{R}_k\bm{E}_2^T \tilde{\bm{N}}\hat{\bm{v}}_k\hat{\bm{v}}_k^T\\
    =& p^{-1/2}\bm{E}_2^T \tilde{\bm{N}}\left(\bm{V}_2\cdots\bm{V}_J\right) \left(\bm{V}_2\cdots\bm{V}_J\right)^T - \underbrace{p^{-1/2}\sum\limits_{k=1}^{k_1}\frac{1}{\hat{\mu}_k - 1}\bm{E}_2^T\tilde{\bm{N}}\hat{\bm{v}}_k \hat{\bm{v}}_k^T}_{=O_P\left\lbrace \left(\frac{n}{\lambda_{k_1}p}\right)^{1/2} + \phi_2\lambda_{k_1}^{-1}\right\rbrace}\\
    &- \underbrace{p^{-1/2}\sum\limits_{k=1}^{k_1}\frac{\hat{\mu}_k}{\hat{\mu}_k - 1} \bm{R}_k\bm{E}_2^T \tilde{\bm{N}}\hat{\bm{v}}_k\hat{\bm{v}}_k^T}_{=O_P\left[ \left\lbrace\left(\frac{n}{\lambda_{k_1}p}\right)^{1/2} + \phi_2\lambda_{k_1}^{-1}\right\rbrace\left(n^{1/2}p^{-1/2}+\phi_2\right) \right]} + O_P\left[ \left\lbrace\left(\frac{n}{\lambda_{k_1}p}\right)^{1/2} + \phi_2\lambda_{k_1}^{-2}\right\rbrace\left( np^{-1} + \phi_2^2\lambda_{k_1}^{-1} \right) \right]\\
    \bm{D}^{(1)} =& p^{-1}\bm{E}_2^T \bm{E}_2 - \underbrace{\sum\limits_{k=1}^{k_1}\hat{\mu}_k \hat{\bm{z}}_k \hat{\bm{z}}_k^T}_{=O_P\left(np^{-1} + \phi_2^2 \lambda_{k_1}^{-1}\right)}.
\end{align*}
Lastly, for $\tilde{\bm{\mu}}_j = \diag\left(\tau_{k_{j-1}+1},\ldots,\tau_{k_{j}}\right)$ and $\bar{\bm{L}}$ defined in the proof of Lemma \ref{lemma:supp:UpperBlock},
\begin{align}
\label{equation:supp:E2NV}
    \norm{p^{-1/2}\bm{E}_2^T \tilde{\bm{N}}\bm{V}_j}_2 \leq \norm{p^{-1/2}\bm{E}_2^T\bar{\bm{L}}}_2 \norm{\begin{pmatrix} \tilde{\bm{\mu}}_1^{1/2}\bm{V}_{1j}\\
    \tilde{\bm{\mu}}_2^{1/2}\bm{V}_{2j}\\ \vdots \\ \tilde{\bm{\mu}}_J^{1/2}\bm{V}_{Jj}\end{pmatrix}}_2 + \norm{p^{-1}\bm{E}_2^T\bm{E}_1}_2.
\end{align}
Since $\norm{\tilde{\bm{\mu}}_s^{1/2}\bm{V}_{sj}}_2 = O_P\left(\phi_1 + \phi_2\lambda_{k_1}^{-1/2}\right)$ for all $s \neq j$, $\norm{\bm{B}^{(1)}}_2 = O_P\left(\lambda_{k_2}^{1/2}n^{1/2}p^{-1/2} + \phi_2\right)$.\par
\indent The $k_1+1,\ldots,k_2$ eigenvalues and eigenvectors of $\bm{S}$ can be obtained using $\bm{A}^{(1)}, \bm{B}^{(1)}$ and $\bm{D}^{(1)}$. By the exact techniques used to analyze the first set of eigenvalues ($1,\ldots,k_1$), we get that
\begin{align*}
    \hat{\mu}_k\hat{\bm{v}}_k &= \bm{A}^{(1)}\hat{\bm{v}}_k + \left\lbrace \bm{B}^{(1)} \right\rbrace^T \left\lbrace\hat{\mu}_k - \bm{D}^{(1)}\right\rbrace^{-1}\bm{B}^{(1)} \hat{\bm{v}}_k, \quad k\in\left\lbrace k_1+1,\ldots,k_2 \right\rbrace\\
    \hat{\bm{z}}_k &= \left\lbrace\hat{\mu}_k - \bm{D}^{(1)}\right\rbrace^{-1} \bm{B}^{(1)} \hat{\bm{v}}_k, \quad k\in\left\lbrace k_1+1,\ldots,k_2 \right\rbrace.
\end{align*}
For $\hat{\bm{V}}_2 = \left(\hat{\bm{v}}_{k_1+1} \cdots \hat{\bm{v}}_{k_2}\right)$, these same techniques can also be used to show the following:
\begin{align}
    &\norm{\hat{\bm{z}}_k}_2 = O_P\left(n^{1/2}p^{-1/2}\lambda_k^{-1/2} + \phi_2\lambda_k^{-1}\right), \quad k\in\left\lbrace k_1+1,\ldots,k_2 \right\rbrace \nonumber\\
    &\hat{\mu}_k = \mu_k + O_P\left( np^{-1} + \phi_2\lambda_k^{-1} \right), \quad k\in\left\lbrace k_1+1,\ldots,k_2 \right\rbrace \nonumber\\
    \label{eqution:supp:Vhattmp}
    &\norm{ \hat{\bm{V}}_2\hat{\bm{V}}_2^{\T} - \bm{V}_2\bm{V}_2^{\T} }_F, \, \norm{P_{V_2}^{\perp}\hat{\bm{V}}_2\hat{\bm{V}}_2^{\T}}_F = O_P\left( np^{-1}\lambda_{k_2}^{-1}+\phi_2^2\lambda_{k_2}^{-2} \right)\\
    &\norm{ \hat{\bm{V}}_2\diag\left(\hat{\mu}_{k_1+1},\ldots,\hat{\mu}_{k_2}\right)\hat{\bm{V}}_2^{\T} - \bm{V}_2\diag\left(\mu_{k_1+1},\ldots,\mu_{k_2}\right)\bm{V}_2^{\T} }_2 = O_P\left( np^{-1}+\phi_2^2\lambda_{k_2}^{-1} \right)\nonumber.
\end{align}
Next, we see that for $k \in \left\lbrace k_1+1,\ldots,k_2 \right\rbrace$ and because $\hat{\bm{v}}_s^{\T}\hat{\bm{v}}_k = O_P\left(np^{-1}\lambda_{k_1}^{-1/2}\lambda_{k_2}^{-1/2} + \phi_2^2 \lambda_{k_1}^{-1}\lambda_{k_2}^{-1}\right)$ for $s \in [k_1]$,
\begin{align*}
    \bm{B}^{(1)}\hat{\bm{v}}_k =& p^{-1/2}\bm{E}_2^{\T}\tilde{\bm{N}}\left( \bm{V}_2 \cdots \bm{V}_J \right)\left( \bm{V}_2 \cdots \bm{V}_J \right)^{\T}\hat{\bm{v}}_k - p^{-1/2}p^{-1/2}\sum\limits_{s=1}^{k_1}\frac{1}{\hat{\mu}_s - 1}\bm{E}_2^T\tilde{\bm{N}}\hat{\bm{v}}_s \hat{\bm{v}}_s^T \hat{\bm{v}}_k\\
    &- p^{-1/2}\sum\limits_{s=1}^{k_1}\frac{\hat{\mu}_s}{\hat{\mu}_s - 1} \bm{R}_s\bm{E}_2^T \tilde{\bm{N}}\hat{\bm{v}}_s\hat{\bm{v}}_s^T\hat{\bm{v}}_k + O_P\left[ \left\lbrace\left(\frac{n}{\lambda_{k_1}p}\right)^{1/2} + \phi_2\lambda_{k_1}^{-2}\right\rbrace\left( np^{-1} + \phi_2^2\lambda_{k_1}^{-1} \right) \right]\\
    =& p^{-1/2}\bm{E}_2^{\T}\tilde{\bm{N}}\left( \bm{V}_2 \cdots \bm{V}_J \right)\left( \bm{V}_2 \cdots \bm{V}_J \right)^{\T}\hat{\bm{v}}_k + O_P\left[ \left\lbrace\left(\frac{n}{\lambda_{k_2}p}\right)^{1/2} + \phi_2\lambda_{k_2}^{-2}\right\rbrace\left( np^{-1} + \phi_2^2\lambda_{k_2}^{-1} \right) \right].
\end{align*}
We then have for $k \in \left\lbrace k_1+1,\ldots,k_2 \right\rbrace$,
\begin{align*}
    \hat{\bm{z}}_k =& \left\lbrace \hat{\mu}_k - \bm{D}^{(1)}\right\rbrace^{-1} \bm{B}^{(1)} \hat{\bm{v}}_k = \left(\hat{\mu}_k-1\right)^{-1}\bm{B}^{(1)}\hat{\bm{v}}_k + \left(\hat{\mu}_k-1\right)^{-1}\bm{R}_k\bm{B}^{(1)}\hat{\bm{v}}_k\\
    =& \left(\hat{\mu}_k-1\right)^{-1} p^{-1/2}\bm{E}_2^{\T}\tilde{\bm{N}}\left( \bm{V}_2 \cdots \bm{V}_J \right)\left( \bm{V}_2 \cdots \bm{V}_J \right)^{\T}\hat{\bm{v}}_k + \left(\hat{\mu}_k-1\right)^{-1}p^{-1/2}\bm{R}_k\bm{E}_2^{\T}\tilde{\bm{N}}\left( \bm{V}_2 \cdots \bm{V}_J \right)\left( \bm{V}_2 \cdots \bm{V}_J \right)^{\T}\hat{\bm{v}}_k\\
    &+O_P\left[\lambda_{k_2}^{-1} \left\lbrace\left(\frac{n}{\lambda_{k_2}p}\right)^{1/2} + \phi_2\lambda_{k_2}^{-2}\right\rbrace\left( np^{-1} + \phi_2^2\lambda_{k_2}^{-1} \right) \right]\\
    \bm{R}_k =& \left[ I_{n-K} + \left(\hat{\mu}_k - 1\right)^{-1}\left\lbrace I_{n-K}-p^{-1}\bm{E}_2^T \bm{E}_2 + O_P\left(np^{-1}+\phi_2^2\lambda_{k_1}^{-1}\right)\right\rbrace \right]^{-1} - I_{n-K} = O_P\left\lbrace \lambda_k^{-1}\left(n^{1/2}p^{-1/2} + \phi_2\right) \right\rbrace.
\end{align*}
Define $\hat{\bm{Z}}_2 = \left(\hat{\bm{z}}_{k_1+1}\cdots \hat{\bm{z}}_{k_2}\right)$ and let
\begin{align*}
    \bm{S}^{(2)} = \bm{S} - \begin{pmatrix}\hat{\bm{V}}_1\\ \hat{\bm{Z}}_1 \end{pmatrix}\diag\left(\hat{\mu}_1,\ldots,\hat{\mu}_{k_1}\right) \begin{pmatrix}\hat{\bm{V}}_1\\ \hat{\bm{Z}}_1 \end{pmatrix}^T - \begin{pmatrix}\hat{\bm{V}}_2\\ \hat{\bm{Z}}_2 \end{pmatrix}\diag\left(\hat{\mu}_{k_1+1},\ldots,\hat{\mu}_{k_2}\right) \begin{pmatrix}\hat{\bm{V}}_2\\ \hat{\bm{Z}}_2 \end{pmatrix}^T = \begin{pmatrix} \bm{A}^{(2)} & \left(\bm{B}^{(2)}\right)^T \\ \bm{B}^{(2)} & \bm{D}^{(2)} \end{pmatrix}.
\end{align*}
Then
\begin{align*}
    \bm{A}^{(2)} =& \bm{A}^{(1)} - \hat{\bm{V}}_1 \diag\left(\hat{\mu}_1,\ldots,\hat{\mu}_{k_1}\right) \hat{\bm{V}}_1^{\T} = \sum\limits_{k=k_2+1}^K \mu_k \bm{v}_k \bm{v}_k^{\T} + O_P\left(np^{-1} + \phi_2^2\lambda_{k_2}^{-1}\right)\\
    \bm{B}^{(2)} =& \bm{B}^{(1)} - \sum\limits_{k=k_1+1}^{k_2}\hat{\mu}_k\hat{\bm{z}}_k \hat{\bm{v}}_k^{\T} = p^{-1/2}\bm{E}_2^{\T}\tilde{\bm{N}}\left(\bm{V}_3 \cdots \bm{V}_J\right)\left(\bm{V}_3 \cdots \bm{V}_J\right)^{\T} - \underbrace{p^{-1/2}\sum\limits_{k=1}^{k_1}\frac{1}{\hat{\mu}_k - 1}\bm{E}_2^T\tilde{\bm{N}}\hat{\bm{v}}_k \hat{\bm{v}}_k^T}_{=O_P\left\lbrace \left(\frac{n}{\lambda_{k_1}p}\right)^{1/2} + \phi_2\lambda_{k_1}^{-1}\right\rbrace}\\
    &- \underbrace{p^{-1/2}\sum\limits_{k=k_1+1}^{k_2}\frac{1}{\hat{\mu}_k - 1}\bm{E}_2^T\tilde{\bm{N}}\left(\bm{V}_2 \cdots \bm{V}_J\right)\left(\bm{V}_2 \cdots \bm{V}_J\right)^{\T}\hat{\bm{v}}_k \hat{\bm{v}}_k^T}_{=O_P\left\lbrace \left(\frac{n}{\lambda_{k_2}p}\right)^{1/2} + \phi_2\lambda_{k_2}^{-1}\right\rbrace}\\
    &-\underbrace{p^{-1/2}\sum\limits_{k=1}^{k_1}\frac{\hat{\mu}_k}{\hat{\mu}_k - 1} \bm{R}_k\bm{E}_2^T \tilde{\bm{N}}\hat{\bm{v}}_k\hat{\bm{v}}_k^T}_{=O_P\left[ \left\lbrace\left(\frac{n}{\lambda_{k_1}p}\right)^{1/2} + \phi_2\lambda_{k_1}^{-1}\right\rbrace\left(n^{1/2}p^{-1/2}+\phi_2\right) \right]} - \underbrace{p^{-1/2}\sum\limits_{k=k_1+1}^{k_2}\frac{\hat{\mu}_k}{\hat{\mu}_k - 1} \bm{R}_k\bm{E}_2^T \tilde{\bm{N}}\left(\bm{V}_2 \cdots \bm{V}_J\right)\left(\bm{V}_2 \cdots \bm{V}_J\right)^{\T}\hat{\bm{v}}_k\hat{\bm{v}}_k^T}_{=O_P\left[ \left\lbrace\left(\frac{n}{\lambda_{k_2}p}\right)^{1/2} + \phi_2\lambda_{k_2}^{-1}\right\rbrace\left(n^{1/2}p^{-1/2}+\phi_2\right) \right]}\\
    &+O_P\left[ \left\lbrace\left(\frac{n}{\lambda_{k_2}p}\right)^{1/2} + \phi_2\lambda_{k_2}^{-2}\right\rbrace\left( np^{-1} + \phi_2^2\lambda_{k_2}^{-1} \right) \right]\\
    \bm{D}^{(2)}=& p^{-1}\bm{E}_2^{\T}\bm{E}_2 - \underbrace{\sum\limits_{k=1}^{k_2}\hat{\mu}_k \hat{\bm{z}}_k\hat{\bm{z}}_k^{\T}}_{=O_P\left(np^{-1}+\phi_2^2\lambda_{k_2}^{-1}\right)}.
\end{align*}
This can be carried out to understand the eigenstructure of the remaining groups $j=3,\ldots,t$, which proves \eqref{equation:supp:muhat}, \eqref{equation:supp:zhatLemma} and \eqref{equation:supp:zhatnorm}. For the remaining equalities, we first see that \eqref{equation:supp:Vhatrr} follows from \eqref{equation:supp:Vs_lemma} and \eqref{eqution:supp:Vhattmp}. Next, for $r > j$ and by \eqref{eqution:supp:Vhattmp},
\begin{align*}
    O_P\left(np^{-1}\lambda_{k_j}^{-1}+\phi_2^2\lambda_{k_j}^{-2}\right) = \hat{\bm{V}}_{jj}^{\T}\left(\hat{\bm{V}}_{jj} \hat{\bm{V}}_{rj}^{\T} - \bm{V}_{jj}\bm{V}_{rj}^{\T}\right) = \hat{\bm{V}}_{rj}^{\T} - \hat{\bm{V}}_{jj}^{\T}\bm{V}_{jj}\bm{V}_{rj}^{\T} +  O_P\left(np^{-1}\lambda_{k_j}^{-1}+\phi_2^2\lambda_{k_j}^{-2}\right).
\end{align*}
This implies $\norm{\hat{\bm{V}}_{rj}}_F$ follows \eqref{equation:supp:Vhatrj} for $r > j$ by \eqref{equation:supp:Vs_lemma}. The remaining part of \eqref{equation:supp:Vhatrj} follows from the fact that $\hat{\bm{V}}_r^{\T}\hat{\bm{V}}_j = -\hat{\bm{Z}}_r^{\T}\hat{\bm{Z}}_j$ for $r \neq j$.
\end{proof}

\begin{corollary}
\label{corollary:supp:RemoveBlocks}
Suppose the assumptions of Lemma \ref{lemma:supp:vandzhat} hold. Then
\begin{align*}
    \bm{S} - \sum\limits_{j=1}^t \left(\hat{\bm{V}}_j^{\T} \hat{\bm{Z}}_j^{\T}\right)^{\T} \diag\left(\hat{\mu}_{k_{j-1}+1},\ldots,\hat{\mu}_{k_j}\right) \left(\hat{\bm{V}}_j^{\T} \hat{\bm{Z}}_j^{\T}\right) = \begin{pmatrix}
    \bm{A}^{(t)} & \left(\bm{B}^{(t)}\right)^{\T}\\
    \bm{B}^{(t)} & p^{-1}\bm{E}_2^{\T}\bm{E}_2 - \bm{F}^{(t)}
    \end{pmatrix},
\end{align*}
where $\dim\left[ \im\left\lbrace\bm{F}^{(t)}\right\rbrace \right],\dim\left[ \im\left\lbrace\bm{A}^{(t)}\right\rbrace \right],\dim\left[ \im\left\lbrace\bm{B}^{(t)}\right\rbrace \right]\leq K$ and
\begin{align*}
    \bm{A}^{(t)} &= \begin{cases}
    \sum\limits_{j=t+1}^J \bm{V}_j \diag\left(\mu_{k_{j-1}+1},\ldots,\mu_{k_j}\right)\bm{V}_j^{\T} + O_P\left( np^{-1} + \phi_2^2 \lambda_{k_t}^{-1} \right) & \text{if $1 \leq t < J$}\\
    O_P\left( np^{-1} + \phi_2^2 \lambda_{K}^{-1} \right) & \text{if $t=J$}
    \end{cases}\\
    \norm{\bm{B}^{(t)}}_2 &= \begin{cases}
    O_P\left\lbrace \left(\frac{n \lambda_{k_{t+1}}}{p}\right)^{1/2} + \phi_2 \right\rbrace & \text{if $1 \leq t < J$}\\
    O_P\left\lbrace \left(\frac{n}{\lambda_K p}\right)^{1/2} + \phi_2 \lambda_K^{-1} \right\rbrace & \text{if $t=J$}
    \end{cases}\\
    \norm{\bm{F}^{(t)}}_2 &= O_P\left\lbrace np^{-1} + \phi_2^2 \lambda_{k_{\min(t,J)}} \right\rbrace
\end{align*}
as $n,p \to \infty$.
\end{corollary}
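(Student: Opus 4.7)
The plan is to prove the corollary by induction on $t$, piggybacking on the block-deflation computation already carried out in the proof of Lemma \ref{lemma:supp:vandzhat}. The cases $t=1$ and $t=2$ are essentially derived there as the matrices $\bm{S}^{(1)}$ and $\bm{S}^{(2)}$, so the task is to package that argument and iterate it cleanly.

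For the base case $t=1$, I would read off $\bm{A}^{(1)},\bm{B}^{(1)},\bm{D}^{(1)}$ directly from the proof of Lemma \ref{lemma:supp:vandzhat}. The form of $\bm{A}^{(1)}$ follows by grouping the decomposition $\bm{A}^{(1)}=\sum_{k=k_1+1}^K \mu_k\bm{v}_k\bm{v}_k^{\T}+O_P(np^{-1}+\phi_2^2\lambda_{k_1}^{-1})$ according to the eigengap indices $k_1<\cdots<k_J$, producing $\sum_{j=2}^J \bm{V}_j\diag(\mu_{k_{j-1}+1},\ldots,\mu_{k_j})\bm{V}_j^{\T}$. For $\bm{B}^{(1)}$, the leading term $p^{-1/2}\bm{E}_2^{\T}\tilde{\bm{N}}(\bm{V}_2\cdots\bm{V}_J)(\bm{V}_2\cdots\bm{V}_J)^{\T}$ is bounded by means of \eqref{equation:supp:E2NV}, yielding $O_P\{(n\lambda_{k_2}/p)^{1/2}+\phi_2\}$; the remaining summands collected in the proof of Lemma \ref{lemma:supp:vandzhat} are smaller order. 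Finally, $\bm{F}^{(1)}=\sum_{k=1}^{k_1}\hat{\mu}_k\hat{\bm{z}}_k\hat{\bm{z}}_k^{\T}$ has rank at most $k_1\le K$, and \eqref{equation:supp:zhatnorm} together with \eqref{equation:supp:muhat} gives the claimed norm $O_P(np^{-1}+\phi_2^2\lambda_{k_1}^{-1})$.

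For the inductive step, suppose the decomposition holds at stage $t-1$ with blocks $\bm{A}^{(t-1)},\bm{B}^{(t-1)},\bm{F}^{(t-1)}$. The eigenpairs in group $t$ of $\bm{S}$ coincide with eigenpairs of the residual matrix after the first $t-1$ deflations, so the same derivation used to pass from $\bm{S}^{(1)}$ to $\bm{S}^{(2)}$ in Lemma \ref{lemma:supp:vandzhat} applies verbatim. Subtracting $(\hat{\bm{V}}_t^{\T}\ \hat{\bm{Z}}_t^{\T})^{\T}\diag(\hat{\mu}_{k_{t-1}+1},\ldots,\hat{\mu}_{k_t})(\hat{\bm{V}}_t^{\T}\ \hat{\bm{Z}}_t^{\T})$ gives
\begin{align*}
\bm{A}^{(t)} &= \bm{A}^{(t-1)} - \hat{\bm{V}}_t\diag(\hat{\mu}_{k_{t-1}+1},\ldots,\hat{\mu}_{k_t})\hat{\bm{V}}_t^{\T},\\
\bm{B}^{(t)} &= \bm{B}^{(t-1)} - \sum_{k=k_{t-1}+1}^{k_t}\hat{\mu}_k\hat{\bm{z}}_k\hat{\bm{v}}_k^{\T},\\
\bm{F}^{(t)} &= \bm{F}^{(t-1)} + \sum_{k=k_{t-1}+1}^{k_t}\hat{\mu}_k\hat{\bm{z}}_k\hat{\bm{z}}_k^{\T}.
\end{align*}
Applying the bound $\|\hat{\bm{V}}_t\hat{\bm{V}}_t^{\T}-\bm{V}_t\bm{V}_t^{\T}\|_F=O_P(np^{-1}\lambda_{k_t}^{-1}+\phi_2^2\lambda_{k_t}^{-2})$ (the analogue of \eqref{equation:supp:Vhatrr}) and the eigenvalue approximation \eqref{equation:supp:muhat} removes the $\bm{V}_t$ contribution from $\bm{A}^{(t-1)}$ at cost $O_P(np^{-1}+\phi_2^2\lambda_{k_t}^{-1})$, yielding the stated form of $\bm{A}^{(t)}$. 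For $\bm{B}^{(t)}$, I would substitute the explicit expansion \eqref{equation:supp:zhatLemma} of $\hat{\bm{z}}_k$: the dominant piece of each subtracted rank-one term is $(\hat{\mu}_k-1)^{-1}p^{-1/2}\bm{E}_2^{\T}\tilde{\bm{N}}(\bm{V}_t\cdots\bm{V}_J)(\bm{V}_t\cdots\bm{V}_J)^{\T}\hat{\bm{v}}_k\hat{\bm{v}}_k^{\T}$, and these combine with $\bm{B}^{(t-1)}$ to leave only the projection onto $(\bm{V}_{t+1}\cdots\bm{V}_J)(\bm{V}_{t+1}\cdots\bm{V}_J)^{\T}$, whose norm is controlled via \eqref{equation:supp:E2NV}. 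The rank claim $\dim\im(\bm{F}^{(t)})\le K$ follows because $\bm{F}^{(t)}$ is a sum of $k_t\le K$ rank-one matrices, and the norm bound follows from $\hat{\mu}_k\|\hat{\bm{z}}_k\|_2^2=O_P(np^{-1}+\phi_2^2\lambda_{k_t}^{-1})$ together with $J,K=O(1)$.

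The main obstacle will be the bookkeeping for $\bm{B}^{(t)}$: one must verify that all the lower-order remainders that appear when \eqref{equation:supp:zhatLemma} is substituted (the terms involving $\bm{R}_k$, the $O_P$ error in \eqref{equation:supp:zhatLemma}, and the cross-inner-products $\hat{\bm{v}}_s^{\T}\hat{\bm{v}}_k$ between different groups) are indeed absorbed into $O_P\{(n\lambda_{k_{t+1}}/p)^{1/2}+\phi_2\}$, or $O_P\{(n/(\lambda_Kp))^{1/2}+\phi_2\lambda_K^{-1}\}$ when $t=J$. This amounts to repeating the same calculation that produced $\bm{B}^{(2)}$ in the proof of Lemma \ref{lemma:supp:vandzhat}, using the off-diagonal bounds on $\hat{\bm{V}}_{rj}$ from \eqref{equation:supp:Vhatrj} to ensure that eigenvectors already deflated contribute only negligibly.
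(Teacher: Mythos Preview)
Your proposal is correct and takes essentially the same approach as the paper: the paper's own proof is the single sentence ``This is a direct consequence of the proof of Lemma~\ref{lemma:supp:vandzhat},'' and your induction simply makes explicit the iterative block-deflation that is already carried out there for $j=1,2$ (with the remark that it ``can be carried out to understand the eigenstructure of the remaining groups $j=3,\ldots,t$'').
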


\begin{proof}
This is a direct consequence of the proof of Lemma \ref{lemma:supp:vandzhat}.
\end{proof}

\begin{remark}
\label{remark:supp:RemoveBlocks}
Corollary \ref{corollary:supp:RemoveBlocks} is analogous to Corollary S3 in \citet{CorrConf}. In fact, Lemma \ref{lemma:supp:V} (see below) can be proved using the proof of Lemma S7 in \citet{CorrConf}, where we simply replace the results of Corollary S3 in \citet{CorrConf} with Corollary \ref{corollary:supp:RemoveBlocks} above.
\end{remark}

\begin{remark}
\label{remark:supp:J2}
The conclusions of Lemma \ref{lemma:supp:vandzhat} and Corollary \ref{corollary:supp:RemoveBlocks} still hold if we re-define $k_0,k_1,\ldots,k_J$ to be those given in Remark \ref{remark:supp:J} above.
\end{remark}

\begin{corollary}
\label{corollary:supp:ImCtilde}
Suppose the assumptions of Lemma \ref{lemma:supp:vandzhat} hold. Define $\hat{\tilde{\bm{C}}}^{(t)} \in \mathbb{R}^{n \times k_t}$ to be the first $k_t$ eigenvectors of $\hat{\bm{V}}^{-1/2}\left(p^{-1}\bm{Y}^{\T}\bm{Y}\right)\hat{\bm{V}}^{-1/2}$ and let $\tilde{\bm{C}}^{(t)} = \left(\tilde{\bm{C}}_{\bigcdot 1} \cdots \tilde{\bm{C}}_{\bigcdot k_t}\right)$. Then
\begin{align*}
    \norm{ P_{\hat{\tilde{C}}^{(t)}} - P_{\tilde{C}^{(t)}} }_F^2 = O_P\left( np^{-1}\lambda_{k_t}^{-1} + \phi_2^2\lambda_{k_t}^{-2} \right)
\end{align*}
\end{corollary}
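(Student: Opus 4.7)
My plan is to exploit the orthogonal change-of-basis $[\tilde{\bm{C}},\bm{Q}]\in\mathbb{R}^{n\times n}$ that was implicit in the construction of $\bm{S}$ in \eqref{equation:supp:S}. By \eqref{equation:supp:Ctilde} and \eqref{equation:supp:QCtilde}, $\tilde{\bm{C}}^{\T}\tilde{\bm{C}}=I_K$ and $\bm{Q}^{\T}\tilde{\bm{C}}=\bm{0}$, so $[\tilde{\bm{C}},\bm{Q}]$ is orthogonal and $\bm{S}$ is the conjugate of $M:=\hat{\bm{V}}^{-1/2}(p^{-1}\bm{Y}^{\T}\bm{Y})\hat{\bm{V}}^{-1/2}$ by it. Hence the first $k_t$ eigenvectors of $M$ can be written as
\begin{align*}
\hat{\tilde{\bm{C}}}^{(t)} = \tilde{\bm{C}}\hat{\bm{v}} + \bm{Q}\hat{\bm{z}}, \quad \hat{\bm{v}}=(\hat{\bm{V}}_1\;\cdots\;\hat{\bm{V}}_t)\in\mathbb{R}^{K\times k_t}, \quad \hat{\bm{z}}=(\hat{\bm{Z}}_1\;\cdots\;\hat{\bm{Z}}_t)\in\mathbb{R}^{(n-K)\times k_t},
\end{align*}
with $\hat{\bm{V}}_j,\hat{\bm{Z}}_j$ as in Lemma \ref{lemma:supp:vandzhat}. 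Orthonormality of $\hat{\tilde{\bm{C}}}^{(t)}$ translates to $\hat{\bm{v}}^{\T}\hat{\bm{v}}+\hat{\bm{z}}^{\T}\hat{\bm{z}}=I_{k_t}$, and $\tilde{\bm{C}}^{(t)}=\tilde{\bm{C}}\bm{J}$ where $\bm{J}\in\{0,1\}^{K\times k_t}$ consists of the first $k_t$ columns of $I_K$.

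Next, I will apply the elementary identity $\norm{P_A-P_B}_F^2=2k_t-2\norm{A^{\T}B}_F^2$, valid for any $A,B\in\mathbb{R}^{n\times k_t}$ with orthonormal columns. Since $\bm{Q}^{\T}\tilde{\bm{C}}=\bm{0}$, the cross term collapses to $(\hat{\tilde{\bm{C}}}^{(t)})^{\T}\tilde{\bm{C}}^{(t)}=\hat{\bm{v}}^{\T}\bm{J}$, which selects (the transpose of) the top $k_t$ rows of $\hat{\bm{v}}$. Splitting $\norm{\hat{\bm{v}}}_F^2=\Tr(I_{k_t}-\hat{\bm{z}}^{\T}\hat{\bm{z}})=k_t-\norm{\hat{\bm{z}}}_F^2$ across the top $k_t$ and bottom $K-k_t$ rows and rearranging gives
\begin{align*}
\norm{P_{\hat{\tilde{C}}^{(t)}}-P_{\tilde{C}^{(t)}}}_F^2 \;=\; 2\norm{\hat{\bm{z}}}_F^2 \;+\; 2\sum_{r=t+1}^{J}\sum_{j=1}^{t}\norm{\hat{\bm{V}}_{rj}}_F^2,
\end{align*}
since the bottom $K-k_t$ rows of $\hat{\bm{v}}$ are exactly the blocks $\hat{\bm{V}}_{rj}$ with $r>t$ and $j\in[t]$.

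To finish, the first term is controlled directly by \eqref{equation:supp:zhatnorm}: summing $\norm{\hat{\bm{z}}_s}_2^2=O_P(np^{-1}\lambda_s^{-1}+\phi_2^2\lambda_s^{-2})$ over the $O(1)$ indices $s\in[k_t]$ yields the target rate. For the second term, each block with $r>j$ and $j\in[t]$ is controlled by the ``$r>j$'' branch of \eqref{equation:supp:Vhatrj}, and the inequality $\lambda_{k_j}\geq \lambda_{k_t}$ for $j\leq t$ reduces each of the three squared contributions $n^2/(p^2\lambda_{k_j}^2)$, $\phi_1^2\lambda_{k_j}^{-1}$ and $\phi_2^2\lambda_{k_j}^{-2}$ to quantities of order $np^{-1}\lambda_{k_t}^{-1}+\phi_2^2\lambda_{k_t}^{-2}$. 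The main obstacle I anticipate is the middle contribution: expanding $\phi_1^2=p^{-1}(1+n^{1/2}\phi_2)^2\lesssim p^{-1}+np^{-1}\phi_2^2$ produces a cross piece $np^{-1}\phi_2^2\lambda_{k_t}^{-1}$ that is not manifestly within the target. I will absorb it using $\phi_2=O_P(1)$, which follows from the constraint $\hat{\bar{\bm{v}}}\in\Theta_*$ in Assumption \ref{assumption:FALCO} combined with the boundedness of $\bar{\bm{V}}$ from Assumption \ref{assumption:CandL}\ref{item:assumErrors:V}; this gives $np^{-1}\phi_2^2\lambda_{k_t}^{-1}=O_P(np^{-1}\lambda_{k_t}^{-1})$, after which the remaining $n^2/(p^2\lambda_{k_t}^2)=(np^{-1}\lambda_{k_t}^{-1})^2$ term is either $o_P$ of the target (when the latter vanishes) or irrelevant (since the left-hand side is bounded by $2k_t$), completing the proof.
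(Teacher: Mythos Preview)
Your proof is correct and follows essentially the same change-of-basis approach as the paper: both express $\hat{\tilde{\bm{C}}}^{(t)}=\tilde{\bm{C}}\hat{\bm{v}}+\bm{Q}\hat{\bm{z}}$ and reduce $\norm{P_{\hat{\tilde{C}}^{(t)}}-P_{\tilde{C}^{(t)}}}_F^2$ to $2k_t-2\norm{\{\tilde{\bm{C}}^{(t)}\}^{\T}\hat{\tilde{\bm{C}}}^{(t)}}_F^2$, where the cross term is the top $k_t\times k_t$ block of $\hat{\bm{v}}$.

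The only organizational difference is how you extract the rate. The paper bounds $k_t-\sum_{r,j\in[t]}\norm{\hat{\bm{V}}_{rj}}_F^2$ from above by simply dropping the nonnegative off-diagonal contributions, leaving $k_t-\sum_{r\in[t]}\norm{\hat{\bm{V}}_{rr}}_F^2$, which \eqref{equation:supp:Vhatrr} handles directly at the target rate without any $\phi_1$ appearing. Your route instead rewrites the same quantity as $\norm{\hat{\bm{z}}}_F^2+\sum_{r>t,\,j\leq t}\norm{\hat{\bm{V}}_{rj}}_F^2$ and then has to control the $\phi_1^2\lambda_{k_t}^{-1}$ piece from \eqref{equation:supp:Vhatrj}, which forces the extra step of invoking $\phi_2=O_P(1)$ via Assumption~\ref{assumption:FALCO}. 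That step is valid, but it is avoidable: the paper's shortcut of discarding the within-top off-diagonal blocks is what buys the cleaner bound.
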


\begin{proof}
By definition,
\begin{align*}
    \hat{\tilde{\bm{C}}}^{(t)} = \tilde{\bm{C}}\left(\hat{\bm{V}}_1 \cdots \hat{\bm{V}}_t\right) + \bm{Q}\left(\hat{\bm{Z}}_1 \cdots \hat{\bm{Z}}_t\right),
\end{align*}
meaning
\begin{align*}
    \left\lbrace \tilde{\bm{C}}^{(t)} \right\rbrace^{\T}\hat{\tilde{\bm{C}}}^{(t)} = \begin{pmatrix}
    \hat{\bm{V}}_{11} & \cdots & \hat{\bm{V}}_{1t}\\
    \vdots & \ddots & \vdots\\
    \hat{\bm{V}}_{t1} & \cdots & \hat{\bm{V}}_{tt} 
    \end{pmatrix}.
\end{align*}
The result then follows by \eqref{equation:supp:Vhatrr} and \eqref{equation:supp:Vhatrj}.
\end{proof}

\begin{corollary}
\label{corollary:supp:Ctz}
Suppose the assumptions of Lemma \ref{lemma:supp:vandzhat} hold. Then for any non-random $\bm{M} \in \mathbb{R}^{n \times n}$ such that $\norm{\bm{M}}_2 \leq c$ for some constant $c > 0$ that does not depend on $n$ or $p$,
\begin{align*}
    \norm{\tilde{\bm{C}}^{\T}\hat{\bm{V}}^{1/2} \bm{M} \bm{Q}_{\tilde{C}}\hat{\bm{z}}_s}_2 = O_P\left[\phi_1 \lambda_s^{-1/2} + \frac{n}{\lambda_s p} + \phi_2\lambda_s^{-1} \right], \quad s \in \left[k_t\right].
\end{align*}
\end{corollary}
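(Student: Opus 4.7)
The plan is to substitute the expansion of $\hat{\bm{z}}_s$ from \eqref{equation:supp:zhatLemma} directly into the target quantity and bound each of the three resulting summands. First, since $\im(\tilde{\bm{C}}) = \im(\hat{\bm{V}}^{-1/2}\bm{C})$, I would observe that $\bm{Q}_{\tilde{C}}$ coincides with $\bm{Q}$ defined in \eqref{equation:supp:QCtilde}, so in particular $\bm{Q}^{\T}\tilde{\bm{C}} = 0$. Setting $\bm{B} := \tilde{\bm{C}}^{\T}\hat{\bm{V}}^{1/2}\bm{M}\bm{Q}\bm{Q}^{\T}\hat{\bm{V}}^{-1/2}$, the identity $\tilde{\bm{C}}^{\T}\hat{\bm{V}}^{1/2} = \bm{U}^{\T}(\bm{C}^{\T}\hat{\bm{V}}^{-1}\bm{C})^{-1/2}\bm{C}^{\T}$ together with Assumptions \ref{assumption:CandL} and \ref{assumption:FALCO} gives $\norm{\bm{B}}_2 = O_P(1)$, while $(\hat{\mu}_s - 1)^{-1} = O_P(\lambda_s^{-1})$ follows by combining \eqref{equation:supp:mus} and \eqref{equation:supp:muhat}. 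The crucial algebraic identity $\bm{B}\hat{\bm{V}}^{1/2}\tilde{\bm{C}} = \tilde{\bm{C}}^{\T}\hat{\bm{V}}^{1/2}\bm{M}\bm{Q}\bm{Q}^{\T}\tilde{\bm{C}} = 0$ will be used to kill the otherwise-leading piece of the $\bm{E}_1$-contribution below.

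For the leading term $(\hat{\mu}_s-1)^{-1}p^{-1/2}\bm{B}\bm{E}^{\T}\tilde{\bm{N}}\bm{V}_{\geq f(s)}\bm{V}_{\geq f(s)}^{\T}\hat{\bm{v}}_s$, obtained after pushing $\bm{Q}\bm{Q}^{\T}\hat{\bm{V}}^{-1/2}$ inside via $\bm{Q}\bm{E}_2^{\T} = \bm{Q}\bm{Q}^{\T}\hat{\bm{V}}^{-1/2}\bm{E}^{\T}$, I would expand $\tilde{\bm{N}} = \tilde{\bm{L}} + p^{-1/2}\bm{E}_1$ and handle the two pieces separately. For the $\tilde{\bm{L}}$-piece I would use $\tilde{\bm{L}} = \bar{\bm{L}}\hat{\bm{W}}\diag(\tau_1^{1/2},\ldots,\tau_K^{1/2})$: each entry of $p^{-1/2}\bm{B}\bm{E}^{\T}\bar{\bm{L}}$ is $O_P(p^{-1/2})$ by a sub-Gaussian variance calculation mirroring the one in the proof of Lemma \ref{lemma:supp:UpperBlock}, and multiplying by $\diag(\tau^{1/2})\bm{V}_{\geq f(s)}\bm{V}_{\geq f(s)}^{\T}\hat{\bm{v}}_s$, whose norm is $O_P(\lambda_s^{1/2})$ because \eqref{equation:supp:UpperEigVectors} forces $\hat{\bm{v}}_s$ to concentrate in the $f(s)$th eigenblock, yields $O_P(p^{-1/2}\lambda_s^{1/2})$; a coarser complementary bound based on $\norm{\bm{B} - \bm{B}_0}_2 = O_P(\phi_2)$, with $\bm{B}_0$ obtained by replacing $\hat{\bm{V}}$ by $\bar{\bm{V}}$, combined with $\norm{p^{-1/2}\bm{E}^{\T}\bar{\bm{L}}}_2 = O_P((n/p)^{1/2})$, recovers the $n^{1/2}\phi_2$ contribution hidden inside $\phi_1$. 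For the $\bm{E}_1$-piece I would split $p^{-1}\bm{B}\bm{E}^{\T}\bm{E}_1 = \bm{B}\bar{\bm{V}}\hat{\bm{V}}^{-1/2}\tilde{\bm{C}} + \bm{B}(p^{-1}\bm{E}^{\T}\bm{E} - \bar{\bm{V}})\hat{\bm{V}}^{-1/2}\tilde{\bm{C}}$; the cancellation $\bm{B}\hat{\bm{V}}^{1/2}\tilde{\bm{C}} = 0$ collapses the first summand to $\bm{B}(\bar{\bm{V}} - \hat{\bm{V}})\hat{\bm{V}}^{-1/2}\tilde{\bm{C}} = O_P(\phi_2)$, and the second is $O_P(p^{-1/2})$ entrywise by a standard bilinear-form variance bound with $\norm{\bm{B}_{i\bigcdot}}_2, \norm{(\hat{\bm{V}}^{-1/2}\tilde{\bm{C}})_{\bigcdot j}}_2 = O_P(1)$.

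For the correction term in \eqref{equation:supp:zhatLemma} carrying $\bm{R}_s$, I would combine $\norm{\bm{R}_s}_2 = O_P(\lambda_s^{-1}(n^{1/2}p^{-1/2} + \phi_2))$ with the bound $\norm{p^{-1/2}\bm{E}_2^{\T}\tilde{\bm{N}}\bm{V}_{\geq f(s)}\bm{V}_{\geq f(s)}^{\T}\hat{\bm{v}}_s}_2 = O_P(\lambda_s^{1/2}(n/p)^{1/2} + \phi_2)$ inherited from \eqref{equation:supp:zhatnorm} after multiplying by $\hat{\mu}_s - 1$; the resulting contribution after division by $\hat{\mu}_s - 1$ is dominated by $n/(\lambda_s p) + \phi_2\lambda_s^{-1}$, and the explicit $O_P$ residual in \eqref{equation:supp:zhatLemma} is dominated similarly. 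Summing all pieces then yields the claimed rate $O_P(\phi_1\lambda_s^{-1/2} + n/(\lambda_s p) + \phi_2\lambda_s^{-1})$. The main technical obstacle will be that $\bm{B}$ depends on $\hat{\bm{V}}$ and hence on $\bm{E}$, so the entrywise sub-Gaussian and bilinear-form variance bounds cannot be applied by pretending $\bm{B}$ is deterministic; I would handle this with a standard $\epsilon$-net argument over the compact parameter set $\Theta_*$ from Assumption \ref{assumption:FALCO}, upgrading the bounds to hold uniformly in $\hat{\bm{V}}$.
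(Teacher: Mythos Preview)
Your approach is correct and follows essentially the same architecture as the paper's proof: substitute the expansion \eqref{equation:supp:zhatLemma}, split $\tilde{\bm{N}}=\tilde{\bm{L}}+p^{-1/2}\bm{E}_1$, use that $\norm{\diag(\tau^{1/2})\bm{V}_{\geq f(s)}}_2=O_P(\lambda_s^{1/2})$ for the $\tilde{\bm{L}}$-piece, and exploit an orthogonality relation to kill the leading part of the $\bm{E}_1$-piece. Your identity $\bm{B}\hat{\bm{V}}^{1/2}\tilde{\bm{C}}=0$ is precisely the analogue of the paper's observation that the relevant vector $\bm{u}$ lies in $\ker(\bm{C}^{\T})$, so that $\bm{u}^{\T}\bar{\bm{V}}\,\bar{\bm{V}}^{-1}(n^{-1/2}\bm{C})=0$.

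The only substantive difference is how the dependence of the multiplying vectors on $\hat{\bm{V}}$ is neutralised. The paper rewrites $\bm{Q}_{\tilde{C}}$ in terms of the \emph{non-random} $\bm{Q}_C$ via $\bm{Q}_{\tilde{C}}\hat{\bm{z}}_s\propto \hat{\bm{V}}^{1/2}\bm{Q}_C(\bm{Q}_C^{\T}\hat{\bm{V}}\bm{Q}_C)^{-1}\bm{Q}_C^{\T}\bm{E}^{\T}\tilde{\bm{N}}(\cdots)$, and then replaces $\hat{\bm{V}}$ by $\bar{\bm{V}}$ in this prefactor at cost $O_P(\phi_2)$ times crude norms; after this single substitution all vectors hitting $\bm{E}$ are non-random given $\bm{C}$, so Lemma~\ref{lemma:supp:Preliminaries} applies directly with no uniformity argument needed. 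Your route instead keeps the $\hat{\bm{V}}$-dependent $\bm{B}$ and invokes either a $B_0$-decomposition or an $\epsilon$-net over $\Theta_*\subset\mathbb{R}^b$. This works---$\Theta_*$ is fixed-dimensional and compact, the map $\bm{\theta}\mapsto(\bm{B}(\bm{\theta}),\,\bm{V}(\bm{\theta})^{-1/2}\tilde{\bm{C}}(\bm{\theta}))$ is Lipschitz on $\Theta_*$, and the tail bounds in Lemma~\ref{lemma:supp:Preliminaries} are sub-exponential in $p$---but it is heavier than the paper's one-line reparametrisation. In short, both arguments are valid; the paper's $\bm{Q}_C$ trick simply sidesteps the uniformity machinery you would otherwise need.
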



\begin{proof}
By \eqref{equation:supp:E2NV} and Lemma \ref{lemma:supp:Preliminaries}, we first see that for any $s \in \left[k_t\right]$ and non-random unit vector $\bm{u} \in \text{ker}\left(\bm{C}^{\T}\right)$,
\begin{align*}
    \lambda_s^{-1} p^{-1/2}\norm{\bm{u}^{\T}\bm{E}^{\T}\tilde{\bm{N}} \left(\bm{V}_{f(s)} \cdots \bm{V}_{J}\right)\left(\bm{V}_{f(s)} \cdots \bm{V}_{J}\right)^{\T}}_2 =& \lambda_s^{-1/2} p^{-1/2} O_P\left(\norm{ \bm{u}^{\T}\bm{E}^{\T}\bar{\bm{L}} }_2 \right)\\
    & + \lambda_s^{-1} O_P\left( \norm{p^{-1}\bm{u}^{\T}\bm{E}^{\T}\bm{E}\bm{V}^{-1}\left(n^{-1/2}\bm{C}\right)}_2 \right)\\
    & + O_P\left(\lambda_s^{-1}\phi_2\right)\\
    =& O_P\left( \phi_1\lambda_s^{-1/2} + \lambda_s^{-1}\phi_2 \right)
\end{align*}
as $n,p \to \infty$. Therefore, for all $s \in \left[k_t\right]$,
\begin{align*}
    &\norm{\tilde{\bm{C}}^{\T}\hat{\bm{V}}^{1/2} \bm{M} \bm{Q}_{\tilde{C}}\hat{\bm{z}}_s}_2 = O\left\lbrace \left(n^{-1/2}\bm{C}\right)^{\T} \bm{M} \hat{\bm{V}}\bm{Q}\left(\bm{Q}^{\T}\hat{\bm{V}}\bm{Q}\right)^{-1/2} \hat{\bm{z}}_s \right\rbrace\\
    =&\lambda_s^{-1}p^{-1/2} O_P\left\lbrace \norm{ \left(n^{-1/2}\bm{C}\right)^{\T}\bm{M}\hat{\bm{V}}^{1/2}\bm{Q}_C \left(\bm{Q}_C^{\T}\hat{\bm{V}}\bm{Q}_C\right)^{-1}\bm{Q}_C^{\T}\bm{E}^{\T}\tilde{\bm{N}} \left(\bm{V}_{f(s)} \cdots \bm{V}_{J}\right)\left(\bm{V}_{f(s)} \cdots \bm{V}_{J}\right)^{\T} }_2 \right\rbrace\\
    &+O_P\left(\lambda_s^{-1}n/p + \lambda_s^{-1}\phi_2^{-1}\right)\\
    =& \lambda_s^{-1}p^{-1/2} O_P\left\lbrace \norm{ \left(n^{-1/2}\bm{C}\right)^{\T}\bm{M}\bm{V}^{1/2}\bm{Q}_C \left(\bm{Q}_C^{\T}\bm{V}\bm{Q}_C\right)^{-1}\bm{Q}_C^{\T}\bm{E}^{\T}\tilde{\bm{N}} \left(\bm{V}_{f(s)} \cdots \bm{V}_{J}\right)\left(\bm{V}_{f(s)} \cdots \bm{V}_{J}\right)^{\T} }_2 \right\rbrace\\
    &+ O_P\left[ \frac{n}{\lambda_s p} + \phi_2\left\lbrace \lambda_s^{-1} + n^{1/2}\left(\lambda_s p\right)^{-1/2}\right\rbrace \right]\\
    =& O_P\left( \phi_1\lambda_s^{-1/2} +\frac{n}{\lambda_s p} + \phi_2\lambda_s^{-1} \right).
\end{align*}
\end{proof}

\begin{corollary}
\label{corollary:supp:CtC}
Suppose the assumptions of Corollary \ref{corollary:supp:Ctz} hold, and let $\hat{\tilde{\bm{C}}} \in \mathbb{R}^{n \times \Koracle}$ be the first $\Koracle$ right singular vectors of $\bm{Y}\hat{\bar{\bm{V}}}^{-1/2}$, where $\phi_2=\norm{\hat{\bar{\bm{V}}} - \bar{\bm{V}}}_2 = O_P(1/n)$. Then for $\tilde{\bm{C}}$ defined in \eqref{equation:supp:Params},
\begin{align*}
    &\norm{ \hat{\tilde{\bm{C}}}^{\T}\hat{\bar{\bm{V}}}^{1/2}\bm{M}\hat{\bar{\bm{V}}}^{1/2}\hat{\tilde{\bm{C}}} - \hat{\bm{v}}_1^{\T}\bm{A}^{\T}\tilde{\bm{C}}^{\T}\hat{\bar{\bm{V}}}^{1/2}\bm{M}\hat{\bar{\bm{V}}}^{1/2}\tilde{\bm{C}}\bm{A}\hat{\bm{v}}_1 }_2 = O_P\left( \phi_1\gamma_{\Koracle}^{-1/2}+\phi_3\gamma_{\Koracle}^{-1} \right)\\
    &\norm{ \hat{\tilde{\bm{C}}}^{\T}\hat{\bar{\bm{V}}}^{1/2}\bm{M}\hat{\bar{\bm{V}}}^{1/2}\tilde{\bm{C}}\bm{A} - \hat{\bm{v}}_1^{\T}\bm{A}^{\T}\tilde{\bm{C}}^{\T}\hat{\bar{\bm{V}}}^{1/2}\bm{M}\hat{\bar{\bm{V}}}^{1/2}\tilde{\bm{C}}\bm{A} }_2 = O_P\left( \phi_1\gamma_{\Koracle}^{-1/2}+\phi_3\gamma_{\Koracle}^{-1} \right),
\end{align*}
where $\bm{A} = \left(I_{\Koracle}\, \bm{0}\right)^{\T} \in \mathbb{R}^{K \times \Koracle}$, $\phi_1=p^{-1/2}$, $\phi_3=n/p+n^{-1}$ and $\hat{\bm{v}}_1$ is the upper $\Koracle \times \Koracle$ block of $\hat{\bm{v}}$.
\end{corollary}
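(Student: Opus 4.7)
The plan is to decompose $\hat{\tilde{\bm{C}}}$ using the eigen-structure from Section \ref{subsection:supp:EigenPreliminaries}. Since $\hat{\tilde{\bm{C}}}$ collects the first $\Koracle$ right singular vectors of $\bm{Y}\hat{\bar{\bm{V}}}^{-1/2}$, and the first $K$ such vectors are $\tilde{\bm{C}}\hat{\bm{v}}+\bm{Q}\hat{\bm{z}}$, I would write $\hat{\tilde{\bm{C}}}=\tilde{\bm{C}}\hat{\bm{v}}\bm{A}+\bm{Q}\hat{\bm{z}}\bm{A}$ and further split the column-selected factor as $\hat{\bm{v}}\bm{A}=\bm{A}\hat{\bm{v}}_1+\bm{A}^{\perp}\bm{U}_2$, where $\bm{A}^{\perp}=\begin{pmatrix}\bm{0}\\ I_{K-\Koracle}\end{pmatrix}$ and $\bm{U}_2=(\bm{A}^{\perp})^{\T}\hat{\bm{v}}\bm{A}$ is the lower $(K-\Koracle)\times \Koracle$ block. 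Because Assumption \ref{assumption:DependenceE} forces a spectral gap at $\Koracle$, one can take $\Koracle=k_t$ for some $t$, so that $\bm{U}_2$ is assembled from the blocks $\hat{\bm{V}}_{rj}$ with $r>t$ and $j\leq t$. Substituting $\phi_1=O_P(p^{-1/2})$ and $\phi_2=O_P(1/n)$ into \eqref{equation:supp:Vhatrj} then gives $\norm{\bm{U}_2}_F=O_P(p^{-1/2}\gamma_{\Koracle}^{-1/2}+\phi_3\gamma_{\Koracle}^{-1})$, and \eqref{equation:supp:zhatnorm} gives $\norm{\hat{\bm{z}}\bm{A}}_F=O_P(n^{1/2}p^{-1/2}\gamma_{\Koracle}^{-1/2}+n^{-1}\gamma_{\Koracle}^{-1})$.

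Setting $\bm{M}_2=\hat{\bar{\bm{V}}}^{1/2}\bm{M}\hat{\bar{\bm{V}}}^{1/2}$ and expanding $\hat{\tilde{\bm{C}}}^{\T}\bm{M}_2\hat{\tilde{\bm{C}}}$ produces four blocks. The principal block $\bm{A}^{\T}\hat{\bm{v}}^{\T}\tilde{\bm{C}}^{\T}\bm{M}_2\tilde{\bm{C}}\hat{\bm{v}}\bm{A}$, after inserting $\hat{\bm{v}}\bm{A}=\bm{A}\hat{\bm{v}}_1+\bm{A}^{\perp}\bm{U}_2$, yields the target $\hat{\bm{v}}_1^{\T}\bm{A}^{\T}\tilde{\bm{C}}^{\T}\bm{M}_2\tilde{\bm{C}}\bm{A}\hat{\bm{v}}_1$ plus residuals linear or quadratic in $\bm{U}_2$; since $\norm{\tilde{\bm{C}}^{\T}\bm{M}_2\tilde{\bm{C}}}_2=O_P(1)$ (as $\tilde{\bm{C}}$ has orthonormal columns and $\hat{\bar{\bm{V}}}$ is bounded under Assumption \ref{assumption:FALCO}), these residuals are of order $\norm{\bm{U}_2}_F$, which matches the required rate. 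The purely-in-$\hat{\bm{z}}$ block $\bm{A}^{\T}\hat{\bm{z}}^{\T}\bm{Q}^{\T}\bm{M}_2\bm{Q}\hat{\bm{z}}\bm{A}$ is crudely bounded by $\norm{\bm{M}_2}_2\norm{\hat{\bm{z}}\bm{A}}_F^2=O_P(\phi_3\gamma_{\Koracle}^{-1})$. The second identity of the corollary follows from the same expansion, retaining only the terms with a single $\tilde{\bm{C}}\bm{A}$ factor on the right.

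The main obstacle is the two symmetric cross blocks $\bm{A}^{\T}\hat{\bm{v}}^{\T}\tilde{\bm{C}}^{\T}\bm{M}_2\bm{Q}\hat{\bm{z}}\bm{A}$: the trivial bound $\norm{\bm{M}_2}_2\norm{\hat{\bm{z}}\bm{A}}_F$ loses a factor of $n^{1/2}$ and narrowly misses the target. To recover the correct rate I would invoke Corollary \ref{corollary:supp:Ctz} column-by-column on $\tilde{\bm{C}}^{\T}\hat{\bar{\bm{V}}}^{1/2}(\bm{M}\hat{\bar{\bm{V}}}^{1/2})\bm{Q}\hat{\bm{z}}_s$. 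Because that corollary requires the inner sandwich matrix to be non-random, I would split $\hat{\bar{\bm{V}}}^{1/2}=\bar{\bm{V}}^{1/2}+(\hat{\bar{\bm{V}}}^{1/2}-\bar{\bm{V}}^{1/2})$: the deterministic piece $\bm{M}\bar{\bm{V}}^{1/2}$ feeds directly into Corollary \ref{corollary:supp:Ctz} and gives the rate $\phi_1\gamma_{\Koracle}^{-1/2}+n/(\gamma_{\Koracle}p)+\phi_2\gamma_{\Koracle}^{-1}$, while the random remainder is $O_P(1/n)$ in operator norm and, multiplied by $\norm{\hat{\bm{z}}\bm{A}}_F$, yields a strictly smaller contribution. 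Taking the maximum of the per-column bounds over $s\leq \Koracle$ via $\lambda_s\geq \lambda_{k_t}\asymp \gamma_{\Koracle}$ and summing all residuals produces the announced rate $p^{-1/2}\gamma_{\Koracle}^{-1/2}+\phi_3\gamma_{\Koracle}^{-1}$ for both claims.
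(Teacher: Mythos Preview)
Your proposal is correct and follows essentially the same route as the paper: decompose $\hat{\tilde{\bm{C}}}=\tilde{\bm{C}}\bm{A}\hat{\bm{v}}_1+\bm{Q}_{\tilde{C}}\hat{\bm{Z}}+O_P(\phi_1\gamma_{\Koracle}^{-1/2}+\phi_3\gamma_{\Koracle}^{-1})$ via Lemma~\ref{lemma:supp:vandzhat}, bound the $\hat{\bm{Z}}^{\T}\hat{\bm{Z}}$ term by $O_P(\phi_3\gamma_{\Koracle}^{-1})$, and handle the cross term with Corollary~\ref{corollary:supp:Ctz}. Your extra step of splitting $\hat{\bar{\bm{V}}}^{1/2}=\bar{\bm{V}}^{1/2}+O_P(n^{-1})$ to respect the non-randomness hypothesis of Corollary~\ref{corollary:supp:Ctz} is a point the paper's one-line invocation leaves implicit, so if anything you are being more careful.
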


\begin{proof}
Let $\hat{\bm{Z}} = \left(\hat{\bm{z}}_{\bigcdot 1} \cdots \hat{\bm{z}}_{\bigcdot \Koracle}\right)$ and $\hat{\bm{v}} = \begin{pmatrix}
\hat{\bm{v}}_1 & \hat{\bm{v}}_3\\
\hat{\bm{v}}_2 & \hat{\bm{v}}_4
\end{pmatrix} \in \mathbb{R}^{K \times K}$, where $\hat{\bm{v}}_1 \in \mathbb{R}^{K \times \Koracle}$ and $\hat{\bm{v}}_2 \in \mathbb{R}^{(K-\Koracle) \times \Koracle}$. By Lemma \ref{lemma:supp:vandzhat}, $\norm{\hat{\bm{v}}_2}_2 = O_P\left( \phi_1\gamma_{\Koracle}^{-1/2}+\phi_3\gamma_{\Koracle}^{-1} \right)$. Therefore,
\begin{align*}
    \hat{\tilde{\bm{C}}} = \tilde{\bm{C}}\bm{A}\hat{\bm{v}}_1 + \bm{Q}_{\tilde{C}}\hat{\bm{Z}} + O_P\left( \phi_1\gamma_{\Koracle}^{-1/2}+\phi_3\gamma_{\Koracle}^{-1} \right).
\end{align*}
Since $\norm{\hat{\bm{Z}}}_2^2 = O_P(\phi_3\gamma_{\Koracle}^{-1})$, both results follow after applying Corollary \ref{corollary:supp:Ctz}.
\end{proof}

\begin{corollary}
\label{corollary:supp:Cte}
Suppose the assumptions of Corollary \ref{corollary:supp:CtC} hold, and let $\bm{X} \in \mathbb{R}^{n}$ be a random vector that is independent of $\bm{C}$ but dependent on at most finitely many rows of $\bm{E}$. Then for $\hat{\tilde{\bm{C}}}$ defined in the statement of Corollary \ref{corollary:supp:CtC},
\begin{align*}
    \norm{ \bm{X}^{\T}\hat{\bar{\bm{V}}}^{1/2}\hat{\tilde{\bm{C}}}_{\bigcdot r} - \bm{X}^{\T} \hat{\bar{\bm{V}}}^{1/2}\tilde{\bm{C}}\hat{\bm{v}}_{\bigcdot r} }_2 = \left\lbrace \norm{\E(\bm{X})}_2+ \norm{\bm{X}-\E(\bm{X})}_2 \right\rbrace O_P\left( \phi_1\gamma_{r}^{-1/2}+\phi_3\gamma_{r}^{-1} \right), \quad r \in [\Koracle].
\end{align*}
\end{corollary}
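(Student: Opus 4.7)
The plan is to reduce this to a second-moment computation on $\bm{E}$, paralleling the proof of Corollary \ref{corollary:supp:Ctz} but with the random vector $\bm{X}$ playing the role there played by $\tilde{\bm{C}}^{\T}$. By the block representation \eqref{equation:supp:S} of $\hat{\bar{\bm{V}}}^{-1/2}(p^{-1}\bm{Y}^{\T}\bm{Y})\hat{\bar{\bm{V}}}^{-1/2}$ in the basis $(\tilde{\bm{C}},\bm{Q})$, combined with the eigengap in Assumption \ref{assumption:DependenceE}\ref{item:DependenceE:Gamma} which pins down the ordering of the first $\Koracle$ eigenvalues, the $r$-th such eigenvector decomposes exactly as $\hat{\tilde{\bm{C}}}_{\bigcdot r}=\tilde{\bm{C}}\hat{\bm{v}}_{\bigcdot r}+\bm{Q}\hat{\bm{z}}_r$. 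Hence the left-hand side in the statement equals $\bm{X}^{\T}\hat{\bar{\bm{V}}}^{1/2}\bm{Q}\hat{\bm{z}}_r$, which is what has to be bounded.

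Substituting the leading-order expansion \eqref{equation:supp:zhatLemma} of $\hat{\bm{z}}_r$ and writing $\bm{G}_r=(\bm{V}_{f(r)}\cdots\bm{V}_J)(\bm{V}_{f(r)}\cdots\bm{V}_J)^{\T}\hat{\bm{v}}_r$, the dominant contribution becomes $(\hat{\mu}_r-1)^{-1}p^{-1/2}\bm{w}^{\T}\bm{E}^{\T}\tilde{\bm{N}}\bm{G}_r$, where $\bm{w}^{\T}=\bm{X}^{\T}\hat{\bar{\bm{V}}}^{1/2}\bm{Q}\bm{Q}^{\T}\hat{\bar{\bm{V}}}^{-1/2}$ satisfies $\norm{\bm{w}}_2\leq c\norm{\bm{X}}_2$ by Assumption \ref{assumption:FALCO}. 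Expanding $\tilde{\bm{N}}=\tilde{\bm{L}}+p^{-1/2}\bm{E}_1$ then splits this into a signal piece $(\hat{\mu}_r-1)^{-1}p^{-1/2}\bm{w}^{\T}\bm{E}^{\T}\tilde{\bm{L}}\bm{G}_r$ and a cross piece $(\hat{\mu}_r-1)^{-1}p^{-1}\bm{w}^{\T}\bm{E}^{\T}\bm{E}_1\bm{G}_r$.

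For the signal piece, decompose $\bm{X}=\E(\bm{X})+\{\bm{X}-\E(\bm{X})\}$ so that $\bm{w}=\bm{w}_0+\bm{w}_1$, with $\bm{w}_0$ a function only of $\bm{C}$ and $\hat{\bar{\bm{V}}}$ (hence independent of $\bm{E}$) and $\bm{w}_1$ depending on $\bm{X}-\E(\bm{X})$. Conditioning on $(\bm{X},\bm{C},\hat{\bar{\bm{V}}})$ and on the at-most-$c$ rows of $\bm{E}$ on which $\bm{X}$ depends, the remaining rows of $\bm{E}$ are independent of $\bm{w}$, and under Assumption \ref{assumption:DependenceE} a second-moment calculation mirroring the one in the proof of Corollary \ref{corollary:supp:Ctz} gives $p^{-1/2}\abs{\bm{w}^{\T}\bm{E}^{\T}\tilde{\bm{L}}\bm{G}_r}=O_P\{(\norm{\bm{w}_0}_2+\norm{\bm{w}_1}_2)\,p^{-1/2}\tau_r^{1/2}\}$. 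The bound $\norm{\tilde{\bm{L}}\bm{G}_r}_2=O_P(\tau_r^{1/2})$ needed here follows from $\tilde{\bm{L}}^{\T}\tilde{\bm{L}}=\diag(\tau_1,\ldots,\tau_K)$ and the concentration of $\hat{\bm{v}}_r$ on block $f(r)$ established by \eqref{equation:supp:UpperEigVectors}. Dividing by $\hat{\mu}_r-1\asymp\tau_r\asymp\gamma_r$ and splitting according to $\bm{w}_0,\bm{w}_1$ produces the target contribution $\{\norm{\E(\bm{X})}_2+\norm{\bm{X}-\E(\bm{X})}_2\}\,O_P(\phi_1\gamma_r^{-1/2})$.

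For the cross piece and the lower-order remainder in \eqref{equation:supp:zhatLemma}, the bound $\norm{p^{-1}\bm{E}_2^{\T}\bm{E}_1}_2=O_P(n^{1/2}p^{-1}+\phi_2)$ from the proof of Lemma \ref{lemma:supp:vandzhat}, combined with $\norm{\bm{w}}_2\leq c\norm{\bm{X}}_2$ and the hypothesis $\phi_2=O_P(n^{-1})$, yields $O_P\{\norm{\bm{X}}_2\gamma_r^{-1}(n/p+n^{-1})\}=O_P(\norm{\bm{X}}_2\phi_3\gamma_r^{-1})$; the residual term in \eqref{equation:supp:zhatLemma} is smaller than $\phi_3\gamma_r^{-1}$ under Assumption \ref{assumption:DependenceE}\ref{item:DependenceE:Gamma} ($n/(p\gamma_\Koracle)\to 0$) and is absorbed. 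The main obstacle is the variance step in the signal piece when $\bm{X}$ depends on up to $c$ rows of $\bm{E}$: those rows have to be isolated by a rank-$c$ splitting of $\bm{E}$ before the independence-based variance bound applies to the remainder, and their own contribution is controlled separately by Cauchy--Schwarz together with $\norm{\tilde{\bm{L}}\bm{G}_r}_2=O_P(\tau_r^{1/2})$, yielding at most an additional $O_P(\norm{\bm{X}}_2\phi_1\gamma_r^{-1/2})$ that is absorbed into the stated rate.
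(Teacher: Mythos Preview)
Your overall strategy matches the paper's: write $\hat{\tilde{\bm{C}}}_{\bigcdot r}=\tilde{\bm{C}}\hat{\bm{v}}_{\bigcdot r}+\bm{Q}\hat{\bm{z}}_r$, split $\bm{X}$ into mean and centered parts, expand $\hat{\bm{z}}_r$ via \eqref{equation:supp:zhatLemma}, and isolate the finitely many rows of $\bm{E}$ on which $\bm{X}$ depends. Two steps as written, however, do not go through.

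First, the claim that $\bm{w}_0$ is ``a function only of $\bm{C}$ and $\hat{\bar{\bm{V}}}$ (hence independent of $\bm{E}$)'' is false: $\hat{\bar{\bm{V}}}$ is computed from $\bm{Y}$ and depends on all of $\bm{E}$, so no conditional second-moment calculation applies directly. The paper handles this by first replacing $\hat{\bar{\bm{V}}}$ with the non-random $\bar{\bm{V}}$ at a cost absorbed into $O_P(\delta_r)$; only then is the resulting vector $\tilde{\bm{X}}=\bm{Q}_C(\bm{Q}_C^{\T}\bar{\bm{V}}\bm{Q}_C)^{-1}\bm{Q}_C^{\T}\bar{\bm{V}}^{1/2}\bm{X}$ a function of $(\bm{C},\bm{X})$ alone, and hence independent of $\bm{E}_{\mathcal{S}^c\bigcdot}$.

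Second, bounding the isolated-rows piece ``by Cauchy--Schwarz together with $\norm{\tilde{\bm{L}}\bm{G}_r}_2=O_P(\tau_r^{1/2})$'' does not give the claimed $O_P(\norm{\bm{X}}_2\phi_1\gamma_r^{-1/2})$. Cauchy--Schwarz with that input yields at best
\[
\gamma_r^{-1}p^{-1/2}\,\norm{\bm{X}}_2\,\norm{\bm{E}_{\mathcal{S}\bigcdot}}_2\,\norm{\tilde{\bm{L}}\bm{G}_r}_2
= O_P\bigl(\norm{\bm{X}}_2\,n^{1/2}p^{-1/2}\gamma_r^{-1/2}\bigr),
\]
which carries an extra $n^{1/2}$ and is not absorbed by $\phi_1\gamma_r^{-1/2}+\phi_3\gamma_r^{-1}$. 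The paper instead uses the \emph{row-wise} bound $\norm{\tilde{\bm{L}}_{g\bigcdot}}_2\leq cn^{1/2}p^{-1/2}$ from Assumption~\ref{assumption:CandL}\ref{item:assumCandL:Eigen}, which gives $\norm{\tilde{\bm{L}}_{\mathcal{S}\bigcdot}}_2=O(n^{1/2}p^{-1/2})$ and hence the sharper contribution $O_P\{\norm{\bm{X}}_2\,n/(\gamma_r p)\}=O_P(\norm{\bm{X}}_2\,\phi_3\gamma_r^{-1})$.
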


\begin{proof}
First,
\begin{align*}
    \bm{X}^{\T} \hat{\bar{\bm{V}}}^{1/2}\hat{\tilde{\bm{C}}}_{\bigcdot r} = \bm{X}^{\T} \hat{\bar{\bm{V}}}^{1/2}\tilde{\bm{C}}\hat{\bm{v}}_{\bigcdot r} + \bm{X}^{\T} \hat{\bar{\bm{V}}}^{1/2}\bm{Q}_{\tilde{C}}\hat{\bm{z}}_{\bigcdot r}.
\end{align*}
Let $\bm{\mu}_x = \E(\bm{X})$ and $r \in [\Koracle]$. By Corollary \ref{corollary:supp:Ctz},
\begin{align*}
    \bm{\mu}_x^{\T} \hat{\bar{\bm{V}}}^{1/2}\bm{Q}_{\tilde{C}}\hat{\bm{z}}_{\bigcdot r} = \norm{ \bm{\mu}_x }_2 O_P\left( \phi_1\gamma_{r}^{-1/2}+\phi_3\gamma_{r}^{-1} \right).
\end{align*}
Therefore, to complete the proof, it suffices to assume $\E(\bm{X})=\bm{0}$ and $\norm{\bm{X}}_2 = 1$. Let $\delta_r=\phi_1\gamma_{r}^{-1/2}+\phi_3\gamma_{r}^{-1}$ and $\mathcal{S} \in [p]$ be such that $\bm{X}$ is independent of $\bm{E}_{g \bigcdot}$ for all $g \in \mathcal{S}^c$. By the proof of Corollary \ref{corollary:supp:Ctz},
\begin{align*}
    \norm{ \bm{X}^{\T} \hat{\bar{\bm{V}}}^{1/2}\bm{Q}_{\tilde{C}}\hat{\bm{z}}_{\bigcdot r} }_2 \leq & \gamma_r^{-1}p^{-1/2}\norm{ \bm{X}^{\T}\bar{\bm{V}}^{1/2}\bm{Q}_C\left(\bm{Q}_C^{\T}\bar{\bm{V}}\bm{Q}_C\right)^{-1}\bm{Q}_C^{\T}\bm{E}^{\T}\tilde{\bm{N}} \left(\bm{V}_{f(r)} \cdots \bm{V}_{J}\right)\left(\bm{V}_{f(r)} \cdots \bm{V}_{J}\right)^{\T}\tilde{\bm{v}}_{\bigcdot r}}_2\\
    &+ O_P(\delta_r)\\
    \leq & \gamma_r^{-1}p^{-1/2}\norm{ \tilde{\bm{X}}^{\T}\bm{E}^{\T}_{\mathcal{S}\bigcdot}\tilde{\bm{L}}_{\mathcal{S}\bigcdot}}_2 + \gamma_r^{-1}\norm{ \tilde{\bm{X}}^{\T}\left(p^{-1}\bm{E}^{\T}_{\mathcal{S}\bigcdot}\bm{E}_{1_{\mathcal{S}\bigcdot}}\right) }_2\\
    &+ \gamma_r^{-1}p^{-1/2}\norm{ \tilde{\bm{X}}^{\T}\bm{E}^{\T}_{\mathcal{S}^c\bigcdot}\tilde{\bm{L}}_{\mathcal{S}^c\bigcdot}\tilde{\bm{V}}}_2 + \gamma_r^{-1}\norm{ \tilde{\bm{X}}^{\T}\left(p^{-1}\bm{E}^{\T}_{\mathcal{S}^c\bigcdot}\bm{E}_{1_{\mathcal{S}^c\bigcdot}}\right) }_2 + O_P(\delta_r)
\end{align*}
where $\tilde{\bm{X}} = \bm{Q}_C\left(\bm{Q}_C^{\T}\bar{\bm{V}}\bm{Q}_C\right)^{-1}\bm{Q}_C^{\T}\bar{\bm{V}}^{1/2}\bm{X}$ (note $\tilde{\bm{X}} \in \ker(\bm{C}^{\T})$), $\tilde{\bm{V}} = \left(\bm{V}_{f(r)} \cdots \bm{V}_{J}\right)$ and $\bm{A}_{\mathcal{S}\bigcdot}$, $\bm{A}_{\mathcal{S}^c\bigcdot}$ are the sub-matrices of $\bm{A} \in \mathbb{R}^{p \times m}$ restricted to the rows $g \in \mathcal{S}$ and $g \in \mathcal{S}^c$, respectively. Since $\abs{\mathcal{S}}$ is at most finite, $\norm{ \tilde{\bm{L}}_{\mathcal{S}\bigcdot} }_2 \leq n^{1/2}p^{-1/2}c$ for some constant $c > 0$, meaning
\begin{align*}
    \gamma_r^{-1}p^{-1/2}\norm{ \tilde{\bm{X}}^{\T}\bm{E}^{\T}_{\mathcal{S}\bigcdot}\tilde{\bm{L}}_{\mathcal{S}\bigcdot}}_2 = O_P(\gamma_r^{-1}p^{-1/2}n^{1/2}) \norm{ \tilde{\bm{L}}_{\mathcal{S}\bigcdot} }_2 = O_P\{n/(\gamma_rp)\}.
\end{align*}
Similarly,
\begin{align*}
    \gamma_r^{-1}\norm{ \tilde{\bm{X}}^{\T}\left(p^{-1}\bm{E}^{\T}_{\mathcal{S}\bigcdot}\bm{E}_{1_{\mathcal{S}\bigcdot}}\right) }_2 = O_P\{n/(\gamma_rp)\}.
\end{align*}
Let $\bm{\Gamma}=\tilde{\bm{L}}^{\T}\tilde{\bm{L}}$ and
\begin{align*}
  n^{1/2}p^{-1/2}\bm{L}\left(np^{-1}\bm{L}^{\T}\bm{L}\right)^{-1/2} = \bar{\bm{L}} = \tilde{\bm{L}}\hat{\bm{W}}\bm{\Gamma}^{-1/2}
\end{align*}
for some unitary matrix $\hat{\bm{W}} \in \mathbb{R}^{K \times K}$. Then by Lemma \ref{lemma:supp:UpperBlock}, $\norm{ \bm{\Gamma}^{1/2}\tilde{\bm{V}} }_2 = O_P\left(\gamma_r^{1/2}\right)$, meaning
\begin{align*}
    \gamma_r^{-1}p^{-1/2}\norm{ \tilde{\bm{X}}^{\T}\bm{E}^{\T}_{\mathcal{S}^c\bigcdot}\tilde{\bm{L}}_{\mathcal{S}^c\bigcdot}\tilde{\bm{V}}}_2 =  \gamma_r^{-1/2}p^{-1/2}\norm{ \tilde{\bm{X}}^{\T}\bm{E}^{\T}_{\mathcal{S}^c\bigcdot}\bar{\bm{L}}_{\mathcal{S}^c\bigcdot}}_2 O_P\left(1\right) = O_P(\delta_r).
\end{align*}
Lastly,
\begin{align*}
    \gamma_r^{-1}\norm{ \tilde{\bm{X}}^{\T}\left(p^{-1}\bm{E}^{\T}_{\mathcal{S}^c\bigcdot}\bm{E}_{1_{\mathcal{S}^c\bigcdot}}\right) }_2 = O_P(\delta_r)
\end{align*}
by the proof of Corollary \ref{corollary:supp:Ctz} (since $\abs{\mathcal{S}}$ is at most finite). This completes the proof.
\end{proof}

\begin{corollary}
\label{corollary:supp:REML}
Suppose the assumption of Corollary \ref{corollary:supp:CtC} hold with $K = \Koracle$ and fix a $g \in [p]$. Assume that $\bm{E}_{g \bigcdot} = \bm{X}\bm{s} + \bm{R}_g$, where $\bm{X} \in \mathbb{R}^{n \times d}$ is an observed, mean 0 random variable that is independent of $\bm{R}_g$ and $\bm{C}$, dependent on at most finitely many other rows of $\bm{E}$ and $\norm{n^{-1}\bm{X}^{\T}\bm{X}-\bm{\Sigma}_x}_2 = o_P(1)$ for some non-random $\bm{\Sigma}_x \succ \bm{0}$. Assume that $\V(\bm{R}_g) = \bm{V}(\bm{\alpha}_g)$ for some $\bm{\alpha}_g \in \Theta_*$ and $\bm{X},\bm{R}_g$ have uniformly bounded sub-Gaussian norm. Then if $\norm{ \hat{\bar{\bm{V}}}-\bar{\bm{V}} }_2 = O_P(1/n)$, the estimator
\begin{align*}
    \hat{\bm{\alpha}}_g &= \argmax_{\bm{\theta} \in \Theta_*} f\left( \bm{\theta} \right)\\
    f\left( \bm{\theta} \right) &= -n^{-1}\log\left\lbrace \abs{ P_{M}^{\perp}\bm{V}(\bm{\theta})P_{M}^{\perp} }_+ \right\rbrace - n^{-1}(P_{M}^{\perp}\bm{Y}_{g \bigcdot})^{\T}\left\lbrace P_{M}^{\perp} \bm{V}(\bm{\theta}) P_{M}^{\perp}\right\rbrace^{\dagger}(P_{M}^{\perp}\bm{Y}_{g \bigcdot})
\end{align*}
satisfies $\norm{ \hat{\bm{\alpha}}_g - \bm{\alpha}_g }_2 = O_P\left\lbrace n^{-1/2} + n^{1/2}(p\gamma_{K})^{-1/2}\right\rbrace$, where $\bm{M} = \left( (\hat{\bar{\bm{V}}}^{1/2}\hat{\tilde{\bm{C}}})\, \bm{X} \right)$.
\end{corollary}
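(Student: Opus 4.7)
The plan is to compare the observed REML criterion $f(\bm{\theta})$ to an oracle REML criterion built from the true column space $\bm{M}_0 = (\Coracle,\bm{X})$:
\begin{align*}
    f_0(\bm{\theta}) = -n^{-1}\log\{\abs*{P_{M_0}^{\perp}\bm{V}(\bm{\theta})P_{M_0}^{\perp}}_+\} - n^{-1}\bm{Y}_{g\bigcdot}^{\T}\{P_{M_0}^{\perp}\bm{V}(\bm{\theta})P_{M_0}^{\perp}\}^{\dagger}\bm{Y}_{g\bigcdot}.
\end{align*}
Since $P_{M_0}^{\perp}\Coracle=\bm{0}$ and $P_{M_0}^{\perp}\bm{X}=\bm{0}$, $P_{M_0}^{\perp}\bm{Y}_{g\bigcdot}=P_{M_0}^{\perp}\bm{R}_g$, so $f_0$ is the textbook REML criterion for the mean-zero sub-Gaussian vector $\bm{R}_g\sim(\bm{0},\bm{V}(\bm{\alpha}_g))$. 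Let $\bm{\alpha}_g^{*}=\argmax_{\bm{\theta}\in\Theta_*}f_0(\bm{\theta})$. A Taylor expansion of the score around $\bm{\alpha}_g$, strict concavity of $f_0$ on $\Theta_*$, identifiability via the bound $c^{-1}I_b\preceq\bm{M}$ in Assumption~\ref{assumption:CandL}\ref{item:assumErrors:V}, and a central limit theorem for sub-Gaussian quadratic forms in $\bm{R}_g$ together give $\norm{\bm{\alpha}_g^{*}-\bm{\alpha}_g}_2=O_P(n^{-1/2})$.

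The key remaining step is to bound $\norm{\hat{\bm{\alpha}}_g-\bm{\alpha}_g^{*}}_2$ via the projection perturbation $\bm{\Delta}=\bm{P}-\bm{P}_0$, where $\bm{P}=P_M^{\perp}$ and $\bm{P}_0=P_{M_0}^{\perp}$. Because $\bm{X}$ appears in both $\bm{M}$ and $\bm{M}_0$, the block formula $P_{(A,X)}^{\perp}=P_X^{\perp}-P_X^{\perp}A(A^{\T}P_X^{\perp}A)^{-1}A^{\T}P_X^{\perp}$ reduces the comparison to how close $\im(\hat{\bar{\bm{V}}}^{1/2}\hat{\tilde{\bm{C}}})$ is to $\im(\Coracle)$. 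Since $\bar{\bm{V}}^{1/2}\tilde{\bm{C}}$ and $\Coracle$ span the same subspace by \eqref{equation:supp:Ctilde}, Corollary~\ref{corollary:supp:ImCtilde} (with $k_t=\Koracle=K$) combined with the hypothesis $\norm{\hat{\bar{\bm{V}}}-\bar{\bm{V}}}_2=O_P(n^{-1})$ and the uniformly bounded spectrum of $\bar{\bm{V}}$ yields $\norm{\bm{\Delta}}_2=O_P\{n^{1/2}(p\gamma_K)^{-1/2}\}$ after absorbing the lower-order $n^{-1}\gamma_K^{-1}$ term.

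I would then show that the score of $f$ differs from that of $f_0$ uniformly on $\Theta_*$ at the same rate. The quadratic-form component of the score difference is of the form $n^{-1}\bm{Y}_{g\bigcdot}^{\T}[\bm{N}-\bm{N}_0]\bm{Y}_{g\bigcdot}$ with $\bm{N}=(\bm{P}\bm{V}(\bm{\theta})\bm{P})^{\dagger}\bm{P}\bm{B}_j\bm{P}(\bm{P}\bm{V}(\bm{\theta})\bm{P})^{\dagger}$, and is bounded by $n^{-1}\norm{\bm{Y}_{g\bigcdot}}_2^2\norm{\bm{N}-\bm{N}_0}_2=O_P(\norm{\bm{\Delta}}_2)$ using $\norm{\bm{Y}_{g\bigcdot}}_2^2=O_P(n)$ (from $\bm{Y}_{g\bigcdot}=\Coracle\Loracle_{g\bigcdot}+\bm{X}\bm{s}+\bm{R}_g$ with each term of norm $O_P(n^{1/2})$) and bounded pseudo-inverse norms on $\Theta_*$; the trace component $n^{-1}\Tr\{[(\bm{P}\bm{V}\bm{P})^{\dagger}-(\bm{P}_0\bm{V}\bm{P}_0)^{\dagger}]\bm{B}_j\}$ is bounded via Cauchy--Schwarz by $n^{-1}\norm{\bm{B}_j}_F\cdot\sqrt{2n}\,\norm{(\bm{P}\bm{V}\bm{P})^{\dagger}-(\bm{P}_0\bm{V}\bm{P}_0)^{\dagger}}_2=O_P(\norm{\bm{\Delta}}_2)$, where $\sqrt{2n}$ comes from the rank of the difference. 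The Hessian of $f_0$ at $\bm{\alpha}_g^{*}$ is negative definite with eigenvalues bounded away from zero by the identifiability condition on $\bm{M}$, so the standard M-estimator perturbation argument gives $\norm{\hat{\bm{\alpha}}_g-\bm{\alpha}_g^{*}}_2=O_P\{n^{1/2}(p\gamma_K)^{-1/2}\}$; the triangle inequality then completes the proof.

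The main obstacle will be carefully managing the pseudo-inverse and pseudo-determinant of $\bm{P}\bm{V}(\bm{\theta})\bm{P}$, both of which are non-smooth at rank changes. However, with probability tending to one both $\bm{P}$ and $\bm{P}_0$ have kernel dimension exactly $K+d$ (since $\dim\ker(\bm{P}_0)=K+d$ is deterministic and $\bm{\Delta}\to\bm{0}$), and on $\Theta_*$ the non-zero eigenvalues of $\bm{P}\bm{V}(\bm{\theta})\bm{P}$ are uniformly bounded above and away from zero. A Weyl-type resolvent bound then converts operator-norm perturbations of $\bm{\Delta}$ into the same-order perturbations of $(\bm{P}\bm{V}(\bm{\theta})\bm{P})^{\dagger}$ and of $\log\abs*{\bm{P}\bm{V}(\bm{\theta})\bm{P}}_+$ uniformly over $\Theta_*$, enabling all the bounds above. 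A secondary subtlety is that $\bm{X}$ is dependent on finitely many rows of $\bm{E}$, so concentration for cross terms involving $\bm{X}$ and $\bm{R}_g$ will need the bounded-neighborhood argument already used in the proof of Corollary~\ref{corollary:supp:Cte}.
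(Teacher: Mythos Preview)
Your proposal is correct and would reach the stated rate, but it takes a genuinely different route from the paper. The paper does not compare projections $P_{M}^{\perp}$ and $P_{M_0}^{\perp}$ at all; instead it shows the single vector bound
\[
n^{-1/2}\norm{P_M^{\perp}\bm{Y}_{g\bigcdot}-\bm{R}_g}_2=O_P\{n^{-1/2}+n^{1/2}(p\gamma_K)^{-1/2}\},
\]
by writing $\bm{Y}_{g\bigcdot}=n^{1/2}\hat{\bar{\bm{V}}}^{1/2}\tilde{\bm{C}}\tilde{\bm{\ell}}_g+\bm{X}\bm{s}+\bm{R}_g$, using Corollary~\ref{corollary:supp:CtC} to show $P_{\hat{C}}^{\perp}\hat{\bar{\bm{V}}}^{1/2}\tilde{\bm{C}}\tilde{\bm{\ell}}_g=O_P\{n^{1/2}(p\gamma_K)^{-1/2}\}$, and Corollary~\ref{corollary:supp:Cte} to control $n^{-1/2}\bm{X}^{\T}\hat{\bm{C}}$ and $n^{-1}\bm{X}^{\T}P_{\hat{C}}^{\perp}\bm{R}_g$. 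This immediately gives $\norm{n^{-1}(P_M^{\perp}\bm{Y}_{g\bigcdot})(P_M^{\perp}\bm{Y}_{g\bigcdot})^{\T}-n^{-1}\bm{R}_g\bm{R}_g^{\T}}_2$ of the same order, and since the quadratic part of $f$ depends on the data only through this rank-one outer product and $\norm{(\bm{P}\bm{V}(\bm{\theta})\bm{P})^{\dagger}}_2$ is uniformly bounded on $\Theta_*$, the REML comparison reduces to standard quasi-likelihood asymptotics for $\bm{R}_g$ with no pseudo-inverse perturbation needed. Your approach buys a clean modular decomposition (oracle rate $+$ perturbation) at the cost of the resolvent/pseudo-determinant machinery you flag; the paper's approach bypasses that machinery entirely by working with the residual vector rather than the projector. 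One small slip: your $\sqrt{2n}$ for the rank of $(\bm{P}\bm{V}\bm{P})^{\dagger}-(\bm{P}_0\bm{V}\bm{P}_0)^{\dagger}$ is very loose---the identity $(\bm{P}\bm{V}\bm{P})^{\dagger}=\bm{V}^{-1}-\bm{V}^{-1}\bm{M}(\bm{M}^{\T}\bm{V}^{-1}\bm{M})^{-1}\bm{M}^{\T}\bm{V}^{-1}$ shows the difference has rank at most $2(K+d)=O(1)$---but your final $O_P(\norm{\bm{\Delta}}_2)$ bound is unaffected.
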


\begin{proof}
First,
\begin{align*}
    n^{-1}(P_{M}^{\perp}\bm{Y}_{g \bigcdot})^{\T}\left[ P_{M}^{\perp} \bm{V}(\bm{\theta}) P_{M}^{\perp}\right]^{\dagger}(P_{M}^{\perp}\bm{Y}_{g \bigcdot}) = n^{-1}\Tr\left[ (P_{M}^{\perp}\bm{Y}_{g \bigcdot})(P_{M}^{\perp}\bm{Y}_{g \bigcdot})^{\T} \left\lbrace P_{M}^{\perp} \bm{V}(\bm{\theta}) P_{M}^{\perp}\right\rbrace^{\dagger} \right].
\end{align*}
Since $\norm{ \bm{V}(\bm{\theta}) }_2$ and $\norm{ \left\lbrace \bm{V}(\bm{\theta}) \right\rbrace^{-1} }_2$ is uniformly bounded for all $\bm{\theta} \in \Theta_*$, $\norm{\bm{B}_j}_2 \leq c$ for some constant $c > 0$ and $\bm{R}_g$ is sub-Gaussian, we need only show that
\begin{align*}
    \norm{ n^{-1}(P_{M}^{\perp}\bm{Y}_{g \bigcdot})(P_{M}^{\perp}\bm{Y}_{g \bigcdot})^{\T} - n^{-1}\bm{R}_g\bm{R}_g^{\T} }_2 = O_P\left\lbrace n^{-1/2} + n^{1/2}(p\gamma_{K})^{-1/2} \right\rbrace
\end{align*}
to complete the proof. Let $\hat{\bm{C}} = \hat{\bar{\bm{V}}}^{1/2}\hat{\tilde{\bm{C}}}$. Then for $\tilde{\bm{\ell}}_g = p^{1/2}\tilde{\bm{L}}_{g \bigcdot}$ (where $\norm{\tilde{\bm{\ell}}_g}_2 \leq c$ for some constant $c>0$),
\begin{align*}
    \bm{Y}_{g \bigcdot} =& n^{1/2}\hat{\bar{\bm{V}}}^{1/2}\tilde{\bm{C}}\tilde{\bm{\ell}}_g + \bm{X}\bm{s} + \bm{R}_g\\
    n^{-1/2}P_{\hat{C}}^{\perp}\bm{Y}_{g \bigcdot} =& P_{\hat{C}}^{\perp}\hat{\bar{\bm{V}}}^{1/2}\tilde{\bm{C}}\tilde{\bm{\ell}}_g + n^{-1/2}P_{\hat{C}}^{\perp}\bm{X}\bm{s} + n^{-1/2}P_{\hat{C}}^{\perp}\bm{R}_g.
\end{align*}
Corollary \ref{corollary:supp:CtC} shows that $\norm{P_{\hat{C}}^{\perp}\hat{\bar{\bm{V}}}^{1/2}\tilde{\bm{C}}\tilde{\bm{\ell}}_g}_2 = O_P\left\lbrace n^{1/2}(p\gamma_K)^{-1/2} \right\rbrace$. Since $P_{M}^{\perp} = P_{P_{\hat{C}}^{\perp}X}^{\perp}P_{\hat{C}}^{\perp}$, we only have to understand how
\begin{align*}
    n^{-1/2}P_{M}^{\perp}\bm{R}_g = n^{-1/2}P_{P_{\hat{C}}^{\perp}X}^{\perp}P_{\hat{C}}^{\perp}\bm{R}_g = n^{-1/2}\bm{R}_g - n^{-1/2}P_{\hat{C}}^{\perp}\bm{X}\left( \bm{X}^{\T}P_{\hat{C}}^{\perp}\bm{X} \right)^{-1}\bm{X}^{\T}P_{\hat{C}}^{\perp} \bm{R}_g
\end{align*}
behaves. First,
\begin{align*}
    n^{-1}\norm{\bm{X}^{\T}\bm{X}-\bm{X}^{\T}P_{\hat{C}}^{\perp}\bm{X}}_2= n^{-1}\norm{ \bm{X}^{\T}\hat{\bm{C}}\left(\hat{\bm{C}}^{\T}\hat{\bm{C}}\right)^{-1}\hat{\bm{C}}^{\T}\bm{X} }_2 \geq c n^{-1}\norm{\bm{X}^{\T}\hat{\bm{C}}}_2^2.
\end{align*}
for some constant $c > 0$. By Corollary \ref{corollary:supp:Cte} and because $\bm{X}$ has uniformly sub-Gaussian norm,
\begin{align*}
    n^{-1/2}\norm{\bm{X}^{\T}\hat{\bm{C}}}_2 = O_P\left( n^{-1/2}+\phi_1\gamma_{K}^{-1/2}+\phi_3\gamma_K^{-1} \right).
\end{align*}
Second, by the same technique as we used above,
\begin{align*}
    n^{-1}\bm{X}^{\T}P_{\hat{C}}^{\perp} \bm{R}_g = n^{-1}\bm{X}^{\T}\bm{R}_g - n^{-1}\bm{X}^{\T} \hat{\bm{C}}\left(\hat{\bm{C}}^{\T}\hat{\bm{C}}\right)^{-1}\hat{\bm{C}}^{\T} \bm{R}_g = O_P\left\lbrace n^{-1/2}+ \left(  n^{-1/2}+\phi_1\gamma_{K}^{-1/2}+\phi_3\gamma_K^{-1} \right)^2 \right\rbrace.
\end{align*}
Putting all this together shows that
\begin{align*}
    n^{-1/2}\norm{ P_{M}^{\perp}\bm{Y}_{g \bigcdot} - \bm{R}_g }_2 = O_P\left\lbrace n^{-1/2} + n^{1/2}(p\gamma_{K})^{-1/2} \right\rbrace
\end{align*}
and completes the proof.
\end{proof}

\subsection{Estimating $\bar{\bm{V}}$}
\label{subsection:supp:EstV}
Here we derive the asymptotic properties of the estimates for $\bm{V}$ from step \ref{item:EstC:ImageC}\ref{item:EstC:V} in Algorithm \ref{algorithm:EstC}.
\begin{lemma}
\label{lemma:supp:V}
Suppose Assumptions \ref{assumption:CandL}, \ref{assumption:FALCO} and \ref{assumption:DependenceE} hold, and assume for some $t \in [J]$, the current estimate for $\bar{\bm{v}}$, $\hat{\bar{\bm{v}}}^{(0)}$, satisfies $\lambda_{k_t}^{-1}\norm{ \bm{V}\left\lbrace \hat{\bar{\bm{v}}}^{(0)} \right\rbrace - \bm{V} }_2 = o_P(1)$ as $n,p \to \infty$. Then for $P_{\hat{C}}$ defined in step \ref{item:EstC:ImageC}\ref{item:EstC:C} of Algorithm \ref{algorithm:EstC} with $k=k_t$ and $\hat{\bar{\bm{V}}} = \bm{V}\left\lbrace \hat{\bar{\bm{v}}}^{(0)} \right\rbrace$, the resulting estimator $\hat{\bar{\bm{v}}}$ in step \ref{item:EstC:ImageC}\ref{item:EstC:V} of Algorithm \ref{algorithm:EstC} satisfies
\begin{align}
\label{equation:supp:vbarhatasym}
    \norm{ \hat{\bar{\bm{v}}} - \bar{\bm{v}} }_2 = O_P\left\lbrace \max\left(\lambda_{k_t+1},1\right)n^{-1} \right\rbrace
\end{align}
as $n,p \to \infty$.
\end{lemma}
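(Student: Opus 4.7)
As flagged in Remark \ref{remark:supp:RemoveBlocks}, the argument mirrors Lemma S7 of \citet{CorrConf}, with Corollary \ref{corollary:supp:RemoveBlocks} substituted for the analogous Corollary S3 of that reference. The strategy is a standard score-Hessian analysis of the REML objective: (i) write down the first-order condition and verify strict concavity on $\Theta_*$; (ii) bound the score evaluated at the truth $\bar{\bm{v}}$ using Corollary \ref{corollary:supp:RemoveBlocks}; (iii) bound the Hessian from below using the identifiability guarantee produced by Step \ref{item:EstC:ImageC}\ref{item:EstC:U} of Algorithm \ref{algorithm:EstC}; and (iv) conclude via a one-step Taylor expansion.

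\textbf{Step 1 (First-order condition).} Let $\bm{W}(\bm{\theta})=P_{\hat{C}}^{\perp}\bm{V}(\bm{\theta})P_{\hat{C}}^{\perp}$ and $\bm{S}=p^{-1}\bm{Y}^{\T}\bm{Y}$. Using $\bm{W}\bm{W}^{\dagger}=P_{\hat{C}}^{\perp}$, a direct calculation yields
\begin{align*}
\partial_j f(\bm{\theta})=\Tr\!\bigl\{\bm{W}(\bm{\theta})^{\dagger}P_{\hat{C}}^{\perp}\bm{B}_j P_{\hat{C}}^{\perp}\bm{W}(\bm{\theta})^{\dagger}\bigl[P_{\hat{C}}^{\perp}\bm{S}P_{\hat{C}}^{\perp}-\bm{W}(\bm{\theta})\bigr]\bigr\},\quad j\in[b].
\end{align*}
Assumption \ref{assumption:FALCO} confines $\bm{\theta}$ to the compact, convex set $\Theta_*$, on which $c^{-1}I_n\preceq\bm{V}(\bm{\theta})$ and $f$ is strictly concave, so it suffices to bound $\nabla f(\bar{\bm{v}})$ and invert the Hessian at an intermediate point.

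\textbf{Step 2 (Score at $\bar{\bm{v}}$).} Decompose $\bm{S}=\bar{\bm{V}}+p^{-1}\bm{C}\bm{L}^{\T}\bm{L}\bm{C}^{\T}+2p^{-1}\text{sym}(\bm{C}\bm{L}^{\T}\bm{E})+(p^{-1}\bm{E}^{\T}\bm{E}-\bar{\bm{V}})$. Corollary \ref{corollary:supp:RemoveBlocks} then controls each piece after conjugating by $P_{\hat{C}}^{\perp}$: the surviving factor contribution $P_{\hat{C}}^{\perp}(p^{-1}\bm{C}\bm{L}^{\T}\bm{L}\bm{C}^{\T})P_{\hat{C}}^{\perp}$ inherits operator norm $O_P(\lambda_{k_t+1})$ from the $(t+1,\ldots,J)$ eigenblocks characterised in Lemmas \ref{lemma:supp:UpperBlock} and \ref{lemma:supp:vandzhat}; the cross term is $O_P\{(\lambda_{k_t+1}/p)^{1/2}\}$ by Assumption \ref{assumption:DependenceE} and Lemma \ref{lemma:supp:Preliminaries}; and a Hanson--Wright bound on $\Tr[\bm{A}(p^{-1}\bm{E}^{\T}\bm{E}-\bar{\bm{V}})]$ with $\bm{A}=\bm{W}(\bar{\bm{v}})^{\dagger}P_{\hat{C}}^{\perp}\bm{B}_jP_{\hat{C}}^{\perp}\bm{W}(\bar{\bm{v}})^{\dagger}$ (so $\norm{\bm{A}}_F=O(n^{-1/2})$) produces an $O_P(n^{-1/2}p^{-1/2})=o_P(n^{-1})$ contribution under $n/p\to0$. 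Combining these yields $\norm{\nabla f(\bar{\bm{v}})}_2=O_P\{\max(\lambda_{k_t+1},1)\}$.

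\textbf{Step 3 (Hessian) and Step 4 (conclusion).} The negative Hessian's leading term is $\Tr[\bm{W}^{\dagger}P_{\hat{C}}^{\perp}\bm{B}_i P_{\hat{C}}^{\perp}\bm{W}^{\dagger}P_{\hat{C}}^{\perp}\bm{B}_j P_{\hat{C}}^{\perp}]$, which under Assumption \ref{assumption:FALCO} is comparable to $n\hat{\bm{M}}_{ij}$ for $\hat{\bm{M}}_{ij}=n^{-1}\Tr(P_{\hat{C}}^{\perp}\bm{B}_iP_{\hat{C}}^{\perp}\bm{B}_j)$. Since the fact that Step \ref{item:EstC:ImageC}\ref{item:EstC:V} executes means Step \ref{item:EstC:ImageC}\ref{item:EstC:U}'s identifiability test passed, $\Lambda_b(\hat{\bm{M}})\geq\alpha\Lambda_b(\bm{M})\geq\alpha c^{-1}$, so $-\nabla^2 f(\bm{\theta})\succeq c' n I_b$ uniformly on $\Theta_*$. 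A one-step Taylor expansion then gives $\hat{\bar{\bm{v}}}-\bar{\bm{v}}=-[\nabla^2 f(\tilde{\bm{v}})]^{-1}\nabla f(\bar{\bm{v}})$, and the Step 2 bound delivers $\norm{\hat{\bar{\bm{v}}}-\bar{\bm{v}}}_2=O_P\{\max(\lambda_{k_t+1},1)/n\}$, as claimed.

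\textbf{Main obstacle.} The only delicate point is establishing the sharp $O_P(\lambda_{k_t+1})$ bound on $\partial_j f(\bar{\bm{v}})$ contributed by the \emph{unremoved} factors indexed $k_t+1,\ldots,K$. A naive operator-norm argument would give $O_P(\lambda_{k_t+1}\cdot n)$, losing a factor of $n$. Extracting the sharp bound requires using Corollary \ref{corollary:supp:RemoveBlocks} to align the unremoved eigenblocks with the true factor subspace $\bm{V}_{t+1},\ldots,\bm{V}_J$ (so that the trace against $\bm{W}^{\dagger}\bm{B}_j\bm{W}^{\dagger}$, which acts like $n^{-1}\bar{\bm{V}}^{-1}\bm{B}_j\bar{\bm{V}}^{-1}$, averages rather than sums the $\lambda_{k_t+1}$-sized contributions), and then invoking the independence structure of Assumption \ref{assumption:DependenceE} to handle cross-covariances; the remaining bookkeeping follows the $\epsilon_V$-perturbation template in the proof of Lemma S7 of \citet{CorrConf}.
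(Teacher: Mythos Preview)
Your high-level architecture is right and matches the paper: reduce to a score--Hessian analysis of the projected REML objective, feed in Corollary~\ref{corollary:supp:RemoveBlocks} to control the residual factor contribution after projecting by $P_{\hat C}^{\perp}$, and then follow the template of Lemma~S7 in \citet{CorrConf}. There is, however, a genuine gap in Step~1 and Step~3.

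\textbf{The concavity claim is false, and you are missing the consistency step.} The REML objective $f(\bm{\theta})=-\log\lvert\bm{W}(\bm{\theta})\rvert_{+}-\Tr[\bm{S}\,\bm{W}(\bm{\theta})^{\dagger}]$ is \emph{not} strictly concave on $\Theta_*$: the log-determinant term is convex in $\bm{\theta}$ while the trace term is concave, and already in the scalar case $b=1$, $\bm{B}_1=I_n$ one checks $f''(\theta)>0$ for $\theta$ large. Consequently your uniform Hessian lower bound $-\nabla^2 f(\bm{\theta})\succeq c'nI_b$ on all of $\Theta_*$ cannot hold. What \emph{is} true is that the ``information'' piece $\Tr[\bm{W}^{\dagger}P_{\hat C}^{\perp}\bm{B}_iP_{\hat C}^{\perp}\bm{W}^{\dagger}P_{\hat C}^{\perp}\bm{B}_jP_{\hat C}^{\perp}]\asymp n\hat{\bm{M}}_{ij}$ uniformly on $\Theta_*$, but the full Hessian also carries the term $2\Tr[\bm{W}^{\dagger}P\bm{B}_iP\bm{W}^{\dagger}P\bm{B}_jP\bm{W}^{\dagger}(P\bm{S}P-\bm{W})]$, which is only $o_P(n)$ once you already know $\bm{\theta}$ is near $\bar{\bm{v}}$. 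The paper closes this gap by first invoking Lemma~\ref{lemma:supp:StochEqui} to obtain stochastic equicontinuity of the (unprojected) quasi-log-likelihood on $\Theta_*$, which, combined with the well-separation of $\bar{\bm{v}}$ as the population maximizer (this is the step borrowed from Lemma~S7 of \citet{CorrConf}), yields $\hat{\bar{\bm{v}}}\to\bar{\bm{v}}$ in probability. Only \emph{after} establishing consistency can one localize and run your Taylor expansion with the Hessian controlled at an intermediate point; without it, Step~4 does not go through.

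\textbf{A smaller arithmetic slip.} In Step~2, $\bm{A}=\bm{W}(\bar{\bm v})^{\dagger}P_{\hat C}^{\perp}\bm{B}_jP_{\hat C}^{\perp}\bm{W}(\bar{\bm v})^{\dagger}$ has bounded operator norm and rank $n-k_t$, so $\norm{\bm{A}}_F=O(n^{1/2})$, not $O(n^{-1/2})$. By Lemma~\ref{lemma:supp:Preliminaries}\ref{item:preliminaries:trace} the Hanson--Wright contribution is $O_P\{(n/p)^{1/2}\}=O_P(1)$ under $p\geq c^{-1}n$ from Assumption~\ref{assumption:DependenceE}\ref{item:DependenceE:Gamma}. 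This is still enough for the final $O_P\{\max(\lambda_{k_t+1},1)/n\}$ rate after dividing by the $\asymp n$ Hessian, so the conclusion survives, but the intermediate bound you wrote is off by a factor of $n$.
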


\begin{proof}
We first remark that by Lemma \ref{lemma:supp:StochEqui}, the quasi log-likelihood
\begin{align*}
    f(\bm{\theta}) = -n^{-1}\log\left\lbrace \abs{\bm{V}\left(\bm{\theta}\right)} \right\rbrace - \left(np\right)^{-1}\Tr\left[ \bm{E}^{\T}\bm{E}\left\lbrace \bm{V}\left(\bm{\theta}\right) \right\rbrace^{-1} \right]
\end{align*}
is stochastically equicontinuous for $\bm{\theta} \in \Theta_*$, where $\abs{ f\left(\bar{\bm{v}}\right) - \E\left\lbrace f\left(\bar{\bm{v}}\right) \right\rbrace } = o_P(1)$ as $n,p \to \infty$. As discussed in Remark \ref{remark:supp:RemoveBlocks}, the remainder of proof is exactly the same as the proof of Lemma S7 in \citet{CorrConf}, except we replace Corollary S3 in \citet{CorrConf} with Corollary \ref{corollary:supp:RemoveBlocks} stated above. The remaining details have been omitted. 
\end{proof}

\begin{corollary}
\label{corollary:supp:VRest}
Suppose the Assumptions of Lemma \ref{lemma:supp:V} hold, and let $\tilde{k} = O(1)$ as $n,p \to \infty$. Then if $P_{\hat{C}}$ defined in step \ref{item:EstC:ImageC}\ref{item:EstC:C} of Algorithm \ref{algorithm:EstC} is defined using $k=\tilde{k} \geq k_t$, then the resulting estimator $\hat{\bar{\bm{v}}}$ in step \ref{item:EstC:ImageC}\ref{item:EstC:V} of Algorithm \ref{algorithm:EstC} satisfies \eqref{equation:supp:vbarhatasym} as $n,p \to \infty$.
\end{corollary}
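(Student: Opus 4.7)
The plan is to replicate the proof of Lemma~\ref{lemma:supp:V} for the larger rank $k=\tilde{k}\geq k_t$ and verify that projecting out $\tilde{k}-k_t = O(1)$ additional directions does not worsen the convergence rate. Stochastic equicontinuity of
\begin{align*}
f(\bm{\theta}) = -n^{-1}\log\abs*{P_{\hat{C}}^{\perp}\bm{V}(\bm{\theta})P_{\hat{C}}^{\perp}}_+ - (np)^{-1}\Tr[\bm{Y}^{\T}\bm{Y}\{P_{\hat{C}}^{\perp}\bm{V}(\bm{\theta})P_{\hat{C}}^{\perp}\}^{\dagger}]
\end{align*}
on $\Theta_*$ follows from Lemma~\ref{lemma:supp:StochEqui} exactly as in the proof of Lemma~\ref{lemma:supp:V}, since that argument depends on the rank of $P_{\hat{C}}$ only through it being $O(1)$.

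Next, I would invoke Lemma~\ref{lemma:supp:vandzhat} and Corollary~\ref{corollary:supp:ImCtilde} with $\hat{\bar{\bm{V}}} = \bm{V}(\hat{\bar{\bm{v}}}^{(0)})$ to split $P_{\hat{C}}$ into a rank-$k_t$ projector close in Frobenius norm to $P_{\tilde{\bm{C}}^{(t)}}$, plus a rank-$(\tilde{k}-k_t)$ projector whose image lies in its orthogonal complement and is extracted from the residual block $p^{-1}\bm{E}_2^{\T}\bm{E}_2 - \bm{F}^{(t)}$ appearing in Corollary~\ref{corollary:supp:RemoveBlocks}. With this decomposition, the first-order optimality system for $\hat{\bar{\bm{v}}}$ splits additively: the ``signal'' component yields the rate $O_P\{\max(\lambda_{k_t+1},1)/n\}$ via the argument of Lemma~S7 in~\citet{CorrConf} with Corollary~\ref{corollary:supp:RemoveBlocks} substituted for its Corollary~S3, and the $\tilde{k}-k_t$ extra directions perturb the trace functional by $O_P\{\max(\lambda_{k_t+1},1)/n\}$ in total, because the operator-norm bounds of Corollary~\ref{corollary:supp:RemoveBlocks} force each extra direction to contribute an $O_P(1/n)$ shift to the relevant quadratic form. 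Inverting the asymptotic Hessian then gives~\eqref{equation:supp:vbarhatasym}.

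The main obstacle is that the eigenvectors of $\hat{\bar{\bm{V}}}^{-1/2}(p^{-1}\bm{Y}^{\T}\bm{Y})\hat{\bar{\bm{V}}}^{-1/2}$ indexed beyond $k_t$ need not have stable limits, as bulk eigenvalue spacings in $p^{-1}\bm{E}_2^{\T}\bm{E}_2$ are only $O_P(n^{-1/2})$, so one cannot simply inherit an approximation of $P_{\hat{C}}$ by a fixed non-random projector. This does not obstruct the claim, however: the REML criterion depends on the extra portion of $P_{\hat{C}}$ only through its $O(1)$-dimensional image, and any admissible image yields the same rate by the uniform operator-norm control on $\bm{A}^{(t)},\bm{B}^{(t)}$, and $\bm{F}^{(t)}$ in Corollary~\ref{corollary:supp:RemoveBlocks}; concretely, one carries the analysis through with $P_{\hat{C}}$ as a generic bounded-rank perturbation and only the norm estimates, rather than any limiting identity, are needed to close the bound.
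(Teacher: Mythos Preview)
Your proposal is correct and follows essentially the same route as the paper, which simply defers to the reasoning in Lemma~S7 of \citet{CorrConf}; your expansion---using Corollary~\ref{corollary:supp:RemoveBlocks} to control the residual block after extracting the first $k_t$ components, then arguing that the extra $\tilde{k}-k_t=O(1)$ projected directions perturb the REML score uniformly by at most the target rate---is precisely the mechanism that external argument supplies. One small imprecision: when $t<J$ the top eigenvalues of the residual block $\bm{A}^{(t)}$ are of order $\lambda_{k_t+1}$, so each extra direction contributes $O_P\{\max(\lambda_{k_t+1},1)/n\}$ rather than $O_P(1/n)$ to the trace functional, but since you already stated the total at this rate and $\tilde{k}-k_t=O(1)$, the conclusion is unaffected.
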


\begin{proof}
The proof follows exactly from the reasoning presented in the proof of Lemma S7 in \citet[page 40 of the Supplementary Material]{CorrConf}. The details are omitted.
\end{proof}

\begin{corollary}
\label{corollary:supp:VAlgorithm}
Let $c > K$ be a large constant not dependent on $n$ or $p$ and let $\tilde{k} \in [K]$ be such that $\limsup_{n,p \to \infty} \gamma_{\tilde{k}+1} < \infty$, where $\gamma_{K+1}=0$. Suppose Assumptions \ref{assumption:CandL}, \ref{assumption:FALCO} and \ref{assumption:DependenceE} hold. Then the estimate for $\hat{\bar{\bm{v}}}$ upon completion of step \ref{item:EstC:ImageC} of Algorithm \ref{algorithm:EstC} satisfies $\norm{\hat{\bar{\bm{v}}} - \bar{\bm{v}}}_2 = O_P\left(n^{-1}\right)$ for all $k \in \left[ \tilde{k},\min(c,K_{\max})\right]$. 
\end{corollary}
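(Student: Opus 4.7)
The proof proceeds by induction on the eigengap indices $k_1 < \cdots < k_J$ defined in Section~\ref{subsection:supp:EigenPreliminaries}, using Lemma~\ref{lemma:supp:V} at each eigengap and Corollary~\ref{corollary:supp:VRest} to propagate the bound across the plateaus between eigengaps. Let $j^* \in \{0,1,\ldots,J\}$ be the largest index with $\lambda_{k_{j^*}} \to \infty$, setting $j^* = 0$ if every $\lambda_r$ is bounded. Assumption~\ref{assumption:CandL}\ref{item:assumCandL:Eigen} forces each $\lambda_r$ to either diverge or stay bounded, so $j^*$ is well-defined, and the hypothesis $\limsup \gamma_{\tilde{k}+1} < \infty$ guarantees $\tilde{k} \geq k_{j^*}$.

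The inductive claim is that for every $j \in \{1,\ldots,j^*\}$, the estimate produced at the end of iteration $k = k_j$ of Step~\ref{item:EstC:ImageC} satisfies $\norm{\hat{\bar{\bm{v}}}^{(k_j)} - \bar{\bm{v}}}_2 = O_P(\max(\lambda_{k_j+1},1) n^{-1})$. For the base case $j = 1$, Assumption~\ref{assumption:FALCO} keeps every iterate in the compact set $\Theta_*$, so the estimate entering iteration $k = k_1$ is $O_P(1)$, and the precondition $\lambda_{k_1}^{-1}\norm{\hat{\bar{\bm{v}}}^{(0)} - \bar{\bm{v}}}_2 = o_P(1)$ of Lemma~\ref{lemma:supp:V} is immediate once $\lambda_{k_1} \to \infty$. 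In the degenerate case where every $\lambda_r = O(1)$ (so $j^* = 0$), I would instead show directly that $p^{-1}\bm{Y}^{\T}\bm{Y} = \bar{\bm{V}} + o_P(1)$, whence the Step~\ref{item:EstC:0} initializer is consistent by standard M-estimation on a compact parameter space, supplying Lemma~\ref{lemma:supp:V}'s precondition at $t = 1$.

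For the inductive step $j \to j+1$, I would first propagate the bound across the plateau $k \in (k_j, k_{j+1})$ by invoking Corollary~\ref{corollary:supp:VRest} with $k_t = k_j$ and with the current iteration index playing the role of its ``$\tilde{k}$'' (legal because $k \leq c = O(1)$), yielding $\norm{\hat{\bar{\bm{v}}}^{(k)} - \bar{\bm{v}}}_2 = O_P(\max(\lambda_{k_j+1},1) n^{-1})$ throughout the plateau. Entering iteration $k = k_{j+1}$ with this bound, the precondition of Lemma~\ref{lemma:supp:V} at $t = j+1$ follows because $\lambda_{k_j+1} \asymp \lambda_{k_{j+1}}$ (both indices lie between the same pair of eigengaps), so that $\lambda_{k_{j+1}}^{-1} \cdot O_P(\max(\lambda_{k_j+1},1) n^{-1}) = O_P(n^{-1}) = o_P(1)$; Lemma~\ref{lemma:supp:V} then delivers the inductive conclusion.

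At $j = j^*$, $\max(\lambda_{k_{j^*}+1},1) = O(1)$, so the last application of Lemma~\ref{lemma:supp:V} yields the $O_P(n^{-1})$ rate, and one more invocation of Corollary~\ref{corollary:supp:VRest} (with $k_t = k_{j^*}$) propagates this rate to every $k \in [k_{j^*},\min(c,K_{\max})] \supseteq [\tilde{k},\min(c,K_{\max})]$, completing the proof. The main obstacle is the base case when $j^* = 0$: Lemma~\ref{lemma:supp:V}'s precondition then demands actual consistency (not merely boundedness) of the Step~\ref{item:EstC:0} initializer, and the argument must combine the compactness of $\Theta_*$ with the observation that the signal contribution to $p^{-1}\bm{Y}^{\T}\bm{Y}$ vanishes as $n,p \to \infty$ under the weak-signal regime.
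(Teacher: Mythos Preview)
Your induction for $j^* \geq 1$ is correct, but it does more work than necessary. The paper applies your base-case reasoning directly at the top of the ladder rather than at $j = 1$: since Assumption~\ref{assumption:FALCO} keeps every iterate in $\Theta_*$, the estimate entering \emph{any} iteration $k \geq s := k_{j^*}$ already has $\phi_2 = \norm{\bm{V}(\hat{\bar{\bm{v}}}) - \bar{\bm{V}}}_2 = O_P(1)$, and $\lambda_s \to \infty$ gives $\phi_2/\lambda_s = o_P(1)$ immediately. One invocation of Corollary~\ref{corollary:supp:VRest} with $k_t = s$ then delivers $O_P(\max(\lambda_{s+1},1)n^{-1}) = O_P(n^{-1})$ for every $k \in [s,\min(c,K_{\max})]$; the climb through $k_1,\ldots,k_{j^*-1}$ is never needed. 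The compactness argument that launches your base case works equally well at $j^*$, so the intermediate steps buy nothing.

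There is, however, a genuine gap in your $j^* = 0$ plan. You assert $p^{-1}\bm{Y}^{\T}\bm{Y} = \bar{\bm{V}} + o_P(1)$, but this is false: the signal component $p^{-1}\bm{C}\bm{L}^{\T}\bm{L}\bm{C}^{\T} = n^{-1}\bm{C}\,\diag(\lambda_1,\ldots,\lambda_K)\,\bm{C}^{\T}$ has operator norm $\asymp \lambda_1$, which under $j^* = 0$ is $O(1)$ but not $o(1)$ (indeed $\gamma_{\Koracle} \geq c^{-1}$ by Assumption~\ref{assumption:DependenceE}\ref{item:DependenceE:Gamma}, so the signal is bounded away from zero). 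The right observation is that this term has rank $K = O(1)$, so its contribution to the Step~\ref{item:EstC:0} objective is
\[
n^{-1}\Tr\bigl[n^{-1}\bm{C}\,\diag(\lambda_1,\ldots,\lambda_K)\,\bm{C}^{\T}\{\bm{V}(\bm{\theta})\}^{-1}\bigr] = O(K\lambda_1/n) = O(n^{-1})
\]
uniformly over $\bm{\theta} \in \Theta_*$, and likewise for its gradient and Hessian. This low-rank trace argument---not vanishing of the signal in operator norm---is what the paper invokes when it says ``the proof of Lemma~\ref{lemma:supp:V} shows'' that the Step~\ref{item:EstC:0} initializer is already $O_P(n^{-1})$ when $s = 0$.
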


\begin{proof}
Let $s = \max\left\lbrace k \in [K]: \limsup_{n,p \to \infty} \lambda_{k} = \infty \right\rbrace$, where $s = 0$ if $\limsup_{n,p \to \infty} \lambda_{1} < \infty$. If $s = 0$, then the proof of Lemma \ref{lemma:supp:V} shows that $\norm{\hat{\bar{\bm{v}}} - \bar{\bm{v}}}_2 = O_P\left(n^{-1}\right)$ upon completion of step \ref{item:EstC:0} of Algorithm \ref{algorithm:EstC}. If $s > 1$, then by assumption, $\limsup_{n,p \to \infty} \lambda_{s+1} < \infty$, where $\lambda_{K+1} = 0$. Since $\phi_2 = O_P(1)$, $\phi_2/\lambda_{s} = o_P(1)$ as $n,p \to \infty$. The result then follows by Corollary \ref{corollary:supp:VRest}.
\end{proof}

\begin{remark}
This proves that $\norm{ \hat{\bar{\bm{v}}} - \bar{\bm{v}} }_2 = O_P(n^{-1})$ in \eqref{equation:Subspace} of Theorem \ref{theorem:AngleC}.
\end{remark}

\subsection{Estimating $\lamoracle_1,\ldots,\lamoracle_{\Koracle}$}
\label{subsection:supp:ActualEigs}
Let $\bm{M}$ be any deterministic matrix such that $n^{-1}\bm{C}^T \bm{M}\bm{C}$ is full rank with bounded minimum eigenvalue. In the main text $\bm{M} = I_n$, but $\bm{M}$ can be anything in general (i.e. maybe we only want to estimate the eigenvalues for a subset of samples). By Lemma~\ref{lemma:supp:OracleSignal} and \eqref{equation:Subspace} from Theorem \ref{theorem:AngleC} (whose proof is invariant to the choice of parametrization of $\bm{C}$), it suffices to re-define $\Loracle,\Coracle$ to be
\begin{align*}
    (\Coracle,\Loracle) \in \argmin_{(\bar{\bm{C}},\bar{\bm{L}}) \in \mathcal{S}_{\Koracle}} \norm{ (\bm{L}\bm{C}^{\T} - \bar{\bm{L}}\bar{\bm{C}}^{\T})\hat{\bm{V}}^{-1/2} }_F^2.
\end{align*}
Define
\begin{align*}
    \bm{A} = \begin{cases}
    I_{\Koracle} \oplus 0_{(K-\Koracle)\times (K-\Koracle)} & \text{if $\Koracle < K$}\\
    I_K & \text{if $K=\Koracle$}
    \end{cases}.
\end{align*}
Then $\lamoracle_1,\ldots,\lamoracle_{\Koracle}$ are exactly the eigenvalues of $\bm{A}\hat{\bm{\Gamma}}\bm{A}\hat{\bm{F}}\bm{A}$, where
\begin{subequations}
\label{equation:supp:GammaF}
\begin{align}
    &\bm{\Gamma} = \diag\left(\tau_1,\ldots,\tau_K\right), \quad \bm{F} = \tilde{\bm{C}}^{\T}\hat{\bm{V}}^{1/2}\bm{M}\hat{\bm{V}}^{1/2}\tilde{\bm{C}}\\
    &\hat{\bm{\Gamma}} = \diag\left(\hat{\mu}_1,\ldots,\hat{\mu}_K\right)-I_K, \quad \hat{\bm{F}} = \hat{\tilde{\bm{C}}}^{\T}\hat{\bm{V}}^{1/2}\bm{M}\hat{\bm{V}}^{1/2}\hat{\tilde{\bm{C}}}
\end{align}
\end{subequations}
The proof of the accuracy of these estimates is given below, which we use to prove Theorem \ref{theorem:Lambda}.
\begin{lemma}
\label{lemma:supp:TrueEigs}
Suppose Assumptions \ref{assumption:CandL}, \ref{assumption:FALCO} and \ref{assumption:DependenceE} hold. Then the estimate $\hat{\lambda}_s$ defined in step \ref{item:EstC:eigen} of Algorithm \ref{algorithm:EstC} satisfies
\begin{align*}
    \lamhatoracle_s/\lamoracle_s = 1 + O_P\left( p^{-1/2}\lambda_s^{-1/2} + \frac{n}{p\lambda_s} + \epsilon_V \lambda_s^{-1} \right), \quad s \in \left[ \Koracle \right].
\end{align*}
\end{lemma}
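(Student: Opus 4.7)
The plan is to view $\lamoracle_s$ and $\lamhatoracle_s$ as the $s$th eigenvalues of $\bm{A}\bm{\Gamma}\bm{A}\bm{F}\bm{A}$ and $\bm{A}\hat{\bm{\Gamma}}\bm{A}\hat{\bm{F}}\bm{A}$ from \eqref{equation:supp:GammaF}, and to reduce the claim to two separately-controllable perturbations: the diagonal difference $\hat{\bm{\Gamma}}-\bm{\Gamma}$ and a rotation-aware approximation of $\hat{\bm{F}}$ by $\bm{F}$. Since $\lamoracle_s \asymp \tau_s \asymp \lambda_s$ by Lemma \ref{lemma:supp:EigBound}, it suffices to establish an absolute, $s$-dependent bound of order $p^{-1/2}\lambda_s^{1/2}+n/p+\epsilon_V$ on $\lamhatoracle_s-\lamoracle_s$; dividing by $\lamoracle_s$ then yields the claimed multiplicative rate.

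For the diagonal part, combining $\mu_s = \tau_s+1+O_P(\phi_1\lambda_s^{1/2}+\phi_2)$ from Lemma \ref{lemma:supp:UpperBlock} with $\hat{\mu}_s-\mu_s = O_P(np^{-1}+\phi_2^2\lambda_s^{-1})$ from Lemma \ref{lemma:supp:vandzhat} (applied with $t=J$) gives $\hat{\mu}_s-1-\tau_s = O_P(\phi_1\lambda_s^{1/2}+\phi_2+np^{-1})$, where $\phi_1=p^{-1/2}(1+n^{1/2}\epsilon_V)$ and $\phi_2=\epsilon_V$. For the rotation-aware part, Corollary \ref{corollary:supp:CtC} yields
\[
\bm{A}\hat{\bm{F}}\bm{A} = \hat{\bm{v}}_1^{\T}\bm{A}^{\T}\bm{F}\bm{A}\hat{\bm{v}}_1+O_P\!\left(\phi_1\gamma_{\Koracle}^{-1/2}+\phi_3\gamma_{\Koracle}^{-1}\right),\quad \phi_3=n/p+n^{-1},
\]
so by similarity invariance of the spectrum (cycling $\hat{\bm{v}}_1$ past the diagonal $\hat{\bm{\Gamma}}$), the eigenvalues of $\bm{A}\hat{\bm{\Gamma}}\bm{A}\hat{\bm{F}}\bm{A}$ coincide with those of $\hat{\bm{v}}_1\bm{A}\hat{\bm{\Gamma}}\bm{A}\hat{\bm{v}}_1^{\T}\bm{A}^{\T}\bm{F}\bm{A}$ up to the same order. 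It then remains to verify that $\hat{\bm{v}}_1\bm{A}\hat{\bm{\Gamma}}\bm{A}\hat{\bm{v}}_1^{\T}$ approximates $\bm{A}\bm{\Gamma}\bm{A}$, which I plan to do by noting that $\hat{\bm{v}}$ and $\diag(\hat{\mu}_1,\ldots,\hat{\mu}_K)$ approximately diagonalize $\bm{M}=\tilde{\bm{N}}^{\T}\tilde{\bm{N}}$ and that $\bm{M}-\bm{I}_K$ is within Lemma \ref{lemma:supp:UpperBlock}-level error of $\bm{\Gamma}$, so the compressed conjugation reproduces $\bm{A}\bm{\Gamma}\bm{A}$ up to the same order.

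The hard part will be extracting the eigenvalue-adapted rate $\lambda_s^{1/2}$ rather than the coarser $\lambda_1^{1/2}$ scaling that would come from a blunt application of Weyl's inequality to $\hat{\bm{\Gamma}}\hat{\bm{F}}-\bm{\Gamma}\bm{F}$. The remedy is to work with the symmetric conjugates $\bm{\Gamma}^{1/2}\bm{F}\bm{\Gamma}^{1/2}$ and $\hat{\bm{\Gamma}}^{1/2}\hat{\bm{F}}\hat{\bm{\Gamma}}^{1/2}$ and to apply a first-order perturbation expansion around the $s$th eigenvector of the former: the refined block structure of $\hat{\bm{v}}$ in \eqref{equation:supp:UpperEigVectors:rs} and \eqref{equation:supp:Vhatrj}, whose $(r,j)$ off-block-diagonal entries are of order $\phi_1(\lambda_{k_r}\wedge\lambda_{k_j})^{-1/2}+\phi_2(\lambda_{k_r}\wedge\lambda_{k_j})^{-1}$ rather than $O(1)$, ensures that this eigenvector concentrates on coordinate $s$, so the first-order term picks up only the $(s,s)$-entry of the perturbation and delivers the needed $\lambda_s^{1/2}$ factor. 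Dividing the resulting absolute bound by $\lamoracle_s \asymp \lambda_s$ yields the claimed relative rate.
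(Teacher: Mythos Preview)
Your overall architecture matches the paper's: reduce to comparing the spectra of $\bm{\Gamma}_1^{1/2}\bm{F}_1\bm{\Gamma}_1^{1/2}$ and $\hat{\bm{\Gamma}}_1^{1/2}\hat{\bm{F}}_1\hat{\bm{\Gamma}}_1^{1/2}$, use the block structure of $\hat{\bm{v}}_1$ (and Corollaries~\ref{corollary:supp:Ctz}--\ref{corollary:supp:CtC}) to control $\hat{\bm{F}}_1$, show that $(\hat{\bm{v}}_1\hat{\bm{\Gamma}}_1\hat{\bm{v}}_1^{\T})^{1/2}-\bm{\Gamma}_1^{1/2}$ has small \emph{entrywise} error, and then run a first-order eigenvalue argument. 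The paper's proof does exactly this, with the unitary $\bm{H}$ satisfying $\hat{\bm{v}}_1\hat{\bm{\Gamma}}_1^{1/2}=(\hat{\bm{v}}_1\hat{\bm{\Gamma}}_1\hat{\bm{v}}_1^{\T})^{1/2}\bm{H}$ playing the role of your ``cycling''.

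There is, however, a genuine gap in the last paragraph. You claim the block structure of $\hat{\bm{v}}$ in \eqref{equation:supp:UpperEigVectors:rs}--\eqref{equation:supp:Vhatrj} ``ensures that this eigenvector concentrates on coordinate $s$''. But the eigenvector in question is $\bm{U}_{\bigcdot s}$, the $s$th eigenvector of $\bm{\Gamma}_1^{1/2}\bm{F}_1\bm{\Gamma}_1^{1/2}$, and this object does not depend on $\hat{\bm{v}}$ at all: $\bm{F}_1$ is not diagonal (it equals $\tilde{\bm{C}}^{\T}\hat{\bar{\bm{V}}}\tilde{\bm{C}}$ when $\bm{M}=I_n$), so $\bm{U}$ is not the standard basis. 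The structure of $\hat{\bm{v}}$ controls the \emph{perturbation} $\bm{\Delta}$, not the target eigenvector. What actually delivers the $\lambda_s^{1/2}$ rate is the separate fact, Lemma~\ref{lemma:supp:U}, that $\bm{U}_{rs}=O(\lambda_{r\vee s}^{1/2}\lambda_{r\wedge s}^{-1/2})$ because $\bm{F}_1$ has bounded condition number. The paper uses this explicitly: once one has entrywise bounds $\bm{\Delta}_{rs}=O_P\{\phi_1(\lambda_r^{1/2}+\lambda_s^{1/2})+\phi_3(\lambda_r^{1/2}\lambda_s^{-1/2}+\lambda_s^{1/2}\lambda_r^{-1/2})\}$, Lemma~\ref{lemma:supp:U} shows these bounds are \emph{preserved} under conjugation by $\bm{U}$, i.e.\ $\bm{U}_{\bigcdot r}^{\T}\bm{\Delta}\bm{U}_{\bigcdot s}$ obeys the same rate. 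It is the interplay of both structures---$\hat{\bm{v}}$ for $\bm{\Delta}$, Lemma~\ref{lemma:supp:U} for $\bm{U}$---that yields the $s$-adapted bound; neither alone suffices. Without Lemma~\ref{lemma:supp:U} your first-order term $\bm{U}_{\bigcdot s}^{\T}\bm{\Delta}\bm{U}_{\bigcdot s}$ would in general only be $O_P(\phi_1\lambda_1^{1/2})$.

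A secondary point: Corollary~\ref{corollary:supp:CtC} as stated assumes $\phi_2=O_P(n^{-1})$, whereas the lemma keeps $\epsilon_V$ generic. The paper's proof avoids this by expanding $\hat{\bm{F}}_1$ directly via $\hat{\bm{v}}_1,\hat{\bm{v}}_2,\hat{\bm{Z}}$ and invoking Corollary~\ref{corollary:supp:Ctz} (which carries general $\phi_2$), obtaining $\hat{\tilde{\bm{F}}}_1=\bm{F}_1+\bm{\Delta}$ with the column-wise rates $\bm{\Delta}_{rs}=O_P\{\phi_1(\lambda_r^{-1/2}+\lambda_s^{-1/2})+\phi_3(\lambda_r^{-1}+\lambda_s^{-1})\}$ for $\phi_3=n/p+\phi_2$. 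Your plan would need this refinement as well.
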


\begin{proof}
By the assumption of an eigengap between $\gamma_{\Koracle}$ and $\gamma_{\Koracle+1}$ in Assumption \ref{assumption:DependenceE}, it suffices to assume $\Koracle = k_{j_*}$ for some $j_* \in [J]$. Suppose
\begin{subequations}
\label{equation:supp:F1Gamma1}
\begin{align}
   &\bm{F} = \begin{pmatrix} \bm{F}_1 & \bm{F}_2\\\bm{F}_2^{\T} & \bm{F}_3 \end{pmatrix}, \quad \hat{\bm{F}} = \begin{pmatrix} \hat{\bm{F}}_1 & \hat{\bm{F}}_2\\\hat{\bm{F}}_2^{\T} & \hat{\bm{F}}_3 \end{pmatrix}\\
   &\bm{\Gamma}_1 = \diag\left(\tau_1,\ldots,\tau_{\Koracle}\right), \quad \hat{\bm{\Gamma}}_1 = \diag\left(\hat{\mu}_1,\ldots,\hat{\mu}_{\Koracle}\right)- I_{\Koracle}\\
   &\hat{\bm{v}} = \begin{pmatrix}
   \hat{\bm{v}}_1 & \hat{\bm{v}}_3\\
   \hat{\bm{v}}_2 & \hat{\bm{v}}_4
   \end{pmatrix}, \quad \hat{\bm{Z}} = \left(\hat{\bm{z}}_{\bigcdot 1}\cdots \hat{\bm{z}}_{\bigcdot \Koracle}\right)
\end{align}
\end{subequations}
where $\bm{F}_1,\hat{\bm{F}}_1,\hat{\bm{v}}_1 \in \mathbb{R}^{\Koracle \times \Koracle}$ and $\hat{\bm{v}}_2 \in \mathbb{R}^{(K-\Koracle) \times \Koracle}$. We abuse notation when defining $\hat{\bm{v}}_1,\hat{\bm{v}}_2$ here. These are not the same as the vectors $\hat{\bm{v}}_s$ defined in the proof of Lemma \ref{lemma:supp:vandzhat}. Our goal is to estimate the eigenvalues of $\bm{\Gamma}_1^{1/2}\bm{F}_1 \bm{\Gamma}_1^{1/2}$. First,
\begin{align*}
    \hat{\bm{\Gamma}}_1^{1/2}\hat{\bm{F}}_1\hat{\bm{\Gamma}}_1^{1/2} =& \hat{\bm{\Gamma}}_1^{1/2}\left(\hat{\bm{v}}_1^{\T}, \hat{\bm{v}}_2^{\T}\right)\bm{F} \left(\hat{\bm{v}}_1^{\T}, \hat{\bm{v}}_2^{\T}\right)^{\T}\hat{\bm{\Gamma}}^{1/2} + \hat{\bm{\Gamma}}_1^{1/2}\left(\hat{\bm{v}}_1^{\T}, \hat{\bm{v}}_2^{\T}\right) \tilde{\bm{C}}^T \hat{\bm{V}}^{1/2} \bm{M}\hat{\bm{V}}^{1/2} \tilde{\bm{Q}}\hat{\bm{Z}}\hat{\bm{\Gamma}}_1^{1/2}\\
    & + \left\lbrace \hat{\bm{\Gamma}}_1^{1/2}\left(\hat{\bm{v}}_1^{\T}, \hat{\bm{v}}_2^{\T}\right) \tilde{\bm{C}}^T \hat{\bm{V}}^{1/2} \bm{M}\hat{\bm{V}}^{1/2} \tilde{\bm{Q}}\hat{\bm{Z}}\hat{\bm{\Gamma}}_1^{1/2}\right\rbrace^T + O_P\left(np^{-1}+\phi_2^2\lambda_{\Koracle}^{-1}\right)\\
    =& \bm{H}^T\left(\hat{\bm{v}}_1 \hat{\bm{\Gamma}}_1 \hat{\bm{v}}_1^{\T}\right)^{1/2} \hat{\tilde{\bm{F}}}_1 \left(\hat{\bm{v}}_1 \hat{\bm{\Gamma}}_1 \hat{\bm{v}}_1^{\T}\right)^{1/2}\bm{H} +O_P\left(np^{-1}+\phi_2^2\lambda_{\Koracle}^{-1}\right)\\
    \hat{\tilde{\bm{F}}}_1 =&  \left(I_{\Koracle},\hat{\bm{v}}_1^{-\T}\hat{\bm{v}}_2^{\T}\right) \bm{F}\left(I_{\Koracle},\hat{\bm{v}}_1^{-\T}\hat{\bm{v}}_2^{\T}\right)^{\T} + \left(I_{\Koracle},\hat{\bm{v}}_1^{-\T}\hat{\bm{v}}_2^{\T}\right)\tilde{\bm{C}}^T\hat{\bm{V}}^{1/2} \bm{M}\hat{\bm{V}}^{1/2}\tilde{\bm{Q}}\hat{\bm{Z}}\hat{\bm{v}}_1^{-1}\\
    &+\left\lbrace \left(I_{\Koracle},\hat{\bm{v}}_1^{-\T}\hat{\bm{v}}_2^{\T}\right)\tilde{\bm{C}}^T\hat{\bm{V}}^{1/2} \bm{M}\hat{\bm{V}}^{1/2}\tilde{\bm{Q}}\hat{\bm{Z}}\hat{\bm{v}}_1^{-1} \right\rbrace^T
\end{align*}
where $\bm{H} \in \mathbb{R}^{K \times K}$ is a unitary matrix such that $\bm{H}\hat{\bm{\Gamma}}_1^{1/2}\hat{\bm{v}}_1^T = \left(\hat{\bm{v}}_1\hat{\bm{\Gamma}}_1\hat{\bm{v}}_1\right)^{1/2}$. By Lemma \ref{lemma:supp:vandzhat}, we can write $\hat{\bm{v}}_1$ as
\begin{align*}
    \hat{\bm{v}}_1 = \underbrace{\diag\left(\bm{U}_1,\ldots,\bm{U}_{j_*}\right)}_{=\bm{U}} + \bm{\epsilon}
\end{align*}
where $\bm{U}_j \in \mathbb{R}^{\left(k_j - k_{j-1}\right) \times \left(k_j - k_{j-1}\right)}$ is a unitary matrix and
\begin{align*}
    \bm{\epsilon} = \begin{pmatrix}
    \underbrace{O_P\left(\phi_1\lambda_1^{-1/2}+\phi_2\lambda_1^{-1} + np^{-1}\lambda_1^{-1}\right)}_{{\Koracle} \times 1} \cdots \underbrace{O_P\left(\phi_1\lambda_{\Koracle}^{-1/2}+\phi_2\lambda_{\Koracle}^{-1} + np^{-1}\lambda_{\Koracle}^{-1}\right)}_{{\Koracle} \times 1}
    \end{pmatrix}.
\end{align*}
Next,
\begin{align*}
    \hat{\bm{v}}_1^{-1} = \bm{U}^T + \bm{U}^{T}\left\lbrace\sum\limits_{t=0}^{\infty}\left(\bm{\epsilon}\bm{U}\right)^{t}\right\rbrace \left(\bm{\epsilon}\bm{U}\right)
\end{align*}
where
\begin{align*}
    \bm{\epsilon}\bm{U} = \left(\bm{\epsilon}_1\bm{U}_1 \cdots \bm{\epsilon}_{j*}\bm{U}_{j*}\right) = \left( \underbrace{O_P\left(\phi_1\lambda_{k_1}^{-1/2}+\phi_2\lambda_{k_1}^{-1} + np^{-1}\lambda_{k_1}^{-1}\right)}_{\Koracle \times k_1} \cdots \underbrace{O_P\left(\phi_1\lambda_{k_{j_*}}^{-1/2}+\phi_2\lambda_{k_{j_*}}^{-1} + np^{-1}\lambda_{k_{j_*}}^{-1}\right)}_{\Koracle \times \left(k_{j_*}-k_{{j_*}-1}\right)} \right).
\end{align*}
Therefore,
\begin{align*}
    \hat{\bm{v}}_1^{-1} = \bm{U}^T + \begin{pmatrix}
    \underbrace{O_P\left(\phi_1\lambda_1^{-1/2}+\phi_2\lambda_1^{-1} + np^{-1}\lambda_1^{-1}\right)}_{\Koracle \times 1} \cdots \underbrace{O_P\left(\phi_1\lambda_{\Koracle}^{-1/2}+\phi_2\lambda_{\Koracle}^{-1} + np^{-1}\lambda_{\Koracle}^{-1}\right)}_{\Koracle \times 1}
    \end{pmatrix}.
\end{align*}
By the proof of Lemma \ref{lemma:supp:vandzhat},
\begin{align}
    \label{equation:supp:vAGammaAv}
    \hat{\bm{v}}_1 \hat{\bm{\Gamma}}_1  \hat{\bm{v}}_1^{\T} = \sum\limits_{j=1}^{j_*}\bar{\bm{V}}_j \tilde{\bm{M}}_j\bar{\bm{V}}_j^{\T} + O_P\left(\phi_1\lambda_{\Koracle}^{-1/2} + np^{-1} + \phi_2\lambda_{\Koracle}^{-1}\right)
\end{align}
where $\tilde{\bm{M}}_j = \diag\left(\mu_{k_{j-1}+1}-1,\ldots,\mu_{k_j}-1\right)$ and $\bar{\bm{V}}_j = \left(\bm{V}_{1j}^{\T} \cdots \bm{V}_{j_*j}^{\T}\right)^{\T}$ for $\bm{V}_{rs}$, $r,s \in [j_*]$, defined in the statement of Lemma \ref{lemma:supp:UpperBlock}. By \eqref{equation:supp:UpperEigVectors:rr} and \eqref{equation:supp:UpperEigVectors:rs},
\begin{align*}
    \bm{G}_j = \left(\bar{\bm{V}}_j^{\T}\bar{\bm{V}}_j\right)^{-1/2} = I_{(k_j-k_{j-1})} - O_P\left( \phi_1^2 \lambda_{k_j}^{-1} + \phi_2^2 \lambda_{k_j}^{-2} \right).
\end{align*}
Therefore,
\begin{align*}
    \hat{\bm{v}}_1 \hat{\bm{\Gamma}}_1  \hat{\bm{v}}_1^{\T} = \sum\limits_{j=1}^{j_*}\tilde{\bm{V}}_j\bm{G}_j \tilde{\bm{M}}_j\bm{G}_j\tilde{\bm{V}}_j^{\T} + O_P\left(\phi_1\lambda_{\Koracle}^{-1/2} + np^{-1} + \phi_2\lambda_{\Koracle}^{-1}\right) = \tilde{\bm{V}}\diag\left(\tilde{\mu}_1,\ldots,\tilde{\mu}_{\Koracle}\right)\tilde{\bm{V}}^{\T}
\end{align*}
where for $\phi_3 = \phi_2 + n/p$ and by Lemmas \ref{lemma:supp:EigApprox} and \ref{lemma:supp:EigVector},
\begin{align*}
    &\tilde{\mu}_s = \tau_{s}\left\lbrace 1+ O_P\left(\phi_1\lambda_s^{-1/2} + \phi_3\lambda_s^{-1}\right) \right\rbrace, \quad s \in [\Koracle]\\
    &\norm{ \tilde{\bm{V}}_{\bigcdot (k_{j-1}+1):k_j} - \tilde{\bm{V}}_j\bm{G}_j\bm{W}_j }_F = O_P\left\lbrace \lambda_{k_j}^{-1}\left( \phi_1\lambda_{\Koracle}^{-1/2} + np^{-1} + \phi_2\lambda_{\Koracle}^{-1}\right) \right\rbrace, \quad j \in [j_*],
\end{align*}
where $\bm{W}_j \in \mathbb{R}^{(k_j-k_{j-1}) \times (k_j-k_{j-1})}$ is a unitary matrix. Next, let $\bm{A}_{ij},\tilde{\bm{V}}_{ij} \in \mathbb{R}^{\left(k_i - k_{i-1}\right) \times \left(k_j - k_{j-1}\right)}$ be the sub-matrices of $\left(\hat{\bm{v}}_1 \hat{\bm{\Gamma}}_1 \hat{\bm{v}}_1^T\right)^{1/2}$ and $\tilde{\bm{V}}$, respectively, containing the $\left[k_{i-1}+1\right]$th through $k_{i}$th rows and $\left[k_{j-1}+1\right]$th through $k_{j}$th columns. Then
\begin{align*}
    \bm{A}_{rs} =
    \sum\limits_{j=1}^{j_*} \tilde{\bm{V}}_{rj}\diag\left( \tilde{\mu}_{k_{j-1}+1}^{1/2},\ldots,\tilde{\mu}_{k_j}^{1/2}\right)\tilde{\bm{V}}_{sj}^T = O_P\left(\phi_1 + \phi_3 \lambda_{k_{\min(r,s)}}^{-1/2}\right), \quad r \neq s \in [j_*].
\end{align*}
First,
\begin{align}
\label{equation:supp:Vtilderonehalf}
    \norm{\left(\tilde{\bm{V}}_{rr} \tilde{\bm{V}}_{rr}^{\T}\right)^{-1/2} - I_{(k_r - k_{r-1})}}_F = O_P\left( \phi_1^2\lambda_{k_r}^{-1} + \phi_3^2 \lambda_{k_r}^{-1} \right), \quad r \in [j_*].
\end{align}
Therefore,
\begin{align*}
    \bm{A}_{rr} &= \tilde{\bm{V}}_{rr} \diag\left( \tilde{\mu}_{k_{r-1}+1}^{1/2},\ldots,\tilde{\mu}_{k_r}^{1/2}\right) \tilde{\bm{V}}_{rr}^T + \sum\limits_{j \neq r}^{j_*}\tilde{\bm{V}}_{rj} \diag\left( \tilde{\mu}_{k_{j-1}+1}^{1/2},\ldots,\tilde{\mu}_{k_j}^{1/2}\right) \tilde{\bm{V}}_{rj}^T\\
    &= \left\lbrace \tilde{\bm{V}}_{rr}\diag\left( \tilde{\mu}_{k_{r-1}+1},\ldots,\tilde{\mu}_{k_r}\right)\tilde{\bm{V}}_{rr}^T\right\rbrace^{1/2} + O_P\left( \phi_1 + \phi_3\lambda_{k_r}^{-1/2} \right)\\
    & = \left\lbrace \bm{V}_{rr}\diag\left( \mu_{k_{r-1}+1}-1,\ldots,\mu_{k_r}-1\right)\bm{V}_{rr}^T\right\rbrace^{1/2} + O_P\left( \phi_1 + \phi_3\lambda_{k_r}^{-1/2} \right)\\
    & = \diag\left(\tau_{k_{r-1}+1}^{1/2},\ldots,\tau_{k_r}^{1/2}\right) + O_P\left(\phi_1 + \phi_3\lambda_{k_r}^{-1/2}\right), \quad r \in [j_*],
\end{align*}
where the second equality follows by \eqref{equation:supp:Vtilderonehalf} and \eqref{equation:supp:vAGammaAv}, the third equality follows from \eqref{equation:supp:vAGammaAv} and the last equality follows from Lemma \ref{lemma:sqrtDeriv} and the fact that
\begin{align*}
    \norm{\bm{V}_{rr}\diag\left( \mu_{k_{r-1}+1}-1,\ldots,\mu_{k_r}-1\right)\bm{V}_{rr}^T - \diag\left(\tau_{k_{r-1}+1},\ldots,\tau_{k_r}\right)}_2 = O_P\left( \phi_1\lambda_{k_r}^{1/2} + \phi_2 \right)
\end{align*}
by the proof of Lemma \ref{lemma:supp:UpperBlock}. Define
\begin{align*}
    \bm{R} &= \left(\hat{\bm{v}}_1\hat{\bm{\Gamma}}_1 \hat{\bm{v}}_1^T\right)^{1/2} - \bm{\Gamma}_1^{1/2}.
\end{align*}
First,
\begin{align*}
    \left(I_{\Koracle},\hat{\bm{v}}_1^{-\T}\hat{\bm{v}}_2^{\T}\right) \bm{F}\left(I_{\Koracle},\hat{\bm{v}}_1^{-\T}\hat{\bm{v}}_2^{\T}\right)^{\T} = \bm{F}_1 + \bm{F}_2\hat{\bm{v}}_2\hat{\bm{v}}_1^{-1} + \left(\bm{F}_2\hat{\bm{v}}_2\hat{\bm{v}}_1^{-1}\right)^{\T} + \hat{\bm{v}}_1^{-\T}\hat{\bm{v}}_2^{\T}\bm{F}_3\hat{\bm{v}}_2\hat{\bm{v}}_1^{-1}.
\end{align*}
By \eqref{equation:supp:Vhatrj}, the expansion of $\hat{\bm{v}}_1^{-1}$ above and Corollary \ref{corollary:supp:Ctz},
\begin{align*}
    &\hat{\bm{v}}_1^{-\T}\hat{\bm{v}}_2^{\T}\bm{F}_3\hat{\bm{v}}_2\hat{\bm{v}}_1^{-1},\, \bm{F}_2\hat{\bm{v}}_2\hat{\bm{v}}_1^{-1} = \left( \underbrace{O_P\left( \phi_1\lambda_1^{-1/2} + \phi_3 \lambda_1^{-1} \right)}_{\Koracle \times 1} \cdots \underbrace{O_P\left( \phi_1\lambda_{\Koracle}^{-1/2} + \phi_3 \lambda_{\Koracle}^{-1} \right)}_{\Koracle \times 1} \right)\\
    & \left(I_{\Koracle},\hat{\bm{v}}_1^{-\T}\hat{\bm{v}}_2^{\T}\right)\tilde{\bm{C}}^{\T}\hat{\bm{V}}^{1/2}\bm{M}\hat{\bm{V}}^{1/2} \bm{Q}\hat{\bm{Z}}\hat{\bm{v}}_1^{-1} = \left( \underbrace{O_P\left( \phi_1\lambda_1^{-1/2} + \phi_3 \lambda_1^{-1} \right)}_{\Koracle \times 1} \cdots \underbrace{O_P\left( \phi_1\lambda_{\Koracle}^{-1/2} + \phi_3 \lambda_{\Koracle}^{-1} \right)}_{\Koracle \times 1} \right).
\end{align*}
This shows that
\begin{align*}
    &\hat{\tilde{\bm{F}}}_1 = \bm{F}_{1} + \bm{\Delta}\\
    &\bm{\Delta}_{rs} = O_P\left\lbrace \phi_1\left(\lambda_r^{-1/2}+\lambda_s^{-1/2}\right) + \phi_3\left(\lambda_r^{-1}+\lambda_s^{-1}\right) \right\rbrace, 
    \quad r,s \in [\Koracle].
\end{align*}
Next,
\begin{align*}
    \left\lbrace\bm{R}\left(\bm{F}_1+\bm{\Delta}\right)\bm{\Gamma}_1^{1/2} + \bm{\Gamma}_1^{1/2}\left(\bm{F}_1+\bm{\Delta}\right)\bm{R} + \bm{\Gamma}_1^{1/2}\bm{\Delta}\bm{\Gamma}_1^{1/2}\right\rbrace_{rs} =& O_P\left\lbrace \phi_1\left(\lambda_r^{1/2}+\lambda_s^{1/2}\right)\right.\\
    &+ \left.  \phi_3\left(\lambda_r^{1/2}\lambda_s^{-1/2}+\lambda_s^{1/2}\lambda_r^{-1/2}\right) \right\rbrace, \quad r,s \in [\Koracle].
\end{align*}
This shows that
\begin{align*}
    \left\lbrace \left(\hat{\bm{v}}_1 \hat{\bm{\Gamma}}_1 \hat{\bm{v}}_1^{\T}\right)^{1/2}\hat{\tilde{\bm{F}}}_1 \left(\hat{\bm{v}}_1 \hat{\bm{\Gamma}}_1 \hat{\bm{v}}_1^{\T}\right)^{1/2} \right\rbrace_{rs} =& \left( \bm{\Gamma}_1^{1/2}\bm{F}_1\bm{\Gamma}_1^{1/2} \right)_{rs}\\
    &+ O_P\left\lbrace \phi_1\left(\lambda_r^{1/2}+\lambda_s^{1/2}\right) + \phi_3\left(\lambda_r^{1/2}\lambda_s^{-1/2}+\lambda_s^{1/2}\lambda_r^{-1/2}\right) \right\rbrace, \quad r,s \in [\Koracle].
\end{align*}
Let $\bm{U} \in \mathbb{R}^{\Koracle \times \Koracle}$ be the eigenvectors of $\bm{\Gamma}_1^{1/2}\bm{F}_1\bm{\Gamma}_1^{1/2}$. By Lemma \ref{lemma:supp:U},
\begin{align*}
    \bm{U}_{rs} = O\left( \lambda_{r \vee s}^{1/2}\lambda_{r \wedge s}^{-1/2} \right), \quad r,s \in [\Koracle],
\end{align*}
meaning for any matrix $\bm{\Delta} \in \mathbb{R}^{\Koracle \times \Koracle}$ that satisfies
\begin{align*}
    \bm{\Delta}_{rs} = O_P\left\lbrace \phi_1\left(\lambda_r^{1/2}+\lambda_s^{1/2}\right) + \phi_3\left(\lambda_r^{1/2}\lambda_s^{-1/2}+\lambda_s^{1/2}\lambda_r^{-1/2}\right) \right\rbrace, \quad r,s \in [\Koracle],
\end{align*}
\begin{align*}
    \bm{U}_{\bigcdot r}^{\T} \bm{\Delta} \bm{U}_{\bigcdot s} = O_P\left\lbrace \phi_1\left(\lambda_r^{1/2}+\lambda_s^{1/2}\right) + \phi_3\left(\lambda_r^{1/2}\lambda_s^{-1/2}+\lambda_s^{1/2}\lambda_r^{-1/2}\right) \right\rbrace, \quad r,s \in [\Koracle].
\end{align*}
Putting this all together give us
\begin{align*}
    \hat{\bm{G}}_{rs}=&\bm{U}_{\bigcdot r}^{\T}\left(\hat{\bm{v}}_1 \hat{\bm{\Gamma}}_1 \hat{\bm{v}}_1^{\T}\right)^{1/2}\hat{\tilde{\bm{F}}}_1 \left(\hat{\bm{v}}_1 \hat{\bm{\Gamma}}_1 \hat{\bm{v}}_1^{\T}\right)^{1/2}\bm{U}_{\bigcdot s} = \lamoracle_{r}I\left(r=s\right)\\
    &+O_P\left\lbrace \phi_1\left(\lambda_r^{1/2}+\lambda_s^{1/2}\right) + \phi_3\left(\lambda_r^{1/2}\lambda_s^{-1/2}+\lambda_s^{1/2}\lambda_r^{-1/2}\right) \right\rbrace, \quad r,s \in [\Koracle].
\end{align*}
This can then be written as
\begin{align*}
    \hat{\bm{G}} = \tilde{\eta}_1 \bm{w}_1 \bm{w}_1^T + \tilde{\eta}_2 \bm{w}_2 \bm{w}_2^T + \cdots \tilde{\eta}_{\Koracle} \bm{w}_{\Koracle} \bm{w}_{\Koracle}^T + O_P\left\lbrace \phi_1^2 + \left(\phi_2 + \frac{n}{p}\right)^2\lambda_{\Koracle}^{-1} \right\rbrace
\end{align*}
where
\begin{align*}
    \tilde{\eta}_s = \lamoracle_s\left[ 1 + O_P\left\lbrace \phi_1\lambda_s^{-1/2} + \phi_3\lambda_s^{-1} \right\rbrace \right], \quad s \in [\Koracle]
\end{align*}
and
\begin{align*}
    \bm{w}_1 = \begin{pmatrix}
    1\\
    O_P\left\lbrace \phi_1\lambda_1^{-1/2} + \phi_3\left(\lambda_1\lambda_2\right)^{-1/2} \right\rbrace\\
    \vdots\\
    O_P\left\lbrace \phi_1\lambda_1^{-1/2} + \phi_3\left(\lambda_1\lambda_{\Koracle}\right)^{-1/2} \right\rbrace
    \end{pmatrix}, \, \bm{w}_2 = \begin{pmatrix}
    0\\
    1\\
    \vdots\\
    O_P\left\lbrace \phi_1\lambda_2^{-1/2} + \phi_3\left(\lambda_2\lambda_{\Koracle}\right)^{-1/2} \right\rbrace
    \end{pmatrix}, \cdots, \bm{w}_{\Koracle} = \begin{pmatrix}
    0\\
    0\\
    \vdots\\
    1
    \end{pmatrix}.
\end{align*}
To estimate the first eigenvalue, we see that
\begin{align*}
    \norm{\bm{w}_1}_2 = 1 + O_P\left\lbrace \phi_1^2\lambda_1^{-1} + \phi_3^2\left(\lambda_1\lambda_{\Koracle}\right)^{-1} \right\rbrace
\end{align*}
and
\begin{align*}
    \bm{w}_1^T \bm{w}_k = O_P\left\lbrace \phi_1\lambda_1^{-1/2} + \phi_3\left(\lambda_1\lambda_k\right)^{-1/2} \right\rbrace, \quad k \in \left\lbrace 2,\ldots,\Koracle \right\rbrace.
\end{align*}
Therefore,
\begin{align*}
    \left(\tilde{\eta}_1 \norm{\bm{w}_1}_2^2\right)^{-1}\bm{w}_1^T\hat{\bm{G}}\bm{w}_1 =& 1 + O_P\left( \frac{\lambda_2}{\lambda_1}\frac{\phi_1^2}{\lambda_1} + \frac{\phi_3^2}{\lambda_1^2} \right)\\
    \left(\tilde{\eta}_1 \norm{\bm{w}_1}_2^2\right)^{-1} \hat{\bm{G}}\bm{w}_1 =& \norm{\bm{w}_1}_2^{-1}\bm{w}_1 + \frac{\tilde{\eta}_2\left(\bm{w}_1^T \bm{w}_2\right)}{\tilde{\eta}_1 \norm{\bm{w}_1}_2^2}\bm{w}_2 + \cdots + \frac{\tilde{\eta}_{\Koracle}\left(\bm{w}_1^T \bm{w}_{\Koracle}\right)}{\tilde{\eta}_1 \norm{\bm{w}_1}_2^2}\bm{w}_{\Koracle}\\
    =& \norm{\bm{w}_1}_2^{-1}\bm{w}_1 + O_P\left( \phi_1\lambda_1^{-1/2} + \phi_3\lambda_1^{-1}\right).
\end{align*}
Therefore,
\begin{align*}
    \lamhatoracle_1 = \lamoracle_1\left\lbrace 1 + O_P\left( \phi_1\lambda_1^{-1/2} + \phi_3\lambda_1^{-1} \right) \right\rbrace.
\end{align*}
For the remaining eigenvalues, we use a similar technique to that used in the proof of Lemma \ref{lemma:supp:UpperBlock}. I will only determine $\lamoracle_2$. The remaining eigenvalues can be derived by a trivial extension. First,
\begin{align*}
    P_{w_1}^{\perp}\bm{w}_2 = \bm{w}_2 - \norm{\bm{w}_1}_2^{-2}\left(\bm{w}_1^T\bm{w}_2\right)\bm{w}_1 = \bm{w}_2 - \begin{pmatrix}
    O_P\left\lbrace \phi_1\lambda_1^{-1/2} + \phi_3\left(\lambda_1\lambda_2\right)^{-1/2} \right\rbrace\\
    O_P\left\lbrace \phi_1^2\lambda_1^{-1} + \phi_3^2\left(\lambda_1\lambda_2\right)^{-1} \right\rbrace\\
    \vdots\\
    O_P\left\lbrace \phi_1^2\lambda_1^{-1} + \frac{\phi_3^2}{\lambda_1 \left(\lambda_1\lambda_{\Koracle}\right)^{1/2}} + \frac{\phi_1\phi_3}{\lambda_1\lambda_{\Koracle}^{1/2}} \right\rbrace
    \end{pmatrix} = \bm{w}_2 - \bm{\Delta}_2.
\end{align*}
Therefore,
\begin{align*}
    \tilde{\eta}_2\bm{w}_2\bm{w}_2^T = \tilde{\eta}_2\left(P_{w_1}^{\perp}\bm{w}_2\right)\left(P_{w_1}^{\perp}\bm{w}_2\right)^T + \tilde{\eta}_2\left(P_{w_1}^{\perp}\bm{w}_2\right) \bm{\Delta}_2^T + \tilde{\eta}_2\bm{\Delta}_2\left(P_{w_1}^{\perp}\bm{w}_2\right)^T + \tilde{\eta}_2 \bm{\Delta}_2\bm{\Delta}_2^T
\end{align*}
where
\begin{align*}
    \norm{P_{w_1}^{\perp}\bm{w}_2}_2 = 1 + O_P\left\lbrace \phi_1^2\lambda_2^{-1} + \phi_3^2\left(\lambda_2\lambda_{\Koracle}\right)^{-1} \right\rbrace
\end{align*}
and for $k > 2$,
\begin{align*}
    \left(P_{w_1}^{\perp}\bm{w}_2\right)^T \bm{w}_k = \bm{w}_2^T \bm{w}_k - \bm{\Delta}_2^T \bm{w}_k = O_P\left\lbrace \phi_1\lambda_2^{-1/2} + \phi_3\left(\lambda_2\lambda_k\right)^{-1/2} \right\rbrace.
\end{align*}
A similar technique to that used above shows that
\begin{align*}
    \lamhatoracle_2 = \lamoracle_2\left\lbrace 1 + O_P\left( \phi_1\lambda_2^{-1/2} + \phi_3\lambda_2^{-1} \right) \right\rbrace.
\end{align*}
\end{proof}

\begin{remark}
\label{remark:supp:EigVectorsEta}
It is easy to see that if $(1+\epsilon)\lamoracle_{s+1}\leq \eta_{s} \leq (1-\epsilon)\lamoracle_{s-1}$ for some constant $\epsilon \in (0,1)$, then Corollary \ref{corollary:supp:NormSpaceVhat} shows that if $\tilde{\bm{w}}_s$ is the $s$th eigenvector of $\hat{\bm{G}}$, then $\norm{\tilde{\bm{w}}_s - \bm{w}_s}_2 = O_P\left(\phi_1 \lambda_s^{-1/2} + \phi_3\lambda_s^{-1}\right)$.
\end{remark}

\subsection{Estimating $\Coracle$ and $\Loracle$}
\label{subsection:supp:Eigenvectors}
We use the above work to prove Theorem \ref{theorem:AngleC}. We note that Corollary \ref{corollary:supp:VAlgorithm} shows that $\abs{\hat{\bar{\bm{v}}}_j - \bar{v}_j} = O_P(n^{-1})$ in \eqref{equation:Subspace} of Theorem \ref{theorem:AngleC}.

\begin{proof}[Proof of the rest of \eqref{equation:Subspace} in Theorem \ref{theorem:AngleC}]
Define $\delta = p^{-1/2}\lambda_{\Koracle}^{-1/2} + n/\{p\lambda_{\Koracle}\} + \{n\lambda_{\Koracle}\}^{-1}$. By Corollary \ref{corollary:supp:VAlgorithm}, the estimate $\hat{\bar{\bm{V}}}$ in Step \ref{item:EstC:ImageC}\ref{item:EstC:V} when $k = \Koracle$ satisfies $\phi_2 = \norm{ \hat{\bar{\bm{V}}} - \bar{\bm{V}} }_2 = O_P\left(n^{-1}\right)$. Let $\hat{\tilde{\bm{C}}} \in \mathbb{R}^{n \times \Koracle}$ be the first $\Koracle$ eigenvectors of $\hat{\bar{\bm{V}}}^{-1/2}\left(p^{-1}\bm{Y}^{\T}\bm{Y}\right) \hat{\bar{\bm{V}}}^{-1/2}$. By Lemma \ref{lemma:supp:OracleSignal},
\begin{align*}
    n^{-1/2}\Coracle &= \hat{\bar{\bm{V}}}^{1/2}\left( \tilde{\bm{C}}_{\bigcdot 1} \cdots \tilde{\bm{C}}_{\bigcdot \Koracle} \right)\bm{R} + O_P[\{\lambda_{\Koracle}n\}^{-1}],
\end{align*}
where $\tilde{\bm{C}}$ is defined in \eqref{equation:supp:Ctilde} and $\bm{R} \in \mathbb{R}^{\Koracle \times \Koracle}$ is an invertible matrix that satisfies $\norm{\bm{R}}_2 = O(1)$ as $n,p \to \infty$. Further,
\begin{align*}
    n^{-1/2}\hat{\bm{C}} = \hat{\bar{\bm{V}}}^{1/2}\hat{\tilde{\bm{C}}} = \hat{\bar{\bm{V}}}^{1/2}\tilde{\bm{C}} \left(\hat{\bm{v}}_{\bigcdot 1} \cdots \hat{\bm{v}}_{\bigcdot \Koracle}\right)\hat{\bm{R}} + \hat{\bar{\bm{V}}}^{1/2}\bm{Q}_{\tilde{C}}\left(\hat{\bm{z}}_{\bigcdot 1} \cdots \hat{\bm{z}}_{\bigcdot \Koracle}\right)\hat{\bm{R}},
\end{align*}
where $\hat{\bm{R}} \in \mathbb{R}^{\Koracle \times \Koracle}$ is an invertible matrix that satisfies $\norm{\hat{\bm{R}}}_2 = O(1)$ as $n,p \to \infty$. \eqref{equation:supp:Vhatrj} in Lemma \ref{lemma:supp:vandzhat} then shows that
\begin{align*}
    n^{-1/2}\hat{\bm{C}} = n^{-1/2}\Coracle \hat{\tilde{\bm{R}}} + \hat{\bar{\bm{V}}}^{1/2}\bm{Q}_{\tilde{C}}\left(\hat{\bm{z}}_{\bigcdot 1} \cdots \hat{\bm{z}}_{\bigcdot \Koracle}\right)\hat{\bm{R}} + O_P(\delta),
\end{align*}
where $\hat{\tilde{\bm{R}}} \in \mathbb{R}^{\Koracle \times \Koracle}$ is an invertible matrix (with probability tending to 1 as $n,p \to \infty$) and $\norm{\hat{\tilde{\bm{R}}}}_2 = O_P(1)$ as $n,p \to \infty$. For notational convenience, I re-define $\bm{C} \leftarrow n^{-1/2}\Coracle$ and $\hat{\bm{C}} \leftarrow n^{-1/2}\hat{\bm{C}}$ for the remainder of the proof.\par 
\indent First,
\begin{align*}
    2^{-1}\norm{ P_{\hat{C}} - P_{C} }_F^2 = \Koracle - \Tr\left\lbrace \left(\bm{C}^{\T}\bm{C}\right)^{-1} \bm{C}^{\T}\hat{\bm{C}} \left(\hat{\bm{C}}^{\T}\hat{\bm{C}}\right)^{-1}\hat{\bm{C}}^{\T}\bm{C} \right\rbrace,
\end{align*}
where by Corollary \ref{corollary:supp:Ctz},
\begin{align*}
    \bm{C}^{\T}\hat{\bm{C}} =& \bm{C}^{\T} \bm{C}\hat{\tilde{\bm{R}}} + O_P(\delta)\\
    \hat{\bm{C}}^{\T}\hat{\bm{C}} =& \hat{\tilde{\bm{R}}}^{\T}\bm{C}^{\T} \bm{C}\hat{\tilde{\bm{R}}} + O_P(\delta).
\end{align*}
This completes the proof.
\end{proof}

\begin{proof}[Proof of \eqref{equation:Linfty} in Theorem \ref{theorem:L} and \eqref{equation:CinnerChat} in Theorem \ref{theorem:AngleC}]
By Lemma \ref{lemma:supp:OracleSignal} and because $\norm{\hat{\bar{\bm{V}}}-\bar{\bm{V}}}_2 = O_P\left(n^{-1}\right)$ when $k=\Koracle$, it suffices to re-define $\Coracle$ and $\Loracle$ to be 
\begin{align*}
\lbrace \Coracle,\Loracle \rbrace = \argmin_{(\bar{\bm{C}}, \bar{\bm{L}}) \in \mathcal{S}_{\Koracle}} \norm{ ( \bm{L}\bm{C}^{\T} - \bar{\bm{L}}\bar{\bm{C}}^{\T} )\hat{\bar{\bm{V}}}^{-1/2} }_F^2.
\end{align*}
This implies for $\tilde{\bm{L}},\tilde{\bm{C}}$ defined in \eqref{equation:supp:Params} and $\bm{F}_1,\bm{\Gamma}_1$ defined in \eqref{equation:supp:GammaF} and \eqref{equation:supp:F1Gamma1} (with $\bm{M}=I_n$), there exists a unitary matrix $\bm{W} \in \mathbb{R}^{\Koracle \times \Koracle}$ such that
\begin{subequations}
\label{equation:supp:W}
\begin{align}
    &n^{-1/2}\Coracle = \hat{\bar{\bm{V}}}^{1/2}\tilde{\bm{C}}\bm{A}\bm{F}_1^{-1/2}\bm{W}\\
    &n^{1/2}p^{-1/2}\Loracle = \tilde{\bm{L}}\bm{A}\bm{F}_1^{1/2}\bm{W}\\
    &\bm{W}^{\T}\bm{F}_1^{1/2}\bm{\Gamma}_1 \bm{F}_1^{1/2}\bm{W} = \diag\left\lbrace \lamoracle_1,\ldots,\lamoracle_{\Koracle} \right\rbrace,
\end{align}
\end{subequations}
where $\bm{A} = \left(I_{\Koracle}\,\bm{0}\right)^{\T} \in \mathbb{R}^{K \times \Koracle}$. Additionally, for $\hat{\bm{F}}_1,\hat{\bm{\Gamma}}_1$ defined in \eqref{equation:supp:GammaF} and \eqref{equation:supp:F1Gamma1} (with $\bm{M}=I_n$),
\begin{subequations}
\label{equation:supp:What}
\begin{align}
    &n^{-1/2}\hat{\bm{C}} = \hat{\bar{\bm{V}}}^{1/2}\hat{\tilde{\bm{C}}} \hat{\bm{F}}_1^{-1/2}\hat{\bm{W}} = \hat{\bar{\bm{V}}}^{1/2}\tilde{\bm{C}}\begin{pmatrix}
    \hat{\bm{v}}_1\\
    \hat{\bm{v}}_2
    \end{pmatrix} \hat{\bm{F}}_1^{-1/2}\hat{\bm{W}} + \hat{\bar{\bm{V}}}^{1/2}\bm{Q}_{\tilde{C}}\hat{\bm{Z}}\hat{\bm{F}}_1^{-1/2}\hat{\bm{W}}\\
    &n^{1/2}p^{-1/2}\hat{\bm{L}} = \hat{\tilde{\bm{L}}} \hat{\bm{F}}_1^{1/2}\hat{\bm{W}}\\
    &\hat{\bm{W}}^{\T}\hat{\bm{F}}_1^{1/2}\hat{\bm{\Gamma}}_1 \hat{\bm{F}}_1^{1/2}\hat{\bm{W}} = \diag\left\lbrace \lamhatoracle_1,\ldots,\lamhatoracle_{\Koracle} \right\rbrace,
\end{align}
\end{subequations}
where $\hat{\tilde{\bm{C}}} \in \mathbb{R}^{n \times \Koracle}$ are the first $\Koracle$ right singular vectors of $\bm{Y}\hat{\bar{\bm{V}}}^{1/2}$, $\hat{\bm{v}}_1 \in \mathbb{R}^{\Koracle \times \Koracle},\hat{\bm{v}}_2 \in \mathbb{R}^{(K-\Koracle) \times \Koracle}$ and $\hat{\bm{Z}} \in \mathbb{R}^{(n-K) \times \Koracle}$ are defined in \eqref{equation:supp:F1Gamma1} and 
\begin{align}
\label{equation:supp:TildeLHat}
    \hat{\tilde{\bm{L}}} = p^{-1/2}\bm{Y}\hat{\bar{\bm{V}}}^{1/2}\hat{\tilde{\bm{C}}} = \tilde{\bm{L}}\begin{pmatrix}
    \hat{\bm{v}}_1\\
    \hat{\bm{v}}_2
    \end{pmatrix} + p^{-1/2}\bm{E}_1\begin{pmatrix}
    \hat{\bm{v}}_1\\
    \hat{\bm{v}}_2
    \end{pmatrix} + p^{-1/2}\bm{E}_2\hat{\bm{Z}}.
\end{align}
In the above equation, $\bm{E}_1 \in \mathbb{R}^{p \times K},\bm{E}_2 \in \mathbb{R}^{p \times (n-K)}$ are defined in \eqref{equation:supp:E1E2}. Let $\bm{D} = \diag\left\lbrace \lamoracle_1,\ldots,\lamoracle_{\Koracle} \right\rbrace$ and $\hat{\bm{D}} = \diag\left\lbrace \lamhatoracle_1,\ldots,\lamhatoracle_{\Koracle} \right\rbrace$. Then for $\bm{U},\hat{\bm{U}} \in \mathbb{R}^{\Koracle \times \Koracle}$ the eigenvectors of $\bm{\Gamma}_1^{1/2}\bm{F}_1\bm{\Gamma}_1^{1/2}$ and $\hat{\bm{\Gamma}}_1^{1/2}\hat{\bm{F}}_1\hat{\bm{\Gamma}}_1^{1/2}$, respectively,
\begin{align*}
    &\bm{W}^T \bm{F}_1^{1/2}\bm{\Gamma}_1\bm{F}_1^{1/2}\bm{W} = \bm{D} = \bm{U}^T \bm{\Gamma}_1^{1/2}\bm{F}_1\bm{\Gamma}_1^{1/2}\bm{U} \Rightarrow \bm{F}_1^{-1/2}\bm{W} = \bm{F}_1^{-1}\bm{\Gamma}_1^{-1/2}\bm{U}\bm{D}^{1/2}\\
    &\hat{\bm{W}}^T \hat{\bm{F}}_1^{1/2}\hat{\bm{\Gamma}}_1\hat{\bm{F}}_1^{1/2}\hat{\bm{W}} = \hat{\bm{D}} = \hat{\bm{U}}^T \hat{\bm{\Gamma}}_1^{1/2}\hat{\bm{F}}_1\hat{\bm{\Gamma}}_1^{1/2}\hat{\bm{U}} \Rightarrow \hat{\bm{F}}_1^{-1/2}\hat{\bm{W}} = \hat{\bm{F}}_1^{-1}\hat{\bm{\Gamma}}_1^{-1/2}\hat{\bm{U}}\hat{\bm{D}}^{1/2}
\end{align*}
where
\begin{align*}
    \hat{\bm{F}}_1^{-1/2}\hat{\bm{W}} = \hat{\bm{F}}_1^{-1}\hat{\bm{\Gamma}}_1^{-1/2}\hat{\bm{U}}\hat{\bm{D}}^{1/2} = \hat{\bm{\Gamma}}_1^{1/2} \underbrace{\left( \hat{\bm{\Gamma}}_1^{1/2}\hat{\bm{F}}_1\hat{\bm{\Gamma}}_1^{1/2} \right)^{-1}}_{=\hat{\bm{U}}\hat{\bm{D}}^{-1}\hat{\bm{U}}^T}\hat{\bm{U}}\hat{\bm{D}}^{1/2} = \hat{\bm{\Gamma}}_1^{1/2} \hat{\bm{U}}\hat{\bm{D}}^{-1/2}.
\end{align*}
This implies for $\bm{F}_2$ defined in \eqref{equation:supp:F1Gamma1},
\begin{align*}
    n^{-1}\{ \Coracle \}^{\T}\hat{\bm{C}} =& \bm{W}^T \bm{F}_1^{-1/2}\left( \bm{F}_1\hat{\bm{v}}_1 + \bm{F}_2\hat{\bm{v}}_2 + \tilde{\bm{C}}^T \hat{\bm{V}} \bm{Q}_{\tilde{C}}\hat{\bm{Z}} \right)\hat{\bm{F}}_1^{-1/2}\hat{\bm{W}}\\
     =& \bm{D}^{1/2}\bm{U}^{\T}\bm{\Gamma}_1^{-1/2}\left( \hat{\bm{v}}_1 + \bm{F}_1^{-1}\bm{F}_2\hat{\bm{v}}_2 + \bm{F}_1^{-1}\tilde{\bm{C}}^T \hat{\bm{V}} \bm{Q}_{\tilde{C}}\hat{\bm{Z}} \right)\hat{\bm{F}}_1^{-1/2}\hat{\bm{W}}\\
     =& \bm{D}^{1/2}\bm{U}^{\T}\bm{\Gamma}_1^{-1/2} \left( \hat{\bm{v}}_1 \bm{\Gamma}_1 \hat{\bm{v}}_1^{\T}\right)^{1/2}\bm{H}\hat{\bm{U}}\hat{\bm{D}}^{-1/2} + \bm{F}_1^{-1}\left( \bm{F}_2\hat{\bm{v}}_2 + \tilde{\bm{C}}^T \hat{\bm{V}} \bm{Q}_{\tilde{C}}\hat{\bm{Z}} \right)\hat{\bm{\Gamma}}_1^{1/2} \hat{\bm{U}}\hat{\bm{D}}^{-1/2},
\end{align*}
where $\bm{H} \in \mathbb{R}^{\Koracle \times \Koracle}$ is the same unitary matrix defined in the proof of Lemma \ref{lemma:supp:TrueEigs} and satisfies $\hat{\bm{v}}_1\hat{\bm{\Gamma}}_1^{1/2} = \left( \hat{\bm{v}}_1 \bm{\Gamma}_1 \hat{\bm{v}}_1^{\T}\right)^{1/2}\bm{H}$. By the proof of Lemma \ref{lemma:supp:TrueEigs}, $\hat{\bm{U}} = \bm{H}^{\T}\bm{U}\tilde{\bm{W}}$ for the unitary matrix $\tilde{\bm{W}} = \left(\tilde{\bm{w}}_1 \cdots \tilde{\bm{w}}_{\Koracle}\right) \in \mathbb{R}^{\Koracle \times \Koracle}$, where $\tilde{\bm{w}}_s$ is defined in Remark \ref{remark:supp:EigVectorsEta}. Next, \eqref{equation:supp:Vhatrj} in Lemma \ref{lemma:supp:vandzhat} and Corollary \ref{corollary:supp:Ctz} imply
\begin{align*}
    \bm{F}_1^{-1}\left( \bm{F}_2\hat{\bm{v}}_2 + \tilde{\bm{C}}^T \hat{\bm{V}} \bm{Q}_{\tilde{C}}\hat{\bm{Z}} \right) =& \left( \underbrace{O_P\left\lbrace p^{-1/2}\gamma_1^{-1/2}+n/(\gamma_1p) + (\gamma_1n)^{-1} \right\rbrace}_{\Koracle \times 1} \right.\\
    &\left.\cdots \underbrace{O_P\left\lbrace p^{-1/2}\gamma_{\Koracle}^{-1/2}+n/(\gamma_{\Koracle}p) + (\gamma_{\Koracle}n)^{-1} \right\rbrace}_{\Koracle \times 1} \right).
\end{align*}
Further, $\hat{\bm{U}}_{rs} = O_P\left( \gamma_{r\vee s}^{1/2}\gamma_{r\wedge s}^{-1/2} \right)$ by Lemma \ref{lemma:supp:U}. Therefore, for any $r \in [\Koracle]$,
\begin{align*}
    n^{-1}\{ \Coracle_{\bigcdot r} \}^{\T}\hat{\bm{C}}_{\bigcdot r} = \left\lbrace \frac{\lamoracle_r}{\lamhatoracle_r} \right\rbrace^{1/2} \bm{U}_{\bigcdot r}^{\T} \bm{\Gamma}_1^{-1/2}\left( \hat{\bm{v}}_1 \bm{\Gamma}_1 \hat{\bm{v}}_1^{\T}\right)^{1/2} \bm{U} \tilde{\bm{w}}_r + O_P\left\lbrace p^{-1/2}\gamma_{r}^{-1/2}+n/(\gamma_{r}p) + (\gamma_{r}n)^{-1} \right\rbrace.
\end{align*}
Therefore, we need only understand how $\bm{U}_{\bigcdot r}^{\T} \bm{\Gamma}_1^{-1/2}\left( \hat{\bm{v}}_1 \bm{\Gamma}_1 \hat{\bm{v}}_1^{\T}\right)^{1/2} \bm{U} \tilde{\bm{w}}_r$ behaves. As we did in the proof of Lemma \ref{lemma:supp:TrueEigs}, let $\bm{R} = \left( \hat{\bm{v}}_1 \bm{\Gamma}_1 \hat{\bm{v}}_1^{\T}\right)^{1/2} - \bm{\Gamma}^{1/2}$. We showed in the proof of Lemma \ref{lemma:supp:TrueEigs} that $\bm{R}_{rs}= O_P\left( \phi_1 + \phi_3\gamma_{r \wedge s}^{-1/2} \right)$ for $\phi_1=p^{-1/2}$ and $\phi_3=n/p + n^{-1}$. Therefore,
\begin{align*}
    &\bm{U}_{\bigcdot r}^{\T} \bm{\Gamma}_1^{-1/2}\left( \hat{\bm{v}}_1 \bm{\Gamma}_1 \hat{\bm{v}}_1^{\T}\right)^{1/2} \bm{U} \tilde{\bm{w}}_r = \tilde{\bm{w}}_{r_r} + \bm{U}_{\bigcdot r}^{\T} \bm{\Gamma}_1^{-1/2}\bm{R} \bm{U} \tilde{\bm{w}}_r\\
    &\bm{U}_{\bigcdot r}^{\T} \bm{\Gamma}_1^{-1/2}\bm{R} \bm{U}_{\bigcdot s} = O_P\left(\phi_1\gamma_r^{-1/2} + \phi_3\gamma_r^{-1/2}\gamma_s^{-1/2}\right), \quad s \in [\Koracle].
\end{align*}
The proof of Lemma \ref{lemma:supp:TrueEigs} and Remark \ref{remark:supp:EigVectorsEta} show that on the event $F_r^{(\epsilon)}$, defined in the statement of Theorem \ref{theorem:L},
\begin{align*}
    \abs{\tilde{\bm{w}}_{r_r}} &= 1 - O_P\left(\phi_1\gamma_r^{-1/2} + \phi_3\gamma_r^{-1}\right)\\
    \tilde{\bm{w}}_{r_s} &= O_P\left( \phi_1\gamma_r^{-1/2} + \phi_3\left\lbrace\gamma_r\gamma_s\right\rbrace^{-1/2} \right), \quad r < s \in [\Koracle].
\end{align*}
This completes the proof of \eqref{equation:CinnerChat}.\par 
\indent It remains to prove \eqref{equation:Linfty}. Using the expression of $\hat{\tilde{\bm{L}}}$ in \eqref{equation:supp:TildeLHat},
\begin{align*}
    \hat{\bm{L}} = p^{1/2}n^{-1/2}\tilde{\bm{L}}_1\hat{\bm{v}}_1\hat{\bm{\Gamma}}_1^{-1/2}\hat{\bm{U}}\hat{\bm{D}}^{1/2} + p^{1/2}n^{-1/2}\tilde{\bm{L}}_2\hat{\bm{v}}_2 \hat{\bm{\Gamma}}_1^{-1/2}\hat{\bm{U}}\hat{\bm{D}}^{1/2} + n^{-1/2}\bm{E}_1\begin{pmatrix}
    \hat{\bm{v}}_1\\
    \hat{\bm{v}}_2
    \end{pmatrix} + n^{-1/2}\bm{E}_2 \hat{\bm{Z}},
\end{align*}
where $\tilde{\bm{L}} = (\tilde{\bm{L}}_1 \, \tilde{\bm{L}}_2)$. First, for any $g \in [p]$ and some constant $c > 0$ and random sequence $a_{n,p}=O_P(1/n)$ that does not depend on $g$,
\begin{align*}
    \norm{ \bm{E}_{1_{g \bigcdot}} }_2 \leq c\norm{ n^{-1/2} \bm{C}^{\T} \hat{\bar{\bm{V}}}^{-1/2} \bm{E}_{g \bigcdot} }_2 \leq c\norm{ n^{-1/2} \bm{C}^{\T} \bar{\bm{V}}^{-1/2}\bm{E}_{g \bigcdot} }_2 + a_{n,p}\norm{ \bm{E}_{g \bigcdot} }_2
\end{align*}
Since $n^{-1/2} \bm{C}^{\T} \bar{\bm{V}}^{-1/2}\bm{E}_{g \bigcdot}$ is sub-Gaussian with uniformly bounded sub-Gaussian norm,
\begin{align*}
    \sup_{g \in [p]}\norm{ n^{-1/2} \bm{C}^{\T} \bar{\bm{V}}^{-1/2}\bm{E}_{g \bigcdot} }_2 = O_P\left\lbrace \log(p) \right\rbrace.
\end{align*}
Further, for some constant $c > 0$ and all $t \geq 0$,
\begin{align*}
    &\sup_{g \in [p]}\norm{ \bm{E}_{g \bigcdot} }_2^2 \leq cn + \sup_{g \in [p]}\abs{ \bm{E}_{g \bigcdot}^{\T}\bm{E}_{g \bigcdot} - \E\left( \bm{E}_{g \bigcdot}^{\T}\bm{E}_{g \bigcdot} \right) }\\
    &\Prob\left\lbrace \abs{ \bm{E}_{g \bigcdot}^{\T}\bm{E}_{g \bigcdot} - \E\left( \bm{E}_{g \bigcdot}^{\T}\bm{E}_{g \bigcdot} \right) } \geq tn^{1/2} \right\rbrace \leq 2\exp\left\lbrace -\min\left(ct^2,ct\right) \right\rbrace,
\end{align*}
where the second equality follows from Lemma \ref{lemma:supp:Preliminaries}. Therefore, if $\log(p)/n^{1/2} \to 0$,
\begin{align*}
    \sup_{g \in [p]}\norm{ \bm{E}_{g \bigcdot} }_2 \leq c^{1/2} n^{1/2}\left[ 1+O_P\left\lbrace \log(p)n^{-1/2} \right\rbrace \right].
\end{align*}
This shows that
\begin{align*}
    \norm{ n^{-1/2}\bm{E}_1\begin{pmatrix}
    \hat{\bm{v}}_1\\
    \hat{\bm{v}}_2
    \end{pmatrix}_{\bigcdot r} + n^{-1/2}\bm{E}_2 \hat{\bm{Z}}_{\bigcdot r} }_{\infty} = O_P\left\lbrace \log(p)n^{-1/2}+\left( \frac{n}{\gamma_{r}p} \right)^{1/2} + n^{-1} \right\rbrace.
\end{align*}
Next, since $\norm{\hat{\bm{\Gamma}}_1^{-1/2}\hat{\bm{U}}\hat{\bm{D}}^{1/2}}_2 = O_P(1)$ and by Lemma \ref{lemma:supp:vandzhat},
\begin{align*}
    \hat{\bm{v}}_2\norm{\hat{\bm{\Gamma}}_1^{-1/2}\hat{\bm{U}}\hat{\bm{D}}^{1/2}}_2 = O_P\left( \phi_1\gamma_{\Koracle}^{-1/2} + \phi_3\gamma_{\Koracle}^{-1} \right).
\end{align*}
Therefore, it suffices to assume $\hat{\bm{L}} = p^{1/2}n^{-1/2}\tilde{\bm{L}}_1\hat{\bm{v}}_1\hat{\bm{\Gamma}}_1^{-1/2}\hat{\bm{U}}\hat{\bm{D}}^{1/2}$. We then see that
\begin{align*}
    \Loracle_{\bigcdot r}-\hat{\bm{L}}_{\bigcdot r} = p^{1/2}n^{-1/2}\tilde{\bm{L}}_1\left[ \{\lamoracle_r\}^{1/2}\bm{\Gamma}_1^{-1/2}\bm{U}_{\bigcdot r} - \{\lamhatoracle_r\}^{1/2}\hat{\bm{v}}_1\hat{\bm{\Gamma}}_1^{-1/2}\underbrace{\bm{H}^{\T}\bm{U}\tilde{\bm{w}}_r}_{=\hat{\bm{U}}_{\bigcdot r}} \right].
\end{align*}
Since $\norm{\gamma_r^{1/2}\hat{\bm{v}}_1\hat{\bm{\Gamma}}_1^{-1/2}\hat{\bm{U}}_{\bigcdot r}}_2 = O_P(1)$,
\begin{align*}
    \{\lamoracle_r\}^{1/2}\bm{\Gamma}_1^{-1/2}\bm{U}_{\bigcdot r} - \{\lamhatoracle_r\}^{1/2}\hat{\bm{v}}_1\hat{\bm{\Gamma}}_1^{-1/2}\bm{H}^{\T}\bm{U}\tilde{\bm{w}}_r =& \{\lamoracle_r\}^{1/2}\left( \bm{\Gamma}_1^{-1/2}\bm{U}_{\bigcdot r} - \hat{\bm{v}}_1\hat{\bm{\Gamma}}_1^{-1/2}\bm{H}^{\T}\bm{U}\tilde{\bm{w}}_r \right)\\
    &+ O_P\left( \phi_1\gamma_{r}^{-1/2} + \phi_3 \gamma_r^{-1} \right).
\end{align*}
Next, since $\bm{H}^{\T} = \hat{\bm{\Gamma}}_1^{1/2} \hat{\bm{v}}_1^{\T}\left( \hat{\bm{v}}_1 \hat{\bm{\Gamma}}_1 \hat{\bm{v}}_1^{\T} \right)^{-1/2}$ and for $\bm{a}_r \in \mathbb{R}^{\Koracle}$ the $r$th standard basis vector,
\begin{align*}
    \hat{\bm{v}}_1\hat{\bm{\Gamma}}_1^{-1/2}\bm{H}^{\T}\bm{U}\tilde{\bm{w}}_r = \hat{\bm{v}}_1 \hat{\bm{v}}_1^{\T} \left( \hat{\bm{v}}_1 \hat{\bm{\Gamma}}_1 \hat{\bm{v}}_1^{\T} \right)^{-1/2} \bm{U}_{\bigcdot r} + \hat{\bm{v}}_1 \hat{\bm{v}}_1^{\T} \left( \hat{\bm{v}}_1 \hat{\bm{\Gamma}}_1 \hat{\bm{v}}_1^{\T} \right)^{-1/2} \bm{U}\left(\tilde{\bm{w}}_r - \bm{a}_r\right).
\end{align*}
By the proof of Lemma \ref{lemma:supp:TrueEigs} and Remark \ref{remark:supp:EigVectorsEta}, $\norm{ \tilde{\bm{w}}_r - \bm{a}_r }_2 = O_P\left( \phi_1\gamma_r^{-1/2} + \phi_3\gamma_{r}^{-1/2}\gamma_{\Koracle}^{-1/2} \right)$. Therefore,
\begin{align*}
    \{\lamoracle_r\}^{1/2}\norm{ \hat{\bm{v}}_1 \hat{\bm{v}}_1^{\T} \left( \hat{\bm{v}}_1 \hat{\bm{\Gamma}}_1 \hat{\bm{v}}_1^{\T} \right)^{-1/2} \bm{U}\left(\tilde{\bm{w}}_r - \bm{a}_r\right) }_2 = O_P\left( \phi_1\gamma_{\Koracle}^{-1/2} + \phi_3\gamma_{\Koracle}^{-1} \right).
\end{align*}
Next, for $\bm{R}$ defined above,
\begin{align*}
    &\left( \hat{\bm{v}}_1 \hat{\bm{\Gamma}}_1 \hat{\bm{v}}_1^{\T} \right)^{-1/2} = \bm{\Gamma}_1^{-1/2}\left\lbrace I_{\Koracle} - \sum\limits_{t=0}^{\infty}(-1)^t \left(\bm{R}\bm{\Gamma}_1^{-1/2}\right)^t (\bm{R}\bm{\Gamma}_1^{-1/2}) \right\rbrace\\
    &\{\lamoracle_r\}^{1/2}\gamma_{\Koracle}^{-1/2}\bm{R}\bm{\Gamma}_1^{-1/2}\bm{U}_{\bigcdot r} = O_P\left(\phi_1\gamma_{\Koracle}^{-1/2} + \phi_3\gamma_{\Koracle}^{-1}\right).
\end{align*}
Lastly, by Lemma \ref{lemma:supp:vandzhat},
\begin{align*}
    \norm{ \{\lamoracle_r\}^{1/2}\left(I_{\Koracle} - \hat{\bm{v}}_1\hat{\bm{v}}_1^{\T}\right)\bm{\Gamma}_1^{-1/2}\bm{U}_{\bigcdot r} }_2 = \norm{ I_{\Koracle} - \hat{\bm{v}}_1\hat{\bm{v}}_1^{\T} }_2O_P(1) = O_P\left(\phi_1\gamma_{\Koracle}^{-1/2} + \phi_3\gamma_{\Koracle}^{-1}\right).
\end{align*}
Putting this all together gives us
\begin{align*}
    \norm{ \{\lamoracle_r\}^{1/2}\bm{\Gamma}_1^{-1/2}\bm{U}_{\bigcdot r} - \{\lamhatoracle_r\}^{1/2}\hat{\bm{v}}_1\hat{\bm{\Gamma}}_1^{-1/2}\bm{H}^{\T}\bm{U}\tilde{\bm{w}}_r }_2 = O_P\left( \phi_1\gamma_{\Koracle}^{-1/2} + \phi_3 \gamma_{\Koracle}^{-1} \right).
\end{align*}
This shows that on the event $F_r^{(\epsilon)}$,
\begin{align*}
    \norm{ \Loracle_{\bigcdot r} - a\hat{\bm{L}}_{\bigcdot r} }_{\infty} = O_P\left\lbrace \log(p)n^{-1/2} + \left( \frac{n}{\gamma_{\Koracle} p} \right)^{1/2} + n^{-1} \right\rbrace
\end{align*}
for some $a \in \{-1,1\}$, which completes the proof.
\end{proof}


We next prove \eqref{equation:Lgls} in Theorem \ref{theorem:L}.

\begin{proof}[Proof of \eqref{equation:Lgls} in Theorem \ref{theorem:L}]
Let $\hat{\bm{V}}_g = \bm{V}(\hat{\bm{v}}_g)$, where $\hat{\bm{v}}_g$ is the restricted maximum likelihood estimate for $\bm{v}_g$ using the design matrix $\hat{\bm{C}} \in \mathbb{R}^{n \times K}$ (defined in Step \ref{item:Estc:LandBasis} of Algorithm \ref{algorithm:EstC} when $k=K$). By Corollary \ref{corollary:supp:REML}, $\norm{ \hat{\bm{V}}_g - \bm{V}_g }_2 = O_P\left\lbrace n^{-1/2} + n^{1/2}(p\gamma_K)^{-1/2} \right\rbrace$. Let $\hat{\tilde{\bm{C}}} \in \mathbb{R}^{n \times K}$ be the first $K$ right singular vectors of $\bm{Y}\hat{\bar{\bm{V}}}^{-1/2}$. The generalised least squares estimate for $\Loracle_{g\bigcdot}$ is then
\begin{align*}
    \hat{\bm{L}}_{g\bigcdot}^{(GLS)} = \left(\hat{\bm{C}}^{\T}\hat{\bm{V}}_g^{-1}\hat{\bm{C}}\right)^{-1}\hat{\bm{C}}^{\T}\hat{\bm{V}}_g^{-1}\bm{Y}_{g \bigcdot} =& \hat{\bm{W}}^{\T}\hat{\bm{F}}_1^{1/2}\left(\hat{\tilde{\bm{C}}}^{\T}\hat{\bar{\bm{V}}}^{1/2}\hat{\bm{V}}_g^{-1}\hat{\bar{\bm{V}}}^{1/2}\hat{\tilde{\bm{C}}}\right)^{-1}\hat{\tilde{\bm{C}}}^{\T}\hat{\bar{\bm{V}}}^{1/2}\hat{\bm{V}}_g^{-1}\hat{\bar{\bm{V}}}^{1/2}\tilde{\bm{C}}\tilde{\bm{\ell}}_g\\
    &+ n^{-1/2}\hat{\bm{W}}^{\T}\hat{\bm{F}}_1^{1/2}\left(\hat{\tilde{\bm{C}}}^{\T}\hat{\bar{\bm{V}}}^{1/2}\hat{\bm{V}}_g^{-1}\hat{\bar{\bm{V}}}^{1/2}\hat{\tilde{\bm{C}}}\right)^{-1}\hat{\tilde{\bm{C}}}^{\T}\hat{\bar{\bm{V}}}^{1/2}\hat{\bm{V}}_g^{-1} \bm{E}_{g \bigcdot}\\
    =& \bm{M}_g + \bm{R}_g,
\end{align*}
where $\hat{\bm{F}}_1$ is defined in \eqref{equation:supp:F1Gamma1}, $\hat{\bm{W}}$ is as defined in \eqref{equation:supp:What}, $\tilde{\bm{C}}$ is given in \eqref{equation:supp:Ctilde} and $\tilde{\bm{\ell}}_g = p^{1/2}n^{-1/2}\tilde{\bm{L}}_{g \bigcdot}$ for $\tilde{\bm{L}}_{g \bigcdot}$ defined in \eqref{equation:supp:Ltilde}. Note that $\norm{\tilde{\bm{\ell}}_g}_2 \leq c$ for some constant $c > 0$ and $\norm{ \hat{\bar{\bm{V}}} - \bar{\bm{V}} }_2= O_P(1/n)$. Let $\epsilon_j$, $j \in [b]$, be such that $\hat{\bm{V}}_g - \bm{V}_g = \sum_{j=1}^b \epsilon_j \bm{B}_j$. Then $\epsilon_j^2 = o_P(n^{-1/2})$ by Corollary \ref{corollary:supp:REML}, meaning
\begin{align*}
    \hat{\tilde{\bm{C}}}^{\T}\hat{\bar{\bm{V}}}^{1/2}\hat{\bm{V}}_g^{-1}\hat{\bar{\bm{V}}}^{1/2}\tilde{\bm{C}} =& \hat{\tilde{\bm{C}}}^{\T}\hat{\bar{\bm{V}}}^{1/2}\bm{V}_g^{-1}\hat{\bar{\bm{V}}}^{1/2}\tilde{\bm{C}} - \sum\limits_{j=1}^b \epsilon_j\left(\hat{\tilde{\bm{C}}}^{\T}\hat{\bar{\bm{V}}}^{1/2}\bm{V}_g^{-1} \bm{B}_j \bm{V}_g^{-1}\hat{\bar{\bm{V}}}^{1/2}\tilde{\bm{C}}\right) + o_P(n^{-1/2})\\
    =& \hat{\bm{v}}^{\T} \tilde{\bm{C}}^{\T}\hat{\bar{\bm{V}}}^{1/2}\bm{V}_g^{-1}\hat{\bar{\bm{V}}}^{1/2}\tilde{\bm{C}} - \sum\limits_{j=1}^b \epsilon_j\left(\hat{\bm{v}}^{\T} \tilde{\bm{C}}^{\T}\hat{\bar{\bm{V}}}^{1/2}\bm{V}_g^{-1}\hat{\bar{\bm{V}}}^{1/2}\tilde{\bm{C}}\bm{V}_g^{-1} \bm{B}_j \bm{V}_g^{-1}\hat{\bar{\bm{V}}}^{1/2}\tilde{\bm{C}}\right)\\
    &+ o_P(n^{-1/2})\\
    =&\hat{\bm{v}}^{\T} \tilde{\bm{C}}^{\T}\hat{\bar{\bm{V}}}^{1/2} \left(\bm{V}_g^{-1}- \bm{V}_g^{-1}\sum\limits_{j=1}^b \epsilon_j\bm{B}_j \bm{V}_g^{-1} \right)\hat{\bar{\bm{V}}}^{1/2}\tilde{\bm{C}} + o_P(n^{-1/2})\\
    =& \hat{\bm{v}}^{\T} \tilde{\bm{C}}^{\T}\hat{\bar{\bm{V}}}^{1/2}\hat{\bm{V}}_g^{-1}\hat{\bar{\bm{V}}}^{1/2}\tilde{\bm{C}} + o_P(n^{-1/2}),
\end{align*}
where the second equality follows from Corollary \ref{corollary:supp:CtC}. A similar technique shows that
\begin{align*}
    \hat{\tilde{\bm{C}}}^{\T}\hat{\bar{\bm{V}}}^{1/2}\hat{\bm{V}}_g^{-1}\hat{\bar{\bm{V}}}^{1/2}\hat{\tilde{\bm{C}}} = \hat{\bm{v}}^{\T}\tilde{\bm{C}}^{\T}\hat{\bar{\bm{V}}}^{1/2}\hat{\bm{V}}_g^{-1}\hat{\bar{\bm{V}}}^{1/2}\tilde{\bm{C}}\hat{\bm{v}} + o_P(n^{-1/2}).
\end{align*}
Therefore,
\begin{align*}
    \bm{M}_g = \hat{\bm{W}}^{\T}\hat{\bm{F}}_1^{1/2}\hat{\bm{v}}^{-1}\tilde{\bm{\ell}}_g + o_P(n^{-1/2}).
\end{align*}
Let $\bm{a}_r \in \mathbb{R}^K$ be the $r$th standard basis vector. By Corollary \ref{corollary:supp:Cte} and the fact that $\norm{n^{-1/2}\bm{E}_{g \bigcdot}}_2 = O_P(1)$,
\begin{align*}
    \bm{R}_{g_r} =& n^{-1/2}\bm{a}_r^{\T}\hat{\bm{W}}^{\T}\hat{\bm{F}}_1^{1/2}\hat{\bm{v}}^{-1}\left(\tilde{\bm{C}}^{\T}\hat{\bar{\bm{V}}}^{1/2}\bm{V}_g^{-1}\hat{\bar{\bm{V}}}^{1/2}\tilde{\bm{C}}\right)^{-1}\tilde{\bm{C}}^{\T}\hat{\bm{V}}^{1/2}\bm{V}_g^{-1}\bm{E}_{g \bigcdot} + o_P(n^{-1/2})\\
    =& O_P( \norm{ \hat{\bm{v}}^{-\T} \hat{\bm{F}}_1^{1/2}\hat{\bm{W}}\bm{a}_r - \bm{F}_1^{1/2}\bm{W}\bm{a}_r }_2 ) + \bm{a}_r^{\T}[ \{\Coracle\}^{\T}\bm{V}_g^{-1}\Coracle ]^{-1}\{\Coracle\}^{\T} \bm{V}_g^{-1}\bm{E}_{g \bigcdot} + o_P(n^{-1/2}).
\end{align*}
To complete the proof, we need only show that
\begin{align*}
    \norm{ \hat{\bm{v}}^{-\T} \hat{\bm{F}}_1^{1/2}\hat{\bm{W}}\bm{a}_r - \bm{F}_1^{1/2}\bm{W}\bm{a}_r }_2 = o_P(n^{-1/2})
\end{align*}
on the event $F_r^{(\epsilon)}$, where $\bm{F}_1$ is defined in \eqref{equation:supp:F1Gamma1} and $\bm{W}$ is as defined in \eqref{equation:supp:W}. We note that if $\bm{U}\in\mathbb{R}^{K \times K}$ contains the eigenvectors of $\bm{\Gamma}_1^{1/2}\bm{F}_1\bm{\Gamma}_1^{1/2}$ and $\tilde{\bm{w}}_r$ is as defined in Remark \ref{remark:supp:EigVectorsEta},
\begin{align*}
   \hat{\bm{v}}^{-\T} \hat{\bm{F}}_1^{1/2}\hat{\bm{W}}\bm{a}_r - \bm{F}_1^{1/2}\bm{W}\bm{a}_r =  \{ \lamhatoracle_r \}^{1/2} \left( \hat{\bm{v}} \hat{\bm{\Gamma}}\hat{\bm{v}}^{\T} \right)^{-1/2}\bm{U}\tilde{\bm{w}}_r - \{ \lamoracle_r \}^{1/2}\bm{\Gamma}^{-1/2}\bm{U}_{\bigcdot r}.
\end{align*}
However, we showed this was $o_P(n^{-1/2})$ in the proof of \eqref{equation:Linfty} in Theorem \ref{theorem:L} and \eqref{equation:CinnerChat} in Theorem \ref{theorem:AngleC} above.
\end{proof}

\section{Inference on $\Coracle$}
\label{section:supp:InferenceC}
In this section, we prove Theorem \ref{theorem:XandC}. We first state and prove two useful lemmas regarding the REML estimate for the covariance of linear combinations of the columns of $\bm{C}$. The first lemma shows that the correlation between $\bm{X}$ and $\hat{\bm{C}}_{\bigcdot r}$ mirrors that of $\bm{X}$ and $\Coracle_{\bigcdot r}$.

\begin{lemma}
\label{lemma:supp:XtC}
Let $\bm{X} \in \mathbb{R}^n$ be as defined in the statement of Theorem \ref{theorem:XandC}, suppose Assumptions \ref{assumption:CandL}, \ref{assumption:FALCO} and \ref{assumption:DependenceE} hold. Let $r \in [\Koracle]$ and let the event $F_r^{(\epsilon)}$ be as defined in the statement of \ref{theorem:L} for some small $\epsilon > 0$. Then if $\bm{X}$ is dependent on at most $c$ rows of $\bm{E}$ for some constant $c\geq 0$,
\begin{align}
    \abs{ \frac{\bm{X}^{\T} \Coracle_{\bigcdot r}}{\norm{\bm{X}}_2 \norm{\Coracle_{\bigcdot r}}_2} - \frac{\bm{X}^{\T} \hat{\bm{C}}_{\bigcdot r}}{ \norm{\bm{X}}_2\norm{\hat{\bm{C}}_{\bigcdot r}}_2 } } = O_P\left\lbrace (p\gamma_r)^{-1/2} + (n/p+n^{-1})\gamma_r^{-1} \right\rbrace.
\end{align}
\end{lemma}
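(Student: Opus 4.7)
The plan is to run the same calculation that produced \eqref{equation:CinnerChat} in Theorem \ref{theorem:AngleC}, with the vector $\Coracle_{\bigcdot r}$ replaced throughout by $\bm{X}$. Two structural points make the substitution clean. First, the IC3 normalizations enforced in \eqref{equation:OracleCL} and Step \ref{item:Estc:LandBasis} of Algorithm \ref{algorithm:EstC} give $\norm{\Coracle_{\bigcdot r}}_2 = \norm{\hat{\bm{C}}_{\bigcdot r}}_2 = n^{1/2}$, so both denominators in the stated cosines equal $n^{1/2}\norm{\bm{X}}_2$ exactly, and it suffices to find a sign $a \in \{-1,1\}$ such that
\begin{align*}
n^{-1/2}\norm{\bm{X}}_2^{-1}\bigl|\bm{X}^{\T}\hat{\bm{C}}_{\bigcdot r} - a\,\bm{X}^{\T}\Coracle_{\bigcdot r}\bigr| = O_P\bigl\{(p\gamma_r)^{-1/2}+(n/p+n^{-1})\gamma_r^{-1}\bigr\}
\end{align*}
on $F_r^{(\epsilon)}$. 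Second, the hypothesis that $\bm{X}$ depends on at most $c$ rows of $\bm{E}$ is precisely what licenses Corollary \ref{corollary:supp:Cte} to play, for this external vector, the role that Corollary \ref{corollary:supp:Ctz} plays for $\Coracle_{\bigcdot r}$ inside the proof of \eqref{equation:CinnerChat}.

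I would first use Lemma \ref{lemma:supp:OracleSignal} to absorb the $O_P(1/n)$ difference between $\bar{\bm{V}}$ and $\hat{\bar{\bm{V}}}$ and represent both columns in the common $\hat{\bar{\bm{V}}}$-based basis used in Section \ref{subsection:supp:Eigenvectors}, so that $n^{-1/2}\Coracle = \hat{\bar{\bm{V}}}^{1/2}\tilde{\bm{C}}\bm{A}\bm{F}_1^{-1/2}\bm{W}$ and $n^{-1/2}\hat{\bm{C}} = \hat{\bar{\bm{V}}}^{1/2}\hat{\tilde{\bm{C}}}\hat{\bm{F}}_1^{-1/2}\hat{\bm{W}}$, and I would invoke the identities $\bm{F}_1^{-1/2}\bm{W} = \bm{\Gamma}_1^{1/2}\bm{U}\bm{D}^{-1/2}$ and $\hat{\bm{F}}_1^{-1/2}\hat{\bm{W}} = \hat{\bm{\Gamma}}_1^{1/2}\hat{\bm{U}}\hat{\bm{D}}^{-1/2}$ derived inside the proof of \eqref{equation:Linfty}. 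Applied column-by-column, Corollary \ref{corollary:supp:Cte} lets me swap $\bm{X}^{\T}\hat{\bar{\bm{V}}}^{1/2}\hat{\tilde{\bm{C}}}_{\bigcdot s}$ for $\bm{X}^{\T}\hat{\bar{\bm{V}}}^{1/2}\tilde{\bm{C}}\hat{\bm{v}}_{\bigcdot s}$ at additive cost $\norm{\bm{X}}_2\cdot O_P(p^{-1/2}\gamma_s^{-1/2}+\phi_3\gamma_s^{-1})$ with $\phi_3=n/p+n^{-1}$. Weighting these column errors by the entries of $\hat{\bm{F}}_1^{-1/2}\hat{\bm{W}}_{\bigcdot r}$, using Lemma \ref{lemma:supp:U} to bound $|\hat{\bm{U}}_{sr}| = O_P(\gamma_{s\vee r}^{1/2}\gamma_{s\wedge r}^{-1/2})$, and recalling that the diagonals of $\hat{\bm{\Gamma}}_1$ are of order $\gamma_s$ while $\lamhatoracle_r \asymp \gamma_r$, I would check that the total error from this swap telescopes to $n^{1/2}\norm{\bm{X}}_2\cdot O_P(p^{-1/2}\gamma_r^{-1/2}+\phi_3\gamma_r^{-1})$. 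After the swap, the remaining task is to compare $n^{1/2}\bm{X}^{\T}\hat{\bar{\bm{V}}}^{1/2}\tilde{\bm{C}}\hat{\bm{v}}_1\hat{\bm{F}}_1^{-1/2}\hat{\bm{W}}_{\bigcdot r}$ with $a\,n^{1/2}\bm{X}^{\T}\hat{\bar{\bm{V}}}^{1/2}\tilde{\bm{C}}\bm{A}\bm{F}_1^{-1/2}\bm{W}_{\bigcdot r}$; this is precisely the vector comparison carried out inside the proof of \eqref{equation:Linfty}, where the rotation identity $\hat{\bm{U}} = \bm{H}^{\T}\bm{U}\tilde{\bm{W}}$ from the proof of Lemma \ref{lemma:supp:TrueEigs}, combined with Remark \ref{remark:supp:EigVectorsEta}'s bound $\tilde{\bm{w}}_{r_r} = a + O_P(p^{-1/2}\gamma_r^{-1/2}+\phi_3\gamma_r^{-1})$ on $F_r^{(\epsilon)}$, yields $\norm{n^{1/2}\hat{\bar{\bm{V}}}^{1/2}\tilde{\bm{C}}\hat{\bm{v}}_1\hat{\bm{F}}_1^{-1/2}\hat{\bm{W}}_{\bigcdot r} - a\Coracle_{\bigcdot r}}_2 = n^{1/2}O_P(p^{-1/2}\gamma_r^{-1/2}+\phi_3\gamma_r^{-1})$, so a single Cauchy--Schwarz inequality against $\bm{X}$ finishes the argument.

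The main obstacle is bookkeeping rather than new analysis. Each intermediate rotation ($\hat{\bm{U}}$, $\bm{U}$, $\bm{H}$, $\tilde{\bm{W}}$, $\hat{\bm{F}}_1^{-1/2}\hat{\bm{W}}$) carries its own index-dependent size, and the delicate point is to ensure that mixing columns $s\neq r$ into the weighted sum does not inflate the rate from the target $\gamma_r$ to the much worse $\gamma_{\Koracle}$ that would arise from a uniform column bound. The $\gamma_{s\vee r}^{1/2}\gamma_{s\wedge r}^{-1/2}$ structure of $\hat{\bm{U}}$ provided by Lemma \ref{lemma:supp:U} is exactly what makes this telescoping work, and since the same mechanism already drives the proofs of Theorems \ref{theorem:Lambda} and \ref{theorem:L}, no additional analytic input is needed.
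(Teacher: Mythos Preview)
Your overall strategy is correct and matches the paper's: apply Corollary~\ref{corollary:supp:Cte} to handle the $\bm{Q}_{\tilde C}\hat{\bm Z}$ piece of $\hat{\bm C}_{\bigcdot r}$, then reduce the remaining comparison to a pure $\mathbb{R}^{\Koracle}$ vector-norm bound which is passed through $\bm X^\T\hat{\bar{\bm V}}^{1/2}\tilde{\bm C}_1$ via Cauchy--Schwarz. The telescoping you describe for the weighted Corollary~\ref{corollary:supp:Cte} errors is right.

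The gap is in your last step. The vector comparison you need is
\[
\bigl\lVert\,\{\lamhatoracle_r\}^{-1/2}\hat{\bm v}_1\hat{\bm\Gamma}_1^{1/2}\hat{\bm U}_{\bigcdot r}-a\{\lamoracle_r\}^{-1/2}\bm\Gamma_1^{1/2}\bm U_{\bigcdot r}\,\bigr\rVert_2=O_P(\delta_r),
\]
whereas the proof of \eqref{equation:Linfty} establishes the \emph{different} bound
\[
\bigl\lVert\,\{\lamhatoracle_r\}^{1/2}\hat{\bm v}_1\hat{\bm\Gamma}_1^{-1/2}\hat{\bm U}_{\bigcdot r}-a\{\lamoracle_r\}^{1/2}\bm\Gamma_1^{-1/2}\bm U_{\bigcdot r}\,\bigr\rVert_2=O_P\bigl(\phi_1\gamma_{\Koracle}^{-1/2}+\phi_3\gamma_{\Koracle}^{-1}\bigr).
\]
The $\bm\Gamma_1$ power is inverted, and the rate there is $\gamma_{\Koracle}$-based, not $\gamma_r$-based; citing it and then applying Cauchy--Schwarz would only deliver the lemma with $\gamma_r$ replaced by $\gamma_{\Koracle}$. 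The $\bm\Gamma_1^{+1/2}$ version you actually need amplifies the columns $t<r$ of $\bm U$, so the crude bound $\lVert\tilde{\bm w}_r-\bm a_r\rVert_2=O_P(\phi_1\gamma_r^{-1/2}+\phi_3\gamma_r^{-1/2}\gamma_{\Koracle}^{-1/2})$ from Remark~\ref{remark:supp:EigVectorsEta} is not enough. The paper closes this by a short fresh calculation: writing $\bm R=(\hat{\bm v}_1\hat{\bm\Gamma}_1\hat{\bm v}_1^\T)^{1/2}-\bm\Gamma_1^{1/2}$ with the entrywise bound $\bm R_{st}=O_P(\phi_1+\phi_3\gamma_{s\wedge t}^{-1/2})$ from the proof of Lemma~\ref{lemma:supp:TrueEigs}, and then using the \emph{orthogonality} of $\tilde{\bm w}_1,\ldots,\tilde{\bm w}_{\Koracle}$ to upgrade the entrywise control to $\tilde{\bm w}_{r_t}=O_P(\phi_1\gamma_{r\wedge t}^{-1/2}+\phi_3\gamma_r^{-1/2}\gamma_t^{-1/2})$ for $t<r$. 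With that refinement, $\{\lamoracle_r\}^{-1/2}\lVert\bm\Gamma_1^{1/2}\bm U(\bm a_r-\tilde{\bm w}_r)\rVert_2$ and $\{\lamoracle_r\}^{-1/2}\lVert\bm R\bm U\tilde{\bm w}_r\rVert_2$ both collapse to $O_P(\delta_r)$, and your Cauchy--Schwarz step then goes through at the stated rate.
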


\begin{proof}[Proof of Theorem \ref{theorem:XandC}]
Since we are studying the empirical correlation, it suffices to assume $\norm{\bm{X}}_2=1$ and to redefine $\hat{\bm{C}}$ to be $\hat{\bm{C}} \leftarrow n^{-1/2}\hat{\bm{C}}$. Then using notation defined in the proof of \eqref{equation:CinnerChat} and \eqref{equation:Linfty} in Section \ref{subsection:supp:Eigenvectors},
\begin{align*}
\bm{X}^{\T}\hat{\bm{C}}_{\bigcdot r} =& \{\lamhatoracle_r\}^{-1/2}\bm{X}^{\T}\hat{\bar{\bm{V}}}^{1/2}\tilde{\bm{C}}\begin{pmatrix}\hat{\bm{v}}_1 \\ \hat{\bm{v}}_2 \end{pmatrix} \hat{\bm{\Gamma}}_1^{1/2}\hat{\bm{U}}_{\bigcdot r} + \{\lamhatoracle_r\}^{-1/2}\bm{X}^{\T}\hat{\bar{\bm{V}}}^{1/2} \bm{Q}_{\tilde{C}}\hat{\bm{Z}}\hat{\bm{\Gamma}}_1^{1/2}\hat{\bm{U}}_{\bigcdot r}\\
\bm{X}^{\T}\Coracle_{\bigcdot r} =& \{\lamoracle_r\}^{-1/2}\bm{X}^{\T}\hat{\bar{\bm{V}}}^{1/2}\tilde{\bm{C}}_1\bm{\Gamma}_1^{1/2}\bm{U}_{r \bigcdot},
\end{align*}
where $\tilde{\bm{C}}_1 = \left(\tilde{\bm{C}}_{\bigcdot 1} \cdots \tilde{\bm{C}}_{\bigcdot \Koracle}\right)$ and $\tilde{\bm{C}}_2 = \left(\tilde{\bm{C}}_{\bigcdot (\Koracle+1)} \cdots \tilde{\bm{C}}_{\bigcdot K}\right)$. Next,
\begin{align*}
    \{\lamhatoracle_r\}^{-1/2}\bm{X}^{\T}\hat{\bar{\bm{V}}}^{1/2}\tilde{\bm{C}}\begin{pmatrix}\hat{\bm{v}}_1 \\ \hat{\bm{v}}_2 \end{pmatrix} \hat{\bm{\Gamma}}_1^{1/2}\hat{\bm{U}}_{\bigcdot r} =& \{\lamhatoracle_r\}^{-1/2}\bm{X}^{\T}\hat{\bar{\bm{V}}}^{1/2}\tilde{\bm{C}}_1\left( \hat{\bm{v}}_1 \hat{\bm{\Gamma}}_1\hat{\bm{v}}_1^{\T} \right)^{1/2}\bm{U}\tilde{\bm{w}}_r\\
    & + \{\lamhatoracle_r\}^{-1/2}\bm{X}^{\T}\hat{\bar{\bm{V}}}^{1/2}\tilde{\bm{C}}_2 \hat{\bm{v}}_2 \hat{\bm{\Gamma}}_1^{1/2}\hat{\bm{U}}_{\bigcdot r}.
\end{align*}
By Lemmas \ref{lemma:supp:vandzhat} and \ref{lemma:supp:U},
\begin{align*}
    \{\lamhatoracle_r\}^{-1/2}\norm{ \hat{\bm{v}}_2 \hat{\bm{\Gamma}}_1^{1/2}\hat{\bm{U}}_{\bigcdot r} }_2 = O_P\left\lbrace \left(p\gamma_r \right)^{-1/2} + \left(n/p+n^{-1}\right)\gamma_r^{-1} \right\rbrace.
\end{align*}
By the proof of Corollary \ref{corollary:supp:Cte},
\begin{align*}
    \bm{X}^{\T}\hat{\bar{\bm{V}}}^{1/2} \bm{Q}_{\tilde{C}}\hat{\bm{Z}}_{\bigcdot t} = O_P\left\lbrace \left(p\gamma_t \right)^{-1/2}+ \left(n/p+n^{-1}\right)\gamma_t^{-1} \right\rbrace, \quad t \in [\Koracle],
\end{align*}
meaning $\{\lamhatoracle_r\}^{-1/2}\bm{X}^{\T}\hat{\bar{\bm{V}}}^{1/2} \bm{Q}_{\tilde{C}}\hat{\bm{Z}}\hat{\bm{\Gamma}}_1^{1/2}\hat{\bm{U}}_{\bigcdot r} = O_P\left\lbrace \left(p\gamma_t \right)^{-1/2}+ \left(n/p+n^{-1}\right)\gamma_t^{-1} \right\rbrace$. Therefore, we only need to show that
\begin{align*}
    \norm{ \left\lbrace \frac{\lamoracle_r}{\lamhatoracle_r} \right\rbrace^{1/2}\{\lamoracle_r\}^{-1/2}\left( \hat{\bm{v}}_1 \hat{\bm{\Gamma}}_1\hat{\bm{v}}_1^{\T} \right)^{1/2}\bm{U}\tilde{\bm{w}}_r - \{\lamoracle_r\}^{-1/2}\bm{\Gamma}_1^{1/2}\bm{U}_{r \bigcdot} }_2 = O_P\left\lbrace \left(p\gamma_r \right)^{-1/2}+ \left(n/p+n^{-1}\right)\gamma_r^{-1} \right\rbrace
\end{align*}
to complete the proof. Let $\delta_r = \left(p\gamma_r \right)^{-1/2}+ \left(n/p+n^{-1}\right)\gamma_r^{-1}$. Since $\norm{ \{\lamoracle_r\}^{-1/2}\left( \hat{\bm{v}}_1 \hat{\bm{\Gamma}}_1\hat{\bm{v}}_1^{\T} \right)^{1/2}\bm{U}\tilde{\bm{w}}_r }_2=O_P(1)$ and $\left\lbrace \frac{\lamoracle_r}{\lamhatoracle_r} \right\rbrace^{1/2} = O_P(\delta_r)$, this amounts to showing 
\begin{align*}
    \{\lamoracle_r\}^{-1/2}\norm{\left( \hat{\bm{v}}_1 \hat{\bm{\Gamma}}_1\hat{\bm{v}}_1^{\T} \right)^{1/2}\bm{U}\tilde{\bm{w}}_r- \bm{\Gamma}_1^{1/2}\bm{U}_{r \bigcdot}}_2=O_P(\delta_r).
\end{align*}
Define $\bm{R} = \left( \hat{\bm{v}}_1 \hat{\bm{\Gamma}}_1\hat{\bm{v}}_1^{\T} \right)^{1/2} - \bm{\Gamma}_1^{1/2}$. By the proof of Lemma \ref{lemma:supp:TrueEigs}, $\bm{R}_{st} = O_P\left( \phi_1+\phi_3\gamma_{s \wedge t}^{-1/2}\right)$, where $\phi_1=p^{-1/2}$ and $\phi_3=n/p+1/n$, meaning $(\bm{R}\bm{U})_{st} = O_P\left( \phi_1+\phi_3\gamma_{s \wedge t}^{-1/2} \right)$ by Lemma \ref{lemma:supp:U}. Since $\norm{\tilde{\bm{w}}_r-\bm{a}_r}_2 = O_P\left(\phi_1\gamma_r^{-1/2}+\phi_3\gamma_r^{-1/2}\gamma_{\Koracle}^{-1/2}\right)$ for $\bm{a}_r\in\mathbb{R}^{\Koracle}$ the $r$th standard basis vector,
\begin{align*}
    \{\lamoracle_r\}^{-1/2}\norm{ \bm{R}\bm{U}\tilde{\bm{w}}_r }_2 = O_P(\delta_r).
\end{align*}
By the proof of Lemma \ref{lemma:supp:TrueEigs} and the fact that $\tilde{\bm{w}}_1,\ldots,\tilde{\bm{w}}_{\Koracle}$ are orthogonal with unit norm, $\tilde{\bm{w}}_{r_s} = O_P\left( \phi_1\gamma_{r \wedge s}^{-1/2} + \phi_3\gamma_r^{-1/2}\gamma_s^{-1/2} \right)$ for all $s \neq r \in [\Koracle]$. Therefore,
\begin{align*}
    \norm{\{\lamoracle_r\}^{-1/2}\bm{\Gamma}_1^{1/2}\bm{U}\left(\bm{a}_r - \tilde{\bm{w}}_r\right)}_2 = O_P\left(\delta_r\right),
\end{align*}
which completes the proof.
\end{proof}

We next prove REML estimates for the variance of $\bm{C}$ are consistent.

\begin{lemma}
\label{lemma:supp:REMLC}
Suppose Assumptions \ref{assumption:CandL}, \ref{assumption:FALCO} and \ref{assumption:DependenceE} hold and $\Koracle \in [K]$ is known. Further, assume the following assumptions on $\bm{C}$ hold for some constant $c> 1$ and $r \in [\Koracle]$:
\begin{enumerate}[label=(\roman*)]
\item $\bm{C}= \bm{X}\bm{\Omega}^{\T} + \bm{\Xi}$ for some non-random $\bm{\Omega} \in \mathbb{R}^{K}$ and random $\bm{X} \in \mathbb{R}^n$ that satisfies $\norm{n^{-1}\bm{X}^{\T}\bm{X}-\sigma_x^2}_2 = O_P(n^{-1/2})$ for some constant $\sigma_x^2>0$. If $\bm{\Omega} \neq \bm{0}$, then $\bm{X}$ is independent of $\bm{E}$. If $\bm{\Omega} = \bm{0}$, then $\bm{X}$ is independent of all but at most $c$ rows of $\bm{E}$.
\item The random matrix $\bm{\Xi} \in \mathbb{R}^{n \times K}$ is independent of $\bm{X}$, $\E(\bm{\Xi})=\bm{0}$, $\V\{\vecM{\bm{\Xi}}\} = \sum_{j=1}^b \bm{\Psi}_j \otimes \bm{B}_j$, where $\norm{\bm{\Psi}_j}_2 \leq c$ and $\V\{\vecM{\bm{\Xi}}\} \succeq c^{-1}I_{nK}$, and satisfies one of the following:
\begin{enumerate}[label=(\alph*)]
\item $\vecM(\bm{\Xi}) = \bm{A}\bm{R}$, where $\bm{A} \in \mathbb{R}^{nK \times nK}$ is a non-random matrix that satisfies $\bm{A}\bm{A}^{\T} = \sum_{j=1}^b \bm{\Psi}_j \otimes \bm{B}_j$. $\bm{R} \in \mathbb{R}^{nK}$ is a mean zero random matrix with independent entries such that $\E(\bm{R}_{i}^4) \leq c$ for all $i \in [nK]$.
\item $\E[\exp\{\vecM(\bm{\Xi})^{\T}\bm{t}\}] \leq \exp(c\norm{\bm{t}}_2^2)$ for all $\bm{t} \in \mathbb{R}^{nK}$.
\end{enumerate}
\item $\E(n^{-1}\bm{C}^{\T}\bm{C}) = I_K$ and $\Prob\left\lbrace \lamoracle_r/\lamoracle_{r-1}, \lamoracle_{r+1}/\lamoracle_r \geq 1+c^{-1} \right\rbrace \to 1$ as $n,p \to \infty$, where $\lamoracle_{0}=\infty$.
\end{enumerate}
Let $\mathcal{F} = \left\lbrace \bm{x} \in \mathbb{R}^b : \text{$(2c)^{-1}I_n \preceq \sum_{j=1}^b \bm{x}_j \bm{B}_j$ and $\norm{\bm{x}}_2 \leq 2bc$} \right\rbrace$ and for $r \in [\Koracle]$, define
\begin{align*}
f_{r,1}(\bm{\theta}) &= -n^{-1}\log\left\lbrace \abs{ P_X^{\perp}\bm{V}(\bm{\theta})P_X^{\perp} }_+ \right\rbrace - n^{-1}\{ \Coracle_{\bigcdot r} \}^{\T}\left\lbrace P_X^{\perp}\bm{V}(\bm{\theta})P_X^{\perp} \right\rbrace^{\dagger}\Coracle_{\bigcdot r}\\
\hat{\bm{\theta}}_1 &= \argmax_{\bm{\theta} \in \mathcal{F}} f_{r,1}(\bm{\theta})
\end{align*}
and
\begin{align*}
f_{r,2}(\bm{\theta}) &= -n^{-1}\log\left\lbrace \abs{ P_X^{\perp}\bm{V}(\bm{\theta})P_X^{\perp} }_+ \right\rbrace - n^{-1}\hat{\bm{C}}_{\bigcdot r}^{\T}\left\lbrace P_X^{\perp}\bm{V}(\bm{\theta})P_X^{\perp} \right\rbrace^{\dagger}\hat{\bm{C}}_{\bigcdot r}\\
\hat{\bm{\theta}}_2 &= \argmax_{\bm{\theta} \in \mathcal{F}} f_{r,2}(\bm{\theta}),
\end{align*}
and let $\hat{\bm{u}}_r \in \mathbb{R}^{K}$ be such that $\Coracle_r = \bm{C}\hat{\bm{u}}_r$. Then for $\bm{\theta}^{*} \in \mathbb{R}^b$ such that $\bm{\theta}^{*}_j = \hat{\bm{u}}_r^{\T}\bm{\Psi}_j \hat{\bm{u}}_r$,
\begin{align*}
\norm{\hat{\bm{\theta}}_1 - \bm{\theta}^*}_2 = O_P\left(n^{-1/2}\right), \quad \norm{\hat{\bm{\theta}}_2 - \bm{\theta}^*}_2 = O_P\left[ n^{-1/2} + \{n/(p\gamma_r)\}^{1/2}\right].
\end{align*}
\end{lemma}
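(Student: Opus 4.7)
\textbf{Proof proposal for Lemma \ref{lemma:supp:REMLC}.} The plan is to first show that $\hat{\bm{\theta}}_1$ is the REML estimator in a correctly specified linear mixed model and obtain its $n^{-1/2}$ rate by standard M-estimator arguments, then to treat $\hat{\bm{\theta}}_2$ as a perturbation of $\hat{\bm{\theta}}_1$ governed by Theorem~\ref{theorem:AngleC}. Substituting $\Coracle_{\bigcdot r} = \bm{C}\hat{\bm{u}}_r = \bm{X}\bm{\Omega}^{\T}\hat{\bm{u}}_r + \bm{\Xi}\hat{\bm{u}}_r$ and using $P_X^{\perp}\bm{X}=\bm{0}$ gives $P_X^{\perp}\Coracle_{\bigcdot r} = P_X^{\perp}\bm{\Xi}\hat{\bm{u}}_r$, and since $\V(\bm{\Xi}\hat{\bm{u}}_r) = \sum_{j=1}^{b}(\hat{\bm{u}}_r^{\T}\bm{\Psi}_j\hat{\bm{u}}_r)\bm{B}_j = \bm{V}(\bm{\theta}^*)$, the function $f_{r,1}$ is exactly the REML log-likelihood at the true parameter $\bm{\theta}^*$. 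Note that $\hat{\bm{u}}_r = O_P(1)$ because $n^{-1}\{\Coracle\}^{\T}\Coracle=I_{\Koracle}$ together with $\E(n^{-1}\bm{C}^{\T}\bm{C})=I_K$ and the eigengap on $\lamoracle_r$ force $\hat{\bm{u}}_r$ to lie in a compact set (with probability tending to one), so $\bm{\theta}^* \in \mathcal{F}$ with probability tending to one.

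To get consistency $\hat{\bm{\theta}}_1 - \bm{\theta}^* = o_P(1)$ I would combine two ingredients. First, identifiability: the population objective $\bm{\theta} \mapsto \E\{f_{r,1}(\bm{\theta})\mid \hat{\bm{u}}_r\}$ is strictly concave on $\mathcal{F}$ and has $\bm{\theta}^*$ as its unique maximizer, because the second differential at $\bm{\theta}^*$ contracts with the Gram matrix $(n^{-1}\Tr[\{P_X^{\perp}\bm{V}(\bm{\theta}^*)P_X^{\perp}\}^{\dagger}\bm{B}_i\{P_X^{\perp}\bm{V}(\bm{\theta}^*)P_X^{\perp}\}^{\dagger}\bm{B}_j])_{ij}$, whose minimum eigenvalue is bounded below using $c^{-1}I_b \preceq \bm{M}$, $\norm{\bm{V}(\bm{\theta})}_2 \leq 2bc$, and $\bm{V}(\bm{\theta}) \succeq (2c)^{-1}I_n$ on $\mathcal{F}$. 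Second, stochastic equicontinuity: $\sup_{\bm{\theta}\in\mathcal{F}}\abs{f_{r,1}(\bm{\theta})-\E\{f_{r,1}(\bm{\theta})\mid \hat{\bm{u}}_r\}} = o_P(1)$ follows from the fourth-moment/sub-Gaussian hypothesis on $\bm{\Xi}$ via Lemma~\ref{lemma:supp:StochEqui} (following the template in the proof of Lemma~\ref{lemma:supp:V}). Given consistency, a Taylor expansion of $\nabla f_{r,1}(\hat{\bm{\theta}}_1)=\bm{0}$ about $\bm{\theta}^*$ yields $\hat{\bm{\theta}}_1-\bm{\theta}^* = -\{\nabla^2 f_{r,1}(\tilde{\bm{\theta}})\}^{-1}\nabla f_{r,1}(\bm{\theta}^*)$. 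Each component of $\nabla f_{r,1}(\bm{\theta}^*)$ is a centered quadratic form $n^{-1}\hat{\bm{u}}_r^{\T}\bm{\Xi}^{\T}\bm{A}_j\bm{\Xi}\hat{\bm{u}}_r - n^{-1}\Tr(\bm{A}_j\bm{V}(\bm{\theta}^*)P_X^{\perp})$ for some symmetric $\bm{A}_j$ with $\norm{\bm{A}_j}_2=O(1)$, which is $O_P(n^{-1/2})$ by Hanson--Wright in the sub-Gaussian case and by direct fourth-moment computation in the other. Combined with the uniform lower bound on the Hessian, this delivers $\norm{\hat{\bm{\theta}}_1-\bm{\theta}^*}_2 = O_P(n^{-1/2})$.

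For $\hat{\bm{\theta}}_2$, I will write $f_{r,2}-f_{r,1}$ as
\begin{align*}
    \Delta_n(\bm{\theta}) = -n^{-1}(\hat{\bm{C}}_{\bigcdot r}-a\Coracle_{\bigcdot r})^{\T}\{P_X^{\perp}\bm{V}(\bm{\theta})P_X^{\perp}\}^{\dagger}(\hat{\bm{C}}_{\bigcdot r}+a\Coracle_{\bigcdot r})
\end{align*}
for the sign $a\in\{-1,1\}$ aligning $\hat{\bm{C}}_{\bigcdot r}$ with $\Coracle_{\bigcdot r}$; the objective is invariant under $\hat{\bm{C}}_{\bigcdot r}\mapsto -\hat{\bm{C}}_{\bigcdot r}$, so the sign is cost-free. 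The unit normalizations $n^{-1}\norm{\hat{\bm{C}}_{\bigcdot r}}_2^2 = n^{-1}\norm{\Coracle_{\bigcdot r}}_2^2 = 1$ together with \eqref{equation:CinnerChat} of Theorem~\ref{theorem:AngleC} give $n^{-1}\norm{\hat{\bm{C}}_{\bigcdot r}-a\Coracle_{\bigcdot r}}_2^2 = O_P\{n/(p\gamma_r)+n^{-1}\}$ on the event $F_r^{(\epsilon)}$, which has probability tending to one by the eigengap assumption. Cauchy--Schwarz plus the uniform bound $\norm{\{P_X^{\perp}\bm{V}(\bm{\theta})P_X^{\perp}\}^{\dagger}}_2 \leq 2c$ on $\mathcal{F}$ then yields $\sup_{\bm{\theta}\in\mathcal{F}}\abs{\Delta_n(\bm{\theta})}$ and (via the same decomposition applied to the $\bm{\theta}$-derivatives) $\sup_{\bm{\theta}\in\mathcal{F}}\norm{\nabla\Delta_n(\bm{\theta})}_2$ are both $O_P[\{n/(p\gamma_r)\}^{1/2}+n^{-1/2}]$. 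A Taylor expansion of $\nabla f_{r,2}$ about $\bm{\theta}^*$, combined with the already-established bound on $\nabla f_{r,1}(\bm{\theta}^*)$ and the uniformly invertible Hessian, then produces $\norm{\hat{\bm{\theta}}_2-\bm{\theta}^*}_2 = O_P[n^{-1/2}+\{n/(p\gamma_r)\}^{1/2}]$.

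The main obstacle I anticipate is the identifiability/equicontinuity step for $\hat{\bm{\theta}}_1$: because $\hat{\bm{u}}_r$ is itself random, the lower bound on the expected Hessian and the uniform law of large numbers must be established uniformly over the random $\hat{\bm{u}}_r$. This requires showing that with probability tending to one $\bm{\theta}^*$ lies in the interior of $\mathcal{F}$ and that the population Gram matrix described above remains uniformly well-conditioned; both follow from the compactness of the range of $\hat{\bm{u}}_r$, but the argument must be threaded carefully through the conditioning on $\hat{\bm{u}}_r$. Once this is in place, the remainder of the proof is the standard REML perturbation analysis together with the explicit $\hat{\bm{C}}_{\bigcdot r}$-to-$\Coracle_{\bigcdot r}$ bound from Theorem~\ref{theorem:AngleC}.
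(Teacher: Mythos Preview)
Your proposal is correct and follows essentially the same route as the paper: for $\hat{\bm{\theta}}_1$ you both identify $f_{r,1}$ as the REML likelihood for the linear combination $\bm{\Xi}\hat{\bm{u}}_r$ (using $\|\hat{\bm{u}}_r\|_2=1+O_P(n^{-1/2})$), establish stochastic equicontinuity on $\mathcal{F}$, and finish with a Taylor expansion; for $\hat{\bm{\theta}}_2$ you both bound the perturbation of the objective by the $L_2$ distance $n^{-1/2}\|\hat{\bm{C}}_{\bigcdot r}-a\Coracle_{\bigcdot r}\|_2$ combined with the uniform operator bounds on $\{P_X^{\perp}\bm{V}(\bm{\theta})P_X^{\perp}\}^{\dagger}$. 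The only difference is cosmetic: the paper extracts the $L_2$ bound directly from the eigenvector expansion in Lemma~\ref{lemma:supp:vandzhat} (obtaining $O_P\{n^{1/2}(p\gamma_r)^{-1/2}+(n\gamma_r)^{-1}\}$), whereas you derive it from the cosine bound \eqref{equation:CinnerChat}; since $(\gamma_r p)^{-1/2}\leq\max\{n/(\gamma_r p),n^{-1}\}$ and $\gamma_r\gtrsim 1$, both routes give the same final rate.
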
 

\begin{proof}
We first note that $\hat{\bm{u}}_r = \left( n^{-1}\bm{C}^{\T}\bm{C} \right)^{-1/2}\bm{v}$ for some unit vector $\bm{v} \in \mathbb{R}^K$, where $n^{-1}\bm{C}^{\T}\bm{C} = I_K + O_P(n^{-1/2})$. Next,
\begin{align*}
n^{-1/2}\hat{\bm{C}}_{\bigcdot r} =& \{\lamhatoracle_r\}^{-1/2} \hat{\bar{\bm{V}}}^{1/2}\left(\tilde{\bm{C}}_1\hat{\bm{v}}_1 + \tilde{\bm{C}}_2\hat{\bm{v}}_2\right)\hat{\bm{\Gamma}}_1^{1/2}\hat{\bm{U}}_{\bigcdot r} + \{\lamhatoracle_r\}^{-1/2} \hat{\bar{\bm{V}}}^{1/2} \bm{Q}_{\tilde{C}}\hat{\bm{Z}}\hat{\bm{\Gamma}}_1^{1/2}\hat{\bm{U}}_{\bigcdot r}\\
n^{-1/2}\Coracle_{\bigcdot r} =& \{\lamoracle_r\}^{-1/2}\hat{\bar{\bm{V}}}^{1/2}\tilde{\bm{C}}_1 \bm{\Gamma}_1^{1/2}\bm{U}_{\bigcdot r}.
\end{align*}
By Lemma \ref{lemma:supp:vandzhat} and for some constant $\tilde{c} > 0$,
\begin{align*}
    &\norm{ \{\lamhatoracle_r\}^{-1/2} \hat{\bar{\bm{V}}}^{1/2} \bm{Q}_{\tilde{C}}\hat{\bm{Z}}\hat{\bm{\Gamma}}_1^{1/2}\hat{\bm{U}}_{\bigcdot r} }_2 \leq \tilde{c}\norm{\{\lamhatoracle_r\}^{-1/2}\hat{\bm{Z}}\hat{\bm{\Gamma}}_1^{1/2}\hat{\bm{U}}_{\bigcdot r}}_2 = O_P\left(n^{1/2}p^{-1/2}\gamma_r^{-1/2}+n^{-1}\gamma_r^{-1}\right)\\
    &\norm{ \{\lamhatoracle_r\}^{-1/2}\tilde{\bm{C}}_2\hat{\bm{v}}_2\hat{\bm{\Gamma}}_1^{1/2}\hat{\bm{U}}_{\bigcdot r} }_2 = O_P\left( \frac{n}{p\gamma_r} + p^{-1/2}\gamma_r^{-1/2}+n^{-1}\gamma_r^{-1} \right).
\end{align*}
Further, by the proof of Lemma \ref{lemma:supp:XtC},
\begin{align*}
    \norm{ \{\lamhatoracle_r\}^{-1/2} \hat{\bar{\bm{V}}}^{1/2}\tilde{\bm{C}}_1\hat{\bm{v}}_1\hat{\bm{\Gamma}}_1^{1/2}\hat{\bm{U}}_{\bigcdot r} - \{\lamoracle_r\}^{-1/2}\hat{\bar{\bm{V}}}^{1/2}\tilde{\bm{C}}_1 \bm{\Gamma}_1^{1/2}\bm{U}_{\bigcdot r} }_2 = O_P\left( \frac{n}{p\gamma_r} + p^{-1/2}\gamma_r^{-1/2}+n^{-1}\gamma_r^{-1} \right).
\end{align*}
This shows that
\begin{align*}
    \norm{n^{-1/2}\hat{\bm{C}}_{\bigcdot r} - n^{-1/2}\Coracle_{\bigcdot r}}_2 = O_P\left(n^{1/2}p^{-1/2}\gamma_r^{-1/2}+n^{-1}\gamma_r^{-1}\right).
\end{align*}
Since $\norm{\bm{V}(\bm{\theta})}_2,\norm{\{\bm{V}(\bm{\theta})\}^{-1}}_2$ is uniformly bounded from above for all $\bm{\theta}\in\mathcal{F}$, we need only show that $\norm{\hat{\bm{\theta}}_1 - \bm{\theta}^*}_2 = O_P\left(n^{-1/2}\right)$ to complete the proof. We first see that $f_{r,1}$ can be re-written as
\begin{align*}
    f_{r,1}(\bm{\theta}) &= -n^{-1}\log\left\lbrace \abs{ \bm{Q}_X^{\T}\bm{V}(\bm{\theta})\bm{Q}_X } \right\rbrace - n^{-1}\left(\bm{Q}_X^{\T}\bm{\Xi}\hat{\bm{u}}_r\right)^{\T}\left\lbrace \bm{Q}_X^{\T}\bm{V}(\bm{\theta})\bm{Q}_X \right\rbrace^{-1}\left(\bm{Q}_X^{\T}\bm{\Xi}\hat{\bm{u}}_r\right)
\end{align*}
It is clear that by the assumptions on $\bm{\Xi}$,
\begin{align*}
    &\norm{ n^{-1}\left(\bm{Q}_X^{\T}\bm{\Xi}\right)^{\T}\left\lbrace \bm{Q}_X^{\T}\bm{V}(\bm{\theta})\bm{Q}_X \right\rbrace^{-1} \left(\bm{Q}_X^{\T}\bm{\Xi}\right) - n^{-1}\E\left[ \left(\bm{Q}_X^{\T}\bm{\Xi}\right)^{\T}\left\lbrace \bm{Q}_X^{\T}\bm{V}(\bm{\theta})\bm{Q}_X \right\rbrace^{-1} \left(\bm{Q}_X^{\T}\bm{\Xi}\right) \mid \bm{X} \right] }_2
\end{align*}
is stochastically equicontinuous, where
\begin{align*}
    n^{-1}\E\left[ \left(\bm{Q}_X^{\T}\bm{\Xi}\right)^{\T}\left\lbrace \bm{Q}_X^{\T}\bm{V}(\bm{\theta})\bm{Q}_X \right\rbrace^{-1} \left(\bm{Q}_X^{\T}\bm{\Xi}\right) \mid \bm{X} \right] = n^{-1}\sum\limits_{j=1}^b \Tr\left[\left\lbrace \bm{Q}_X^{\T}\bm{V}(\bm{\theta})\bm{Q}_X \right\rbrace^{-1}\bm{Q}_X^{\T}\bm{B}_j\bm{Q}_X\right] \bm{\Psi}_j.
\end{align*}
Since $\norm{\hat{\bm{u}}_r}_2 = 1+O_P(n^{-1/2})$, it is therefore easy to show that $\norm{\hat{\bm{\theta}}_1 - \bm{\theta}^*}_2 = o_P(1)$. The result then follows by a routine Taylor expansion argument.
\end{proof}

We can now prove Theorem \ref{theorem:XandC}.
\begin{proof}[Proof of Theorem \ref{theorem:XandC}]
Let $\hat{\bm{\theta}}_2$ be as defined in the statement of Lemma \ref{lemma:supp:REMLC} and let $\hat{\bm{D}} = \bm{V}(\hat{\bm{\theta}}_2)$. Let $\Coracle_{\bigcdot r} = \bm{C}\hat{\bm{u}}_r$, where by Lemma \ref{lemma:supp:PrelimCinference}, $\hat{\bm{u}}_r = \bm{U}\hat{\bm{y}}_r + \bm{\Delta}_r$ for $\bm{U}\in\mathbb{R}^{K\times \Koracle},\bm{\Delta}_r\in\mathbb{R}^{K}$ as defined in the statement of Lemma \ref{lemma:supp:PrelimCinference} and
\begin{align}
    \hat{\bm{y}}_{r_t} =\begin{cases}
    1+O_P\left(n^{-1/2}\right) & \text{if $t=r$}\\
    O_P\left(n^{-1/2}\right) & \text{if $t < r$}\\
    O_P\left(n^{-1/2}\gamma_t\gamma_r^{-1}\right) & \text{if $t > r$ }
    \end{cases}, \quad t \in [\Koracle].
\end{align}
Note that $\bm{U}$ is a non-random matrix. If $\Koracle=K$, $\bm{U}=I_K$ and $\bm{\Delta}_r=\bm{0}$. Lemma \ref{lemma:supp:REMLC} then implies $\hat{\bm{D}} = \bm{D} + \sum\limits_{j=1}^b \epsilon_j \bm{B}_j$, where $\bm{D}=\sum_{j=1}^b \bm{U}_{\bigcdot r}^{\T}\bm{\Psi}_j\bm{U}_{\bigcdot r} \bm{B}_j$ is a non-random matrix and $\epsilon_j = O_P\left(n^{-1/2}+n^{1/2}p^{-1/2}\gamma_r^{-1/2}\right)$ for all $j \in [b]$. Therefore,
\begin{align*}
    n^{-1}\bm{X}^{\T}\hat{\bm{D}}^{-1}\hat{\bm{C}}_{\bigcdot r} = n^{-1}\bm{X}^{\T}\bm{D}^{-1}\hat{\bm{C}}_{\bigcdot r} + n^{-1}\sum\limits_{j=1}^b \epsilon_j \bm{X}^{\T}\bm{D}^{-1}\bm{B}_j\bm{D}^{-1}\hat{\bm{C}}_{\bigcdot r} + o_P(n^{-1/2}).
\end{align*}
The proof of Lemma \ref{lemma:supp:XtC} can be easily extended to show that
\begin{align*}
    &\abs{ n^{-1}\bm{X}^{\T}\bm{D}^{-1}\hat{\bm{C}}_{\bigcdot r} - n^{-1}\bm{X}^{\T}\bm{D}^{-1}\Coracle_{\bigcdot r} } = o_P(n^{-1/2})\\
    &\abs{ n^{-1}\bm{X}^{\T}\bm{D}^{-1}\bm{B}_j\bm{D}^{-1}\hat{\bm{C}}_{\bigcdot r} - n^{-1}\bm{X}^{\T}\bm{D}^{-1}\Coracle_{\bigcdot r} } = o_P(n^{-1/2}),
\end{align*}
meaning 
\begin{align*}
    n^{-1}\bm{X}^{\T}\hat{\bm{D}}^{-1}\hat{\bm{C}}_{\bigcdot r} =& n^{-1}\bm{X}^{\T}\bm{D}^{-1}\Coracle_{\bigcdot r} + n^{-1}\sum\limits_{j=1}^b \epsilon_j \bm{X}^{\T}\bm{D}^{-1}\bm{B}_j\bm{D}^{-1}\Coracle_{\bigcdot r} + o_P(n^{-1/2})\\
    =& n^{-1}\bm{X}^{\T}\hat{\bm{D}}^{-1}\Coracle_{\bigcdot r} + o_P(n^{-1/2}).
\end{align*}
Therefore,
\begin{align*}
    \left(\bm{X}^{\T}\hat{\bm{D}}^{-1}\bm{X}\right)^{-1}\bm{X}^{\T}\hat{\bm{D}}^{-1}\hat{\bm{C}}_{\bigcdot r} =& \left(\bm{X}^{\T}\hat{\bm{D}}^{-1}\bm{X}\right)^{-1}\bm{X}^{\T}\hat{\bm{D}}^{-1}\Coracle_{\bigcdot r} + o_P(n^{-1/2})\\
    =& \bm{\Omega}^{\T}\hat{\bm{u}}_r + \left(\bm{X}^{\T}\hat{\bm{D}}^{-1}\bm{X}\right)^{-1}\bm{X}^{\T}\hat{\bm{D}}^{-1}\bm{\Xi}\hat{\bm{u}}_r + o_P(n^{-1/2}).
\end{align*}
Lastly,
\begin{align*}
    n^{-1}\bm{X}^{\T}\hat{\bm{D}}^{-1}\bm{\Xi}\bm{U}_{\bigcdot r} = n^{-1}\bm{X}^{\T}\bm{D}^{-1}\bm{\Xi}\bm{U}_{\bigcdot r} + o_P(n^{-1/2}),
\end{align*}
where $\norm{ n^{-1}\bm{X}^{\T}\bm{D}^{-1}\bm{\Xi} }_2 = O_P(n^{-1/2})$ and an application of the Lindeberg-Feller central limit theorem shows that
\begin{align}
    \left( n^{-1}\sum\limits_{j=1}^b \bm{X}^{\T}\bm{B}_j\bm{X} \bm{U}_{\bigcdot r}^{\T}\bm{\Psi}_j\bm{U}_{\bigcdot r} \right)^{-1/2}n^{-1}\bm{X}^{\T}\bm{D}^{-1}\bm{\Xi}\bm{U}_{\bigcdot r}  \tdist N\left(0,1\right)
\end{align}
as $n,p \to \infty$. Since $\norm{\hat{\bm{u}}_r-\bm{U}_{\bigcdot r}}_2 = o_P(1)$, this completes the proof.
\end{proof}

\section{Denoising the data matrix $\bm{Y}$}
\label{section:supp:denoising}
Here we prove Theorem \ref{theorem:eQTL}. The proof is given below, and utilizes Corollary \ref{corollary:supp:REML} to derive the asymptotic properties of the REML estimates for $\V\left(\bm{R}_g\right)$.

\begin{proof}[Proof of Theorem \ref{theorem:eQTL}]
For $\hat{\bm{\alpha}}_g$ as defined in the statement of Theorem \ref{theorem:eQTL}, Corollary \ref{corollary:supp:REML} shows that $\norm{ \hat{\bm{\alpha}}_g - \bm{\alpha}_g }_2 = O_P\left(n^{1/2}p^{-1/2}\gamma_K^{-1/2} + n^{-1/2}\right)$. Let $\bm{A}_g = \bm{V}(\bm{\alpha}_g)$ and $\hat{\bm{A}}_g = \bm{V}(\hat{\bm{\alpha}}_g)$. Then $\hat{\bm{A}}_g = \bm{A}_g + \sum_{j=1}^b \epsilon_j \bm{B}_j$, where $\epsilon_j = O_P\left(n^{1/2}p^{-1/2}\gamma_K^{-1/2} + n^{-1/2}\right)$ for all $j \in [b]$. Let, $\hat{\bm{M}}_g = \left(\hat{\bar{\bm{V}}}^{-1/2}\bm{X}_g  \, n^{1/2}\hat{\tilde{\bm{C}}} \right)$, where $\hat{\tilde{\bm{C}}} \in \mathbb{R}^{n \times K}$ are the first $K$ right singular vectors of $\bm{Y}\hat{\bar{\bm{V}}}^{-1/2}$. Then for $\hat{\tilde{\bm{A}}}_g = \hat{\bar{\bm{V}}}^{-1/2}\hat{\bm{A}}_g\hat{\bar{\bm{V}}}^{-1/2}$,
\begin{align*}
\hat{\bm{s}}_g = \left\lbrace \left( n^{-1}\hat{\bm{M}}_g^{\T} \hat{\tilde{\bm{A}}}_g^{-1}\hat{\bm{M}}_g \right)^{-1}\left(n^{-1}\hat{\bm{M}}_g^{\T} \hat{\tilde{\bm{A}}}_g^{-1}\hat{\bar{\bm{V}}}^{1/2}\bm{Y}_{g \bigcdot}\right) \right\rbrace_{1:d},
\end{align*}
where for $\bm{x} \in \mathbb{R}^m$, $\bm{x}_{1:d} \in \mathbb{R}^d$ is the first $d\leq m$ components of $\bm{x}$. We first see that
\begin{align*}
n^{-1}\hat{\bm{M}}_g^{\T} \hat{\tilde{\bm{A}}}_g^{-1}\hat{\bm{M}}_g = \begin{pmatrix}
n^{-1} \bm{X}_g^{\T}\hat{\bm{A}}_g^{-1}\bm{X}_g & n^{-1/2}\bm{X}_g^{\T}\hat{\bm{A}}_g^{-1} \hat{\bar{\bm{V}}}^{1/2}\hat{\tilde{\bm{C}}}\\
n^{-1/2}\hat{\tilde{\bm{C}}}^{\T} \hat{\bar{\bm{V}}}^{1/2} \hat{\bm{A}}_g^{-1}\bm{X}_g & \hat{\tilde{\bm{C}}}^{\T}\hat{\bar{\bm{V}}}^{1/2} \hat{\bm{A}}_g^{-1}\hat{\bar{\bm{V}}}^{1/2}\hat{\tilde{\bm{C}}}
\end{pmatrix}.
\end{align*}
Since $\hat{\bm{A}}_g^{-1} = \bm{A}_g^{-1}+\sum_{j=1}\epsilon_j \bm{A}_g^{-1}\bm{B}_j\bm{A}_g^{-1} + o_P(n^{1/2})$ and $\norm{\bm{A}_g^{-1}\bm{B}_j\bm{A}_g^{-1}}_2 = O(1)$, Corollaries \ref{corollary:supp:CtC} and \ref{corollary:supp:Cte} show that
\begin{align*}
    &\norm{ \hat{\tilde{\bm{C}}}^{\T}\hat{\bar{\bm{V}}}^{1/2} \hat{\bm{A}}_g^{-1}\hat{\bar{\bm{V}}}^{1/2}\hat{\tilde{\bm{C}}} - \hat{\bm{v}}^{\T}\tilde{\bm{C}}^{\T}\hat{\bar{\bm{V}}}^{1/2} \hat{\bm{A}}_g^{-1}\hat{\bar{\bm{V}}}^{1/2}\tilde{\bm{C}}\hat{\bm{v}} }_2 = o_P(n^{-1/2})\\
    &\norm{n^{-1/2}\bm{X}_g^{\T}\hat{\bm{A}}_g^{-1} \hat{\bar{\bm{V}}}^{1/2}\hat{\tilde{\bm{C}}} - n^{-1/2}\bm{X}_g^{\T}\hat{\bm{A}}_g^{-1} \hat{\bar{\bm{V}}}^{1/2}\tilde{\bm{C}} \hat{\bm{v}}}_2 = o_P(n^{-1/2}),
\end{align*}
meaning
\begin{align*}
n^{-1}\hat{\bm{M}}_g^{\T} \hat{\tilde{\bm{A}}}_g^{-1}\hat{\bm{M}}_g = \begin{pmatrix}
I_d & \bm{0}\\
\bm{0} & \hat{\bm{v}}^{\T}
\end{pmatrix}
\begin{pmatrix}
n^{-1} \bm{X}_g^{\T}\hat{\bm{A}}_g^{-1}\bm{X}_g & n^{-1/2}\bm{X}_g^{\T}\hat{\bm{A}}_g^{-1} \hat{\bar{\bm{V}}}^{1/2}\tilde{\bm{C}}\\
n^{-1/2}\tilde{\bm{C}}^{\T} \hat{\bar{\bm{V}}}^{1/2} \hat{\bm{A}}_g^{-1}\bm{X}_g & \tilde{\bm{C}}^{\T}\hat{\bar{\bm{V}}}^{1/2} \hat{\bm{A}}_g^{-1}\hat{\bar{\bm{V}}}^{1/2}\tilde{\bm{C}}
\end{pmatrix}\begin{pmatrix}
I_d & \bm{0}\\
\bm{0} & \hat{\bm{v}}
\end{pmatrix} + o_P(n^{-1/2}).
\end{align*}
Therefore,
\begin{align*}
    \hat{\bm{s}}_g = \left\lbrace \begin{pmatrix}
n^{-1} \bm{X}_g^{\T}\hat{\bm{A}}_g^{-1}\bm{X}_g & n^{-1/2}\bm{X}_g^{\T}\hat{\bm{A}}_g^{-1} \hat{\bar{\bm{V}}}^{1/2}\tilde{\bm{C}}\\
n^{-1/2}\tilde{\bm{C}}^{\T} \hat{\bar{\bm{V}}}^{1/2} \hat{\bm{A}}_g^{-1}\bm{X}_g & \tilde{\bm{C}}^{\T}\hat{\bar{\bm{V}}}^{1/2} \hat{\bm{A}}_g^{-1}\hat{\bar{\bm{V}}}^{1/2}\tilde{\bm{C}}
\end{pmatrix}^{-1}\begin{pmatrix}
n^{-1}\bm{X}_g^{\T} \hat{\bm{A}}_g^{-1}\bm{Y}_{g \bigcdot}\\
n^{-1/2}\hat{\bm{v}}^{-\T} \hat{\tilde{\bm{C}}}^{\T}\hat{\bar{\bm{V}}}^{1/2} \hat{\bm{A}}_g^{-1}\bm{Y}_{g \bigcdot}
\end{pmatrix} \right\rbrace_{1:d} + o_P(n^{-1/2}),
\end{align*}
where
\begin{align*}
    n^{-1/2}\hat{\bm{v}}^{-\T} \hat{\tilde{\bm{C}}}^{\T}\hat{\bar{\bm{V}}}^{1/2} \hat{\bm{A}}_g^{-1}\bm{Y}_{g \bigcdot} = n^{-1/2} \tilde{\bm{C}}^{\T}\hat{\bar{\bm{V}}}^{1/2} \hat{\bm{A}}_g^{-1}\bm{Y}_{g \bigcdot} + n^{-1/2}\hat{\bm{v}}^{-\T}\hat{\bm{z}}^{\T}\bm{Q}_{\tilde{C}}^{\T}\hat{\bar{\bm{V}}}^{1/2} \hat{\bm{A}}_g^{-1} \bm{Y}_{g \bigcdot}.
\end{align*}
Next,
\begin{align*}
    n^{-1/2}\hat{\bm{v}}^{-\T}\hat{\bm{z}}^{\T}\bm{Q}_{\tilde{C}}^{\T}\hat{\bar{\bm{V}}}^{1/2} \hat{\bm{A}}_g^{-1} \bm{Y}_{g \bigcdot} =& \hat{\bm{v}}^{-\T}\hat{\bm{z}}^{\T}\bm{Q}_{\tilde{C}}^{\T}\hat{\bar{\bm{V}}}^{1/2} \hat{\bm{A}}_g^{-1}\hat{\bar{\bm{V}}}^{1/2}\tilde{\bm{C}}\left(n^{-1/2}p^{1/2}\tilde{\bm{L}}_{g \bigcdot}\right) + n^{-1/2}\hat{\bm{v}}^{-\T}\hat{\bm{z}}^{\T}\bm{Q}_{\tilde{C}}^{\T}\hat{\bar{\bm{V}}}^{1/2} \hat{\bm{A}}_g^{-1}\bm{E}_{g \bigcdot},
\end{align*}
where $\norm{n^{-1/2}p^{1/2}\tilde{\bm{L}}_{g \bigcdot}}_2 = O(1)$. The proof of Corollary \ref{corollary:supp:Cte} can therefore be used to show that 
\begin{align*}
    n^{-1/2}\hat{\bm{v}}^{-\T}\hat{\bm{z}}^{\T}\bm{Q}_{\tilde{C}}^{\T}\hat{\bar{\bm{V}}}^{1/2} \hat{\bm{A}}_g^{-1} \bm{Y}_{g \bigcdot} = o_P(n^{-1/2}).
\end{align*}
Since $\im(\hat{\bar{\bm{V}}}^{1/2}\tilde{\bm{C}}) = \im(\bm{C})$,
\begin{align*}
    \hat{\bm{s}}_g = \left\lbrace \begin{pmatrix}
n^{-1} \bm{X}_g^{\T}\hat{\bm{A}}_g^{-1}\bm{X}_g & n^{-1}\bm{X}_g^{\T}\hat{\bm{A}}_g^{-1} \bm{C}\\
n^{-1}\bm{C}^{\T} \hat{\bm{A}}_g^{-1}\bm{X}_g & n^{-1}\bm{C}^{\T} \hat{\bm{A}}_g^{-1}\bm{C}
\end{pmatrix}^{-1}\begin{pmatrix}
n^{-1}\bm{X}_g^{\T} \hat{\bm{A}}_g^{-1}\bm{Y}_{g \bigcdot}\\
n^{-1} \bm{C}^{\T} \hat{\bm{A}}_g^{-1}\bm{Y}_{g \bigcdot}
\end{pmatrix} \right\rbrace_{1:d} + o_P(n^{-1/2}).
\end{align*}
Therefore,
\begin{align*}
    \hat{\bm{s}}_g = \bm{s}_g + \left\lbrace \begin{pmatrix}
n^{-1} \bm{X}_g^{\T}\hat{\bm{A}}_g^{-1}\bm{X}_g & n^{-1}\bm{X}_g^{\T}\hat{\bm{A}}_g^{-1} \bm{C}\\
n^{-1}\bm{C}^{\T} \hat{\bm{A}}_g^{-1}\bm{X}_g & n^{-1}\bm{C}^{\T} \hat{\bm{A}}_g^{-1}\bm{C}
\end{pmatrix}^{-1}\begin{pmatrix}
n^{-1}\bm{X}_g^{\T} \hat{\bm{A}}_g^{-1}\bm{R}_{g \bigcdot}\\
n^{-1} \bm{C}^{\T} \hat{\bm{A}}_g^{-1}\bm{R}_{g \bigcdot}
\end{pmatrix} \right\rbrace_{1:d} + o_P(n^{-1/2})
\end{align*}
Since $\bm{X}_g$ is mean 0, sub-Gaussian and independent of $\bm{C}$, $n^{-1}\bm{X}_g^{\T}\hat{\bm{A}}_g^{-1} \bm{C} = O_P(n^{-1/2})$, meaning
\begin{align*}
    \left\lbrace \begin{pmatrix}
n^{-1} \bm{X}_g^{\T}\hat{\bm{A}}_g^{-1}\bm{X}_g & n^{-1}\bm{X}_g^{\T}\hat{\bm{A}}_g^{-1} \bm{C}\\
n^{-1}\bm{C}^{\T} \hat{\bm{A}}_g^{-1}\bm{X}_g & n^{-1}\bm{C}^{\T} \hat{\bm{A}}_g^{-1}\bm{C}
\end{pmatrix}^{-1}\begin{pmatrix}
n^{-1}\bm{X}_g^{\T} \hat{\bm{A}}_g^{-1}\bm{R}_{g \bigcdot}\\
n^{-1} \bm{C}^{\T} \hat{\bm{A}}_g^{-1}\bm{R}_{g \bigcdot}
\end{pmatrix} \right\rbrace_{1:d} = \left( \bm{X}_g^{\T}\hat{\bm{A}}_g^{-1}\bm{X}_g\right)^{-1}\bm{X}_g^{\T} \hat{\bm{A}}_g^{-1}\bm{R}_{g \bigcdot} + o_P(n^{-1/2}).
\end{align*}
The result then follows because
\begin{align*}
    \left( \bm{X}_g^{\T}\hat{\bm{A}}_g^{-1}\bm{X}_g\right)^{-1}\bm{X}_g^{\T} \hat{\bm{A}}_g^{-1}\bm{R}_{g \bigcdot} = \left( \bm{X}_g^{\T}\bm{A}_g^{-1}\bm{X}_g\right)^{-1}\bm{X}_g^{\T} \bm{A}_g^{-1}\bm{R}_{g \bigcdot} + o_P(n^{-1/2}).
\end{align*}
\end{proof}

\section{Properties of and estimating the oracle rank $\Koracle$}
\label{subsection:supp:OracleRank}
In this section, we refer to $F$ as the number of folds in Algorithm \ref{algorithm:CBCV} use $f \in [F]$ to denote a fold. We start by stating and proving three useful lemmas.
\begin{lemma}
\label{lemma:sup:LtLRandom}
Suppose Assumptions \ref{assumption:CandL} and \ref{assumption:FALCO} hold. Let $\bm{M}^{(\pi)} \in \mathbb{R}^{p_f \times n}$ be a matrix whose rows are chosen uniformly at random, without replacement, from the rows of $\bm{M}=\E\left( \bm{Y} \mid \bm{C} \right)$. Then for any symmetric matrix $\bm{A} \in \mathbb{R}^{n \times n}$ with $\norm{\bm{A}}_2 \leq d$ and $\bm{A} \succeq d^{-1}I_n$ for some constant $d > 1$,
\begin{align*}
    \Lambda_k\left[ p_f^{-1}\left\lbrace\bm{M}^{(\pi)}\right\rbrace^{\T} \bm{A} \bm{M}^{(\pi)} \right] = \Lambda_k\left(p^{-1}\bm{M}^{\T}\bm{A}\bm{M}\right)\left[ 1+O_P\left\lbrace\left( np_f^{-1} \right)^{1/2}\right\rbrace \right] + O_P\left\lbrace\left( np_f^{-1} \right)^{1/2}\right\rbrace, \quad k \in [K]
\end{align*}
as $n,p \to \infty$, where the randomness is due to the randomness in the rows sampled from the rows of $\E\left( \bm{Y} \mid \bm{C} \right)$.
\end{lemma}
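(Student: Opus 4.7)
The plan is to reduce the eigenvalue comparison to a $K\times K$ perturbation problem, then combine a matrix concentration inequality for sampling without replacement with Weyl's inequality. Write $\bm{M}=\bm{L}\bm{C}^{\T}$, so $\bm{M}^{(\pi)}=\bm{L}^{(\pi)}\bm{C}^{\T}$, where $\bm{L}^{(\pi)}\in\mathbb{R}^{p_f\times K}$ collects the sampled rows of $\bm{L}$. Both $p^{-1}\bm{M}\bm{A}\bm{M}^{\T}$ and $p_f^{-1}\bm{M}^{(\pi)}\bm{A}\{\bm{M}^{(\pi)}\}^{\T}$ have rank at most $K$, and the identity $\Lambda_k(\bm{B}\bm{D})=\Lambda_k(\bm{D}^{1/2}\bm{B}\bm{D}^{1/2})$ for positive semidefinite $\bm{B},\bm{D}$ shows that their non-zero eigenvalues coincide, respectively, with those of the $K\times K$ symmetric matrices
\[
\bm{H}=\bm{V}^{1/2}\bm{T}\bm{V}^{1/2},\qquad \bm{H}^{(\pi)}=\bm{V}^{1/2}\bm{T}^{(\pi)}\bm{V}^{1/2},
\]
where $\bm{V}=\bm{C}^{\T}\bm{A}\bm{C}$, $\bm{T}=p^{-1}\bm{L}^{\T}\bm{L}$, and $\bm{T}^{(\pi)}=p_f^{-1}\{\bm{L}^{(\pi)}\}^{\T}\bm{L}^{(\pi)}$. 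Only the middle factor is perturbed by subsampling; the sandwich $\bm{V}^{1/2}$ is common to both.

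Next I would control the two pieces separately. For the middle factor, $\bm{T}^{(\pi)}$ is the average of $p_f$ of the $K\times K$ summands $\bm{L}_{g \bigcdot}\bm{L}_{g \bigcdot}^{\T}$ drawn uniformly without replacement from the population mean $\bm{T}$, and Assumption~\ref{assumption:CandL}\ref{item:assumCandL:Eigen} combined with $\bm{\Psi}_n\succeq c^{-1}I_K$ gives $\norm{\bm{L}_{g \bigcdot}\bm{L}_{g \bigcdot}^{\T}}_2=\bm{L}_{g \bigcdot}^{\T}\bm{L}_{g \bigcdot}=O(1)$ uniformly in $g$. A matrix Bernstein bound for sampling without replacement (or Hoeffding's reduction from without- to with-replacement for sums of bounded summands), applied in the fixed dimension $K=O(1)$, then yields $\norm{\bm{T}^{(\pi)}-\bm{T}}_2=O_P(p_f^{-1/2})$. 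For the sandwich factor, Assumption~\ref{assumption:CandL}\ref{item:assumErrors:V}--\ref{item:assumCandL:C} gives $\norm{n^{-1}\bm{C}^{\T}\bm{C}}_2=O_P(1)$: indeed, $\norm{(\delta^{-2}\bar{\bm{V}})^{-1}}_2$ is bounded (since $\norm{\bm{B}_j}_2\leq c$ and $|\bar v_j|\leq c$ imply $\norm{\bar{\bm{V}}}_2=O(1)$), and $n^{-1}\bm{C}^{\T}(\delta^{-2}\bar{\bm{V}})^{-1}\bm{C}=\bm{\Psi}_n+o_P(1)\preceq (c+o_P(1))I_K$; combined with $\norm{\bm{A}}_2\leq d$, this gives $\norm{\bm{V}}_2\leq d\norm{\bm{C}}_2^2=O_P(n)$.

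Combining, Weyl's inequality applied to the symmetric matrices $\bm{H},\bm{H}^{(\pi)}$ yields
\[
\bigl|\Lambda_k(\bm{H}^{(\pi)})-\Lambda_k(\bm{H})\bigr|\leq \norm{\bm{V}^{1/2}(\bm{T}^{(\pi)}-\bm{T})\bm{V}^{1/2}}_2\leq \norm{\bm{V}}_2\norm{\bm{T}^{(\pi)}-\bm{T}}_2=O_P(np_f^{-1/2}).
\]
I would then repackage this absolute error into the claimed multiplicative-plus-additive form via $O_P(np_f^{-1/2})=O_P(n^{1/2})\cdot O_P((np_f^{-1})^{1/2})$, splitting on whether $\Lambda_k(\bm{H})\gtrsim n^{1/2}$ (so $\Lambda_k(\bm{H})\cdot O_P((np_f^{-1})^{1/2})$ absorbs the error) or $\Lambda_k(\bm{H})\lesssim n^{1/2}$ (so the additive $O_P((np_f^{-1})^{1/2})$ term must absorb it).

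The main obstacle is the second regime. The crude bound $\norm{\bm{V}}_2\norm{\bm{\Delta}}_2$ carries a spare factor of $n^{1/2}$ beyond the claimed rate, so matching the stated $(np_f^{-1})^{1/2}$ bound when the eigenvalues are moderate or small requires a refined, variance-based matrix Bernstein applied directly to the sandwiched perturbation $\bm{V}^{1/2}\bm{\Delta}\bm{V}^{1/2}$---using the second-moment parameter $\norm{p^{-1}\sum_{g=1}^p(\bm{L}_{g \bigcdot}^{\T}\bm{V}\bm{L}_{g \bigcdot})\bm{V}^{1/2}\bm{L}_{g \bigcdot}\bm{L}_{g \bigcdot}^{\T}\bm{V}^{1/2}}_2$ in place of the worst-case product $\norm{\bm{V}}_2\norm{\bm{\Delta}}_2$---which exploits the fact that $\bm{V}$ acts non-trivially only on the $K$-dimensional range of $\bm{C}$.
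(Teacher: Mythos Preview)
Your reduction to a $K\times K$ problem is correct and matches the paper's first move, and you are right that the crude product $\norm{\bm{V}}_2\,\norm{\bm{T}^{(\pi)}-\bm{T}}_2$ loses a factor of $n^{1/2}$. But your proposed fix---a variance-based matrix Bernstein on $\bm{V}^{1/2}(\bm{T}^{(\pi)}-\bm{T})\bm{V}^{1/2}$---still does not close the gap when some eigenvalues stay bounded. Any operator-norm bound on $\bm{H}^{(\pi)}-\bm{H}$, fed into Weyl, gives a \emph{uniform} absolute error across all $k$; and in the natural basis where $n^{-1}\bm{C}^{\T}\bm{A}\bm{C}=I_K$, the matrix-variance parameter you wrote down is of order $n\gamma_1$, so the refined Bernstein yields only $\norm{\bm{H}^{(\pi)}-\bm{H}}_2=O_P\{(n\gamma_1/p_f)^{1/2}\}$. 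For a bounded $\gamma_k$ this absolute error is still too large by a factor $\gamma_1^{1/2}$, which can be as big as $n^{1/2}$. The point is that an operator-norm deviation is inevitably governed by the direction of the \emph{largest} eigenvalue, so no amount of sharpening the Bernstein constants recovers the claimed rate for small eigenvalues.

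The paper's proof avoids operator norms entirely. It first reparametrizes so that $n^{-1}\bm{C}^{\T}\bm{A}\bm{C}=I_K$ and $np^{-1}\bm{L}^{\T}\bm{L}=\diag(\gamma_1,\ldots,\gamma_K)$, and then computes \emph{entrywise} moments of $\hat{\bm{\Gamma}}=np_f^{-1}\sum_{g\leq p_f}\bm{L}_{\pi(g)\bigcdot}\bm{L}_{\pi(g)\bigcdot}^{\T}$: for $r\leq s$, bounding $\lvert\bm{L}_{\pi(g)r}\rvert\leq c$ and using the nonpositive covariances of sampling without replacement gives $\V\{np_f^{-1}\sum_g\bm{L}_{\pi(g)r}\bm{L}_{\pi(g)s}\}\leq c^2(n/p_f)\gamma_{s}$. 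Hence $\hat{\bm{\Gamma}}=\bm{\Gamma}+\bm{R}$ with $\bm{R}_{rs}=O_P\{(\gamma_{r\vee s}\,n/p_f)^{1/2}\}$, so the $(r,s)$ error scales with the \emph{smaller} of $\gamma_r,\gamma_s$. For indices with diverging $\gamma_k$ one factors $\hat{\bm{\Gamma}}=\bm{\Gamma}^{1/2}(I_K+\bm{\Gamma}^{-1/2}\bm{R}\bm{\Gamma}^{-1/2})\bm{\Gamma}^{1/2}$ and applies Lemma~\ref{lemma:supp:EigBound} to obtain the multiplicative bound; the block of $\bm{R}$ with at least one index in the bounded-eigenvalue regime is already $O_P\{(n/p_f)^{1/2}\}$ in norm and is handled by Weyl. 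The entrywise scaling $\gamma_{r\vee s}$, not any operator-norm inequality, is the missing ingredient in your plan.
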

\begin{proof}
Suppose $\E\left( \bm{Y} \mid \bm{C} \right) = \bm{L}\bm{C}^{\T}$, where we assume $n^{-1}\bm{C}^{\T}\bm{A}\bm{C} = I_K$ and $np^{-1}\bm{L}^{\T}\bm{L} = \diag\left(\gamma_1,\ldots,\gamma_K\right)$ without loss of generality (I abuse notation here; $\gamma_k$ are not the same as those defined in Assumption \ref{assumption:CandL}). Therefore, we need only show that for $\pi : [p] \to [p]$ a permutation chosen uniformly at random from the set of all permutations that map $[p]$ onto itself,
\begin{align*}
    \Lambda_k\left\lbrace n p_f^{-1}\sum_{g=1}^{p_f}\bm{L}_{\pi(g) \bigcdot}\bm{L}_{\pi(g) \bigcdot}^{\T} \right\rbrace = \gamma_k\left[ 1+O_P\left\lbrace \left(np_f^{-1}\right)^{1/2}\right\rbrace \right] + O_P\left\lbrace\left( np_f^{-1} \right)^{1/2}\right\rbrace, \quad k \in [K].
\end{align*}
First,
\begin{align*}
    \E\left( n p_f^{-1}\sum_{g=1}^{p_f}\bm{L}_{\pi(g) r}\bm{L}_{\pi(g) s} \mid \bm{C} \right) = p_f^{-1}\sum\limits_{g=1}^{p_f}\E\left(n\bm{L}_{\pi(g) r}\bm{L}_{\pi(g) s}^{\T} \mid \bm{C}\right) = \gamma_r I\left(r=s\right), \quad r,s \in [K].
\end{align*}
Next, let $c = \sup_{g \in [p]}\norm{\bm{L}_{g \bigcdot}}_2$ and suppose $r \leq s \in [K]$. Then
\begin{align*}
    \V\left\lbrace np_f^{-1}\sum_{g=1}^{p_f}\bm{L}_{\pi(g) r}\bm{L}_{\pi(g) s} \mid \bm{C}\right\rbrace &\leq n^2 p_f^{-2} \sum\limits_{g=1}^{p_f}\V\left\lbrace \bm{L}_{\pi(g) r}\bm{L}_{\pi(g) s} \mid \bm{C}\right\rbrace \leq n p_f^{-2} c^2 \sum\limits_{g=1}^{p_f}\E\left\lbrace n\bm{L}_{\pi(g)s}^2\right\rbrace\\
    & = c^2 np_f^{-1}\gamma_s,
\end{align*}
where the first equality follows from the fact that the rows of $\E\left(\bm{Y}\mid \bm{C}\right)$ are being sampled with replacement, meaning $\C\left(\bm{L}_{\pi(g) r}\bm{L}_{\pi(g) s},\bm{L}_{\pi(h) r}\bm{L}_{\pi(h) s} \mid \bm{C}\right) \leq 0$ for $g \neq h$. By Assumption \ref{assumption:CandL}, $c < \tilde{c} + o_P(1)$ as $n,p \to \infty$, where $\tilde{c} > 0$ is a constant that does not depend on $n$ or $p$. 
\begin{align*}
    \hat{\bm{\Gamma}} =& n p_f^{-1}\sum_{g=1}^{p_f}\bm{L}_{\pi(g) \bigcdot}\bm{L}_{\pi(g) \bigcdot}^{\T} = \diag\left(\gamma_1,\ldots,\gamma_K\right) + \bm{R}\\
    \bm{R}_{rs} =& O_P\left\lbrace \left(\lambda_{r \vee s}n/p_f\right)^{1/2} \right\rbrace, \quad r,s \in [K].
\end{align*}
If $\limsup_{n,p \to \infty}\lambda_K = \infty$, the result follows from Lemma \ref{lemma:supp:EigBound}. Otherwise, suppose there exists a $k \in [K]$ such that $\limsup_{n,p \to \infty}\lambda_k < \infty$ but $\limsup_{n,p \to \infty}\lambda_{k-1} = \infty$, where $\lambda_0 = \infty$. Then
\begin{align*}
    \hat{\bm{\Gamma}} =& \diag\left(\gamma_1,\ldots,\gamma_K\right) + \tilde{\bm{R}} + O_P\left\lbrace \left(n/p\right)^{1/2} \right\rbrace\\
    \tilde{\bm{R}}_{rs} =& O_P\left\lbrace \left(\lambda_{r \vee s}n/p_f\right)^{1/2} \right\rbrace I\left(r\wedge s \geq k\right), \quad r,s \in [K].
\end{align*}
The result then follows by Weyl's Theorem and an application of Lemma \ref{lemma:supp:EigBound}.
\end{proof}

\begin{lemma}[Lemma S8 of \citet{CorrConf}]
\label{lemma:supp:LOOXVColumnSpace}
Let $\bm{Q},\hat{\bm{V}}^{-1/2}_{(-f)},\hat{\bm{C}}$ and $K_{\max}$ be as defined in Theorem \ref{theorem:Khat}. Then for any fold $f$ and $k \leq K_{\max}$, the loss function in \eqref{equation:LOOLoss} only depends on $\bm{Q}$, $\hat{\bm{V}}_{(-f)}$ and $\im\left(\hat{\bm{C}}\right)$.
\end{lemma}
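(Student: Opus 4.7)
The plan is to prove this by showing directly that any invertible reparametrization $\hat{\bm{C}} \mapsto \hat{\bm{C}}\bm{A}$ with $\bm{A} \in \mathbb{R}^{k \times k}$ nonsingular leaves $\cv_f(k)$ unchanged, since the set of such reparametrizations exhausts all $n \times k$ matrices whose image equals $\im(\hat{\bm{C}})$. Two pieces of the loss need to be inspected: (i) the prefactor $\abs*{\hat{\bm{V}}_{(-f)}}^{1/n}$ and the rotated test data $\bar{\bm{Y}}_f = \bm{Y}_f \hat{\bm{V}}_{(-f)}^{-1/2}\bm{Q}$, which are manifestly functions only of $\bm{Y}$, $\bm{Q}$ and $\hat{\bm{V}}_{(-f)}$, so they do not involve $\hat{\bm{C}}$ at all; and (ii) the leave-one-out fitted values $\hat{\bm{L}}_{f,(-i)}\hat{\bar{\bm{C}}}_{i \bigcdot}$.

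First I would write out the ordinary least squares coefficient in closed form,
\begin{align*}
\hat{\bm{L}}_{f,(-i)}\hat{\bar{\bm{C}}}_{i \bigcdot} = \bar{\bm{Y}}_{f,(-i)}\,\hat{\bar{\bm{C}}}_{(-i)}\bigl(\hat{\bar{\bm{C}}}_{(-i)}^{\T}\hat{\bar{\bm{C}}}_{(-i)}\bigr)^{-1}\hat{\bar{\bm{C}}}_{i \bigcdot},
\end{align*}
where $\hat{\bar{\bm{C}}} = \bm{Q}^{\T}\hat{\bm{V}}_{(-f)}^{-1/2}\hat{\bm{C}}$, and observe that substituting $\hat{\bm{C}} \to \hat{\bm{C}}\bm{A}$ sends $\hat{\bar{\bm{C}}} \to \hat{\bar{\bm{C}}}\bm{A}$, $\hat{\bar{\bm{C}}}_{(-i)} \to \hat{\bar{\bm{C}}}_{(-i)}\bm{A}$, and $\hat{\bar{\bm{C}}}_{i \bigcdot} \to \bm{A}^{\T}\hat{\bar{\bm{C}}}_{i \bigcdot}$. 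Applying the identity $(\bm{A}^{\T}\bm{M}\bm{A})^{-1}=\bm{A}^{-1}\bm{M}^{-1}\bm{A}^{-\T}$ in the middle factor collapses the $\bm{A}$'s, and the fitted value is seen to be invariant.

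Putting these two observations together shows every term in $\cv_f(k)$ depends on $\hat{\bm{C}}$ only through $\im(\hat{\bm{C}})$, and this finishes the proof. The only mild subtlety is ensuring that $\hat{\bar{\bm{C}}}_{(-i)}^{\T}\hat{\bar{\bm{C}}}_{(-i)}$ remains invertible after the reparametrization, but since $\bm{A}$ is invertible and the original Gram matrix is assumed nonsingular (otherwise the OLS fit is itself ill-defined), this presents no obstacle. I do not expect any hard step; the entire argument is an algebraic check that the OLS prediction map depends on the design matrix only through its column space.
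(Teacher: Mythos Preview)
Your argument is correct: the leave-one-out fitted value $\hat{\bm{L}}_{f,(-i)}\hat{\bar{\bm{C}}}_{i\bigcdot}$ is exactly the OLS prediction at the held-out design point, and such predictions depend on the design matrix only through its column space, which your $\hat{\bm{C}}\mapsto\hat{\bm{C}}\bm{A}$ calculation verifies cleanly. The paper does not actually supply a proof of this lemma; it simply imports the statement as Lemma~S8 of \citet{CorrConf}, so there is no in-paper argument to compare against---your direct algebraic check is precisely the kind of verification that reference would contain.
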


\begin{remark}
\label{remark:supp:ColumnSpace}
Let $\bm{Y}_{(-f)},\hat{\bm{C}},\hat{\bm{V}}^{-1/2}_{(-f)}$ be as defined in Algorithm \ref{algorithm:CBCV}, and let $\bm{Y}_{(-f)}\hat{\bm{V}}^{-1/2}_{(-f)} = \bm{U}\bm{\Sigma}\bm{W}^{\T}$, where $\bm{U} \in \mathbb{R}^{p \times n}$, $\bm{\Sigma} \in \mathbb{R}^{n \times n}$ is a diagonal matrix with non-decreasing and non-negative diagonal elements, $\bm{W} \in \mathbb{R}^{n \times n}$ and $\bm{U}^{\T}\bm{U}=\bm{W}^{\T}\bm{W} = I_n$. Proposition \ref{proposition:ImageC} and Lemma \ref{lemma:supp:LOOXVColumnSpace} show that to prove Theorem \ref{theorem:Khat}, it suffices to let $\hat{\bm{V}}^{-1/2}_{(-f)}\hat{\bm{C}} = n^{1/2}\left(\bm{W}_{\bigcdot 1} \cdots \bm{W}_{\bigcdot k}\right)$ for each $k \leq K_{\max}$ and to assume $n^{-1}\bm{C}^{\T}\hat{\bm{V}}^{-1}_{(-f)}\bm{C} = I_K$ and $\bm{L}^{\T}\bm{L}$ is diagonal with non-decreasing diagonal elements.
\end{remark}

\begin{lemma}
\label{lemma:supp:VLargek}
Suppose the assumptions of Theorem \ref{theorem:Khat} hold and let $c_{max} > K$ be a constant that does not depend on $n$ or $p$. Then
\begin{align*}
    \max_{k \in \left\lbrace c_{\max}+1,\ldots,K_{\max} \right\rbrace} \norm{ \hat{\bm{V}}_{(-f)} - \bar{\bm{V}} }_2 = O_P\left(n^{1/2}p^{-1/2} + n^{-1}\right)
\end{align*}
as $n,p \to \infty$.
\end{lemma}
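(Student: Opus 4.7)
The strategy is to extend Corollary \ref{corollary:supp:VAlgorithm} uniformly across the range $\{c_{\max}+1,\ldots,K_{\max}\}$ by induction on $k$, combined with a uniform concentration bound. The base case $k=c_{\max}$ follows directly from Corollary \ref{corollary:supp:VAlgorithm}, which gives $\|\hat{\bar{\bm{V}}}-\bar{\bm{V}}\|_2 = O_P(n^{-1})$. For the inductive step at iteration $k$, the inner loop in Step \ref{item:EstC:ImageC} of Algorithm \ref{algorithm:EstC} enters with the estimate from iteration $k-1$, and within the three inner iterations either (i) Step \ref{item:EstC:ImageC}\ref{item:EstC:U} triggers and $\hat{\bar{\bm{V}}}$ is carried over unchanged, in which case the rate is trivially preserved, or (ii) the identifiability check passes and the REML update in Step \ref{item:EstC:ImageC}\ref{item:EstC:V} is executed.

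In case (ii), I would mirror the argument in the proof of Lemma \ref{lemma:supp:V}. Since the incoming estimate is already $O_P(n^{1/2}p^{-1/2}+n^{-1})$-accurate and $k \geq c_{\max} > K$, the arguments leading to Corollary \ref{corollary:supp:RemoveBlocks} imply $P_{\hat{C}}^{\perp}\bm{L}\bm{C}^{\T}$ is negligible, so the REML objective function $f_k(\bm{\theta}) = -n^{-1}\log|P_{\hat{C}}^{\perp}\bm{V}(\bm{\theta})P_{\hat{C}}^{\perp}|_+ - (np)^{-1}\Tr[\bm{Y}^{\T}\bm{Y}\{P_{\hat{C}}^{\perp}\bm{V}(\bm{\theta})P_{\hat{C}}^{\perp}\}^{\dagger}]$ is, up to $O_P(n^{1/2}p^{-1/2} + n^{-1})$, a function of $p^{-1}\bm{E}^{\T}\bm{E}$ through its projection onto $\im(P_{\hat{C}}^{\perp})$. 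I would then Taylor expand at $\bm{\theta}=\bar{\bm{v}}$: the gradient $\nabla f_k(\bar{\bm{v}})$ has entries proportional to $n^{-1}\Tr[P_{\hat{C}}^{\perp}(p^{-1}\bm{E}^{\T}\bm{E}-\bar{\bm{V}})P_{\hat{C}}^{\perp}\bar{\bm{V}}^{-1}\bm{B}_j\bar{\bm{V}}^{-1}]$ plus a negligible signal contribution, which is $O_P(n^{1/2}p^{-1/2} + n^{-1})$ by Lemma \ref{lemma:supp:Preliminaries} (combined with the trace structure). The Hessian, meanwhile, equals $\hat{\bm{M}}$-type Gramians whose minimum eigenvalue is bounded below by $\alpha c^{-1}$ thanks to the check in Step \ref{item:EstC:ImageC}\ref{item:EstC:U}. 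Inverting the Hessian and plugging in the gradient bound yields $\|\hat{\bar{\bm{v}}}^{(k)} - \bar{\bm{v}}\|_2 = O_P(n^{1/2}p^{-1/2}+n^{-1})$ at that single $k$.

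The hard part is making this uniform across up to $\Theta(n)$ values of $k$. My plan is to exploit the sub-Gaussian tails in Assumption \ref{assumption:CandL}\ref{item:assumErrors:V} to obtain exponential concentration for the handful of trace statistics $n^{-1}\Tr[P_{\hat{C}}^{\perp}\bm{B}_j(p^{-1}\bm{E}^{\T}\bm{E}-\bar{\bm{V}})\bm{B}_r\cdots]$ that appear in the gradient and Hessian, and then take a union bound over $k\in\{c_{\max}+1,\ldots,K_{\max}\}$. Because $K_{\max} \leq \eta(n\wedge p)$ and the trace concentrations hold with probability exceeding $1-\exp\{-cn^{1/2}\}$ at scale $n^{1/2}p^{-1/2}$, the union bound costs only a polynomial factor, which is absorbed into the $O_P$. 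This yields uniform control of the per-iteration gradient and Hessian, and hence of the Taylor remainder, across all $k$ in the range.

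The main obstacle is ensuring that for large $k$ (potentially $k \asymp n$) the Hessian of $f_k$ stays bounded away from singularity uniformly, since $\dim\im(P_{\hat{C}}^{\perp}) = n-k$ could be as small as $(1-\eta)n$. This is precisely what the $\alpha$-check in Step \ref{item:EstC:ImageC}\ref{item:EstC:U} is designed to enforce: whenever the check passes we have $\Lambda_b(\hat{\bm{M}}) \geq \alpha c^{-1}$, so the Fisher information for $\bm{\theta}\in\mathbb{R}^b$ is uniformly well-conditioned across iterations where REML is invoked. Coupled with $b=O(1)$, this keeps the ``effective sample size'' on the order of $n-k \asymp n$, validating the Taylor expansion uniformly. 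Combining the base case, the inductive step and the uniform concentration yields the stated rate.
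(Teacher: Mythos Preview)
Your overall architecture---Taylor expand the REML objective, control the gradient via concentration of $p^{-1}\bm{E}^{\T}\bm{E}-\bar{\bm{V}}$, and control the Hessian via the $\alpha$-check in Step~\ref{item:EstC:ImageC}\ref{item:EstC:U}---matches the paper. But two aspects of your plan differ from the paper's argument, and one of them is a genuine gap.

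\textbf{The gap: concentration at a random projection.} You propose to obtain exponential concentration for the trace statistics $n^{-1}\Tr[P_{\hat{C}}^{\perp}\bm{B}_j(p^{-1}\bm{E}^{\T}\bm{E}-\bar{\bm{V}})\bm{B}_r\cdots]$ at each $k$ and then union-bound over $k$. The problem is that $P_{\hat{C}}^{\perp}$ is itself a function of $\bm{E}$ (it comes from the SVD of $\bm{Y}\hat{\bar{\bm{V}}}^{-1/2}$), so you cannot treat it as a fixed matrix when invoking Hanson--Wright--type tail bounds. The ``trace concentrations hold with probability exceeding $1-\exp\{-cn^{1/2}\}$'' claim is therefore not available as stated: the event you are union-bounding over is not a tail event for a fixed quadratic form. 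The paper sidesteps this entirely. After showing (via Corollary~\ref{corollary:supp:RemoveBlocks}) that the signal contribution to the REML objective is $O_P(n^{-1})$ uniformly in $k$, it invokes Lemma~\ref{lemma:supp:EnpLarge}, which controls $\sup_{\bm{Q}\in\mathcal{S}}\|\hat{\bm{\theta}}^{(Q)}-\bar{\bm{v}}\|_2$ over \emph{all} projections in $\mathcal{S}$ simultaneously by replacing $p^{-1}\bm{E}^{\T}\bm{E}$ with $\bar{\bm{V}}$ at the cost of a single operator-norm bound $\|p^{-1}\bm{E}^{\T}\bm{E}-\bar{\bm{V}}\|_2=O_P(n^{1/2}p^{-1/2})$. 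No union bound over $k$ is needed because the operator-norm event is projection-free.

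\textbf{The unnecessary induction.} Your induction on $k$ is also not needed. The paper's key observation is that the perturbation rates in Lemma~\ref{lemma:supp:vandzhat} and Corollary~\ref{corollary:supp:RemoveBlocks} depend on the incoming $\hat{\bar{\bm{V}}}$ only through $\phi_2=\|\hat{\bar{\bm{V}}}-\bar{\bm{V}}\|_2$, and $\phi_2$ is uniformly bounded because $\hat{\bar{\bm{v}}}\in\Theta_*$ (Assumption~\ref{assumption:FALCO}). Since $\phi_2/\lambda_r=o_P(1)$ for the diverging eigenvalues regardless of how accurate $\hat{\bar{\bm{V}}}$ is, the analysis decouples from the warm-start history, and one can treat all $k>c_{\max}$ at once.
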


\begin{proof}
Let $\bar{\bm{V}}_{(-f)} = (p-p_f)^{-1}\sum\limits_{g \notin \text{fold $f$}} \bm{V}_g$. Since $p_f,p-p_f \asymp p$ and $\norm{\bar{\bm{V}} -\bar{\bm{V}}_{(-f)} }_2 = O_P\left(p^{-1/2}\right)$, it suffices drop the subscript $(-f)$ and prove the lemma using the full data matrix $\bm{Y}$. Let $k \in \left\lbrace c_{\max}+1,\ldots,K_{\max} \right\rbrace$ and let $\hat{\bm{C}} \in \mathbb{R}^{n \times k}$ be an estimate for $\bm{C}$. Step \ref{item:EstC:ImageC}\ref{item:EstC:V} of Algorithm \ref{algorithm:EstC} then estimates $\bm{V}$ as $\hat{\bm{V}} = \sum_{j=1}^b \hat{\bar{\bm{v}}}_j \bm{B}_j$, where for $\bm{S}=p^{-1}\bm{S}^{\T}\bm{S}$, $\bm{V}\left(\bm{\theta}\right) = \sum_{j=1}^b \bm{\theta}_j \bm{B}_j$ and
\begin{align*}
    \hat{f}\left(\bm{\theta}\right) = -n^{-1}\log\left\lbrace \abs{ \bm{Q}_{\hat{C}}^{\T}\bm{V}\left(\bm{\theta}\right)\bm{Q}_{\hat{C}} } \right\rbrace - n^{-1}\Tr\left[ \bm{Q}_{\hat{C}}^{\T}\bm{S}\bm{Q}_{\hat{C}} \left\lbrace \bm{Q}_{\hat{C}}^{\T}\bm{V}\left(\bm{\theta}\right)\bm{Q}_{\hat{C}} \right\rbrace^{-1} \right],
\end{align*}
$\hat{\bar{\bm{v}}} = \argmax_{\bm{\theta} \in \Theta_*}\hat{f}\left(\bm{\theta}\right)$. If $K=0$ or $\limsup_{n,p \to \infty} \lambda_1 < \infty$, then
\begin{align*}
    \hat{f}\left(\bm{\theta}\right) = -n^{-1}\log\left\lbrace \abs{ \bm{Q}_{\hat{C}}^{\T}\bm{V}\left(\bm{\theta}\right)\bm{Q}_{\hat{C}} } \right\rbrace - n^{-1}\Tr\left[ \bm{Q}_{\hat{C}}^{\T}\left(p^{-1}\bm{E}^{\T}\bm{E}\right)\bm{Q}_{\hat{C}} \left\lbrace \bm{Q}_{\hat{C}}^{\T}\bm{V}\left(\bm{\theta}\right)\bm{Q}_{\hat{C}} \right\rbrace^{-1} \right] + O_P\left(n^{-1}\right)
\end{align*}
uniformly for $\bm{\theta} \in \Theta_*$. If $\limsup_{n,p \to \infty} \lambda_1 = \infty$, define $c_{\min}$ to be such that $\limsup_{n,p \to \infty} \lambda_{c_{\min}} = \infty$ but $\limsup_{n,p \to \infty} \lambda_{c_{\min}+1} < \infty$, where $\lambda_{K+1} = 0$. Then $\bm{Q}_{\hat{C}} = \bm{Q}_{\min} \bm{U}$, where the columns of $\bm{Q}_{\min} \in \mathbb{R}^{n \times c_{\min}}$ form an orthonormal basis for $\ker\left\lbrace \left(\hat{\bm{C}}_{\bigcdot 1} \cdots \hat{\bm{C}}_{\bigcdot c_{\min}}\right)^{\T} \right\rbrace$ and $\bm{U} \in \mathbb{R}^{(n-c_{\min}) \times (n-k)}$ has orthonormal columns. By Corollary \ref{corollary:supp:RemoveBlocks} and the proof of Lemma S7 in \citet{CorrConf},
\begin{align*}
     \hat{f}\left(\bm{\theta}\right) =& -n^{-1}\log\left\lbrace \abs{ \bm{U}^{\T}\bm{Q}_{\min}^{\T}\bm{V}\left(\bm{\theta}\right)\bm{Q}_{\min}\bm{U} } \right\rbrace\\
     &- n^{-1}\Tr\left[ \bm{U}^{\T}\bm{Q}_{\min}^{\T}\left(p^{-1}\bm{E}^{\T}\bm{E}\right)\bm{Q}_{\min}\bm{U} \left\lbrace \bm{U}^{\T}\bm{Q}_{\min}^{\T}\bm{V}\left(\bm{\theta}\right)\bm{Q}_{\min}\bm{U} \right\rbrace^{-1} \right] + O_P\left(n^{-1}\right)
\end{align*}
uniformly for $\bm{\theta} \in \Theta_*$ and $k \in \left\lbrace c_{\max}+1,\ldots,K_{\max} \right\rbrace$. The latter follows from the fact that all rates in Lemma \ref{lemma:supp:vandzhat} and Corollary \ref{corollary:supp:RemoveBlocks} only depend on $k$ through $\phi_2$, which is uniformly bounded from above and satisfies $\phi_2/\lambda_r = o_P(1)$ for $r \leq c_{\min}$. An identical analysis can be used to show that we can replace $\bm{S}$ in $\nabla_{\bm{\theta}}\hat{f}\left(\bm{\theta}\right)$ and $\nabla_{\bm{\theta}}^2\hat{f}\left(\bm{\theta}\right)$ with $p^{-1}\bm{E}^{\T}\bm{E}$ at the cost of $O_P\left(n^{-1}\right)$. The result then follows by Lemma \ref{lemma:supp:EnpLarge}.
\end{proof}

\begin{proof}[Proof of Theorem \ref{theorem:Khat}]
Fix some fold $f$ and define $\bar{\bm{V}}_{f}$ to be the analogue of $\bar{\bm{V}}$ for fold $f$, and let $\delta_f^2 = \abs{\bar{\bm{V}}_{f}}^{1/n}$. Let $\bm{\pi} : [p] \to [p]$ be a permutation sampled uniformly from the set of all permutations on $[p]$. All conditional expectations and variances calculated below are with reference to the sigma algebra $\sigma\left( \bm{Y}_{(-f)}, \bm{\pi},\bm{Q}\right)$, where $\bm{Y}_{f} = \bm{L}_f\bm{C}^T + \bm{E}_f \in \mathbb{R}^{p_f \times n}$ is the test data used to estimate $\bm{L}_f$ and evaluate the loss and $\bm{Y}_{(-f)} = \bm{L}_{(-f)}\bm{C}^T + \bm{E}_{(-f)} \in \mathbb{R}^{(p-p_f) \times n}$ is the training data used to estimate $\bm{C}$ and $\bar{\bm{V}}$.\par
\indent Assumption \ref{assumption:CandL} implies
\begin{align*}
    \norm{ \bar{\bm{V}} - \bar{\bm{V}}_{f} }_2, \, \norm{ \bar{\bm{V}} - \bar{\bm{V}}_{(-f)} }_2 = O_P\left(p^{-1/2}\right).
\end{align*}
Therefore, by Lemma \ref{lemma:sup:LtLRandom},
\begin{align*}
    \Lambda_k\left\lbrace \bar{\bm{V}}_{f}^{-1/2}\bm{C}\left( p^{-1}\bm{L}_{f}^{\T}\bm{L}_{f}\right)\bm{C}^{\T}\bar{\bm{V}}_{(-f)}^{-1/2} \right\rbrace,\,\Lambda_k\left[ \bar{\bm{V}}_{(-f)}^{-1/2}\bm{C}\left\lbrace p^{-1}\bm{L}_{(-f)}^{\T}\bm{L}_{(-f)}\right\rbrace\bm{C}^{\T}\bar{\bm{V}}_{(-f)}^{-1/2} \right] = \gamma_k\left\lbrace 1+o_P(1) \right\rbrace + o_P(1)
\end{align*}
for all $k \in [K]$ as $n,p \to \infty$, where $\gamma_1,\ldots,\gamma_K$ are as defined in Assumption \ref{assumption:CandL}. Therefore, the results of Lemmas \ref{lemma:supp:UpperBlock}, \ref{lemma:supp:vandzhat} and Corollary \ref{corollary:supp:VAlgorithm} when we substitute $\bm{Y}$ with the training data $\bm{Y}_{(-f)}$.\par
\indent Let $\hat{\bm{\bar{C}}} = \bm{Q}^{\T}\hat{\bm{V}}_{(-f)}^{-1/2}\hat{\bm{C}} \in \mathbb{R}^{n \times k}$ be as defined in Algorithm \ref{algorithm:CBCV}. Lemma \ref{lemma:supp:LOOXVColumnSpace} and Remark \ref{remark:supp:ColumnSpace} show that it suffices to assume the columns of $\hat{\bm{\bar{C}}}$ are the first $k$ right singular vectors of $\bar{\bm{Y}}_{(-f)}$, where $n^{-1}\hat{\bm{\bar{C}}}^{\T}\hat{\bm{\bar{C}}}=I_k$, and that $n^{-1}\bm{C}^{\T}\hat{\bm{V}}_{(-f)}^{-1}\bm{C}=I_K$ and $\bm{L}_{(-f)}^{\T}\bm{L}_{(-f)}$ is a diagonal matrix with non-decreasing entries. We will let $\hat{\bar{h}}_i$, $i \in [n]$, be the $i$th leverage score of $\hat{\bm{\bar{C}}}$ throughout the proof. Note that $\hat{\bar{h}}_i$ is implicitly a function of $k \in \left[K_{\max}\right]$. Since $\bm{Q} \in \mathbb{R}^{n \times n}$ is sampled uniformly from the space of all unitary matrices, setting $\epsilon = n^{-1/3}$ in Lemma \ref{lemma:GaussLeverageScores} implies
\begin{align*}
    \max_{k\in\left[K_{\max}\right]}\left\lbrace\left( \max_{i \in [n]} \abs{\hat{\bar{h}}_i - k/n}\right)\right\rbrace = o_P(1)
\end{align*}
as $n \to \infty$. Let $\hat{\bar{\bm{C}}}_{(-i)} \in \mathbb{R}^{(n-1)\times k}$ be the sub-matrix of $\hat{\bar{\bm{C}}}$ with the $i$th row removed and define $\hat{\bar{\bm{H}}} = P_{\hat{\bar{C}}}$. Two useful equalities to be used throughout the proof are
\begin{align*}
&\left\lbrace \hat{\bar{\bm{C}}}_{(-i)}^T \hat{\bar{\bm{C}}}_{(-i)} \right\rbrace^{-1} \hat{\bar{\bm{C}}}_{i \bigcdot} = \frac{1}{1-\hat{\bar{h}}_i}\left( \hat{\bar{\bm{C}}}^T \hat{\bar{\bm{C}}} \right)^{-1} \hat{\bar{\bm{C}}}_{i \bigcdot}, \quad i \in [n]\\
&\hat{\bar{\bm{C}}}\left\lbrace \hat{\bar{\bm{C}}}_{(-i)}^T \hat{\bar{\bm{C}}}_{(-i)} \right\rbrace^{-1} \hat{\bar{\bm{C}}}_{i \bigcdot} = \frac{1}{1-\hat{\bar{h}}_i} \hat{\bar{\bm{H}}}_{\bigcdot i}, \quad i \in [n].
\end{align*}

Since $\hat{\bar{\bm{C}}}$ is invariant to the scale of $\hat{\bm{V}}_{(-f)}$ and we normalize the loss in \eqref{equation:LOOLoss} by $\abs{\hat{\bm{V}}_{(-f)}}^{1/n}$, it suffices to assume we have already scaled $\abs{\hat{\bm{V}}_{(-f)}}^{1/n}$ so that $\abs{\hat{\bm{V}}_{(-f)}} = 1$. Define $\bar{\bm{C}} = \bm{Q}^{\T}\hat{\bm{V}}_{(-f)}^{-1/2}\bm{C}$. A scaled version of the loss in \eqref{equation:LOOLoss} for fold $f$ can then be expressed as
\begin{align}
g(k) =& \frac{1}{\delta_{f}^2 p_f}\cv_{f}\left( k \right) = \frac{1}{\delta_{f}^2 p_f}\sum\limits_{i=1}^n \norm{\bar{\bm{Y}}_{f_{\bigcdot i}} - \hat{\bm{L}}_{f,(-i)} \hat{\bar{\bm{C}}}_{i \bigcdot}}_2^2 = \underbrace{\frac{1}{\delta_{f}^2 p_f} \sum\limits_{i=1}^n\norm{\bm{L}_f \bar{\bm{C}}_{i \bigcdot} - \hat{\bm{L}}_{f,(-i)} \hat{\bar{\bm{C}}}_{i \bigcdot}}_2^2}_{g_1(k)}\nonumber\\
\label{equation:Supp:LOOXV}
& + \underbrace{\frac{1}{\delta_{f}^2 } \Tr\left\lbrace\hat{\bm{V}}_{(-f)}^{-1}\left( p_f^{-1}\bm{E}_f^T \bm{E}_f \right)\right\rbrace}_{g_2(k)} + \underbrace{\frac{2}{\delta_{f}^2 }\sum\limits_{i=1}^n p_f^{-1} \left( \bm{L}_f \bar{\bm{C}}_{i \bigcdot} - \hat{\bm{L}}_{f,(-i)} \hat{\bar{\bm{C}}}_{i \bigcdot} \right)^T \bar{\bm{E}}_f \hat{\bar{\bm{V}}}_{(-f)}^{-1/2} \bm{a}_i}_{g_3(k)},
\end{align}
where $\bar{\bm{E}}_f = \bm{E}_f \bm{Q}$, $\hat{\bar{\bm{V}}}_{(-f)} = \bm{Q}^{\T}\hat{\bm{V}}_{(-f)}\bm{Q}$ and $\bm{a}_i \in \mathbb{R}^{n}$, $i \in [n]$, is the standard basis vector with 1 in the $i$th position and zeros everywhere else. For the remainder of the proof, define $\bm{A}_{(-i)} = I_n - \bm{a}_i\bm{a}_i^{\T}$ for $i \in [n]$. Let $\epsilon > 0$ be an arbitrarily small constant. As we did in Section \ref{subsection:supp:EigenPreliminaries}, let $k_0 = 0$ and define $k_j$ inductively as
\begin{align*}
    k_j = \min\left( \left\lbrace r \in \left\lbrace k_{j-1}+1,\ldots,K \right\rbrace: \gamma_r/\gamma_{r+1} \geq 1+\epsilon \right\rbrace \right), \quad j \in [J]
\end{align*}
where $k_J = K$. By the assumptions of Theorem \ref{theorem:Khat}, there exists an $s \in \left\lbrace 0,1,\ldots,K \right\rbrace$ such that $k_{j_s} = s$ for some $j_s \in \left\lbrace 0,1,\ldots,J \right\rbrace$ and $\gamma_{s+1} \leq \delta^2-\epsilon$. We derive the asymptotic properties of $g_1, g_2$ and $g_3$ in the three lemmas below.

\begin{lemma}
\label{lemma:supp:CBCV:g1}
If the assumptions in the statement of Theorem \ref{theorem:Khat} hold, then there exists a large constant $c_{\max} > K$ and another unrelated constant $\sigma > 0$ that do not depend on $n$ or $p$ such that
\begin{align*}
    &g_1(k) \begin{cases}
    \geq k+\sum\limits_{r=k+1}^K \gamma_r\left\lbrace 1 + O_P\left(n^{1/2}p^{-1/2} + n^{-1/2}\right) \right\rbrace & \text{if $k < s$}\\
    = k + O_P\left(np^{-1} + n^{-1/2}\right) + I\left(k < K\right)\sum\limits_{r=k+1}^K \gamma_r\left\lbrace 1 + O_P\left(n^{1/2}p^{-1/2} + n^{-1/2}\right) \right\rbrace & \text{if $k \in \left\lbrace s,\ldots,c_{\max} \right\rbrace$}\\
    \geq \sigma k\left(1+x_{k}\right) & \text{if $k \in \left\lbrace c_{\max}+1,\ldots,K_{\max} \right\rbrace$}
    \end{cases}\\
    &\max_{k \in \left\lbrace c_{\max}+1,\ldots,K_{\max} \right\rbrace} \abs{k^{1/2}x_{k}} = O_P(1)
\end{align*}
as $n,p \to \infty$.
\end{lemma}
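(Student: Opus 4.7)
The plan is to start from the closed-form leave-one-out identity for OLS residuals. Since $\hat{\bm{L}}_{f,(-i)}$ is the OLS coefficient obtained after deleting column $i$ of $\bar{\bm{Y}}_f$ and the $i$-th row of $\hat{\bar{\bm{C}}}$, a direct Sherman--Morrison computation yields $\bar{\bm{Y}}_{f_{\bigcdot i}} - \hat{\bm{L}}_{f,(-i)} \hat{\bar{\bm{C}}}_{i \bigcdot} = (1-\hat{\bar{h}}_i)^{-1}\bar{\bm{Y}}_f (\bm{a}_i - \hat{\bar{\bm{H}}}_{\bigcdot i})$. Writing $\bar{\bm{Y}}_f = \bm{L}_f \bar{\bm{C}}^\T + \tilde{\bm{E}}_f$ with $\tilde{\bm{E}}_f = \bar{\bm{E}}_f \hat{\bar{\bm{V}}}_{(-f)}^{-1/2}$, and adding and subtracting $\tilde{\bm{E}}_{f_{\bigcdot i}}$, I would decompose
\begin{align*}
\bm{L}_f \bar{\bm{C}}_{i \bigcdot} - \hat{\bm{L}}_{f,(-i)}\hat{\bar{\bm{C}}}_{i \bigcdot} = \bm{T}_{1,i} + \bm{T}_{2,i} + \bm{T}_{3,i},
\end{align*}
where $\bm{T}_{1,i} = (1-\hat{\bar{h}}_i)^{-1}\bm{L}_f\bar{\bm{C}}^\T(\bm{a}_i - \hat{\bar{\bm{H}}}_{\bigcdot i})$ is the unfitted signal, $\bm{T}_{2,i} = \hat{\bar{h}}_i (1-\hat{\bar{h}}_i)^{-1}\tilde{\bm{E}}_{f_{\bigcdot i}}$ is a leverage-inflated self-term, and $\bm{T}_{3,i} = -(1-\hat{\bar{h}}_i)^{-1}\tilde{\bm{E}}_f \hat{\bar{\bm{H}}}_{\bigcdot i}$ captures the fitted noise. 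Row-independence of $\bm{E}$ from Assumption~\ref{item:Khat:Enp} makes $\tilde{\bm{E}}_f$ independent of $(\hat{\bar{\bm{C}}},\hat{\bm{V}}_{(-f)},\bm{Q})$, and the Haar-uniformity of $\bm{Q}$ together with the leverage-score concentration invoked in Theorem~\ref{theorem:Khat} yields $\max_{k \leq K_{\max}}\max_i |\hat{\bar{h}}_i - k/n| = o_P(1)$, so the weights $(1-\hat{\bar{h}}_i)^{-1}$ can be replaced by $(1-k/n)^{-1}$ uniformly up to negligible errors.

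For $k \leq c_{\max}$, I would apply Lemmas~\ref{lemma:supp:UpperBlock} and \ref{lemma:supp:vandzhat} to the training block to show that $\im(\hat{\bar{\bm{C}}})$ asymptotically contains the top $\min(k,s)$ signal directions of $\bar{\bm{C}}$ (those with $\gamma_r > \delta^2$) plus at most $(k-\min(k,s))_+$ noise directions, with the eigenvalue corrections of size $O_P(n/p + n^{-1})$ guaranteed by Corollary~\ref{corollary:supp:VAlgorithm}. Combined with Lemma~\ref{lemma:sup:LtLRandom}, which gives $np_f^{-1}(\bm{L}_f^\T\bm{L}_f)_{rr} = \gamma_r\{1+O_P((n/p_f)^{1/2})\} + O_P((n/p_f)^{1/2})$, and the normalization $n^{-1}\bm{C}^\T \hat{\bm{V}}_{(-f)}^{-1}\bm{C}=I_K$ from Remark~\ref{remark:supp:ColumnSpace}, a direct trace computation yields
\begin{align*}
\frac{1}{\delta_f^2 p_f}\sum_i \norm{\bm{T}_{1,i}}_2^2 = \{1+o_P(1)\}\sum_{r=k+1}^K \gamma_r\{1+O_P((n/p)^{1/2}+n^{-1/2})\},
\end{align*}
as an equality for $k\in[s,c_{\max}]$ and a lower bound for $k<s$. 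Since $\tilde{\bm{E}}_f$ is independent of $\hat{\bar{\bm{H}}}$ with sub-Gaussian rows of variance proxy of order $\delta_f^2 I_n$, a Hanson--Wright deviation bound (from Lemma~\ref{lemma:supp:Preliminaries}) gives $\sum_i \norm{\bm{T}_{3,i}}_2^2/(\delta_f^2 p_f) = k\{1+O_P(n^{-1/2}+(n/p)^{1/2})\}$; meanwhile $\sum_i \norm{\bm{T}_{2,i}}_2^2/(\delta_f^2 p_f) = O_P(k^2/n)$, and all cross terms are $o_P(1)$ by Cauchy--Schwarz together with the independence of $\tilde{\bm{E}}_f$ from everything else. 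Summing yields the stated equality for $k \in [s,c_{\max}]$ and the one-sided bound for $k<s$ (since further positive cross terms only inflate $g_1$).

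For $k \in \{c_{\max}+1,\ldots,K_{\max}\}$, Lemma~\ref{lemma:supp:VLargek} gives $\norm{\hat{\bm{V}}_{(-f)}-\bar{\bm{V}}}_2 = O_P(n^{1/2}p^{-1/2}+n^{-1})$ uniformly in $k$, so the rows of $\tilde{\bm{E}}_f$ remain approximately isotropic with variance $\delta_f^2$. Because $K_{\max} = \lceil \eta(n\wedge p)\rceil$ with $\eta \in (c^{-1},1-c^{-1})$, the uniform leverage bound forces $(1-\hat{\bar{h}}_i)^{-1}$ into a fixed compact interval, so discarding $\sum_i\norm{\bm{T}_{1,i}}_2^2 \geq 0$ and bounding the $\bm{T}_2$ and cross-term contributions by Cauchy--Schwarz reduces everything to showing
\begin{align*}
\frac{1}{\delta_f^2 p_f}\norm{\tilde{\bm{E}}_f \hat{\bar{\bm{H}}}}_F^2 \geq k\{1+x_k\}, \qquad \max_{k}|k^{1/2}x_k| = O_P(1),
\end{align*}
uniformly in $k$. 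The hard part will be this last display, because the same noise matrix $\tilde{\bm{E}}_f$ is used across all $k$ and a fresh concentration inequality per $k$ would cost a logarithmic factor that the claimed rate $k^{1/2}x_k = O_P(1)$ cannot absorb. I would resolve this by exploiting the decoupling just noted: conditionally on the training data, $\Tr(\hat{\bar{\bm{H}}}\tilde{\bm{E}}_f^\T \tilde{\bm{E}}_f)$ is a quadratic form in $\tilde{\bm{E}}_f$ with a fixed rank-$k$ weight matrix, so a single Hanson--Wright tail bound combined with a union bound over the at most $n$ values of $k$ gives the required uniform rate; taking $\sigma$ smaller than $(1-\eta)^{-2}/2$ then delivers the third case of the lemma.
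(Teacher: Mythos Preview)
Your overall strategy matches the paper's: the same Sherman--Morrison leave-one-out identity, the same three-way split into unfitted signal, fitted noise, and cross terms, and the same ingredients (Haar leverage concentration, Lemma~\ref{lemma:sup:LtLRandom} for the signal trace, Hanson--Wright for the noise quadratic, Lemma~\ref{lemma:supp:VLargek} for the uniform variance estimate). The paper writes the noise piece as a single term $-(1-\hat{\bar h}_i)^{-1}\tilde{\bm E}_f\bm A_{(-i)}\hat{\bar{\bm H}}_{\bigcdot i}$; your $\bm T_{2,i}+\bm T_{3,i}$ is exactly this object, just regrouped.

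There is, however, one genuine gap in the large-$k$ regime. When $k\asymp \eta n$, your self-term satisfies $(\delta_f^2 p_f)^{-1}\sum_i\lVert\bm T_{2,i}\rVert^2\approx \eta k(1-\eta)^{-2}$, and the $\bm T_2$--$\bm T_3$ cross term has \emph{nonzero} conditional mean of order $-2\eta k(1-\eta)^{-2}$. Bounding that cross term by Cauchy--Schwarz gives only $\lvert\cdot\rvert\leq 2\sqrt{\eta}\,k(1-\eta)^{-2}$, so your reduction to $\lVert\tilde{\bm E}_f\hat{\bar{\bm H}}\rVert_F^2$ yields a lower bound proportional to $(1-2\sqrt{\eta})k$, which is useless once $\eta>1/4$; the hypotheses of Theorem~\ref{theorem:Khat} allow $\eta$ arbitrarily close to $1$. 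The paper sidesteps this entirely by never splitting $\bm T_2$ from $\bm T_3$: keeping the weight vector as $\bm A_{(-i)}\hat{\bar{\bm H}}_{\bigcdot i}=\hat{\bar{\bm H}}_{\bigcdot i}-\hat{\bar h}_i\bm a_i$, one has the exact identity $\Tr\bigl[\sum_i(1-\hat{\bar h}_i)^{-1}\bm A_{(-i)}\hat{\bar{\bm H}}_{\bigcdot i}\hat{\bar{\bm H}}_{\bigcdot i}^{\T}\bm A_{(-i)}\bigr]=k$, which, combined with the uniform lower bound on the eigenvalues of $\hat{\bar{\bm V}}_{(-f)}^{-1/2}\tilde{\bm V}_f\hat{\bar{\bm V}}_{(-f)}^{-1/2}$, gives $\E[g_{12}(k)\mid\text{training}]\geq\sigma k$ for \emph{every} $\eta<1$, with no Cauchy--Schwarz loss.

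A second, smaller point: for the equality case $k\in[s,c_{\max}]$, ``Cauchy--Schwarz together with independence'' for the $\bm T_1$--noise cross is not quite the right description. A direct Cauchy--Schwarz bound on $\sum_i\bm T_{1,i}^{\T}(\bm T_{2,i}+\bm T_{3,i})$ gives only $O_P(1)$, which would swamp the $O_P(np^{-1}+n^{-1/2})$ error the lemma claims. What actually works (and what the paper does in its $g_{13}$ analysis) is to observe that this cross term is \emph{linear} in $\tilde{\bm E}_f$ conditional on the training data, compute its sub-Gaussian norm via the rank-$K$ structure of $\bm L_f$, and conclude it is $O_P(p^{-1/2})$; the same device, with a union bound, also delivers the uniform control over $k>c_{\max}$ that your last paragraph needs.
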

\begin{proof}
$g_1(k)$ can be expressed as
\begin{align}
g_1(k)=&\frac{1}{\delta_{f}^2 p_f} \sum\limits_{i=1}^n\norm{\bm{L}_f \bar{\bm{C}}_{i \bigcdot} - \hat{\bm{L}}_{f,(-i)} \hat{\bar{\bm{C}}}_{i \bigcdot}}_2^2 = \frac{1}{\delta_{f}^2 p_f}\sum\limits_{i=1}^n\norm{ \frac{\bm{L}_f \bar{\bm{C}}_{i \bigcdot} - \bm{L}_f \bar{\bm{C}}^T \hat{\bar{\bm{C}}}\left( \hat{\bar{\bm{C}}}^T\hat{\bar{\bm{C}}} \right)^{-1} \hat{\bar{\bm{C}}}_{i \bigcdot}}{\left(1-\hat{\bar{h}}_i\right)} }_2^2 \nonumber\\
&+ \sum\limits_{i=1}^n\left( \frac{1}{1-\hat{\bar{h}}_i} \right)^2 \hat{\bar{\bm{H}}}_{\bigcdot i}^T \bm{A}_{(-i)} \hat{\bar{\bm{V}}}_{(-f)}^{-1/2}\left( \delta_{f}^{-2}p_f^{-1}\bar{\bm{E}_f}^T \bar{\bm{E}_f}\right)\hat{\bar{\bm{V}}}_{(-f)}^{-1/2}\bm{A}_{(-i)}\hat{\bar{\bm{H}}}_i \nonumber\\
\label{equation:Supp:LOO:Term1}
&-\frac{2}{\delta_{f}^2 n^{1/2} p_{f}^{1/2}} \sum\limits_{i=1}^n \left( \frac{1}{1-\hat{\bar{h}}_i}\right)^2\hat{\bar{\bm{H}}}_{\bigcdot i}^T \bm{A}_{(-i)} \hat{\bar{\bm{V}}}_{(-f)}^{-1/2} \bar{\bm{E}}_f^T \bm{\Delta}_i
\end{align}
where
\begin{align}
\label{equation:Supp:LOOXV:Deltai}
\bm{\Delta}_i =& \left( n^{1/2}p^{-1/2}\bm{L}_f \right)\bar{\bm{C}}_{i \bigcdot} - \left( n^{1/2}p^{-1/2}\bm{L}_f \right) \bar{\bm{C}}^T \hat{\bar{\bm{C}}}\left(\hat{\bar{\bm{C}}}^T \hat{\bar{\bm{C}}} \right)^{-1}\hat{\bar{\bm{C}}}_{i \bigcdot} = \tilde{\bm{L}}_f\bar{\bm{C}}_{i \bigcdot} - \tilde{\bm{L}}_f\bar{\bm{C}}^T \hat{\bar{\bm{C}}}\left(\hat{\bar{\bm{C}}}^T \hat{\bar{\bm{C}}} \right)^{-1}\hat{\bar{\bm{C}}}_{i \bigcdot}.
\end{align}
We derive the asymptotic properties of the three terms in  \eqref{equation:Supp:LOO:Term1} in (a), (b) and (c) below.
\begin{enumerate}[label=(\alph*)]
    \item Let $\hat{\bm{v}}_{(-f)_r}, \hat{\bm{z}}_{(-f)_r}$, estimated using $\bm{Y}_{(-f)}$, be the analogues of $\hat{\bm{v}}_r,\hat{\bm{z}}_r$, $r \in [K]$, defined in Lemma \ref{lemma:supp:vandzhat}. Similarly, let $\hat{\bm{V}}_{(-f)_j}, \hat{\bm{Z}}_{(-f)_j}$ be the analogues of $\hat{\bm{V}}_j,\hat{\bm{Z}}_j$, $j \in [J]$, defined in Lemma \ref{lemma:supp:vandzhat}. Let
    \begin{align*}
        g_{11}(k)=&\frac{1}{\delta_{f}^2 p_f}\sum\limits_{i=1}^n\norm{ \frac{\bm{L}_f \bar{\bm{C}}_{i \bigcdot} - \bm{L}_f \bar{\bm{C}}^T \hat{\bar{\bm{C}}}\left( \hat{\bar{\bm{C}}}^T\hat{\bar{\bm{C}}} \right)^{-1} \hat{\bar{\bm{C}}}_{i \bigcdot}}{\left(1-\hat{\bar{h}}_i\right)} }_2^2
    \end{align*}
    and define
    \begin{align*}
        \alpha_{-} &= \left\lbrace k/n - \max_{k \in [K_{\max}]}\left(\max_{i \in [n]} \abs{k/n - \hat{\bar{h}}_i}\right) \right\rbrace \vee 0\\
        \alpha_{+} &= \left\lbrace k/n + \max_{k \in [K_{\max}]}\left(\max_{i \in [n]} \abs{k/n - \hat{\bar{h}}_i}\right) \right\rbrace \wedge 1,
    \end{align*}
    where $\alpha_{-},\alpha_{+}=k/n + o_P\left(n^{1/2-\epsilon}\right)$ for any constant $\epsilon > 0$ as $n,p \to \infty$ by Lemma \ref{lemma:GaussLeverageScores}. Then
    \begin{align*}
        \delta_{f}^{-2}\left(1-\alpha_-\right)^{-2} \Tr\left\lbrace np_f^{-1}\bm{L}_f^T\bm{L}_f \left( n^{-1}\bar{\bm{C}}^T P_{\hat{\bar{\bm{C}}}}^{\perp}\bar{\bm{C}}\right)\right\rbrace &\leq g_{11}(k)\\
        &\leq  \delta_{f}^{-2}\left(1-\alpha_+\right)^{-2} \Tr\left\lbrace np_f^{-1}\bm{L}_f^T\bm{L}_f \left( n^{-1}\bar{\bm{C}}^T P_{\hat{\bar{\bm{C}}}}^{\perp}\bar{\bm{C}}\right)\right\rbrace
    \end{align*}
    where
    \begin{align*}
        \Tr\left\lbrace np_f^{-1}\bm{L}_f^T\bm{L}_f \left( n^{-1}\bar{\bm{C}}^T P_{\hat{\bar{\bm{C}}}}^{\perp}\bar{\bm{C}}\right)\right\rbrace = \Tr\left[ np_{f}^{-1}\bm{L}_f^T \bm{L}_f\left\lbrace I_K - \sum\limits_{r=1}^k \hat{\bm{v}}_{(-f)_r}\hat{\bm{v}}_{(-f)_r}^{\T} \right\rbrace \right].
    \end{align*}
    Since the eigenvalues of $\sum\limits_{r=1}^k \hat{\bm{v}}_{(-f)_r}\hat{\bm{v}}_{(-f)_r}^{\T}$ are bounded between 0 and 1,
    \begin{align*}
        g_{11}(k) \geq I(k < K)\delta_f^{-2}\left(\sum\limits_{r=k+1}^K \gamma_{r} \right)\left\lbrace 1+O_p(n^{1/2}p^{-1/2}) \right\rbrace
    \end{align*}
    by Lemma \ref{lemma:sup:LtLRandom}. Further for any $j \in [J]$ such that $\limsup_{n,p \to \infty}\lambda_{k_j} < \infty$,
    \begin{align*}
        \Tr\left[ np_{f}^{-1}\bm{L}_f^T \bm{L}_f\left\lbrace I_K - \sum\limits_{r=1}^{k_j} \hat{\bm{v}}_{(-f)_r}\hat{\bm{v}}_{(-f)_r}^{\T} \right\rbrace \right] =& O_P\left(np^{-1} + n^{-1}\right)\\
        &+ I(k_j < K)\left\lbrace 1+O_P\left(n^{1/2}p^{-1/2}\right) \right\rbrace\sum\limits_{r=k_j+1}^K \gamma_r
    \end{align*}
    by Lemmas \ref{lemma:supp:vandzhat} and \ref{lemma:sup:LtLRandom}. This also shows that for any $s\leq k < \tilde{k}$,
    \begin{align*}
        g_{11}(k) - g_{11}(\tilde{k}) \leq \delta_f^{-2}\gamma_{s+1} + O_P\left(n^{1/2}p^{-1/2}\right)I\left(\tilde{k} \leq K\right).
    \end{align*}
    Putting this altogether gives us
    \begin{align*}
        &\delta_f^{2} g_{11}(k) \begin{cases}
        \geq 0 & \text{if $k > K$}\\
        \geq \left(\sum\limits_{r=k+1}^K \gamma_{r} \right)\left\lbrace 1+O_p(n^{1/2}p^{-1/2}) \right\rbrace & \text{if $k \in \left\lbrace 0,1,\ldots,K \right\rbrace \setminus \{s\}$}\\
        = O_P\left(np^{-1} + n^{-1}\right) + I\left(s < K\right)\left(\sum\limits_{r=s+1}^K \gamma_r\right)\left\lbrace 1+O_P\left(n^{1/2}p^{-1/2}\right) \right\rbrace & \text{if $k=s$}
        \end{cases}\\
        &\delta_f^2\left\lbrace g_{11}(k) - g_{11}(k+1) \right\rbrace \leq \gamma_{s+1} + O_P\left(n^{1/2}p^{-1/2}\right), \quad k \in \left\lbrace s,\ldots,K \right\rbrace,
    \end{align*}
    where $\gamma_{K+1} = 0$.
\item For notational simplicity, I will assume, without loss of generality, that $\delta_f^2 = 1$ (i.e. $\abs{\bar{\bm{V}}_{f}}=1$).
\begin{align*}
    g_{12}(k) = \sum\limits_{i=1}^n\left( \frac{1}{1-\hat{\bar{h}}_i} \right)^2 \hat{\bar{\bm{H}}}_{\bigcdot i}^T \bm{A}_{(-i)} \hat{\bar{\bm{V}}}_{(-f)}^{-1/2}\left( p_f^{-1}\bar{\bm{E}_f}^T \bar{\bm{E}_f}\right)\hat{\bar{\bm{V}}}_{(-f)}^{-1/2}\bm{A}_{(-i)}\hat{\bar{\bm{H}}}_i.
\end{align*}
First,
\begin{align*}
    \E\left\lbrace g_{12}(k) \mid \bm{Y}_{(-f)},\bm{\pi},\bm{Q} \right\rbrace = \sum\limits_{i=1}^n\left( \frac{1}{1-\hat{\bar{h}}_i} \right)^2 \hat{\bar{\bm{H}}}_{\bigcdot i}^T \bm{A}_{(-i)} \hat{\bar{\bm{V}}}_{(-f)}^{-1/2}\tilde{\bm{V}}_f\hat{\bar{\bm{V}}}_{(-f)}^{-1/2}\bm{A}_{(-i)}\hat{\bar{\bm{H}}}_i,
\end{align*}
where $\tilde{\bm{V}}_{f} = \bm{Q}^{\T}\bar{\bm{V}}_f \bm{Q}$. Note that because $\bm{Q}$ is a unitary matrix, $\abs{\tilde{\bm{V}}_{f}}=1$. We then see that
\begin{align*}
    \E\left\lbrace g_{12}(k) \mid \bm{Y}_{(-f)},\bm{\pi},\bm{Q} \right\rbrace \geq &\left(1-\alpha_-\right)^{-1} \sum\limits_{i=1}^n\left(1-\hat{\bar{h}}_i\right)^{-1} \hat{\bar{\bm{H}}}_{\bigcdot i}^T \bm{A}_{(-i)} \hat{\bar{\bm{V}}}_{(-f)}^{-1/2}\tilde{\bm{V}}_f\hat{\bar{\bm{V}}}_{(-f)}^{-1/2}\bm{A}_{(-i)}\hat{\bar{\bm{H}}}_i\\
    =& \left(1-\alpha_-\right)^{-1} \Tr\left\lbrace \hat{\bar{\bm{V}}}_{(-f)}^{-1/2}\tilde{\bm{V}}_f\hat{\bar{\bm{V}}}_{(-f)}^{-1/2} \sum\limits_{i=1}^n \left(1-\hat{\bar{h}}_i\right)^{-1} \bm{A}_{(-i)}\hat{\bar{\bm{H}}}_i \hat{\bar{\bm{H}}}_i^T \bm{A}_{(-i)} \right\rbrace
\end{align*}
where $\sum\limits_{i=1}^n \left( 1-\hat{\bar{h}}_i\right)^{-1} \bm{A}_{(-i)}\hat{\bar{\bm{H}}}_i \hat{\bar{\bm{H}}}_i^T \bm{A}_{(-i)}$ is positive semi-definite with
\begin{align*}
\Tr\left\lbrace \sum\limits_{i=1}^n \left( 1-\hat{\bar{h}}_i\right)^{-1} \bm{A}_{(-i)}\hat{\bar{\bm{H}}}_i \hat{\bar{\bm{H}}}_i^T \bm{A}_{(-i)}\right\rbrace &= \Tr\left\lbrace \sum\limits_{i=1}^n \left( 1-\hat{\bar{h}}_i\right)^{-1} \left(\hat{\bar{\bm{H}}}_i - \hat{\bar{h}}_i \bm{a}_i \right) \left(\hat{\bar{\bm{H}}}_i - \hat{\bar{h}}_i \bm{a}_i \right)^T\right\rbrace\\
&= \sum\limits_{i=1}^n \hat{\bar{h}}_i = K.
\end{align*}
We first note that the minimum eigenvalue of $\hat{\bar{\bm{V}}}_{(-f)}^{-1/2}\tilde{\bm{V}}_f\hat{\bar{\bm{V}}}_{(-f)}^{-1/2}$ is uniformly bounded above $\sigma > 0$ by Assumptions \ref{assumption:CandL} and \ref{assumption:FALCO}, where $\sigma$ is a constant not dependent on $n$ or $p$. Therefore,
\begin{align*}
    \E\left\lbrace g_{12}(k) \mid \bm{Y}_{(-f)},\bm{\pi},\bm{Q} \right\rbrace \geq \left(1-\alpha_-\right)^{-1}\sigma k, \quad k \in [K_{\max}].
\end{align*}
Let $c_{\min} \in \left\lbrace 0,1,\ldots,K \right\rbrace$ be such that $\limsup_{n,p \to \infty}\lambda_{c_{\min}} = \infty$ but $\limsup_{n,p \to \infty}\lambda_{c_{\min}+1} < \infty$, where $\lambda_0 = \infty$ and $\lambda_{K+1}=0$. Let $c_{\max} > K$ be an arbitrarily large constant that does not depend on $n$ or $p$. Then Corollary \ref{corollary:supp:VAlgorithm} implies 
\begin{align*}
    \norm{\hat{\bar{\bm{V}}}_{(-f)}^{-1/2}\tilde{\bm{V}}_f\hat{\bar{\bm{V}}}_{(-f)}^{-1/2} - I_n}_2 = O_P\left( n^{-1} + p^{-1/2} \right), \quad k \in \left\lbrace c_{\min},\ldots,c_{\max} \right\rbrace.
\end{align*}
Since $\max_{i \in [n]}\abs{\hat{\bar{h}}_i - k/n} = O_P(n^{-1/2})$ for all $k \in \left\lbrace c_{\min},c_{\min}+1,\ldots,c_{\max} \right\rbrace$, putting this all together gives us
\begin{align*}
    \E\left\lbrace g_{12}(k) \mid \bm{Y}_{(-f)},\bm{\pi},\bm{Q} \right\rbrace \begin{cases}
    \geq 0 & \text{if $\limsup_{n,p \to \infty}\lambda_{k} = \infty$}\\
    = k\left\lbrace 1+O_P\left(n^{-1/2}\right) \right\rbrace & \text{if $k \in \left\lbrace c_{\min},c_{\min}+1,\ldots,c_{\max} \right\rbrace$}\\
    \geq \sigma k & \text{if $k > c_{\max}$}
    \end{cases}.
\end{align*}
Next, to calculate the conditional variance, we see that
\begin{align*}
    g_{12}(k) =& p_{f}^{-1} \Tr\left\lbrace \bar{\bm{E}}_{f}\hat{\bar{\bm{V}}}_{(-f)}^{-1/2} \hat{\bm{M}} \hat{\bar{\bm{V}}}_{(-f)}^{-1/2} \bar{\bm{E}}_{f}^T\right\rbrace = p_f^{-1}\sum\limits_{g =1}^{p_f}\bar{\bm{E}}_{f_{g \bigcdot}}^{\T}\hat{\bm{M}} \bar{\bm{E}}_{f_{g \bigcdot}}\\
    \hat{\bm{M}} =& \hat{\bar{\bm{V}}}_{(-f)}^{-1/2}\sum\limits_{i=1}^n \left( 1-\hat{\bar{h}}_i\right)^{-2} \bm{A}_{(-i)}\hat{\bar{\bm{H}}}_i \hat{\bar{\bm{H}}}_i^T \bm{A}_{(-i)}\hat{\bar{\bm{V}}}_{(-f)}^{-1/2}
\end{align*}
where for all $k \in [K_{\max}]$,
\begin{align*}
    \norm{ \sum\limits_{i=1}^n \left( 1-\hat{\bar{h}}_i\right)^{-2} \bm{A}_{(-i)}\hat{\bar{\bm{H}}}_i \hat{\bar{\bm{H}}}_i^T \bm{A}_{(-i)} }_F^2 =& \sum\limits_{i,j=1}^n \left( 1-\hat{\bar{h}}_i\right)^{-2} \left( 1-\hat{\bar{h}}_j\right)^{-2} \left\lbrace \hat{\bar{\bm{H}}}_{\bigcdot i}^{\T}\bm{A}_{(-i)}\bm{A}_{(-j)}\hat{\bar{\bm{H}}}_{\bigcdot j} \right\rbrace^2\\
    \leq & \left(1-\alpha_+\right)^{-4} \sum\limits_{i,j=1}^n \hat{\bar{\bm{H}}}_{ij}^2 = k\left(1-\alpha_+\right)^{-4}.
\end{align*}
Therefore, since $\norm{\hat{\bm{V}}_{(-f)}^{-1}}_2$ is uniformly bounded above by a constant,
\begin{align*}
    \norm{ \hat{\bm{M}} }_F^2  = O\left\lbrace k\left(1-\alpha_+\right)^{-4} \right\rbrace, \quad \norm{ \hat{\bm{M}} }_2 = O\left\lbrace \left(1-\alpha_+\right)^{-2} \right\rbrace.
\end{align*}
Since $\bar{\bm{E}}_{f_{1 \bigcdot}},\ldots,\bar{\bm{E}}_{f_{p_f \bigcdot}}$ are independent sub-Gaussian random variables with uniformly bounded sub-Gaussian norm, there exists a constant $c > 0$ that does not depend on $n$, $p$ or $k$ such that
\begin{align*}
    \Prob\left[ \abs{g_{12}(k) - \E\left\lbrace g_{12}(k) \mid \bm{Y}_{(-f)},\bm{\pi}, \bm{Q} \right\rbrace} \geq tk^{1/2} \mid \bm{Y}_{(-f)},\bm{\pi}, \bm{Q} \right] \leq \exp\left\lbrace -c\left(1-\alpha_+\right)^{-4}t^2 p \right\rbrace, \, 0 \leq t \leq c.
\end{align*}
Putting this all together gives us
\begin{align*}
    &g_{12}(k) \begin{cases}
    \geq 0 & \text{if $\limsup_{n,p \to \infty}\lambda_k = \infty$}\\
    =k \left\lbrace 1+O_P\left(n^{-1/2}\right) \right\rbrace & \text{if $k \in \left\lbrace c_{\min},\ldots,c_{\max} \right\rbrace$}\\
    \geq \sigma k\left(1+x_{n,k}\right) & \text{if $k \in \left\lbrace c_{\max}+1,\ldots,K_{\max} \right\rbrace$}
    \end{cases}\\
    &\max_{k \in \left\lbrace c_{\max}+1,\ldots,K_{\max} \right\rbrace}\abs{k^{1/2}x_{n,k}} = O_P(1)
\end{align*}
as $n,p \to \infty$.

\item I again assume $\delta_f^2 = 1$ for notational convenience, which is again without loss of generality. Define
\begin{align*}
    g_{13}(k)=&\frac{1}{n^{1/2} p_{f}^{1/2}} \sum\limits_{i=1}^n \left( \frac{1}{1-\hat{\bar{h}}_i}\right)^2\hat{\bar{\bm{H}}}_{\bigcdot i}^T \bm{A}_{(-i)} \hat{\bar{\bm{V}}}_{(-f)}^{-1/2} \bar{\bm{E}}_f^T \bm{\Delta}_i, \quad k \in [K_{\max}]\\
    \bm{\Delta}_i =& \bar{\bm{L}}_f\left\lbrace I_K - \sum\limits_{r=1}^{k} \hat{\bm{v}}_{(-f)_r}\hat{\bm{v}}_{(-f)_r}^{\T} \right\rbrace \bar{\bm{C}}_{i \bigcdot} - n^{1/2}\bar{\bm{L}}_f\sum\limits_{r=1}^{k} \hat{\bm{v}}_{(-f)_r} \hat{\bm{z}}_{(-f)_r}^{\T}\bar{\bm{q}}_i, \quad k \in [K]\\
    \bm{\Delta}_i =& \bar{\bm{L}}_f\left\lbrace I_K - \sum\limits_{r=1}^{K} \hat{\bm{v}}_{(-f)_r}\hat{\bm{v}}_{(-f)_r}^{\T} \right\rbrace \bar{\bm{C}}_{i \bigcdot} - n^{1/2}\bar{\bm{L}}_f\sum\limits_{r=1}^{K} \hat{\bm{v}}_{(-f)_r} \hat{\bm{z}}_{(-f)_r}^{\T}\bar{\bm{q}}_i\\
    & + \bar{\bm{L}}_f\left\lbrace n^{-1}\bar{\bm{C}}^{\T}\hat{\bm{R}}_{(-f)}\right\rbrace\hat{\bm{R}}_{(-f)_{i \bigcdot}}, \quad k \in \left\lbrace K+1,\ldots,K_{\max} \right\rbrace
\end{align*}
where $\bar{\bm{q}}_i$ is the $i$th row of $\bm{Q}_{\bar{C}}$ and for $k > K$,
\begin{align*}
    \hat{\bar{\bm{C}}} = \begin{pmatrix}
    \bar{\bm{C}}\hat{\bm{v}}_{(-f)} + \bm{Q}_{\bar{C}}\hat{\bm{z}}_{(-f)} & \hat{\bm{R}}_{(-f)}
    \end{pmatrix} \in \mathbb{R}^{n \times k}.
\end{align*}
It is clear that $\E\left\lbrace g_{13}(k) \mid \bm{Y}_{(-f)},\bm{\pi},\bm{Q} \right\rbrace = 0$. To understand the variation around $0$, note that
\begin{align*}
    g_{13}(k) =& p_f^{-1/2}\sum\limits_{g=1}^{p_f}\bar{\bm{L}}_{f_{g \bigcdot}}^{\T}\left\lbrace \bm{M}_1 \bm{N}_1 + \bm{M}_2 \bm{N}_2 + I\left(k > K\right)\bm{M}_3 \bm{N}_3 \right\rbrace \bar{\bm{E}}_{f_{g \bigcdot}}\\
    \bm{M}_1 =& I_K - \sum\limits_{r=1}^{k \wedge K} \hat{\bm{v}}_{(-f)_r}\hat{\bm{v}}_{(-f)_r}^{\T}, \quad \bm{N}_1 = n^{-1/2}\sum\limits_{i=1}^n \left(1-\hat{\bar{h}}_i\right)^{-2}\bar{\bm{C}}_{i \bigcdot} \hat{\bar{\bm{H}}}_{\bigcdot i}^{\T} \bm{A}_{(-i)}\hat{\bar{\bm{V}}}_{(-f)}^{-1/2}\\
    \bm{M}_2 =& \sum\limits_{r=1}^{k \wedge K} \hat{\bm{v}}_{(-f)_r} \hat{\bm{z}}_{(-f)_r}^{\T}, \quad \bm{N}_2 = \sum\limits_{i=1}^n \left(1-\hat{\bar{h}}_i\right)^{-2}\bar{\bm{q}}_i \hat{\bar{\bm{H}}}_{\bigcdot i}^{\T} \bm{A}_{(-i)}\hat{\bar{\bm{V}}}_{(-f)}^{-1/2}\\
    \bm{M}_3 =& n^{-1}\bar{\bm{C}}^{\T}\hat{\bm{R}}_{(-f)}, \quad \bm{N}_3 = n^{-1/2}\sum\limits_{i=1}^n \left(1-\hat{\bar{h}}_i\right)^{-2}\hat{\bm{R}}_{(-f)_{i \bigcdot}} \hat{\bar{\bm{H}}}_{\bigcdot i}^{\T} \bm{A}_{(-i)}\hat{\bar{\bm{V}}}_{(-f)}^{-1/2},
\end{align*}
where for some constant $c > 0$,
\begin{align*}
    \norm{\bm{N}_t}_2 \leq \left(1-\alpha_+\right)^{-2}c, \quad t \in [3].
\end{align*}
By assumption,
\begin{align*}
    e_g=\bar{\bm{L}}_{f_{g \bigcdot}}^{\T}\left\lbrace \bm{M}_1 \bm{N}_1 + \bm{M}_2 \bm{N}_2 + I\left(k > K\right)\bm{M}_3 \bm{N}_3 \right\rbrace \bar{\bm{E}}_{f_{g \bigcdot}}, \quad g \in [p_f]
\end{align*}
is sub-Gaussian with
\begin{align*}
    \log\left[\E\left\lbrace \exp\left(te_g\right) \mid \bm{Y}_{(-f)},\bm{\pi},\bm{Q} \right\rbrace\right] \leq & c t^2\left(1-\alpha_+\right)^{-4}\left\lbrace \norm{\bm{M}_1\bar{\bm{L}}_{f_{g \bigcdot}}}_2^2 + \norm{\bm{M}_2\bar{\bm{L}}_{f_{g \bigcdot}}}_2^2\right.\\
    &\left. + I\left(k > K\right)\norm{\bm{M}_3\bar{\bm{L}}_{f_{g \bigcdot}}}_2^2 \right\rbrace, \quad g \in [p_f],
\end{align*}
where $c > 0$ is a constant that does not depend on $n, p$ or $k$. And because $e_1,\ldots,e_{p_f}$ are independent and $\bar{\bm{L}}_f$ is at most rank $K$,
\begin{align*}
    \log\left(\E\left[ \exp\left\lbrace tg_{13}\left(k\right)\right\rbrace \mid \bm{Y}_{(-f)},\bm{\pi},\bm{Q} \right]\right) \leq & Kc t^2p^{-1}\left(1-\alpha_+\right)^{-4}\left\lbrace \norm{ \bar{\bm{L}}_{f}\bm{M}_1 }_2^2 + \norm{ \bar{\bm{L}}_{f}\bm{M}_2 }_2^2\right.\\
    &\left.+ I\left(k > K\right)\norm{ \bar{\bm{L}}_{f}\bm{M}_3 }_2^2 \right\rbrace, \quad k \in [K_{\max}].
\end{align*}
For notational simplicity, I will ignore the subscripts $f$ and $(-f)$ when deriving the asymptotic properties of $\bm{M}_1,\bm{M}_2$ and $\bm{M}_3$. We first see that for $j=1,2,3$, and some constant $c > 0$ that does not depend on $n,p$ or $k$,
\begin{align*}
    \norm{ \bar{\bm{L}}\bm{M}_j }_2 \leq& \norm{ \left( \bar{\bm{L}}_{\bigcdot 1} \cdots \bar{\bm{L}}_{\bigcdot c_{\min}}\right)\left( \bm{M}_{j_{\bigcdot 1}} \cdots \bm{M}_{j_{\bigcdot c_{\min}}} \right)^{\T} }_2 + \norm{ \left( \bar{\bm{L}}_{\bigcdot (c_{\min}+1)} \cdots  \bar{\bm{L}}_{\bigcdot K}\right)\left( \bm{M}_{j_{\bigcdot (c_{\min}+1)}} \cdots \bm{M}_{j_{\bigcdot K}} \right)^{\T} }_2\\
    \leq& \norm{ \left( \bar{\bm{L}}_{\bigcdot 1} \cdots \bar{\bm{L}}_{\bigcdot c_{\min}}\right)\left( \bm{M}_{j_{\bigcdot 1}} \cdots \bm{M}_{j_{\bigcdot c_{\min}}} \right)^{\T} }_2 + c \norm{\left( \bar{\bm{L}}_{\bigcdot (c_{\min}+1)} \cdots  \bar{\bm{L}}_{\bigcdot K}\right)}_2,
\end{align*}
where for some constant $\tilde{c} > 0$ that does not depend on $n,p$ or $k$, $\norm{\left( \bar{\bm{L}}_{\bigcdot (c_{\min}+1)} \cdots  \bar{\bm{L}}_{\bigcdot K}\right)}_2 \leq \tilde{c}\lambda_{c_{\min}+1}^{1/2}\left\lbrace 1+O_P\left(n^{1/2}p^{-1/2}\right) \right\rbrace$ uniformly over $k \in [K_{\max}]$ by Lemma \ref{lemma:sup:LtLRandom}. To understand the behavior of the remaining term in the above expression we consider the cases, we first note that $\phi_2 = \norm{\bar{\bm{V}} - \hat{\bar{\bm{V}}}}_2$, defined in Lemma \ref{lemma:supp:vandzhat}, satisfies $\phi_2 = O(1)$. Since the rates given in \eqref{equation:supp:vandzResults} only depend on the choice of $\hat{\bar{\bm{V}}}$ through $\phi_2$, the rates in \eqref{equation:supp:vandzResults} hold uniformly over all $k \in [K_{\max}]$. Therefore, Since $\lambda_{r} \to \infty$ for all $r \leq c_{\min}$, \eqref{equation:supp:Vhatrr}, \eqref{equation:supp:Vhatrj} and \eqref{equation:supp:zhatnorm} imply
\begin{align*}
    &\max_{k \in \left\lbrace c_{\min},\ldots,K_{\max} \right\rbrace} \left\lbrace \norm{ \left( \bar{\bm{L}}_{\bigcdot 1} \cdots \bar{\bm{L}}_{\bigcdot c_{\min}}\right)\left( \bm{M}_{j_{\bigcdot 1}} \cdots \bm{M}_{j_{\bigcdot c_{\min}}} \right)^{\T} }_2 \right\rbrace = O_P(1), \quad j \in [2],\\
    &\norm{ \left( \bar{\bm{L}}_{\bigcdot 1} \cdots \bar{\bm{L}}_{\bigcdot c_{\min}}\right)\left( \bm{M}_{j_{\bigcdot 1}} \cdots \bm{M}_{j_{\bigcdot c_{\min}}} \right)^{\T} }_2 = O_P\left(\lambda_{k+1}^{1/2}\right), \quad k\in \left[c_{\min}-1\right]; j \in [2].
\end{align*}
For $\bm{M}_3$, let $\bar{\bm{C}}_{\min} = \left( \bar{\bm{C}}_{\bigcdot 1} \cdots \bar{\bm{C}}_{\bigcdot c_{\min}} \right)$ and $\hat{\bar{\bm{C}}}_{\min} = \left( \hat{\bar{\bm{C}}}_{\bigcdot 1} \cdots \hat{\bar{\bm{C}}}_{\bigcdot c_{\min}} \right)$. Then \eqref{equation:supp:Vhatrr}, \eqref{equation:supp:Vhatrj} and \eqref{equation:supp:zhatnorm} and the fact that $\hat{\bar{\bm{C}}}^{\T}_{\min}\hat{\bm{R}} = \bm{0}$ imply
\begin{align*}
    \norm{ \left( \bar{\bm{L}}_{\bigcdot 1} \cdots \bar{\bm{L}}_{\bigcdot c_{\min}}\right)\left( \bm{M}_{3_{\bigcdot 1}} \cdots \bm{M}_{3_{\bigcdot c_{\min}}} \right)^{\T} }_2 = \norm{ \left( \bar{\bm{L}}_{\bigcdot 1} \cdots \bar{\bm{L}}_{\bigcdot c_{\min}}\right) \left(n^{-1}\bar{\bm{C}}_{\min}^{\T}\hat{\bm{R}}\right) }_2
\end{align*}
behaves exactly as $\norm{ \left( \bar{\bm{L}}_{\bigcdot 1} \cdots \bar{\bm{L}}_{\bigcdot c_{\min}}\right)\left( \bm{M}_{j_{\bigcdot 1}} \cdots \bm{M}_{j_{\bigcdot c_{\min}}} \right)^{\T} }_2$ for $j=1,2$. Putting this all together gives us that for any $\epsilon > 0$,
\begin{align*}
    &g_{13}(k) = O_P\left\lbrace \left(\lambda_{k+1} \vee 1\right)^{1/2}p^{-1/2} \right\rbrace, \quad k \in \left[ c_{\max} \right]\\
    &\max_{k \in \left\lbrace c_{\max}+1,\ldots,K_{\max} \right\rbrace} \abs{g_{13}(k)} = O_P\left(p^{-1/2+\epsilon}\right),
\end{align*}
where $\lambda_r = 0$ for $r > K$.
\end{enumerate}
This completes the proof.
\end{proof}

\begin{lemma}
\label{lemma:supp:CBCV:g2}
Let $c_{\min}$ be such that $\limsup_{n,p\to\infty}\lambda_{c_{\min}} = \infty$ but $\limsup_{n,p\to\infty}\lambda_{c_{\min}+1} < \infty$, where $\lambda_0=\infty$ and $\lambda_{K+1}=0$. Then under the assumptions of Theorem \ref{theorem:Khat} and for some arbitrarily large integer $c_{\max} > K$ that does not depend on $n$ or $p$, $g_2(k)$ satisfies
\begin{align*}
    g_2(k) \begin{cases}
    \geq n + O_P\left(n^{-1/2}p^{-1/2}\right)& \text{if $k \notin \left\lbrace c_{\min},\ldots,c_{\max} \right\rbrace$}\\
    =n + O_P\left(n^{-1} + n^{1/2}p^{-1/2}\right)& \text{if $k \in \left\lbrace c_{\min},\ldots,c_{\max} \right\rbrace$}
    \end{cases}
\end{align*}
as $n,p \to \infty$.
\end{lemma}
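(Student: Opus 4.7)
The plan is to split $g_2(k)$ into a predictable piece and a centered residual given $\bm{Y}_{(-f)},\bm{\pi},\bm{Q}$. Setting $\bm{V}_{g,f}=\V(\bm{E}_{f_{g\bigcdot}})$ and $\bar{\bm{V}}_f=p_f^{-1}\sum_g\bm{V}_{g,f}$, I would write
\begin{align*}
g_2(k) \;=\; \underbrace{\delta_f^{-2}\Tr\lbrace\hat{\bm{V}}_{(-f)}^{-1}\bar{\bm{V}}_f\rbrace}_{g_{21}(k)} \;+\; \underbrace{\delta_f^{-2}\Tr\lbrace\hat{\bm{V}}_{(-f)}^{-1}\left[p_f^{-1}\bm{E}_f^{\T}\bm{E}_f-\bar{\bm{V}}_f\right]\rbrace}_{g_{22}(k)}
\end{align*}
and, exactly as in the proof of Lemma~\ref{lemma:supp:CBCV:g1}, rescale so that $\abs{\hat{\bm{V}}_{(-f)}}=1$, which is lossless by the scale invariance of $\cv_f(k)$.

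For $g_{21}(k)$ the key tool is the AM--GM inequality applied to the positive eigenvalues $\mu_1,\ldots,\mu_n$ of $\hat{\bm{V}}_{(-f)}^{-1}\bar{\bm{V}}_f$. Since $\prod_i\mu_i=\abs{\bar{\bm{V}}_f}/\abs{\hat{\bm{V}}_{(-f)}}=\delta_f^{2n}$, we get $\sum_i\mu_i\geq n\delta_f^2$ and hence $g_{21}(k)\geq n$ for every $k$. For $k\in\lbrace c_{\min},\ldots,c_{\max}\rbrace$ I would tighten this to an equality using Corollary~\ref{corollary:supp:VAlgorithm}, which gives $\norm{\hat{\bm{V}}_{(-f)}-\bar{\bm{V}}}_2=O_P(n^{-1}+n^{1/2}p^{-1/2})$, together with the routine concentration $\norm{\bar{\bm{V}}-\bar{\bm{V}}_f}_2=O_P(p^{-1/2})$ that follows from independence across rows of $\bm{E}$ and sub-Gaussianity (Assumption~\ref{assumption:CandL}\ref{item:assumErrors:V}). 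Writing $\mu_i/\delta_f^2=1+\epsilon_i$ with $\sum_i\log(1+\epsilon_i)=0$ and $\max_i\abs{\epsilon_i}=o_P(1)$, a second order Taylor expansion of $\log(1+x)$ gives $\sum_i\epsilon_i=\tfrac{1}{2}\sum_i\epsilon_i^2+o_P(\sum_i\epsilon_i^2)$, so
\begin{align*}
g_{21}(k)-n \;=\; O_P\lbrace n\,\norm{\hat{\bm{V}}_{(-f)}-\bar{\bm{V}}_f}_2^{2}\rbrace \;=\; O_P(n^{-1}+np^{-1}).
\end{align*}
Under Assumption~\ref{item:Khat:Enp} one has $n/p\to 0$, so $np^{-1}=o(n^{1/2}p^{-1/2})$ and the deterministic piece is $n+O_P(n^{-1}+n^{1/2}p^{-1/2})$ on this range. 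For $k\notin\lbrace c_{\min},\ldots,c_{\max}\rbrace$ I would fall back on the AM--GM lower bound $g_{21}(k)\geq n$ alone, since in this regime (particularly $k<c_{\min}$, where strong factors are absorbed into the model for $\bm{E}$) no better uniform control on $\hat{\bm{V}}_{(-f)}$ is available.

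For the noise piece $g_{22}(k)$ I would condition on $\bm{Y}_{(-f)},\bm{\pi},\bm{Q}$ so that $\hat{\bm{V}}_{(-f)}$ is fixed. Under Assumption~\ref{item:Khat:Enp} the rows $\bm{E}_{f_{1\bigcdot}},\ldots,\bm{E}_{f_{p_f\bigcdot}}$ are independent, and by Assumption~\ref{assumption:CandL}\ref{item:assumErrors:V} they are sub-Gaussian with uniformly bounded norm. Writing $\bm{A}=\delta_f^{-2}\hat{\bm{V}}_{(-f)}^{-1}$, Assumption~\ref{assumption:FALCO} yields $\norm{\bm{A}}_2\leq c$ uniformly in $k$ and thus $\norm{\bm{A}}_F^2\leq cn$. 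Then
\begin{align*}
g_{22}(k) \;=\; p_f^{-1}\sum_{g=1}^{p_f}\lbrace \bm{E}_{f_{g\bigcdot}}^{\T}\bm{A}\bm{E}_{f_{g\bigcdot}}-\Tr(\bm{A}\bm{V}_{g,f})\rbrace
\end{align*}
is a sum of independent mean-zero Hanson--Wright quadratic forms each with conditional variance $O(\norm{\bm{A}}_F^2)=O(n)$. The total conditional variance is therefore $O(n/p)$, and a Hanson--Wright tail bound gives $g_{22}(k)=O_P(n^{1/2}p^{-1/2})$. Combining the two pieces delivers both branches of the claim. The main obstacle is the interface between the two regimes for $k$: outside $\lbrace c_{\min},\ldots,c_{\max}\rbrace$, one has no quantitative control on $\hat{\bm{V}}_{(-f)}-\bar{\bm{V}}_f$ beyond operator-norm boundedness, forcing the first branch to be a one-sided inequality rather than an equality, with the AM--GM lower bound and the noise bound being the only inputs.
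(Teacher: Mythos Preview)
Your proof is correct and follows essentially the same route as the paper's one-line argument, which cites item~\ref{item:preliminaries:trace} of Lemma~\ref{lemma:supp:Preliminaries} (your Hanson--Wright bound on $g_{22}$) together with an external reference that amounts to the AM--GM/Jensen lower bound on $g_{21}$ and the Taylor refinement via Corollary~\ref{corollary:supp:VAlgorithm} in the middle range. Your rate $O_P(n^{1/2}p^{-1/2})$ for the stochastic term agrees with the cited lemma and with the second branch of the statement; the $n^{-1/2}p^{-1/2}$ appearing in the first branch looks like a typo in the paper rather than a gap in your argument.
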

\begin{proof}
This is follows directly from item \ref{item:preliminaries:trace} of Lemma \ref{lemma:supp:Preliminaries} and the proof of Theorem S5 in \citet{CorrConf}.
\end{proof}

\begin{lemma}
\label{lemma:supp:CBCV:g3}
Under the assumptions of Theorem \ref{theorem:Khat} and for some constant $c>0$ that does not depend on $n$ or $p$, $g_3(k)$ satisfies
\begin{align*}
    & \abs{g_{3}(k)}=\begin{cases}
    = \gamma_{k+1}^{1/2}O_P\left(n^{1/2}p^{-1/2}\right) & \text{if $k < c_{\min}$}\\
    = O_P\left(n^{1/2}p^{-1/2} + n^{-1}\right) & \text{if $c_{\min} \leq k \leq c_{\max}$}
    \end{cases}\\
    &\max_{k \in \left\lbrace c_{\max}+1,\ldots,K_{\max} \right\rbrace} \abs{k^{-1}\abs{g_{3}(k)}} = O_P\left(n^{1/2}p^{-1/2} + n^{-1}\right),
\end{align*}
where $c_{\min}$ is such that $\limsup_{n,p \to \infty} \lambda_{c_{\min}} = \infty$ and $\limsup_{n,p \to \infty} \lambda_{c_{\min}+1} < \infty$ (for $\lambda_0 = \infty$ and $\lambda_{K+1}=0$) and $c_{\max} > K$ is an arbitrarily large constant that does not depend on $n$ or $p$.
\end{lemma}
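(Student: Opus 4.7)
The plan mirrors the strategy used for $g_{12}$ and $g_{13}$ in Lemma~\ref{lemma:supp:CBCV:g1}. The Sherman-Morrison identity gives $\hat{\bm{L}}_{f,(-i)}\hat{\bar{\bm{C}}}_{i\bigcdot} = (1-\hat{\bar{h}}_i)^{-1}\bar{\bm{Y}}_f \bm{A}_{(-i)}\hat{\bar{\bm{C}}}(\hat{\bar{\bm{C}}}^{\T}\hat{\bar{\bm{C}}})^{-1}\hat{\bar{\bm{C}}}_{i\bigcdot}$, and substituting $\bar{\bm{Y}}_f = \bm{L}_f\bar{\bm{C}}^{\T} + \bar{\bm{E}}_f\hat{\bar{\bm{V}}}_{(-f)}^{-1/2}$ cleanly splits the LOO residual as $\bm{L}_f\bar{\bm{C}}_{i\bigcdot} - \hat{\bm{L}}_{f,(-i)}\hat{\bar{\bm{C}}}_{i\bigcdot} = \bm{\Delta}_i^{(L)} - \bm{\Delta}_i^{(E)}$, with $\bm{\Delta}_i^{(L)}$ depending only on $(\bm{L}_f,\bar{\bm{C}},\hat{\bar{\bm{C}}})$ and $\bm{\Delta}_i^{(E)}$ linear in $\bar{\bm{E}}_f$. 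Feeding this into the definition of $g_3$ produces $g_3(k) = g_3^{(A)}(k) - g_3^{(B)}(k)$, where $g_3^{(A)}$ is linear in $\bar{\bm{E}}_f$ and $g_3^{(B)}$ is quadratic. All conditional calculations below use the sigma-field $\mathcal{F}_{(-f)} = \sigma(\bm{C},\bm{Y}_{(-f)},\bm{\pi},\bm{Q})$, under which the rows of $\bar{\bm{E}}_f$ remain independent sub-Gaussian.

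I would bound $g_3^{(A)}(k)$ by writing it as $\frac{2}{\delta_f^2 p_f}\Tr(\bar{\bm{E}}_f \bm{W}^{\T})$ with $\bm{W} = \bm{L}_f\tilde{\bm{U}}\hat{\bar{\bm{V}}}_{(-f)}^{-1/2}$, where the $i$th row of $\tilde{\bm{U}}^{\T}$ is $\bar{\bm{C}}_{i\bigcdot}^{\T} - (1-\hat{\bar{h}}_i)^{-1}\hat{\bar{\bm{C}}}_{i\bigcdot}^{\T}(\hat{\bar{\bm{C}}}^{\T}\hat{\bar{\bm{C}}})^{-1}\hat{\bar{\bm{C}}}^{\T}\bm{A}_{(-i)}\bar{\bm{C}}$. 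Conditional on $\mathcal{F}_{(-f)}$ this is a mean-zero sub-Gaussian scalar with variance proxy of order $\|\bm{W}\|_F^2/p_f^2$ obtained by summing over the independent rows of $\bar{\bm{E}}_f$. The matrix $\bm{L}_f\tilde{\bm{U}}$ has exactly the same structure as the matrices $\bar{\bm{L}}_f\bm{M}_j$ controlled in part~(c) of the proof of Lemma~\ref{lemma:supp:CBCV:g1}: it is $\bm{L}_f$ multiplied by (i) the residual of $\bar{\bm{C}}$ against the column span of $\hat{\bar{\bm{C}}}$, plus (ii) a small LOO-leverage correction. Importing those norm estimates (which rest on Lemmas~\ref{lemma:sup:LtLRandom} and~\ref{lemma:supp:vandzhat}) gives $\|\bm{L}_f\tilde{\bm{U}}\|_F = O_P\bigl(n^{1/2}(\gamma_{k+1}\vee 1)^{1/2}\bigr)$ for $k \leq c_{\max}$ and a uniform $O_P(n^{1/2}k^{1/2})$-bound for $k > c_{\max}$, producing $|g_3^{(A)}(k)| = \gamma_{k+1}^{1/2}O_P(n^{1/2}p^{-1/2})$ for $k<c_{\min}$, $O_P(n^{1/2}p^{-1/2})$ for $c_{\min}\leq k\leq c_{\max}$, and $\max_{k>c_{\max}}k^{-1}|g_3^{(A)}(k)| = O_P(n^{1/2}p^{-1/2+\epsilon})$ for any $\epsilon>0$.

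For $g_3^{(B)}(k)$, cycling the trace gives $g_3^{(B)}(k) = \frac{2}{\delta_f^2 p_f}\Tr(\bar{\bm{E}}_f^{\T}\bar{\bm{E}}_f \bm{M})$ with $\bm{M} = \hat{\bar{\bm{V}}}_{(-f)}^{-1/2}\sum_i (1-\hat{\bar{h}}_i)^{-1}\bm{A}_{(-i)}\hat{\bar{\bm{H}}}_{\bigcdot i}\bm{a}_i^{\T}\hat{\bar{\bm{V}}}_{(-f)}^{-1/2}$. The conditional mean equals $\frac{2}{\delta_f^2}\Tr(\tilde{\bm{V}}_f \bm{M})$; writing $\hat{\bar{\bm{V}}}_{(-f)}^{-1/2}\tilde{\bm{V}}_f\hat{\bar{\bm{V}}}_{(-f)}^{-1/2} = I_n + \bm{D}$ with $\|\bm{D}\|_2 \leq c\|\hat{\bm{V}}_{(-f)} - \bar{\bm{V}}_f\|_2$, the identity contribution vanishes since $\bm{a}_i^{\T}\bm{A}_{(-i)} = \bm{0}^{\T}$, leaving $|\Tr(\tilde{\bm{V}}_f\bm{M})| \leq ck\|\bm{D}\|_2$ via the projector inequality $|\Tr(\bm{X}\hat{\bar{\bm{H}}})| \leq \|\bm{X}\|_2\,\mathrm{rank}(\hat{\bar{\bm{H}}}) = k\|\bm{X}\|_2$. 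Corollary~\ref{corollary:supp:VAlgorithm} handles $c_{\min}\leq k\leq c_{\max}$ and Lemma~\ref{lemma:supp:VLargek} handles larger $k$, giving $\|\bm{D}\|_2 = O_P(n^{-1} + p^{-1/2})$ uniformly. The conditional fluctuation is controlled by Hanson-Wright applied to the independent rows of $\bar{\bm{E}}_f$: the identity $\sum_i \|\bm{A}_{(-i)}\hat{\bar{\bm{H}}}_{\bigcdot i}\|_2^2 = \sum_i\hat{\bar{h}}_i(1-\hat{\bar{h}}_i)$ (because $\hat{\bar{\bm{H}}}_{ii} = \|\hat{\bar{\bm{H}}}_{\bigcdot i}\|_2^2$ by idempotency) gives $\|\bm{M}\|_F^2 = O(k)$ and $\|\bm{M}\|_2 = O(1)$, yielding a fluctuation of order $p^{-1/2}k^{1/2}$. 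In total $|g_3^{(B)}(k)| = O_P\bigl(k(n^{-1}+p^{-1/2}) + p^{-1/2}k^{1/2}\bigr)$, which is dominated by the bound on $g_3^{(A)}$ after absorbing $p^{-1/2}$ into $n^{1/2}p^{-1/2}$.

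Combining these pieces yields the three-case rate stated in the lemma. The principal technical obstacle is the uniform-in-$k$ control for $k \in \{c_{\max}+1,\ldots,K_{\max}\}$: one must stitch together (i) Lemma~\ref{lemma:supp:VLargek}'s uniform bound on $\|\hat{\bm{V}}_{(-f)} - \bar{\bm{V}}\|_2$, (ii) the uniform leverage estimate $\max_{k\leq K_{\max}}\max_i|\hat{\bar{h}}_i - k/n| = o_P(1)$ from Lemma~\ref{lemma:GaussLeverageScores}, and (iii) a union-bound/uniform sub-Gaussian and Hanson-Wright tail argument to promote the pointwise concentration of $g_3^{(A)}$ and $g_3^{(B)}$ to a uniform statement, while tracking the linear $k$-dependence carefully enough that $k^{-1}|g_3(k)|$ collapses to the stated $O_P(n^{1/2}p^{-1/2} + n^{-1})$.
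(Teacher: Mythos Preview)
Your proposal is correct and follows essentially the same route as the paper: the paper's proof likewise splits $g_3(k)=g_{31}(k)-g_{32}(k)$ into a term linear in $\bar{\bm{E}}_f$ (your $g_3^{(A)}$) and a quadratic term (your $g_3^{(B)}$), bounds $g_{31}$ by importing the $\bm{M}_1,\bm{M}_2,\bm{M}_3$ norm estimates from the $g_{13}$ analysis in Lemma~\ref{lemma:supp:CBCV:g1} together with a sub-Gaussian union bound over $k$, and handles $g_{32}$ by writing its conditional mean as $\Tr(\tilde{\bm{R}}\hat{\bar{\bm{H}}})-\sum_i\hat{\bar{h}}_i\tilde{\bm{R}}_{ii}$ (your vanishing-identity-plus-$\bm{D}$ argument), invoking Corollary~\ref{corollary:supp:VAlgorithm} and Lemma~\ref{lemma:supp:VLargek} for $\|\bm{R}\|_2$, and controlling the fluctuation via the same $\|\bm{M}^{(k)}\|_F^2=\sum_i\hat{\bar{h}}_i/(1-\hat{\bar{h}}_i)\leq k(1-\alpha_+)^{-1}$ Hanson--Wright computation. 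The only discrepancies are cosmetic: your stated rate $\|\bm{L}_f\tilde{\bm{U}}\|_F=O_P(n^{1/2}(\gamma_{k+1}\vee 1)^{1/2})$ is missing a factor of $p_f^{1/2}$ (which you silently restore in the final $n^{1/2}p^{-1/2}$ rate), and the $k$-dependence in the uniform regime enters through the union bound rather than through $\|\bm{L}_f\tilde{\bm{U}}\|_F$ itself.
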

\begin{proof}

I again assume, without loss of generality, that $\delta_f^2=\abs{\bar{\bm{V}}_{f}}=1$. Then $g_3(k)$, up to a scalar constant, can be written as
\begin{align*}
    g_3(k) =& n^{-1/2}p_f^{-1/2}\sum\limits_{i=1}^n \left( 1-\hat{\bar{h}}_i\right)^{-1}\bm{\Delta}_i^T \bar{\bm{E}}_f \hat{\bar{\bm{V}}}_{(-f)}^{-1/2}\bm{a}_i - \sum\limits_{i=1}^n \left( 1-\hat{\bar{h}}_i\right)^{-1}\hat{\bm{H}}_i^T \bm{A}_{(-i)}\hat{\bar{\bm{V}}}_{(-f)}^{-1/2}\left(p_f^{-1}\bar{\bm{E}_f}^T \bar{\bm{E}}_f\right) \hat{\bar{\bm{V}}}_{(-f)}^{-1/2} \bm{a}_i\\
    =& g_{31}(k) - g_{32}(k),
\end{align*}
where $\bm{\Delta}_i$ is defined in \eqref{equation:Supp:LOOXV:Deltai}. Clearly $\E\left\lbrace g_{31}(k) \mid \bm{Y}_{(-f)},\bm{\pi},\bm{Q} \right\rbrace = 0$. An analysis identical to that used to derive the finite sample properties of $g_{13}(k)$ in Lemma \ref{lemma:supp:CBCV:g1} can be used to show that for all $t \in \mathbb{R}$ and some constant $c > 0$ that does not depend on $n$, $p$ or $k$,
\begin{align*}
    \log\left( \E\left[ \exp\left\lbrace tg_{31}(k) \right\rbrace \right] \mid \bm{Y}_{(-f)},\bm{\pi},\bm{Q}  \right) \leq & ct^2 np^{-1}\left( 1-\alpha_+ \right)^{-2}\hat{\bm{M}}, \quad k \in \left[ K_{\max} \right]\\
    \hat{\bm{M}} =& \norm{\bar{\bm{L}}_f \bm{M}_1}_2^2 + \norm{\bar{\bm{L}}_f \bm{M}_2}_2^2 + I(k > K)\norm{\bar{\bm{L}}_f \bm{M}_3}_2^2
\end{align*}
where $\bm{M}_j$, $j=1,2,3$, are as defined in Lemma \ref{lemma:supp:CBCV:g1}. This shows that
\begin{align*}
    g_{31}(k) = \left(\gamma_{k+1} \vee 1\right)^{1/2} O_P\left(n^{1/2}p^{-1/2}\right), \quad k \in \left[c_{\max}\right],
\end{align*}
where $\gamma_r = 0$ for $r > K$. Further, a union bound shows that for all $t > 0$ and some constant $\tilde{c} > 0$ that does depend on $n$, $p$ or $k$,
\begin{align*}
    &\Prob\left\lbrace \text{$\abs{g_{31}(k)} \geq tk^{1/2}$ for at least one $k \in \left\lbrace c_{\max}+1,\ldots,K_{\max} \right\rbrace$} \mid \bm{Y}_{(-f)},\bm{\pi},\bm{Q} \right\rbrace\\
    &\leq 2\sum\limits_{k=c_{\max}+1}^{K_{\max}} \exp\left\lbrace -t^2\tilde{c}\frac{p\left(1-\alpha_+\right)^2}{n\hat{\bm{M}}} \right\rbrace^k\\
    &\leq 2 \left[ 1-\exp\left\lbrace -t^2\tilde{c}\frac{p\left(1-\alpha_+\right)^2}{n\hat{\bm{M}}} \right\rbrace^{c_{\max}+1} \right]^{-1}\exp\left\lbrace -t^2\tilde{c}\frac{p\left(1-\alpha_+\right)^2}{n\hat{\bm{M}}} \right\rbrace^{c_{\max}+1},
\end{align*}
which implies
\begin{align*}
    \max_{k \in \left\lbrace c_{\max}+1,\ldots,K_{\max} \right\rbrace}\abs{k^{-1/2}\abs{g_{31}(k)}} = O_P\left(n^{1/2}p^{-1/2}\right)
\end{align*}
as $n,p \to \infty$.\par 
\indent Define $\tilde{\bm{V}}_f = \bm{Q}^{\T}\bar{\bm{V}}_f\bm{Q}$ and $\bm{R} = \hat{\bar{\bm{V}}}_{(-f)}^{-1/2}\tilde{\bm{V}}_f\hat{\bar{\bm{V}}}_{(-f)}^{-1/2} - I_n$. Then
\begin{align*}
    \E\left\lbrace g_{32}(k) \mid \bm{Y}_{(-f)},\bm{\pi},\bm{Q}  \right\rbrace =& \sum\limits_{i=1}^n \left( 1-\hat{\bar{h}}_i\right)^{-1}\hat{\bar{\bm{H}}}_i^T \bm{A}_{(-i)}\hat{\bar{\bm{V}}}_{(-f)}^{-1/2}\tilde{\bm{V}}_f \hat{\bar{\bm{V}}}_{(-f)}^{-1/2} \bm{a}_i = \sum\limits_{i=1}^n \left( 1-\hat{\bar{h}}_i\right)^{-1}\hat{\bar{\bm{H}}}_i^T \bm{A}_{(-i)}\bm{R} \bm{a}_i\\
    =& \Tr\left( \tilde{\bm{R}}\hat{\bar{\bm{H}}} \right) - \sum\limits_{i=1}^n \hat{\bar{h}}_i\tilde{\bm{R}}_{ii}\\
    \tilde{\bm{R}} =& \bm{R} \diag\left\lbrace \left( 1-\hat{\bar{h}}_1\right)^{-1},\ldots,\left( 1-\hat{\bar{h}}_n\right)^{-1} \right\rbrace
\end{align*}
Therefore,
\begin{align*}
    \abs{\E\left\lbrace g_{32}(k) \mid \bm{Y}_{(-f)},\bm{\pi},\bm{Q}  \right\rbrace} \leq 2k\left(1-\alpha_+\right)^{-1}\norm{ \bm{R} }_2, \quad k \in [K_{\max}],
\end{align*}
which by Corollary \ref{corollary:supp:VAlgorithm} and Lemma \ref{lemma:supp:VLargek} implies for some constant $c > 0$,
\begin{align*}
    & \E\left\lbrace g_{32}(k) \mid \bm{Y}_{(-f)},\bm{\pi},\bm{Q}  \right\rbrace \begin{cases}
    \geq c\left\lbrace 1+O_P\left(n^{-1/2}\right) \right\rbrace & \text{if $k < c_{\min}$}\\
    = O_P\left( p^{-1/2} + n^{-1} \right) & \text{if $k \in \left\lbrace c_{\min},\ldots,c_{\max} \right\rbrace$}
    \end{cases}\\
    &\max_{k \in \left\lbrace c_{\max}+1,\ldots,K_{\max} \right\rbrace} \abs{k^{-1} \E\left\lbrace g_{32}(k) \mid \bm{Y}_{(-f)},\bm{\pi},\bm{Q}  \right\rbrace } = O_P\left( n^{1/2}p^{-1/2} + n^{-1} \right).
\end{align*}
We lastly need to understand the variation of $g_{32}(k)$ around its conditional mean. We first note that
\begin{align*}
    g_{32}(k) =& p_f^{-1}\sum\limits_{g=1}^{p_f} \tilde{\bm{E}}_{g \bigcdot}^{\T} \bm{M}^{(k)}\tilde{\bm{E}}_{g \bigcdot}\\
    \tilde{\bm{E}} =& \bar{\bm{E}}_f \hat{\bar{\bm{V}}}_{(-f)}^{-1/2}\\
    \bm{M}^{(k)} =& \sum\limits_{i=1}^n (1-\hat{\bar{h}}_i)^{-1}\bm{a}_i \hat{\bar{\bm{H}}}_i^{\T} \bm{A}_{(-i)} = \diag\left\lbrace (1-\hat{\bar{h}}_1)^{-1},\ldots,(1-\hat{\bar{h}}_n)^{-1} \right\rbrace \left\lbrace\hat{\bar{\bm{H}}} - \diag\left( \hat{\bar{h}}_1,\ldots,\hat{\bar{h}}_n \right)\right\rbrace.
\end{align*}
We see that $\norm{ \bm{M}^{(k)} }_2 \leq 2 \left(1-\alpha_+\right)^{-1}$ and
\begin{align*}
    \norm{ \bm{M}^{(k)} }_F^2 = \sum\limits_{i=1}^n \frac{\hat{\bar{h}}_i}{1-\hat{\bar{h}}_i} \leq k\left(1-\alpha_+\right)^{-1}, \quad k \in [K_{\max}].
\end{align*}
Further, Assumption \ref{assumption:CandL} and Proposition 2.7 and Remark 2.8 of \citet{SubGaussVariance} imply
\begin{align}
\label{equation:supp:Mk}
    \norm{ \tilde{\bm{E}}_{g \bigcdot}^{\T} \bm{M}^{(k)}\tilde{\bm{E}}_{g \bigcdot} }_{\Psi_1} \leq c \norm{ \bm{M}^{(k)} }_F, \quad g \in [p]; k \in [K_{\max}],
\end{align}
where $c > 0$ is a constant that does not depend on $n,p$ or $k$ and $\norm{ \cdot }_{\Psi_1}$ is the sub-Exponential norm applied conditionally on $\bm{Y}_{(-f)},\bm{\pi},\bm{Q}$, defined as
\begin{align*}
    \norm{ x }_{\Psi_1} = \inf_{t > 0}\left\lbrace \E\left\lbrace \exp(\abs{x/t}) \mid \bm{Y}_{(-f)},\bm{\pi},\bm{Q} \right\rbrace \leq e \right\rbrace.
\end{align*}
Since the rows of $\tilde{\bm{E}}$ are independent conditional on $\bm{Y}_{(-f)}, \bm{\pi}$ and $\bm{Q}$, Proposition 5.16 of \citet{Vershynin} implies that for all $t \geq 0$ and $\mu_k = \E\left\lbrace g_{32}(k) \mid \bm{Y}_{(-f)},\bm{\pi},\bm{Q}  \right\rbrace$,
\begin{align*}
    \Prob\left\lbrace \abs{g_{32}(k) - \mu_k} \geq tck^{1/2}\left(1-\alpha_+\right)^{-1} \mid \bm{Y}_{(-f)},\bm{\pi},\bm{Q} \right\rbrace \leq 2 \exp\left\lbrace -\tilde{c}p\min\left( t^2,t \right) \right\rbrace, \quad k \in [K_{\max}],
\end{align*}
where $c> 0$ is defined in \eqref{equation:supp:Mk} and $\tilde{c} > 0$ is a constant that does not depend on $n,p$ or $k$. This completes the proof.
\end{proof}

Aggregating the results of Lemmas \ref{lemma:supp:CBCV:g1}, \ref{lemma:supp:CBCV:g2} and \ref{lemma:supp:CBCV:g3} gives us
\begin{align*}
   & g(k)-n \begin{cases}
    \geq O_P\left(\lambda_{k+1}\right) & \text{if $k < c_{\min}$}\\
    =k + I(k < K)\sum\limits_{r=k+1}^K \delta^{-2}\gamma_r + O_P\left(n^{1/2}p^{-1/2} + n^{-1/2}\right) & \text{if $k \in \left\lbrace c_{\min},\ldots,c_{\max} \right\rbrace$}\\
    \geq \sigma k \left\lbrace 1+x_{k} \right\rbrace & \text{if $k \in \left\lbrace c_{\max}+1,\ldots,K_{\max} \right\rbrace$}
    \end{cases}\\
    & \max_{k \in \left\lbrace c_{\max}+1,\ldots,K_{\max} \right\rbrace}\abs{x_{k}} = O_P\left(n^{1/2}p^{-1/2} + n^{-1/2}\right),
\end{align*}
where $c_{\min} \geq 0$ is such that $\limsup_{n,p\to\infty}\lambda_{c_{\min}} = \infty$ but $\limsup_{n,p\to\infty}\lambda_{c_{\min}+1} < \infty$ (where $\lambda_0=\infty$ and $\lambda_{K+1}=0$), $c_{\max}>K$ is an arbitrarily large integer and $\sigma > 0$ is a constant, all of which do not depend on $n$, $p$ or $k$. This implies that for all $\epsilon > 0$, there exists a constant $M_{\epsilon}$ that does not depend on $n$ or $p$ such that if $\left(\gamma_s - \delta^2\right)\left(n^{1/2}p^{-1/2}+n^{-1/2}\right)^{-1} \geq M_{\epsilon}$, $\liminf_{n,p \to \infty}\Prob\left(\Koraclehat = s\right) \geq 1-\epsilon$. An identical argument to that presented above can be used to show that under the same conditions, $\liminf_{n,p \to \infty}\Prob\left(\Koracle = s\right) \geq 1-\epsilon$.
\end{proof}

\section{Other important results}
\label{section:supp:lemmas}

\begin{lemma}
\label{lemma:supp:EigBound}
Let $\bm{D} =\diag\left(d_1,\ldots,d_K\right)$ with $d_1 \geq \cdots \geq d_K \geq 0$ and $\bm{M} \in \mathbb{R}^{K \times K}$ be a matrix whose eigenvalues lie in the compact set $\left[c_1,c_2\right]$ where $c_1 > 0$. Then $\Lambda_k\left(\bm{M}^{1/2}\bm{D}\bm{M}^{1/2}\right) \in \left[d_kc_1,d_kc_2\right]$.
\end{lemma}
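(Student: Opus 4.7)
The plan is to reduce the statement to a standard Loewner-order sandwich followed by an $AB/BA$ eigenvalue identity.

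First I would observe that since $\bm{M}$ has its eigenvalues in $[c_1,c_2]$ with $c_1>0$, the expression $\bm{M}^{1/2}$ is only meaningful once we take $\bm{M}$ to be symmetric and positive definite, which is the setting in every application of the lemma in the paper. Under this reading, the spectral condition is equivalent to the operator inequality $c_1 I_K \preceq \bm{M} \preceq c_2 I_K$.

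Next, I would conjugate both sides of this inequality by $\bm{D}^{1/2}$, which is permissible since $\bm{D}\succeq \bm{0}$. Writing out the difference $\bm{D}^{1/2}\bm{M}\bm{D}^{1/2} - c_1 \bm{D} = \bm{D}^{1/2}(\bm{M}-c_1 I_K)\bm{D}^{1/2}$ shows it is positive semidefinite, and analogously for $c_2 \bm{D} - \bm{D}^{1/2}\bm{M}\bm{D}^{1/2}$, yielding
\begin{equation*}
c_1 \bm{D} \preceq \bm{D}^{1/2}\bm{M}\bm{D}^{1/2} \preceq c_2 \bm{D}.
\end{equation*}
Applying the monotonicity of eigenvalues under the Loewner order (Weyl's monotonicity principle) and using $\Lambda_k(c_i\bm{D})=c_i d_k$ for $i\in\{1,2\}$, we obtain $d_k c_1 \leq \Lambda_k(\bm{D}^{1/2}\bm{M}\bm{D}^{1/2}) \leq d_k c_2$.

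Finally, to transfer this to $\bm{M}^{1/2}\bm{D}\bm{M}^{1/2}$, I would use the standard fact that $AB$ and $BA$ share the same (multiset of) nonzero eigenvalues. Taking $A = \bm{M}^{1/2}\bm{D}^{1/2}$ and $B = \bm{D}^{1/2}\bm{M}^{1/2}$ gives $AB = \bm{M}^{1/2}\bm{D}\bm{M}^{1/2}$ and $BA = \bm{D}^{1/2}\bm{M}\bm{D}^{1/2}$. Because both matrices are $K\times K$ PSD, they share the same spectrum including multiplicities of zero, so $\Lambda_k(\bm{M}^{1/2}\bm{D}\bm{M}^{1/2}) = \Lambda_k(\bm{D}^{1/2}\bm{M}\bm{D}^{1/2})$ for every $k\in[K]$, completing the proof. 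There is no real obstacle here; the only mild subtlety is justifying the $AB/BA$ step when $\bm{D}$ has zero eigenvalues, which is handled by the fact that both products are PSD of identical size, so the zero eigenvalue multiplicity also agrees.
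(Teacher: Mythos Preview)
Your proof is correct and takes a genuinely different route from the paper. The paper argues directly via the Courant--Fischer min--max characterization: it rewrites $\Lambda_k(\bm{M}^{1/2}\bm{D}\bm{M}^{1/2})$ as $\max_{\dim\mathcal{L}=k}\min_{\bm{u}\in\mathcal{L}\setminus\{\bm{0}\}} \bm{u}^{\T}\bm{D}\bm{u}/\bm{u}^{\T}\bm{M}^{-1}\bm{u}$ and then plugs in the subspace spanned by the first $k$ standard basis vectors for the lower bound and the last $K-k+1$ basis vectors for the upper bound. Your argument instead sandwiches $\bm{D}^{1/2}\bm{M}\bm{D}^{1/2}$ between $c_1\bm{D}$ and $c_2\bm{D}$ in the Loewner order, invokes Weyl monotonicity, and transfers the result to $\bm{M}^{1/2}\bm{D}\bm{M}^{1/2}$ via the $AB/BA$ spectrum identity. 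Your approach is slightly slicker in that it avoids exhibiting explicit subspaces and handles both bounds in one stroke; the paper's approach is a touch more self-contained since it does not appeal to Weyl monotonicity as a black box (though of course that principle is itself proved by Courant--Fischer). Both rely implicitly on $\bm{M}$ being symmetric positive definite, which you correctly flag.
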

\begin{proof}
Let $\mathcal{L} \subseteq \mathbb{R}^K$ be a vector space and $\abs{\mathcal{L}} \leq K$ be its dimension. Then
\begin{align*}
    \Lambda_k\left(\bm{M}^{1/2}\bm{D}\bm{M}^{1/2}\right) &= \max_{\abs{\mathcal{L}} = k} \min_{\bm{u} \in \mathcal{L}\setminus \left\lbrace \bm{0} \right\rbrace} \frac{\bm{u}^T \bm{M}^{1/2}\bm{D}\bm{M}^{1/2} \bm{u}}{\bm{u}^T \bm{u}} = \max_{\abs{\mathcal{L}} = k} \min_{\bm{u} \in \mathcal{L}\setminus \left\lbrace \bm{0} \right\rbrace} \frac{\bm{u}^T \bm{D} \bm{u}}{\bm{u}^T \bm{M}^{-1} \bm{u}}\\
    &= \max_{\abs{\mathcal{L}} = k} \min_{\substack{\bm{u} \in \mathcal{L}\setminus \left\lbrace \bm{0} \right\rbrace \\ \bm{u}^T \bm{u} = 1}} \frac{\bm{u}^T \bm{D} \bm{u}}{\bm{u}^T \bm{M}^{-1} \bm{u}}.
\end{align*}
Consider the subspace $\mathcal{L}$ generated by the first $k \leq K$ canonical basis vectors. Then $\frac{\bm{u}^T \bm{D} \bm{u}}{\bm{u}^T \bm{M}^{-1} \bm{u}} \geq d_k c_1$, which gives the lower bound. For the upper bound,
\begin{align*}
    \Lambda_k\left(\bm{M}^{1/2}\bm{D}\bm{M}^{1/2}\right) = \min_{\abs{\mathcal{L}} = K-k+1} \max_{\substack{\bm{u} \in \mathcal{L}\setminus \left\lbrace \bm{0} \right\rbrace \\ \bm{u}^T \bm{u} = 1}} \frac{\bm{u}^T \bm{D} \bm{u}}{\bm{u}^T \bm{M}^{-1} \bm{u}}.
\end{align*}
Setting the subspace $\mathcal{L}$ to be the $K$th through $k$th canonical basis vectors gives us $\frac{\bm{u}^T \bm{D} \bm{u}}{\bm{u}^T \bm{M}^{-1} \bm{u}} \leq d_k c_2$.
\end{proof}

\begin{lemma}
\label{lemma:supp:RotatedCorrelation}
Suppose $\bm{E} \in \mathbb{R}^{p \times n}$ such that $\vecM\left(\bm{E}\right) = \bm{A}\vecM\left(\tilde{\bm{E}}\right)$ for some $\bm{A}\in\mathbb{R}^{np \times np}$, where the entries of $\tilde{\bm{E}}\in\mathbb{R}^{p \times n}$ are independent with uniformly bounded sub-Gaussian norm, and $\norm{\bm{A}}_2 = O(1)$ as $n,p \to \infty$. Let $\bm{V}=\E\left( p^{-1}\bm{E}^T\bm{E} \right)$ be a positive definite matrix with eigenvalues that are uniformly bounded above 0 and below $\infty$ as $n,p \to \infty$. Then 
\begin{align*}
    \norm{ p^{-1}\bm{E}^T\bm{E} - \bm{V} }_2 = O_P\left(n^{1/2}p^{-1/2}\right)
\end{align*}
as $n,p \to \infty$.
\end{lemma}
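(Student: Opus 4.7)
The plan is to reduce the operator-norm bound to a concentration bound for a single quadratic form via an $\varepsilon$-net argument on the unit sphere in $\mathbb{R}^n$, and then apply the Hanson--Wright inequality to that quadratic form. Since $p^{-1}\bm{E}^{\T}\bm{E}-\bm{V}$ is symmetric, $\norm{p^{-1}\bm{E}^{\T}\bm{E}-\bm{V}}_2 = \sup_{\bm{u} \in S^{n-1}}\abs{\bm{u}^{\T}(p^{-1}\bm{E}^{\T}\bm{E}-\bm{V})\bm{u}}$, and a standard argument shows this supremum is bounded by $2\sup_{\bm{u} \in \mathcal{N}}\abs{\bm{u}^{\T}(\cdot)\bm{u}}$ for $\mathcal{N}$ a $1/4$-net of the unit sphere, which has cardinality $\abs{\mathcal{N}}\leq 9^n$.

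For a fixed unit $\bm{u}\in\mathbb{R}^n$, I would write
\begin{align*}
\bm{u}^{\T}(p^{-1}\bm{E}^{\T}\bm{E})\bm{u} \;=\; p^{-1}\norm{\bm{E}\bm{u}}_2^2 \;=\; \vecM(\tilde{\bm{E}})^{\T}\bm{M}_{\bm{u}}\vecM(\tilde{\bm{E}}), \qquad \bm{M}_{\bm{u}} = p^{-1}\bm{A}^{\T}(\bm{u}\bm{u}^{\T}\otimes I_p)\bm{A}.
\end{align*}
Using $\norm{\bm{A}}_2 \leq c$, one checks the two quantities that enter Hanson--Wright: $\norm{\bm{M}_{\bm{u}}}_2 \leq c^2/p$ and $\norm{\bm{M}_{\bm{u}}}_F^2 \leq p^{-2}\norm{\bm{A}}_2^4\norm{\bm{u}\bm{u}^{\T}\otimes I_p}_F^2 = c^4/p$. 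Since the entries of $\tilde{\bm{E}}$ are independent with uniformly bounded sub-Gaussian norm, Hanson--Wright yields, for an absolute constant $c' > 0$,
\begin{align*}
\Prob\bigl(\abs{\bm{u}^{\T}(p^{-1}\bm{E}^{\T}\bm{E}-\bm{V})\bm{u}} > t\bigr) \;\leq\; 2\exp\bigl(-c'\min(t^2 p,\, tp)\bigr).
\end{align*}

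Setting $t = M(n/p)^{1/2}$ and applying a union bound over $\mathcal{N}$ gives failure probability at most $2\exp\{n\log 9 - c' M^2 n\}$, provided $n\leq p$ so that the quadratic regime $t^2 p$ dominates. Choosing $M$ large (depending on the desired $\epsilon > 0$) makes this tend to zero, which after rescaling by the factor of $2$ from the net argument shows $\norm{p^{-1}\bm{E}^{\T}\bm{E}-\bm{V}}_2 = O_P(n^{1/2}p^{-1/2})$.

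The main technical point, which is really the heart of the bound, is the calculation that $\norm{\bm{M}_{\bm{u}}}_F^2 = O(1/p)$ uniformly in $\bm{u} \in S^{n-1}$: this is where the rank-one structure $\bm{u}\bm{u}^{\T}$ contributes a Frobenius norm of $\sqrt{p}$ (rather than $\sqrt{np}$) to the Kronecker factor, which is the only reason the net argument with its $9^n$ overhead can be absorbed. If only the weaker bound $\norm{\bm{M}_{\bm{u}}}_F^2 = O(n/p)$ were available, the Hanson--Wright tail would not suffice to beat the net cardinality and the claim would fail. Everything else--polarization to handle bilinear forms $\bm{u}^{\T}(\cdot)\bm{v}$ if desired, and the passage from the net to the full sphere--is standard.
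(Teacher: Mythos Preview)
Your proposal is correct and follows essentially the same route as the paper's proof: both rewrite $p^{-1}\bm{u}^{\T}\bm{E}^{\T}\bm{E}\bm{u}$ as a quadratic form $\vecM(\tilde{\bm{E}})^{\T}\bm{M}_{\bm{u}}\vecM(\tilde{\bm{E}})$, bound $\norm{\bm{M}_{\bm{u}}}_2=O(p^{-1})$ and $\norm{\bm{M}_{\bm{u}}}_F^2=O(p^{-1})$, apply a Hanson--Wright type tail, and finish with a covering argument on $S^{n-1}$. The only cosmetic differences are that the paper first reduces to $\bm{V}=I_n$ and writes the Kronecker factor as $(\bm{v}\oplus\cdots\oplus\bm{v})(\bm{v}\oplus\cdots\oplus\bm{v})^{\T}$ rather than $\bm{u}\bm{u}^{\T}\otimes I_p$, and computes the Frobenius bound via the diagonal blocks of $\bm{A}\bm{A}^{\T}$ rather than the submultiplicativity inequality you use; the key observation you single out---that the rank-one structure forces $\norm{\bm{M}_{\bm{u}}}_F^2=O(p^{-1})$ uniformly in $\bm{u}$---is exactly the point the paper's calculation establishes as well.
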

\begin{proof}
Since
\begin{align*}
     \norm{ p^{-1}\bm{E}^T\bm{E} - \bm{V} }_2 \leq \norm{ \bm{V}^{-1} }_2 \norm{ p^{-1}\bm{V}^{-1/2}\bm{E}^T\bm{E}\bm{V}^{-1/2} - I_n }_2
\end{align*}
and $\norm{ \bm{V}^{-1} }_2 = O(1)$ as $n,p \to \infty$, it suffices to assume $\bm{V}=I_n$. First, for any unit vector $\bm{v} \in \mathbb{R}^n$,
\begin{align*}
    p^{-1}\bm{v}^T\bm{E}^T\bm{E}\bm{v} - 1 = p^{-1}\vecM\left(\tilde{\bm{E}}\right)^T \bm{A}^T \left(\bm{v}\oplus \cdots \oplus \bm{v}\right)\left(\bm{v}\oplus \cdots \oplus \bm{v}\right)^T\bm{A} \vecM\left(\tilde{\bm{E}}\right) - 1,
\end{align*}
where $\norm{\bm{v}\oplus \cdots \oplus \bm{v}}_2 = 1$, meaning,
\begin{align*}
    \norm{ \bm{A}^T \left(\bm{v}\oplus \cdots \oplus \bm{v}\right)\left(\bm{v}\oplus \cdots \oplus \bm{v}\right)^T\bm{A} }_2 \leq \norm{\bm{A}}_2^2.
\end{align*}
Let $\bm{B}_v = \bm{A}\bm{A}^T \left(\bm{v}\oplus \cdots \oplus \bm{v}\right)\left(\bm{v}\oplus \cdots \oplus \bm{v}\right)^T \bm{A}\bm{A}^T$ and define $\bm{B}_v^{(g)} \in \mathbb{R}^{n \times n}$ be the $g$th diagonal block of $\bm{B}_v$, where $g=1,\ldots,p$. Then we also have
\begin{align*}
   \norm{ \bm{A}^T \left(\bm{v}\oplus \cdots \oplus \bm{v}\right)\left(\bm{v}\oplus \cdots \oplus \bm{v}\right)^T\bm{A} }_F^2 = \sum\limits_{g=1}^p \bm{v}^T \bm{B}_v^{(g)}\bm{v} \leq p\norm{\bm{A}}_2^2.
\end{align*}
By Remark 2.10 in \citet{SubGaussVariance}, this implies that
\begin{align*}
    \Prob\left( \abs{p^{-1}\bm{v}^T\bm{E}^T\bm{E}\bm{v} - 1} \geq t \right) \leq 2\exp\left\lbrace -p\min\left( \tilde{c}^2t^2, \tilde{c}t \right) \right\rbrace
\end{align*}
for some constant $\tilde{c}>0$ that is not a function of $\bm{v}, n$ or $p$. A standard covering argument (e.g. Theorem 5.39 in \citet{Vershynin}) then gives us the result.
\end{proof}

\begin{lemma}
\label{lemma:supp:Preliminaries}
Let $c > 0$ be a large constant, and suppose $\bm{E} \in \mathbb{R}^{p \times n}$, $\vecM\left(\bm{E}\right) \edist \bm{A}\vecM\left(\tilde{\bm{E}}\right)$, $\norm{\bm{A}}_2 \leq c$ and the entries of $\tilde{\bm{E}} \in \mathbb{R}^{p \times n}$ are independent with mean 0, variance 1 and sub-Gaussian norm bounded above by $c$. Then the following hold for any $\bm{u}_1,\bm{u}_2 \in \mathbb{S}^{n-1}$, $\bm{\ell}_1,\bm{\ell}_2 \in \mathbb{S}^{p-1}$ and positive semi-definite matrix $\bm{V} \in \mathbb{R}^n$ with $\norm{\bm{V}}_2 \leq c$:
\begin{enumerate}[label=(\roman*)]
    \item $\Prob\left[\abs{ \Tr\left( \bm{E}^{T}\bm{E}\bm{V} \right) - \E\left\lbrace\Tr\left( \bm{E}^{T}\bm{E}\bm{V} \right)\right\rbrace } \geq t\left(np\right)^{1/2}\right] \leq 2\exp\left[ -\min\left\lbrace\tilde{c}t^2,\tilde{c}t\left(np\right)^{1/2}\right\rbrace \right]$\label{item:preliminaries:trace}
    \item $\Prob\left\lbrace \abs{ \bm{u}_1^T\bm{E}^{T}\bm{E}\bm{u}_2 - \E\left( \bm{u}_1^T\bm{E}^{T}\bm{E}\bm{u}_2 \right) } \geq tp^{1/2} \right\rbrace \leq 2\exp\left\lbrace -\min\left(\tilde{c}t^2,\tilde{c}t p^{1/2}\right) \right\rbrace$\label{item:preliminaries:E1E2}
    \item For $\bm{\Sigma}_i \in \mathbb{R}^{p \times p}$ the $i$th diagonal block of $\bm{A}\bm{A}^T$ and $\bm{a}_g \in \mathbb{R}^p$ the $g$th standard basis vector, $\E\left(\bm{E} \bm{E}^T\right) = \sum\limits_{i=1}^n \bm{\Sigma}_{i}$ and
    \begin{align*}
        &\Prob\left\lbrace \abs{ \bm{\ell}_1^T\bm{E}\bm{E}^T \bm{\ell}_2 - \E\left( \bm{\ell}_1^T\bm{E}\bm{E}^T \bm{\ell}_2 \right) } \geq tn^{1/2} \right\rbrace \leq 2\exp\left\lbrace -\min\left(\tilde{c}t,\tilde{c}tn^{1/2}\right) \right\rbrace\\
        &\abs{\E\left(\bm{\ell}_1^T\bm{E}\bm{E}^T \bm{a}_g\right)} \leq n\max_{h \in [p]}\left(\bm{\ell}_{1_h}\right)\norm{n^{-1}\sum\limits_{i=1}^n \bm{\Sigma}_i}_{1}, \quad g \in [p]
    \end{align*}\label{item:preliminaries:ell}
    \item $\Prob\left\lbrace \abs{ \bm{\ell}_1^T\bm{E}\bm{u}_1 } \geq t \right\rbrace \leq \exp\left\lbrace -\tilde{c}t^2 \right\rbrace$\label{item:preliminaries:trivial}
\end{enumerate}
for all $t \geq 0$, where $\tilde{c} > 0$ only depends on $c$.
\end{lemma}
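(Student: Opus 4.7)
The plan is to reduce each of the four claims to a concentration inequality for either a linear functional or a symmetric quadratic form in the vector $\vecM(\tilde{\bm{E}})\in\mathbb{R}^{np}$, whose entries are independent with uniformly bounded sub-Gaussian norm. The identity $\vecM(\bm{E})\edist \bm{A}\vecM(\tilde{\bm{E}})$ rewrites any quadratic form in $\bm{E}$ as $\vecM(\tilde{\bm{E}})^{\T}\bm{M}\vecM(\tilde{\bm{E}})$ with $\bm{M}=\bm{A}^{\T}\bm{M}_0\bm{A}$ for an appropriate symmetric $\bm{M}_0$, and the bound $\norm{\bm{A}}_2\leq c$ gives $\norm{\bm{M}}_2\leq c^2\norm{\bm{M}_0}_2$ and $\norm{\bm{M}}_F\leq c^2\norm{\bm{M}_0}_F$. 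I will then appeal to the Hanson--Wright inequality in the form $\Prob\{\abs{X^{\T}\bm{M}X-\E(X^{\T}\bm{M}X)}\geq t\}\leq 2\exp\{-\tilde{c}\min(t^2/\norm{\bm{M}}_F^2,\,t/\norm{\bm{M}}_2)\}$.

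For item \ref{item:preliminaries:trace}, $\Tr(\bm{E}^{\T}\bm{E}\bm{V})=\vecM(\bm{E})^{\T}(\bm{V}\otimes I_p)\vecM(\bm{E})$, so $\bm{M}_0=\bm{V}\otimes I_p$ has $\norm{\bm{M}_0}_2\leq c$ and $\norm{\bm{M}_0}_F=p^{1/2}\norm{\bm{V}}_F\leq c(np)^{1/2}$; rescaling $t\leftarrow t(np)^{1/2}$ in Hanson--Wright delivers the claim. For item \ref{item:preliminaries:E1E2}, the vectorisation identity $\vecM(\bm{A}\bm{X}\bm{B})=(\bm{B}^{\T}\otimes\bm{A})\vecM(\bm{X})$ yields $\bm{u}_1^{\T}\bm{E}^{\T}\bm{E}\bm{u}_2=\vecM(\bm{E})^{\T}(\bm{u}_1\bm{u}_2^{\T}\otimes I_p)\vecM(\bm{E})$, which, after the (value-preserving) symmetrisation $\bm{M}_0=\tfrac{1}{2}(\bm{u}_1\bm{u}_2^{\T}+\bm{u}_2\bm{u}_1^{\T})\otimes I_p$, satisfies $\norm{\bm{M}_0}_2\leq 1$ and $\norm{\bm{M}_0}_F\leq p^{1/2}$; Hanson--Wright with $t\leftarrow tp^{1/2}$ again gives the stated tail.

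For item \ref{item:preliminaries:ell}, an analogous calculation produces $\bm{\ell}_1^{\T}\bm{E}\bm{E}^{\T}\bm{\ell}_2=\vecM(\bm{E})^{\T}\{I_n\otimes\tfrac{1}{2}(\bm{\ell}_1\bm{\ell}_2^{\T}+\bm{\ell}_2\bm{\ell}_1^{\T})\}\vecM(\bm{E})$, with $\norm{\bm{M}_0}_2\leq 1$ and $\norm{\bm{M}_0}_F\leq n^{1/2}$, so rescaling $t\leftarrow tn^{1/2}$ in Hanson--Wright yields the desired concentration. The mean identity follows by writing $\E\{\vecM(\bm{E})\vecM(\bm{E})^{\T}\}=\bm{A}\bm{A}^{\T}$ as an $n\times n$ array of $p\times p$ blocks and unfolding $\bm{E}\bm{E}^{\T}=\sum_{i=1}^n\bm{E}_{\bigcdot i}\bm{E}_{\bigcdot i}^{\T}$, whose expectation is precisely the sum of the diagonal blocks $\bm{\Sigma}_i$; the elementwise inequality is the standard estimate $\abs{\bm{\ell}_1^{\T}(\sum_i\bm{\Sigma}_i)\bm{a}_g}\leq \max_h\abs{\bm{\ell}_{1h}}\cdot\norm{\sum_i\bm{\Sigma}_i}_1$. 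Finally, for item \ref{item:preliminaries:trivial}, the representation $\bm{\ell}_1^{\T}\bm{E}\bm{u}_1=(\bm{u}_1\otimes\bm{\ell}_1)^{\T}\bm{A}\vecM(\tilde{\bm{E}})$ expresses the statistic as a linear combination of independent sub-Gaussian coordinates with weight vector of Euclidean norm at most $c$, so the tail is the textbook Chernoff bound for sub-Gaussian sums.

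There is no real obstacle: the argument is bookkeeping combined with two off-the-shelf tools (Hanson--Wright for the three quadratic forms and a Chernoff bound for the linear functional). The only mild subtlety is the symmetrisation step in \ref{item:preliminaries:E1E2} and \ref{item:preliminaries:ell}, which does not alter the value of the quadratic form but is essential so that the spectral and Frobenius norms used in Hanson--Wright are those of a symmetric matrix; once that is done, the tensor structure $(\cdot)\otimes I_p$ and $I_n\otimes(\cdot)$ cleanly produces the $(np)^{1/2}$, $p^{1/2}$ and $n^{1/2}$ scalings in \ref{item:preliminaries:trace}, \ref{item:preliminaries:E1E2} and \ref{item:preliminaries:ell} respectively.
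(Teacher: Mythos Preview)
Your proposal is correct and is essentially the paper's own argument: the paper likewise vectorises each expression, writes it as $\vecM(\tilde{\bm{E}})^{\T}\bm{A}^{\T}\bm{M}_0\bm{A}\,\vecM(\tilde{\bm{E}})$, bounds the spectral and Frobenius norms of $\bm{A}^{\T}\bm{M}_0\bm{A}$, and invokes the Hanson--Wright inequality (there phrased via the reference used in the proof of Lemma~\ref{lemma:supp:RotatedCorrelation}). Your explicit symmetrisation and naming of Hanson--Wright make the same reasoning a bit more transparent, but the decomposition and the resulting $(np)^{1/2}$, $p^{1/2}$, $n^{1/2}$ scalings are identical.
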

\begin{proof}
The Inequality in \ref{item:preliminaries:trivial} is trivial and follows because $\vecM\left(\tilde{\bm{E}}\right)$ has sub-Gaussian norm bounded by $c$, and \ref{item:preliminaries:E1E2}, \ref{item:preliminaries:ell} follow by the proof of Lemma \ref{lemma:supp:RotatedCorrelation}. To prove \ref{item:preliminaries:trace}, we see that
\begin{align*}
    \Tr\left( \bm{E}^{T}\bm{E}\bm{V} \right) - \E\left\lbrace\Tr\left( \bm{E}^{T}\bm{E}\bm{V} \right)\right\rbrace \edist & \vecM\left(\tilde{\bm{E}}\right)^T \bm{A}^T\left(I_p \otimes \bm{V}\right)\bm{A}\vecM\left(\tilde{\bm{E}}\right)\\
    & - \E\left\lbrace \vecM\left(\tilde{\bm{E}}\right)^T \bm{A}^T\left(I_p \otimes \bm{V}\right)\bm{A}\vecM\left(\tilde{\bm{E}}\right) \right\rbrace,
\end{align*}
where $\norm{\bm{A}^T\left(I_p \otimes \bm{V}\right)\bm{A}}_2 \leq c^3$ and $\norm{\bm{A}^T\left(I_p \otimes \bm{V}\right)\bm{A}}_F^2 \leq np c^6$. The result then follows by the proof of Lemma \ref{lemma:supp:RotatedCorrelation}.
\end{proof}

\begin{corollary}
\label{corollary:supp:Preliminaries}
Under the conditions of Lemma \ref{lemma:supp:Preliminaries}, $\norm{\bm{\ell}_1^T \bm{E}}_2, p^{-1/2}\norm{\bm{u}_1^T\bm{E}^T \bm{E} - \E\left( \bm{u}_1^T\bm{E}^T \bm{E} \right)}_2 = O_P\left(n^{1/2}\right)$.
\end{corollary}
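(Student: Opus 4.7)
The two bounds can be proved separately, and both follow by feeding the exponential tail bounds of Lemma \ref{lemma:supp:Preliminaries} into standard decoupling arguments.

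For the first bound, the key observation is that since $\vecM(\bm{E}) \edist \bm{A}\vecM(\tilde{\bm{E}})$ and the entries of $\tilde{\bm{E}}$ are mean zero, we have $\E(\bm{E}) = \bm{0}$, so
\[
\norm{\bm{\ell}_1^T \bm{E}}_2^2 = \bm{\ell}_1^T \bm{E}\bm{E}^T \bm{\ell}_1.
\]
Its expectation is $\bm{\ell}_1^T \sum_{i=1}^n \bm{\Sigma}_i \bm{\ell}_1 \leq n c^2$, because each diagonal block $\bm{\Sigma}_i$ of $\bm{A}\bm{A}^T$ inherits the bound $\norm{\bm{\Sigma}_i}_2 \leq \norm{\bm{A}}_2^2 \leq c^2$. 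Item \ref{item:preliminaries:ell} of Lemma \ref{lemma:supp:Preliminaries} then shows that $\bm{\ell}_1^T\bm{E}\bm{E}^T\bm{\ell}_1$ deviates from this mean by $O_P(n^{1/2})$, so $\norm{\bm{\ell}_1^T\bm{E}}_2^2 = O_P(n)$, which gives the first half.

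For the second bound I would use duality plus a volumetric $\epsilon$-net. Write
\[
\norm{\bm{u}_1^T\bm{E}^T\bm{E} - \E(\bm{u}_1^T\bm{E}^T\bm{E})}_2 = \sup_{\bm{u}_2 \in \mathbb{S}^{n-1}} \bigl|\bm{u}_1^T(\bm{E}^T\bm{E} - \E(\bm{E}^T\bm{E}))\bm{u}_2\bigr|,
\]
and fix a $1/2$-net $\mathcal{N}$ of $\mathbb{S}^{n-1}$ with $|\mathcal{N}| \leq 5^n$, noting that the above supremum is controlled by $2\sup_{\bm{u}_2 \in \mathcal{N}} |\bm{u}_1^T(\bm{E}^T\bm{E} - \E(\bm{E}^T\bm{E}))\bm{u}_2|$. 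For each fixed $\bm{u}_2 \in \mathcal{N}$, item \ref{item:preliminaries:E1E2} of Lemma \ref{lemma:supp:Preliminaries} gives
\[
\Prob\bigl\{|\bm{u}_1^T(\bm{E}^T\bm{E} - \E(\bm{E}^T\bm{E}))\bm{u}_2| \geq tp^{1/2}\bigr\} \leq 2\exp\bigl\{-\min(\tilde{c}t^2,\tilde{c}tp^{1/2})\bigr\}.
\]
Choosing $t = Mn^{1/2}$ for $M$ large and union-bounding over $\mathcal{N}$ produces a failure probability of at most $2 \cdot 5^n \exp\bigl\{-\min(\tilde{c}M^2 n, \tilde{c}M(np)^{1/2})\bigr\}$. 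In the regime $p \gtrsim n$ relevant to this paper, both exponents dominate $n\log 5$ once $M$ is taken sufficiently large, yielding the claimed $O_P((np)^{1/2})$ rate after dividing by $p^{1/2}$.

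The only mildly non-routine step is the second part: one must arrange the net argument so that both branches of the sub-exponential tail from item \ref{item:preliminaries:E1E2} can absorb the $n\log 5$ cost of the union bound, which is why the choice $t \asymp n^{1/2}$ appears naturally. The first bound is a direct application of item \ref{item:preliminaries:ell} with $\bm{\ell}_1 = \bm{\ell}_2$, so no additional machinery is needed there.
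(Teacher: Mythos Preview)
Your proof is correct and follows exactly the route the paper intends: the first bound is a direct application of item \ref{item:preliminaries:ell} with $\bm{\ell}_1=\bm{\ell}_2$, and the second is item \ref{item:preliminaries:E1E2} combined with a standard $\epsilon$-net/union-bound argument over $\mathbb{S}^{n-1}$, which is precisely what the paper's one-line proof (``These follow by \ref{item:preliminaries:ell} and \ref{item:preliminaries:E1E2}'') is gesturing at. Your explicit observation that the union bound needs $p\gtrsim n$ to absorb the $5^n$ cost in the sub-exponential branch is correct and worth recording; the paper leaves this implicit because Assumption~\ref{assumption:DependenceE}\ref{item:DependenceE:Gamma} is in force wherever the corollary is invoked.
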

\begin{proof}
These follow by \ref{item:preliminaries:ell} and \ref{item:preliminaries:E1E2} in the statement of Lemma \ref{lemma:supp:Preliminaries}. 
\end{proof}

\begin{lemma}[Proposition 5.1 of \cite{EigenvalueBall}]
\label{lemma:supp:EigApprox}
Let $\bm{A} \in \mathbb{R}^{K \times K}$ be a symmetric matrix and $\bm{v} \in \mathbb{R}^K$ be a unit vector such that $\bm{v}^T\bm{A}\bm{v} = \delta$. Then $\bm{A}$ has an eigenvalue in the closed ball centered at $\delta$ with radius $\norm{\bm{A}\bm{v} - \delta\bm{v}}_2$.
\end{lemma}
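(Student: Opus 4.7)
The plan is to prove this via spectral decomposition, since $\bm{A}$ is symmetric. Let $\bm{u}_1,\ldots,\bm{u}_K$ be an orthonormal eigenbasis of $\bm{A}$ with corresponding eigenvalues $\lambda_1,\ldots,\lambda_K$, and expand $\bm{v}=\sum_{i=1}^K c_i \bm{u}_i$, noting that $\sum_{i=1}^K c_i^2=\norm{\bm{v}}_2^2=1$.

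The key observation is that both the quadratic form and the residual decompose diagonally in this basis. Specifically, $\bm{v}^{\T}\bm{A}\bm{v}=\sum_{i=1}^K c_i^2 \lambda_i = \delta$, and
\begin{align*}
\bm{A}\bm{v}-\delta\bm{v} = \sum_{i=1}^K c_i(\lambda_i-\delta)\bm{u}_i, \quad \text{so} \quad \norm{\bm{A}\bm{v}-\delta\bm{v}}_2^2 = \sum_{i=1}^K c_i^2 (\lambda_i-\delta)^2.
\end{align*}

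Let $r=\min_{i\in[K]}\abs{\lambda_i-\delta}$ denote the distance from $\delta$ to the spectrum of $\bm{A}$. Bounding each summand from below yields
\begin{align*}
\norm{\bm{A}\bm{v}-\delta\bm{v}}_2^2 = \sum_{i=1}^K c_i^2 (\lambda_i-\delta)^2 \geq r^2 \sum_{i=1}^K c_i^2 = r^2,
\end{align*}
so $r \leq \norm{\bm{A}\bm{v}-\delta\bm{v}}_2$, which is exactly the claim. There is no real obstacle here; this is a one-line consequence of orthogonal diagonalisation, and the proof is essentially a restatement of the fact that the distance from a point to a finite set equals the minimum of the pointwise distances. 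The only minor care required is to note that the symmetry of $\bm{A}$ is used to guarantee both the real spectrum and the orthonormal eigenbasis that makes the Pythagorean identity above hold.
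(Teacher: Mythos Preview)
Your proof is correct. The paper does not actually give its own proof of this lemma; it simply cites it as Proposition~5.1 of \cite{EigenvalueBall}, so there is nothing to compare against beyond noting that your spectral-decomposition argument is the standard route to this classical residual bound.
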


\begin{lemma}
\label{lemma:supp:EigVector}
Let $\bm{H} \in \mathbb{R}^{K \times K}$ be a symmetric matrix, $\bm{V} \in \mathbb{R}^{K \times r}$ have orthonormal columns, $\bm{D} = \diag\left(d_1,\ldots,d_r\right) \succ \bm{0}$, $\bm{\epsilon} \in \mathbb{R}^{r \times r}$ and $\bm{W} \in \mathbb{R}^{K \times r}$ have orthonormal columns, where $r < K$. Suppose $\bm{H}\bm{V} = \bm{V}\bm{D} + \bm{W}\bm{\epsilon}$. Then if $\bm{V}_{\epsilon} \in \mathbb{R}^{K \times r}$ is any orthonormal matrix whose columns are eigenvalues of $\bm{H}$ such that $\mathop{\min}\limits_{k \in [K-r]}\abs{d_j - \Lambda_k\left( P_{V_{\epsilon}}^{\perp}\bm{H}P_{V_{\epsilon}}^{\perp} \right)} > 0$ for all $j \in [r]$,
\begin{align*}
    \norm{P_{V_{\epsilon}} - P_{V}P_{V_{\epsilon}}}_F= \norm{P_{V} - P_{V_{\epsilon}}P_{V}}_F \leq \left\lbrace \sum\limits_{j=1}^r \frac{\bm{\epsilon}_{\bigcdot j}^T \bm{\epsilon}_{\bigcdot j}}{\mathop{\min}\limits_{k \in[K]}\left\lbrace d_j - \Lambda_k\left( P_{V_{\epsilon}}^{\perp}\bm{H}P_{V_{\epsilon}}^{\perp} \right)\right\rbrace^2} \right\rbrace^{1/2}.
\end{align*}
\end{lemma}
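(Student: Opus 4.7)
\medskip

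\noindent\textbf{Proof proposal.} The plan is to recast the hypothesis $\bm{H}\bm{V}=\bm{V}\bm{D}+\bm{W}\bm{\epsilon}$ as a Sylvester equation after projecting onto $\im(\bm{V}_{\epsilon})^{\perp}$, then diagonalize the resulting operator to obtain a closed-form bound on $P_{V_{\epsilon}}^{\perp}\bm{V}$ in the style of a Davis--Kahan $\sin\Theta$ inequality.

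First, the equality $\norm{P_{V_{\epsilon}}-P_{V}P_{V_{\epsilon}}}_{F}=\norm{P_{V}-P_{V_{\epsilon}}P_{V}}_{F}$ is a standard identity: both sides equal $\norm{P_{V_{\epsilon}}^{\perp}P_{V}}_{F}$ by trace cyclicity and the fact that $P_{V}$ and $P_{V_{\epsilon}}$ are symmetric idempotents. So it suffices to bound $\norm{P_{V_{\epsilon}}^{\perp}P_{V}}_{F}^{2}=\norm{P_{V_{\epsilon}}^{\perp}\bm{V}}_{F}^{2}$, where the last equality uses orthonormality of the columns of $\bm{V}$.

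Next, I would left-multiply the identity $\bm{H}\bm{V}=\bm{V}\bm{D}+\bm{W}\bm{\epsilon}$ by $P_{V_{\epsilon}}^{\perp}$. Since the columns of $\bm{V}_{\epsilon}$ are eigenvectors of the symmetric matrix $\bm{H}$, both $\im(\bm{V}_{\epsilon})$ and its orthogonal complement are $\bm{H}$-invariant, so $P_{V_{\epsilon}}^{\perp}\bm{H}=P_{V_{\epsilon}}^{\perp}\bm{H}P_{V_{\epsilon}}^{\perp}$. Writing $\bm{M}=P_{V_{\epsilon}}^{\perp}\bm{H}P_{V_{\epsilon}}^{\perp}$ and $\bm{N}=P_{V_{\epsilon}}^{\perp}\bm{V}$ yields the Sylvester relation
\begin{equation*}
\bm{M}\bm{N}-\bm{N}\bm{D}=P_{V_{\epsilon}}^{\perp}\bm{W}\bm{\epsilon}.
\end{equation*}
Spectrally decompose $\bm{M}=\bm{U}_{M}\bm{\Lambda}_{M}\bm{U}_{M}^{\T}$ with $\bm{\Lambda}_{M}=\diag\{\Lambda_{1}(\bm{M}),\ldots,\Lambda_{K}(\bm{M})\}$, set $\widetilde{\bm{N}}=\bm{U}_{M}^{\T}\bm{N}$, and $\bm{R}=\bm{U}_{M}^{\T}P_{V_{\epsilon}}^{\perp}\bm{W}\bm{\epsilon}$. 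Then entrywise
\begin{equation*}
\widetilde{\bm{N}}_{kj}\{\Lambda_{k}(\bm{M})-d_{j}\}=\bm{R}_{kj}, \quad k\in[K],\, j\in[r].
\end{equation*}

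The routine but careful step is handling the $r$ zero eigenvalues of $\bm{M}$ coming from its action on $\im(\bm{V}_{\epsilon})$: for any eigenvector $\bm{u}_{k}$ with $\bm{u}_{k}\in\im(\bm{V}_{\epsilon})$ one has $P_{V_{\epsilon}}^{\perp}\bm{u}_{k}=\bm{0}$, so both $\widetilde{\bm{N}}_{kj}=\bm{u}_{k}^{\T}P_{V_{\epsilon}}^{\perp}\bm{V}=0$ and $\bm{R}_{kj}=\bm{u}_{k}^{\T}P_{V_{\epsilon}}^{\perp}\bm{W}\bm{\epsilon}_{\bigcdot j}=0$, so those coordinates contribute nothing and the division is legitimate on the remaining $K-r$ coordinates where $d_{j}-\Lambda_{k}(\bm{M})\neq 0$ by hypothesis. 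Solving for $\widetilde{\bm{N}}_{kj}$ and taking Frobenius norms gives
\begin{equation*}
\norm{\bm{N}}_{F}^{2}=\norm{\widetilde{\bm{N}}}_{F}^{2}=\sum_{j=1}^{r}\sum_{k=1}^{K}\frac{\bm{R}_{kj}^{2}}{\{d_{j}-\Lambda_{k}(\bm{M})\}^{2}}\leq\sum_{j=1}^{r}\frac{\norm{\bm{R}_{\bigcdot j}}_{2}^{2}}{\min_{k\in[K]}\{d_{j}-\Lambda_{k}(\bm{M})\}^{2}},
\end{equation*}
with the convention that $0/0$ terms equal $0$. Finally, because $\bm{W}$ has orthonormal columns and $P_{V_{\epsilon}}^{\perp}$ is a contraction, $\norm{\bm{R}_{\bigcdot j}}_{2}=\norm{P_{V_{\epsilon}}^{\perp}\bm{W}\bm{\epsilon}_{\bigcdot j}}_{2}\leq\norm{\bm{\epsilon}_{\bigcdot j}}_{2}$, and taking square roots yields the claimed bound. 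The only genuinely delicate point is verifying that the degenerate $\Lambda_{k}(\bm{M})=0$ modes do not spoil the Sylvester inversion; everything else is a direct calculation.
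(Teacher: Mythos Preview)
Your proof is correct and follows essentially the same route as the paper: both reduce to bounding $\norm{P_{V_{\epsilon}}^{\perp}\bm{V}}_{F}$, derive the Sylvester-type relation $(\bm{H}-d_{j}I)\bm{R}_{\bigcdot j}=P_{V_{\epsilon}}^{\perp}\bm{W}\bm{\epsilon}_{\bigcdot j}$ on $\im(\bm{V}_{\epsilon})^{\perp}$ (your $\bm{M}\bm{N}-\bm{N}\bm{D}=P_{V_{\epsilon}}^{\perp}\bm{W}\bm{\epsilon}$ is the same equation), and conclude via the eigendecomposition of the restricted operator. The paper is terser at the final step, simply noting that the columns of $\bm{R}$ lie in $\im(\bm{V}_{\epsilon})^{\perp}$, whereas you spell out the handling of the spurious zero eigenvalues of $\bm{M}$ explicitly; this is a presentational difference, not a substantive one.
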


\begin{proof}
Let $\bm{D}_{\epsilon} \in \mathbb{R}^{r \times r}$ be the eigenvalues of $\bm{H}$ associated with the eigenvectors $\bm{V}_{\epsilon}$ and define $\bm{R} = P_{V_{\epsilon}}^{\perp}\bm{V}$. By the statement of the theorem, we need only show that
\begin{align*}
    \norm{\bm{R}}_F \leq \left\lbrace \sum\limits_{j=1}^r \frac{\bm{\epsilon}_{\bigcdot j}^T \bm{\epsilon}_{\bigcdot j}}{\mathop{\min}\limits_{k \in[K]}\left\lbrace d_j - \Lambda_k\left( P_{V_{\epsilon}}^{\perp}\bm{H}P_{V_{\epsilon}}^{\perp} \right)\right\rbrace^2} \right\rbrace^{1/2}.
\end{align*}
First,
\begin{align*}
    P_{V_{\epsilon}}\bm{V}\bm{D} + \bm{R}\bm{D} + \bm{W}\bm{\epsilon} = \bm{H}\bm{V} = \bm{H}P_{V_{\epsilon}}\bm{V} + \bm{H}\bm{R} = \bm{V}_{\epsilon}\bm{D}_{\epsilon} \bm{V}_{\epsilon}^T \bm{V} + \bm{H}\bm{R}.
\end{align*}
Next,
\begin{align*}
    \bm{D}_{\epsilon}\bm{V}_{\epsilon}^T \bm{V} = \bm{V}_{\epsilon}^T \bm{H}\bm{V} = \bm{V}_{\epsilon}^T \bm{V} \bm{D} + \bm{V}_{\epsilon}^T\bm{W}\bm{\epsilon}.
\end{align*}
Therefore,
\begin{align*}
    \bm{R}\bm{D} - \bm{H}\bm{R} = \bm{V}_{\epsilon}\left[\bm{D}_{\epsilon}\bm{V}_{\epsilon}^T \bm{V} - \bm{V}_{\epsilon}^T \bm{V}\bm{D}\right] - \bm{W}\bm{\epsilon} = -P_{V_{\epsilon}}^{\perp}\bm{W}\bm{\epsilon},
\end{align*}
which implies that
\begin{align*}
    \left(\bm{H} - d_j I_K\right)\bm{R}_{\bigcdot j} = P_{V_{\epsilon}}^{\perp}\bm{W}\bm{\epsilon}_{\bigcdot j}, \quad j \in [r].
\end{align*}
Since the columns of $\bm{R}$ lie in the orthogonal complement of $\bm{V}_{\epsilon}$, this completes the proof.
\end{proof}

\begin{corollary}
\label{corollary:supp:NormSpaceVhat}
Under the conditions of Lemma \ref{lemma:supp:EigVector},
\begin{align*}
    \norm{P_{V_{\epsilon}} - P_{V}}_F \leq 2 \left\lbrace \sum\limits_{j=1}^r \frac{\bm{\epsilon}_{\bigcdot j}^T \bm{\epsilon}_{\bigcdot j}}{\mathop{\min}\limits_{k \in[K-r]}\left\lbrace d_j - \Lambda_k\left( P_{V_{\epsilon}}^{\perp}\bm{H}P_{V_{\epsilon}}^{\perp} \right)\right\rbrace^2} \right\rbrace^{1/2}.. 
\end{align*}
\end{corollary}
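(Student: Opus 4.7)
The plan is to reduce the inequality for $\|P_{V_\epsilon} - P_V\|_F$ to two applications of Lemma \ref{lemma:supp:EigVector} via an additive decomposition of the difference of projections. The starting observation is the algebraic identity
\begin{align*}
    P_{V_\epsilon} - P_V = \bigl( P_{V_\epsilon} - P_V P_{V_\epsilon} \bigr) + \bigl( P_V P_{V_\epsilon} - P_V \bigr) = P_V^\perp P_{V_\epsilon} - P_V P_{V_\epsilon}^\perp,
\end{align*}
which I would obtain by adding and subtracting $P_V P_{V_\epsilon}$. A single application of the triangle inequality for the Frobenius norm then yields $\|P_{V_\epsilon} - P_V\|_F \leq \|P_{V_\epsilon} - P_V P_{V_\epsilon}\|_F + \|P_V - P_{V_\epsilon} P_V\|_F$, after using that $\|P_V P_{V_\epsilon}^\perp\|_F = \|P_{V_\epsilon}^\perp P_V\|_F = \|P_V - P_{V_\epsilon}P_V\|_F$ by invariance of the Frobenius norm under transposition.

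The next step is to bound each of these two terms by directly invoking Lemma \ref{lemma:supp:EigVector}. The first term, $\|P_{V_\epsilon} - P_V P_{V_\epsilon}\|_F$, is exactly the quantity estimated on the left-hand side of Lemma \ref{lemma:supp:EigVector}. For the second term, $\|P_V - P_{V_\epsilon} P_V\|_F$, Lemma \ref{lemma:supp:EigVector} also gives the same bound since its statement records both $\|P_{V_\epsilon} - P_V P_{V_\epsilon}\|_F$ and $\|P_V - P_{V_\epsilon} P_V\|_F$ as equal. Summing the two bounds produces the factor of $2$ in the conclusion.

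There is essentially no analytic obstacle beyond verifying that the hypotheses of Lemma \ref{lemma:supp:EigVector} carry over unchanged, since the corollary is stated under the same assumptions; the only subtlety worth flagging is making sure the denominator $\min_{k\in[K-r]}\{d_j - \Lambda_k(P_{V_\epsilon}^\perp H P_{V_\epsilon}^\perp)\}^2$ is the same for both applications, which is automatic because both norms are bounded by the same right-hand side of Lemma \ref{lemma:supp:EigVector}. Thus the entire argument is a short decomposition-plus-triangle-inequality step layered on top of the lemma.
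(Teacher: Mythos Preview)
Your proposal is correct and follows essentially the same approach as the paper: the paper's proof writes $P_{V_\epsilon} - P_V = (P_{V_\epsilon} - P_V P_{V_\epsilon}) - P_V P_{V_\epsilon}^\perp$, applies the triangle inequality, and then uses the equality $\|P_{V_\epsilon}^\perp P_V\|_F = \|P_{V_\epsilon} - P_V P_{V_\epsilon}\|_F$ from Lemma~\ref{lemma:supp:EigVector} to combine the two terms into $2\|P_{V_\epsilon} - P_V P_{V_\epsilon}\|_F$. Your decomposition and use of the lemma are the same up to cosmetic reordering.
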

\begin{proof}
\begin{align*}
    \norm{P_{V_{\epsilon}} - P_{V}}_F = \norm{\left(P_{V_{\epsilon}} - P_{V}P_{V_{\epsilon}}\right) -  P_VP_{V_{\epsilon}}^{\perp}}_F \leq \norm{P_{V_{\epsilon}} - P_{V}P_{V_{\epsilon}}}_F + \norm{P_{V_{\epsilon}}^{\perp} P_V}_F = 2\norm{P_{V_{\epsilon}} - P_{V}P_{V_{\epsilon}}}_F.
\end{align*}
\end{proof}

\begin{lemma}
\label{lemma:supp:StochEqui}
For $\bm{V}\left(\bm{\theta}\right) = \sum_{j=1}^b \bm{\theta}_j \bm{B}_j$ and $\bm{E} \in \mathbb{R}^{p \times n}$ a random matrix, let $\bm{S} = p^{-1}\bm{E}^{\T}\bm{E}$ and
\begin{align*}
    f\left(\bm{\theta}\right) = -n^{-1}\log\left\lbrace \abs{\bm{V}\left(\bm{\theta}\right)} \right\rbrace - n^{-1}\Tr\left[ \bm{S}\left\lbrace \bm{V}\left(\bm{\theta}\right) \right\rbrace^{-1} \right].
\end{align*}
If $\bm{B}_1,\ldots,\bm{B}_b$ and $\bm{E}$ satisfies Assumptions \ref{assumption:CandL} and \ref{assumption:DependenceE} and $p \gtrsim n$, $\abs{f\left\lbrace \left(\bar{v}_1,\ldots,\bar{v}_b\right)^{\T} \right\rbrace - \E\left\lbrace f\left(\bar{\bm{v}}\right) \right\rbrace} = o_P(1)$ and $f\left(\bm{\theta}\right)$ is stochastically equicontinuous on $\Theta_*$ as $n \to \infty$, where $\Theta_*$ is defined in Assumption \ref{assumption:FALCO}.
\end{lemma}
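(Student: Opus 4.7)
The plan is to establish the two claims separately. For pointwise convergence at $\bar{\bm{v}}$, observe that $\bm{V}(\bar{\bm{v}}) = \bar{\bm{V}}$ and the deterministic term $-n^{-1}\log|\bm{V}(\bar{\bm{v}})|$ cancels with its expectation, so
\begin{align*}
f(\bar{\bm{v}}) - \E\{f(\bar{\bm{v}})\} = -(np)^{-1}\bigl(\Tr[\bm{E}^{\T}\bm{E}\bar{\bm{V}}^{-1}] - \E\Tr[\bm{E}^{\T}\bm{E}\bar{\bm{V}}^{-1}]\bigr).
\end{align*}
Since $\|\bar{\bm{V}}^{-1}\|_2 \leq c$ by Assumption~\ref{assumption:CandL}\ref{item:assumErrors:V}, I would apply item~\ref{item:preliminaries:trace} of Lemma~\ref{lemma:supp:Preliminaries} with $\bm{V} = \bar{\bm{V}}^{-1}$ to obtain a deviation of order $O_P\{(np)^{1/2}\}$, which when divided by $np$ gives $O_P\{(np)^{-1/2}\} = o_P(1)$ as $n,p \to \infty$ (using $p \gtrsim n$).

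For stochastic equicontinuity, the plan is to establish that $f$ is Lipschitz on $\Theta_*$ with a Lipschitz constant that is $O_P(1)$, which immediately implies stochastic equicontinuity. The gradient has components
\begin{align*}
\frac{\partial f}{\partial \bm{\theta}_j}(\bm{\theta}) = -n^{-1}\Tr\bigl[\bm{V}(\bm{\theta})^{-1}\bm{B}_j\bigr] + n^{-1}\Tr\bigl[\bm{S}\,\bm{V}(\bm{\theta})^{-1}\bm{B}_j\bm{V}(\bm{\theta})^{-1}\bigr].
\end{align*}
On $\Theta_*$, Assumptions~\ref{assumption:CandL}\ref{item:assumErrors:V} and~\ref{assumption:FALCO} give $\|\bm{V}(\bm{\theta})^{-1}\|_2 \leq 2c$ and $\|\bm{B}_j\|_2 \leq c$, so the first trace is bounded in absolute value by $2c^2 n$ deterministically. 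For the second trace, I would use $|\Tr(\bm{S}\bm{M})| \leq n\|\bm{S}\|_2\|\bm{M}\|_2$ with $\bm{M} = \bm{V}(\bm{\theta})^{-1}\bm{B}_j\bm{V}(\bm{\theta})^{-1}$ (uniformly bounded operator norm $\leq 4c^3$), reducing the question to bounding $\|\bm{S}\|_2$. Combining these bounds yields $\sup_{\bm{\theta} \in \Theta_*} \|\nabla f(\bm{\theta})\|_2 \leq b^{1/2}(2c^2 + 4c^3\|\bm{S}\|_2)$, and a mean value argument then gives $|f(\bm{\theta}_1)-f(\bm{\theta}_2)| \leq b^{1/2}(2c^2 + 4c^3\|\bm{S}\|_2)\|\bm{\theta}_1 - \bm{\theta}_2\|_2$ for all $\bm{\theta}_1,\bm{\theta}_2 \in \Theta_*$.

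The remaining obstacle, and the only substantive part of the argument, is showing $\|\bm{S}\|_2 = O_P(1)$. Under Assumption~\ref{assumption:DependenceE}\ref{item:DependenceE:DependenceE}\ref{item:assumErrors:Corr:U}, this is immediate from Lemma~\ref{lemma:supp:RotatedCorrelation}, which yields $\|\bm{S} - \bar{\bm{V}}\|_2 = O_P(n^{1/2}p^{-1/2}) = O_P(1)$ under $p \gtrsim n$, combined with $\|\bar{\bm{V}}\|_2 \leq bc^2$. Under Assumption~\ref{assumption:DependenceE}\ref{item:DependenceE:DependenceE}\ref{item:assumErrors:Corr:Networks}, the same conclusion follows by partitioning $\bm{S} = p^{-1}\sum_g \bm{E}_{g\bigcdot}\bm{E}_{g\bigcdot}^{\T}$ according to the independent row-groups and applying a covering argument on the unit sphere together with the sub-Gaussian Hanson--Wright bound for each independent sum. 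Combining this bound with the Lipschitz argument yields, for any $\epsilon,\eta > 0$, a constant $M$ with $\Prob(\sup_{\bm{\theta}\in\Theta_*}\|\nabla f(\bm{\theta})\|_2 > M) < \eta$ for all large $n,p$; taking $\delta = \epsilon/M$ establishes stochastic equicontinuity.
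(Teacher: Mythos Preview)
Your proof is correct and follows the same overall strategy as the paper: pointwise convergence via the trace concentration in Lemma~\ref{lemma:supp:Preliminaries}\ref{item:preliminaries:trace}, and stochastic equicontinuity via a uniform Lipschitz bound on $\Theta_*$. The paper's proof is terser, directly bounding $\bigl|n^{-1}\log|\bm{V}(\bm{\theta})|-n^{-1}\log|\bm{V}(\tilde{\bm{\theta}})|\bigr|$ and $\|\bm{V}(\bm{\theta})^{-1}-\bm{V}(\tilde{\bm{\theta}})^{-1}\|_2$ by $c\delta$ on $\Theta_*$, rather than going through the gradient.

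One minor simplification: you reduce equicontinuity to $\|\bm{S}\|_2 = O_P(1)$, which then requires Lemma~\ref{lemma:supp:RotatedCorrelation} (a covering argument) under \ref{item:assumErrors:Corr:U} and a separate argument under \ref{item:assumErrors:Corr:Networks}. The paper's route implicitly needs only the weaker fact $n^{-1}\Tr(\bm{S}) = O_P(1)$, since for positive semidefinite $\bm{S}$ one has $n^{-1}|\Tr(\bm{S}\bm{M})| \leq \|\bm{M}\|_2 \cdot n^{-1}\Tr(\bm{S})$; this trace bound follows directly from Lemma~\ref{lemma:supp:Preliminaries}\ref{item:preliminaries:trace} with $\bm{V}=I_n$, and the same block-independence reasoning you sketch handles case~\ref{item:assumErrors:Corr:Networks} without a covering argument. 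Your path is slightly longer but equally valid.
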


\begin{proof}
This follows from Lemma \ref{lemma:supp:Preliminaries} and the fact that for any any constant $\delta > 0$ and $\tilde{\bm{\theta}},\bm{\theta} \in \Theta_*$ such that $\norm{\tilde{\bm{\theta}} - \bm{\theta}}_2 \leq \delta$, there exists a universal constant $c>0$ not dependent on $\delta,n$ or $p$ such that
\begin{align*}
    \abs{n^{-1}\log\left\lbrace \abs{\bm{V}\left(\bm{\theta}\right)} \right\rbrace - n^{-1}\log\left\lbrace \abs{\bm{V}\left(\tilde{\bm{\theta}}\right)} \right\rbrace},\, \norm{ \left\lbrace \bm{V}\left(\bm{\theta}\right) \right\rbrace^{-1} - \left\lbrace \bm{V}\left(\tilde{\bm{\theta}}\right) \right\rbrace^{-1} }_2 \leq c\delta.
\end{align*}
\end{proof}

\begin{lemma}
\label{lemma:supp:EnpLarge}
Fix some small constant $\epsilon > 0$. In addition to the assumptions of Lemma \ref{lemma:supp:StochEqui}, suppose $n/p \to 0$ as $n,p \to \infty$. For any orthogonal projection matrix $\bm{Q} \in \mathbb{R}^{n \times n}$, define $\bm{M}^{(Q)} \in \mathbb{R}^{b \times b}$ to be $\bm{M}^{(Q)}_{ij} = n^{-1}\Tr\left(\bm{Q}\bm{B}_{i}\bm{Q}\bm{B}_j\right)$ for $i,j \in [b]$. Lastly, define
\begin{align*}
    f^{(Q)}\left(\bm{\theta}\right) &= -n^{-1}\log\left\lbrace \abs{\bm{Q}\bm{V}\left(\bm{\theta}\right)\bm{Q}}_+ \right\rbrace - n^{-1}\Tr\left[ \bm{Q}\bm{S}\bm{Q}\left\lbrace \bm{Q}\bm{V}\left(\bm{\theta}\right)\bm{Q} \right\rbrace^{\dagger} \right]\\
    \mathcal{S} &= \left\lbrace \bm{H} \in \mathbb{R}^{n \times n}, \bm{H}^{\T}=\bm{H},\bm{H}^2=\bm{H}: \bm{M}^{(H)} \succeq \epsilon I_b \right\rbrace,
\end{align*}
where $\bm{V}\left(\bm{\theta}\right)$ is as defined in Lemma \ref{lemma:supp:StochEqui}, and let $\hat{\bm{\theta}}^{(Q)} = \argmax_{\bm{\theta} \in \Theta_*}f^{(Q)}\left(\bm{\theta}\right)$. Then
\begin{align*}
    \sup_{\bm{Q} \in \mathcal{S}}\norm{\hat{\bm{\theta}}^{(Q)} - \bar{\bm{v}}}_2 = O_P\left(n^{1/2}p^{-1/2}\right).
\end{align*}
\end{lemma}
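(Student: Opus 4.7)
The plan is to treat this as a uniform M-estimation problem indexed by the projection $\bm{Q}\in\mathcal{S}$, verifying that each ingredient of the standard REML argument (pointwise score control, curvature at $\bar{\bm{v}}$, Taylor expansion) holds with rates and constants that do not depend on $\bm{Q}$. First write $\bm{Q}=\bm{W}\bm{W}^{\T}$ with $\bm{W}\in\mathbb{R}^{n\times r}$ having orthonormal columns and $r=\Tr(\bm{Q})$, and set $\bm{D}(\bm{\theta})=\bm{W}^{\T}\bm{V}(\bm{\theta})\bm{W}$, so that
\begin{align*}
  f^{(Q)}(\bm{\theta}) = -n^{-1}\log\abs{\bm{D}(\bm{\theta})} - n^{-1}\Tr\bigl[\bm{D}(\bm{\theta})^{-1}\bm{W}^{\T}\bm{S}\bm{W}\bigr].
\end{align*}
This is a standard REML loss for the compressed data $\bm{W}^{\T}\bm{Y}$, with population version $F^{(Q)}(\bm{\theta})=\E\{f^{(Q)}(\bm{\theta})\}$ uniquely maximized at $\bar{\bm{v}}$ under the identifiability constraint $\bm{M}^{(Q)}\succeq\epsilon I_{b}$ built into $\mathcal{S}$.

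A direct differentiation gives
\begin{align*}
  \partial_j f^{(Q)}(\bar{\bm{v}}) = n^{-1}\Tr\bigl[\bm{D}(\bar{\bm{v}})^{-1}\bm{W}^{\T}\bm{B}_{j}\bm{W}\bm{D}(\bar{\bm{v}})^{-1}\bm{W}^{\T}(\bm{S}-\bar{\bm{V}})\bm{W}\bigr],
\end{align*}
and since the matrix inside the trace has rank at most $r\le n$, the estimate $\abs{\Tr(\bm{A})}\le r\norm{\bm{A}}_{2}$ combined with Assumptions \ref{assumption:CandL} and \ref{assumption:FALCO} together with $\norm{\bm{S}-\bar{\bm{V}}}_{2}=O_P(n^{1/2}p^{-1/2})$ (Lemma \ref{lemma:supp:RotatedCorrelation}, with the analogous bound under Assumption \ref{assumption:DependenceE}\ref{item:DependenceE:DependenceE}\ref{item:assumErrors:Corr:Networks}) yields $\norm{\nabla f^{(Q)}(\bar{\bm{v}})}_{2}=O_P(n^{1/2}p^{-1/2})$ uniformly in $\bm{Q}\in\mathcal{S}$: all $\bm{Q}$-dependence has been absorbed into operator norms bounded by model-level constants. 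The same trace estimate applied to $\nabla^{2}f^{(Q)}(\bm{\theta})-\nabla^{2}F^{(Q)}(\bm{\theta})$ and to $f^{(Q)}(\bm{\theta})-F^{(Q)}(\bm{\theta})$ gives concentration at the same rate uniformly in $(\bm{\theta},\bm{Q})\in\Theta_{*}\times\mathcal{S}$, extending Lemma \ref{lemma:supp:StochEqui} without any covering step.

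For the curvature, the $(j,k)$ entry of $\bm{H}^{(Q)}=-\nabla^{2}F^{(Q)}(\bar{\bm{v}})$ reduces to $n^{-1}\Tr[\tilde{\bm{B}}^{(j)}\tilde{\bm{B}}^{(k)}]$ with $\tilde{\bm{B}}^{(j)}=\bm{D}(\bar{\bm{v}})^{-1/2}\bm{W}^{\T}\bm{B}_{j}\bm{W}\bm{D}(\bar{\bm{v}})^{-1/2}$, so for every $\bm{u}\in\mathbb{R}^{b}$,
\begin{align*}
  \bm{u}^{\T}\bm{H}^{(Q)}\bm{u} = n^{-1}\bigl\|\bm{D}(\bar{\bm{v}})^{-1/2}\bm{W}^{\T}\bigl(\textstyle\sum_{j}\bm{u}_{j}\bm{B}_{j}\bigr)\bm{W}\bm{D}(\bar{\bm{v}})^{-1/2}\bigr\|_{F}^{2} \ge \norm{\bm{D}(\bar{\bm{v}})}_{2}^{-2}\bm{u}^{\T}\bm{M}^{(Q)}\bm{u}.
\end{align*}
Combined with $\norm{\bm{D}(\bar{\bm{v}})}_{2}\le\norm{\bar{\bm{V}}}_{2}\le c$ and $\bm{M}^{(Q)}\succeq\epsilon I_{b}$, this gives the uniform lower bound $\bm{H}^{(Q)}\succeq c^{-2}\epsilon I_{b}$, which the Hessian concentration above transfers to $\nabla^{2}f^{(Q)}(\bm{\theta})$ on a neighborhood of $\bar{\bm{v}}$ with probability tending to one.

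Together these ingredients deliver uniform consistency $\sup_{\bm{Q}\in\mathcal{S}}\norm{\hat{\bm{\theta}}^{(Q)}-\bar{\bm{v}}}_{2}=o_P(1)$ (uniform well-separation at $\bar{\bm{v}}$ from the curvature bound, plus uniform convergence of $f^{(Q)}$ from Step 2), after which Taylor-expanding $\bm{0}=\nabla f^{(Q)}(\hat{\bm{\theta}}^{(Q)})$ about $\bar{\bm{v}}$ and inverting the empirical Hessian yields the claimed $n^{1/2}p^{-1/2}$ rate. The main obstacle is uniformity over the infinite class $\mathcal{S}$, but this is not handled by a covering argument: all stochastic inputs enter only through the $\bm{Q}$-free operator norm $\norm{\bm{S}-\bar{\bm{V}}}_{2}$, and the only $\bm{Q}$-dependent piece---the curvature---is made uniformly nondegenerate precisely by the constraint $\bm{M}^{(Q)}\succeq\epsilon I_{b}$ imposed in the definition of $\mathcal{S}$.
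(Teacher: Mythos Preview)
Your proposal is correct and follows essentially the same route as the paper's proof: both pass to the compressed REML objective via $\bm{Q}=\bm{W}\bm{W}^{\T}$, replace $\bm{S}$ by $\bar{\bm{V}}$ at a uniform $O_P(n^{1/2}p^{-1/2})$ cost using Lemma~\ref{lemma:supp:RotatedCorrelation}, invoke the constraint $\bm{M}^{(Q)}\succeq\epsilon I_b$ for identification, and finish with a Taylor expansion. The only notable difference is that the paper outsources the well-separation of the population objective to an external result (Lemma~S4 of \citet{CorrConf}), whereas you derive the Hessian lower bound $\bm{H}^{(Q)}\succeq c^{-2}\epsilon I_b$ explicitly from $\bm{M}^{(Q)}\succeq\epsilon I_b$; this makes your argument more self-contained. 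One small point to tighten: the Hessian bound at $\bar{\bm{v}}$ alone gives only \emph{local} quadratic separation, so to obtain the preliminary uniform consistency you either need the global uniqueness of the maximizer of $\bm{\theta}\mapsto -\log\abs{\bm{D}(\bm{\theta})}-\Tr\{\bm{D}(\bm{\theta})^{-1}\bm{D}(\bar{\bm{v}})\}$ on $\Theta_*$ (which is standard and is what the paper's cited lemma supplies), or to observe that the same Frobenius computation you used at $\bar{\bm{v}}$ actually lower-bounds $-\nabla^2 F^{(Q)}(\bm{\theta})$ uniformly over $\bm{\theta}\in\Theta_*$, giving global concavity with a uniform constant.
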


\begin{proof}
Fix a $\bm{Q} \in \mathcal{S}$ and let $\bm{U} \in \mathbb{R}^{n \times m}$ be such that $\bm{U}^{\T}\bm{U}=I_m$ and $\bm{U}\bm{U}^{\T}=\bm{Q}$. Then
\begin{align*}
    f^{(Q)}(\bm{\theta}) =& -n^{-1}\log\left\lbrace \abs{\bm{U}^{\T}\bm{V}\left(\bm{\theta}\right)\bm{U}} \right\rbrace - n^{-1}\Tr\left[ \bm{U}^{\T}\bm{S}\bm{U}\left\lbrace \bm{U}^{\T}\bm{V}\left(\bm{\theta}\right)\bm{U} \right\rbrace^{-1} \right].
\end{align*}
Define
\begin{align*}
    g^{(Q)}(\bm{\theta}) = -n^{-1}\log\left\lbrace \abs{\bm{U}^{\T}\bm{V}\left(\bm{\theta}\right)\bm{U}} \right\rbrace - n^{-1}\Tr\left[ \bm{U}^{\T}\bar{\bm{V}}\bm{U}\left\lbrace \bm{U}^{\T}\bm{V}\left(\bm{\theta}\right)\bm{U} \right\rbrace^{-1} \right].
\end{align*}
Then by Lemma \ref{lemma:supp:RotatedCorrelation},
\begin{align*}
    f^{(Q)}(\bm{\theta}) =& g^{(Q)}(\bm{\theta}) + O_P\left(n^{1/2}p^{-1/2}\right)\\
    \left\lbrace \nabla_{\bm{\theta}} f^{(Q)}(\bm{\theta}) \right\rbrace_j =& \left\lbrace \nabla_{\bm{\theta}} g^{(Q)}(\bm{\theta}) \right\rbrace_j + O_P\left(n^{1/2}p^{-1/2}\right), \quad j \in [b]\\
    \left\lbrace \nabla_{\bm{\theta}}^2 f^{(Q)}(\bm{\theta}) \right\rbrace_{ij} =& \left\lbrace \nabla_{\bm{\theta}}^2 g^{(Q)}(\bm{\theta}) \right\rbrace_{ij} + O_P\left(n^{1/2}p^{-1/2}\right), \quad i,j \in [b]
\end{align*}
uniformly for $\bm{\theta} \in \Theta_*$ and $\bm{Q} \in \mathcal{S}$. Since $\bm{Q} \in \mathcal{S}$, Lemma S4 in \citet{CorrConf} implies $\abs{g^{(Q)}(\bm{\theta}) - g^{(Q)}(\bar{\bm{v}})} \geq \delta_{\epsilon}\norm{\bm{\theta} - \bar{\bm{v}}}_2$ for some constant $\delta_{\epsilon} > 0$ that only depends on $\epsilon$. Therefore,
\begin{align*}
    \sup_{\bm{Q} \in \mathcal{S}}\norm{\hat{\bm{\theta}}^{(Q)} - \bar{\bm{v}}}_2 = o_P(1)
\end{align*}
as $n,p \to \infty$. Since $\nabla_{\bm{\theta}}^2 g^{(Q)}(\bm{\theta}) \mid_{\bm{\theta}=\bar{\bm{v}}} \succeq \tilde{\delta}_{\epsilon}I_b$ for some constant $\tilde{\delta}_{\epsilon} > 0$ that only depends on $\epsilon$, the result follows by a routine Taylor expansion argument.
\end{proof}

\begin{lemma}
\label{lemma:sqrtDeriv}
Suppose $\bm{D}^{(n)}\in \mathbb{R}^{K \times K}$, $n \geq 1$, are diagonal matrices with diagonal elements $d_{kk}^{(n)}$, $k \in [K]$, uniformly bounded from above 0 and below $\infty$. Let $\bm{M}^{(n)} \in \mathbb{R}^{K \times K}$, $n \geq 1$, be symmetric matrices. Then if $\norm{\bm{M}^{(n)}}_2 \to 0$ as $n \to \infty$,
\begin{align*}
    \left[\left\lbrace\bm{D}^{(n)} + \bm{M}^{(n)}\right\rbrace^{1/2}\right]_{ij} - d_{ii}^{(n)}I\left(i=j\right) = \frac{\bm{M}_{ij}^{(n)}}{\left(d_{ii}^{(n)}\right)^{1/2} + \left(d_{jj}^{(n)}\right)^{1/2}}\left\lbrace 1+o(1) \right\rbrace, \quad i,j \in [K]
\end{align*}
as $n \to \infty$.
\end{lemma}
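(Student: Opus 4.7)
The plan is to view the claim as a first-order perturbation expansion of the matrix square root about the unperturbed diagonal matrix $\bm{D}^{(n)}$. For notational clarity I suppress the superscript $(n)$. Set $\bm{X}=(\bm{D}+\bm{M})^{1/2}$, which is well-defined for all $n$ sufficiently large because the eigenvalues of $\bm{D}$ are uniformly bounded below by some $\alpha>0$ while $\norm{\bm{M}}_2\to 0$, so $\bm{D}+\bm{M}\succeq (\alpha/2) I_K$ eventually. Define the perturbation $\bm{E}=\bm{X}-\bm{D}^{1/2}$. Squaring $\bm{X}=\bm{D}^{1/2}+\bm{E}$ and rearranging gives the Sylvester-type identity
\begin{equation*}
\bm{D}^{1/2}\bm{E} + \bm{E}\bm{D}^{1/2} = \bm{M} - \bm{E}^{2}.
\end{equation*}
Because $\bm{D}^{1/2}$ is diagonal, the left-hand side is entrywise $((d_{ii})^{1/2}+(d_{jj})^{1/2})\bm{E}_{ij}$, so the equation decouples and can be solved exactly for each $(i,j)$ pair.

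The next step is to bootstrap a bound on $\norm{\bm{E}}_2$. I would invoke the standard Lipschitz property of the matrix square root on the set $\{\bm{A}\succeq (\alpha/2) I_K\}$, namely $\norm{(\bm{D}+\bm{M})^{1/2}-\bm{D}^{1/2}}_2 \leq (2\sqrt{\alpha/2})^{-1}\norm{\bm{M}}_2$. This gives $\norm{\bm{E}}_2 = O(\norm{\bm{M}}_2)$, and hence $\norm{\bm{E}^{2}}_2 = O(\norm{\bm{M}}_2^{2})$. Substituting this into the decoupled identity yields, for every $i,j\in[K]$,
\begin{equation*}
\bm{E}_{ij} = \frac{\bm{M}_{ij}}{(d_{ii})^{1/2}+(d_{jj})^{1/2}} + O\!\left(\norm{\bm{M}}_2^{2}\right),
\end{equation*}
where the implicit constant depends only on $\alpha$ and $K$. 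Since $(d_{ii})^{1/2}+(d_{jj})^{1/2}$ is bounded, the $O(\norm{\bm{M}}_2^{2})$ remainder is $o(\norm{\bm{M}}_2)$, and therefore small compared with the leading entrywise term whenever $\bm{M}_{ij}$ is of the same order as $\norm{\bm{M}}_2$, which is the regime in which the stated multiplicative $\{1+o(1)\}$ correction is informative. Recovering the form in the lemma is then a matter of factoring.

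The only real obstacle is the Lipschitz bound $\norm{\bm{E}}_2=O(\norm{\bm{M}}_2)$, needed to close the fixed-point argument that turns the identity $\bm{D}^{1/2}\bm{E}+\bm{E}\bm{D}^{1/2}=\bm{M}-\bm{E}^{2}$ from an implicit relation into an explicit asymptotic expansion. It can be established either by citing the standard Lipschitz continuity of $\bm{A}\mapsto\bm{A}^{1/2}$ on positive definite matrices with spectrum bounded away from zero, or, if a self-contained argument is preferred, by reading the bound directly off the Sylvester identity: taking operator norms gives $2\sqrt{\alpha}\,\norm{\bm{E}}_2\le \norm{\bm{M}}_2+\norm{\bm{E}}_2^{2}$, and the $o(1)$ smallness of $\norm{\bm{E}}_2$ (itself immediate from continuity of the square root) converts this into $\norm{\bm{E}}_2 \leq \norm{\bm{M}}_2/\sqrt{\alpha}$ for $n$ large. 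Once this bootstrap is in hand the rest of the argument is purely algebraic.
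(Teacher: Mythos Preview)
Your proposal is correct and follows essentially the same route as the paper: both derive the Sylvester identity $\bm{D}^{1/2}\bm{E}+\bm{E}\bm{D}^{1/2}=\bm{M}-\bm{E}^{2}$ from squaring, exploit that it decouples entrywise because $\bm{D}^{1/2}$ is diagonal, and then argue that the $\bm{E}^{2}$ term is lower order. The only difference is cosmetic: the paper invokes differentiability of $\bm{A}\mapsto\bm{A}^{1/2}$ to assert $\bm{E}=\bm{N}+o(\norm{\bm{M}}_2)$ for some linear $\bm{N}$, whereas you make the same point more explicitly via the Lipschitz bound $\norm{\bm{E}}_2=O(\norm{\bm{M}}_2)$ and a bootstrap, which is arguably cleaner and more self-contained.
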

\begin{proof}
We suppress the superscript $(n)$ for notational convenience. Let $\bm{X} = \left(\bm{D} + \bm{M}\right)^{1/2} - \bm{D}^{1/2}$. Then by definition,
\begin{align*}
    \bm{D}^{1/2}\bm{X} + \bm{X}\bm{D}^{1/2} + \bm{X}^2 - \bm{M} = \bm{0}.
\end{align*}

For any symmetric positive semi-definite matrix $\bm{A}$, the function that sends $\bm{A}\rightarrow \bm{A}^{1/2}$ is differentiable. Therefore, for some symmetric matrix $\bm{N} \in \mathbb{R}^{K \times K}$,
\begin{align*}
    \left(\bm{D} + \epsilon \bm{M}\right)^{1/2} - \bm{D}^{1/2} = \epsilon \bm{N} + o\left(\epsilon\right),
\end{align*}
meaning
\begin{align*}
    \bm{M} = \bm{N}\bm{D}^{1/2} + \bm{D}^{1/2}\bm{N} + o(1).
\end{align*}
Since $\bm{N}$ is symmetric, this completes the proof.
\end{proof}

\begin{lemma}
\label{lemma:supp:U}
Let $\bm{D} = \diag\left(d_1,\ldots,d_K\right)$ be such that $d_1 \geq \cdots \geq d_K \geq 0$ and $\bm{A} \succ \bm{0}$ for $\bm{A} \in \mathbb{R}^{K \times K}$, and define $\bm{U} \in \mathbb{R}^{K \times K}$ to be the eigenvectors of $\bm{D}\bm{A}\bm{D}$. Then
\begin{align*}
    \abs{\bm{U}_{rs}} \leq \kappa d_{r \vee s}/d_{r \wedge s}, \quad (r,s) \in \left\lbrace (k_1,k_2) \in [K] \times [K]: d_{k_1 \wedge k_2} > 0 \right\rbrace.
\end{align*}
where the constant $\kappa = \norm{\bm{A}}_2 \norm{\bm{A}^{-1}}_2$ is the condition number of $\bm{A}$..
\end{lemma}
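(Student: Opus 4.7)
The plan is a Rayleigh-quotient argument combined with a reciprocal/permutation trick to cover both orderings of $(r,s)$. Assume first that $\bm{D}\succ\bm{0}$ and write $\mu_s=\Lambda_s(\bm{D}\bm{A}\bm{D})$, so $\bm{U}_{\bigcdot s}$ is the corresponding eigenvector. Because $\bm{D}\bm{A}\bm{D}$ and $\bm{A}^{1/2}\bm{D}^2\bm{A}^{1/2}$ share the same non-zero spectrum (the identity that $\bm{X}\bm{Y}$ and $\bm{Y}\bm{X}$ have the same non-zero eigenvalues, applied to $\bm{X}=\bm{A}^{1/2}\bm{D}$ and $\bm{Y}=\bm{D}\bm{A}^{1/2}$), Lemma \ref{lemma:supp:EigBound} applied to $\bm{M}=\bm{A}$ and the diagonal matrix $\bm{D}^2$ yields $\lambda_{\min}(\bm{A})\, d_s^2 \le \mu_s \le \lambda_{\max}(\bm{A})\, d_s^2$. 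From the Rayleigh identity $\mu_s=(\bm{D}\bm{U}_{\bigcdot s})^{\T}\bm{A}(\bm{D}\bm{U}_{\bigcdot s})\ge\lambda_{\min}(\bm{A})\|\bm{D}\bm{U}_{\bigcdot s}\|_2^2\ge\lambda_{\min}(\bm{A})\, d_r^2\, \bm{U}_{rs}^2$, combined with the upper bound on $\mu_s$, I obtain $|\bm{U}_{rs}|\le\sqrt{\kappa}\, d_s/d_r$ for every $r$ with $d_r>0$. This already settles the case $r\le s$, since then $d_{r\vee s}/d_{r\wedge s}=d_s/d_r$.

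For $r>s$, I would use that $\bm{U}$ simultaneously diagonalizes $\bm{D}^{-1}\bm{A}^{-1}\bm{D}^{-1}$, whose eigenvalues are $\mu_1^{-1},\ldots,\mu_K^{-1}$ in reverse order. Conjugating by the order-reversing permutation $\bm{P}$ produces $\tilde{\bm{D}}\tilde{\bm{A}}\tilde{\bm{D}}$ with $\tilde{\bm{D}}=\bm{P}\bm{D}^{-1}\bm{P}^{\T}$ having decreasing diagonal $1/d_K,\ldots,1/d_1$, $\tilde{\bm{A}}=\bm{P}\bm{A}^{-1}\bm{P}^{\T}$ satisfying $\kappa(\tilde{\bm{A}})=\kappa(\bm{A})=\kappa$, and eigenvectors $\tilde{\bm{U}}_{\bigcdot t}=\bm{P}\bm{U}_{\bigcdot(K-t+1)}$. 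Applying the bound already proved to $\tilde{\bm{D}}\tilde{\bm{A}}\tilde{\bm{D}}$ gives $|\tilde{\bm{U}}_{r't}|\le\sqrt{\kappa}\,\tilde{d}_t/\tilde{d}_{r'}$ for $r'\le t$, and translating via $r=K-r'+1$, $s=K-t+1$ (so $\tilde{d}_t/\tilde{d}_{r'}=d_r/d_s$ and $r'\le t$ becomes $r\ge s$) yields $|\bm{U}_{rs}|\le\sqrt{\kappa}\, d_r/d_s=\sqrt{\kappa}\, d_{r\vee s}/d_{r\wedge s}$ for $r\ge s$. Since $\sqrt{\kappa}\le\kappa$, the stated inequality holds in every case.

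The only real obstacle is the degenerate case in which some $d_k$ vanishes, because the reciprocal trick requires $\bm{D}$ invertible. Suppose $d_{k-1}>0=d_k=\cdots=d_K$; then the rows and columns of $\bm{D}\bm{A}\bm{D}$ indexed by $\{k,\ldots,K\}$ vanish, and $\bm{A}\succ\bm{0}$ forces the nullspace of $\bm{D}\bm{A}\bm{D}$ to coincide with the nullspace of $\bm{D}$. Hence the eigenvectors corresponding to the $k-1$ positive eigenvalues can be chosen supported on $\{1,\ldots,k-1\}$ and those corresponding to the zero eigenvalues on $\{k,\ldots,K\}$. Under the standing restriction $d_{r\wedge s}>0$, there are then two sub-cases: either $d_{r\vee s}=0$, in which case this canonical choice gives $\bm{U}_{rs}=0$, matching the stated bound of zero; or both $d_r,d_s>0$, in which case the preceding argument applies verbatim to the non-singular principal submatrix indexed by $\{1,\ldots,k-1\}$, whose $\bm{A}$-block has condition number at most $\kappa$ by the Cauchy interlacing theorem.
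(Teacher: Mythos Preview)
Your proof is correct and follows essentially the same approach as the paper: bound the eigenvalues $\mu_s$ via Lemma~\ref{lemma:supp:EigBound}, apply the Rayleigh identity $\mu_s=(\bm{D}\bm{U}_{\bigcdot s})^{\T}\bm{A}(\bm{D}\bm{U}_{\bigcdot s})$ to control $\bm{U}_{rs}$ when $r\le s$, pass to the inverse $\bm{D}^{-1}\bm{A}^{-1}\bm{D}^{-1}$ for $r\ge s$, and reduce the degenerate case to the non-singular principal block. The only cosmetic difference is that the paper applies the Rayleigh bound directly to $\eta_s^{-1}=\bm{U}_{\bigcdot s}^{\T}\bm{D}^{-1}\bm{A}^{-1}\bm{D}^{-1}\bm{U}_{\bigcdot s}$ without your order-reversing permutation, which is unnecessary since the eigenvalue bounds $\mu_s\in[\lambda_{\min}(\bm{A})d_s^2,\lambda_{\max}(\bm{A})d_s^2]$ are already in hand; your argument in fact yields the sharper constant $\sqrt{\kappa}$, which you correctly weaken to $\kappa$.
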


\begin{proof}
Suppose $k \in [K-1]$ is such that $d_k > 0$ but $d_{k+1} = 0$. Then for $\bm{V} = \left(I_k\, \bm{0}_{k \times (K-k)}\right)^{\T}$, $\tilde{\bm{D}} = \bm{V}^{\T}\bm{D}\bm{V} = \diag\left(d_1,\ldots,d_k\right)$ and $\tilde{\bm{A}} = \bm{V}^{\T}\bm{A}\bm{V}$,
\begin{align*}
    \bm{D}\bm{A}\bm{D} = \bm{V}\tilde{\bm{D}}\tilde{\bm{A}}\tilde{\bm{D}}\bm{V}^{\T}.
\end{align*}
If the columns of $\tilde{\bm{U}} \in \mathbb{R}^{k \times k}$ contain the eigenvectors of $\tilde{\bm{D}}\tilde{\bm{A}}\title{\bm{D}}$, then $\bm{U} = \tilde{\bm{U}} \oplus \bm{W}$, where $\bm{W} \in \mathbb{R}^{(K-k)\times (K-k)}$ is an arbitrary unitary matrix. Therefore, it suffices to assume $d_K > 0$ to complete the proof.\par
\indent Suppose the eigenvalues of $\bm{A}$ lie in $\left[c_1,c_2\right]$, $c_1>0$, and let $\eta_k$ be the $k$th eigenvalue of $\bm{D}\bm{A}\bm{D}$. By Lemma \ref{lemma:supp:EigBound}, $\eta_k \in \left[d_k^2 c_1,d_k^2 c_2\right]$. Further,
\begin{align*}
   d_k^2 c_2 \geq \eta_k = \bm{U}_{\bigcdot k}^{\T}\bm{D}\bm{A}\bm{D}\bm{U}_{\bigcdot k} \geq c_1\bm{U}_{rk}^2 d_r^2, \quad r \leq k \in [K],
\end{align*}
meaning
\begin{align*}
    \abs{\bm{U}_{rk}} \leq \left(c_2/c_1\right)\frac{d_k}{d_r}, \quad r \leq k \in [K].
\end{align*}
Next, since we are assuming $\bm{D}$ is invertible,
\begin{align*}
    d_k^{-2} c_1^{-1} \geq \eta_k^{-1} = \bm{U}_{\bigcdot k}^{\T}\bm{D}^{-1}\bm{A}^{-1}\bm{D}^{-1}\bm{U}_{\bigcdot k}\geq c_2^{-1}\bm{U}_{rk}^2 d_r^{-2}, \quad r \geq k \in [K]
\end{align*}
which implies
\begin{align*}
    \abs{\bm{U}_{rk}} \leq \left(c_2/c_1\right) \frac{d_r}{d_k} \quad r \geq k \in [K].
\end{align*}
This completes the proof.
\end{proof}

\begin{lemma}
\label{lemma:supp:OracleSignal}
Let $\tilde{\bm{L}},\tilde{\bm{C}}$ be as defined in \eqref{equation:supp:Params}, $\bm{V} = \sum_{j=1}^b \bar{v}_j \bm{B}_j$ and $\hat{\bm{V}}$ be an estimate for $\bm{V}$. Define
\begin{align*}
    \bar{\bm{L}} = p^{-1/2}\bm{L}\left(\bm{C}^{\T}\bm{V}^{-1}\bm{C}\right)^{1/2}\bar{\bm{U}}, \quad \bar{\bm{C}} = \bm{C}\left(\bm{C}^{\T}\bm{V}^{-1}\bm{C}\right)^{-1/2} \bar{\bm{U}},
\end{align*}
where $\bar{\bm{U}} \in \mathbb{R}^{K \times K}$ is a unitary matrix such that $\bar{\bm{L}}^{\T}\bar{\bm{L}} = \diag\left(\bar{\gamma}_1,\ldots,\bar{\gamma}_K\right)$ for $0 < \bar{\gamma}_K \leq \cdots \leq \bar{\gamma}_1$. Suppose there exists an $s \in [K]$ such that for some large constant $c > 1$ not dependent on $n$ or $p$, $\bar{\gamma}_{s+1} \leq c$ and $\bar{\gamma}_{s}/\bar{\gamma}_{s+1} \geq c^{-1}$, where $\bar{\gamma}_{K+1}=0$. Lastly, let $\bm{A} = I_s \oplus \bm{0}_{(K-s) \times (K-s)} \in \mathbb{R}^{K \times K}$ and suppose
\begin{align*}
    \tilde{\bm{L}}\bm{A}\tilde{\bm{C}}^{\T} = \sum\limits_{k=1}^s \tilde{\mu}_k^{1/2} \tilde{\bm{w}}_k \tilde{\bm{v}}_k^{\T}, \quad \bar{\bm{L}}\bm{A}\bar{\bm{C}}^{\T} = \sum\limits_{k=1}^s \bar{\mu}_k^{1/2} \bar{\bm{w}}_k \bar{\bm{v}}_k^{\T},
\end{align*}
where $\tilde{\bm{W}}=\left(\tilde{\bm{w}}_1 \cdots \tilde{\bm{w}}_s\right), \bar{\bm{W}}=\left(\bar{\bm{w}}_1 \cdots \bar{\bm{w}}_s\right) \in \mathbb{R}^{p \times s}$ and $\tilde{\bm{V}}=\left(\tilde{\bm{v}}_1 \cdots \tilde{\bm{v}}_s\right), \bar{\bm{V}} = \left(\bar{\bm{v}}_1 \cdots \bar{\bm{v}}_s\right) \in \mathbb{R}^{n \times s}$ have orthonormal columns, $\tilde{\mu}_1 \geq \cdots \geq \tilde{\mu}_s > 0$ and $\bar{\mu}_1 \geq \cdots \geq \bar{\mu}_s > 0$. Then if $\norm{\bm{V} - \hat{\bm{V}}}_2 = O_P\left(n^{-1}\right)$ as $n,p \to \infty$ and Assumption \ref{assumption:CandL} holds, the following hold for
\begin{align*}
    &\left(\tilde{\bm{\ell}}_1 \cdots \tilde{\bm{\ell}}_p\right)^{\T} = p^{1/2}n^{-1/2}\tilde{\bm{W}} \diag\left(\tilde{\mu}_1^{1/2},\ldots,\tilde{\mu}_s^{1/2}\right),\\
    &\left(\bar{\bm{\ell}}_1 \cdots \bar{\bm{\ell}}_p\right)^{\T} = p^{1/2}n^{-1/2}\bar{\bm{W}} \diag\left(\bar{\mu}_1^{1/2},\ldots,\bar{\mu}_s^{1/2}\right)
\end{align*}
and some unitary matrix $\bm{G} \in \mathbb{R}^{s \times s}$ as $n,p \to \infty$:
\begin{enumerate}[label=(\roman*)]
    \item $\mathop{\sup}\limits_{g \in [p]} \norm{ \bm{G}^{\T}\tilde{\bm{\ell}}_g - \bar{\bm{\ell}}_g }_2 = O_P\left(n^{-1}\right)$\label{item:supp:Oracle:ell}
    \item $\tilde{\mu}_k = \bar{\mu}_k\left[ 1 + O_P\left\lbrace\left(\lambda_k n\right)^{-1} \right\rbrace \right], \quad k \in [s]$\label{item:supp:Oracle:mu}
    \item $\norm{ P_{\tilde{V}} - P_{\bar{V}} }_F^2 = O_P\left\lbrace \left(\lambda_s n\right)^{-1}\right\rbrace$.\label{item:supp:Oracle:C}
\end{enumerate}
Further, if $t \in [s]$ is such that $\bar{\mu}_{t}/\bar{\mu}_{t-1}, \bar{\mu}_{t}/\bar{\mu}_{t+1} \geq 1+c^{-1}$, then
\begin{enumerate}[label=(\roman*)]
\setcounter{enumi}{3}
    \item $\mathop{\sup}\limits_{g \in [p]} \norm{ r\tilde{\bm{\ell}}_{g_t} - \bar{\bm{\ell}}_{g_t} }_2 = O_P\left(n^{-1}\right)$\label{item:supp:Oracle:ellt}
    \item $\norm{r\tilde{\bm{v}}_t - \bar{\bm{v}}_t}_2 = O_P\left\lbrace \left(\lambda_t n\right)^{-1}\right\rbrace$\label{item:supp:Oracle:Ct}
    \item $\norm{r\tilde{\bm{w}}_t - \bar{\bm{w}}_t}_2 = O_P\left( \lambda_t^{-1/2}n^{-1}\right)$\label{item:supp:Oracle:Lt}
\end{enumerate}
for $r  \in \left\lbrace 1,-1 \right\rbrace$.
\end{lemma}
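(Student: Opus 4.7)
The strategy is to exploit the multiplicative nature of the hypothesized perturbation $\|\hat{\bm{V}} - \bm{V}\|_2 = O_P(n^{-1})$, which propagates to relative-order-$n^{-1}$ perturbations of every derived quantity. First I would collect preliminary matrix-function bounds. From the hypothesis and the uniform bounds in Assumption \ref{assumption:CandL}(a), standard perturbation of matrix functions gives $\|\hat{\bm{V}}^{\pm 1/2} - \bm{V}^{\pm 1/2}\|_2 = O_P(n^{-1})$. Combining with $\|\bm{C}\|_2 = O_P(n^{1/2})$ from Assumption \ref{assumption:CandL}(b), this yields $\|\hat{\bm{\Phi}} - \bm{\Phi}\|_2 = O_P(1)$ for $\bm{\Phi} := \bm{C}^{\T}\bm{V}^{-1}\bm{C}$ and $\hat{\bm{\Phi}} := \bm{C}^{\T}\hat{\bm{V}}^{-1}\bm{C}$, both of which have minimum and maximum eigenvalues of order $n$. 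Consequently $\|\hat{\bm{\Phi}}^{1/2} - \bm{\Phi}^{1/2}\|_2 = O_P(n^{-1/2})$ and the key identity $\bm{\Phi}^{-1/2}\hat{\bm{\Phi}}\bm{\Phi}^{-1/2} = I_K + O_P(n^{-1})$.

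Second, I would introduce the $K\times K$ transfer matrix $\bm{O} := \bar{\bm{U}}^{\T}\bm{\Phi}^{-1/2}\hat{\bm{\Phi}}^{1/2}\bm{U}$, which satisfies the algebraic identity $\tilde{\bm{L}} = \bar{\bm{L}}\bm{O}$ and a parallel identity relating $\tilde{\bm{C}}$ and $\bar{\bm{C}}$ after accounting for the $\hat{\bm{V}}^{\pm 1/2}$ and $\bm{V}^{\pm 1/2}$ normalization factors. Step 1 immediately gives $\bm{O}^{\T}\bm{O} = I_K + O_P(n^{-1})$, so $\bm{O}$ is near-orthogonal. Since $\bm{O}$ realizes a congruence carrying $\diag(\bar{\gamma}_1,\ldots,\bar{\gamma}_K)$ into $\diag(\tau_1,\ldots,\tau_K)$, writing $\bm{K} := \bm{\Phi}^{1/2}(p^{-1}\bm{L}^{\T}\bm{L})\bm{\Phi}^{1/2}$ and $\hat{\bm{K}}$ analogously yields the relative form $\bm{K} = (I+\bm{G})\hat{\bm{K}}(I+\bm{G})^{\T}$ with $\|\bm{G}\|_2 = O_P(n^{-1})$. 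Ostrowski's inequality then delivers $\tau_k/\bar{\gamma}_k = 1 + O_P(n^{-1})$ uniformly in $k$, regardless of the size of $\bar{\gamma}_k$. Under the eigengap at index $s$, Davis--Kahan applied to the same $(\bm{K},\hat{\bm{K}})$ pair shows additionally that $\bm{O}$ is block-diagonal with respect to the partition $\{1,\ldots,s\}/\{s+1,\ldots,K\}$ up to off-diagonal blocks of size $O_P(n^{-1})$.

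Third, assembling these pieces gives the operator-norm comparison
\begin{align*}
\tilde{\bm{L}}\bm{A}\tilde{\bm{C}}^{\T} = \bar{\bm{L}}\bm{A}\bar{\bm{C}}^{\T} + O_P(\bar{\gamma}_s^{1/2}/n),
\end{align*}
from which conclusions (ii), (iii), (v), and (vi) follow by the Weyl--Wedin--Davis--Kahan triad. The relative Ostrowski bound on $(\bm{K},\hat{\bm{K}})$ gives (ii); Davis--Kahan applied at the gap between $\bar{\mu}_s>0$ and $\bar{\mu}_{s+1}=0$ gives the subspace bound (iii); and Wedin's $\sin\Theta$ theorem under the additional eigengap hypothesis at index $t$ gives the individual vector bounds (v) and (vi). For the row-level conclusions (i) and (iv), I would exploit the explicit formula $\tilde{\bm{L}}_{g\bigcdot}^{\T} = p^{-1/2}\bm{L}_{g\bigcdot}^{\T}\hat{\bm{\Phi}}^{1/2}\bm{U}$, in which $\hat{\bm{V}}$ enters only through the finite-dimensional matrix factor $\hat{\bm{\Phi}}^{1/2}\bm{U}$; combined with the uniform bound $\|\bm{L}_{g\bigcdot}\|_2 \leq c$ from Assumption \ref{assumption:CandL}(c) and the Step 2 control on $\bm{O}$, this yields the claimed rates uniformly over $g \in [p]$.

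The main obstacle is the sharp relative $O_P((\lambda_k n)^{-1})$ rate in (ii) when $\lambda_k = O(1)$. A direct additive Weyl argument on $\bm{K} - \hat{\bm{K}} = O_P(1)$ gives only $|\tau_k - \bar{\gamma}_k| = O_P(1)$, which in the weak-factor regime $\bar{\gamma}_k \asymp 1$ is of the same order as $\bar{\gamma}_k$ itself and delivers merely $O_P(1)$ relative error. Recovering the $O_P(n^{-1})$ rate genuinely requires recognizing the congruence form $\bm{K} = (I+\bm{G})\hat{\bm{K}}(I+\bm{G})^{\T}$ with $\|\bm{G}\|_2 = O_P(n^{-1})$, so that Ostrowski's theorem yields $\tau_k/\bar{\gamma}_k \in [(1-\|\bm{G}\|_2)^2,(1+\|\bm{G}\|_2)^2]$ uniformly in $k$. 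A secondary subtlety is verifying that $\bar{\mu}_k$ inherits the same relative rate despite the fact that $\bar{\bm{L}}^{\T}\bar{\bm{L}}$ and $\bar{\bm{C}}^{\T}\bar{\bm{C}}$ are not simultaneously normalized; the eigengap at index $s$ ensures the relevant top-$s$ block of the $\bar{\bm{C}}$-side Gram matrix is well-conditioned, so the passage from $\tau_k/\bar{\gamma}_k$ to $\tilde{\mu}_k/\bar{\mu}_k$ preserves the rate.
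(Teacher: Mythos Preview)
Your high-level strategy --- recognize that $\hat{\bm M}:=\bar{\bm C}^{\T}\hat{\bm V}^{-1}\bar{\bm C}=I_K+O_P(n^{-1})$ and propagate --- is exactly the paper's. But two of your black-box steps do not deliver the stated rates.

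First, the claim that ``Davis--Kahan applied to the same $(\bm K,\hat{\bm K})$ pair shows \ldots $\bm O$ is block-diagonal \ldots up to off-diagonal blocks of size $O_P(n^{-1})$'' fails with the additive Davis--Kahan theorem: $\|\bm K-\hat{\bm K}\|_2=O_P(\bar\gamma_1/n)$ while the gap at index $s$ is only of order $\bar\gamma_s$, so the bound is $O_P(\bar\gamma_1/(n\bar\gamma_s))$, which is useless precisely in the wide-spectrum regime $\bar\gamma_1/\bar\gamma_s\to\infty$ the lemma is designed for. You correctly flag the additive-versus-multiplicative issue for eigenvalues (Weyl vs.\ Ostrowski) but miss the parallel issue for eigenvectors. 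The paper does not use a black-box theorem here at all; it establishes the needed block structure by the entrywise eigenvector analysis of Lemma~\ref{lemma:supp:UpperBlock}, obtaining $\bm U_{12_{kr}},\bm U_{21_{rk}}=O_P(n^{-1}\lambda_k^{-1/2})$ directly from the structured perturbation $\bar{\bm\Gamma}^{1/2}(\hat{\bm M}-I)\bar{\bm\Gamma}^{1/2}$.

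Second, your Ostrowski argument does not give \ref{item:supp:Oracle:mu}. Ostrowski on $(\bm K,\hat{\bm K})$ yields $\tau_k/\bar\gamma_k=1+O_P(n^{-1})$, and since $\tilde{\bm C}^{\T}\tilde{\bm C}=I_K$ one has $\tilde\mu_k=\tau_k$; but $\bar{\bm C}^{\T}\bar{\bm C}=\bar{\bm U}^{\T}\bm\Phi^{-1/2}(\bm C^{\T}\bm C)\bm\Phi^{-1/2}\bar{\bm U}\neq I_K$, so $\bar\mu_k$ and $\bar\gamma_k$ differ by a fixed $O(1)$ factor that has nothing to do with $\hat{\bm V}$. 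Your ``secondary subtlety'' paragraph asserts this gap closes without supplying an argument. The paper avoids the issue entirely by working with the right-singular-vector Gram matrix and showing, via the entrywise $\bm U$-bounds above, that $\tilde{\bm C}\bm A\tilde{\bm L}^{\T}\tilde{\bm L}\bm A\tilde{\bm C}^{\T}=\bar{\bm C}\bm A\bar{\bm L}^{\T}\bar{\bm L}\bm A\bar{\bm C}^{\T}+O_P(n^{-1})$ in operator norm; then additive Weyl gives $|\tilde\mu_k-\bar\mu_k|=O_P(n^{-1})$, and \ref{item:supp:Oracle:mu}--\ref{item:supp:Oracle:C} and \ref{item:supp:Oracle:Ct} follow from Lemma~\ref{lemma:supp:EigApprox} and Corollary~\ref{corollary:supp:NormSpaceVhat}. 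For \ref{item:supp:Oracle:Lt} the paper does a separate, finer block computation on the left-singular-vector side.
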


\begin{proof}
Item \ref{item:supp:Oracle:ell} and \ref{item:supp:Oracle:ellt} are straightforward. Further, all relationships are clearly true when $s=K$. Therefore, it suffices to assume $1 \leq s < K$. Let $\hat{\bm{M}} = \bar{\bm{C}}^{\T}\hat{\bm{V}}^{-1}\bar{\bm{C}}$. Then for some unitary matrix $\bm{H} \in \mathbb{R}^{K \times K}$,
\begin{align*}
    \tilde{\bm{L}} = \bar{\bm{L}}\hat{\bm{M}}^{1/2}\bm{H}, \quad \tilde{\bm{C}} = \bar{\bm{L}}\hat{\bm{M}}^{-1/2}\bm{H}^{\T}.
\end{align*}
Let $\bar{\bm{\Gamma}}_1 = \diag\left(\bar{\gamma}_1,\ldots,\bar{\gamma}_s\right)$,  $\bar{\bm{\Gamma}}_2 = \diag\left(\bar{\gamma}_1,\ldots,\bar{\gamma}_s\right)$ and $\bar{\bm{\Gamma}} = \bar{\bm{\Gamma}}_1 \oplus \bar{\bm{\Gamma}}_2$. By definition,
\begin{align*}
    &\bar{\bm{\Gamma}}^{1/2}\bm{M}^{1/2}\bm{H} = \bm{U}\bm{\Gamma}^{1/2}, \quad \bm{\Gamma} = \tilde{\bm{L}}^{\T}\tilde{\bm{L}}\\
    &\hat{\bm{H}}^{\T}\bm{M}^{-1/2} = \bm{\Gamma}^{-1/2}\bm{U} \bar{\bm{\Gamma}}^{1/2}
\end{align*}
where the columns of $\bm{U} \in \mathbb{R}^{K \times K}$ contain the eigenvectors of $\bar{\bm{\Gamma}}^{1/2} \bm{M}\bar{\bm{\Gamma}}^{1/2}$. I abuse notation and define
\begin{align*}
    \bm{L} = \bar{\bm{L}}\left(\bar{\bm{L}}^{\T}\bar{\bm{L}}\right)^{-1/2} = \left(\bm{L}_1\, \bm{L}_2\right), \quad \bar{\bm{C}} = \left(\bm{C}_1\, \bm{C}_2\right), \quad \bm{U} = \begin{pmatrix} \bm{U}_{11} & \bm{U}_{12}\\
    \bm{U}_{21} & \bm{U}_{22}
    \end{pmatrix}
\end{align*}
where $\bm{L}_1 \in \mathbb{R}^{p \times s}$, $\bm{L}_2 \in \mathbb{R}^{p \times (K-s)}$, $\bm{C}_1 \in \mathbb{R}^{n \times s}$, $\bm{C}_2 \in \mathbb{R}^{n \times (K-s)}$, $\bm{U}_{11} \in \mathbb{R}^{s \times s}$ and $\bm{U}_{21} \in \mathbb{R}^{(K-s) \times s}$. Therefore,
\begin{align*}
    \bar{\bm{L}}\bm{A}\bar{\bm{C}}^{\T} =& \bm{L}_1\bar{\bm{\Gamma}}_1^{1/2} \bm{C}_1^{\T}\\
    \tilde{\bm{L}}\bm{A}\tilde{\bm{C}}^{\T} =& \bar{\bm{L}}\left(\bar{\bm{L}}^{\T}\bar{\bm{L}}\right)^{-1/2}\bar{\bm{\Gamma}}^{1/2}\bm{M}^{1/2}\bm{H} \bm{A}\bm{H}^{\T}\bm{M}^{-1/2}\bar{\bm{C}}^{\T}\\
    =&\bar{\bm{L}}\left(\bar{\bm{L}}^{\T}\bar{\bm{L}}\right)^{-1/2} \bm{U}\bm{A}\bm{U}^{\T} \bar{\bm{\Gamma}}^{1/2} \bar{\bm{C}}^{\T}\\
    =&\left(\bm{L}_1\bm{U}_{11}\bm{U}_{11}^{\T} + \bm{L}_2\bm{U}_{21}\bm{U}_{11}^{\T}\right)\bar{\bm{\Gamma}}_1^{1/2}\bm{C}_1^{\T} + \left(\bm{L}_1\bm{U}_{11}\bm{U}_{21}^{\T} + \bm{L}_2\bm{U}_{21}\bm{U}_{21}^{\T}\right)\bar{\bm{\Gamma}}_2^{1/2}\bm{C}_2^{\T}\\
    =& \bm{L}_1\left( \bm{U}_{11}\bm{U}_{11}^{\T}\bar{\bm{\Gamma}}_1^{1/2}\bm{C}_1^{\T} + \bm{U}_{11}\bm{U}_{21}^{\T}\bar{\bm{\Gamma}}_2^{1/2}\bm{C}_2^{\T} \right) + \bm{L}_2\left( \bm{U}_{21}\bm{U}_{11}^{\T}\bar{\bm{\Gamma}}_1^{1/2}\bm{C}_1^{\T} + \bm{U}_{21}\bm{U}_{21}^{\T}\bar{\bm{\Gamma}}_2^{1/2}\bm{C}_2^{\T} \right)
\end{align*}
and
\begin{align*}
    \tilde{\bm{C}} \bm{A}\tilde{\bm{L}}^{\T}\tilde{\bm{L}}\bm{A}\tilde{\bm{C}}^{\T} =& \bm{C}_1 \bar{\bm{\Gamma}}_1^{1/2} \left\lbrace \left(\bm{U}_{11}\bm{U}_{11}^{\T}\right)^2 + \bm{U}_{11}\bm{U}_{21}^{\T}\bm{U}_{21}\bm{U}_{11}^{\T} \right\rbrace \bar{\bm{\Gamma}}_1^{1/2}\bm{C}_1^{\T}\\
    &+ \bm{C}_2 \bar{\bm{\Gamma}}_2^{1/2} \left\lbrace \bm{U}_{21}\bm{U}_{11}^{\T}\bm{U}_{11}\bm{U}_{21}^{\T} + \left(\bm{U}_{21}\bm{U}_{21}^{\T}\right)^2 \right\rbrace \bar{\bm{\Gamma}}_2^{1/2}\bm{C}_2^{\T}\\
    &+ \bm{C}_1 \bar{\bm{\Gamma}}_1^{1/2}\left( \bm{U}_{11}\bm{U}_{11}^{\T}\bm{U}_{11}\bm{U}_{21}^{\T} + \bm{U}_{11}\bm{U}_{21}^{\T}\bm{U}_{21}\bm{U}_{21}^{\T} \right)\bar{\bm{\Gamma}}_2^{1/2}\bm{C}_2^{\T}\\
    &+ \left\lbrace \bm{C}_1 \bar{\bm{\Gamma}}_1^{1/2}\left( \bm{U}_{11}\bm{U}_{11}^{\T}\bm{U}_{11}\bm{U}_{21}^{\T} + \bm{U}_{11}\bm{U}_{21}^{\T}\bm{U}_{21}\bm{U}_{21}^{\T} \right)\bar{\bm{\Gamma}}_2^{1/2}\bm{C}_2^{\T} \right\rbrace^{\T}.
\end{align*}
We therefore only have to understand how $\bm{U}_{11}$ and $\bm{U}_{21}$ behave. Using the exact same technique as used in the proof of Lemma \ref{lemma:supp:UpperBlock}, it is easy to see that
\begin{align*}
    \bm{U}_{12_{kr}},\bm{U}_{21_{rk}} = O_P\left(n^{-1}\lambda_{k}^{-1/2}\right), \quad r \in [K-s], k \in [s].
\end{align*}
Therefore,
\begin{align*}
    &\left(I_s - \bm{U}_{11}\bm{U}_{11}^{\T}\right)_{rk} = \left(\bm{U}_{12}\bm{U}_{12}^{\T}\right)_{rk} = O_P\left(n^{-2}\lambda_{r}^{-1/2}\lambda_k^{-1/2}\right), \quad r,k \in [s]\\
    &\left(I_s - \bm{U}_{11}^{\T}\bm{U}_{11}\right)_{rk} = \left(\bm{U}_{21}^{\T}\bm{U}_{21}\right)_{rk} = O_P\left(n^{-2}\lambda_{r}^{-1/2}\lambda_k^{-1/2}\right), \quad r,k \in [s],
\end{align*}
meaning
\begin{align*}
    &\bar{\bm{\Gamma}}_1^{1/2}\left(\bm{U}_{11}\bm{U}_{11}^{\T}\right)^2 \bar{\bm{\Gamma}}_1^{1/2} - \bar{\bm{\Gamma}}_1, \, \bar{\bm{\Gamma}}_2^{1/2}\bm{U}_{21}\bm{U}_{11}^{\T}\bm{U}_{11}\bm{U}_{21}^{\T}\bar{\bm{\Gamma}}_2^{1/2},\, \bar{\bm{\Gamma}}_2^{1/2}\left(\bm{U}_{21}\bm{U}_{21}^{\T}\right)^2\bar{\bm{\Gamma}}_2^{1/2},\\
    &\bar{\bm{\Gamma}}_1^{1/2}\bm{U}_{11}\bm{U}_{21}^{\T}\bm{U}_{21}\bm{U}_{21}^{\T}\bar{\bm{\Gamma}}_2^{1/2} = O_P\left(n^{-2}\right).
\end{align*}
Further, by the proof of Lemma \ref{lemma:supp:UpperBlock},
\begin{align*}
    \norm{\bm{U}_{21}\bm{U}_{11}^{\T}\bar{\bm{\Gamma}}_1^{1/2}}_2 = O_P\left(n^{-1}\right).
\end{align*}
Putting this all together,
\begin{align*}
    \tilde{\bm{C}} \bm{A}\tilde{\bm{L}}^{\T}\tilde{\bm{L}}\bm{A}\tilde{\bm{C}}^{\T} = \bar{\bm{C}}\bm{A}\bar{\bm{L}}^{\T}\bar{\bm{L}}\bm{A}\bar{\bm{C}}^{\T} + O_P\left(n^{-1}\right).
\end{align*}
Items \ref{item:supp:Oracle:mu}, \ref{item:supp:Oracle:C} and \ref{item:supp:Oracle:Ct} then follow by applications of Lemma \ref{lemma:supp:EigApprox} and Corollary \ref{corollary:supp:NormSpaceVhat}. To prove \ref{item:supp:Oracle:Lt}, let $\bm{F}_{ij} = \bm{C}_i^{\T}\bm{C}_j$ for $i,j \in [2]$ and define $\bm{Q} = \left(\bm{L}_1 \bm{S}\, \bm{L}_2\right) \in \mathbb{R}^{p \times K}$, where $\bm{S} \in \mathbb{R}^{s \times s}$ is a unitary matrix such that $\bar{\bm{\Gamma}}_1^{1/2}\bm{F}_{11}\bar{\bm{\Gamma}}_1^{1/2} = \bm{S}\bar{\bm{M}}\bm{S}^{\T}$ for $\bar{\bm{M}}=\diag\left(\bar{\mu}_1,\ldots,\bar{\mu}_s\right)$. Then for some unitary matrix $\bm{T} \in \mathbb{R}^{s \times s}$,
\begin{align*}
    \tilde{\bm{B}}=&\bm{Q}^{\T}\tilde{\bm{L}}\bm{A}\tilde{\bm{C}}^{\T}\tilde{\bm{C}}\bm{A}\tilde{\bm{L}}^{\T}\bm{Q} = \begin{pmatrix}
    \bm{A}_1 & \bm{A}_2\\
    \bm{A}_2^{\T} & \bm{0}
    \end{pmatrix} + O_P\left(n^{-1}\right)\\
    \bm{A}_1 =& \bar{\bm{M}} + \bar{\bm{M}}^{1/2}\bm{T}\bm{F}_{11}^{-1/2}\bm{F}_{12}\bm{U}_{21}\bm{U}_{11}^{\T} + \left(\bar{\bm{M}}^{1/2}\bm{T}\bm{F}_{11}^{-1/2}\bm{F}_{12}\bm{U}_{21}\bm{U}_{11}^{\T}\right)^{\T}\\
    \bm{A}_2 =& \bar{\bm{M}} \bm{S}^{\T}\bm{U}_{11}\bm{U}_{21}^{\T}
\end{align*}
where by Lemma \ref{lemma:supp:U}, which gives the structure of $\bm{S}$,
\begin{align*}
    \bm{A}_{1_{rk}} &= \bar{\mu}_r I\left(r=k\right) + O_P\left(n^{-1}\lambda_{k\wedge r}^{1/2}\lambda_{k\vee r}^{-1/2}\right), \quad r,k \in [s]\\
    \bm{A}_{2_{rk}} &= O_P\left(n^{-1}\lambda_r^{1/2}\right), \quad r \in [s]; k \in [K-s].
\end{align*}
By the proof of Lemma \ref{lemma:supp:TrueEigs}, the $k$th eigenvalues of $\bm{A}_1$ is $\bar{\mu}_k\left\lbrace 1+O_P\left(n^{-1}\lambda_k^{-1}\right) \right\rbrace$ for $k \in [s]$. Weyl's Theorem then shows that the $k$th eigenvalue of $\tilde{\bm{B}}$ is $\bar{\mu}_k\left\lbrace 1+o_P(1) \right\rbrace$ for $k \in [s]$. If $\left(\hat{\bm{v}}_{k_1}^{\T}\, \hat{\bm{v}}_{k_2}^{\T}\right) \in \mathbb{R}^{K}$, $\hat{\bm{v}}_{k_1} \in \mathbb{R}^{s}$, is the $k$th eigenvector, then
\begin{align*}
    \hat{\bm{v}}_{k_2} =& \tilde{\mu}_k^{-1}\bm{A}_2^{\T}\hat{\bm{v}}_{k_1}, \quad k \in [s]\\
    \tilde{\mu}_k\hat{\bm{v}}_{k_1} =& \left(\bm{A}_1 + \tilde{\mu}_k^{-1} \bm{A}_2\bm{A}_2^{\T}\right)\hat{\bm{v}}_{k_1}, \quad k \in [s].
\end{align*}
To prove \ref{item:supp:Oracle:ellt}, let $t \in [s]$ be such that $\bar{\mu}_{t}/\bar{\mu}_{t-1}, \bar{\mu}_{t}/\bar{\mu}_{t+1} \geq 1+c^{-1}$ for $c$ defined in the statement of the lemma. We first see that $\norm{\bm{A}_2\bm{A}_2^{\T}}_2 = O_P\left(n^{-1}\right)$. By the proof of Lemma \ref{lemma:supp:TrueEigs},
\begin{align*}
    \hat{\bm{v}}_{t_{1_r}} =& O_P\left(n^{-1}\lambda_t^{-1/2}\lambda_{r \vee t}^{-1/2}\right), \quad r \in [s]\setminus \left\lbrace t \right\rbrace\\
    \norm{\hat{\bm{v}}_{k_2}}_2 =& O_P\left(\lambda_k^{-1}n^{-1/2}\right), \quad k \in [s].
\end{align*}
Since
\begin{align*}
    \hat{\bm{v}}_{k_{1}}^{\T}\hat{\bm{v}}_{r_{1}} = -\hat{\bm{v}}_{k_{2}}^{\T}\hat{\bm{v}}_{r_{2}} = O_P\left\lbrace \left(n\lambda_k\lambda_r\right)^{-1} \right\rbrace, \quad k \neq r \in [s],
\end{align*}
\begin{align*}
    \hat{\bm{v}}_{t_{1_r}} = O_P\left( n^{-1}\lambda_{t}^{-1/2}\lambda_r^{-1/2} \right), \quad r \in [s]\setminus \left\lbrace t \right\rbrace.
\end{align*}
Therefore,
\begin{align*}
    \norm{\hat{\bm{v}}_{t_2}}_2 = \tilde{\mu}_t^{-1}\norm{\bm{A}_2^{\T}\hat{\bm{v}}_{t_{1}}}_2 = \tilde{\mu}_t^{-1}\norm{\bm{U}_{21}\bm{U}_{11}^{\T}\bm{S}\bar{\bm{M}}\hat{\bm{v}}_{t_{1}}}_2 = n^{-1}\tilde{\mu}_t^{-1} O_P\left(\norm{\bar{\bm{M}}^{1/2}\hat{\bm{v}}_{t_{1}}}_2\right) = O_P\left(n^{-1}\lambda_t^{-1/2}\right).
\end{align*}
This shows that
\begin{align*}
    1-\abs{\hat{\bm{v}}_{t_{1_t}}} = O_P\left\lbrace \left(n^{-1}\lambda_t^{-1/2}\right)^2 \right\rbrace
\end{align*}
and completes the proof.
\end{proof}

\begin{lemma}
\label{lemma:supp:PrelimCinference}
Let $\bm{M} \in \mathbb{R}^{n \times n}$ be a non-random symmetric positive definite matrix, $\bm{C} \in \mathbb{R}^{n \times K}$ be a random matrix and $\bm{L} \in \mathbb{R}^{p \times K}$ be a non-random matrix such that $np^{-1}\bm{L}^{\T}\bm{L} = \bm{D} = \diag\left( \lambda_1,\ldots,\lambda_K \right)$, where $\lambda_1 \geq \cdots \geq \lambda_K > \lambda_{K+1}= 0$. Assume the following hold for some fixed constants $s \in [K]$ and $c_1>1$:
\begin{enumerate}[label=(\roman*)]
    \item $\bm{C}$ satisfies $\norm{ n^{-1}\bm{C}^{\T}\bm{\Delta}\bm{C} - \E(n^{-1}\bm{C}^{\T}\bm{\Delta}\bm{C}) }_2 = O_P(n^{-1/2})$ for any symmetric, positive definite $\bm{\Delta} \in \mathbb{R}^{n \times n}$ such that $\norm{\bm{\Delta}}_2 \leq c_1$.\label{item:supp:PrelimCinference:C}
    \item $\E(n^{-1}\bm{C}^{\T}\bm{C}) = I_K$, $\lambda_1,\ldots,\lambda_s \in [c_1^{-1},c_1n]$, $\limsup_{n,p \to \infty}\lambda_{s+1} \leq c_2$ and $\norm{\bm{M}}_2,\norm{\bm{M}^{-1}}_2 \leq c_1$.
\end{enumerate}
Let $\bm{W} \in \mathbb{R}^{K \times K}$ be a non-random unitary matrix such that
\begin{align*}
    \bm{W}^{\T}\bm{D}^{1/2}\E\left(n^{-1}\bm{C}^{\T}\bm{M}\bm{C}\right)\bm{D}^{1/2}\bm{W} = \bm{\Gamma}=\diag\left(\gamma_1,\ldots,\gamma_K\right),
\end{align*}
where $\gamma_1 > \cdots > \gamma_K > \gamma_{K+1}=0$. For $s \in [K]$ defined above, define $\bm{W}^{(s)} = \left(\bm{W}_{\bigcdot 1} \cdots \bm{W}_{\bigcdot s}\right)$, $d_r^{(s)} = \Lambda_r\left[ \bm{D}^{1/2}\bm{W}^{(s)}\{\bm{W}^{(s)}\}^{\T}\bm{D}^{1/2} \right]$ and $\bm{C}^{(s)} \in \mathbb{R}^{n \times s}$ such that
\begin{align*}
    \left\lbrace \bm{C}^{(s)},\bm{L}^{(s)} \right\rbrace = \argmin_{(\bar{\bm{C}}, \bar{\bm{L}}) \in \mathcal{S}_{s}} \norm{ ( \bm{L}\bm{C}^{\T} - \bar{\bm{L}}\bar{\bm{C}}^{\T} )\bm{M}^{1/2} }_F^2,
\end{align*}
where $\mathcal{S}_{s}$ is as defined in Section \ref{subsection:OracleFactors}. Assume the following also hold for some constant $c_2>1$:
\begin{enumerate}[label=(\roman*)]
    \setcounter{enumi}{2}
    \item $\gamma_{k}/\gamma_{k+1} \geq 1+c_2^{-1}$ for all $k \in [K]$.
    \item $d_r^{(s)}/d_{r+1}^{(s)} \geq 1+c_2^{-1}$ for all $r \in [s]$, where $d_{s+1}^{(s)} = 0$.
\end{enumerate}
Then the following hold for some constant $\tilde{c} > 1$:
\begin{subequations}
\label{equation:supp:OracleLambaRandomC}
\begin{align}
\label{equation:supp:OracleLambaRandomC:d}
    & \text{$d_r^{(s)}/\lambda_r \in [\tilde{c}^{-1},\tilde{c}]$ and $\lambda_r - \tilde{c}\lambda_{s+1}\leq d_r^{(s)} \leq \lambda_r + \tilde{c}\lambda_{s+1}$}, \quad r \in [s]\\
\label{equation:supp:OracleLambaRandomC:hat}
    &\Lambda_r\left[ p^{-1}\bm{L}^{(s)}\{\bm{C}^{(s)}\}^{\T}\bm{C}^{(s)}\{\bm{L}^{(s)}\}^{\T} \right] = d_r^{(s)}\left\lbrace 1+O_P(n^{-1/2}) \right\rbrace, \quad r\in[s].
\end{align}
\end{subequations}
Lastly, let the non-random matrix $\bm{U} \in \mathbb{R}^{K \times s}$ be such that $\bm{U}^{\T}\bm{U}=I_s$ and whose columns are the first $s$ eigenvectors of $\bm{D}^{1/2}\bm{W}^{(s)}\{\bm{W}^{(s)}\}^{T}\bm{D}^{1/2}$. Then for all $r \in [s]$,
\begin{subequations}
\label{equation:supp:OracleCRandomC}
\begin{align}
\label{equation:supp:OracleCRandomC:U}
    &\bm{U}_{tr} = \begin{cases}
    O\left( \lambda_{s+1}/\lambda_t \right) & \text{if $t < r$}\\
    O\left\{ \lambda_{(s+1)\vee t}/\lambda_r \right\} & \text{if $t > r$}
    \end{cases}, \quad t \in [K]\setminus \{r\}\\
\label{equation:supp:OracleCRandomC:Uhat}
    &\bm{C}^{(s)}_{\bigcdot r} = \bm{C}\left( \bm{U}\hat{\bm{u}}_r + \bm{\Delta}_r \right), \quad \norm{ \bm{\Delta}_r }_2 = O_P\left( n^{-1/2}\lambda_{s+1}\lambda_r^{-1} \right), \quad \hat{\bm{u}}_{r_t} = \begin{cases}
    1+O_P\left(n^{-1/2}\right) & \text{if $t=r$}\\
    O_P\left(n^{-1/2}\right) & \text{if $t < r$}\\
    O_P\left(n^{-1/2}\lambda_t\lambda_r^{-1}\right) & \text{if $t > r$ }
    \end{cases}, \quad t \in [s]
\end{align}
\end{subequations}
\end{lemma}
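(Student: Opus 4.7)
The plan is to first obtain an explicit factorisation of $\bm{L}\bm{C}^{\T}\bm{M}^{1/2}$ so that Eckart--Young identifies $\bm{L}^{(s)}\{\bm{C}^{(s)}\}^{\T}\bm{M}^{1/2}$ as its rank-$s$ truncation. Writing $\bm{L}\bm{D}^{-1/2} = \sqrt{p/n}\,\bar{\bm{L}}$ (which has orthonormal columns), and $\bm{M}^{1/2}\bm{C}\bm{D}^{1/2} = \bm{U}_*\bigl(\sqrt{n}\hat{\bm{\Gamma}}^{1/2}\bigr)\hat{\bm{W}}^{\T}$ where $\hat{\bm{W}}$ diagonalises $\bm{D}^{1/2}\bm{K}\bm{D}^{1/2}$ with $\bm{K}=n^{-1}\bm{C}^{\T}\bm{M}\bm{C}$ and eigenvalues $\hat{\bm{\Gamma}}$, yields the joint SVD $\bm{L}\bm{C}^{\T}\bm{M}^{1/2} = \sqrt{p}\,\bar{\bm{L}}\hat{\bm{W}}\hat{\bm{\Gamma}}^{1/2}\bm{U}_*^{\T}$. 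Parametrising $\bm{L}^{(s)} = \sqrt{p}\,\bar{\bm{L}}\hat{\bm{W}}^{(s)}\hat{\bm{\Gamma}}_s^{1/2}\bm{T}$ and $\{\bm{C}^{(s)}\}^{\T}=\bm{T}^{-1}\bm{U}_{*,s}^{\T}\bm{M}^{-1/2}$, the $\mathcal{S}_s$ constraints force $\bm{T}\bm{T}^{\T} = \hat{\bm{\Gamma}}_s^{-1/2}\{\hat{\bm{W}}^{(s)}\}^{\T}\bm{D}^{1/2}(n^{-1}\bm{C}^{\T}\bm{C})\bm{D}^{1/2}\hat{\bm{W}}^{(s)}\hat{\bm{\Gamma}}_s^{-1/2}/n$, and a direct computation (exploiting $\bm{M}^{1/2}\bm{M}^{-1}\bm{M}^{1/2}=I_n$) gives the key identity
\[
\Lambda_r\bigl(p^{-1}\bm{L}^{(s)}\{\bm{C}^{(s)}\}^{\T}\bm{C}^{(s)}\{\bm{L}^{(s)}\}^{\T}\bigr) = \Lambda_r\bigl(\{\hat{\bm{W}}^{(s)}\}^{\T}\bm{D}^{1/2}(n^{-1}\bm{C}^{\T}\bm{C})\bm{D}^{1/2}\hat{\bm{W}}^{(s)}\bigr),
\]
together with the representation $\bm{C}^{(s)}_{\bigcdot r} = \bm{C}\bm{D}^{1/2}\hat{\bm{W}}^{(s)}\hat{\bm{\Gamma}}_s^{-1/2}(\bm{T}^{-\T})_{\bigcdot r}/\sqrt{n} = \bm{C}\bm{y}_r$ for a data-dependent $\bm{y}_r \in \mathbb{R}^K$.

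\textbf{Paragraph 2 (proof of the deterministic claims \eqref{equation:supp:OracleLambaRandomC:d} and \eqref{equation:supp:OracleCRandomC:U}).}
Since $\bm{W}\bm{W}^{\T}=I_K$, I write $\bm{D}^{1/2}\bm{W}^{(s)}\{\bm{W}^{(s)}\}^{\T}\bm{D}^{1/2} = \bm{D} - \bm{D}^{1/2}\bm{W}^{(K-s)}\{\bm{W}^{(K-s)}\}^{\T}\bm{D}^{1/2}$, whose subtracted term has the same non-zero eigenvalues as the $(K-s)\times(K-s)$ matrix $\{\bm{W}^{(K-s)}\}^{\T}\bm{D}\bm{W}^{(K-s)}$. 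Applying Lemma \ref{lemma:supp:U} to $\bm{W}$, with $\bm{D}^{1/2}$ and $\bar{\bm{K}}$ in the roles of the (well-conditioned) diagonal and middle factors, produces entrywise bounds $|\bm{W}_{rk}| \leq \kappa(\lambda_{r\vee k}/\lambda_{r\wedge k})^{1/2}$. Splitting $\bm{W}^{(K-s)}=(\bm{A}^{\T}\,\bm{B}^{\T})^{\T}$ into top-$s$ and bottom-$(K-s)$ row blocks and writing $\{\bm{W}^{(K-s)}\}^{\T}\bm{D}\bm{W}^{(K-s)}=\bm{A}^{\T}\bm{D}_1\bm{A}+\bm{B}^{\T}\bm{D}_2\bm{B}$ with $\bm{D}_1=\diag(\lambda_1,\ldots,\lambda_s)$, the pointwise bounds give $|(\bm{D}_1^{1/2}\bm{A})_{rk}| \leq \kappa\sqrt{\lambda_{s+k}}\leq \kappa\sqrt{\lambda_{s+1}}$, so $\norm{\bm{A}^{\T}\bm{D}_1\bm{A}}_2 \leq \tilde{c}\lambda_{s+1}$; the bound $\norm{\bm{B}^{\T}\bm{D}_2\bm{B}}_2 \leq \lambda_{s+1}$ is immediate from orthonormality. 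Weyl's inequality then yields $|d_r^{(s)}-\lambda_r| \leq \tilde{c}\lambda_{s+1}$, and the ratio bound follows from $\lambda_{s+1}\leq c_2$ and $\lambda_r \geq c_1^{-1}$ for $r\in[s]$. For \eqref{equation:supp:OracleCRandomC:U}, apply the same decomposition to get an entrywise bound $|\bm{E}_{tr}| \lesssim \sqrt{\lambda_t\lambda_r}\cdot\lambda_{s+1}/\sqrt{\lambda_t\lambda_r} = O(\lambda_{s+1})$ (with similar control for $t > s$ via Cauchy--Schwarz on $\sum_k \bm{W}_{t,s+k}\bm{W}_{r,s+k}$), and combine this with the identity $(\lambda_t - d_r^{(s)})\bm{U}_{tr} = -(\bm{E}\bm{U}_{\bigcdot r})_t$ together with the eigengap $d_r^{(s)}/d_{r+1}^{(s)}\geq 1+c_2^{-1}$ (using Lemma \ref{lemma:supp:EigVector} and Corollary \ref{corollary:supp:NormSpaceVhat} for the leading-order cleanup) to recover the declared rates $O(\lambda_{s+1}/\lambda_t)$ for $t<r$ and $O(\lambda_{(s+1)\vee t}/\lambda_r)$ for $t>r$.

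\textbf{Paragraph 3 (proof of the random claims \eqref{equation:supp:OracleLambaRandomC:hat} and \eqref{equation:supp:OracleCRandomC:Uhat}, and main obstacle).}
Assumption \ref{item:supp:PrelimCinference:C} with $\bm{\Delta}=I_n$ gives $n^{-1}\bm{C}^{\T}\bm{C}=I_K+\bm{R}_1$ with $\norm{\bm{R}_1}_2=O_P(n^{-1/2})$. Setting $\bm{A}=\bm{D}^{1/2}\hat{\bm{W}}^{(s)}$ and eigendecomposing $\bm{A}^{\T}\bm{A}=\bm{V}\bm{\Lambda}_0\bm{V}^{\T}$, the factorisation
\[
\bm{A}^{\T}(I+\bm{R}_1)\bm{A} = \bm{V}\bm{\Lambda}_0^{1/2}\bigl(I_s + \bm{\Lambda}_0^{-1/2}\bm{V}^{\T}\bm{A}^{\T}\bm{R}_1\bm{A}\bm{V}\bm{\Lambda}_0^{-1/2}\bigr)\bm{\Lambda}_0^{1/2}\bm{V}^{\T}
\]
combined with the orthonormality of the columns of $\bm{A}\bm{V}\bm{\Lambda}_0^{-1/2}$ forces the inner bracket to be $I_s+O_P(n^{-1/2})$ in operator norm; Lemma \ref{lemma:supp:EigBound} then delivers $\Lambda_r(\bm{A}^{\T}(I+\bm{R}_1)\bm{A}) = \Lambda_r(\{\hat{\bm{W}}^{(s)}\}^{\T}\bm{D}\hat{\bm{W}}^{(s)})(1+O_P(n^{-1/2}))$. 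The remaining step---showing $\Lambda_r(\{\hat{\bm{W}}^{(s)}\}^{\T}\bm{D}\hat{\bm{W}}^{(s)}) = d_r^{(s)}(1+O_P(n^{-1/2}))$---is the \emph{main obstacle}: since $\lambda_1/\lambda_s$ may diverge, the naive additive Davis--Kahan bound on $\hat{\bm{W}}^{(s)}-\bm{W}^{(s)}\bm{O}$ yields only $O_P(n^{-1/2}\lambda_1/\lambda_r)$ relative error, which is too coarse. I will resolve this by exploiting the multiplicative representation $\bm{D}^{1/2}\bm{K}\bm{D}^{1/2}=\bm{F}^{\T}(I+\bm{R})\bm{F}$ with $\bm{F}=\bar{\bm{K}}^{1/2}\bm{D}^{1/2}$ and $\norm{\bm{R}}_2=O_P(n^{-1/2})$: transforming to the $\bm{F}$-canonical coordinates converts the subspace perturbation into a problem with $O_P(n^{-1/2})$ perturbation and $\Omega(1)$ relative eigengap (using the bounded relative gap $\gamma_k/\gamma_{k+1}\geq 1+c_2^{-1}$), and the orthonormal-column trick used above then transfers the relative rate back to the $\bm{D}$-metric. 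Finally, for \eqref{equation:supp:OracleCRandomC:Uhat} I substitute the just-established closeness of $\hat{\bm{W}}^{(s)}$ to $\bm{W}^{(s)}$ and of the sample diagonalising rotation $\bm{T}$ to its deterministic analogue into the formula $\bm{C}^{(s)}_{\bigcdot r}=\bm{C}\bm{y}_r$ from Paragraph 1, expand $\bm{y}_r$ in the basis $\bm{U}_{\bigcdot 1},\ldots,\bm{U}_{\bigcdot s}$ supplied by \eqref{equation:supp:OracleCRandomC:U}, and use the explicit coefficient sizes in that expansion to read off $\hat{\bm{u}}_{r_t}$ with the stated rates together with $\norm{\bm{\Delta}_r}_2 = O_P(n^{-1/2}\lambda_{s+1}/\lambda_r)$.
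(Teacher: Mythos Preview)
Your outline is essentially the paper's own argument, and the deterministic parts (Paragraph~2) and the Eckart--Young setup (Paragraph~1) line up almost exactly with what the paper does: Lemma~\ref{lemma:supp:U} for the entrywise bound on $\bm{W}$, the decomposition $\bm{D}-\bm{D}^{1/2}\bm{W}^{(K-s)}\{\bm{W}^{(K-s)}\}^{\T}\bm{D}^{1/2}$ together with Weyl and Lemma~\ref{lemma:supp:EigVector}, and the identity reducing $\Lambda_r\bigl(p^{-1}\bm{L}^{(s)}\{\bm{C}^{(s)}\}^{\T}\bm{C}^{(s)}\{\bm{L}^{(s)}\}^{\T}\bigr)$ to eigenvalues of $\{\hat{\bm W}^{(s)}\}^{\T}\bm{D}^{1/2}(n^{-1}\bm{C}^{\T}\bm{C})\bm{D}^{1/2}\hat{\bm W}^{(s)}$. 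Your handling of the $n^{-1}\bm C^{\T}\bm C$ perturbation via Lemma~\ref{lemma:supp:EigBound} also matches.

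The one place your sketch is thinner than needed is the ``main obstacle'' in Paragraph~3. Your $\bm{F}$-coordinate idea recognises that the perturbation has the multiplicative form $\bm{\Gamma}^{1/2}(I+\tilde{\bm R})\bm{\Gamma}^{1/2}$ with $\|\tilde{\bm R}\|_2=O_P(n^{-1/2})$, but the ``orthonormal-column trick'' (Lemma~\ref{lemma:supp:EigBound}) by itself does not transfer this to the $\bm D$-metric: the quantity you need to compare is $\{\hat{\bm W}^{(s)}\}^{\T}\bm D\hat{\bm W}^{(s)}$ versus $\{\bm W^{(s)}\}^{\T}\bm D\bm W^{(s)}$, and naive subspace closeness of $\hat{\bm W}^{(s)}$ to $\bm W^{(s)}$ picks up a factor of $\lambda_1$ when pushed through $\bm D$. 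What is actually required---and what the paper invokes---is the \emph{entrywise} eigenvector bound $\hat{\bm W}_{kt}=O_P\bigl(n^{-1/2}(\lambda_{k\vee t}/\lambda_{k\wedge t})^{1/2}\bigr)$ for the rotation taking $\bm W$ to the sample eigenvectors, obtained via the cofactor expansion of \citet{LowDimPCA} (extended in \citet{FanEigen}). With those entrywise bounds in hand, the paper expands $\hat{\bm J}=\bm D^{1/2}\bm W\hat{\bm W}^{(s)}\{\hat{\bm W}^{(s)}\}^{\T}\bm W^{\T}\bm D^{1/2}$ block by block and shows it equals $\bm D^{1/2}\bm W^{(s)}\{\bm W^{(s)}\}^{\T}\bm D^{1/2}+O_P(\lambda_{s+1}n^{-1/2})$, from which both \eqref{equation:supp:OracleLambaRandomC:hat} and the $\hat{\bm u}_r,\bm\Delta_r$ rates in \eqref{equation:supp:OracleCRandomC:Uhat} follow. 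Your $\bm F$-coordinate rewriting is a correct first step toward those entrywise bounds, but you should make explicit that it is the entrywise (cofactor-type) control, not a second application of Lemma~\ref{lemma:supp:EigBound}, that closes the gap.
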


\begin{remark}
\label{remark:supp:CrandomCond}
Condition \ref{item:supp:PrelimCinference:C} is quite general, and is satisfied in the following scenarios: $\bm{C} = \E(\bm{C}) + \sum\limits_{i=1}^m \bm{\Xi}_i$ for $m \leq c_1$, where $\norm{\E\left(\bm{C}\right)}_2\leq c_1n^{1/2}$ and $\bm{\Xi}_i \in \mathbb{R}^{n \times K}$ are mean $\bm{0}$, $\bm{\Xi}_i,\bm{\Xi}_j$ are independent for $i \neq j$ and satisfy one of the following:
\begin{enumerate}
    \item $\bm{\Xi}_i = \bm{A}_i\bm{R}_i$, where $\bm{A}_i\in\mathbb{R}^{nK \times nK}$ is a non-random matrix such that $\norm{\bm{A}_i}_2,\norm{\bm{A}_i^{-1}}_2 \leq c_1$ and $\bm{R}\in\mathbb{R}^{nK}$ is a mean 0 random matrix with independent entries such that $\E(\bm{R}_{i_j}^4)\leq c_1$ for all $j\in[nK]$.
    \item $\E[ \exp\{ \vecM(\bm{\Xi}_i)^{\T}\bm{t} \} ] \leq \exp( c_1\norm{\bm{t}}_2^2 )$.
\end{enumerate}
This follows from Corollary 2.6 in \citet{SubGaussVariance} and standard properties of sub-Gaussian random vectors.
\end{remark}

\begin{remark}
\label{remark:supp:Crandom1}
This lemma enumerates the properties of $\lamoracle_1,\ldots,\lamoracle_{\Koracle}$ and $\Coracle$ when $\bm{C}$ is a random matrix. Using the notation that appears in the main text, $\bm{M}$ corresponds to $\bar{\bm{V}}$, $\gamma_1,\ldots,\gamma_K$ are the same as those defined in Assumption \ref{assumption:CandL}, the index $s$ corresponds to the index $s$ defined in the statement of Theorem \ref{theorem:Khat}, $d_r^{(s)}$ corresponds to $\lamoracle_r$ and $\bm{C}^{(s)}$ corresponds to $\Coracle$. We treat $s$ as non-random because Theorem \ref{theorem:Khat} shows that $\Koraclehat=\Koracle=s$ with probability tending to 1 as $n,p\to \infty$. 
\end{remark}

\begin{remark}
\label{remark:supp:Crandom2}
This lemma is used to prove Theorem \ref{theorem:XandC}.
\end{remark}

\begin{proof}
By Lemma \ref{lemma:supp:U} and because the eigenvalues of $\E\left(n^{-1}\bm{C}^{\T}\bm{M}\bm{C}\right)$ are uniformly bounded above 0 and below $\infty$, $\bm{W}_{kt} = O\left(\lambda_{k \vee t}^{1/2}\lambda_{k \wedge t}^{-1/2}\right)$ for all $k,t\in[K]$. Next, note that $d_r^{(s)} = \Lambda_r\left[ \{\bm{W}^{(s)}\}^{\T}\bm{D}\bm{W}^{(s)} \right]$, where $\{\bm{W}^{(s)}_{\bigcdot k}\}^{\T}\bm{D}\bm{W}^{(s)}_{\bigcdot t} = O\left(\lambda_k^{1/2}\lambda_t^{1/2}\right)$ for all $k,t\in[s]$. \eqref{equation:supp:OracleLambaRandomC:d} then follows by Lemma \ref{lemma:supp:EigBound}.\par 
\indent Next, it is clear \eqref{equation:supp:OracleCRandomC:U} holds for $s=K$. If $s < K$,
\begin{align*}
    \norm{ \bm{D}-\bm{D}^{1/2}\bm{W}^{(s)}\{\bm{W}^{(s)}\}^{T}\bm{D}^{1/2} }_2 = \norm{ \bm{D}^{1/2}\sum\limits_{k=s+1}^K \bm{W}_{\bigcdot k}\bm{W}_{\bigcdot k}^{\T} \bm{D}^{1/2} }_2 = O\left(\lambda_{s+1}\right).
\end{align*}
By Weyl's Theorem and Lemma \ref{lemma:supp:EigVector}, this shows that $\norm{ \bm{U}_{\bigcdot r} - \bm{a}_r }_2 = O_P(\lambda_{s+1}/\lambda_r)$, where $\bm{a}_r \in \mathbb{R}^K$ is the $r$th standard basis vector (we have assumed $\bm{U}_{r_r} \geq 0$ without loss of generality). An identical analysis also shows that $\bm{U}_{rt} = O_P(\lambda_{s+1}/\lambda_t)$ for $t < r$. Since $\bm{U}_{\bigcdot t}^{\T}\bm{U}_{\bigcdot r}=0$ for $t < r$, this implies $\bm{U}_{tr} = O_P(\lambda_{s+1}/\lambda_t)$ for $t < r$. Next, since $\{\bm{W}^{(s)}_{\bigcdot k}\}^{\T}\bm{D}\bm{W}^{(s)}_{\bigcdot t} = O\left(\lambda_k^{1/2}\lambda_t^{1/2}\right)$ for all $k,t\in[s]$, this implies
\begin{align*}
    \bm{U} = \bm{D}^{1/2}\bm{W}^{(s)}\bm{V}\diag\left\lbrace d_1^{(s)},\ldots, d_s^{(s)} \right\rbrace^{1/2},
\end{align*}
where $\bm{V}\in\mathbb{R}^{s \times s}$ is a unitary matrix that satisfies $\bm{V}_{kt}=O_P\left(\lambda_{k \vee t}^{1/2}\lambda_{k \wedge t}^{-1/2}\right)$. Therefore, for $t > s$,
\begin{align*}
    \bm{U}_{tr} = \left\lbrace\frac{\lambda_t}{d_r^{(s)}}\right\rbrace^{1/2}\underbrace{\{\bm{W}^{(s)}_{t \bigcdot}\}^{\T} \bm{V}_{\bigcdot r}}_{=O\{\lambda_t^{1/2}\lambda_r^{-1/2}\}} = O\left(\lambda_t/\lambda_r\right),
\end{align*}
which proves \eqref{equation:supp:OracleCRandomC:U}.\par
\indent To prove the rest, we first observe that
\begin{align*}
    \norm{ n^{-1}\bm{C}^{\T}\bm{C} - I_K }_2, \norm{ n^{-1}\bm{C}^{\T}\bm{M}\bm{C} - \E\left(n^{-1}\bm{C}^{\T}\bm{M}\bm{C}\right) }_2 = O_P(n^{-1/2})
\end{align*}
by the assumptions on $\bm{C}$. Let $\hat{\bm{W}} \in \mathbb{R}^{K \times K}$ be a unitary matrix such that
\begin{align*}
    \hat{\bm{W}}^{\T}\bm{W}^{\T}\bm{D}^{1/2}\left(n^{-1}\bm{C}^{\T}\bm{M}\bm{C}\right)\bm{D}^{1/2}\bm{W}\hat{\bm{W}} = \hat{\bm{\Gamma}}=\diag\left(\hat{\gamma}_1,\ldots,\hat{\gamma_K}\right),
\end{align*}
where $\hat{\gamma}_1 \geq \cdots \geq \hat{\gamma}_K > 0$ and define $\hat{\bm{W}}^{(s)} = \left(\hat{\bm{W}}_{\bigcdot 1} \cdots \hat{\bm{W}}_{\bigcdot s}\right)$. We then get that
\begin{align*}
    p^{-1}\bm{C}^{(s)}\{\bm{L}^{(s)}\}^{\T}\bm{L}^{(s)}\{\bm{C}^{(s)}\}^{\T} = \tilde{\bm{C}} \bm{D}^{1/2}\bm{W}\hat{\bm{W}}^{(s)}\{ \hat{\bm{W}}^{(s)} \}^{\T}\bm{W}^{\T}\bm{D}^{1/2}\tilde{\bm{C}}^{\T},
\end{align*}
where $\tilde{\bm{C}} = n^{-1/2}\bm{C}$. It is therefore clear that \eqref{equation:supp:OracleLambaRandomC:hat} and \eqref{equation:supp:OracleCRandomC:Uhat} hold when $s=K$. When $s<K$, we only need to understand the behavior of $\hat{\bm{W}}^{(s)}$ to complete the proof. First, the fact that $\bm{W}_{kt} = O\left(\lambda_{k \vee t}^{1/2} \lambda_{k \wedge t}^{-1/2}\right)$ implies
\begin{align*}
\bm{G}_{kt} = \bm{W}^{\T}_{\bigcdot k}\bm{D}^{1/2}\left(n^{-1}\bm{C}^{\T}\bm{M}\bm{C}\right)\bm{D}^{1/2}\bm{W}_{\bigcdot t} = \gamma_k I(k=t) + n^{-1/2}\lambda_k^{1/2}\lambda_t^{1/2}\hat{\bm{H}}_{kt}, \quad k,t \in [K],
\end{align*}
where $\norm{\hat{\bm{H}}}_2 = O_P\left(1\right)$. By Lemma \ref{lemma:supp:EigBound},
\begin{align*}
    \hat{\gamma}_t =& \Lambda_t(\bm{G}) = \gamma_t\left\lbrace 1+O_P(n^{-1/2}) \right\rbrace, \quad t \in [K].
\end{align*}
Further, an application of the co-factor expansion argument developed in Section 7 of \citet{LowDimPCA} (which was extended in Appendix A of \citet{FanEigen} to allow $\lambda_1/\lambda_K \to \infty$) shows that $\hat{\bm{W}}_{kt} = O_P\left(n^{-1/2}\lambda_{k \vee t}^{1/2}\lambda_{k \wedge t}^{-1/2}\right)$ for $k \neq t \in [K]$. Therefore, if $\bm{W}=(\bm{W}^{(s)}\, \bm{W}_2)$ and $\bm{D}=\bm{D}_1\oplus \bm{D}_2$ for $\bm{D}_1\in\mathbb{R}^{s \times s}$,
\begin{align*}
    \hat{\bm{J}}=&\bm{D}^{1/2}\bm{W}\hat{\bm{W}}^{(s)}\{ \hat{\bm{W}}^{(s)} \}^{\T}\bm{W}^{\T}\bm{D}^{1/2} = \bm{D}^{1/2}\bm{W} \left(I - \sum\limits_{j=s+1}^K \hat{\bm{W}}_{\bigcdot j}\hat{\bm{W}}_{\bigcdot j}^{\T}\right)\bm{W}^{T} \bm{D}^{1/2} = \bm{D}^{1/2}\bm{W}^{(s)} \{\bm{W}^{(s)}\}^{\T}\bm{D}^{1/2}\\
    &+ \frac{\lambda_{s+1}}{n}\bm{D}^{1/2}\bm{W}^{(s)} \bm{D}_1^{-1/2}\hat{\bm{A}}_1\bm{D}_1^{-1/2} \{\bm{W}^{(s)}\}^{\T}\bm{D}^{1/2}\\
    &+ n^{-1/2}\bm{D}^{1/2}\bm{W}_2 \bm{D}_2^{1/2}\hat{\bm{A}}_2\bm{D}_1^{-1/2}\{\bm{W}^{(s)}\}^{\T}\bm{D}^{1/2}\\
    &+ \left[n^{-1/2}\bm{D}^{1/2}\bm{W}_2 \bm{D}_2^{1/2}\hat{\bm{A}}_2\bm{D}_1^{-1/2}\{\bm{W}^{(s)}\}^{\T}\bm{D}^{1/2}\right]^{\T} + \frac{\lambda_{s+1}}{\lambda_s n}\bm{D}^{1/2}\bm{W}_2\hat{\bm{A}}_3 \bm{W}_2^{\T} \bm{D}^{1/2},
\end{align*}
where $\norm{\hat{\bm{A}}_j}_2 = O_P(1)$ for $j=1,2,3$. We see that
\begin{align}
    \left\lbrace \bm{D}^{1/2}\bm{W}^{(s)} \bm{D}_1^{-1/2} \right\rbrace_{tk} &= O\left\lbrace \min(1,\lambda_t/\lambda_k) \right\rbrace, \quad t \in [K]; k \in [s]\\
    \norm{ \bm{D}^{1/2}\bm{W}_2 \bm{D}_2^{1/2} }_2 &= O(\lambda_{s+1}).
\end{align}
Therefore,
\begin{align*}
    p^{-1}\bm{C}^{(s)}\{\bm{L}^{(s)}\}^{\T}\bm{L}^{(s)}\{\bm{C}^{(s)}\}^{\T} =& \tilde{\bm{C}}\left\lbrace\bm{D}^{1/2}\bm{W}^{(s)} \{\bm{W}^{(s)}\}^{\T}\bm{D}^{1/2} + O_P\left(\lambda_{s+1}n^{-1/2}\right)\right\rbrace \tilde{\bm{C}}^{\T}.
\end{align*}
Note that $\hat{\bm{J}}$ is rank $s$. Define $\tilde{\bm{D}} = \diag\left\lbrace d_1^{(s)},\ldots,d_s^{(s)} \right\rbrace$ and let $\bm{U}_2 \in \mathbb{R}^{K \times (K-s)}$ have orthonormal columns such that $\bm{U}^{\T}\bm{U}_2 = \bm{0}$. Then
\begin{align*}
    \left(\bm{U}\, \bm{U}_2\right)^{\T}\hat{\bm{J}}\left(\bm{U}\, \bm{U}_2\right) = \tilde{\bm{D}}\oplus \bm{0} + O_P\left(\lambda_{s+1}n^{-1/2}\right)
\end{align*}
and $\Lambda_t(\hat{\bm{J}}) = d_t^{(s)} + O_P\left(\lambda_{s+1}n^{-1/2}\right)$ for all $t \in [s]$ by Weyl's Theorem. Since $\tilde{\bm{C}}^{\T}\tilde{\bm{C}}=I_K+O_P(n^{-1/2})$, \eqref{equation:supp:OracleLambaRandomC:d} follows from Lemma \ref{lemma:supp:EigBound}. Let $\hat{\bm{U}} \in \mathbb{R}^{K \times s}$ be the first $s$ eigenvectors of $\hat{\bm{J}}$, which are the eigenvectors corresponding to all non-zero eigenvalues. By Corollary \ref{corollary:supp:NormSpaceVhat} (and ignoring sign parity without loss of generality),
\begin{align*}
    \hat{\bm{U}} &= \bm{U}\hat{\bm{\Delta}}_1 + \bm{U}_2\hat{\bm{\Delta}}_2\\
    \hat{\bm{\Delta}}_{1_{kt}} &= \begin{cases}
    O_P\left(n^{-1/2}\lambda_{s+1}\lambda_{k \wedge t}^{-1}\right) & \text{if $t \neq k$}\\
    1-O_P\left( \lambda_{s+1}^2n^{-1}\lambda_k^{-2} \right) & \text{if $k = t$}
    \end{cases}, \quad t,k \in [s]\\
    \norm{ \hat{\bm{\Delta}}_{2_{\bigcdot t}} }_2 &= O_P\left(n^{-1/2}\lambda_{s+1}\lambda_{t}^{-1}\right), \quad t \in [s].
\end{align*}
Therefore,
\begin{align*}
    n^{-1/2}\bm{C}^{(s)} = \tilde{\bm{C}}\hat{\bm{U}}\hat{\bm{D}}^{1/2}\bm{V}\hat{\bm{\Sigma}}^{-1/2},
\end{align*}
where
\begin{align*}
    \hat{\bm{D}} &= \diag\left\lbrace \Lambda_1(\hat{\bm{J}}),\ldots,\Lambda_s(\hat{\bm{J}}) \right\rbrace\\
    \hat{\bm{\Sigma}} &= \diag\left( \Lambda_1\left[ p^{-1}\bm{L}^{(s)}\{\bm{C}^{(s)}\}^{\T}\bm{C}^{(s)}\{\bm{L}^{(s)}\}^{\T} \right],\ldots, \Lambda_s\left[ p^{-1}\bm{L}^{(s)}\{\bm{C}^{(s)}\}^{\T}\bm{C}^{(s)}\{\bm{L}^{(s)}\}^{\T} \right] \right)
\end{align*}
and $\bm{V} \in \mathbb{R}^{s \times s}$ is a unitary matrix containing the eigenvectors of $\hat{\bm{D}}^{1/2}\hat{\bm{U}}^{\T}\tilde{\bm{C}}^{\T}\tilde{\bm{C}}\hat{\bm{U}}\hat{\bm{D}}^{1/2}$. Since $\hat{\bm{U}}^{\T}\tilde{\bm{C}}^{\T}\tilde{\bm{C}}\hat{\bm{U}} = I_s + O_P(n^{-1/2})$, $\bm{V}_{kt} = O_P\left(n^{-1/2}\lambda_{k\vee t}^{1/2}\lambda_{k\wedge t}^{-1/2}\right)$ for all $k\neq t \in [s]$ by the co-factor expansion argument from \citet{LowDimPCA}. Therefore (ignoring sign parity without loss of generality),
\begin{align*}
    \left( \hat{\bm{D}}^{1/2}\bm{V}\hat{\bm{\Sigma}}^{-1/2} \right)_{kt} = \begin{cases}
    1+O_P(n^{-1/2}) & \text{if $k=t$}\\
    O_P\left(n^{-1/2}\right) & \text{if $t > k$}\\
    O_P\left(n^{-1/2}\lambda_k/\lambda_t\right) & \text{if $t < k$}
    \end{cases}, \quad t,k \in [s],
\end{align*}
meaning 
\begin{align*}
    n^{-1/2}\bm{C}^{(s)}_{\bigcdot r} = \tilde{\bm{C}}\bm{U}\left(\hat{\bm{\Delta}}_1\hat{\bm{D}}^{1/2}\bm{V}\hat{\bm{\Sigma}}^{-1/2}\right)_{\bigcdot r} + \tilde{\bm{C}} \underbrace{O_P\left(n^{-1/2}\lambda_{s+1}\lambda_r^{-1}\right)}_{K \times 1}
\end{align*}
where 
\begin{align*}
    \left(\hat{\bm{\Delta}}_1\hat{\bm{D}}^{1/2}\bm{V}\hat{\bm{\Sigma}}^{-1/2}\right)_{t r} = \begin{cases}
    1+O_P(n^{-1/2}) & \text{if $t=r$}\\
    O_P(n^{-1/2}) & \text{if $t < r$}\\
    O_P\left(n^{-1/2}\lambda_t\lambda_r^{-1}\right) & \text{if $t > r$}
    \end{cases}, \quad t \in [s].
\end{align*}
This completes the proof.
\end{proof}

\begin{lemma}
\label{lemma:GaussLeverageScores}
Let $\bm{B}\in\mathbb{R}^{n \times m}$, $m \leq n$, be any matrix with orthonormal columns, and suppose $\bm{Q} \in \mathbb{R}^{n \times n}$ is sampled uniformly from the set of all unitary matrices in $\mathbb{R}^{n \times n}$. Then if $h_{i}$, $i \in [n]$, is the $i$th leverage score of $\bm{Q}\bm{B}$,
\begin{align*}
    \Prob\left(\max_{i \in [n]}\abs{ h_i - \frac{m}{n} } \geq \epsilon\right) \leq c_1n\exp\left(-c_2 \epsilon n^{1/2}\right)
\end{align*}
for all $\epsilon > 0$, where $c_1,c_2 > 0$ are constants that do not depend on $n,m$ or $\bm{B}$.
\end{lemma}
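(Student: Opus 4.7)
The plan is to reduce the problem to scalar chi-squared concentration by exploiting Haar invariance. The first step is to observe that the $i$th leverage score of $\bm{Q}\bm{B}$ is
\[
    h_i = \bm{a}_i^{\T}(\bm{Q}\bm{B})(\bm{Q}\bm{B})^{\T}\bm{a}_i = \norm{\bm{B}^{\T}\bm{Q}^{\T}\bm{a}_i}_2^2,
\]
where $\bm{a}_i$ is the $i$th standard basis vector. Since $\bm{Q}$ is Haar distributed on $O(n)$, the vector $\bm{u}_i = \bm{Q}^{\T}\bm{a}_i$ is uniformly distributed on the unit sphere $S^{n-1}$ for each fixed $i \in [n]$. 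Using the standard representation $\bm{u}_i \stackrel{d}{=} \bm{g}/\norm{\bm{g}}_2$ with $\bm{g}\sim N(\bm{0},I_n)$ and the orthonormality of the columns of $\bm{B}$, we obtain
\[
    h_i \stackrel{d}{=} \frac{\norm{\bm{B}^{\T}\bm{g}}_2^2}{\norm{\bm{g}}_2^2} = \frac{X}{Z},
\]
where $X = \norm{\bm{B}^{\T}\bm{g}}_2^2 \sim \chi^2_m$ and $Z = \norm{\bm{g}}_2^2 \sim \chi^2_n$. In particular, the distribution of $h_i$ does not depend on $\bm{B}$.

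Next I would control the ratio $X/Z$ around its limit $m/n$. Writing $X = m+U$ and $Z = n+V$ and computing
\[
    \frac{X}{Z} - \frac{m}{n} = \frac{nU - mV}{n(n+V)},
\]
we see that on the event $\{|U|\leq n\epsilon/4,\, |V|\leq n\epsilon/4\}$, which also forces $Z\geq n/2$ whenever $\epsilon \leq 2$, we have $|X/Z - m/n| \leq \epsilon$. Standard Laurent--Massart chi-squared tail bounds then give
\[
    \Prob(|U|\geq n\epsilon/4),\ \Prob(|V|\geq n\epsilon/4) \leq 2\exp\{-c\min(n\epsilon^2, n\epsilon)\}
\]
for a universal constant $c>0$, since $(n\epsilon/4)^2/m \geq n\epsilon^2/16$ whenever $m\leq n$.

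To match the stated rate $\exp(-c_2 \epsilon n^{1/2})$, which is weaker than the Gaussian rate just obtained, I would split into three regimes. For $\epsilon > 1$ the event $\{|h_i - m/n|\geq \epsilon\}$ is empty, since both $h_i$ and $m/n$ lie in $[0,1]$. For $\epsilon \in [C/n^{1/2},1]$ with $C$ a sufficiently large universal constant, the inequality $cn\epsilon^2 \geq c_2 \epsilon n^{1/2}$ rearranges to $\epsilon \geq c_2/(c n^{1/2})$, and choosing $c_2$ appropriately yields the desired exponent. For $\epsilon < C/n^{1/2}$, the bound $c_1 n\exp(-c_2 \epsilon n^{1/2})$ exceeds one provided $c_1$ is taken large enough in terms of $c_2$ and $C$, so the trivial estimate $\Prob(\cdot)\leq 1$ suffices. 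A final union bound over $i\in[n]$ produces the factor of $n$ in front.

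I do not anticipate a substantive obstacle: the Haar invariance reduces the problem to one-dimensional chi-squared concentration, and the target rate is generously suboptimal relative to what Laurent--Massart provides, so the only care required lies in the regime analysis that conforms the Gaussian rate $\exp(-cn\epsilon^2)$ to the weaker form $\exp(-c_2\epsilon n^{1/2})$.
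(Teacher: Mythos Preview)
Your proposal is correct and begins identically to the paper's proof: both use Haar invariance to write $h_i \stackrel{d}{=} W/(W+Y)$ with $W\sim\chi^2_m$, $Y\sim\chi^2_{n-m}$ independent (equivalently your $X/Z$ with $Z=X+Y$). The divergence comes after this representation. The paper bounds the moments of $nh_i$ to show it has sub-exponential norm at most $cm^{1/2}$, and then a Chernoff argument with $\lambda\asymp nm^{-1/2}$ yields the exponent $-c\,nm^{-1/2}\epsilon$, which is at worst $-c\,n^{1/2}\epsilon$ since $m\leq n$; no case analysis is needed. You instead apply Laurent--Massart separately to the numerator and denominator, obtaining the sharper intermediate exponent $-c\,n\epsilon^2$, and then spend the regime-splitting step to relax this Gaussian rate down to the weaker stated form $-c_2\,\epsilon n^{1/2}$. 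Your route is more elementary and in fact recovers a stronger bound along the way; the paper's route is shorter because the sub-exponential framework delivers the target exponent in one stroke without the three-regime case split.
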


\begin{proof}
The $h_i$ has mean $m/n$ and is identically distributed to $W/(W+Z)$, where $W \sim n^{-1}\chi^2_m$, $Z \sim n^{-1}\chi^2_{n-m}$ and $W$ and $Z$ are independent. For any fixed $\delta \in (0,1)$, we see that for some constant $c > 0$ that does not depend on $n, m$ or $\bm{B}$,
\begin{align*}
    \E\left\lbrace\left( \frac{W}{Z+W} \right)^p\right\rbrace &\leq \E\left(\tilde{W}^p\right) + \exp\left(-cn \delta^2\right)
\end{align*}
where $\tilde{W} = \left(1-\delta\right)^{-1}W$, $c_{\delta} > 0$ is a constant that only depends on $\delta$. We next see that because $n\tilde{W} \sim (1-\delta)^{-1}\chi_m^2$, there exists constants $a_{\delta},c_{\delta} > 0$ that only depend on $\delta$ such that
\begin{align*}
    \E\{(nh_i)^p\} \leq (a_{\delta}m^{1/2}p)^p + n^p \exp(-c_{\delta}n) = (a_{\delta}m^{1/2}p)^p + \left\{ p\frac{n}{p}\exp(-c_{\delta} \frac{n}{p}) \right\}^p \leq (a_{\delta}m^{1/2}p)^p + \{(ec_{\delta})^{-1}p\}^p.
\end{align*}
This shows that $nh_i$ has sub-exponential norm $\leq c m^{1/2}$, where $c > 0$ is a constant that does not depend on $n, m$ or $\bm{B}$. Using a standard sub-exponential inequality argument, we get that for some constants $\tilde{c},\bar{c} > 0$ that do not depend on $n,m$ or $\bm{B}$,
\begin{align*}
    \Prob\left( \max_{i \in [n]} \abs{h_i - \frac{m}{n}} \geq \epsilon\right) \leq n \exp\left( \bar{c}\frac{\lambda^2}{n^2}m - \lambda\epsilon \right), \quad 0 < \lambda \leq \tilde{c}nm^{-1/2}
\end{align*}
for all $\epsilon > 0$. The result then follows by setting $\lambda = \tilde{c}nm^{-1/2}$.
\end{proof}

\end{document}